\numberwithin{equation}{section}
\numberwithin{table}{section}
\let\c@equation\c@table
  \colorlet{ShadeOfPurple}{blue!5!white}
  \colorlet{ShadeOfYellow}{yellow!8!white}
  \colorlet{ShadeOfGreen} {green!7!white}
  \colorlet{ShadeOfBrown} {brown!10!white}
  \colorlet{ShadeOfGray}  {gray!10!white}
  \declaretheoremstyle[
      spaceabove=6pt,
      spacebelow=6pt,
      bodyfont=\normalfont,
      qed=\(\lozenge\)
  ]{definitionwithbox}
  \declaretheoremstyle[
      headfont=\itshape,
      bodyfont=\normalfont,
      qed=\(\lozenge\)
      ]{remarkwithbox}
  \declaretheorem[sibling=equation]{theorem}
  \declaretheorem[sibling=theorem]{lemma,proposition,corollary,question,conjecture}
  \declaretheorem[sibling=theorem,style=definition]{definition,attempt}
  \declaretheorem[sibling=theorem,style=definition]{example}
  \declaretheorem[sibling=theorem,style=remark]{remark,assumption}
  \declaretheorem[sibling=equation]{theorem}
  \declaretheorem[sibling=theorem]{lemma,proposition,corollary,question,conjecture}
  \declaretheorem[sibling=theorem,style=definitionwithbox]{definition,attempt}
  \declaretheorem[sibling=theorem,style=definitionwithbox]{example}
  \declaretheorem[sibling=theorem,style=remarkwithbox]{remark,problem,assumption}
\pgfplotsset{compat=1.18} 
\begin{document}

\frontmatter

\newgeometry{total={180mm,267mm}} 
\begin{titlepage}
\begin{center}
  \vspace*{\stretch{0.5}}

  \large 

  {\textsc{\Huge{Exact Real Search:} \\ \Large{Formalised Optimisation and Regression in \\ Constructive Univalent Mathematics}}\par}

  \vspace{\stretch{0.2}}

  by

  \vspace{\stretch{0.2}}

  {\huge\textsc{Todd Waugh Ambridge}}

  \vspace{\stretch{0.5}}

  A thesis submitted to the University of Birmingham for the degree of\\
  \textsc{Doctor of Philosophy}

  \vfill

  \flushright
  School of Computer Science \\
  College of Engineering and Physical Sciences \\
  University of Birmingham \\
  14\textsuperscript{th} July 2023

\end{center}
\end{titlepage}
\restoregeometry%

\chapter{Abstract}

The real numbers are important in both mathematics and computation theory.
Computationally, real numbers can be represented in several ways; most commonly using inexact floating-point data-types, but also using exact arbitrary-precision data-types which satisfy the expected mathematical properties of the reals.
This thesis is concerned with formalising properties of certain types for exact real arithmetic, as well as utilising them computationally for the purposes of search, optimisation and regression.

We develop, in a constructive and univalent type-theoretic foundation of mathematics, a formalised framework for performing search, optimisation and regression on a wide class of types.
This framework utilises \MartinEscardo’s prior work on searchable types, along with a convenient version of ultrametric spaces --- which we call closeness spaces --- in order to consistently search certain infinite types using the functional programming language and proof assistant \textsc{Agda}.

We formally define and prove the convergence properties of type-theoretic variants of global optimisation and parametric regression, problems related to search from the literature of analysis.
As we work in a constructive setting, these convergence theorems yield computational algorithms for correct optimisation and regression on the types of our framework.

Importantly, we can instantiate our framework on data-types from the literature of exact real arithmetic.
The types for representing real numbers that we use are the ternary signed-digit encodings and a simplified version of Hans-J. Boehm's functional encodings.
Furthermore, we contribute to the extensive work on ternary signed-digits by formally verifying the definition of certain exact real arithmetic operations using the \Escardo-Simpson interval object specification of compact intervals.

With this instantiation, we are able to perform our variants of search, optimisation and regression on representations of the real numbers. These three processes comprise our framework of \emph{exact real search}; we close the thesis by providing some computational examples of this framework in practice.
\newenvironment{dedication}
    {\vspace{6ex}\begin{quotation}\begin{center}\begin{em}}
    {\par\end{em}\end{center}\end{quotation}}

\begin{dedication}
\ \\ I dedicate this thesis to my father, Michael Andrew Ambridge. \\ I miss the chats that we could have had about Computer Science.
\end{dedication}
\chapter{Acknowledgements}

The writing of this thesis could not have been done without significant academic, technical, motivational and emotional input from a large number of people.
This short note cannot capture everyone that I am thankful for, and so I would simply like to thank every colleague, friend and family member who has supported me on this academic journey, which I have found both terribly challenging and hugely rewarding.

Most of all, I would first like to thank my supervisors, Professors Dan Ghica and \MartinEscardo. 
To Dan, I thank you for our many enjoyable and enlightening conversations, for always having faith in me and my work, and for always being a motivating and encouraging supervisor.
To \Martin, I thank you for educating me in constructive mathematics and type theory, for being rigorous in your expectations of me and my work, and for going completely above-and-beyond in your role as a second supervisor.
I look forward to continuing to work with both of you, as a fellow teacher and researcher, in the future.

I also thank John Longley and Ulrich Berger for agreeing to examine this thesis and for giving me an interesting, rigorous and engaging viva. Your comments have improved this thesis tenfold, and for that I am exceedingly grateful.

I next thank three peers of mine over the past four years: George Kaye, Tom de Jong and Andrew Sneap. 
George and I have been friends since our undergraduate days, and embarking on parallel academic journeys has meant I am a constant companion of his wit, charm and pedantry --- of which he has all three in great measure.
Tom has supported me ever since we started our Ph.D.s at the same time, and I'm very happy that we became great friends --- and his patience and thoughtfulness in answering my countless questions has been a tremendous help.
Andrew and I have collaborated throughout the latter half of my Ph.D., and I have found our collaboration immensely rewarding --- I have enjoyed my transition from his co-supervisor, to his colleague, to his friend.
I also want to thank these three for their contributions to this thesis: Tom for his fantastic template (sorry for butchering it with better line spacing), George for much-needed technical support and Andrew for the time we have spent in \textsc{Java} together.

Further, I thank the University of Birmingham's School of Computer Science \emph{Theory Group} for the continual help and support, and moreover the sense of community and belonging over the past four years.
The ill-named Lab Lunch has proved a supportive setting for workshopping ideas and getting feedback, but the help has been more keenly felt by the fact I can call many of you friends.
In particular, I would like to thank Ayberk Tosun, Paul Levy, Sonia Marin, Eric Finster, Anupam Das, Paul Taylor and Achim Jung.

I would further like to thank my office-mates Qamar Natsheh and Mubashir Ali, for their kind words and support especially in these last few weeks. 
In the wider academic community, I would like to thank the CCC research community, especially Alex Simpson with whom I had some very interesting discussions.
I also give my thanks to the DO\"UVK community at the School of Computer Science for their camaraderie --- thank you Jacqui, Matthew, Jon, Anna, Bruno, Tobias, Charlotte, Charlie and Yan.

Outside of my studies, I thank all of my friends who have supported me: especially my brotherhood of school friends, Callam, James, Ollie, Owen, Will and Todd, and the friends I have made since moving to Birmingham, Freddie, Fran, Charlie, Cathy, Adam and Tom. Thank you also to my family: to Jody, to Colin, to Nana Vee, and to my lovely girls Cocoa, Lola and Sookie.

There are two people left to thank, who have been there in my best and worst moments, and who have helped more than I can express in words.
To Alice, I say thank you for your encouragement and support, for your love and companionship, for putting up with me and for being my best friend --- I look forward to the next chapter of our lives together, where hopefully the word ``thesis'' does not feature so much.
Finally, to my mum, I simply say thank you for everything you have ever given me, which is everything I have.

\setcounter{tocdepth}{2}
\renewcommand*\contentsname{Table of Contents}
\tableofcontents

\mainmatter%

\chapter{Introduction}\label{chap:introduction}

The real numbers are a fundamental structure in a variety of fields such as real analysis, calculus, optimisation theory and regression analysis~\cite{GlobalIntroduction,YanBook}.
The reals are also important computationally; the re-burgeoning field of machine learning, for example, is heavily dependent on floating-point representations of real numbers, wherein they are used to model continuous parameters of artificial neural networks~\cite{NN}.
Arithmetic on floating-point numbers is incredibly, and increasingly, efficient~\cite{BFloat16}, but not without fault: floating-point methods represent real numbers as one of finitely-many dyadic rationals\footnote{A dyadic rational is a rational number of form \(\frac{n}{2^i}\).}, leading to representation and calculation errors ~\cite{FloatErrors,Floats}.
There are, however, alternative approaches to real number computation, such as interval arithmetic --- wherein functions are evaluated by their behaviour on intervals of the real numbers~\cite{Interval} --- or arbitrary-precision `\emph{exact real}' computation~\cite{Boehm86}.

Using exact real computation, any (computable) real number can be represented to any degree of precision, and manipulated correctly with respect to the represented real number~\cite{Turing,Boehm86}.
This means that, in situations where ``floating-point algorithms can lead to completely erroneous results ... exact real number computation provides guaranteed correctness"~\cite{Simpson}.
The applications of exact real computation have thus far largely focused on arithmetic; indeed, perhaps the most omnipresent use of exact reals is in the built-in \textsc{Android} smartphone calculator app~\cite{Boehm17}. In this thesis, we investigate another application of exact real computation: the construction of formally verified algorithms for search, optimisation and regression.

Search has previously been performed on exact reals. For example, Simpson uses an earlier algorithm by Berger to search for the global minimum \emph{value} of a real-valued function in the compact interval \([-1,1]\)~\cite{Simpson,BergerThesis}.
The representation of \([-1,1]\) utilised by Simpson is the \emph{ternary signed-digit encodings}, which has been extensively explored in the literature of exact real arithmetic~\cite{Plume98,Gianantonio93,BergerCoinductive}.
More recently, \Escardo~ implemented search algorithms using the ternary signed-digits in \textsc{Haskell} \cite{Escardo11fun}, as part of his wider work on \emph{searchable sets}~\cite{Escardo08}.

In this thesis, we use searchable types in order to construct algorithms for global optimisation and parametric regression on the ternary-signed digits.
Global optimisation is the problem of finding a global minimum \emph{argument} to a (usually real-valued) function in a compact interval, while parametric regression is the problem of finding parameters that fit a (usually real-valued) model function to some reference data~\cite{GlobalIntroduction,YanBook}.

In order to build formally verified algorithms, we will work in a \emph{constructive} foundation for performing mathematics and computation in tandem. 
When working constructively, a mathematician cannot rely on the law of excluded middle, proofs by contradiction or principles of omniscience\footnote{Such as LPO, WLPO or LLPO, which are used at various points in this thesis to show that something is contructively invalid.}~\cite{Bishop}. Instead, in order to show the truth of a claim, a constructivist must show exactly how it holds. In the case where the claim relates to an algorithm, they must genuinely construct that algorithm.
A constructive approach to mathematics, therefore, provides additional challenges; but there are also advantages: one could extract the algorithm from the constructive proof, and use it computationally with the knowledge it satisfies the claim.
For this purpose, we develop our work entirely within the functional programming language, and proof assistant, \textsc{Agda}.
By working in \textsc{Agda} --- which itself is based on \MartinLof~'s constructive type theory (MLTT) \cite{Agda,MLTTProgBook} --- we effectively \emph{program} formalised type-theoretic mathematical definitions, statements and proofs, and can immediately extract and run algorithms based on these structures' computational content.
Our constructive, type-theoretic formalisation is defined within \Escardo~'s \textsc{Agda} library for univalent mathematics \textsc{TypeTopology}~\cite{TypeTopology}. 

We first develop a framework for search, optimisation and regression on a wide class of types, using the work on searchable types already defined by \Escardo in \textsc{TypeTopology}~\cite{CompactTypes}.
Then, we formalise the ternary-signed digit encodings and verify many of their exact real arithmetic operations in \textsc{Agda}; similar to di Gianantonio's formal verification of the ternary signed-digits in \textsc{Coq}~\cite{Di07}. 
For this purpose, we first formalise a specification of \([-1,1]\), namely the \Escardo~-Simpson interval object~\cite{EscardoSimpson}.
We instantiate our general framework on the ternary signed-digits, allowing us to extract algorithms for search, optimisation and regression on this representation.
Following this primary instantiation, in order to investigate the framework's applicability (and whether or not in can lead us towards efficient practical implementations) we utilise another type for representing exact real numbers: a simplified definition of the Boehm encodings, which are used today --- nearly thirty-years on from their original definition --- in the \textsc{Android} calculator app~\cite{Boehm90s,BoehmAPI}.

\section{Thesis outline and key contributions}
\label{sec:outline}

In \cref{chap:mltt}, we introduce the constructive and univalent type-theoretic foundation of mathematics in which we build our formal framework for search, optimisation and regression. 
We will establish the notation of the thesis informally, based on the syntax of \textsc{Agda}.

In \cref{chap:searchable}, we review and define the two key mathematical concepts of the thesis: \emph{searchability} and \emph{uniform continuity}.
We review the literature on \emph{searchable types}, and their current status in constructive type theory as implemented by \MartinEscardo, and contribute an extension of his work in order to be able to consistently search certain infinite types in \textsc{Agda} by introducing uniform continuity on a convenient version of ultrametric spaces that we call \emph{closeness spaces}.
We give a version of the totally bounded property for closeness spaces, and show that a variety of types yield closeness spaces.
The key technical contribution of this section is the formalised result which shows these \emph{uniformly continuously searchable types} are closed under countable products (\cref{thm:tychonoff}).

In \cref{chap:generalised}, we use uniformly continuously searchable closeness spaces to define our formal convergence properties of global optimisation and parametric regression on a wide class of types.
For this purpose, we introduce \emph{approximate linear preorders}, which approximately order elements of closeness spaces.
The key contribution of this section --- the statement of the type-theoretic variants of global optimisation (\cref{th:min}) and parametric regression (\cref{reg:min,th:perfect,th:imp}) --- is methodological rather than technical, as the proofs of their convergence follow naturally from the concepts we have introduced.

In \cref{chap:reals}, we review and define within \textsc{Agda} two types for representing real numbers: ternary signed-digit encodings and ternary Boehm encodings.
On the former, we formally verify exact real arithmetic operations (namely, negation, binary and infinitary midpoint and multiplication) using the \Escardo~-Simpson interval object specification of closed intervals --- which we also review and formalise in this section.
On the latter, we define the type in \textsc{Agda}, prove the correctness of its structure and show how it yields representations of compact intervals that we can then use for search.
The key technical contributions of this section are:
\begin{itemize}
\item The \textsc{Agda} formalisation of the \Escardo-Simpson interval object specification of closed intervals (\cref{sec:interval-object}),
\item The \textsc{Agda} formalisation of the ternary signed-digit encodings and their aforementioned arithmetic operations (\cref{sec:signed-digits}),
\item The formal verification in \textsc{Agda} that these operations for exact real arithmetic on the ternary-signed digits are correct with respect to the specification of those operations on the interval object (\cref{lem:neg-realise,cor:mid-realise,thm:bigmid-realiser,thm:mul-realiser}).
\end{itemize}

In \cref{chap:exact-real-search}, we bring our formal framework full-circle by instantiating it on these two types for representing real numbers. 
Example evaluations of algorithms for search, optimisation and regression --- either extracted from \textsc{Agda} or implemented in \textsc{Java} --- are then given to show the use of the framework in practice.
A contribution of this section is the formal result that the arithmetic operations we define on the ternary signed-digit encodings are uniformly continuous (\cref{sec:K-ucont}).

Finally, in \cref{chap:conclusion}, by way of conclusion we discuss some further avenues for this line of work. 

\subsection{Reading the formal proofs of this thesis}
\label{sec:symbols}

A chief contribution of this thesis is that most of its mathematical content --- both from the literature and our own contributions --- is formalised in \textsc{Agda} within the library \textsc{TypeTopology}~\cite{TypeTopology}.
We describe the constructive and univalent philosophy of \textsc{TypeTopology} in the intoduction to \cref{chap:mltt}.

The reader is invited to explore our \textsc{Agda} formalisation by `clicking' on the symbols at the top left of each mathematical environment. 
This will take them directly to the \textsc{Agda} function which formalises that definition or proof.
The different files of the library are described in \cref{appendix:agda}.

For the purpose of clarity, we use three different symbols:
\begin{itemize}
\item The \emph{library book} symbol \includesvg[height=9pt]{symbols/book.svg} denotes a statement we are recalling from the literature and which we have \emph{not} formalised, nor does a formalisation appear within \textsc{TypeTopology},
\item The \emph{topological donut} \includesvg[height=9pt]{symbols/mug.svg} symbol denotes a statement formalised within \textsc{TypeTopology}, usually by \MartinEscardo, but sometimes by another collaborator,
\item The \emph{rune of Gandalf} \includesvg[height=9pt]{symbols/gandalf.svg} symbol denotes a statement formalised for this thesis by the author. Sometimes, this will be a repetition from the literature or \textsc{TypeTopology}, but in other cases this will be one of our main contributions, as outlined in \cref{sec:outline}.
\end{itemize}
\chapter{Constructive Univalent Type Theory via \textsc{Agda}}
\label{chap:mltt}

We wish to perform mathematics in a way that supports computer programming by default, in order to extract computational content from our mathematical proofs.
As discussed in \cref{chap:introduction}, this means that we will be utilising a \emph{constructive} approach to mathematics. 
But we want to go further; we do not wish to just support programming but to actively program formal mathematics.
For this purpose, we work formally in a constructive and univalent foundation of mathematics: the variation of \MartinLof~constructive type theory (MLTT) provided by the functional programming language and proof assistant \textsc{Agda}~\cite{MLIntro,Agda}.
More specifically, we work in \textsc{TypeTopology}, an \textsc{Agda} library by \MartinEscardo~and a growing number of collaborators interested in formalising both new and previous theorems in univalent mathematics~\cite{TypeTopology}. 
This thesis is second only to Tom de Jong's recent thesis on \emph{Domain Theory in Constructive and Predicative Univalent Foundations} in having the majority of its results formalised within \textsc{TypeTopology}~\cite{TDJ}.

The philosophy of this thesis is aligned with that of \textsc{TypeTopology}, the full extent of which is given on the library's webpage~\cite{TypeTopology}.
In particular, we work in a small version of MLTT --- which we introduce in \cref{sec:mltt} --- and we adopt the univalent approach to mathematics introduced by Voevodsky and popularised by \emph{The HoTT Book}~\cite{HoTTBook}. This latter point means that, even when not invoking the univalence axiom itself, we utilise the terminology and the perspective of univalent mathematics, which we detail in \cref{sec:univalent}.
Our framework is hence compatible with other formalisations of univalent mathematics, such as the \textsc{UniMath} library for \textsc{Coq}~\cite{UniMath} --- unlike \textsc{UniMath} however, we do not assume the propositional resizing axiom. When we \emph{do} use axioms such as function extensionality or univalence, we make them explicit parameters to those proofs or modules which use them.
Finally, we restrict ourselves to those features of \textsc{Agda} which allow us to remain consistent, and avoid inconsistent assumptions such as the `type-in-type' axiom (which assumes the type of all types is an element of itself).

We begin this chapter with a brief introduction to MLTT, before recalling concepts of this theory and explaining how they are written in general \textsc{Agda}. 
We then introduce the aspects of univalent mathematics that we utilise, before recalling some fundamental concepts that are used throughout the following chapters of the thesis.

\section{A brief introduction to type theory}
\label{sec:intro-type}

A \emph{type theory} is a set of rules for formally reasoning about the behaviour of a \emph{system of terms} such as a logical calculus, foundation of mathematics, philosophical theory, or programming language~\cite{PierceBook}.
Informally, the rules assign to each term a \emph{type} that is used to determine the \emph{behaviour} of such terms, such as which methods of the theory can manipulate them. 
These methods, such as \emph{functions}, are themselves terms of more complicated types formed from simpler types using \emph{type families}.

Early type theories were developed by Russell in the 1900s as foundations of mathematics alternative to those built within naive set theory, which he famously showed yielded inherent paradoxes~\cite{PM}.
Soon thereafter, Church's \emph{simply typed lambda-calculus} utilised Ramsey's \emph{simple theory of types} to ensure that each term of the calculus is well-typed~\cite{ChurchSTLC}.
Later, in 1972, Per \MartinLof~introduced MLTT as a ``full scale system for formalising intuitionistic mathematics"~\cite{MLIntro}.
\MartinLof's intention to fully-formalise Bishop-style constructive mathematics (discussed in~\cref{chap:introduction}) had clear motivations: the internal computation rules of type theories are happily married to the ability to extract computational content from constructive proofs.

In MLTT, a mathematical proposition is interpreted as a type. 
An \emph{element} (i.e.\ a \emph{program}) of such a type is considered a proof of the proposition, and can therefore be computed to yield a constructive witness of that proof.
These types are built from type families which interpret the connectives of intuitionistic logic, and are often \emph{dependent} on types that model more fundamental mathematical structures (such as the natural numbers) or, indeed, other propositions.
This is known as the \emph{propositions as types} interpretation~\cite{HoTTBook}.
This interpretation means that two proofs \(\ty{a_1,a_2}{A}\) of the same mathematical statement may be very different programs, and this difference is often relevant to proving later corollaries of that statement --- this is called \emph{proof relevance}.

The most significant dependent type family of MLTT is the \emph{identity type family}, which is used to interpret statements about term equality. We can say that two elements \(\ty{a_1,a_2}{A}\) are equal if we can construct an element of type \((a_1 = a_2)\)~\cite{MLIntro}.
This makes our foundation of mathematics very rich: we can begin to reason directly about the equality of mathematical objects/programs in the way that a mathematician/programmer would naturally do so.
Matching equality in MLTT more and more closely to a natural notion of mathematical equality is the primary concern of the field of \emph{univalent type theory}~\cite{HoTTBook}; an area that this thesis operates within, and which we shall return to later in \cref{sec:univalent}.

\section{\textsc{Agda} notation for constructive type theory}
\label{sec:mltt}

There are already a variety of formal introductions to MLTT's types and type families, from various perspectives and levels of introduction (for example, see~\cite{MLNotes,Gambino} or the Appendices of~\cite{HoTTBook}).
We assume the reader has some familiarity with type theory, and in this section instead recall the constructions and concepts of MLTT and show how to write them in \textsc{Agda}.
In the remainder of the thesis, however, we will write mathematics informally, in the style of \emph{The HoTT Book}~\cite{HoTTBook}.
This section can therefore be thought of as the documentation required to be able to read this thesis' \textsc{Agda} formalisation.

\subsection{Type universes}
\label{sec:universe}

Recall that a type universe is a type whose elements are themselves types~\cite{MLIntro}. MLTT utilises a countable sequence of \emph{type universes}, \(\U_0 : \U_1 : \U_2, ...\); the integer indices of these type universes are sometimes called \emph{universe levels}~\cite{EscardoUniverses}.

In \textsc{Agda}, type universes are explicitly stratified into countably-many universe levels. There is therefore a lowest level \texttt{lzero} and a successor operation \texttt{lsuc}. This successor function forms a semilattice, whose join is given as the binary operation \verb|_|$\ssqcup$\verb|_|.

\begin{agda}
lzero : Level
lsuc  : (l : Level) $\sto$ Level
_$\ssqcup$_    : (l$_1$ l$_2$ : Level) $\sto$ Level
\end{agda}

\noindent
As \texttt{Level} is a semilattice, the supremum function is associative (\(u \ \ssqcup \ (v \ \ssqcup \ w) \equiv (u \ \ssqcup \ v) \ \ssqcup \ w\)), commutative (\(u \ \ssqcup \ v \equiv v \ \ssqcup \ u\)) and idempotent (\(u \ \ssqcup \ u \equiv u\)) --- furthermore, the successor operation is a homomorphism (\(\texttt{lsuc} \ (u \ \ssqcup \ v) \equiv \texttt{lsuc} \ u \ \ssqcup \ \texttt{lsuc} \ v\)). Note that the preceding equations are \emph{definitional equalities} and not equalities induced by the identity type former (introduced in \cref{sec:id}).

Each universe level \verb|i| is mapped to a type universe \texttt{Set\textsubscript{i}} such that \(\ty{\texttt{Set\textsubscript{i}}}{\texttt{Set\textsubscript{lsuc i}}}\), meaning there is indeed, as in MLTT, a countable sequence of type universes.
Furthermore, there is a type universe \texttt{Set\(\omega\)} that is larger than every type universe, but is not part of the hierarchy --- for example, the expression \((\ty{i}{\texttt{Level}}) \to \texttt{Set}_i\) has type \texttt{Set\(\omega\)}~\cite{AgdaTypeUniverses}.

The notation of type universes in \textsc{TypeTopology} aligns the above closer to that of MLTT~\cite{TypeTopologyUniverses}:

\begin{agda}
open import Agda.Primitive public
  using (_$\ssqcup$_)
  renaming (lzero to $\sU_0$
          ; lsuc to $\_^+$
          ; Level to Universe
          ; Set$\omega$ to $\sU_\omega$
          )
\end{agda}

\noindent
\verb|Set| is renamed to \verb|Type|, and universes are mapped to type universes using the `dot function'.
We prefer this notation as not all types are sets in MLTT, and furthermore the very small difference in syntax between universes \(\U\) and type universes \(\U^\cdot\) allows the reader of a \textsc{TypeTopology} file to align these concepts in their mind, as we do when working in MLTT~\cite{mgs}.

Formally, we follow these conventions of \textsc{TypeTopology}; informally, universes are implicit in most of our work. Where necessary, we use \(\U,\V,...\) to range over type universes.

\subsection{Function and \texorpdfstring{\(\Pi\)}{Π}-types}

\emph{Functions} are built-in to \textsc{Agda}, and interpret statements that are universally quantified over a type.
We give the curried projection functions for every type (in every universe) as example function definitions:

\begin{agda}
fst     : {X : $\sUd$} {Y : $\sVd$} $\to$ X $\to$ Y $\to$ X
fst {$\sU$} {$\sV$} {X} {Y} x y = x

snd     : {X : $\sUd$} {Y : $\sVd$} $\to$ X $\to$ Y $\to$ Y
snd {$\sU$} {$\sV$} {X} {Y} x y = y
\end{agda}

\noindent
Note that implicit arguments (i.e.\ the types and the universes they are elements of) are given in braces in the type definition, and can be accessed using braces in the function definition.

\emph{Dependent functions} (elements of \(\Pi\)-types), where the type of the output depends on that of the input, are also built-in using the same syntax.
In \textsc{TypeTopology}, we match this syntax to that of MLTT by defining a universe-valued dependent function \(\Pi\) as below (although, for brevity of code, we almost always utilise the usual \textsc{Agda} notation of \(\Pi\)-types):

\begin{agda}
$\Pi$ : {X : $\sUd$} (Y : X $\sto$ $\sVd$) $\sto$ ${\sU \ \ssqcup \ \sV}^\cdot$
$\Pi$ {$\sU$} {$\sV$} {X} Y = (x : X) $\sto$ Y x
\end{agda}

\noindent
Elements of \(\Pi\)-types \(\ty{f}{\Pity{x}{X}{Y \ x}}\) are therefore dependent functions that, on input of \(\ty{x}{X}\), output \(\ty{f(x)}{Y \ x}\) --- usual functions are just dependent functions where the type family \(Y\) is constant. 

\subsection{Natural number and integer types}

The \emph{natural numbers} are defined inductively using the \verb|data| keyword:

\begin{agda}
data $\N$ : ${\sU_0}^\cdot$ where
 zero : $\N$
 succ : $\N$ $\sto$ $\N$
\end{agda}

A type defined by \verb|data| is given by a list of its \emph{constructors}; here, \(\N\) is defined using the Peano inductive definition.
Note, therefore, that the constructor \(\ty{\mathsf{succ}}{\N \to \N}\) is itself an \textsc{Agda} function from natural numbers to natural numbers.
We will write naturals in the expected shorthand way, e.g.\ \(0 := \mathsf{zero}\) and \(n +1 := \mathsf{succ} \ n\). 

In \textsc{Agda}, functions on inductive types can be defined by \emph{pattern matching}, i.e.\ by considering the function's output for each of its constructors.
As an example, we define addition on the natural numbers as an infix operation by pattern matching:

\begin{agda}
_+$\N$_ : $\N$ $\sto$ $\N$ $\sto$ $\N$
0      +$\N$ y = y
succ n +$\N$ y = succ (n +$\N$ y)
\end{agda}

The \emph{integers} \(\Z\) are also defined using \verb|data|. The two constructors are for negative and non-negative numbers:

\begin{agda}
data $\Z$ : ${\sU_0}^\cdot$ where
 pos     : $\N$ $\sto$ $\Z$
 negsucc : $\N$ $\sto$ $\Z$
\end{agda}

\noindent
Again, we will write integers in the expected shorthand way; e.g.\ \(0 := \pos 0\), and \(-6 := \negsuc 5\).

\subsection{Unit, empty and negated types}

Recall, by the propositions-as-types interpretation, that any \emph{pointed} type (i.e.\ a type that we can exhibit an element of) represents the trivially true proposition.
The \emph{unit type} \(\1\), which has one constructor and which we often use for representing truth, is defined as follows:

\begin{agda}
data $\1$ {$\sU$} : $\sUd$ where 
  $\sstar$ : $\1$
\end{agda}

\noindent
The \emph{empty type} \(\0\), which has no constructors, can also be defined using \verb|data|:

\begin{agda}
data $\0$ {$\sU$} : $\sUd$ where 
\end{agda}

\noindent
Note that there are unit and empty types in every universe.

Recall that negation is interpreted in MLTT by empty-valued functions. We therefore define the type family \(\neg X\) given any type \(X\) as follows:

\begin{agda}
$\neg$ : (X : $\sUd$) $\sto$ $\sUd$
$\neg$ {$\sU$} X = X $\sto$ $\0$
\end{agda}

\noindent
Exhibiting an element of \(\neg X\) therefore amounts to defining a function that proves any element of \(X\) leads to false.

In the opposite direction, by the \emph{principle of explosion} any proof of false itself implies anything.
This can be defined in \textsc{Agda} as a dependent function by pattern matching:

\begin{agda}
$\0$-elim : {X : $\sUd$} $\sto$ $\0$ $\sto$ $X$
$\0$-elim {$\sU$} {$X$} ()
\end{agda}

The special pattern \verb|()| is used here by \textsc{Agda} to denote that there is no pattern for that element (i.e.\ the element of type \(\0\)), and thus we do not have to define the function for that case.

\subsection{Disjoint union of types}

Given two types, we can form their \emph{disjoint union type}, which has one constructor for each side of the disjunction:

\begin{agda}
data _+_ {$\sU$} {$\sV$} (X : $\sUd$) (Y : $\sVd$) : ${\sU \ \ssqcup \ \sV}^\cdot$ where
 inl : X $\sto$ X + Y
 inr : Y $\sto$ X + Y
\end{agda}

These types represent disjoint statements in intuitionistic logic, as we must provide a witness as to which side of the disjunction holds. 
Furthermore, the form of disjunction that disjoint union types represent is \emph{inclusive}, and yields a direct answer as to which of the two statements holds --- even if both could feasibly hold, the element can only reduce to one side of the disjunction.

Disjoint unions are therefore in general \emph{structure} and not \emph{property}.
Indeed, we often use them in the definition of mathematical structures; as an example, the three-point type can be given by a disjunction of the unit type:

\begin{agda}
$\3$ : $\sUzd$
$\3$ = $\1$ + $\1$ + $\1$
\end{agda}

We will return to discussion of properties versus structure in \cref{sec:props}.

\subsection{Binary product and \texorpdfstring{\(\Sigma\)}{Σ}-types}
\label{sec:sigma}

Given two types, we can form their \emph{binary product type}, whose elements are pairs. 
We define these as non-dependent versions of \emph{dependent pairs} (elements of \(\Sigma\)-types), where the type of the second projection of the pair depends on that of the first.
\(\Sigma\)-types are defined coinductively using the \verb|record| keyword:

\begin{agda}
record $\Sigma$ {$\sU$} {$\sV$} {X : $\sUd$} (Y : X $\sto$ $\sVd$) : ${\sU \ \ssqcup \ \sV}^\cdot$ where
  constructor _,_
  field
   pr$_1$ : X
   pr$_2$ : Y pr$_1$
\end{agda}

\noindent
Then, binary product types can be easily be defined as following type family:

\begin{agda}
_$\sx$_ : $\sUd$ $\sto$ $\sVd$ $\sto$ ${\sU \ \ssqcup \ \sV}^\cdot$
X $\sx$ Y = $\Sigma$ {$\sU$} {$\sV$} {X} ($\lambda$x.Y)
\end{agda}

A type defined by \verb|record| is given by a list of its \emph{fields}; in this way, records are \textsc{Agda}'s built-in version of \(\Sigma\)-types.

Recall that, as mathematical statements, \(\Sigma\)-types interpret constructive existence: proving there is an element \(\ty{a}{A}\) which satisfies the statement \(\ty{B(a)}{\V}\), for a type \(\ty{A}{\U}\) and type family \(\ty{B}{A \to \V}\), is performed by constructing an element of type \(\sigmaty{a}{A}{B(a)}\).
In general, this element can be defined in multiple ways, yielding (by proof relevance) multiple proofs of existence.

\(\Sigma\)-types are therefore in general \emph{structure} and not \emph{property}. They are used to define both collections and the more restricted notion of subtypes. As an example, we give below the collection of even natural numbers \(\N_e\):

\begin{agda}
is-even : $\N$ $\sto$ $\sUzd$
is-even 0 = $\1$
is-even 1 = $\0$
is-even (succ (succ n)) = is-even n

$\N\textsubscript{e}$ : $\sUzd$
$\N\textsubscript{e}$ = $\Sigma$ n : $\N$ , is-even n
\end{agda}

\noindent
An element of \(\N_e\) is a dependent pair: a number \(\ty{n}{\N}\) and a proof of evenness of type \(\mathsf{is{\hy}even} \ n\).
Later, in \cref{sec:props} we will see that \(\N_e\) is in fact a subtype of \(\N\): informally, this is because for any given integer \(n\) there is only one proof of evenness (i.e.\ there is only one element of the type \(\mathsf{is{\hy}even} \ n\)).

\subsection{Identity types}
\label{sec:id}

Every type has an associated family of \emph{identity types}.
Recall that there is only a single constructor for identity types in MLTT --- so too in \textsc{Agda}:

\begin{agda}
data Id {$\sU$} (X : $\sUd$) : X $\sto$ X $\sto$ $\sUd$
  refl : (x : X) $\sto$ Id X x x
\end{agda}

The function \verb|refl| allows us only to construct elements of the type \verb|Id X x x|, which we write in \textsc{TypeTopology} as \verb|x = x| for any \verb|x : X|\footnote{Note that we actually use a Unicode `long equals' symbol in \textsc{TypeTopology}, as \(=\) is reserved in \textsc{Agda} for definitions.}, leaving the type of \verb|x| implicit.

By default in \textsc{Agda}, it is the case therefore that Streicher's K axiom is \emph{assumed}: \(\mathsf{refl} \ x\) is the only element of \(x = x\)~\cite{Streicher}.
However, as this cannot be \emph{proved} in MLTT (or \textsc{Agda}), we disable its use in \textsc{TypeTopology}.
But, it \emph{is} the case that \MartinLof's induction rule for identity types only requires the consideration of elements constructed by \(\mathsf{refl}\)~\cite{MLNotes}. \MartinLof~called this rule \(J\)~\cite{mgs}:

\begin{agda}
J : (X : $\sUd$) (A : (x y : X) $\sto$ x = y $\sto$ $\sVd$)
  $\sto$ ((x : X) $\sto$ A x x (refl x))
  $\sto$ (x y : X) (p : x = y) $\sto$ A x y p
J {$\sU$} {$\sV$} X A f x x (refl x) = f x
\end{agda}

\noindent
From the above, we see that \(J\) is defined in \textsc{Agda} by pattern matching on the element of the identity type \(\ty{p}{x = y}\).
Once the identity type is pattern matched, \(x\) and \(y\) are aligned definitionally in \textsc{Agda} and can be used interchangeably.

Elements of other identity types must be derived from the expected properties of identities,
which are defined in \textsc{Agda} as functions either by using \(J\) or by directly pattern matching.
We prove the four key identity rules --- symmetry, transitivity, function application, and transport --- by pattern matching.

\begin{agda}
sym : {X : $\sUd$} {Y : $\sVd$} (x y : X) $\sto$ x = y $\sto$ y = x 
sym {$\sU$} {$\sV$} {X} {Y} x x (refl x) = refl x

trans : {X : $\sUd$} {Y : $\sVd$} (x y z : X) $\sto$ x = y $\sto$ y = z $\sto$ x = z 
trans {$\sU$} {$\sV$} {X} {Y} x x x (refl x) = refl x

ap : {X : $\sUd$} {Y : $\sVd$} (f : X $\sto$ Y) (x y : X) $\sto$ x = y $\sto$ f x = f y 
ap {$\sU$} {$\sV$} {X} {Y} f x x (refl x) = refl (f x)

transport : {X : $\sUd$} (A : X $\sto$ $\sVd$) (x y : X) $\sto$ x = y $\sto$ A x $\sto$ A y 
transport {$\sU$} {$\sV$} {X} A x x (refl x) = id
\end{agda}

\noindent
We can consider all of the above functions as \emph{proofs} about identities.
Note that when working informally in this thesis we do not detail when we use these rules.

\section{Univalent mathematics in \textsc{TypeTopology}}
\label{sec:univalent}

Now that we have recalled the key constructions of MLTT using \textsc{Agda}, we change to a more informal approach to mathematics for the remainder of the thesis.
This approach will be a mixture of English statements and proofs, peppered with mathematical notation that resembles \textsc{Agda} code.
If the reader wishes to read the formalisation of any particular definition or proof, they can be easily accessed by the method described in \cref{sec:symbols} (i.e.\ by simply clicking the symbol at the top of the environment).

In this section, we recall principles of the burgeoning field of univalent foundations of mathematics --- which is sometimes called \emph{univalent type theory} or \emph{homotopy type theory} --- from the point of view of their definition in \textsc{TypeTopology}.
In order to get the reader used to our informal notation, we spell out many of the recalled definitions and proofs of this section in English and \textsc{Agda}-like syntax.

Four principles that we will introduce in this section that are worth mentioning in advance are \emph{function extensionality} (labelled \axioms{f}), \emph{propositional extensionality} (labelled \axioms{p}), \emph{univalence} (labelled \axioms{u}) and \emph{propositional truncation} (labelled \axioms{t}). These are all axioms in MLTT, and thus we want to be especially clear about where they are used in this thesis. If a proof uses one of these four axioms, we will mark it with its corresponding label --- for example, a proof labelled \axioms{fp} invokes both function extensionality and propositional extensionality.

\subsection{Function extensionality}

Two (possibly dependent) functions are equivalent if they are behaviourally indistinguishable; i.e.\ if they are \emph{pointwise-equal},

\begin{definition}
\mbox{}
\typetop{MLTT}{Id}{_\urlsim_}
Two functions \(\ty{f,g}{\pity{x}{X}{Y(x)}}\) are \emph{pointwise-equal} \(f \sim g\) if \(f(x) = g(x)\) holds for every \(\ty{x}{X}\):
\[ f \sim g := \pitye{x}{X}{f(x) = g(x)}. \]
\end{definition}

\noindent
Of course, equal functions are pointwise equal.

\begin{lemma}
\label{lem:pe-refl}
Every function is pointwise-equal to itself.
\end{lemma}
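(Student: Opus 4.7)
The plan is to unfold the definition of pointwise equality and observe that the result reduces to reflexivity applied pointwise. By definition, $f \sim f$ is the type $\Pi_{x : X} f(x) = f(x)$, so a witness is a dependent function that, for each $\ty{x}{X}$, produces an element of $f(x) = f(x)$.

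First I would take an arbitrary $\ty{x}{X}$ and apply the constructor $\mathsf{refl}$ of the identity type to $f(x)$, yielding $\mathsf{refl}(f(x)) : f(x) = f(x)$. Then I would abstract over $x$ to obtain the dependent function $\lambda x.\, \mathsf{refl}(f(x))$ of type $\Pi_{x : X} f(x) = f(x)$, which is exactly the required inhabitant of $f \sim f$.

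There is essentially no obstacle here: the proof does not require function extensionality, $J$-induction, or any of the other identity-type lemmas recalled in the previous subsection, since the statement is the pointwise version of reflexivity and reflexivity is a constructor. The only thing worth noting is that this lemma serves as the base case showing that equal functions are pointwise equal (via $\mathsf{transport}$ or direct pattern matching on the identity proof), which motivates the subsequent introduction of function extensionality as the converse implication.
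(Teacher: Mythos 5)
Your proof is correct and is essentially identical to the paper's: both construct the witness by applying the identity type's constructor $\mathsf{refl}$ to $f(x)$ for each $\ty{x}{X}$, yielding the term $\lambda x.\,\mathsf{refl}\ (f(x))$ of type $f \sim f$. Nothing further is needed.
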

\begin{proof}
The proof term is constructed immediately the identity type's constructor (i.e. reflexivity):
\begin{alignat*}{3}
\simrefl &: \Pity{f}{\pitye{x}{X}{Y(x)}}{f \sim f} ,\\
\simrefl & (f,x) := \mathsf{refl} \ (f(x)).
\end{alignat*}
\end{proof}

\begin{lemma}
\typetop{UF}{Base}{happly}
Given functions \(\ty{f,g}{\pitye{x}{X}{Y(x)}}\) such that \(f = g\), then \(f \sim g\).
\end{lemma}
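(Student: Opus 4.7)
The plan is to prove this by straightforward path induction (the \(J\) rule) on the equality \(f = g\). Since pointwise equality is itself a \(\Pi\)-type, the target is a predicate \(A(f,g,p) := f \sim g\) in the variables \(f,g : \Pi_{x:X} Y(x)\) and \(p : f = g\), and to invoke \(J\) it suffices to exhibit a term of \(A(f,f,\mathsf{refl} \ f)\), i.e.\ to show that every function is pointwise-equal to itself. That is exactly \cref{lem:pe-refl}, so the proof reduces to one application of path induction followed by an appeal to that lemma.

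Concretely, I would first fix \(X\), \(Y\) and introduce the functions \(f,g\) together with the hypothesis \(p : f = g\). Then I would either define the term by pattern matching on \(p\) as \(\mathsf{refl} \ f\) (in which case the goal reduces definitionally to \(f \sim f\), discharged by \(\simrefl(f)\)), or equivalently apply \(J\) with motive \(\lambda f g p. f \sim g\) and base case \(\simrefl\). Both formulations produce the same term; the pattern-matching version is shorter in \textsc{Agda} and matches the style used earlier in the chapter for \texttt{sym}, \texttt{trans}, \texttt{ap} and \texttt{transport}.

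A second, equally valid route is to avoid \(J\) altogether and instead construct the proof pointwise: for each \(x : X\) apply \(\mathsf{ap}\) to the evaluation map \(\mathsf{ev}_x := \lambda h. h(x)\), obtaining \(\mathsf{ap} \ \mathsf{ev}_x \ p : f(x) = g(x)\); abstracting over \(x\) gives the desired element of \(f \sim g\). This variant makes explicit that \emph{no} extensionality axiom is invoked --- the content is purely the congruence of identity under function application.

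There is no real obstacle here: the lemma is essentially a naming convention, and the only subtlety worth flagging is that the \emph{converse} direction (from \(f \sim g\) to \(f = g\)) is \textbf{not} provable in MLTT and is precisely the function extensionality axiom \axioms{f} mentioned just above. It is therefore important to record that the present lemma, sometimes called \emph{happly}, is axiom-free, so that later proofs can use it without incurring any dependency on \axioms{f}.
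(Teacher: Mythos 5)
Your proposal is correct and its primary route coincides with the paper's own proof: the paper transports the reflexivity witness \(\simrefl(f)\) of \cref{lem:pe-refl} along the family \(\lambda h.\, f \sim h\) over the given equality, which is exactly your \(J\)/pattern-matching reduction to the base case \(f \sim f\). The alternative pointwise argument via \(\mathsf{ap}\) of the evaluation map is also valid, but the main argument is essentially identical to the paper's.
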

\begin{proof}
We utilise the given equality \(\ty{e}{f = g}\) to build the function \\ \[\ty{\mathsf{transport} (\lambda(\ty{h}{\Pitye{x}{X}{Y(x)}}).f \sim h,e)}{f \sim f \to f \sim g}.\]
Thus, by applying \cref{lem:pe-refl}, the result follows:
\begin{alignat*}{3}
\mathsf{happly} &: \Pity{f,g}{\pitye{x}{X}{Y(x)}}{(f = g) \to (f \sim g)} ,\\
\mathsf{happly} & (f,g,e,x) := \mathsf{transport} (\lambda(\ty{h}{\Pitye{x}{X}{Y(x)}}).f \sim h,e,\simrefl(X)).
\end{alignat*}
\end{proof}

There is no way to prove (for non-trivial types) that pointwise-equal functions are equal in MLTT.
In order to be able to treat equivalent functions as equal functions, we can add an extra axiom called \emph{(naive) function extensionality} (later, in \cref{def:funext} we will introduce the non-naive version).

\begin{definition}
\label{def:naive-fe}
\typetop{UF}{FunExt}{naive-funext}
We say that \emph{naive function extensionality} holds for given universes \(\U,\V\) when, given \(\ty{X}{\U}\), \(\ty{Y}{X \to \V}\) and functions \(\ty{f,g}{\pitye{x}{X}{Y(x)}}\), pointwise-equality implies equality:
\[ \mathsf{naive{\hy}funext}(\U,\V) := \pitye{X}{\U} \pitye{Y}{X \to \V} \pity{f,g}{\pitye{x}{X}{Y(x)}}{f \sim g \to f = g} \]
\end{definition}

Although we accept it as a mathematical statement and would like to use it when reasoning about certain functions, function extensionality is independent of MLTT; meaning it is consistent with the type theory but cannot be proved by it.
As it is consistent, one can add it as an axiom to the theory and use it in their proofs --- in \textsc{Agda}, this amounts to postulating the existence of a proof term \(\left(\ty{\mathsf{fe}}{\mathsf{naive{\hy}funext}}\right)\). If one does this, however, one must be careful about where the postulated term is used. Postulated terms have no computational interpretation; while they may help us to write a proof, we will be unable to extract any data from said proof. In this thesis, function extensionality is sometimes assumed in order to show that our results are correct with regards to certain specifications --- but we never assume it when computing those results themselves.

\subsection{Propositions and unique proofs}
\label{sec:props}

Under the propositions as types interpretation described in \cref{sec:intro-type}, the types of MLTT can be viewed as mathematical statements, with elements of those types viewed as proofs of the statement. Hence, by exhibiting different elements of the same type, the truth of a statement can be given in multiple ways.
Univalent type theory distinguishes statements that can be true in multiple ways and those that can only be true in one way, calling the latter \emph{(h)-propositions}.
A type interprets a proposition if it can only be true in one way; i.e.\ if every element of the type is identical. These types are sometimes referred to as \emph{subsingletons}, though we simply call them \emph{propositions}. 
If we agree that propositions are the only types that truly interpret a mathematical proposition, then we are aligning ourselves with the \emph{propositions as some types} interpretation of dependent type theory~\cite{PropAsSome}.

\begin{definition}
\label{def:proposition}
\typetop{UF}{Subsingletons}{is-prop}
A type \(X\) is a \emph{proposition} if all of its elements are equal:
\[ \mathsf{is{\hy}prop}(X) := \pity{x,y}{X}{x = y} .\]
\end{definition}

\begin{lemma}
\typetop{UF}{Subsingletons}{\urlone-is-prop}
\(\1\) is a proposition.
\end{lemma}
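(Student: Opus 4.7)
The plan is to unfold the definition of \(\mathsf{is{\hy}prop}\) from \cref{def:proposition} and show that for any two elements \(x,y : \1\), we can construct a proof that \(x = y\). Since \(\1\) was defined inductively with a single constructor \(\star\), the key observation is that every element of \(\1\) is definitionally equal to \(\star\).

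First, I would pattern match on both arguments \(x\) and \(y\). By the induction principle for \(\1\) (or equivalently, by direct pattern matching in \textsc{Agda}), the only case we need to consider is when both \(x\) and \(y\) reduce to the constructor \(\star\). Once we are in this case, the goal \(x = y\) reduces definitionally to \(\star = \star\), which is inhabited by \(\mathsf{refl} \ \star\). Informally, the proof term is:
\begin{alignat*}{3}
\1\text{-is-prop} &: \mathsf{is{\hy}prop}(\1), \\
\1\text{-is-prop} & (\star, \star) := \mathsf{refl} \ \star.
\end{alignat*}

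There is no real obstacle here; the result follows immediately from the fact that \(\1\) has exactly one constructor, so pattern matching collapses both inputs to the same canonical form. This is a paradigmatic example of why the propositions-as-some-types interpretation aligns well with MLTT: \(\1\) has a unique introduction rule, and therefore trivially satisfies the uniqueness requirement of \cref{def:proposition}. Note that we do not require function extensionality, univalence, or any other axiom --- the proof is entirely internal to the core theory.
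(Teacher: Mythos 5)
Your proof is correct and matches the paper's argument: both rely on the fact that \(\star\) is the only element of \(\1\), so any two elements are identified by reflexivity after pattern matching. The paper phrases this as \(x = \star = y\), which is the same observation written out as an equational chain rather than as an explicit pattern-matching definition.
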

\begin{proof}
\(\1\) only has one element, \(\ty{\star}{\1}\); thus, it is trivial to prove that any two \(\ty{x,y}{\1}\) are equal, as \(x = \star = y\).
\end{proof}

\begin{lemma}
\typetop{UF}{Subsingletons}{\urlzero-is-prop}
\label{lem:0-prop}
\(\0\) is a proposition.
\end{lemma}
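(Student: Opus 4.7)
The plan is to unfold the definition of \(\mathsf{is{\hy}prop}\) and then exploit the fact that \(\0\) has no constructors, so any universally quantified statement over its elements holds vacuously in MLTT. Concretely, I need to produce an element of the type \(\pity{x,y}{\0}{x = y}\).

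First, I would introduce the two hypothetical elements \(\ty{x,y}{\0}\). At this point I need to construct an equality \(x = y\). The crucial observation is that I already have \(\ty{x}{\0}\) in context, which by the principle of explosion (\cref{sec:mltt}, the function \(\0\text{-elim}\)) suffices to conclude \emph{any} type --- in particular the identity type \(x = y\). So the proof term is simply
\[ \mathsf{0{\hy}is{\hy}prop} := \lambda(\ty{x,y}{\0}). \, \0\text{-elim}(x) \, : \, \mathsf{is{\hy}prop}(\0). \]
Equivalently, and more directly in \textsc{Agda}, one can define the function by pattern matching on \(x\) using the absurd pattern \texttt{()}, since \(\0\) has no constructors and hence no cases need to be considered.

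There is essentially no obstacle here: the lemma is one of the archetypal vacuous-truth arguments that motivate including the principle of explosion in the first place. The only point to be careful about is that \(x\) must genuinely be used (either via \(\0\text{-elim}\) or the absurd pattern) rather than trying to reflexively produce \(\mathsf{refl}\), since \(x\) and \(y\) are not definitionally equal in context. Once dispatched this way, the proof is complete without appeal to function extensionality or any univalence-related axiom.
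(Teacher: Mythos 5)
Your proof is correct and matches the paper's argument exactly: both introduce the two hypothetical elements of \(\0\) and discharge the goal vacuously via \(\0\mathsf{{\hy}elim}\) (equivalently, the absurd pattern in \textsc{Agda}). Nothing further is needed.
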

\begin{proof}
\(\0\) has no elements; thus, given any two \(\ty{a,b}{\0}\) we can immediately complete the vacuous proof using \(\0\mathsf{{\hy}elim}\).
\end{proof}

\noindent
But it is not the case that these are the only propositions. For every proposition \(X\), it is independent that \((X \simeq \1) + (X \simeq \0)\) --- a proof of this would be logically equivalent to the classical law of excluded middle for propositions.

\(+\)-types that interpret exclusive-or statements preserve the proposition property, meaning the type is now a \emph{property} rather than \emph{structure}.

\begin{lemma} \label{lem:plus-prop}
\typetop{UF}{Subsingletons}{\urlplus-is-prop}
If \(\ty{X}{\U}\) and \(\ty{Y}{\V}\) are propositions, and \(\neg (X \x Y)\), then \(\ty{X + Y}{\U \sqcup \V}\) is a proposition.
\end{lemma}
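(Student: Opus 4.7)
The plan is to prove that any two elements $a, b : X + Y$ are equal by case-splitting on the two constructors, giving four cases in total. In the two ``matching'' cases we exploit the proposition hypothesis on $X$ or $Y$, and in the two ``mismatching'' cases we derive a contradiction from the hypothesis $\neg (X \x Y)$.

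Concretely, first I would pattern match on $a$ and $b$. If both are of the form $\mathsf{inl}(x_1)$ and $\mathsf{inl}(x_2)$ with $x_1, x_2 : X$, then since $X$ is a proposition we obtain $\ty{e}{x_1 = x_2}$, and applying $\mathsf{ap}(\mathsf{inl}, e)$ yields $\mathsf{inl}(x_1) = \mathsf{inl}(x_2)$. The case of two $\mathsf{inr}$ constructors is symmetric, using that $Y$ is a proposition and applying $\mathsf{ap}$ with $\mathsf{inr}$.

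For the mismatched cases, say $a = \mathsf{inl}(x)$ and $b = \mathsf{inr}(y)$ with $\ty{x}{X}$ and $\ty{y}{Y}$, the pair $(x,y)$ inhabits $X \x Y$, and so applying the hypothesis $\ty{q}{\neg (X \x Y)}$ to $(x,y)$ produces an element of $\0$, from which we conclude $a = b$ via $\0\mathsf{{\hy}elim}$. The reverse mismatch $a = \mathsf{inr}(y)$, $b = \mathsf{inl}(x)$ is handled identically.

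There is no real obstacle here: the lemma is essentially bookkeeping on the four constructor cases, with the two matching cases using the given proposition structure on $X$ and $Y$, and the two mismatching cases being ruled out by the disjointness hypothesis $\neg(X \x Y)$. The only subtlety worth flagging is that the hypothesis $\neg(X \x Y)$ is precisely what is needed to force the disjunction to behave classically as exclusive-or at the level of propositions; without it, $X + Y$ would be structure (recording which disjunct witnesses the proof), not property.
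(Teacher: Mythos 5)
Your proof is correct and is exactly the paper's argument: a four-way case split on the constructors, with the matching cases handled by $\mathsf{ap}$ applied to the proposition witnesses for $X$ and $Y$, and the mismatched cases eliminated via $\0\mathsf{{\hy}elim}$ applied to $\neg(X \x Y)$ at the pair $(x,y)$.
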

\begin{proof}
\begin{alignat*}{3}
\plusprop &: \isprop{X} \to \isprop{Y} &&\to \neg (X \x Y) \to \isprop{(X + Y)} \\
\plusprop &(p,q,f,\inl \ x_1, \inl \ x_2) &&:= \mathsf{ap}(\inl,p(x_1,x_2)) \\
\plusprop &(p,q,f,\inl \ x, \inr \ y) &&:= \zelim{(f(x,y))} \\
\plusprop &(p,q,f,\inr \ y, \inl \ x) &&:= \zelim{(f(x,y))} \\
\plusprop &(p,q,f,\inr \ y_1, \inr \ y_2) &&:= \mathsf{ap}(\inr,q(y_1,y_2))
\end{alignat*}
\end{proof}

\noindent
\(\Pi\)-types, \(\neg\)-types and \(\Sigma\)-types also preserve the proposition property.

\begin{lemma} \label{lem:pi-prop}
\typetop{UF}{Subsingletons-FunExt}{\urlPi-is-prop}
Given \(\ty{X}{\U}\) and \(\ty{Y}{X \to \V}\) where every \(\ty{Y(x)}{\V}\) is a proposition, then \(\ty{\Pi Y}{\U \sqcup \V}\) is a proposition.
\end{lemma}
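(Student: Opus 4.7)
The plan is to establish the proposition property of $\Pi Y$ by reducing equality of dependent functions to pointwise-equality, which is immediate from the assumption that each fibre $Y(x)$ is a proposition, and then closing the gap from $\sim$ to $=$ using function extensionality. So this proof will be marked \axioms{f}.

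First I would take arbitrary $f, g : \Pi Y$ and aim to produce an element of $f = g$. To build a proof of pointwise equality $f \sim g$, note that for any $\ty{x}{X}$ the two terms $f(x)$ and $g(x)$ both inhabit $Y(x)$; since $Y(x)$ is assumed to be a proposition, the hypothesis $\isprop{(Y(x))}$ applied to $f(x)$ and $g(x)$ directly yields the required identity $f(x) = g(x)$. Abstracting over $x$ gives the desired inhabitant of $\pitye{x}{X}{f(x)=g(x)}$, i.e.\ of $f \sim g$.

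Next I would invoke naive function extensionality (\cref{def:naive-fe}) at the relevant universes $\U$ and $\V$ to convert this pointwise equality into the genuine identity $f = g$. Written compactly, the term is
\[
\Pi\text{-is-prop}(p, f, g) := \mathsf{fe}\bigl(f, g, \lambda x.\, p(x, f(x), g(x))\bigr),
\]
where $p : \pity{x}{X}{\isprop{(Y(x))}}$ is the hypothesis and $\mathsf{fe}$ is the postulated function extensionality term.

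There is no real obstacle here beyond the unavoidable appeal to function extensionality: without it, no proof is possible, since pointwise-equal functions cannot be shown equal in pure MLTT. The only thing to be careful about is keeping track of the universe levels — the result lives in $\U \sqcup \V$, and function extensionality must be assumed at exactly these levels — but this is bookkeeping rather than a genuine difficulty.
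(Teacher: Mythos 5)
Your proof is correct and follows essentially the same route as the paper: establish $f \sim g$ pointwise from the hypothesis that each $Y(x)$ is a proposition, then conclude $f = g$ by function extensionality. The explicit term and the remark on universe levels are fine but add nothing beyond the paper's argument.
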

\begin{proof} \axioms{f}
We need to show that given \(\ty{f,g}{\prod Y}\) we have \(f = g\). 
We have \(f \sim g\) because given any \(\ty{x}{X}\), \(Y(x)\) is a proposition (so therefore \(f(x) = g(x)\)).
The result immediately follows by function extensionality.
\end{proof}

\begin{corollary} \label{lem:neg-prop}
\typetop{UF}{Subsingletons-FunExt}{negations-are-props}
The negation of every type is a proposition.
\end{corollary}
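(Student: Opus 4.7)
\axioms{f}
The plan is to recognise the negation $\neg X := X \to \0$ as a special case of a $\Pi$-type with constant codomain, so that \cref{lem:pi-prop} applies directly.

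First, I would observe that $\neg X$ is definitionally equal to $\Pi(\lambda(\ty{x}{X}).\0)$, i.e.\ a dependent function type whose fibrewise codomain is the constant family $Y(x) := \0$. To invoke \cref{lem:pi-prop}, I need each fibre $Y(x)$ to be a proposition; but $Y(x) \equiv \0$ for every $\ty{x}{X}$, and $\0$ is a proposition by \cref{lem:0-prop}. Applying \cref{lem:pi-prop} with this constant family then yields that $\Pi Y$, and hence $\neg X$, is a proposition. The proof term is essentially $\mathsf{neg{\hy}prop}(X) := \piprop(\lambda \_ . \zeroprop)$.

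There is no real obstacle here — the result is purely a matter of unfolding the definition of $\neg$ and specialising \cref{lem:pi-prop}. The only subtlety worth flagging is the dependence on function extensionality, which is inherited from \cref{lem:pi-prop}; this explains the \axioms{f} annotation. Informally, any two proofs of $\neg X$ are functions $X \to \0$ which must agree pointwise (since the target $\0$ has at most one element of each identity type vacuously), and function extensionality then lets us conclude that they are equal as functions.
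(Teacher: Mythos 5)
Your proof is correct and takes essentially the same route as the paper, which simply cites \cref{lem:pi-prop,lem:0-prop}: you unfold $\neg X$ as a $\Pi$-type with constant fibre $\0$, note each fibre is a proposition by \cref{lem:0-prop}, and conclude by \cref{lem:pi-prop}, inheriting function extensionality exactly as the paper does. No further comment is needed.
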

\begin{proof} \axioms{f}
By \cref{lem:pi-prop,lem:0-prop}.
\end{proof}

\begin{lemma} \label{lem:sigma-prop}
\typetop{UF}{Subsingletons}{\urlSigma-is-prop}
Given a proposition \(\ty{X}{\U}\) and \(\ty{Y}{X \to \V}\) where every \(\ty{Y(x)}{\V}\) is a proposition, then \(\ty{\Sigma Y}{\U \sqcup \V}\) is a proposition.
\end{lemma}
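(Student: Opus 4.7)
The plan is to show that any two elements of $\Sigma Y$ are equal, by decomposing the problem into one path between the first components (which we get from $X$ being a proposition) and one path between the second components (which we get from $Y(x)$ being a proposition for every $x$), before reassembling these into an identification of dependent pairs.

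First I would take arbitrary $(x_1, y_1), (x_2, y_2) : \Sigma Y$ by pattern matching. Since $X$ is a proposition by hypothesis, there is a canonical identification $p : x_1 = x_2$. I can then use $\mathsf{transport}(Y, p, -) : Y(x_1) \to Y(x_2)$ on $y_1$ to obtain an element $\mathsf{transport}(Y,p,y_1) : Y(x_2)$. Since $Y(x_2)$ is a proposition by hypothesis, we obtain an identification $q : \mathsf{transport}(Y,p,y_1) = y_2$.

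The remaining step is to combine $p$ and $q$ into a single identification $(x_1,y_1) = (x_2,y_2)$ in $\Sigma Y$. This is the standard characterisation of equality in a $\Sigma$-type: there is a function (typically named something like $\mathsf{to{\hy}\Sigma{\hy}=}$ in \textsc{TypeTopology}) that takes a dependent pair of paths $\sigmaty{p}{x_1 = x_2}{\mathsf{transport}(Y,p,y_1) = y_2}$ and produces an identification of the original pairs. Applying it to $(p,q)$ concludes the proof.

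The only mild subtlety is remembering that one cannot simply prove $y_1 = y_2$, since $y_1$ and $y_2$ inhabit different types $Y(x_1)$ and $Y(x_2)$; the transport along $p$ is essential, and this is where the hypothesis that $X$ is a proposition (and not just pointed) genuinely feeds into the second component's identification. No axioms such as function extensionality or propositional truncation are required; the proof is purely in bare MLTT with $J$.
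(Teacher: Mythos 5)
Your proposal is correct and follows essentially the same route as the paper's proof sketch: use that \(X\) is a proposition to identify the first components, transport the second component along that identification, use that the fibres of \(Y\) are propositions to identify the (transported) second components, and assemble via the standard characterisation of identifications in \(\Sigma\)-types. Your write-up is in fact slightly more careful than the paper's sketch, which glosses over the transport by informally declaring \(Y(a)\) and \(Y(b)\) to be "the same type".
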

\begin{proof}[Proof (Sketch).]
The idea is that, given \(\ty{(a,p)}{\Sigmatye{a}{X}{Y(a)}}\) and \(\ty{(b,q)}{\Sigmatye{b}{X}{Y(b)}}\), we want to show that conclude \((a,p) = (b,q)\) by first showing \(a = b\) and \(p = q\) --- but the first and last equations here do not type check, as the identity type requires both arguments to be of the same type.
As \(X\) is a proposition, however, we have a proof that \(a = b\), and therefore \(Y(a)\) and \(Y(b)\) \emph{are} indeed the same type (we will arbitrarily choose to call this type \(Y(a)\)).
The result then follows by the fact that \(Y(a)\) is a proposition and therefore \(p = q\).
\end{proof}

If a type \(P\) is a proposition, any proof \(\ty{p}{P}\) is \emph{unique}. 
For example, the proof that an integer is even (defined in \cref{sec:sigma}) is unique, because each of its cases is a proposition.
Hence, \(\N_e\) (which we also defined in \cref{sec:sigma}) is a \emph{subtype} of \(\Z\); i.e.\ it is a unique collection of elements of \(\N\). 
Related to this, we introduce the idea of \emph{embeddings}, which we recall are functions \(\ty{f}{X \to Y}\) such that each \(\ty{y}{Y}\) is reached by at most one \(\ty{x}{X}\).

\begin{definition}
\label{def:embedding}
\typetop{UF}{Embeddings}{is-embedding}
Given types \(X\) and \(Y\), a function \(\ty{f}{X \to Y}\) is an \emph{embedding} if each fiber of \(f\) is a proposition. We write this as follows:
\[ \mathsf{is{\hy}embedding}(f) := \pitye{y}{Y}{\mathsf{is{\hy}prop}\left(\sigmaty{x}{X}{f x = y}\right)} .\]
\end{definition}

\noindent
As an example, the function \(\ty{\mathsf{pr}_1}{\sigmatye{x}{X}{Y(x)} \to X}\) is an embedding if \(Y\) is a proposition; also, \(\ty{\mathsf{inl}}{X \to X + Y}\) and \(\ty{\mathsf{inr}}{Y \to X + Y}\) are embeddings.

The subtype that encapsulates all interpretations of truth values in a given universe is that universe's \emph{type of truth values}.

\begin{definition}
\typetop{UF}{SubtypeClassifier}{\urlOmega}
The \emph{type of truth values} for a universe \(\U\) is the type,
\[ \Omega_\U := \sigmaty{P}{\U}{\mathsf{is{\hy}prop} \ P} .\]
\end{definition}

\noindent
Note that we usually leave the particular universe implicit, and simply write \(\Omega\).
In this thesis, we use truth-valued functions often, without explicitly proving (via the above lemmas) that the values of the function are propositions.
In the formalisation, this requirement adds additional, but usually straightforward, complexity to the proofs.

\subsection{Propositional extensionality}

Logical equivalence between types is defined in the usual way.

\begin{definition}
\typetop{Notation}{General}{_\urlLeftrightarrow_}
Two types \(X\) and \(Y\) are \emph{logically equivalent} \(X \leftrightarrow Y\) if \(X \rightarrow Y\) and \(Y \rightarrow X\).
\end{definition}

\noindent
However, given our discussions in the previous subsection, we would like a distinction of this notion for propositions.

\begin{definition}
A proposition \(\ty{(P,p)}{\Omega_\U}\) implies a proposition \(\ty{(Q,q)}{\Omega_\V}\), written \((P,p) \Rightarrow (Q,q)\), if \(P \to Q\).
\end{definition}

\begin{definition}
\label{def:prop-equiv}
Two propositions \(\ty{P}{\Omega_\U}\) and \(\ty{Q}{\Omega_\V}\) are \emph{propositionally equivalent} \(P \Leftrightarrow Q\) if \(P \Rightarrow Q\) and \(Q \Rightarrow P\).
\end{definition}

\noindent
With this definition, which we renamed \emph{propositional equivalence}, we now correctly model logical equivalences for types that interpret propositions.

As with function extensionality, we can assume that this implies identification.

\begin{definition}
\typetop{UF}{Subsingletons}{propext}
We say that \emph{propositional extensionality} holds for a given universe \(\U\) when, given \(\ty{P,Q}{\Omega_\U}\), propositional equivalence \(P \Leftrightarrow Q\) implies equality \(P = Q\),
\[ \mathsf{propext}(\U) := \pity{P,Q}{\Omega_\U}{P \Leftrightarrow Q \to P = Q} \]
\end{definition}

\subsection{Type equivalences}
\label{sec:equiv}

Informally, and in general, two mathematical or computational structures are considered to be \emph{equivalent} if they can be interchanged with each other in any context.
Indeed, in (homotopy) type theory, two types are equivalent if their elements can be interchanged within a program without changing that program's behaviour.
An early attempt at formalising this idea type-theoretically results in \emph{quasi-inverses}.

\begin{definition}
\typetop{UF}{Equiv}{qinv}
Given types \(X\) and \(Y\), a function \(\ty{f}{X \to Y}\) is a \emph{quasi-inverse} if there is a function \(\ty{g}{Y \to X}\) such that either composition of \(f\) and \(g\) is the identity function:
\[\mathsf{is{\hy}qinv}(f) := \sigmaty{g}{Y \to X}{(f \circ g \sim \mathsf{id}_Y) \x (g \circ f \sim \mathsf{id}_X)}.\]
\end{definition}

\begin{definition}
Types \(X\) and \(Y\) are \emph{quasi-invertible} if there is a quasi-inverse from \(X\) to \(Y\):
\[ \sigmatye{f}{X \to Y}{\mathsf{is{\hy}qinv}(f)} .\]
\end{definition}

\noindent
A trivial example of a quasi-inverse is the identity function, meaning that every type is quasi-invertible to itself.

While quasi-inverses indeed align with the notion of equivalence in some fields of mathematics (such as the notion of an isomorphism in category theory), there is an outstanding issue for univalent type theory: it is not the case that \(\mathsf{is{\hy}qinv}(f)\) is a proposition for every \(f\), meaning that he proof that a function is a quasi-inverse may not be unique~\cite{HoTTBook}. 
However, the statement that a function is an equivalence should be a property rather than structure.

Univalent mathematics has fixed this: there is a more general notion of equivalence that \emph{does} yield propositions.

\begin{definition}
\typetop{UF}{Equiv}{is-equiv}
Given types \(X\) and \(Y\), a function \(\ty{f}{X \to Y}\) is an \emph{equivalence} if there are functions \(\ty{g,h}{Y \to X}\) such that \(f \circ g\) and \(h \circ f\) are the identity function:
\[\mathsf{is{\hy}equiv}(f) := \sigmatye{g}{Y \to X}{(f \circ g \sim \mathsf{id}_Y) \x \sigmatye{h}{Y \to X}(h \circ f \sim \mathsf{id}_X)}.\]
\end{definition}

\begin{definition}
\label{def:simeq}
\typetop{UF}{Equiv}{_\urlsimeq_}
Types \(X\) and \(Y\) are \emph{equivalent} \(\ty{X \simeq Y}{\U \sqcup \V}\) if,
\[ X \simeq Y := \sigmatye{f}{X \to Y}{\mathsf{is{\hy}equiv}(f)} .\]
\end{definition}

The fact that \(\mathsf{is{\hy}qinv}(f)\) is \emph{not} in general a proposition, while \(\mathsf{is{\hy}equiv}(f)\) \emph{is}, may at first be unintuitive. As an example of the former see Theorem 4.1.3 of \cite{HoTTBook}, which takes \(X := \Sigmatye{A}{\U}{\| A \simeq \2 \|}\)\footnote{The propositional truncation map \(\|-\|\) is explained in \cref{sec:prop-trunc}.} and proves that \(\ty{\mathsf{id}_X}{X \to X}\) is a quasi-inverse in two unequal ways. The (rather involved) proof of the latter follows from function extensionality --- it appears as Theorem 4.3.2 of the same reference.

Of course, every quasi-inverse is trivially an equivalence.
We also note here that any type that is equivalent to a proposition is a proposition.

\begin{lemma} \label{lem:equiv-prop}
\typetop{UF}{Equiv}{equiv-to-prop}
Given \(\ty{X,Y}{\U}\) such that \(X \simeq Y\), if \(X\) is a proposition then so is \(Y\).
\end{lemma}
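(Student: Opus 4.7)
The plan is to unpack the definition of equivalence to extract a section of $f$ and then transport the uniqueness of elements of $X$ across it. Concretely, from $\ty{e}{X \simeq Y}$ I would take $\ty{f}{X \to Y}$ together with the witness $\mathsf{is{\hy}equiv}(f)$, from which I extract some $\ty{g}{Y \to X}$ and a pointwise equality $\ty{\varepsilon}{f \circ g \sim \mathsf{id}_Y}$. Note that only the section half of the equivalence data is needed; the retraction $h$ plays no role here.

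To show $\isprop{Y}$, I take arbitrary $\ty{y_1, y_2}{Y}$ and must produce $y_1 = y_2$. The key step is the chain
\[ y_1 \;=\; f(g(y_1)) \;=\; f(g(y_2)) \;=\; y_2, \]
where the outer equalities are given by $\mathsf{sym}(\varepsilon(y_1))$ and $\varepsilon(y_2)$, and the middle equality is $\mathsf{ap}(f, p)$ for some proof $\ty{p}{g(y_1) = g(y_2)}$. Such a $p$ is furnished directly by the hypothesis that $X$ is a proposition, instantiated at $g(y_1)$ and $g(y_2)$. Stringing these together with transitivity gives the required element of $y_1 = y_2$.

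There is no real obstacle here: the proof is essentially a one-line application of the earlier identity-rule lemmas ($\mathsf{sym}$, $\mathsf{trans}$, $\mathsf{ap}$) once the section $g$ and the homotopy $\varepsilon$ have been projected out of the equivalence data. If one preferred a more conceptual framing, one could first establish the auxiliary fact that $Y$ is a retract of $X$ via $(g, f, \varepsilon)$ and then observe that retracts of propositions are propositions by exactly the argument above; but this adds no real content and the direct unfolded proof is shorter.
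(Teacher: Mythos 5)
Your proof is correct and is essentially identical to the one in the paper: both extract the section $g$ with $f \circ g \sim \mathsf{id}_Y$ from the equivalence, obtain $g(y_1) = g(y_2)$ from $\isprop{X}$, apply $\mathsf{ap}(f,-)$, and close the chain $y_1 = f(g(y_1)) = f(g(y_2)) = y_2$ with the homotopy. Nothing to add.
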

\begin{proof}
Recall from \cref{def:proposition} that \(X\) being a proposition means that given any \(\ty{x_1,x_2}{X}\) we have \(x_1 = x_2\); we wish to show the same for the equivalent type \(Y\). Given \(\ty{y_1,y_2}{Y}\), we use the equivalence \(\ty{f}{X \to Y}\) and its left-inverse \(\ty{g}{Y \to X}\). We therefore have elements \(\ty{g(y_1),g(y_2)}{X}\) which satisfy \(g(y_1) = g(y_2)\), and therefore (by the \(\mathsf{ap}\) function defined at the end of \cref{sec:id}) also \(f(g(y_1)) = f(g(y_2))\). Using the fact that \(f \circ g \sim id_Y\), we can conclude \(y_1 = f(g(y_1)) = f(g(y_2)) = y_2\).
\end{proof}

When working informally, mathematicians will often conflate equivalence and equality; for example, we may say ``the reals are the unique complete ordered field" and forget the qualifier ``up to isomorphism". However, in MLTT, these concepts are distinct.
It is obvious that equality implies equivalence:

\begin{corollary}
\label{cor:eq-refl}
\typetop{UF}{Equiv}{\urlsimeq-refl}
Every type is equivalent to itself.
\end{corollary}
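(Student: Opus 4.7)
The plan is to exhibit an explicit equivalence $X \to X$ for an arbitrary type $X$, and the obvious candidate is the identity function $\mathsf{id}_X$. Unpacking \cref{def:simeq,def:equivalence}, I need to produce the tuple
\[ \bigl(\mathsf{id}_X,\, g,\, \alpha,\, h,\, \beta\bigr) \]
where $g, h : X \to X$ together with witnesses $\alpha : \mathsf{id}_X \circ g \sim \mathsf{id}_X$ and $\beta : h \circ \mathsf{id}_X \sim \mathsf{id}_X$.

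The natural choice is to take both $g$ and $h$ to be $\mathsf{id}_X$ as well. Then the required pointwise equalities reduce (definitionally, since $\mathsf{id}_X \circ \mathsf{id}_X$ computes to $\mathsf{id}_X$) to the statement $\mathsf{id}_X \sim \mathsf{id}_X$, which is exactly an instance of \cref{lem:pe-refl}. So for both $\alpha$ and $\beta$ I take $\mathsf{refl{\hy}}{\sim}$ applied at $\mathsf{id}_X$, obtaining the required terms without further work.

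Putting this together, I would define
\[ {\simeq}\mathsf{{\hy}refl}(X) := \bigl(\mathsf{id}_X,\, \mathsf{id}_X,\, \simrefl(\mathsf{id}_X),\, \mathsf{id}_X,\, \simrefl(\mathsf{id}_X)\bigr) : X \simeq X. \]
There is essentially no obstacle here: the only subtlety worth flagging is that one must use the richer definition of equivalence (two separate quasi-inverse witnesses) rather than the quasi-inverse definition, but since $\mathsf{id}_X$ serves as its own left and right inverse simultaneously, both halves are discharged by the same reflexivity term. The result could alternatively be derived as a special case of the forthcoming $\mathsf{idtoeq}$ map applied to $\mathsf{refl}(X)$, but the direct construction above is the most transparent.
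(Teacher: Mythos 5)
Your construction is correct and is exactly the paper's argument: the paper's proof sketch likewise takes the identity function as the equivalence, with itself serving as both inverse witnesses and the homotopies given by reflexivity of pointwise equality. The only cosmetic issue is the reference to a definition of equivalence that the paper does not label; the substance matches \cref{def:simeq} and the surrounding definition of \(\mathsf{is{\hy}equiv}\).
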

\begin{proof}[Proof (Sketch).]
By using the identity function, which is trivially an equivalence. This proof yields the function \(\ty{\simeqrefl}{\pity{X}{\U}{X \simeq X}}\) in \textsc{TypeTopology}.
\end{proof}

\begin{corollary}
\label{cor:id-to-eq}
Given two types \(X\) and \(Y\) such that \(X = Y\), then \(X \simeq Y\).
\end{corollary}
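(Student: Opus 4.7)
The plan is to apply the identity-type induction principle $J$ from Section 2.2.6, which reduces any proof about an arbitrary identification $\ty{p}{X = Y}$ to the single case in which $Y$ coincides definitionally with $X$ and $p$ is $\mathsf{refl}(X)$. Setting the motive to the type family $\lambda X \ Y \ p. \ X \simeq Y$, it suffices to construct, for every $\ty{X}{\U}$, an inhabitant of $X \simeq X$.

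That base case is precisely the content of Corollary \ref{cor:eq-refl}, which provides $\ty{\simeqrefl(X)}{X \simeq X}$ via the identity function (itself trivially an equivalence). Invoking $J$ with this base case yields the required term
\[ \mathsf{id{\hy}to{\hy}eq} : \Pity{X, Y}{\U}{(X = Y) \to (X \simeq Y)}. \]
Equivalently, and this is how one would most naturally write the proof in \textsc{Agda} by pattern matching on $p$, one may observe that $\ty{p}{X = Y}$ permits transporting $\simeqrefl(X)$ along the family $\lambda Z. \ X \simeq Z$, directly producing an inhabitant of $X \simeq Y$.

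There is no substantive obstacle here: the proof is immediate once $\simeqrefl$ has been established, since $J$ (or, equivalently, $\mathsf{transport}$) does all the work. The interesting content of this corollary is instead conceptual — it sets up the statement of the univalence axiom, which will later assert that the function $\mathsf{id{\hy}to{\hy}eq}$ just constructed is itself an equivalence, thereby promoting the trivial implication ``equality implies equivalence'' to a full correspondence between the two notions.
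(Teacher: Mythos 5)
Your proof is correct and matches the paper's argument: the paper defines \(\mathsf{id{\hy}to{\hy}equiv}(X,Y,e) := \mathsf{transport}(\lambda Z.\,X \simeq Z,\,e,\,\simeqrefl(X))\), which is exactly the transport formulation you give as the second phrasing of your argument. The initial appeal to \(J\) is an equivalent packaging of the same idea, so there is no substantive difference.
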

\begin{proof}
We utilise the given equality \(\ty{e}{X = Y}\) to build the function \\ \(\ty{\mathsf{transport} (\lambda(\ty{Z}{\V}).X \simeq Z,e)}{X \simeq X \to X \simeq Y}\).
Thus, by applying \cref{cor:eq-refl}, the result follows,
\begin{alignat*}{3}
\mathsf{id{\hy}to{\hy}equiv} &: \Pity{X,Y}{\U}{(X = Y) \to (X \simeq Y)} ,\\
\mathsf{id{\hy}to{\hy}equiv} & (X,Y,e) := \mathsf{transport} (\lambda(\ty{Z}{\V}).X \simeq Z,e, \simeqrefl(X)).
\end{alignat*}
\end{proof}

\noindent
But it is not the case that equivalence implies equality --- this is another independent proposition in our type theory.

\subsection{Univalence}

A foundational aim of univalent mathematics is to bring equivalence and equality into alignment. The axiom that allows us to prove that equivalence implies equality is called \emph{univalence}. Univalence is due to Vladimir Voevodsky, who desired a foundation of mathematics in which all objects ``are invariant under equivalence of structures"~\cite{Univalence}.

\begin{definition} \label{def:univalent}
\typetop{UF}{Univalence}{is-univalent}
We say that a given universe \(\U\) is \emph{univalent} when, given \(\ty{X,Y}{\U}\), the function \(\ty{\mathsf{id{\hy}to{\hy}equiv}(X,Y)}{X = Y \to X \simeq Y}\) (defined in \cref{cor:id-to-eq}) is an equivalence,
\[ \ty{\mathsf{is{\hy}univalent}(\U)}{\pitye{X,Y}{\U}{\mathsf{is{\hy}equiv}(\mathsf{id{\hy}to{\hy}equiv}(X,Y))}} .\]
\end{definition}

\begin{corollary}
\label{cor:univalence}
\typetop{UF}{Univalence}{eqtoid}
If\ \(\U\) is univalent, then all types \(\ty{X,Y}{\U}\) that are equivalent \(X \simeq Y\) are equal \(X = Y\).
\end{corollary}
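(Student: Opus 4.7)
The plan is to invert the function $\mathsf{id{\hy}to{\hy}equiv}(X,Y)$ constructed in \cref{cor:id-to-eq} by directly unpacking the univalence hypothesis. By \cref{def:univalent}, the assumption that $\U$ is univalent states precisely that $\mathsf{id{\hy}to{\hy}equiv}(X,Y)$ is an equivalence, i.e.\ that $\mathsf{is{\hy}equiv}(\mathsf{id{\hy}to{\hy}equiv}(X,Y))$ holds. Recalling the definition of $\mathsf{is{\hy}equiv}$ from \cref{sec:equiv}, this supplies in particular a function $\ty{g}{(X \simeq Y) \to (X = Y)}$ together with witnesses that $g$ is a two-sided inverse of $\mathsf{id{\hy}to{\hy}equiv}(X,Y)$ up to pointwise equality.

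Given an arbitrary equivalence $\ty{e}{X \simeq Y}$, I then define
\[ \mathsf{eqtoid}(X,Y,e) := g(e), \]
which by construction has type $X = Y$, yielding the desired identification. Intuitively: univalence asserts that the canonical map from identifications to equivalences is itself an equivalence, so in particular has a section, and $\mathsf{eqtoid}$ is exactly that section.

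I do not foresee any genuine obstacle here: the whole substance of the result has been front-loaded into the statement of \cref{def:univalent}, and the corollary is just the projection that extracts the section of $\mathsf{id{\hy}to{\hy}equiv}(X,Y)$ from its equivalence data. A small bookkeeping remark is that $\mathsf{is{\hy}equiv}$ provides both a left inverse and a right inverse (distinguished in order to make $\mathsf{is{\hy}equiv}$ a proposition, as discussed after \cref{def:simeq}); either of the two functions $(X \simeq Y) \to (X = Y)$ supplied by the equivalence witness can be used to define $\mathsf{eqtoid}$, and the choice is immaterial for the statement at hand.
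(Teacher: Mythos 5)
Your proposal is correct and matches the paper's own proof, which likewise just observes that univalence makes \(\mathsf{id{\hy}to{\hy}equiv}(X,Y)\) an equivalence and extracts from the equivalence data a function \(\mathsf{equiv{\hy}to{\hy}id} : X \simeq Y \to X = Y\). Your additional remark about the left/right inverse being interchangeable is a harmless elaboration of the same argument.
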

\begin{proof}[Proof (Sketch).] \axioms{u}
Because, by univalence, the function \(\ty{\mathsf{id{\hy}to{\hy}equiv}(X,Y)}{X = Y \to X \simeq Y}\) is an equivalence, there is also a function \(\ty{\mathsf{equiv{\hy}to{\hy}id}}{X \simeq Y \to X = Y}\).
\end{proof}

\noindent
We use equivalences throughout this thesis, but only invoke univalence itself once (in \cref{thm:io-sip}).
We do, however, use both function extensionality and propositional extensionality, as previously discussed.

We now state function extensionality in its non-naive form, which reflects our earlier comments on equivalence in \cref{sec:equiv}.

\begin{definition}
\label{def:funext}
\typetop{UF}{FunExt}{funext}
We say that \emph{function extensionality} holds for given universes \(\U,\V\) when, given \(\ty{X}{\U}\), \(\ty{Y}{X \to \V}\) and functions \(\ty{f,g}{\pity{x}{X}{Y(x)}}\), the function \(\ty{\mathsf{happly}(f,g)}{f = g \to f \sim g}\) is an equivalence,
\[ \mathsf{funext}(\U,\V) := \pitye{X}{\U} \pitye{Y}{X \to \V}\pity{f,g}{\pity{x}{X}{Y(x)}}{\mathsf{is{\hy}equiv}(\mathsf{happly}(f,g))} .\]
\end{definition}

\begin{corollary}
If function extensionality holds for \(\U,\V\) then naive function extensionality holds for \(\U,\V\).
\end{corollary}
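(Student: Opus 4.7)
The plan is to unfold the two definitions and observe that the conclusion follows almost immediately by extracting the (right-)inverse promised by the equivalence. Concretely, assume $\mathsf{funext}(\U,\V)$ and fix $X:\U^\cdot$, $Y: X \to \V^\cdot$, and $f,g:\Pi_{x:X} Y(x)$; our goal is to construct a function $(f \sim g) \to (f = g)$.

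By the assumption $\mathsf{funext}(\U,\V)$ applied to these data, $\mathsf{happly}(f,g) : (f = g) \to (f \sim g)$ is an equivalence. Recall from Definition~\ref{def:simeq} (and the preceding definition of $\mathsf{is{\hy}equiv}$) that this unfolds to a $\Sigma$-type whose first component is a right-inverse; that is, there exists some $h : (f \sim g) \to (f = g)$ together with a pointwise equality $\mathsf{happly}(f,g) \circ h \sim \mathrm{id}_{f \sim g}$. I would take the first projection of the $\Sigma$-type witnessing $\mathsf{is{\hy}equiv}(\mathsf{happly}(f,g))$ to obtain $h$, and return $h$ as the required witness of naive function extensionality for this choice of $X,Y,f,g$. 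Abstracting over $X, Y, f, g$ yields an element of $\mathsf{naive{\hy}funext}(\U,\V)$.

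There is no real obstacle here: the claim is genuinely a weakening, since $\mathsf{funext}$ asserts that $\mathsf{happly}$ is an equivalence (hence in particular has a section/retraction), whereas $\mathsf{naive{\hy}funext}$ only asserts that the reverse map exists. The one point to be careful about is keeping the two components of $\mathsf{is{\hy}equiv}$ straight: we want the \emph{section} $h : (f \sim g) \to (f = g)$ rather than the retraction (which in $\mathsf{TypeTopology}$'s formulation is a separate $\Sigma$-component), but either suffices for the bare existence of a map $(f \sim g) \to (f = g)$. Thus a one-line term, essentially the first projection of the second projection of the equivalence structure applied at $f$ and $g$, completes the proof.
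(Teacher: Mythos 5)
Your proof is correct and is essentially the paper's own argument: both simply extract from $\mathsf{is{\hy}equiv}(\mathsf{happly}(f,g))$ an inverse map $(f \sim g) \to (f = g)$ and note that abstracting over $X$, $Y$, $f$, $g$ gives exactly the type of naive function extensionality. Your additional remark that either the section or the retraction component of the equivalence data would do is accurate and just fills in the detail the paper leaves implicit in its proof sketch.
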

\begin{proof}[Proof (Sketch).] \axioms{f}
Because, by function extensionality, the function \(\ty{\mathsf{happly}(f,g)}{f = g \to f \sim g}\) is an equivalence, there is also a function \(\ty{\mathsf{naive{\hy}fe}{f \sim g \to f = g}}\) which has the same type as naive function extensionality (\cref{def:naive-fe}).
\end{proof}

Univalence truly captures that equivalence and identity are equivalent: indeed, it generalises the two more specific extensionality principles.

\begin{corollary}
If \(\U\) and \(\V\) are univalent, function extensionality holds for \(\U,\V\).
\end{corollary}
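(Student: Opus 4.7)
The plan is to invoke the standard theorem, due to Voevodsky, that univalence implies function extensionality. Since both \(\U\) and \(\V\) are assumed univalent, the argument can be carried out in two stages, and I would set it up accordingly.

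First, I would establish \emph{weak function extensionality} at the relevant universes: for any type family \(\ty{P}{X \to \V}\) with each \(P(x)\) contractible, the dependent product \(\prod_{x : X} P(x)\) is itself contractible. The idea is that every contractible type is equivalent to \(\1\), and univalence applied in \(\V\) upgrades each pointwise equivalence \(P(x) \simeq \1\) to a pointwise identification \(P(x) = \1\). From each such identification one argues via the total space \(\Sigma_{x : X}\, P(x)\), whose first projection becomes an equivalence onto \(X\) once its fibres have been identified with \(\1\); feeding this through the standard characterisation of sections, one concludes that \(\prod_{x : X} P(x)\) is equivalent to \(\1\), hence contractible.

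Second, I would derive the target statement from weak function extensionality. Fix \(\ty{f,g}{\prod_{x : X} Y(x)}\); we must show \(\mathsf{happly}(f,g)\) is an equivalence. Applying the distributivity equivalence \(\bigl(\Sigma_{g}\; f \sim g\bigr) \simeq \prod_{x : X}\bigl(\Sigma_{y : Y(x)}\; f(x) = y\bigr)\), the right-hand side is recognised as a product of contractible singleton types, and is therefore contractible by weak function extensionality. On the other hand, \(\Sigma_{g}\,(f = g)\) is contractible as it is itself a singleton based at \(f\). The map between these two contractible types induced by \(\mathsf{happly}\) is then automatically an equivalence, and a fibrewise analysis at \(g\) yields that \(\mathsf{happly}(f,g)\) is itself an equivalence.

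The main obstacle will be the first stage. Promoting the pointwise equivalences \(P(x) \simeq \1\) to contractibility of the global product \(\prod_{x : X} P(x)\) is precisely the kind of ``pointwise to global'' move that function extensionality itself embodies, so some care is needed to verify that univalence genuinely suffices and that the argument is not circular. The key insight is that we do not attempt to identify \(P\) with the constant family \(\lambda x.\1\) as functions \(X \to \V\) --- that step would itself require function extensionality --- but instead use each individual identification \(P(x) = \1\) to control the total space pointwise and recover the product's contractibility from there. Once weak function extensionality is in hand, the second stage is an essentially formal manipulation with equivalences and singletons.
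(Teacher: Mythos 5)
Your sketch is the standard Voevodsky argument --- univalence gives weak function extensionality (products of contractible families are contractible), which then gives full function extensionality via the singleton/retract manipulation with \(\mathsf{happly}\) --- and this is precisely the proof the paper declines to reproduce and instead cites, so you are taking the same route. The only loose point is in your first stage: univalence enters not through the pointwise identifications \(P(x) = \1\) themselves but through the lemma that post-composition with an equivalence (here the first projection of the total space \(\Sigma_{x:X}P(x)\)) is an equivalence of function types, after which \(\prod_{x:X}P(x)\) is exhibited as a retract of the contractible fibre over the identity map; this is the ``standard characterisation of sections'' you allude to, so the argument goes through as you intend.
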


\noindent
Voevodsky's original proof of this corollary is too detailed to reproduce here: we invite the interested reader to see~\cite{UnivalenceFunExt}.

\begin{corollary}
If \(\U\) is univalent, propositional extensionality holds for \(\U\).
\end{corollary}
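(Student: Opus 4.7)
The plan is to reduce propositional extensionality to univalence by showing that, for propositions, logical equivalence coincides with type equivalence. Given $(P,p), (Q,q) : \Omega_\U$ together with a propositional equivalence $P \Leftrightarrow Q$, unfolding \cref{def:prop-equiv} yields functions $\ty{f}{P \to Q}$ and $\ty{g}{Q \to P}$. The first step is to promote this pair to a type equivalence $P \simeq Q$. Because $P$ is a proposition, any two of its elements are identified by $p$, so $g \circ f \sim \mathsf{id}_P$; symmetrically, $q$ gives $f \circ g \sim \mathsf{id}_Q$. Hence $f$ is a quasi-inverse, and so in particular an equivalence (as noted after \cref{def:simeq}).

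Next, I would invoke univalence via \cref{cor:univalence} on this equivalence $P \simeq Q$ to obtain an equality $\ty{e}{P = Q}$ in the universe $\U$. To upgrade this to an identification of the pairs in $\Omega_\U = \sigmaty{P}{\U}{\mathsf{is{\hy}prop} \ P}$, I would use the standard characterisation of identity in $\Sigma$-types: the pair $(P,p)$ is equal to $(Q,q)$ exactly when there is such an $e$ together with an equality $\mathsf{transport}(\mathsf{is{\hy}prop}, e, p) = q$ in the type $\mathsf{is{\hy}prop}(Q)$.

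The main obstacle will be this last equality, which requires showing that $\mathsf{is{\hy}prop}(Q)$ is itself a proposition, so that any two proofs of it (including the transported $p$ and the given $q$) are automatically identified. Unfolding the definition and applying \cref{lem:pi-prop} twice reduces this to the claim that identity types $x = y$ are propositions for all $\ty{x,y}{Q}$ --- that is, that every proposition is a set in the sense of univalent foundations. This is Hedberg-style reasoning and requires function extensionality, which, as noted in the preceding corollary of the excerpt, follows from univalence on the ambient universes. With this in hand, the promotion from $e : P = Q$ to $(P,p) = (Q,q)$ is immediate and the proof is complete.
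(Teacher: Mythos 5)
Your proposal is correct, and its core is the same as the paper's (very brief) sketch: a propositional equivalence between propositions upgrades to a type equivalence because the back-and-forth composites are pointwise-identical by propositionality, and then \cref{cor:univalence} turns the equivalence into an equality \(P = Q\). The paper stops there, because \(\mathsf{propext}\) (as formalised in \textsc{TypeTopology}) only asks for an identification of the \emph{underlying types} in \(\U\), not of the pairs in \(\Omega_\U\). Your additional final step --- characterising identity in the \(\Sigma\)-type \(\Omega_\U\) and discharging the transported witness by showing that being a proposition is itself a proposition --- is therefore not needed for the statement as the paper intends it, but it is a legitimate strengthening and your reasoning for it is sound: \cref{lem:pi-prop} reduces it to the fact that propositions are sets. (One small refinement: the props-are-sets argument itself is pure MLTT and needs no axioms; function extensionality enters only through the \(\Pi\)-type closure in \cref{lem:pi-prop}, and, as you note, it is available here since it follows from univalence.)
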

\begin{proof}[Proof (Sketch).] \axioms{u}
If two propositions \(\ty{P,Q}{\Omega_\U}\) are propositionally equivalent \(P \Leftrightarrow Q\), then they are immediately equivalent \(P \simeq Q\). Thus, by \cref{cor:univalence} of univalence, they are equal \(P = Q\).
\end{proof}

\subsection{Propositional truncation}
\label{sec:prop-trunc}

Under the propositions as (some) types interpretation, we have shown the relationship between proposition types and mathematical propositions.
However, with respect to the logical interpretation of the type theory, there are two outstanding problems:
\(\Sigma\)-types represent constructive existence and \(+\)-types tell us exactly which of our premises holds. 
Both of these type families do not in general interpret propositions (see \cref{lem:sigma-prop,lem:plus-prop}) and, moreover, their interpretations as statements do not match the `traditional' logical perspective of existence and disjunction.

This is resolved in univalent mathematics by a further axiom called \emph{propositional truncation}, which effectively forces a structure to become a property. In \textsc{TypeTopology}, propositional truncations are axiomatised in the following way.

\begin{definition}
\label{def:prop-trunc}
\typetop{UF}{PropTrunc}{propositional-truncations-exist}
We say that a given universe \(\U\) has \emph{propositional truncations} when there is a type truncation function \(\ty{\| - \|}{\U \to \Omega_\U}\) and an element truncation function \(\ty{| - |}{X \to \| X \|}\), such that for every function \(\ty{f_X}{X \to P}\), where \(P\) is a proposition, there exists the truncated function \(\ty{f_{\| X \|}}{\| X \| \to P}\).
\end{definition}

Note that although the behaviour of the truncated function \(f_{\| X \|}\) is not specified in the above axiomatisation, it indeed behaves correctly with respect to \(f_X\) and \(| - |\) due to the fact that \(P\) is a proposition. By this, we mean that the commutative diagram for truncations shown in \cref{fig:truncate-commute} commutes.

\begin{figure}[ht]
\centering
\[\begin{tikzcd}
	X &&& {\| X \|} \\
	\\
	&&& P
	\arrow["{|-|}", from=1-1, to=1-4]
	\arrow["{f_{\| X \|}}", from=1-4, to=3-4]
	\arrow["{f_X}"', from=1-1, to=3-4]
\end{tikzcd}\]
\caption{Commutative diagram illustrating \cref{lem:truncate-commute}.}
\label{fig:truncate-commute}
\end{figure}

\begin{lemma}
\label{lem:truncate-commute}
Given any type \(\ty{X}{\U}\) in a universe which has propositional truncations and a function \(\ty{f_X}{X \to P}\), where \(P\) is a proposition, the truncated function \(\ty{f_{\| X \|}}{\| X \| \to P}\) is such that for all \(\ty{x}{X}\) we have \(f_{X}(x) = f_{\| X \|}(| x |)\).
\end{lemma}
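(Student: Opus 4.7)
The plan is extremely short, because the heavy lifting is done by the fact that $P$ is a proposition. First I would fix an arbitrary $x : X$ and observe that both $f_X(x)$ and $f_{\| X \|}(|x|)$ are elements of $P$. Since $P$ is a proposition, \cref{def:proposition} immediately gives the required identification $f_X(x) = f_{\| X \|}(|x|)$, with the proof term being obtained by applying the propositionality witness $p : \Pity{a,b}{P}{a = b}$ to the pair $(f_X(x), f_{\| X \|}(|x|))$.

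In other words, the lemma is not really about the specific behaviour of the truncation operation at all: it is a completely generic fact that \emph{any} two functions with codomain a proposition are pointwise equal. The axiomatisation in \cref{def:prop-trunc} only needs to produce \emph{some} function $f_{\| X \|} : \| X \| \to P$, not one satisfying a definitional equation; the commutativity of \cref{fig:truncate-commute} is then automatic precisely because the codomain has no room to record the choice of lift.

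I do not anticipate an obstacle, but there is one subtle point worth flagging: one might worry that we need the commutativity as \emph{data} in order to use the truncation induction principle effectively later on. In the univalent setting this is not a problem, because by \cref{lem:pi-prop} and the fact that identity types of a proposition are themselves propositions (via \cref{def:proposition} applied twice), the pointwise-equality statement $\pity{x}{X}{f_X(x) = f_{\| X \|}(|x|)}$ is itself a proposition, so the witness is unique up to identification and causes no coherence issues downstream.
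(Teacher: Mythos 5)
Your proof is correct and is essentially identical to the paper's: both arguments observe that \(f_X(x)\) and \(f_{\| X \|}(| x |)\) inhabit the proposition \(P\) and conclude their equality directly from \cref{def:proposition}. The additional remarks about uniqueness of the commutativity witness are fine but not needed for the lemma itself.
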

\begin{proof} \axioms{t}
The proof is immediate because \(f_X(x)\) and \(f_{\| X \|}(| x |)\) are both of type \(P\), which is a proposition. Recall from \cref{def:proposition} that meaning that all elements of a proposition are equal.
\end{proof}

We can truncate \(\Sigma\)-types to give types for interpreting traditional existence, which do not carry a constructive witness of the statement.

\begin{definition}
\typetop{UF}{PropTrunc}{PropositionalTruncation.\urlExists}
Given a type \(\ty{X}{\U}\) and a type family \(\ty{Y}{X \to \V}\) we define the \emph{traditional existence type}: 
\[ \existstye{x}{X}{Y(x)} := \| \sigmatye{x}{X}{Y(x)}\| .\]
\end{definition}

\noindent
Furthermore, we \(+\)-types can be truncated to give traditional disjunction types, which do not label which said of their disjunction holds. 

We do not use traditional disjunciton types in this thesis, though, as previously mentioned, we do use propositional truncation; in particular for traditional existence types.

\subsection{Homotopy sets and beyond}

A type \(\ty{X}{\U}\) yields identities \(\ty{x =_X y}{\U}\).
A recognition of core importance to homotopy type theory is that \(x =_X y\) \emph{itself} yields identities \(\ty{p =_{x =_X y} q}{\U}\). This means, of course, that that type also yields identities \(\ty{\alpha \ _{p =_{x =_X y} q} \ \beta}{\U}\) --- and so on ad infinitum. 
This recognition illuminates the concept of \emph{homotopy-levels} (or h-levels), which assigns to some types a number that refers to how complex their yielded identity types are.

A type with \emph{exactly} one element (a `\emph{contractible} type') has h-level \(0\), a type with \emph{at most} one element (i.e.\ a proposition) has h-level \(1\), a type whose identities have at most one element has h-level \(2\), a type whose identities' identities have at most one element has h-level \(3\), and --- again --- so on and so forth.
This can be formally expressed by defining the following recursive function --- which first requires us to formally define the notion of a contractible type.

\begin{definition}
\typetop{UF}{Subsingletons}{is-contr}
A type \(\ty{X}{\U}\) is \emph{contractible} if there is some \(\ty{c}{X}\) such that for all \(\ty{x}{X}\) we have \(c = x\):
\vspace{-0.5em}
\begin{alignat*}{3}
\mathsf{is{\hy}contractible} &\ X && := \Sigmatye{c}{X}\Pitye{x}{X} \ c = x.
\end{alignat*}
\end{definition}

\begin{definition}
\typetop{MGS}{hlevels}{_is-of-hlevel_}
A type \(\ty{X}{\U}\) has \emph{h-level} \(\ty{n}{\N}\) if \(\mathsf{has{\hy}h{\hy}level}(X,n)\), defined below, holds.
\vspace{-0.5em}
\begin{alignat*}{3}
\mathsf{has{\hy}h{\hy}level} & (X,0) && := \mathsf{is{\hy}contractible} \ X, \\
\mathsf{has{\hy}h{\hy}level} & (X,n+1) && := \Pitye{x,y}{X} \ \mathsf{has{\hy}h{\hy}level}(x =_X y,n).
\end{alignat*}
\end{definition}

Types with h-level \(2\) are also called \emph{sets} --- they represent types that interpret classical sets, where elements can only be equal in one way.

\begin{definition} \label{def:hset}
\typetop{UF}{Sets}{is-set}
A type \(\ty{X}{\U}\) is a \emph{set} if, for any elements \(\ty{x,y}{X}\), we have \(\isprop (x = y)\).
\end{definition}

\noindent
From there, types with h-level \(n+2\) are \(n\)-groupoids, adopting terminology from homotopy theory~\cite{HoTTBook}.

In this thesis, we often utilise the fact that certain types are propositions or sets, but do not utilise higher types in this way.

\section{Fundamental concepts for this thesis}

To close this chapter, we introduce for reference a number of additional concepts that are used throughout the thesis.

\subsection{Decidability and discreteness}

We introduce three notions of decidability: decidability of \emph{types}, decidability of \emph{type-valued functions} and decidable \emph{equality}.

\begin{definition}
\typetop{MLTT}{Negation}{is-decidable}
A type is \emph{decidable} if we can show it is either pointed or its negation is pointed:
\[\decidable{X} := X + \neg X .\]
\end{definition}

\begin{definition}
\typetop{NotionsOfDecidability}{Complemented}{is-complemented}
Given a type \(\ty{X}{\U}\) and type-valued function \(\ty{B}{X \to \V}\) the \(\Pi\)-type \(\ty{\prod B}{\U \sqcup \V}\) is \emph{decidable} (or \emph{complemented}) if \(\ty{B(a)}{\U}\) is decidable for every \(\ty{a}{A}\), i.e.\ if we have, \[\mathsf{complemented} \ B := \pity{a}{A}{\decidable{B(a)}}.\]
\end{definition}

We cannot in general decide whether two elements of a type are equal; types for which we can are called \emph{discrete}. 

\begin{definition}
\typetop{UF}{DiscreteAndSeparated}{is-discrete}
A type \(X\) is \emph{discrete} if it has decidabile equality:
\[ \mathsf{discrete}(X) :=  \pity{x,y}{X}{\mathsf{decidable} \ (x = y)} .\]
\end{definition}

\noindent
The empty type, singleton type, natural numbers and integers are all discrete. Furthermore, given two discrete types \(X\) and \(Y\), both \(X + Y\) and \(X \times Y\) are discrete.

\subsection{Finite types}
\label{sec:finite-linear-ordered}

There are various notions of finiteness in \textsc{TypeTopology}, and more broadly in constructive type theory.
In this thesis, we work with only those types that are \emph{finite linearly ordered}.

\mbox{}
\begin{definition}
\label{def:fin}
\typetop{Fin}{Type}{Fin}
The type family \(\ty{\F}{\N \to \U}\) is defined by induction:
\begin{alignat*}{2}
\F &: \N \to \U, \\
\F &(0) &&:= \0, \\
\F &(n + 1) &&:= \plus{\F(n)}{\1}.
\end{alignat*}
\end{definition}

\begin{definition}
\typetop{Fin}{Bishop}{finite-linear-order}
A type \(F\) is \emph{finite linearly ordered} if there is some \(\ty{n}{\N}\) such that \(F \simeq \F(n)\).
\end{definition}

Note that this is structure and not property --- it defines the collection of finite linear orders on \(F\).

\begin{definition}
\typetop{Fin}{Bishop}{finiteness.is-finite}
A type \(F\) is \emph{finite} if there is some \(\ty{n}{\N}\) such that \(\| F \simeq \F(n) \|\).
\end{definition}

\noindent
This is now a property. Every finite linearly ordered type is trivially finite, but univalence implies that not every finite type is finite linearly ordered --- hence, it is not provable to provide an order for a general finite type~\cite{FinKuratowski}.

However, in this thesis we only utilise finite linearly ordered types for a notion of finite, and so we abuse terminology and call these types `finite' throughout.

We recall two easy lemmas concerning these finite types:

\begin{lemma}
\thesislit{2}{Finite}{finite-is-discrete}
\label{lem:fin-discrete}
Every finite linearly ordered type \(F\) is discrete.
\end{lemma}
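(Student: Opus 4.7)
The plan is to prove discreteness of $\F(n)$ for every $\ty{n}{\N}$ by induction, and then to transfer this property along the given equivalence $F \simeq \F(n)$.

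For the first step, I proceed by induction on $n$. When $n = 0$, the type $\F(0) = \0$ is vacuously discrete, since any statement quantified over its elements holds by $\0$-elimination. For the successor case, assume $\F(n)$ is discrete; since $\F(n+1) := \F(n) + \1$ and the unit type $\1$ is trivially discrete (the only possible equation $\star = \star$ is inhabited by $\mathsf{refl} \ \star$), discreteness of $\F(n+1)$ follows from the fact recalled at the end of the previous subsection that the disjoint union of two discrete types is discrete. (Concretely, one pattern-matches on the two elements of $\F(n) + \1$: matching sides are decided by the inductive hypothesis (resp.\ triviality on $\1$) together with $\mathsf{ap}(\inl)$ or $\mathsf{ap}(\inr)$, whereas mixed sides are immediately discriminated by the fact that $\inl$ and $\inr$ are disjoint constructors.)

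For the second step, I show that discreteness is preserved by equivalence: if $\ty{F \simeq G}{}$ and $G$ is discrete, then so is $F$. Unpacking \cref{def:simeq}, we have $\ty{f}{F \to G}$ and $\ty{g}{G \to F}$ with $g \circ f \sim \mathsf{id}_F$. Given $\ty{x,y}{F}$, discreteness of $G$ decides $f(x) = f(y)$. In the positive case, applying $\mathsf{ap}(g)$ yields $g(f(x)) = g(f(y))$, and the left-inverse equations transport this to $x = y$. In the negative case, any hypothetical $\ty{e}{x = y}$ would give $\mathsf{ap}(f,e) : f(x) = f(y)$, contradicting the assumption; hence $\neg(x = y)$.

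The unfolding $F \simeq \F(n)$ given by the finite linear order structure on $F$, combined with the two steps above, then gives discreteness of $F$. No real obstacle arises: the only mild subtlety is making sure the two nontrivial direct-sum cases of the inductive step are discriminated by constructor disjointness, which is automatic in \textsc{Agda} by pattern matching and is precisely what discreteness of $X + Y$ (for discrete $X,Y$) amounts to.
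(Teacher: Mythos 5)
Your proof is correct and follows essentially the same route as the paper's: induction on the $\ty{n}{\N}$ with $F \simeq \F(n)$, using that $\0$ is vacuously discrete, $\1$ is trivially discrete, and disjoint unions of discrete types are discrete. You are merely more explicit than the paper about transferring discreteness along the equivalence (the paper folds this step silently into each case of the induction), which is a welcome but not substantively different elaboration.
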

\begin{proof}[Proof (Sketch.)]
By induction on the \(\ty{n}{\N}\) such that \(F \simeq \mathsf{Fin}(n)\). In the base case where \(n := 0\), then \(F \simeq \0\) which is vacuously discrete. In the inductive case where \(n := n' + 1\), for some \(\ty{n'}{\N}\), then \(F \simeq \mathsf{Fin}(n') + \1\) --- as both sides of this are discrete (the left-hand side by induction and right-hand side by the fact that \(\1\) is trivially discrete), the overall \(+\)-type is discrete.
\end{proof}

\begin{lemma}
\thesislit{2}{Finite}{finite-is-set}
\label{lem:fin-set}
Every finite linearly ordered type is a set.
\end{lemma}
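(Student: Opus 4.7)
The plan is to reduce this to \cref{lem:fin-discrete} via Hedberg's theorem, which states that every discrete type (i.e.\ every type with decidable equality) is a set. Hedberg's theorem is a standard result in univalent mathematics and is available in \textsc{TypeTopology}. Since \cref{lem:fin-discrete} already establishes that every finite linearly ordered type \(F\) is discrete, the conclusion follows in a single step: \(F\) is discrete, hence by Hedberg's theorem \(F\) is a set.

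If one preferred to avoid invoking Hedberg's theorem, the alternative route would be induction on the \(\ty{n}{\N}\) such that \(F \simeq \mathsf{Fin}(n)\). First I would establish the auxiliary closure facts that being a set is preserved under equivalence (given \(X \simeq Y\) and \(X\) is a set, then \(Y\) is a set --- this is a straightforward transport argument analogous to \cref{lem:equiv-prop}), and that the empty type, unit type, and disjoint unions of sets are sets. The base case \(n := 0\) gives \(F \simeq \mathbb{0}\), and \(\mathbb{0}\) is vacuously a set. For the inductive step \(n := n' + 1\), we have \(F \simeq \mathsf{Fin}(n') + \mathbb{1}\), where \(\mathsf{Fin}(n')\) is a set by the inductive hypothesis and \(\mathbb{1}\) is trivially a set, so their sum is a set, and then the equivalence transfers this to \(F\).

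The Hedberg-based proof is preferable because it is shorter and reuses \cref{lem:fin-discrete} directly, with no new combinatorial bookkeeping. The main (minor) obstacle is simply locating the correct packaged version of Hedberg's theorem in \textsc{TypeTopology} and applying it at the right universe level; there is no real mathematical difficulty here. Either route gives the result immediately.
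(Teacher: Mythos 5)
Your primary route is correct but genuinely different from the paper's. The paper proves this by the same induction it used for discreteness (\cref{lem:fin-discrete}): induct on the \(n\) with \(F \simeq \F(n)\), use that \(\0\) and \(\1\) are sets, that disjoint unions of sets are sets, and (implicitly) that being a set transfers along equivalences --- i.e.\ exactly your fallback argument. Your preferred route instead factors through Hedberg's theorem: finite linearly ordered types are discrete by \cref{lem:fin-discrete}, and discrete types are sets. This is sound --- Hedberg's theorem holds in pure MLTT without function extensionality, and a packaged form (discrete types are sets) is indeed available in \textsc{TypeTopology} --- and it is arguably the cleaner derivation, since it reuses the already-proved discreteness lemma and avoids repeating the combinatorial induction. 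What the paper's approach buys in exchange is self-containedness and structural parallelism with the discreteness proof: the two lemmas are proved by the same induction over the same closure properties, with no appeal to an external theorem. Either proof is acceptable; the only caveat for your route is that the intermediate step (discrete implies set) must actually be cited or proved, since the thesis does not state Hedberg's theorem anywhere in its text.
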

\begin{proof}[Proof (Sketch.)]
Similarly to the above we proceed by induction and use that \(\0\) and \(\1\) are both sets and that the disjoint union of sets is also a set.
\end{proof}

\subsection{Vectors and sequences}
\label{sec:sequences}

We often require types for products beyond the binary case.
For finitary products we now introduce \emph{vectors}, and for infinitary products we introduce \emph{sequences}. 
Sequences are of particular importance to this work, as the representations of real numbers we use later are sequence types. 

We first give both non-dependent and dependent sequences.
Given a type \(\ty{X}{\U}\), the type of \emph{(non-dependent) sequences} of elements of \(X\) is \(\seq X := (\N \to X)\). Meanwhile, given an \(\N\)-indexed family of types \(\ty{Y}{\N \to \U}\), the type of \emph{dependent sequences} of elements of \(Y(0),Y(1),Y(2),...\) respectively is the \(\Pi\)-type \(\ty{\Pi Y}{\U}\).
Of course, the non-dependent version is an instance of the dependent version by setting \(Y\) as the constant function \(\ty{\left(\lambda (\ty{-}{\N}).X\right)}{\N \to \U}\).

For vectors, we formally define the dependent version.

\begin{definition}
\typetop{MLTT}{SpartanList}{vec}
Given a number \(\ty{n}{\N}\) and a finite family of types \(\ty{X}{\F(n) \to \U}\), the type of \emph{\(n\)-size (dependent) vectors} \(\mathsf{Vec}(n,X)\) is defined inductively using binary products.
\begin{alignat*}{3}
\mathsf{Vec} &: \Pitye{n}{\N}{(\F(n) \to \U) \to \U}, \\
\mathsf{Vec} &(0,X) && := \1, \\
\mathsf{Vec} &(n+1,X) && := X_0 \x \mathsf{Vec}(n,\lambda i.X_{i+1}).
\end{alignat*}
\end{definition}

\noindent
As with dependent sequences, each point of a dependent vector can have a different type. Again, if the given type family \(Y\) is constant on a type \(X\), then we have defined a non-dependent vector of elements of \(X\).
We usually notate \(n\)-size non-dependent vectors as \(\ty{\{x_0,...,x_{n-1}\}}{X^n}\).

In the rest of the thesis we use the following (overloaded) notation for functions that operate on both vectors and sequences. Given a vector or sequence \(xs\), we write:
\begin{itemize}
\item \(\alpha_n\) for the \(n\)\textsuperscript{th} element of the vector/sequence,
\item \(\mathsf{head} \ xs\) for the first element of the vector/sequence \(\alpha_0\),
\item \(\mathsf{tail} \ xs\) for the vector/sequence with the head dropped \(\lambda n.xs_{n+1}\),
\item \((x :: xs)\) for the vector/sequence prepended with an element \(x\),
\item \(\mathsf{map}(f,xs)\) for the vector/sequence \(f(\mathsf{head} \ xs) :: \mathsf{map}(f,\mathsf{tail} \ xs)\),
\item \(\mathsf{repeat} \ x\) for the sequence \(\left(\lambda n.x\right)\).
\end{itemize}
Note that these are not \emph{just} notation but functions in our formalisation.
For example, the following function \(\mathsf{zipWith}\) is the extension of \(\mathsf{map}\) to binary functions:

\begin{definition}
\label{def:zipWith}
\thesislit{2}{Sequences}{zipWith}
Given types \(X\), \(Y\) and \(Z\), we define the \(\mathsf{zipWith}\) functional that canonically lifts functions of type \((X \to Y \to Z)\) to those of type \((\seq X \to \seq Y \to \seq Z)\):
\begin{alignat*}{3}
\mathsf{zipWith} &: (X \to Y \to Z) \to (\seq X \to \seq Y \to \seq Z), \\
\mathsf{zipWith} &(f,\alpha,\beta) := \lambda n.f(\alpha_n,\beta_n).
\end{alignat*}
\end{definition}

An important notion in our work is that of the \emph{equality of sequence prefixes}, which is decidable for any length of prefix if the types of the sequence are discrete.

\begin{definition}
\thesislit{2}{Sequences}{_\urlsim\urlsuperscriptn_}
Given an \(\N\)-indexed family of types \(\ty{X}{\N \to \U}\), two sequences \(\ty{\alpha,\beta}{\Pi X}\) have \emph{equal \(n\)-prefixes} if they are equal at every point below \(n\):
\begin{alignat*}{3}
\ &\sim^n \ : &&\N \to \Pi X \to \Pi X \to \U, \\
\alpha &\sim^n \beta &&:= \Pity{i}{\N}{i < n \to \alpha_i = \beta_i}
\end{alignat*}
\end{definition}

\begin{lemma}
\label{lem:sim-decidable}
\thesislit{2}{Sequences}{\urlsim\urlsuperscriptn-decidable}
Given an \(\N\)-indexed family of discrete types \(\ty{X}{\N \to \U}\), given any sequences \(\ty{\alpha,\beta}{\Pi X}\) we can decide whether \(\alpha \sim^n \beta\) for any prefix length \(\ty{n}{\N}\).
\end{lemma}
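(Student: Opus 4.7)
The plan is to proceed by induction on the prefix length $\ty{n}{\N}$.

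For the base case $n := 0$, the type $\alpha \sim^0 \beta$ unfolds to $\pity{i}{\N}{i < 0 \to \alpha_i = \beta_i}$. Since $i < 0$ is empty for every $\ty{i}{\N}$, this $\Pi$-type is (trivially) inhabited by the function that discharges the impossible hypothesis via $\0\mathsf{{\hy}elim}$. Hence $\alpha \sim^0 \beta$ is decidable by taking the left injection of the decidability disjunction.

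For the inductive step, assume we can decide $\alpha \sim^n \beta$ and we wish to decide $\alpha \sim^{n+1} \beta$. The key observation is that $\alpha \sim^{n+1} \beta$ is logically equivalent to $(\alpha \sim^n \beta) \times (\alpha_n = \beta_n)$: the forward direction restricts the hypothesis from $i < n+1$ to $i < n$ (and instantiates at $i := n$ for the second component), while the backward direction performs a case analysis on a given $i < n+1$, using the inductive prefix when $i < n$ and the endpoint equality when $i = n$. Using the induction hypothesis together with the discreteness of $X_n$ (\cref{lem:fin-discrete} is not needed here — discreteness is given directly by assumption on the family $X$), both conjuncts are decidable. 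Decidability is preserved under binary products: given decisions for $P$ and $Q$, we return $\inl (p, q)$ if both hold, and otherwise $\inr$ of a function that projects out whichever side fails and applies the corresponding negation. Transporting this decision along the logical equivalence yields the decision for $\alpha \sim^{n+1} \beta$.

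The only mildly delicate step is the case split on $i < n+1$ in the backward direction of the equivalence, which requires trichotomy of the order on $\N$ (specifically, that $i < n+1$ implies $i < n$ or $i = n$). This is a standard lemma on natural numbers and presents no real obstacle; the remainder of the argument is a straightforward combination of induction, closure of decidability under binary products, and transport along a logical equivalence of types.
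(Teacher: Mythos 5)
Your proof is correct and follows essentially the same route as the paper: induction on \(n\), with the base case vacuously satisfied and the inductive case reduced to deciding \(\alpha \sim^n \beta\) (by the induction hypothesis) together with \(\alpha_n = \beta_n\) (by discreteness of \(X_n\)). You have merely spelled out the details the paper leaves implicit, namely the logical equivalence with the binary product, closure of decidability under products, and the case split on \(i < n+1\).
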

\begin{proof}
In the base case, \(\alpha \sim^0 \beta\) is always satisfied (as there is no natural number below \(0\)).
In the inductive case, we decide whether \(\alpha \sim^{n+1} \beta\) by deciding whether \(\alpha \sim^n \beta\) (by induction) and whether \(\alpha_n = \beta_n\) (by discreteness of \(X_n\)).
\end{proof}

\begin{lemma}
\label{lem:sim-eq}
Given an \(\N\)-indexed family of discrete types \(\ty{X}{\N \to \U}\), given any sequences \(\ty{\alpha,\beta}{\Pi X}\) the type \(\alpha \sim^n \beta\), for any prefix length \(\ty{n}{\N}\), is an equivalence relation (i.e.\ it is reflexive, symmetric and transitive).
\end{lemma}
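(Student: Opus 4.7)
The plan is to observe that this lemma is a straightforward pointwise lifting of the fact that propositional equality (as given by the identity type) is already reflexive, symmetric and transitive --- the three \emph{named} rules from the end of \cref{sec:id}. The discreteness assumption on the family \(X\) is inherited from the surrounding context (it is needed for \cref{lem:sim-decidable}) but is not actually required for the equivalence relation properties themselves.

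First I would unfold the definition \(\alpha \sim^n \beta := \Pity{i}{\N}{i < n \to \alpha_i = \beta_i}\) in each of the three cases. For reflexivity, I construct \(\ty{r}{\alpha \sim^n \alpha}\) by taking \(\ty{i}{\N}\) and a proof \(\ty{-}{i < n}\) and returning \(\mathsf{refl}(\alpha_i)\); the order bound is simply discarded. For symmetry, given \(\ty{e}{\alpha \sim^n \beta}\) I construct an element of \(\beta \sim^n \alpha\) by \(\lambda i.\lambda (p : i < n).\mathsf{sym}(\alpha_i,\beta_i,e(i,p))\). For transitivity, given \(\ty{e_1}{\alpha \sim^n \beta}\) and \(\ty{e_2}{\beta \sim^n \gamma}\) I construct an element of \(\alpha \sim^n \gamma\) by \(\lambda i. \lambda (p : i < n). \mathsf{trans}(\alpha_i, \beta_i, \gamma_i, e_1(i,p), e_2(i,p))\).

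There is no real obstacle here: each clause delegates immediately to the corresponding identity-type rule applied at the index \(i\). The only minor subtlety is that, in the formalisation, one should be careful to pass around the bound \(i < n\) to the appropriate place so that \textsc{Agda} type-checks the application of the pointwise equality --- but this is bookkeeping rather than mathematical content. No axioms (function extensionality, propositional truncation, etc.) are needed, since we are producing an element of a \(\Pi\)-type rather than identifying two such elements.
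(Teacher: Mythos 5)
Your proposal is correct and matches the paper's proof, which likewise just lifts reflexivity, symmetry and transitivity of the identity type pointwise to each index \(i < n\). Your added remark that discreteness is not actually needed for this particular lemma is also accurate.
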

\begin{proof}
By those same properties (given in \cref{sec:id}) on the identity type \(\alpha_i = \beta_i\) for all \(i < \beta\).
\end{proof}
\chapter{Searchability and Continuity}
\label{chap:searchable}

In \cref{chap:mltt}, we outlined our constructive and univalent foundation of mathematics that we will work within in this thesis.
Within our \textsc{Agda} framework, we have begun to define a wide collection of mathematical structures, properties and propositions.
Two structures of chief importance to this thesis are \emph{searchability} and \emph{total boundedness}: related concepts which are explored in this chapter in a generalised format, employed in \cref{chap:generalised} for defining convergent optimisation and regression theorems, and equipped in \cref{chap:exact-real-search} for computation of such search, optimisation and regression procedures on types for for representing real numbers.

In order to be able to express searchability on infinite types (such as those used for representing real numbers), we need to define another key property in our framework --- \emph{continuity} on functions and predicates. In \cref{sec:cspace}, we define a convenient variant of continuity on \emph{closeness spaces}; a structure that we also explore in this chapter which also allows us to define total boundedness.
By the end of this chapter, we will have two methods for constructive infinite search algorithms: by using the total boundedness property (\cref{thm:tb-csearch}) or by using the \emph{Tychonoff theorem} for searchable types (\cref{thm:tychonoff}).
In the examples given in \cref{chap:exact-real-search}, we find that searchers derived from the former theorem are usually better suited for practical purposes, though not in every case.

\section{Searchable Types}
\label{sec:search-finite}

\subsection{Background and motivation}

Mart\'{\i}n Escard\'o introduced the concept of \emph{searchable sets} in higher computability theory~\cite{Escardo08}.
Informally, these sets \(K\) are those for which we can establish a computable functional \(\mathcal{E}_K\), that we call a \emph{searcher}.
A searcher takes as input any Boolean-valued predicate \(\ty{p}{K \to \2}\) and returns a distinguished element \(\ty{\mathcal{E}_K(p)}{K}\) that satisfies the following \emph{search condition}: if there is at least one element \(\ty{k}{K}\) that satisfies \(p\) (i.e.\ \(p(k) = 1\)), then the distinguished element also satisfies \(p\).

Infinite searchers --- sometimes called `seemingly impossible functional programs' --- search elements of infinite spaces such as the type of binary sequences (i.e.\ elements of the \emph{Cantor space} \(\seq \2 := \N \to \2\))~\cite{Escardo12}.
The searcher on the Cantor space is originally due to Berger~\cite{BergerThesis}. The more general searcher on the product space \(\Pi T\) of infinitely-many searchable sets \(\ty{T_i}{\U}\) (where \(\ty{T}{\N \to \U}\)) is due to Escard\'o~\cite{Escardo08}.

Of particular relevance to this thesis, infinite search programs have been previously defined on representations of real numbers and used to perform analysis~\cite{Escardo11fun}. One example is that Escard\'o has used infinite searchers to find solutions \(x\) to equations \(f(x) = y\)~\cite{Escardo13}.
Another is that Simpson has used them to perform Riemann integration and to find the global maximum value of a continuous function on the type of ternary signed-digit encodings (introduced in \cref{sec:signed-digits})~\cite{Simpson}.
We will use searchable types for our own purposes of analysis (specifically, optimisation and regression) in \cref{chap:generalised}.

\subsection{Searchable types in MLTT}

For defining searchability in our \textsc{Agda} framework, there are a wealth of different options, many of which are already defined in \textsc{TypeTopology}~\cite{CompactTypes}.
We follow Escard\'o's type-theoretic variants in that we have no requirement for our \emph{type} \(\ty{K}{\U}\) to be a \emph{set} (\cref{def:hset}); further, instead of using \(\2\) as the domain of the searched predicates, we use a type of truth values \(\Omega\).
In this section, we recall definitions and proofs concerning searchable types --- the proof techniques are shown only to get the reader comfortable with these techniques, which will be useful when we introduce \emph{uniformly continuously searchable types} in \cref{sec:search-infinite}.

In order to be able to test our searched predicates \(\ty{p}{K \to \Omega}\), we require them to be decidable.
Note that every predicate in the original formulation was decidable, because every function with domain \(\2\) is trivially decidable --- indeed, the type of decidable predicates of type \(K \to \Omega\) is, by function and proposition extensionality, equivalent to the original type of predicates \(K \to \2\).

\begin{definition}
\thesislit{3}{SearchableTypes}{decidable-predicate}
The type of \emph{decidable predicates} on a type \(K\) is defined by, \[\mathsf{decidable{\hy}predicate}(K) := \sigmaty{p}{K \to \Omega}{\mathsf{complemented} \ p} .\]
\end{definition}

\noindent
We often leave the universe implicit --- but note that it can be different to the universe that the co-domain type lives in --- and will usually leave the witness of decidability implicit.

This gives us a definition of searchability in our type theory.

\begin{definition}
\label{def:searcher}
\typetop{TypeTopology}{CompactTypes}{is-compact\urlinterpunct'}
\thesislit{3}{SearchableTypes}{searchableE}
A function \(\ty{\mathcal{E}_K}{\mathsf{decidable{\hy}predicate}(K) \to K}\) is a \emph{searcher} on a given type \(K\) if, for all \(\ty{p}{\mathsf{decidable{\hy}predicate}(K)}\), it is the case that \(p(\mathcal{E}_K(p))\) holds if there is some element \(\ty{k}{K}\) such that \(p(k)\) holds:
\[ \mathsf{is{\hy}searcher}(\mathcal{E}) := \pitye{p}{\mathsf{decidable{\hy}predicate}(K)}{\left( \sigmatye{k}{K}{p(k)} \right) \to p(\mathcal{E}(p))} .\]
\end{definition}

\begin{definition}
\label{def:searchable}
\typetop{TypeTopology}{CompactTypes}{is-compact\urlinterpunct}
\thesislit{3}{SearchableTypes}{searchable}
A type \(K\) is \emph{searchable} if we can define a searcher on that type:
\[ \mathsf{searchable^\mathcal{E}}(K) := \sigmatye{\mathcal{E}_K}{\mathsf{decidable{\hy}predicate} \ K \to K}{\mathsf{is{\hy}searcher}(\mathcal{E})} .\]
We often use the following equivalent definition, which is more convenient:
\[ \mathsf{searchable}(K) := \pitye{p}{\mathsf{decidable{\hy}predicate}(K)}{ \sigmatye{k_0}{K}{\left( \sigmatye{k}{K}{p(k)} \right) \to p(k_0)}} .\]
\end{definition}

For non-trivial searchable types \(K\), the searcher is not unique.
As an example, consider two (informally defined) searchers for the \(\2\) type:
\begin{enumerate}
\item \(\mathcal{E}_\2(p) := \mathsf{if} \ p(\tru) \ \mathsf{return} \ \tru \mathsf{; return} \ \ff \mathsf{;} \)
\item \(\mathcal{E}_\2(p) := \mathsf{if} \ p(\tru) \ \mathsf{return} \ \ff \mathsf{; return} \ \tru \mathsf{;} \)
\end{enumerate}
\noindent
Although both of these searchers satisfy the search condition, they can return different elements given the same predicate.
Therefore, the type \((\mathsf{searchable} \ K)\) is not a subsingleton.

Note that we can always exhibit an element of a searchable type, and separately if the element returned by the searcher does not satisfy the searched predicate then no element does.

\begin{lemma}
\label{lem:searchable-pointed}
\typetop{TypeTopology}{CompactTypes}{Compact\urlinterpunct-gives-pointed}
\thesislit{3}{SearchableTypes}{searchable-pointed}
Every searchable type is pointed.
\end{lemma}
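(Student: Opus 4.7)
The plan is to exploit the fact that the searcher $\mathcal{E}_K$ is a total function: given any decidable predicate on $K$, it must produce an element of $K$, regardless of whether anything in $K$ satisfies the predicate. So to extract a point of $K$, I simply need to hand $\mathcal{E}_K$ \emph{some} decidable predicate.

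First I would take the trivial predicate $\ty{p_\top}{K \to \Omega}$ defined by $p_\top(k) := \1$ for every $\ty{k}{K}$. This is a decidable predicate because for every $\ty{k}{K}$ the type $\1$ is decidable, witnessed by $\inl(\star)$; and $\1$ is a proposition, so $p_\top$ indeed lands in $\Omega$. Then, unpacking the searchability hypothesis using the first formulation of \cref{def:searchable}, I obtain a searcher $\ty{\mathcal{E}_K}{\mathsf{decidable{\hy}predicate}(K) \to K}$, and the desired point of $K$ is simply $\mathcal{E}_K(p_\top)$.

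There is no real obstacle here: we do not even need to invoke the search condition itself, only the existence of the function $\mathcal{E}_K$. The only very minor subtlety is noting that the definition of searcher presupposes its input is a decidable predicate, so one has to actually produce the decidability witness for $p_\top$ (trivial) before feeding it in. If instead one uses the equivalent second formulation of \cref{def:searchable}, the extracted element is the first projection $\mathsf{pr}_1$ of the $\Sigma$-type applied to $p_\top$, giving the same point.
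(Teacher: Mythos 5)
Your proof is correct and is essentially identical to the paper's: both apply the searcher to the trivially true (and trivially decidable) constant predicate $p^\top$ and take the returned element $\mathcal{E}_K(p^\top)$ as the point of $K$. Your observation that the search condition itself is never needed, only the totality of $\mathcal{E}_K$, is accurate.
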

\begin{proof}
For the given searchable type \(K\), we define the constant predicate \(p^\top(k) := \top\), which every element satisfies.
We can then introduce the element \(\ty{\mathcal{E}_K(p^\top)}{K}\).
\end{proof}

\begin{lemma}
\label{lem:searchable-none}
\typetop{TypeTopology}{CompactTypes}{is-compact\urlinterpunct}
Given a searchable type \(K\) and any decidable predicate \(\ty{p}{\mathsf{decidable{\hy}predicate} \ K}\), if \(\neg p(\mathcal{E}_K(p))\) then for all \(\ty{k}{K}\) it is the case that \(\neg p(k)\).
\end{lemma}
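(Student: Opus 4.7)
The plan is to prove this by a direct contrapositive-style argument using the defining property of the searcher. Given a searchable type $K$ with searcher $\mathcal{E}_K$, recall from \cref{def:searcher} that for any decidable predicate $p$, whenever there exists some $k : K$ with $p(k)$, we have $p(\mathcal{E}_K(p))$. The lemma essentially asks us to contrapose this statement.

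First I would fix the searchable type $K$, the decidable predicate $\ty{p}{\mathsf{decidable{\hy}predicate} \ K}$, and a hypothesis $\ty{h}{\neg p(\mathcal{E}_K(p))}$. To show $\pity{k}{K}{\neg p(k)}$, I introduce an arbitrary $\ty{k}{K}$ and a hypothetical proof $\ty{q}{p(k)}$, with the goal of constructing an element of $\0$. From $k$ and $q$ I form the dependent pair $\ty{(k,q)}{\sigmaty{k'}{K}{p(k')}}$, witnessing that there is an element of $K$ satisfying $p$.

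Applying the search condition of \cref{def:searcher} to $(k,q)$ yields $\ty{p(\mathcal{E}_K(p))}{\Omega}$. Combining this with the hypothesis $h$ by function application gives the desired element of $\0$, completing the proof. Written compactly, the proof term is essentially
\[ \lambda k. \lambda q. \, h\bigl(\mathsf{pr}_2(\mathcal{E}_K)(p,(k,q))\bigr), \]
where $\mathsf{pr}_2(\mathcal{E}_K)$ names the search-condition component of the searchability witness.

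There is no real obstacle here: the lemma is a one-line unpacking of the definition of searchability, and the only thing to be careful about is that we are working with the $\mathsf{searchable^\mathcal{E}}$-style formulation rather than the curried $\mathsf{searchable}$-style formulation, but \cref{def:searchable} records that the two are equivalent, so either form produces the same argument with only cosmetic differences.
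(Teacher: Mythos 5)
Your proof is correct, but it takes a slightly different route from the paper's. The paper first invokes the decidability of \(p\) to case-split on \(p(k) + \neg p(k)\): in the \(\neg p(k)\) case it concludes immediately, and in the \(p(k)\) case it derives the same contradiction you do (via the search condition and \(\0\mathsf{{\hy}elim}\)). You instead unfold \(\neg p(k)\) as \(p(k) \to \0\) directly, assume \(\ty{q}{p(k)}\), and feed the pair \((k,q)\) to the search condition to contradict \(h\). Your version is the more economical one: the case split is redundant, since the goal is itself a negation and can be proved by assuming its antecedent; consequently your argument never uses the decidability of \(p\) at all (it is carried along only as part of the type of \(p\)). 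The paper's detour through decidability buys nothing here, so if anything your proof term
\[ \lambda k. \lambda q. \, h\bigl(\mathsf{pr}_2(\mathcal{E}_K)(p,(k,q))\bigr) \]
is the cleaner formalisation of the same underlying idea.
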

\begin{proof}
Assuming \(\ty{z}{\neg p(\mathcal{E}_K(p))}\) then given any \(\ty{k}{K}\) we use the decidability of \(p\) to decide whether \(p(k) + \neg p(k)\). In the latter case, the result follows trivially. In the former case, the search condition (in \cref{def:searcher}) implies that, because there is an element satisfying the predicate, we have some \(\ty{q}{p(\mathcal{E}_K(p))}\); therefore, this case is impossible and eliminated by \(\mathsf{\0{\hy}elim}(z(q))\).
\end{proof}

\subsection{Searching finite types}

Every pointed finite linearly ordered type (as defined in \cref{sec:finite-linear-ordered}) is searchable. The searcher for a finite type can simply check each element in turn (by any particular search strategy, but we assume it checks them relative to the type's linear order), returning the first element that satisfies the predicate --- if the type is fully exhausted, then the searcher can return any element of the type as no such satisfying element exists.

First note that \(\1\) is trivially searchable, and that the disjoint sum type former preserves searchability.

\begin{remark}\label{lem:1-searchable}
\typetop{TypeTopology}{CompactTypes}{\urlone-is-Compact}
\thesislit{3}{SearchableTypes}{\urlone-searchable}
\(\1\) is searchable.
\end{remark}
\begin{proof}
The searcher always returns \(\ty{\star}{\1}\).
Whether or not \(p(\star)\), this satisfies the search condition.
\end{proof}

\begin{lemma}\label{lem:plus-searchable}
\typetop{TypeTopology}{CompactTypes}{\urlplus-is-Compact}
\thesislit{3}{SearchableTypes}{\urlplus-searchable}
If types \(K\) and \(J\) are searchable, then so is \(K + J\).
\end{lemma}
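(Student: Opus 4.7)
The plan is to construct a searcher on $K + J$ by restricting any decidable predicate $\ty{p}{\mathsf{decidable{\hy}predicate}(K + J)}$ to each summand and combining the witnesses provided by $\mathcal{E}_K$ and $\mathcal{E}_J$. Concretely, I would first define $p_K := p \circ \inl$ and $p_J := p \circ \inr$; these are decidable predicates on $K$ and $J$ respectively, since decidability of $p$ on each element of $K+J$ immediately restricts. Applying the assumed searchers yields $k_0 := \mathcal{E}_K(p_K)$ and $j_0 := \mathcal{E}_J(p_J)$.

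Next I would define the candidate element of $K + J$ by case-splitting on the decidable proposition $p(\inl \ k_0)$: if it holds, return $\inl \ k_0$; otherwise, return $\inr \ j_0$. This uses decidability of $p$ on the specific element $\inl \ k_0$, which we obtain from the complementedness of $p$.

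To verify the search condition in the sense of the second (convenient) formulation of \cref{def:searchable}, I would assume given an inhabitant of $\sigmaty{c}{K + J}{p(c)}$ and case-split on $c$. If $c = \inl \ k$ with $\ty{q}{p(\inl \ k)}$, then $(k,q)$ witnesses $\sigmaty{k}{K}{p_K(k)}$, so by the searcher on $K$ we obtain $p_K(k_0)$, i.e.\ $p(\inl \ k_0)$; hence the case-split returns $\inl \ k_0$ and the search condition is satisfied. If $c = \inr \ j$ with $\ty{q}{p(\inr \ j)}$, then similarly the searcher on $J$ gives $p_J(j_0)$, i.e.\ $p(\inr \ j_0)$; in the subcase where $p(\inl \ k_0)$ holds, we return $\inl \ k_0$ (which satisfies $p$ by assumption of that subcase), and in the subcase where it fails we return $\inr \ j_0$ (which satisfies $p$ by the argument just given).

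I do not expect any serious obstacle here: the only genuine subtlety is ensuring that decidability propagates correctly when we form $p_K$ and $p_J$, and that the final case-split on $p(\inl \ k_0)$ is performed via the provided decidability witness rather than as a classical disjunction. The structure of the proof is a mirror of \cref{lem:plus-prop} in spirit, insofar as it is governed by a clean case analysis on the two injections; no extensionality or truncation axioms are required.
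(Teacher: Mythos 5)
Your proposal is correct and follows essentially the same route as the paper: restrict $p$ along $\inl$ and $\inr$, run the two searchers, and decide $p(\inl\ k_0)$ to choose which witness to return. The only cosmetic difference is that the paper performs a second decidability check on $p(\inr\ j_0)$ before falling back to an arbitrary element, whereas you return $\inr\ j_0$ unconditionally in the failing branch — both satisfy the search condition for the same reason.
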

\begin{proof}
Given any predicate \(\ty{p}{{K + J} \to \Omega}\), we want to find some element \(\ty{{k\hspace{-0.05em}j}_0}{K + J}\) that satisfies \(p\), if such an element exists.
First, we define the predicates \(p_K(k) := p(\inl \ k)\) and \(p_J(j) := p(\inr \ j)\).

Then, we search \(K\) to return \(\ty{k_0 := \mathcal{E}_K(p_K)}{K}\). If \(p(\inl \ k_0)\), then we are done and return \({k\hspace{-0.05em}j}_0 := \inl \ k_0\).
Otherwise, we go on to search \(J\) and set \(j_0 := \mathcal{E}_J(p_J)\).
If \(p(\inr \ j_0)\), then we are done and return \({k\hspace{-0.05em}j}_0 := \inl \ k_0\); if not, then there is no element of \(K + J\) that satisfies \(p\), and so we can safely return any element.
\end{proof}

\noindent
The above allows us to show that every pointed type in the family \(\ty{\F}{\N \to \U}\) is a searchable type.

\begin{lemma}\label{lem:fp-searchable}
\thesislit{3}{SearchableTypes}{Fin-searchable}
Given any \(\ty{n}{\N}\), if the type \(\F(n)\) is pointed then it is searchable.
\end{lemma}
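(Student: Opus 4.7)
The plan is straightforward induction on $\ty{n}{\N}$. For the base case $n = 0$, the pointedness hypothesis furnishes an element of $\F(0) = \0$, so $\0$-elim immediately yields searchability of $\F(0)$ (indeed, it yields anything) --- the base case is vacuous.

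For the inductive step $n = n' + 1$, we have $\F(n) = \F(n') + \1$ by definition, and I intend to apply \cref{lem:plus-searchable} together with \cref{lem:1-searchable}, reducing the task to establishing searchability of $\F(n')$. Here a further case-split on $n'$ is required, because the inductive hypothesis demands pointedness of $\F(n')$. If $n' = n'' + 1$, then $\F(n') = \F(n'') + \1$ is pointed via $\inr \ \star$, so the inductive hypothesis yields searchability of $\F(n')$ and \cref{lem:plus-searchable} completes this branch. If instead $n' = 0$, then $\F(n')$ is empty and the inductive hypothesis is inapplicable; I therefore construct a searcher on $\F(1) = \0 + \1$ directly. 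Its only element is $\inr \ \star$, so the constant searcher $\lambda p.\inr \ \star$ suffices: given any putative witness $\ty{(k,q)}{\sigmaty{k}{\0+\1}{p(k)}}$, the element $k$ is forced to be $\inr \ \star$ (the $\inl$ case being eliminated by $\0$-elim applied to the supposed inhabitant of $\0$), so $p(\inr \ \star)$ holds as required.

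The only real subtlety is this boundary case $n' = 0$ in the inductive step, which is forced by the fact that $\F(0) = \0$ can appear as the left summand in $\F(n'+1) = \F(n') + \1$ even though it is not itself searchable (any searcher must produce an element of its domain type, which $\0$ cannot provide). Dispatching the $n' = 0$ case by a short direct argument --- rather than trying to appeal to a spurious searchability of $\0$ --- keeps the inductive skeleton uniform and lets the remainder of the proof proceed by a single clean appeal to \cref{lem:plus-searchable,lem:1-searchable}.
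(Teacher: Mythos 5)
Your proof is correct and follows the same route as the paper: induction on \(n\), with the base case vacuous by non-pointedness of \(\0\) and the successor case discharged via \cref{lem:1-searchable,lem:plus-searchable}. The extra case-split on \(n'\) is in fact a genuine improvement over the paper's written sketch, which asserts that \(\F(n')\) is searchable ``by the inductive hypothesis'' without verifying the pointedness hypothesis that the inductive hypothesis requires --- precisely the \(n' = 0\) boundary you dispatch directly with the constant searcher on \(\0 + \1\).
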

\begin{proof}
By induction on the given \(\ty{n}{\N}\).
The base case, where \(n := 0\), is vacuous as \(\F(0) := \0\) is not pointed.
The inductive case, where \(n := \suc{n'}\), requires showing \(\F(\suc{n'}) := \1 + \F(n')\) is searchable. As \(1\) is trivially searchable (see \cref{lem:1-searchable}) and \(\F(n')\) is searchable by the inductive hypothesis, the result follows by \cref{lem:plus-searchable}.
\end{proof}

\noindent
We next show that any type \(K\) can be searched using a searcher on an equivalent searchable type \(J\) --- i.e.\ equivalence preserves searchability.

\begin{lemma}\label{lem:equiv-searchable}
\thesislit{3}{SearchableTypes}{equivs-preserve-searchability}
Given types \(K\) and \(J\) such that \(K \simeq J\), if \(J\) is searchable then so is \(K\).
\end{lemma}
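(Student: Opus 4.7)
The plan is to transport both the predicate and the witness across the equivalence. We are given an equivalence $K \simeq J$, which by \cref{def:simeq} unpacks to a function $\ty{f}{K \to J}$ together with a right-inverse $\ty{g}{J \to K}$ (satisfying $f \circ g \sim \mathsf{id}_J$) and a left-inverse $\ty{h}{J \to K}$ (satisfying $h \circ f \sim \mathsf{id}_K$). For our purposes only $f$ and $h$ matter, since we need to push predicates from $K$ into $J$ and then pull search results back.

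Given a decidable predicate $\ty{p}{\mathsf{decidable{\hy}predicate}(K)}$, I first define the transported predicate $\ty{q}{J \to \Omega}$ by $q(j) := p(h(j))$. Decidability is inherited pointwise from $p$ since $q(j)$ is definitionally $p(h(j))$, so $q$ is a decidable predicate on $J$. I then apply the searcher for $J$ to obtain $j_0 := \mathcal{E}_J(q)$, and return $k_0 := h(j_0)$ as the candidate witness for $K$.

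It remains to verify the search condition of \cref{def:searchable}. Suppose we are given a witness $\ty{(k, p_k)}{\sigmaty{k}{K}{p(k)}}$. We must produce a witness that $q$ is inhabited on $J$, and then use the search condition on $\mathcal{E}_J$ to conclude $q(j_0)$, which is exactly $p(k_0)$ by definition of $q$ and $k_0$. To produce the $J$-witness, take $j := f(k)$; then $q(f(k)) = p(h(f(k)))$, and we can transport $p_k$ along the pointwise equality $h(f(k)) = k$ coming from $h \circ f \sim \mathsf{id}_K$ to obtain an element of $p(h(f(k)))$, i.e.\ of $q(j)$. Feeding $(f(k), \mathsf{transport}(p, (h \circ f \sim \mathsf{id}_K)(k)^{-1}, p_k))$ into the search condition for $\mathcal{E}_J$ yields $q(j_0)$, and unfolding gives $p(k_0)$ as required.

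The step to watch is the direction of the pointwise equality: since $\mathsf{transport}$ moves a proof of $p(h(f(k)))$ to a proof of $p(k)$ (or vice versa) depending on the orientation of the equality, one has to apply symmetry appropriately. Beyond this small bookkeeping, no real obstacle arises; the proof is a direct reduction of searchability on $K$ to searchability on $J$ via the two components of the equivalence, and in particular it does not invoke function extensionality, propositional extensionality, or univalence.
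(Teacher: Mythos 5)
Your proof is correct and follows essentially the same route as the paper's: transport the predicate along a map $J \to K$ extracted from the equivalence, search $J$, pull the answer back, and use the homotopy $h \circ f \sim \mathsf{id}_K$ to discharge the search condition. The only (cosmetic) differences are that you consistently use the left inverse $h$ throughout, whereas the paper's prose defines $p'$ via $g$ but then invokes $h \circ f \sim \mathsf{id}_K$ in the verification, and that you verify the search condition directly rather than via the paper's case split on whether $p(g(j_0))$ holds.
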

\begin{proof}
Because \(K \simeq J\), we have \(\ty{f}{K \to J}\), \(\ty{g}{J \to K}\) and \(\ty{h}{J \to K}\) such that \(f \circ g \sim \mathsf{id}_J\) and \(h \circ f \sim \mathsf{id}_K\).

Given any predicate \(\ty{p}{K \to \Omega}\), we want to find some element \(\ty{k_0}{K}\) that satisfies \(p\), if such an element exists.
We define \(\ty{p'}{J \to \Omega}\) as \(p'(j) := p(g(j))\) and search \(J\) to return \(\ty{j_0 := \mathcal{E}_{J}(p')}{J}\).
If \(p(g(j))\), then we are done and return \(k_0 := g(j)\).

Otherwise, \(p'\) is never satisfied (i.e.\ \(\neg p(g(j))\) for all \(\ty{j}{J}\)) and from this we can prove that \(p\) is never satisfied. Given \(\ty{k}{K}\) such that \(p(k)\), we would have \(\ty{f(k)}{J}\) such that \(p(k) := p(h(f(k)) := p'(f(k))\), which is a contradiction.
We have proved that \(p\) is never satisfied, and hence our searcher can return any element.
\end{proof}

\noindent
We can now show that any pointed finite linearly ordered type is a searchable type.

\begin{lemma} \label{lem:fin-searchable}
\thesislit{3}{SearchableTypes}{finite-searchable}
Every pointed finite linearly ordered type is searchable.
\end{lemma}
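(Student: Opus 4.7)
The plan is to combine the two preceding lemmas (\cref{lem:fp-searchable,lem:equiv-searchable}) essentially as a routine unpacking of the finite linearly ordered structure on $F$. By definition, $F$ being finite linearly ordered means we have some $n : \N$ together with an equivalence $e : F \simeq \F(n)$. Since $F$ is assumed pointed, with witness $\ty{x_0}{F}$, I first want to transport this pointedness across the equivalence to show that $\F(n)$ is also pointed.

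Concretely, unpacking $e$ gives a function $\ty{f}{F \to \F(n)}$ (the forward part of the equivalence), and then $\ty{f(x_0)}{\F(n)}$ witnesses that $\F(n)$ is pointed. With this in hand, \cref{lem:fp-searchable} immediately yields that $\F(n)$ is searchable.

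Finally, I apply \cref{lem:equiv-searchable} to the equivalence $F \simeq \F(n)$ and the searchability of $\F(n)$ to conclude that $F$ itself is searchable. There is no real obstacle here: the work has been done in the two prior lemmas, and this result is just the combination via the given equivalence. The only minor care needed is that \cref{lem:equiv-searchable} transfers searchability in the direction opposite to the supplied equivalence, which is exactly what we want since we have $F \simeq \F(n)$ and we wish to transfer searchability from $\F(n)$ back to $F$.
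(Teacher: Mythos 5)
Your proof is correct and matches the paper's, which is simply the one-line combination of \cref{lem:fp-searchable,lem:equiv-searchable}; your version spells out the routine details (transporting the point along the equivalence and noting the direction in which \cref{lem:equiv-searchable} transfers searchability), all of which are exactly as intended.
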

\begin{proof}
By \cref{lem:fp-searchable,lem:equiv-searchable}.
\end{proof}

Finally, we show that finite products preserve searchability.

\begin{lemma}\label{lem:prod-searchable}
\typetop{TypeTopology}{CompactTypes}{binary-Tychonoff}
\thesislit{3}{SearchableTypes}{\urlx-searchable}
If \(K\) and \(J\) are searchable, then so is \(K \x J\).
\end{lemma}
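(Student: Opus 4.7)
The plan is to reduce searching the product $K \times J$ to two nested searches: an outer search over $K$ where, for each candidate $k$, we use the searcher on $J$ to pick the best companion $j$. Concretely, given a decidable predicate $\ty{p}{K \x J \to \Omega}$, I would first, for each $\ty{k}{K}$, define the sliced predicate $p_k(j) := p(k,j)$ on $J$. Each $p_k$ is decidable because $p$ is, so I can apply $\mathcal{E}_J$ to obtain a witness $j^\star(k) := \mathcal{E}_J(p_k)$ for every $k$.

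Next I would define the outer predicate $\ty{q}{K \to \Omega}$ by $q(k) := p(k, j^\star(k))$. This is decidable because $p$ is decidable at each pair and $j^\star(k)$ is a well-defined element of $J$. Applying $\mathcal{E}_K$ to $q$ yields $k_0 := \mathcal{E}_K(q)$, and I then set $j_0 := j^\star(k_0)$. The candidate output of the product searcher is $(k_0,j_0)$.

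The verification of the search condition proceeds as follows. Suppose there exists $\ty{(k,j)}{K \x J}$ with $p(k,j)$. Then $p_k(j)$ holds, so by the search condition for $\mathcal{E}_J$ applied to $p_k$, we get $p_k(j^\star(k))$, i.e.\ $q(k)$. Hence $q$ is witnessed, so by the search condition for $\mathcal{E}_K$ we have $q(k_0)$, which unfolds to $p(k_0, j^\star(k_0)) = p(k_0,j_0)$ as required.

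I do not expect any genuine obstacle: the argument is the standard currying trick, and the only subtle point is bookkeeping the decidability certificates so that the outer predicate $q$ is seen to be a genuine element of $\mathsf{decidable{\hy}predicate}(K)$. This follows because $p$'s decidability gives $p(k, j^\star(k)) + \neg p(k,j^\star(k))$ pointwise, and we can package this into the $\Sigma$-type of \cref{def:searchable}. Formally, one could equivalently use the convenient form of searchability (the second clause of \cref{def:searchable}) to directly produce the pair $(k_0,j_0)$ together with its search-condition proof, avoiding any explicit appeal to \cref{lem:searchable-pointed} or \cref{lem:searchable-none}.
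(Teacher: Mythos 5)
Your proposal is correct and is essentially identical to the paper's proof: your sliced predicates $p_k$ and outer predicate $q$ are exactly the paper's $p_J(k)$ and $p_K$, and the verification of the search condition proceeds by the same two applications of the component search conditions. The extra remarks on tracking decidability certificates are accurate and consistent with what the formalisation requires.
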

\begin{proof}
Given any predicate \(\ty{p}{{K \x J} \to \Omega}\), we want to find a pair of elements \(\ty{(k_0,j_0)}{K \x J}\) that satisfy \(p\), if such a pair exists.
We define a family of predicates on \(J\),
\begin{alignat*}{3}
p_J &: K \to \mathsf{decidable{\hy}predicate}(J), \\
p_J &(k) := \lambda j.p(k,j).
\end{alignat*}
and a predicate on \(K\), \[ p_K(k) := p(k,\mathcal{E}_J(p_J(k))), \] where \(\mathcal{E}_J\) is the searcher (see \cref{def:searcher}) on \(J\). 
By searching \(K\) for an answer to \(p_K\), we in turn search \(J\) for an answer to \(p_J(k)\) --- we name the former answer \(\ty{k_0}{K}\) and the latter, \(\ty{\mathcal{E}_J(p_J(k_0))}{Y}\), is dependent on the former.

\vspace{1em}
We now need to show that the search condition (\cref{def:searcher}) is satisfied; i.e.\ if there is \(\ty{(k,j)}{K \x J}\) such that \(p(k,j)\) then also \(p(k_0,\mathcal{E}_J(p_J(k_0)))\).
We use the search conditions of \(k_0\) and \(\mathcal{E}_J(p_J(k_0))\): 
\begin{enumerate}
\item If there is \(\ty{k'}{K}\) such that \(p_K(k') := p(k',\mathcal{E}_J(p_J(k')))\) then \(p_K(k_0) := p(k_0,\mathcal{E}_J(p_J(k_0)))\),
\item For any \(\ty{k^*}{K}\) if there is a \(\ty{j'}{J}\) such that \(p(k^*,j')\) then \(p(k^*,\mathcal{E}_J(p_J(k^*)))\).
\end{enumerate}

We have \(p(k,j)\) and so by (2) we have \(p(k,\mathcal{E}_J(p_J(k)))\), and by (1) we then have the conclusion.
\end{proof}

\subsection{Can we search infinite types?}
\label{sec:search-infinite-canwe}

Our intuition told us that all finite types are searchable, and likewise our intuition tells us that all infinite types are not searchable.
Indeed, searchability of the canonical countably infinite type \(\N\) is logically equivalent to the \emph{limited principle of omniscience} (LPO)~\cite{Troelstra}.

\begin{remark}
\label{remark:LPO}
Recall that LPO states that, given a binary sequence, we can decide that either all points of the sequence are \(0\) or there is an explicit index at which point the sequence is \(1\).
\begin{alignat*}{2}
\mathsf{LPO} &: \U,\\
\mathsf{LPO} &:= \Pitye{\alpha}{\seq \2}{\left( \Pity{n}{\N}{\alpha_n = 0} \right) + \left( \Sigmaty{n}{\N}{\alpha_n = 1} \right)}.
\end{alignat*}
\end{remark}

\begin{lemma}
\thesislit{3}{SearchableTypes}{\urlN-searchability-is-taboo}
The natural numbers are searchable if and only if LPO holds.
\end{lemma}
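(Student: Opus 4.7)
The plan is to prove the biconditional by giving the two directions separately, in each case using the given formulation of searchability on $\mathsf{decidable{\hy}predicate}(\N)$ and moving back and forth between decidable $\Omega$-valued predicates on $\N$ and binary sequences in $\seq \2$.

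For the forward direction, assume $\N$ is searchable with searcher $\mathcal{E}_\N$. Given an arbitrary $\ty{\alpha}{\seq \2}$, I would build the decidable predicate $\ty{p_\alpha}{\N \to \Omega}$ defined by $p_\alpha(n) := (\alpha_n = 1)$; this is a decidable predicate because $\2$ is discrete and identity types into discrete types are propositions (using \cref{lem:fin-discrete} applied to $\2 \simeq \F(2)$ and \cref{lem:fin-set}). Let $n_0 := \mathcal{E}_\N(p_\alpha)$ and decide $p_\alpha(n_0)$. If $\alpha_{n_0} = 1$, return the right disjunct of LPO with witness $n_0$. Otherwise, apply \cref{lem:searchable-none} to conclude that no $n$ satisfies $p_\alpha$, so $\neg (\alpha_n = 1)$ for every $\ty{n}{\N}$, which by discreteness of $\2$ yields $\alpha_n = 0$ pointwise; return the left disjunct.

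For the backward direction, assume LPO and fix a decidable predicate $\ty{(p, d)}{\mathsf{decidable{\hy}predicate}(\N)}$. Using $d$, define $\ty{\alpha_p}{\seq \2}$ by case analysis on $d(n) : p(n) + \neg p(n)$: set $(\alpha_p)_n := 1$ in the left case and $0$ in the right. Apply LPO to $\alpha_p$. In the right disjunct we obtain $\ty{n}{\N}$ with $(\alpha_p)_n = 1$; unfolding the definition of $\alpha_p$ at that coordinate yields a proof of $p(n)$, so I return $n$ as the searcher's answer and the search condition is immediate. In the left disjunct, $(\alpha_p)_n = 0$ for every $n$; unfolding $\alpha_p$ shows that $p(n)$ must be in the right branch of $d(n)$, so $\neg p(n)$ holds for every $n$; in this case I return an arbitrary element, say $0$, and the search condition (\cref{def:searchable}) is vacuously satisfied because its hypothesis $\Sigmaty{n}{\N}{p(n)}$ leads to a contradiction via $\0\mathsf{{\hy}elim}$.

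No step here is technically deep; the main thing to be careful about is the two-way bridge between $\2$-valued sequences and $\Omega$-valued decidable predicates, making sure that equalities $\alpha_n = 0$ and $\alpha_n = 1$ translate into constructive proofs of $p(n)$ or $\neg p(n)$ rather than double negations. Using the definition of $\alpha_p$ by case analysis on the decidability witness keeps this book-keeping direct, since pattern-matching on $d(n)$ makes the required $p(n)$ or $\neg p(n)$ proof immediately available at each index.
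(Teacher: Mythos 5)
Your proposal is correct and follows essentially the same route as the paper's proof: the same predicate \(p(n) := (\alpha_n = 1)\) together with \cref{lem:searchable-none} for the forward direction, and the same construction of \(\alpha_p\) by case analysis on the decidability witness \(d(n)\) for the backward direction. The only difference is presentational, in that you are slightly more explicit about translating \(\neg(\alpha_n = 1)\) into \(\alpha_n = 0\) and about recovering \(p(n)\) from \((\alpha_p)_n = 1\), which the paper leaves implicit.
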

\begin{proof}
We first prove that the searchability of \(\N\) implies LPO.
Given a binary sequence \(\ty{\alpha}{\seq \2}\), we define the following predicate on natural numbers, which is true when the sequence at that index is \(1\):
\begin{alignat*}{2}
p &: \N \to \Omega, \\
p &(n) := \alpha_n = 1.
\end{alignat*}

\noindent
Note that this is indeed subsingleton-valued as \(\2\) is a set (see \cref{lem:fin-set}), and is decidable as \(\2\) is discrete (see \cref{lem:fin-discrete}).
Now, using the searcher \(\ty{\mathcal{E}_\N}{\mathsf{decidable{\hy}predicate} \ \N \to \N}\), we obtain the element \(\mathcal{E}_\N(p)\) and check whether it satisfies the predicate.
If \(p(\mathcal{E}_\N(p))\), then the right-hand side of LPO holds because \(\alpha_{\mathcal{E}_\N(p)} = 1\). If \(\neg p(\mathcal{E}_\N(p))\), then the left hand side holds because, by \cref{lem:searchable-none}, nothing satisfies the predicate --- no element of \(\alpha\) is \(1\) and therefore every element must be \(0\)).

\ \newline
To prove the opposite direction, we assume that LPO holds and want to show that we can find an element that satisfies any given \(\ty{(p,d)}{\mathsf{decidable{\hy}predicate} \ \N}\) if such an element exists.
Given such a predicate, we define the following binary sequence that is \(1\) whenever the index satisfies the predicate and \(0\) otherwise:
\begin{alignat*}{2}
\alpha' &: \Pity{n}{\N}{\mathsf{decidable}(p(n)) \to \2}, \\
\alpha' &(n,\inl \ q) := \ 1,\\
\alpha' &(n,\inr \ z) := \ 0,
\\ \vspace{1em}
\alpha \ & : \seq \2, \\
\alpha_n & := \alpha'(n,d(n)).
\end{alignat*}

The proof follows by applying LPO to \(\alpha\). In the case where \(\alpha_n = 0\) for all \(\ty{n}{\N}\), then the predicate is never satisfied and our searcher can return any element.
In the case where there is some \(\ty{n}{\N}\) such that \(\alpha_n = 1\), then we also have a proof of \(p(n)\), and the searcher can return \(\N\).
In both cases, the search condition holds.
\end{proof}

Some infinite types, however, go against our intuitions and \emph{are} searchable.
Specifically, \MartinEscardo~proved that infinitary products preserve searchability; i.e. given a type family \(\ty{T}{\N \to \U}\) of searchable types, the type \(\ty{\Pi T}{\U}\) is itself searchable~\cite{Escardo08}.
This result is the \emph{Tychonoff theorem} for searchable types\footnote{This name comes from the relationship between the concepts of searchable types and compact spaces in (synthetic) topology.
In topology, Tychonoff's theorem states that arbitrary products preserve compactness.} and it allows a wide range of types, such as the aforementioned Cantor space \(\seq \2\) and those types we use in \cref{chap:exact-real-search} for representing compact intervals of real numbers, to be searched.
We therefore require a formulation of the Tychonoff theorem in our framework for constructive type theory. However, it is not the case that we can simply translate the original proof directly into our setting.

\Escardo's original infinite search algorithms were written in PCF, a language which has access to \emph{general recursion}.
The proof that this infinite search returns an answer is written in the Scott model of PCF, a setting in which all PCF-definable functions are automatically \emph{continuous}. A corollary of this states that any decidable predicate \(\ty{p}{\Pi T \to \Omega}\) is \emph{uniformly continuous}, which means that only a finite amount of information about a particular search candidate \(\ty{x}{\Pi T}\) is required to determine whether or not \(p(x)\) holds.
The combination of general recursion and the assumption that all functions are continuous means that the infinite search algorithm always returns an answer.

In our setting, we deliberately do not assume that all functions are continuous (in particular, this makes our mathematics compatible with classical mathematics) and we do not have access to general recursion (indeed, the recursive functions of MLTT are \emph{primitive recursive})~\cite{MLTTProgBook}.
Therefore, we are required to use a different approach in our formulation of the Tychonoff theorem. This different approach is that we rephrase the notion of searchability to incorporate an \emph{explicit} notion of uniform continuity. The primitive-recursive infinite search algorithms themselves are then defined using the information provided by this notion.

This results in an alternative proof of the Tychonoff theorem for searchable types, which we give in \cref{sec:tychonoff}, that does not use general recursion and \emph{explicitly} assumes that the searched predicate is uniformly continuous.
Of course, this requires us first to actually define an explicit notion of continuity within our framework, which we aim to keep as general as possible for definition on a wide class of types; for this purpose, we introduce \emph{closeness spaces} in the next section.

\section{Closeness Spaces}
\label{sec:cspace}

Continuity, and the stronger notion of uniform continuity, are properties of functions \(\ty{f}{X \to Y}\). Informally, continuity says that determining \(f(x)\) to a requested precision \(\varepsilon\) depends on determining \(x\) to a certain precision \(\delta\), the \emph{modulus of continuity}, which is constructed from \(\varepsilon\) and \(x\). A \emph{modulus of uniform continuity}, meanwhile, is constructed only from \(\varepsilon\).

Continuity can be defined in a variety of ways; we may choose a particular definition based on the types \(X\) and \(Y\), the form of topological information we have about those type, and/or the types of the precisions \(\varepsilon\) and \(\delta\).
In this section, we motivate and define within our framework a structure we call \emph{closeness spaces}, which exhibit a constructively economical definition of continuity and uniform continuity that we find convenient for our work.

\subsection{Motivation via metric spaces}

In analysis, one common definition of continuity is between \emph{metric spaces}, wherein the distance between objects can be measured by a real-valued binary function called a \emph{metric}~\cite{Metric,PrinciplesMA}.
Although we have not yet discussed what a type for real numbers looks like in our framework (this will be done later, in \cref{chap:reals}), the below definitions are for illustrative purposes only, and so the reader can assume the type of positive reals \(\R_{\geq 0}\) in the below satisfies the expected properties.

\begin{definition}
\label{def:mspace}
\lit
A \emph{metric space} is a type \(X\) equipped with a \emph{metric} \(d : X \to X \to \R_{\geq 0}\) such that,
\begin{enumerate}
\item \(d(x,y) = 0 \leftrightarrow x = y\),
\item \(d(x,y) = d(y,x)\),
\item \(d(x,z) \leq d(x,y) + d(y,z)\).
\end{enumerate}
\end{definition}

For functions \(\ty{f}{X \to Y}\) on metric spaces, continuity says that for any \(\ty{\varepsilon}{\R_{\geq 0}}\) there exists a \(\ty{\delta}{\R_{\geq 0}}\) such that all elements \(\ty{x,y}{X}\) that are \(\delta\)-close to each other (i.e.\ \(d_X(x,y) < \delta\)) will be mapped to elements that are \(\varepsilon\)-close to each other (i.e.\ \(d_Y(f(x),f(y)) < \varepsilon\)).
This notion of closeness can be described by the following family of reflexive and symmetric binary relations \(\ty{C}{\R_{\geq 0} \to X \to X \to \Omega}\) where, for all \(\ty{\varepsilon}{\R_{\geq 0}}\) and \(\ty{x,y}{X}\), we say that \(x\) and \(y\) are \(\varepsilon\)-close if \(C_\varepsilon(x,y)\).

\begin{definition}
\label{def:metr-closerelation}
\lit
Given a metric space \(X\), the family of \emph{closeness relations} is defined as follows,
\begin{alignat*}{2}
C \ &: \R_{\geq 0} \to X \to X \to \Omega, \\
C_\varepsilon &(x,y) := d(x,y) < \varepsilon.
\end{alignat*}
\end{definition}

\begin{lemma}
\label{lem:mrefsym}
\lit
Given a metric space \(X\) and distance \(\ty{\varepsilon}{\R_{\geq 0}}\), the closeness relation \(C_\varepsilon\) is reflexive and symmetric.
\end{lemma}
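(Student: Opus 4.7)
The plan is to prove the two properties separately, since each falls out of a single axiom of the metric from Definition 3.1.1 and neither requires the triangle inequality.

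For reflexivity, I must produce, for any \(\ty{x}{X}\), a witness of \(C_\varepsilon(x,x)\), i.e.\ of \(d(x,x) < \varepsilon\). Unfolding the definition of \(C_\varepsilon\) and rewriting by axiom (1) of the metric (which gives \(d(x,x) = 0\)), the goal reduces to \(0 < \varepsilon\). This holds under the standing assumption that \(\varepsilon\) is strictly positive; I would flag in passing that for \(\varepsilon = 0\) reflexivity genuinely fails under the chosen strict inequality, which is presumably one reason the author is about to abstract away from metrics and move to closeness spaces where the role of the precision parameter is handled more carefully.

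For symmetry, I need \(C_\varepsilon(x,y) \leftrightarrow C_\varepsilon(y,x)\), i.e.\ \(d(x,y) < \varepsilon \leftrightarrow d(y,x) < \varepsilon\). Both directions follow by transporting the hypothesis along axiom (2) of the metric, \(d(x,y) = d(y,x)\), using the transport operation from Section 2.2.7; no appeal to properties of \(<\) beyond substitutivity of equals is needed.

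There is no real obstacle here: this is a bookkeeping lemma that records which metric axioms correspond to which structural properties of the derived family of closeness relations. What the proof does usefully highlight is what is \emph{not} recorded — there is no transitivity for the family \(C_\varepsilon\) in general, since axiom (3) combined with \(C_\varepsilon(x,y)\) and \(C_\varepsilon(y,z)\) only yields \(d(x,z) < 2\varepsilon\) rather than \(d(x,z) < \varepsilon\). I expect this failure, together with the \(\varepsilon = 0\) edge case above, to be part of the motivation for replacing the real-valued precision parameter with the ultrametric-style indexing used in the closeness-space definition to follow.
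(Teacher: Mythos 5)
Your proof is correct and takes essentially the same route as the paper's: reflexivity from axiom (1) of the metric and symmetry from axiom (2), with no appeal to the triangle inequality. Your flag about the \(\varepsilon = 0\) edge case is a fair catch — the paper's own proof quietly reduces reflexivity to \(0 \leq \varepsilon\) even though the closeness relation is defined with strict inequality \(d(x,y) < \varepsilon\), so it glosses over exactly the point you identify.
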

\begin{proof}
Reflexivity follows from \cref{def:mspace}.1; i.e.\ \(C_\varepsilon(x,x) := \left( 0 \leq \varepsilon \right)\), which is immediately satisfied for all \(\ty{x}{X}\).
Symmetry follows from \cref{def:mspace}.2; i.e.\ \(C_\varepsilon(x,y) := \left( d(x,y) \leq \varepsilon \right) = \left( d(y,x) \leq \varepsilon \right) =: C_\varepsilon(y,x)\) for all \(\ty{x,y}{X}\).
\end{proof}

By strengthening the triangle inequality property (\cref{def:mspace}.3) of metric spaces we instead define ultrametric spaces, for which closeness relations are additionally transitive and, thus, are equivalence relations.

\begin{definition}
\label{def:umspace}
\lit
An \emph{ultrametric space} is a type \(X\) equipped with an \emph{ultrametric} \(d : X \to X \to \R_{\geq 0}\) such that,
\begin{enumerate}[(i)]
\item \(d(x,y) = 0 \leftrightarrow x = y\),
\item \(d(x,y) = d(y,x)\),
\item \(d(x,z) \leq \mathsf{max}(d(x,y),d(y,z))\).
\end{enumerate}
\end{definition}

\begin{lemma}
\lit
Every ultrametric space is a metric space.
\end{lemma}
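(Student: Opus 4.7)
The plan is straightforward, because \cref{def:mspace} and \cref{def:umspace} agree on conditions (i) and (ii); the only thing to verify is that the ultrametric inequality (iii) implies the ordinary triangle inequality. So given any ultrametric space \((X,d)\), I would reuse the metric \(d\), keep the proofs of conditions (1) and (2) verbatim from the ultrametric structure, and concentrate all the work on producing a proof term of type \(d(x,z) \leq d(x,y) + d(y,z)\) for arbitrary \(\ty{x,y,z}{X}\).

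For that one step, I would rely on the elementary fact that for non-negative reals \(a,b\) one has \(\max(a,b) \leq a + b\): indeed, \(a = a + 0 \leq a + b\) and \(b = 0 + b \leq a + b\), and taking the maximum on the left-hand side preserves the inequality. Applying this with \(a := d(x,y)\) and \(b := d(y,z)\), both of which are non-negative since \(d\) takes values in \(\R_{\geq 0}\), gives
\[ \max(d(x,y), d(y,z)) \leq d(x,y) + d(y,z). \]
Chaining this with the ultrametric inequality \(d(x,z) \leq \max(d(x,y), d(y,z))\) via transitivity of \(\leq\) on \(\R_{\geq 0}\) yields the required triangle inequality.

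The proof is essentially trivial and has no real obstacles. The only minor subtlety is foundational bookkeeping: strictly speaking this relies on the expected ordered-semiring structure on \(\R_{\geq 0}\) (that \(0 \leq b\) implies \(a \leq a + b\), and transitivity of \(\leq\)), which is taken for granted in this illustrative section since \(\R_{\geq 0}\) has not yet been formally constructed. Once a concrete type of non-negative reals is in hand later in \cref{chap:reals}, the two facts used — non-negativity giving \(a, b \leq a+b\) and transitivity of \(\leq\) — are immediate, so no new axioms beyond the ambient ordering are needed.
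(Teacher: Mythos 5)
Your proposal is correct and matches the paper's proof exactly: both observe that conditions (i) and (ii) coincide and reduce (iii) to the inequality \(\mathsf{max}(d(x,y),d(y,z)) \leq d(x,y) + d(y,z)\), which the paper simply asserts as clear and you spell out via non-negativity. No further comment is needed.
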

\begin{proof}
The first two conditions are the same --- the third condition of ultrametric spaces implies that of metric spaces because clearly \(\mathsf{max}(d(x,y),d(y,z)) \leq d(x,y) + d(y,z)\).
\end{proof}

\begin{lemma}
\label{lem:umtrans}
\lit
Given an ultrametric space \(X\) and distance \(\ty{\varepsilon}{\R_{\geq 0}}\), the closeness relation \(C_\varepsilon\) is transitive.
\end{lemma}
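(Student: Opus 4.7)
The plan is to unfold the definition of $C_\varepsilon$ and then apply the strong (ultrametric) triangle inequality. Concretely, suppose we are given $\ty{x,y,z}{X}$ together with proofs of $C_\varepsilon(x,y)$ and $C_\varepsilon(y,z)$; by \cref{def:metr-closerelation} these unfold to $d(x,y) < \varepsilon$ and $d(y,z) < \varepsilon$, and the goal $C_\varepsilon(x,z)$ unfolds to $d(x,z) < \varepsilon$.

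The key step is to apply clause (iii) of \cref{def:umspace} to obtain the inequality $d(x,z) \leq \mathsf{max}(d(x,y),d(y,z))$. Then, since strict inequality is preserved by $\mathsf{max}$ in the sense that two strict upper bounds on the arguments give a strict upper bound on the maximum, from $d(x,y) < \varepsilon$ and $d(y,z) < \varepsilon$ we conclude $\mathsf{max}(d(x,y),d(y,z)) < \varepsilon$. Chaining with the triangle inequality yields $d(x,z) < \varepsilon$, which is exactly $C_\varepsilon(x,z)$.

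The only mild subtlety — and the spot I would pause at — is that the proof relies on the lemma ``$a < \varepsilon$ and $b < \varepsilon$ imply $\mathsf{max}(a,b) < \varepsilon$'' for the (yet-to-be-introduced) type of non-negative reals. This is a routine case split on which of $a,b$ attains the maximum, but it assumes enough order-theoretic structure on $\R_{\geq 0}$ to carry it out. Since the present \cref{sec:cspace} is using metric/ultrametric spaces purely as intuitive motivation for the forthcoming notion of a closeness space, I would simply invoke this as a standard fact about $\R_{\geq 0}$ rather than prove it inline. Thus the whole argument fits in a few lines and contains no real obstacle; its role is motivational, preparing the reader for the transitivity clause that will appear in the upcoming axiomatisation of closeness spaces.
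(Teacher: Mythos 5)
Your proof is correct and follows essentially the same route as the paper: unfold \(C_\varepsilon\), apply the strong triangle inequality of \cref{def:umspace}, observe that \(\mathsf{max}(d(x,y),d(y,z))\) is bounded by \(\varepsilon\), and chain by transitivity of the order on \(\R_{\geq 0}\). The only (cosmetic) difference is that you work with the strict inequality \(<\) as literally written in \cref{def:metr-closerelation}, whereas the paper's proof silently unfolds the relation with \(\leq\); both versions of the argument go through unchanged.
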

\begin{proof}
Given \(C_\varepsilon(x,y) := d(x,y) \leq \varepsilon\) and \(C_\varepsilon(y,z) := d(y,z) \leq \varepsilon\) we have that \(\mathsf{max}(d(x,y),d(y,z)) \leq \varepsilon\).
By \cref{def:umspace}.(ii), \(d(x,z) \leq \mathsf{max}(d(x,y),d(y,z))\) and therefore by transitivity of the order \(d(x,z) \leq \varepsilon =: C_\varepsilon(x,z)\).
\end{proof}

\begin{corollary}
\label{cor:umeq}
\lit
Given an ultrametric space \(X\) and distance \(\ty{\varepsilon}{\R_{\geq 0}}\), the closeness relation \(C_\varepsilon\) is an equivalence relation.
\end{corollary}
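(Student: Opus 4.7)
The plan is to assemble the corollary directly from the three closeness-relation lemmas that have just been established, since an equivalence relation is by definition reflexive, symmetric, and transitive. Because every ultrametric space is already a metric space (by the preceding lemma), any property of closeness relations proved at the metric level transfers automatically to the ultrametric setting.

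Concretely, I would proceed in three short steps. First, I would invoke Lemma \ref{lem:mrefsym} (reflexivity): since \(X\) with its ultrametric is in particular a metric space, the closeness relation \(C_\varepsilon\) on \(X\) is reflexive, i.e.\ \(C_\varepsilon(x,x)\) holds for every \(\ty{x}{X}\). Second, I would invoke the same Lemma \ref{lem:mrefsym} to obtain symmetry of \(C_\varepsilon\). Third, I would invoke Lemma \ref{lem:umtrans} directly, which gives transitivity and crucially is the only step that genuinely uses the strong (ultrametric) triangle inequality from \cref{def:umspace}.(iii) — this is where the strengthening from metric to ultrametric is paying its dividend.

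There is no real obstacle here: each of the three clauses has already been discharged in the surrounding lemmas, and the corollary is simply the conjunction of them, so the proof is a one-line appeal such as ``By \cref{lem:mrefsym} and \cref{lem:umtrans}.'' If I wanted to be slightly more explicit, I would spell out the bundling into an equivalence-relation record, but no new mathematical content is required. The only thing worth noting in passing is that this corollary is what makes the ``closeness'' terminology sensible and what motivates the forthcoming definition of closeness spaces in \cref{sec:cspace}, since it is the equivalence-relation structure (and not mere reflexivity and symmetry) that will later be axiomatised directly without reference to a real-valued metric.
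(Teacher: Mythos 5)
Your proof is correct and matches the paper's own argument exactly: the paper also discharges the corollary by citing \cref{lem:mrefsym} for reflexivity and symmetry and \cref{lem:umtrans} for transitivity. No further comment is needed.
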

\begin{proof}
By \cref{lem:mrefsym,lem:umtrans}.
\end{proof}

We now use these closeness relations to define continuity and uniform continuity on (ultra)metric spaces.

\begin{definition}
\label{def:metr-cont}
\lit
Given (ultra)metric spaces \(X\) and \(Y\), a function \(\ty{f}{X \to Y}\) is \emph{continuous} if for all \(\ty{x_1}{X}\) and \(\ty{\varepsilon}{\R_{\geq 0}}\) there is some \(\ty{\delta}{\R_{\geq 0}}\) such that elements that are \(\delta\)-close to \(x_1\) map to elements that are \(\varepsilon\)-close to \(f(x_1)\):
\begin{multline*}
\mathsf{metric{\hy}f{\hy}continuous}(f) := \\
\Pitye{\varepsilon}{\R_{\geq 0}}{\Pitye{x_1}{X}{\Sigmatye{\delta}{\R_{\geq 0}}{\Pity{x_2}{X}{C_\delta(x_1,x_2) \to C_\varepsilon(f(x_1),f(x_2))}}}}.
\end{multline*}
\end{definition}

\begin{definition}
\label{def:metr-ucont}
\lit
Given (ultra)metric spaces \(X\) and \(Y\), a function \(\ty{f}{X \to Y}\) is \emph{uniformly continuous} if for all \(\ty{\varepsilon}{\R_{\geq 0}}\) there is some \(\ty{\delta}{\R_{\geq 0}}\) such that elements that are \(\delta\)-close map to elements that are \(\varepsilon\)-close:
\begin{multline*}
\mathsf{metric{\hy}f{\hy}ucontinuous}(f) := \\
\Pitye{\varepsilon}{\R_{\geq 0}}{\Sigmatye{\delta}{\R_{\geq 0}}{\Pity{x_1,x_2}{X}{C_\delta(x_1,x_2) \to C_\varepsilon(f(x_1),f(x_2))}}}.
\end{multline*}
\end{definition}

\begin{lemma}
\lit
Every uniformly continuous function between metric spaces is continuous.
\end{lemma}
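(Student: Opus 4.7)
The plan is to simply unfold both definitions and observe that uniform continuity provides a stronger quantifier structure than continuity: in \cref{def:metr-ucont} the modulus $\delta$ depends only on $\varepsilon$, whereas in \cref{def:metr-cont} it may depend on both $\varepsilon$ and the point $x_1$. So a modulus that works uniformly will trivially work pointwise.

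First I would fix an arbitrary $\ty{\varepsilon}{\R_{\geq 0}}$ and an arbitrary $\ty{x_1}{X}$, with the aim of producing a $\ty{\delta}{\R_{\geq 0}}$ witnessing continuity at $x_1$ for precision $\varepsilon$. Then I would invoke the hypothesis of uniform continuity on the same $\varepsilon$, yielding some $\ty{\delta}{\R_{\geq 0}}$ together with a proof $\ty{\phi}{\pity{y_1,y_2}{X}{C_\delta(y_1,y_2) \to C_\varepsilon(f(y_1),f(y_2))}}$. I would take this $\delta$ as the witness for continuity; then, given any $\ty{x_2}{X}$, the required function $C_\delta(x_1,x_2) \to C_\varepsilon(f(x_1),f(x_2))$ is obtained by instantiating $\phi$ at $y_1 := x_1$ and $y_2 := x_2$.

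There is essentially no obstacle here, since the proof is a direct rearrangement of quantifiers: a $\Sigma$ before a $\Pi$ specialises to a $\Pi$ before a $\Sigma$ by choosing the same witness for every instance. No property of the closeness relations $C_\varepsilon$ (nor of the underlying (ultra)metric) is invoked beyond what the two definitions already refer to. Formally, the proof term is the function $\lambda(u,\varepsilon,x_1).\,(\mathsf{pr}_1(u(\varepsilon)),\,\lambda x_2.\mathsf{pr}_2(u(\varepsilon))(x_1,x_2))$, where $u$ denotes the supplied proof of uniform continuity.
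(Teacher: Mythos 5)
Your proposal is correct and matches the paper's argument exactly: the uniform modulus $\delta$ obtained for a given $\varepsilon$ serves as the modulus of continuity at every point $x_1$, so the proof is just a specialisation of the quantifier structure. Nothing further is needed.
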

\begin{proof}
If \(\ty{\delta}{\R_{\geq 0}}\) is a modulus of uniform continuity for \(\ty{f}{X \to Y}\) (i.e. it depends only on \(\ty{\varepsilon}{\R_{\geq 0}}\) and not on any point \(\ty{x_1}{X}\)) then it is a modulus of continuity for any \(x_1\).
\end{proof}

\noindent
For the special case where the domain of the function is a truth value --- and hence the function is a truth-valued predicate --- the only notion of closeness we care about in the domain is logical equivalence.
Therefore, we specialise uniform continuity for predicates.

\begin{definition}
\label{def:metr-ucont-pred}
\lit
A predicate \(\ty{p}{X \to \Omega}\) on an (ultra)metric space \(X\) is \emph{uniformly continuous} if there is some \(\ty{\delta}{\R_{\geq 0}}\) such that sequences that are \(\delta\)-close give the same answer to the predicate:
\[ \mathsf{metric{\hy}p{\hy}ucontinuous}(p) := \Sigmatye{\delta}{\R_{\geq 0}}{\Pity{x_1,x_2}{X}{C_\delta(x_1,x_2) \to p(x_1) \Leftrightarrow p(x_2)}} .\]
\end{definition}

\subsection{Extended naturals and definition of closeness spaces}

It will be convenient for our purposes to avoid the use of real numbers at this point and formulate, primarily for the purposes of continuity, an alternative notion of ultrametric spaces which we call \emph{closeness spaces}.
In a closeness space, the ultrametric is replace by a different binary function called a \emph{closeness function}.
The idea is that while (ultra)metrics measure the distance between the given objects, closeness functions measure the closeness between them: two identical elements of a type have closeness infinity.
Thus, closeness functions give values on the type of extended natural numbers \(\Ni\) defined below.

We start this subsection by recalling the \textsc{TypeTopology} definition of extended naturals.

\begin{definition}
\typetop{CoNaturals}{GenericConvergentSequence}{\urlN\urlinfty}
The \emph{extended natural numbers} type \(\Ni\) is the type of decreasing binary sequences,
\[ \Ni  := \sigmaty{\alpha}{\seq \2}{\mathsf{is{\hy}decreasing} \ \alpha}, \]
where
\(\mathsf{is{\hy}decreasing}(\alpha) := \prod_{(i : \mathbb N)} (\alpha_i \geq \alpha_{i + 1})\).
\end{definition}

We can lift any natural \(\ty{n}{\N}\) by defining \(\ty{{\underline n}}{\Ni}\) as the sequence of \(n\)-many \(1\)s followed by infinitely-many \(0\)s. \(\ty{\infty}{\Ni}\) is therefore defined as the sequence of infinitely-many \(1\)s.
Given this, the partial order on the extended naturals --- which extends that on the naturals --- is given below, along with its minimum and maximum.

\begin{definition}
\label{def:Ni-order}
\typetop{CoNaturals}{GenericConvergentSequence}{_\urlpreceq\urlN\urlinfty_}
The partial order on the extended naturals is defined as follows, \[u \preceq v := \Pity{n}{\N}{u_n = 1 \to v_n = 1} .\]
\end{definition}

\begin{lemma}
\label{lem:Ni-is-preorder}
\typetop{Ordinals}{Notions}{\urlpreceq-refl}
\typetop{Ordinals}{Notions}{\urlpreceq-trans}
The partial order on extended naturals \(\ty{\preceq}{\Ni \to \Ni \to \Omega}\) is indeed a preorder (see the later \cref{def:preorder}); i.e.\ it is reflexive and transitive.
\end{lemma}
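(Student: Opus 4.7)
The plan is straightforward: unfold the definition $u \preceq v := \pity{n}{\N}{u_n = 1 \to v_n = 1}$ and observe that the preorder laws for $\preceq$ reduce pointwise to the reflexivity and transitivity of implication. Since $u \preceq v$ is a $\Pi$-type of implications between propositions (namely $u_n = 1 \to v_n = 1$ for each $n$), a proof of such a statement is simply a family of functions indexed by $n : \N$, and the required families can be built directly from the identity function and function composition respectively. No topological or order-theoretic content of $\Ni$ is needed; in particular, the decreasing property of the underlying binary sequences plays no role, so the argument works uniformly for arbitrary $\ty{u,v,w}{\seq \2}$.

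For reflexivity, given any $\ty{u}{\Ni}$, I would exhibit the term $\lambda n. \mathsf{id}_{u_n = 1}$: at each index $n$, this is the identity function sending a proof of $u_n = 1$ to itself. For transitivity, given $\ty{p}{u \preceq v}$ and $\ty{q}{v \preceq w}$, I would exhibit the family $\lambda n. q(n) \circ p(n)$, which at each $n$ composes the implication $u_n = 1 \to v_n = 1$ supplied by $p$ with the implication $v_n = 1 \to w_n = 1$ supplied by $q$.

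There is no real obstacle here; both proofs are one-liners in \textsc{Agda} once definitions are unfolded, with reflexivity and transitivity of $\preceq$ inherited directly from the corresponding properties of the function-type former. The only subtlety worth flagging is purely bookkeeping: the underlying proof terms live in $\seq \2$ rather than $\Ni$, so one applies the first projection $\mathsf{pr}_1$ to extract the sequences $u,v,w$ before evaluating at $n$, but this is transparent in the informal presentation.
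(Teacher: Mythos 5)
Your proof is correct and matches the paper's argument exactly: reflexivity comes from the identity implication $u_n = 1 \to u_n = 1$ at each index, and transitivity from composing the two implications pointwise. The extra observations (that the decreasing property is unused and that one projects out the underlying sequence) are accurate but not needed beyond what the paper already does.
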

\begin{proof}
Both are straightforward: for reflexivity, clearly for any \(\ty{n}{\N}\) if \(u_n = 1\) then \(u_n = 1\); for transitivity, if we have \(u_n = 1 \to v_n = 1\) and also \(v_n = 1 \to w_n = 1\) then \(u_n = 1 \to w_n = 1\).
\end{proof}

\begin{lemma}
\label{lem:Ni-preorder-extend}
Given \(\ty{n,m}{\N}\) such that \(n \leq m\), it is the case that \(\underline n \preceq \underline m\).
\end{lemma}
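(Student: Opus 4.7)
The plan is to unfold the definition of the partial order given in Definition 2.X: to prove $\underline{n} \preceq \underline{m}$ it suffices to show that for every $i : \mathbb{N}$, if $\underline{n}_i = 1$ then $\underline{m}_i = 1$. So fix an arbitrary $i : \mathbb{N}$ and assume $\underline{n}_i = 1$.

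Next I would use the characterisation of the embedding $\underline{(-)} : \mathbb{N} \to \mathbb{N}_\infty$: by construction $\underline{n}$ is the sequence of $n$-many $1$s followed by infinitely many $0$s, so $\underline{n}_i = 1$ is (logically, and in fact definitionally after a case split) equivalent to $i < n$. I would either prove this characterisation as a short auxiliary lemma by induction on $i$ (simultaneously recursing on $n$), or simply invoke it directly if it already appears upstream in \textsc{TypeTopology}.

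From $i < n$ together with the hypothesis $n \leq m$, transitivity of the order on $\mathbb{N}$ delivers $i < m$, and applying the characterisation in the opposite direction gives $\underline{m}_i = 1$, as required.

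The whole argument is a straightforward unfolding with no real obstacle; the only mildly fiddly step is the equivalence $\underline{n}_i = 1 \Leftrightarrow i < n$, which one proves by induction (in the base case $n = 0$ the antecedent is $0 = 1$ and we appeal to $\mathbf{0}$-elimination, and in the step case $n = n' + 1$ we split on whether $i = 0$ or $i = i' + 1$ and recurse). This auxiliary lemma is worth stating explicitly, as it will likely be reused when extending \cref{lem:Ni-is-preorder} or when relating $\underline{n}$ to $\infty$.
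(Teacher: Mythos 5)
Your proof is correct, and it rests on the same essential fact as the paper's: that $\underline{n}_i = 1$ holds precisely when $i < n$. The packaging differs slightly. You work pointwise, making the characterisation $\underline{n}_i = 1 \Leftrightarrow i < n$ an explicit auxiliary lemma proved by induction, and then close the argument with $i < n \leq m$. The paper instead routes through the prefix-equality machinery it has already set up: it observes $\underline{n} \sim^n \mathsf{repeat}\ 1$ and $\underline{m} \sim^m \mathsf{repeat}\ 1$, uses $n \leq m$ to restrict the latter to an $n$-prefix, and concludes $\underline{n} \sim^n \underline{m}$ by transitivity of $\sim^n$ (\cref{lem:sim-eq}) — though its final step ("at every point where $\underline{n}_i = 1$, also $\underline{m}_i = 1$") still tacitly uses that $\underline{n}$ is $0$ from position $n$ onwards, i.e.\ the very characterisation you prove. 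Your version is the more self-contained and arguably more honest about that step; the paper's buys brevity by reusing established lemmas about $\sim^n$. Either is acceptable, and your auxiliary lemma is indeed the kind of fact that gets reused (e.g.\ in \cref{lem:preceq-left-decidable}).
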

\begin{proof}
Recall that \(\ty{\underline n, \underline m}{\Ni}\) are the sequences of \(n\)-many 1s and \(m\)-many 1s respectively; i.e.\ \(\underline n \sim^n 1\) and \(\underline m \sim^m 1\). Because \(n \leq m\), this means we also have \(\underline m \sim^n 1\); thus, by transitivity of \(\sim^n\) (\cref{lem:sim-eq}) we have \(\underline n \sim^n \underline m\), meaning that at every point \(\ty{i}{\N}\) where \({\underline n}_i = 1\) then also \({\underline m}_i = 1\).
\end{proof}

\begin{lemma}
\label{lem:preceq-bounds}
\typetop{CoNaturals}{GenericConvergentSequence}{\urlinfty-largest}
\typetop{CoNaturals}{GenericConvergentSequence}{Zero-smallest}
Given any \(\ty{u}{\Ni}\), we have that \({\underline 0} \preceq u\) and \(u \preceq \infty\).
\end{lemma}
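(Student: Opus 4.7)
The plan is to unfold the definition of $\preceq$ on both sides and observe that each inequality follows by pure definitional calculation together with one trivial fact about $\2$, with no inductive or coinductive argument needed. Recall that $\underline 0$ is, by definition, the constant $0$ sequence and $\infty$ is the constant $1$ sequence, while $u \preceq v$ is just $\Pity{n}{\N}{u_n = 1 \to v_n = 1}$. So each direction reduces to checking a family of implications pointwise in $n$.

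For $\underline 0 \preceq u$, I would fix an arbitrary $\ty{n}{\N}$ and a hypothesis $\ty{h}{\underline 0_n = 1}$. Since $\underline 0_n$ reduces definitionally to $0$, the hypothesis $h$ has type $0 = 1$, which is false in $\2$: we obtain an element of $\0$ by applying the standard inequality $0 \neq 1$ (which holds because $\2$ is discrete, as recorded in \cref{lem:fin-discrete}) to $h$, and then conclude $u_n = 1$ by $\0\text{-elim}$. The hypothesis that $u$ is decreasing plays no role here; the implication is vacuously true at every index.

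For $u \preceq \infty$, I would again fix $\ty{n}{\N}$ and an irrelevant hypothesis $u_n = 1$, and simply return $\mathsf{refl}$ at type $\infty_n = 1$, since $\infty_n$ reduces definitionally to $1$. Again no use is made of decreasingness of $u$. The only subtlety worth flagging is that in both cases one has to verify that the underlying $\mathsf{pr}_1$ projections of $\underline 0$ and $\infty$ really do compute to the constant sequences as expected in \textsc{Agda}; this is handled by the definitions of $\underline 0$ and $\infty$ in \textsc{TypeTopology} and should reduce automatically, so there is no genuine obstacle — the whole proof is essentially two one-liners.
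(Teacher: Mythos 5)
Your proof is correct and takes essentially the same route as the paper: the first inequality holds because the antecedent $\underline 0_n = 1$ is always false (so the implication is vacuous), and the second because the consequent $\infty_n = 1$ is always true. The extra detail you give about discharging $0 = 1$ via $0 \neq 1$ and $\0$-elim is just the explicit type-theoretic rendering of the paper's remark that "$\underline 0$ is never $1$".
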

\begin{proof}
For the former, \(\underline 0\) is never \(1\) and so, for every \(\ty{n}{\N}\), the antecedent of the implication \({\underline 0}_n = 1 \to u_n = 1\) is always empty, making the overall implication always true.
For the latter, \(\infty\) is always \(1\) and so the subsequent of \(u_n = 1 \to \infty_n = 1\) is always true, making the overall implication always true.
\end{proof}

\noindent
In general, the partial order on the extended naturals is not decidable (i.e.\ we cannot decide whether or not \(u \preceq v\) for any \(\ty{u,v}{\Ni}\)), but it is decidable when either side is a natural number. We give the left-hand side proof of this.

\begin{lemma}
\label{lem:preceq-left-decidable}
\thesislit{3}{ClosenessSpaces}{\urlpreceq-left-decidable}
Given any \(\ty{n}{\N}\) and \(\ty{v}{\Ni}\), it is the case that \({\underline n} \preceq v\) is decidable.
\end{lemma}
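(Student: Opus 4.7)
The plan is to reduce the question to a finite check and then proceed by induction on $n$. The key observation is that $\underline{n}_i = 1$ holds precisely when $i < n$, so the proposition $\underline{n} \preceq v$ unfolds to the finite conjunction $v_0 = 1 \wedge v_1 = 1 \wedge \cdots \wedge v_{n-1} = 1$. Each individual test $v_i = 1$ is decidable because $\2$ is discrete (\cref{lem:fin-discrete}), so we only need a structural induction to combine finitely many decidable tests.

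In the base case $n := 0$, there is no $i$ with $\underline{0}_i = 1$, so $\underline{0} \preceq v$ holds trivially, matching the fact that $\underline{0}$ is the smallest element of $\Ni$ by \cref{lem:preceq-bounds}; we return $\inl$ of this proof. For the inductive step, assume $\underline{n'} \preceq v$ is decidable. We first decide $\underline{n'} \preceq v$ by the inductive hypothesis, and then decide $v_{n'} = 1$ using discreteness of $\2$. If both hold, we build a witness of $\underline{n'+1} \preceq v$ as follows: given $\ty{i}{\N}$ with $\underline{n'+1}_i = 1$, we have $i \leq n'$, and either $i < n'$ (in which case $\underline{n'}_i = 1$ and the inductive witness gives $v_i = 1$) or $i = n'$ (in which case we use the hypothesis $v_{n'} = 1$ directly).

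If either of the two decisions is negative, we need to derive a contradiction from any hypothetical $\ty{p}{\underline{n'+1} \preceq v}$. In the case $\ty{z}{\neg(\underline{n'} \preceq v)}$, we first observe that $\underline{n'} \preceq \underline{n'+1}$ (an instance of \cref{lem:Ni-preorder-extend}) and, together with $p$, transitivity of $\preceq$ (\cref{lem:Ni-is-preorder}) yields $\underline{n'} \preceq v$, contradicting $z$. In the case $\ty{z}{\neg(v_{n'} = 1)}$, we note that $\underline{n'+1}_{n'} = 1$ and apply $p$ to conclude $v_{n'} = 1$, contradicting $z$.

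The only slightly delicate point is the bookkeeping of which index of the sequence $\underline{n'+1}$ is being tested; everything else is a routine combination of decidability of $\sim^n$ (\cref{lem:sim-decidable}) and the basic order lemmas already assembled above. I do not anticipate any genuine obstacle: the statement is essentially the remark that a universal quantifier over a finite initial segment of a sequence with decidable entries is itself decidable, once one recognises $\underline{n}$ as having only finitely many $1$s.
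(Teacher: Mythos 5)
Your proof is correct, but it takes a genuinely different route from the paper's. You treat $\underline{n}\preceq v$ as a finite conjunction $v_0=1\wedge\cdots\wedge v_{n-1}=1$ and decide it by a true induction on $n$: the inductive hypothesis decides the first $n'$ conjuncts, discreteness of $\2$ decides the last one, and you combine the two (your handling of the negative cases via $\underline{n'}\preceq\underline{n'+1}$, transitivity, and the direct application of $p$ at index $n'$ is all sound). The paper instead exploits the fact that $v$ is a \emph{decreasing} sequence: since $v_i=0$ for some $i\leq n'$ forces $v_{n'}=0$, the entire question reduces to the single decidable test $v_{n'}=1$, and the ``induction'' on $n$ is really just a case split between zero and successor with no use of an inductive hypothesis. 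Your argument is more elementary and more general --- it would decide $\underline{n}\preceq v$ for an arbitrary binary sequence $v$, not just an element of $\Ni$ --- at the cost of performing $n$ bit-tests where the paper performs one. One cosmetic remark: your closing appeal to the decidability of $\sim^n$ (\cref{lem:sim-decidable}) is not actually used anywhere in your argument; discreteness of $\2$ is all you need.
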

\begin{proof}
We proceed by induction on \(\ty{n}{\N}\). When \(n := 0\), then the result follows by \cref{lem:preceq-bounds}.

When \(n := \suc{n'}\), we want to check that for every \(\ty{i}{\N}\) such that \(\underline{\suc{n'}}_i = 1\) then also \(v_i = 1\).
It suffices to simply check whether \(v_{n'} = 1\), because we know \(\underline{\suc{n'}}_i = 1\) for all \(i \leq n'\) (by the decreasing property), and so the value of \(v_i\) only matters when \(i \leq n'\); i.e.\ if it is \(0\) at any of these points then \(\neg (\underline{\suc{n'}} \preceq v)\). By the decreasing property, if \(v_i = 0\) at any point \(i \leq n'\), then \(v_{n'} = 0\).

We proceed by checking whether \(v_{n'} = 1\) (which is decidable by \cref{lem:fin-discrete}).
If \(v_{n'} = 1\) then \(\underline{\suc{n'}} \preceq v\). On the other hand, if \(\neg(v_{n'} = 1)\) then \(v_{n'} = 0\) and therefore \(\neg (\underline{\suc{n'}} \preceq v)\).
\end{proof}

We can also define the minimum of two extended naturals as the extension of the minimum of two binary digits.

\begin{definition}
\typetop{CoNaturals}{GenericConvergentSequence}{min}
The \emph{minimum} of two extended naturals \(\mathsf{min}(u,v):\Ni\) is given via
\( \mathsf{min} := \lambda (\ty{n}{\N}).\mathsf{min}(u_n,v_n)\), which is clearly decreasing as its arguments are.
\end{definition}

We can now define closeness spaces which, by comparing the below to \cref{def:umspace}, the reader can see are a kind of dual of ultrametric spaces.

\begin{definition}
\label{def:cspace}
\thesislit{3}{ClosenessSpaces}{ClosenessSpace}
A \emph{closeness space} is a type \(X\) equipped with a \emph{closeness function} \(\ty{c}{\closeness X}\) such that,
\begin{enumerate}
\item \(c(x,y) = \infty \leftrightarrow {x = y}\),
\item \(c(x,y) = c(y,x)\),
\item \(\mathsf{min}(c(x,y),c(y,z)) \preceq c(x,z)\).
\end{enumerate}
\end{definition}

\begin{remark}[Connection between closeness and metric spaces]
\label{remark:closeness-metric}
Every closeness space \(X\) is informally an ultrametric space (and, hence, a metric space): using the closeness function \(\ty{c}{\closeness X}\), one informally defines the ultrametric \(\ty{d}{X \to X \to \R_{\geq 0}}\) as \(d(x,y) := 2^{-c(x,y)}\), with the usual convention that \(2^{-\infty} := 0\).
By this ultrametric, the distance between objects decreases towards zero as their closeness increases towards infinity.
\end{remark}

\subsection{Closeness relations and continuity}

The reformulated definitions of closeness relations (the reader can compare to \cref{def:metr-closerelation} on metric spaces) on closeness spaces are given below.

\begin{definition}
\mbox{}
\label{def:C}
\thesislit{3}{ClosenessSpaces}{C}\textsuperscript{fe}
Given a closeness space \(X\), the family of \emph{closeness relations} is defined as follows,
\begin{alignat*}{2}
C \ &: \N \to X \to X \to \Omega, \\
C_\varepsilon &(x,y) := {\underline \varepsilon} \preceq c(x,y).
\end{alignat*}
\end{definition}

\noindent
As before, for all \(\ty{\varepsilon}{\N}\) and \(\ty{x,y}{X}\), we say that \(x\) and \(y\) are \(\varepsilon\)-close if \(C_\varepsilon(x,y)\).
\noindent
We borrow further terminology from topology and say that if \(x\) and \(y\) are \(\varepsilon\)-close then they are in the same \emph{\(\varepsilon\)-neighbourhood}.

\begin{remark}
\label{remark:0-close}
Every element of a closeness space is \(0\)-close to every other element by \cref{lem:preceq-bounds}.
\end{remark}

Because closeness spaces are ultrametric spaces, closeness relations on them are equivalence relations.

\begin{lemma}
\label{lem:C-eq}
\thesislit{3}{ClosenessSpaces}{C-refl}
\thesislit{3}{ClosenessSpaces}{C-sym}
\thesislit{3}{ClosenessSpaces}{C-trans}
Given a closeness space \(X\) and precision \(\ty{\varepsilon}{\N}\), the closeness relation \(C_\varepsilon\) is an equivalence relation (reflexive, symmetric and transitive).
\end{lemma}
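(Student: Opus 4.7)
The plan is to verify each of the three equivalence properties in turn, in each case unfolding the definition $C_\varepsilon(x,y) := \underline{\varepsilon} \preceq c(x,y)$ and reducing the goal to a statement about the closeness function $c$ and the preorder $\preceq$ on $\Ni$, using the three axioms of a closeness space (Definition~\ref{def:cspace}) together with the basic facts about $\preceq$ established in Lemmas~\ref{lem:Ni-is-preorder} and~\ref{lem:preceq-bounds}.

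For \textbf{reflexivity}, I need to show $\underline{\varepsilon} \preceq c(x,x)$. By axiom~1 of a closeness space applied to $\mathsf{refl}(x)$, we have $c(x,x) = \infty$, and then $\underline{\varepsilon} \preceq \infty$ is immediate from Lemma~\ref{lem:preceq-bounds}. For \textbf{symmetry}, given $\underline{\varepsilon} \preceq c(x,y)$, axiom~2 gives $c(x,y) = c(y,x)$, so transport along this equation yields $\underline{\varepsilon} \preceq c(y,x)$.

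For \textbf{transitivity}, given $\underline{\varepsilon} \preceq c(x,y)$ and $\underline{\varepsilon} \preceq c(y,z)$, I first prove the intermediate fact that $\underline{\varepsilon} \preceq \mathsf{min}(c(x,y), c(y,z))$. Unfolding $\preceq$, this requires showing that for every $n$ with $\underline{\varepsilon}_n = 1$ we have $\mathsf{min}(c(x,y), c(y,z))_n = 1$; but since $\mathsf{min}$ is defined pointwise as the minimum of the two binary digits, and both hypotheses give $c(x,y)_n = 1$ and $c(y,z)_n = 1$ at any such $n$, this reduces to $\mathsf{min}(1,1) = 1$. Then axiom~3 of closeness spaces gives $\mathsf{min}(c(x,y), c(y,z)) \preceq c(x,z)$, and transitivity of $\preceq$ (Lemma~\ref{lem:Ni-is-preorder}) concludes $\underline{\varepsilon} \preceq c(x,z)$, i.e.\ $C_\varepsilon(x,z)$.

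The proof involves no substantive obstacle: each clause is a direct combination of one axiom of closeness spaces with a basic property of $\preceq$. The only detail worth care is the pointwise argument in transitivity, where one must remember that $\mathsf{min}$ on $\Ni$ is the sequence-level lift of binary $\mathsf{min}$, so that the two assumptions about the individual sequences combine as expected.
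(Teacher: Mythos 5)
Your proof is correct and follows essentially the same route as the paper's: reflexivity via axiom~1 of closeness spaces together with \cref{lem:preceq-bounds}, symmetry via axiom~2, and transitivity by first establishing $\underline{\varepsilon} \preceq \mathsf{min}(c(x,y),c(y,z))$ and then combining axiom~3 with the transitivity of $\preceq$ from \cref{lem:Ni-is-preorder}. The only difference is that you spell out the pointwise argument for the $\mathsf{min}$ step, which the paper leaves implicit.
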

\begin{proof}
Reflexivity follows from \cref{def:cspace}.1 ; i.e.\ \(C_\varepsilon(x,x) := \left( {\underline \varepsilon} \preceq c(x,x) \right) = \left( {\underline \varepsilon} \preceq \infty \right) \), which follows for all \(\ty{x}{X}\) from \cref{lem:preceq-bounds}.
Symmetry follows from \cref{def:mspace}.2; i.e.\ \(C_\varepsilon(x,y) := \left( {\underline \varepsilon} \preceq c(x,y) \right) = \left( {\underline \varepsilon} \preceq c(y,x) \right) =: C_\varepsilon(y,x)\) for all \(\ty{x,y}{X}\).
Transitivity follows from \cref{def:mspace}.3; i.e.\ \(C_\varepsilon(x,z) := \left( {\underline \varepsilon} \preceq c(x,z) \right) \) is proved by first showing that \(\left( {\underline \varepsilon} \preceq \mathsf{min}(c(x,y),c(y,z)) \right)\), which is clearly the case by the assumptions \(C_\varepsilon(x,y) := \left( {\underline \varepsilon} \preceq c(x,y) \right)\) and \(C_\varepsilon(y,z) := \left( {\underline \varepsilon} \preceq c(y,z) \right)\) --- the result then follows by using the transitivity of \(\preceq\) (\cref{lem:Ni-is-preorder}).
\end{proof}

\noindent
Closeness functions are monotonically decreasing: if two elements are \(\varepsilon_2\)-close, then they are \(\varepsilon_1\)-close for all \(\varepsilon_1 \leq \varepsilon_2\).

\begin{corollary}
\label{cor:C-mono}
\thesislit{3}{ClosenessSpaces}{C-mono}
Given a closeness space \(X\), precisions \(\ty{\varepsilon_1,\varepsilon_2}{\N}\) such that \(\varepsilon_1 \leq \varepsilon_2\) and elements \(\ty{x,y}{X}\), if \(C_{\varepsilon_2}(x,y)\) then \(C_{\varepsilon_1}(x,y)\).
\end{corollary}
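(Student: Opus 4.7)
The plan is to unfold the definition of $C_\varepsilon$ and then chain two facts about the preorder $\preceq$ on $\Ni$ that have already been established earlier in the excerpt. Writing out the definition from \cref{def:C}, the hypothesis $C_{\varepsilon_2}(x,y)$ is exactly $\underline{\varepsilon_2} \preceq c(x,y)$, and the goal $C_{\varepsilon_1}(x,y)$ is exactly $\underline{\varepsilon_1} \preceq c(x,y)$. So we need to pass from one to the other using only that $\varepsilon_1 \leq \varepsilon_2$ as naturals.

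First, I would apply \cref{lem:Ni-preorder-extend}, which lifts the order on $\N$ to the order on $\Ni$: from $\varepsilon_1 \leq \varepsilon_2$ we obtain $\underline{\varepsilon_1} \preceq \underline{\varepsilon_2}$. Second, I would invoke the transitivity half of \cref{lem:Ni-is-preorder} to compose $\underline{\varepsilon_1} \preceq \underline{\varepsilon_2}$ with the hypothesis $\underline{\varepsilon_2} \preceq c(x,y)$, yielding $\underline{\varepsilon_1} \preceq c(x,y)$. Folding the definition of $C_{\varepsilon_1}$ back up finishes the proof.

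There is no real obstacle here; the statement is a direct corollary of the two lemmas above together with the definition of $C_\varepsilon$, and so the only care needed is to make sure we invoke \cref{lem:Ni-preorder-extend} in the correct direction (smaller natural gives smaller element of $\Ni$, which is the direction we need in order for transitivity to compose cleanly with the hypothesis). No appeal to \cref{def:cspace} itself, nor to function extensionality, is required.
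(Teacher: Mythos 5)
Your proof is correct and matches the paper's: the paper's one-line proof cites \cref{lem:Ni-preorder-extend}, leaving the transitivity step via \cref{lem:Ni-is-preorder} implicit, which you have simply spelled out. No differences in approach.
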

\begin{proof}
By \cref{lem:Ni-preorder-extend}.
\end{proof}

\noindent
Furthermore, we can always decide whether two elements are \(\varepsilon\)-close.

\begin{lemma}
\label{lem:C-dec}
\thesislit{3}{ClosenessSpaces}{C-decidable}
Given a closeness space \(X\), precision \(\ty{\varepsilon}{\N}\) and elements \(\ty{x,y}{X}\), it is the case that \(C_\varepsilon(x,y)\) is decidable.
\end{lemma}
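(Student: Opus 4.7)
The plan is to unfold the definition of $C_\varepsilon(x,y)$ and reduce the claim directly to a previously established decidability result. Recall from \cref{def:C} that $C_\varepsilon(x,y)$ is defined as $\underline{\varepsilon} \preceq c(x,y)$, where $\underline{\varepsilon}$ is the lifting of the natural number $\varepsilon$ into $\Ni$ and $c(x,y) : \Ni$ is the closeness of $x$ and $y$ as determined by the closeness function of $X$.

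The key observation is that this is precisely an instance of the relation $\underline{n} \preceq v$ where $n := \varepsilon$ and $v := c(x,y)$. Thus the decidability of $C_\varepsilon(x,y)$ follows immediately by applying \cref{lem:preceq-left-decidable}, which states that for any $n : \N$ and $v : \Ni$, the proposition $\underline{n} \preceq v$ is decidable.

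There is no real obstacle here: the whole point of phrasing closeness in terms of $\underline{\varepsilon} \preceq c(x,y)$ (rather than, say, $c(x,y) \preceq \underline{\varepsilon}$) is precisely to ensure this decidability, since in general $\preceq$ on $\Ni$ is only decidable when the \emph{left-hand} side is a genuine natural number. The proof therefore amounts to a single application of \cref{lem:preceq-left-decidable} after unfolding $C_\varepsilon$.
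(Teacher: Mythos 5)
Your proof is correct and is exactly the paper's argument: unfold \(C_\varepsilon(x,y)\) to \(\underline{\varepsilon} \preceq c(x,y)\) and apply \cref{lem:preceq-left-decidable}. The paper's proof is simply ``By \cref{lem:preceq-left-decidable}''; your additional remarks about why the natural number must appear on the left-hand side are accurate but not needed for the proof itself.
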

\begin{proof}
By \cref{lem:preceq-left-decidable}.
\end{proof}

Critically, we now reformulate continuity (the reader can compare to \cref{def:metr-cont,def:metr-ucont,def:metr-ucont-pred} on metric spaces) for closeness spaces. This definition will be used throughout the rest of the thesis.

\begin{definition}
\label{def:cf-cont}
\thesislit{3}{ClosenessSpaces}{f-continuous}
Given closeness spaces \(X\) and \(Y\), a function \(\ty{f}{X \to Y}\) is \emph{continuous} if for all \(\ty{x_1}{X}\) and \(\ty{\varepsilon}{\N}\) there is some \(\ty{\delta}{\N}\) such that elements that are \(\delta\)-close to \(x_1\) map to elements that are \(\varepsilon\)-close to \(f(x_1)\):
\[\mathsf{f{\hy}continuous}(f) := \Pitye{\varepsilon}{\N}{\Pitye{x_1}{X}{\Sigmatye{\delta}{\N}{\Pity{x_2}{X}{C_\delta(x_1,x_2) \to C_\varepsilon(f(x_1),f(x_2))}}}}.
\]
\end{definition}

\begin{definition}
\label{def:clos-ucont}
\thesislit{3}{ClosenessSpaces}{f-ucontinuous}
Given closeness spaces \(X\) and \(Y\), a function \(\ty{f}{X \to Y}\) is \emph{uniformly continuous} if for all \(\ty{\varepsilon}{\N}\) there is some \(\ty{\delta}{\N}\) such that elements that are \(\delta\)-close map to elements that are \(\varepsilon\)-close:
\[\mathsf{f{\hy}ucontinuous}(f) :=
\Pitye{\varepsilon}{\N}{\Sigmatye{\delta}{\N}{\Pity{x_1,x_2}{X}{C_\delta(x_1,x_2) \to C_\varepsilon(f(x_1),f(x_2))}}}.\]
\end{definition}

\begin{lemma}
\thesislit{3}{ClosenessSpaces}{ucontinuous-continuous}
Every uniformly continuous function between closeness spaces is continuous.
\end{lemma}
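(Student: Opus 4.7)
The plan is to directly unfold the definitions of uniform continuity (\cref{def:clos-ucont}) and continuity (\cref{def:cf-cont}) and observe that the former is strictly stronger than the latter: the modulus $\delta$ produced by uniform continuity depends only on $\varepsilon$, and so it certainly witnesses the existential for any particular $x_1$.

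Concretely, suppose $f : X \to Y$ is uniformly continuous and write $\mathsf{uc}(f) : \Pitye{\varepsilon}{\N}\Sigmatye{\delta}{\N}\Pity{x_1,x_2}{X}{C_\delta(x_1,x_2) \to C_\varepsilon(f(x_1),f(x_2))}$ for its witness. To establish continuity, I take arbitrary $\varepsilon : \N$ and $x_1 : X$. I apply $\mathsf{uc}(f)$ to $\varepsilon$, obtaining $(\delta, h)$ where $\delta : \N$ and $h : \Pity{y_1,y_2}{X}{C_\delta(y_1,y_2) \to C_\varepsilon(f(y_1), f(y_2))}$. I then return the same $\delta$ together with the function $\lambda x_2.\, h(x_1, x_2)$, which has exactly the type required by \cref{def:cf-cont}.

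There is essentially no obstacle here: the proof is a simple rearrangement of quantifiers, exploiting the fact that $\Sigma$ and $\Pi$ commute in the direction where the $\Sigma$'d variable does not depend on the $\Pi$'d one. The only very minor subtlety is that in our formalisation the signatures package the modulus $\delta$ together with its property as a $\Sigma$-type, so the proof term is literally $\lambda \varepsilon.\, \lambda x_1.\, (\mathsf{pr}_1(\mathsf{uc}(f)(\varepsilon)),\, \lambda x_2.\, \mathsf{pr}_2(\mathsf{uc}(f)(\varepsilon))(x_1, x_2))$, with no use of function extensionality, propositional truncation, or any other axiom.
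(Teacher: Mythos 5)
Your proof is correct and is exactly the paper's argument: the modulus $\delta$ obtained from uniform continuity depends only on $\varepsilon$, so it serves as a modulus of continuity at every point $x_1$. You merely spell out the proof term more explicitly than the paper does.
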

\begin{proof}
If \(\ty{\delta}{\N}\) is a modulus of uniform continuity for \(\ty{f}{X \to Y}\) (i.e. it depends only on \(\ty{\varepsilon}{\N}\) and not on any point \(\ty{x_1}{X}\)) then it is a modulus of continuity for any \(x_1\).
\end{proof}

\begin{definition}
\label{def:clos-ucont-pred}
\thesislit{3}{ClosenessSpaces}{p-ucontinuous}
A predicate \(\ty{p}{X \to \Omega}\) on a closeness space \(X\) is \emph{uniformly continuous} if there is some \(\ty{\delta}{\N}\) such that sequences that are \(\delta\)-close give the same answer to the predicate:
\[ \mathsf{p{\hy}ucontinuous}(p) := \Sigmatye{\delta}{\N}{\Pity{x_1,x_2}{X}{C_\delta(x_1,x_2) \to p(x_1) \Leftrightarrow p(x_2)}} .\]
\end{definition}

As can be expected, the identity function is uniformly continuous and composition preserves uniform continuity.

\begin{lemma}
\label{lem:id-comp-ucont}
\thesislit{3}{ClosenessSpaces}{id-ucontinuous}
\thesislit{3}{ClosenessSpaces}{f-ucontinuous-comp}
The identity function is uniformly continuous and composition of functions preserves uniform continuity.
\end{lemma}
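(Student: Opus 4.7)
The plan is to handle both claims directly from the definition of uniform continuity (\cref{def:clos-ucont}), as neither requires any deep use of the closeness space axioms.

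For the identity function $\ty{\mathsf{id}}{X \to X}$, given any target precision $\ty{\varepsilon}{\N}$ I would simply take $\delta := \varepsilon$ as the modulus. The required implication $C_\delta(x_1,x_2) \to C_\varepsilon(\mathsf{id}(x_1),\mathsf{id}(x_2))$ then reduces definitionally to $C_\varepsilon(x_1,x_2) \to C_\varepsilon(x_1,x_2)$, which is witnessed by the identity on proofs. No appeal to the ultrametric axioms or to \cref{lem:C-eq} is needed.

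For composition, suppose $\ty{f}{X \to Y}$ and $\ty{g}{Y \to Z}$ are uniformly continuous with moduli $\ty{\delta_f}{\N \to \N}$ and $\ty{\delta_g}{\N \to \N}$ respectively. Given $\ty{\varepsilon}{\N}$, I would first use uniform continuity of $g$ to obtain $\delta' := \delta_g(\varepsilon)$ with the property that $C_{\delta'}(y_1,y_2) \to C_\varepsilon(g(y_1),g(y_2))$ for all $\ty{y_1,y_2}{Y}$. Then I would apply uniform continuity of $f$ at precision $\delta'$ to obtain $\delta := \delta_f(\delta')$ such that $C_\delta(x_1,x_2) \to C_{\delta'}(f(x_1),f(x_2))$ for all $\ty{x_1,x_2}{X}$. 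Taking $\delta$ as the modulus for $g \circ f$, the implication $C_\delta(x_1,x_2) \to C_\varepsilon((g \circ f)(x_1),(g \circ f)(x_2))$ follows by chaining these two implications, instantiating the modulus of $g$ at $f(x_1),f(x_2)$.

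There is no real obstacle here; the only subtlety is making sure the moduli are extracted and fed to each other in the correct order (first peel off $g$, then $f$), which is standard for this style of composition argument. The proof is entirely constructive and does not require any axioms beyond what is already present in MLTT, so I would not expect to mark it with any of the \axioms{f}, \axioms{p}, \axioms{u}, or \axioms{t} labels introduced in \cref{sec:univalent}.
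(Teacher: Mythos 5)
Your proof is correct and follows essentially the same route as the paper: $\delta := \varepsilon$ for the identity, and $\delta := \delta_f(\delta_g(\varepsilon))$ for the composition, chaining the two implications in the same order. Your observation that no extensionality or truncation axioms are needed also matches the paper, which leaves this lemma unlabelled.
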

\begin{proof}
For the identity function, we need to show for all \(\ty{\varepsilon}{\N}\) there is \(\ty{\delta}{\N}\) that \(C_\delta(x_1,x_2)\) implies \(C_\varepsilon(\mathsf{id}(x_1),\mathsf{id}(x_2))\) --- thus we just set \(\delta := \varepsilon\).

For composition of uniformly continuous functions \(\ty{f}{X \to Y}\) and \(\ty{g}{Y \to Z}\) (i.e. for all \(\ty{\varepsilon}{\N}\) we have some \(\ty{\delta_f^\varepsilon,\delta_g^\varepsilon}{\N}\) such that \(C_{\delta_f^\varepsilon}(x_1,x_2) \to C_\varepsilon(f(x_1),f(x_2))\) and \(C_{\delta_g^\varepsilon}(y_1,y_2) \to C_\varepsilon(g(y_1),g(y_2))\)), we need to show that for all \(\ty{\varepsilon}{\N}\) there is some \(\ty{\delta}{\N}\) such that \(C_{\delta}(x_1,x_2) \to C_{\varepsilon}(g(f(x_1)),g(f(x_2)))\).
By setting
\(\delta := {\delta_f}^{{\delta_g}^\varepsilon}\), we find that \(C_{{\delta_f}^{{\delta_g}^\varepsilon}}(x_1,x_2) \to C_{{\delta_g}^\varepsilon}(f(x_1),f(x_2)) \to C_\varepsilon(f(g(x_1)),f(g(x_2))\).
\end{proof}

\noindent
Further, and importantly for the purpose of function search, we can compose uniformly continuous functions and predicates to form a new uniformly continuous predicate.

\begin{lemma}
\label{lem:f-p-ucont}
\thesislit{3}{ClosenessSpaces}{p-ucontinuous-comp}
Given closeness spaces \(X\) and \(Y\), a uniformly continuous function \(\ty{f}{X \to Y}\) and a uniformly continuous (and decidable) predicate \(\ty{p}{Y \to \Omega}\), the predicate \[ p_f(x) := p(f(x)), \] is uniformly continuous (and decidable).
\end{lemma}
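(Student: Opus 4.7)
The proof is a direct composition argument, essentially chaining the two given moduli of uniform continuity together, so I would not expect any real obstacle.

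The plan is to first unpack the uniform continuity of the predicate $p$ to obtain a modulus $\ty{\delta_p}{\N}$ such that for all $\ty{y_1,y_2}{Y}$, we have $C_{\delta_p}(y_1,y_2) \to p(y_1) \Leftrightarrow p(y_2)$. Next, I would apply the uniform continuity of $f$ at the specific precision $\delta_p$ to obtain a modulus $\ty{\delta_f}{\N}$ such that for all $\ty{x_1,x_2}{X}$, we have $C_{\delta_f}(x_1,x_2) \to C_{\delta_p}(f(x_1),f(x_2))$. This $\delta_f$ will serve as the modulus of uniform continuity for $p_f$.

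To verify the defining property of uniform continuity for $p_f$, I take arbitrary $\ty{x_1,x_2}{X}$ with $C_{\delta_f}(x_1,x_2)$. Applying the modulus from $f$ yields $C_{\delta_p}(f(x_1),f(x_2))$, and then applying the modulus from $p$ yields $p(f(x_1)) \Leftrightarrow p(f(x_2))$, which by definition is $p_f(x_1) \Leftrightarrow p_f(x_2)$, as required.

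Finally, for the parenthetical claim that $p_f$ is decidable, the proof is immediate: decidability of $p_f(x) := p(f(x))$ at any given $\ty{x}{X}$ follows by applying the decidability witness of $p$ at the point $\ty{f(x)}{Y}$. No axiom beyond what is assumed about $p$ and $f$ is needed, and in particular this proof goes through constructively without invoking function extensionality, propositional extensionality, univalence, or propositional truncation.
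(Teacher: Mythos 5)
Your proposal is correct and matches the paper's proof exactly: the modulus for \(p_f\) is the modulus of uniform continuity of \(f\) taken at the precision given by the modulus of \(p\), and the verification is the straightforward chaining you describe. The remarks on decidability and on not needing any extensionality axioms are also accurate.
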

\begin{proof}
The modulus of uniform continuity of the predicate \(\ty{p_f}{X \to Y}\) is the modulus of uniform continuity of \(f\) at point \(\delta\), where \(\delta\) is the modulus of uniform continuity of \(p\).
\end{proof}

\noindent
Finally, we note that closeness relations themselves yield uniformly continuous and decidable predicates.

\begin{lemma}
\label{lem:closeness-ucd-pred}
\thesislit{3}{ClosenessSpaces}{C-decidable-uc-predicate-l}
\thesislit{3}{ClosenessSpaces}{C-decidable-uc-predicate-r}
Given a closeness space \(X\) and precision \(\ty{\varepsilon}{\N}\), the predicates \[ p^y_l(x) := C_\varepsilon(x,y) \text{ and } p^y_r(x) := C_\varepsilon(y,x) \] are uniformly continuous and decidable for any \(\ty{y}{X}\).
\end{lemma}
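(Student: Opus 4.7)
The plan is to treat decidability and uniform continuity as independent requirements, both of which follow immediately from previously established properties of \(C_\varepsilon\). Decidability is the easier half: for any fixed \(y\) and any input \(x\), the value of \(p^y_l(x)\) is just \(C_\varepsilon(x,y)\), which is decidable by \cref{lem:C-dec}; the same applies to \(p^y_r(x) := C_\varepsilon(y,x)\). No choice of modulus is required to discharge this part.

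For uniform continuity, I would take \(\delta := \varepsilon\) as the modulus for both predicates, and then use the fact that \(C_\varepsilon\) is an equivalence relation (\cref{lem:C-eq}). Suppose \(C_\varepsilon(x_1,x_2)\) holds. To show \(p^y_l(x_1) \Leftrightarrow p^y_l(x_2)\), i.e.\ \(C_\varepsilon(x_1,y) \Leftrightarrow C_\varepsilon(x_2,y)\), I would argue the forward direction by first applying symmetry to \(C_\varepsilon(x_1,x_2)\) to obtain \(C_\varepsilon(x_2,x_1)\), then applying transitivity with \(C_\varepsilon(x_1,y)\) to conclude \(C_\varepsilon(x_2,y)\). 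The backward direction is symmetric: apply transitivity to \(C_\varepsilon(x_1,x_2)\) and \(C_\varepsilon(x_2,y)\) to obtain \(C_\varepsilon(x_1,y)\).

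For \(p^y_r\), I would observe that by the symmetry clause of \cref{lem:C-eq} we have \(p^y_r(x) \Leftrightarrow p^y_l(x)\) pointwise, so the same modulus \(\delta := \varepsilon\) works and the argument is structurally identical (with the roles of \(y\) and the varying argument of \(C_\varepsilon\) swapped). Alternatively, one could just repeat the earlier chain of symmetry-and-transitivity steps verbatim.

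There is no real obstacle here; the proof is a direct bookkeeping exercise in applying the equivalence-relation properties of \(C_\varepsilon\), and the only minor care needed is to keep track of which argument of \(C_\varepsilon\) is varying so that symmetry is invoked in the right place before transitivity is applied. The statement is essentially the assertion that the \(\varepsilon\)-neighbourhoods of a fixed point are \(\varepsilon\)-invariant, which is immediate from closeness spaces being (informally) ultrametric.
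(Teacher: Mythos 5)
Your proposal is correct and matches the paper's (very terse) proof, which simply cites the decidability of \(C_\varepsilon\) (\cref{lem:C-dec}) and its equivalence-relation properties (\cref{lem:C-eq}); your choice of modulus \(\delta := \varepsilon\) and the symmetry-then-transitivity chain is exactly the intended argument, just spelled out in full.
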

\begin{proof}
By the decidability (\cref{lem:C-dec}) and transitivity (\cref{lem:C-eq}) of closeness relations.
\end{proof}

We have shown, therefore, that closeness spaces yield a formulation of continuity that is exactly related to continuity on metric spaces; but which we can manipulate more conveniently for the use cases of this thesis.

\subsection{Totally bounded closeness spaces}

We previously mentioned that searchable types relate to compact spaces --- recall from topology that a metric space is compact if it is totally bounded and complete, meaning that totally bounded spaces are generalisations of compact spaces~\cite{Sutherland}.
In this subsection, we will reformulate the idea of \emph{totally bounded metric spaces} for closeness spaces, which generalise the variant of searchable types we use to search infinite types in \cref{sec:search-infinite}.

In order to achieve this, we first define \emph{\(\varepsilon\)-nets} on closeness spaces. Recall that an \(\varepsilon\)-net on a metric space \(X\) is a finite subset \(\{x_0,...,x_{n-1}\}\) of \(X\) such that the union of the \(n\)-many open balls with these points at their center is \(X\) itself~\cite{Sutherland}.

\begin{definition}
\label{def:cover-original}
\lit
Given a closeness space \(X\), a finite linearly ordered type \(X'\) equipped with a function \(\ty{g}{X' \to X}\) is an \emph{\(\varepsilon\)-net of \(X\)} if for all \(\ty{x}{X}\) there is an element \(\ty{x'}{X'}\) such that \(C_\varepsilon(x,g(x')))\).
\end{definition}

\noindent
As our interpretation of the quantifier ``there is" in \cref{def:cover-original} is \(\Sigma\), rather than \(\exists\), we can use the following equivalent definition.

\begin{definition}
\label{def:cover}
\thesislit{3}{ClosenessSpaces}{_is_net-of_}
Given a closeness space \(X\), a finite linearly ordered type \(X'\) equipped with a function \(\ty{g}{X' \to X}\) is an \emph{\(\varepsilon\)-net of \(X\)} if there is a function \(\ty{h}{X \to X'}\) such that for all \(\ty{x}{X}\) we have \(C_\varepsilon(x,g(h(x)))\).
\end{definition}

By this definition, every element \(\ty{x}{X}\) of a closeness space \(X\) with an \(\varepsilon\)-net \((X',g)\) is represented by at least one point \(\ty{h(x)}{X'}\) of the net --- indeed, these points represent the whole \(\varepsilon\)-neighbourhood in which they lie.
Furthermore, because closeness relations are equivalence relations for closeness spaces, two elements \(\ty{x'_1,x'_2}{X'}\) of the net either represent the exact same \(\varepsilon\)-neighbourhood of \(X\) or completely disjoint\footnote{This means that each closeness space with an \(\varepsilon\)-net has a minimal \(\varepsilon\)-net obtained by discarding elements of the net that represent the same \(\varepsilon\)-neighbourhoods --- though, in this thesis, we never require the net to be minimal.}.
As an \(\varepsilon\)-net is finite, the existence of such a net implies that the number of \(\varepsilon\)-neighbourhoods of \(X\) is also finite.
Because every element of a closeness space is \(0\)-close to every other element, any pointed type (e.g.\ \(\1\)) is a \(0\)-net for any closeness space).

Reflecting the definition on metric spaces, a closeness space is totally bounded if there is an \(\varepsilon\)-net for every \(\ty{\varepsilon}{\N}\).

\begin{definition}
\label{def:totallybounded}
\thesislit{3}{ClosenessSpaces}{totally-bounded}
A closeness space \(X\) is \emph{totally bounded} if, for every precision \(\ty{\varepsilon}{\N}\), there is an \(\varepsilon\)-net of \(X\).
\end{definition}

\noindent
If a closeness space is totally bounded, it has a finite number of \(\varepsilon\)-neighbourhoods for every \(\ty{\varepsilon}{\N}\).

\subsection{Examples of (totally bounded) closeness spaces}
\label{sec:cspace-examples}

In this subsection, we give examples of how we can build a wide class of closeness spaces, many of which are additionally totally bounded (and, therefore, admit search).
Note that, unlike metric spaces, closeness spaces are only ever defined on `intensional' mathematical spaces, such as the Cantor space \(\seq \2\), and are not defined on `extensional' spaces such as the unit interval \([0,1]\).

In each case, we prove that the closeness space is indeed a closeness space (i.e.\ it satisfies the three properties of \cref{def:cspace}).
When proving a closeness function is totally bounded, the \(0\) case is left out due to its triviality.

\subsubsection{Discrete closeness spaces}

\begin{definition}
\label{def:discrete-closeness}
\thesislit{3}{ClosenessSpaces-Examples}{discrete-clofun'}
Given a discrete type \(X\), the \emph{discrete closeness function} \(\ty{c_X}{\closeness X}\) is defined by case analysis of \(d(x,y) : \mathsf{decidable}(x = y)\) for arguments \(\ty{x,y}{X}\),
\begin{itemize}
\item If \(x = y\) then \(c_X(x,y) := \infty\),
\item If \(x \neq y\) then \(c_X(x,y) := {\underline 0}\).
\end{itemize}
\end{definition}

\begin{lemma}
\label{lem:discrete-cspace}
\thesislit{3}{ClosenessSpaces-Examples}{D-ClosenessSpace}
Discrete types are closeness spaces by the discrete closeness function.
\end{lemma}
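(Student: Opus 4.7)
The plan is to verify the three closeness space axioms of \cref{def:cspace} by case analysis on the decidable equality provided by discreteness of $X$. Since $c_X$ itself is defined by splitting on $d(x,y) : \mathsf{decidable}(x = y)$, every property will reduce to checking two (or four) straightforward sub-cases.

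For property (1), $c_X(x,y) = \infty \leftrightarrow x = y$: the backward direction is immediate from the defining clause. For the forward direction, I decide $d(x,y)$; in the positive branch we are done, while in the negative branch $c_X(x,y) = \underline{0}$, and combining this with the hypothesis $c_X(x,y) = \infty$ gives $\underline{0} = \infty$, which is false (e.g.\ the sequences disagree at index $0$), allowing us to conclude by $\mathsf{\0\text{-}elim}$.

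For property (2), $c_X(x,y) = c_X(y,x)$: I decide $d(x,y)$. If $x = y$ then also $y = x$ by symmetry of the identity type, so both sides compute to $\infty$. Otherwise $x \neq y$ entails $y \neq x$ (by pre-composing with the symmetry map), so both sides compute to $\underline{0}$.

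For property (3), $\min(c_X(x,y), c_X(y,z)) \preceq c_X(x,z)$, I decide $d(x,z)$. If $x = z$, then $c_X(x,z) = \infty$ and the claim follows from \cref{lem:preceq-bounds}. If $x \neq z$, the goal reduces to $\min(c_X(x,y), c_X(y,z)) \preceq \underline{0}$, which by antisymmetry-style reasoning (or directly by \cref{lem:preceq-bounds}) requires showing the minimum equals $\underline{0}$. Here I further split on $d(x,y)$ and $d(y,z)$: the case $x = y \wedge y = z$ is ruled out by transitivity of equality contradicting $x \neq z$, and in each of the three remaining cases at least one of $c_X(x,y), c_X(y,z)$ equals $\underline{0}$, making the minimum $\underline{0}$ as required.

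The only mildly delicate step is the nested case analysis in property (3), where constructivity forces us to dispatch all four combinations of $d(x,y), d(y,z)$ rather than appealing to a classical ``$x \neq z$ implies $x \neq y$ or $y \neq z$''; this is painless given decidable equality on $X$. No axioms beyond what is already available are needed.
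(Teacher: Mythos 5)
Your proof is correct and follows essentially the same route as the paper's: case analysis on the decidable equality, with both directions of condition (1) and condition (2) handled identically, and condition (3) resting on \cref{lem:preceq-bounds}. The only difference is cosmetic: for condition (3) the paper splits directly on \(d(x,y)\) and \(d(y,z)\) (both equal gives \(\mathsf{min}(\infty,\infty) \preceq \infty\); otherwise the minimum is \(\underline{0}\) and \(\underline{0} \preceq c_X(x,z)\) holds unconditionally by \cref{lem:preceq-bounds}), which avoids your extra outer split on \(d(x,z)\) and the resulting four-way nested analysis.
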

\begin{proof}
We prove that each of the three conditions of \cref{def:cspace} is satisfied:
\begin{enumerate}[(i)]
\item
\begin{enumerate}
    \item If \(x = y\), then \(c_X(x,y) := \infty\) by definition.
    \item If \(c_X(x,y) := \infty\), then we check whether or not \(x = y\). If \(\neg (x = y)\), then we get a contradiction (because \(c_X(x,y)_n := 1\) and \(c_X(x,y)_n := 0\) for every \(\ty{n}{\N}\)) and hence we can derive \(x = y\).
    Therefore, \(x = y\) in both cases.
\end{enumerate}
\item Because \(x = y \leftrightarrow y = x\) and \(\neg (x = y) \leftrightarrow \neg (y = x)\), \(c_X(x,y) = c_X(y,x)\).
\item We proceed by case analysis on \(d(x,y)\) and \(d(y,z)\):
    \begin{enumerate}
    \item In the case where both \(x = y\) and \(y = z\), then \(x = z\).
    Hence, \(\mathsf{min}(c_X(x,y),c_X(y,z)) \preceq c_X(x,z)\) reduces to \(\mathsf{min}(\infty,\infty) \preceq \infty\).
    This is trivially satisfied because, at each point \(\ty{n}{\N}\), it is the case that \(\mathsf{min}(\infty,\infty)_n := 1\) and \(\infty_n := 1\).
    This follows from \cref{lem:preceq-bounds}.
    \item Alternatively, if either \(x \neq y\) or \(y \neq z\), then \(\mathsf{min}(c_X(x,y),c_X(y,z)) \preceq c_X(x,z)\) reduces to \({\underline 0} \preceq c_X(x,z)\).
    This follows from \cref{lem:preceq-bounds}.
    \end{enumerate}
\end{enumerate}
\end{proof}

\begin{lemma}
\thesislit{3}{ClosenessSpaces-Examples}{finite-totally-bounded}
Given a discrete closeness space \(X\), if \(X\) is finite linearly ordered then it is totally bounded.
\end{lemma}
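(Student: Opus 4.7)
The plan is to observe that a finite linearly ordered type, viewed as a closeness space via the discrete closeness function, serves as its own $\varepsilon$-net for every precision $\varepsilon$. So the proof amounts to a trivial construction: given $\ty{\varepsilon}{\N}$, I would exhibit the triple $(X, \mathsf{id}_X, \mathsf{id}_X)$ where the first component is the net, the second is $\ty{g}{X \to X}$, and the third is $\ty{h}{X \to X}$ in the sense of \cref{def:cover}.

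The net condition then requires showing $C_\varepsilon(x, \mathsf{id}_X(\mathsf{id}_X(x))) = C_\varepsilon(x, x)$ for every $\ty{x}{X}$, which follows immediately from reflexivity of the closeness relation (\cref{lem:C-eq}). The only premise I actually consume from the hypotheses is that $X$ is itself finite linearly ordered, which is needed for $X'$ (namely $X$) to satisfy the finiteness requirement of \cref{def:cover}. Discreteness is used only indirectly: it is what entitles us to view $X$ as a closeness space in the first place (\cref{lem:discrete-cspace}), so that the statement typechecks.

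There is no real obstacle here; the result is essentially a sanity check showing that our notion of totally bounded is not vacuous on the finite case. The only mild subtlety is the sleight-of-hand of reusing the type $X$ itself as its own net, rather than searching for a smaller representative subset — this works because \cref{def:cover} does not demand the net be \emph{minimal}, as remarked in the paragraph following that definition. Consequently the proof is a one-line construction followed by an appeal to \cref{lem:C-eq}.
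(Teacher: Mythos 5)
Your proposal is correct and takes essentially the same approach as the paper: the paper likewise takes \(X\) itself as its own \(\varepsilon\)-net with \(g\) and \(h\) both the identity map, the net condition reducing to \(C_\varepsilon(x,x)\). The paper additionally remarks that under the discrete closeness function this is the \emph{only} possible net (since for \(\varepsilon \geq 1\) only \(y = x\) satisfies \(C_\varepsilon(x,y)\)), but that observation is not needed for the construction.
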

\begin{proof}
For any \(\varepsilon := \suc{\varepsilon'}\), the only \(\varepsilon\)-net is \(X\) itself, as for any \(\ty{x}{X}\) the only element \(\ty{y}{Y}\) that can satisfy \(C_\varepsilon(x,y)\) is such that \(y := x\); therefore \(\ty{g,h}{X \to X}\) should simply be the identity map.
\end{proof}

\subsubsection{Disjoint union of closeness spaces}

\begin{definition}
\label{def:+-closeness}
\thesislit{3}{ClosenessSpaces-Examples}{\urlplus-clofun'}
Given closeness spaces \(X\) and \(Y\), the \emph{disjoint union closeness function} is defined by,
\begin{alignat*}{3}
c_{X + Y} &: X + Y && \to X + Y \to \Ni, \\
c_{X + Y} &(\inl \ x_1, && \inl \ x_2) && := c_X(x_1,x_2), \\
c_{X + Y} &(\inr \ y_1, && \inr \ y_2) && := c_Y(y_1,y_2), \\
c_{X + Y} &(\inl \ x  , && \inr \ y) && := {\underline 0}, \\
c_{X + Y} &(\inr \ y  , && \inl \ x) && := {\underline 0}.
\end{alignat*}
\end{definition}

\begin{lemma}
\thesislit{3}{ClosenessSpaces-Examples}{\urlplus-ClosenessSpace}
Given closeness spaces \(X\) and \(Y\), the type \(X + Y\) is a closeness space by the disjoint union closeness function.
\end{lemma}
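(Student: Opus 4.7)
The plan is to verify each of the three properties of \cref{def:cspace} by case analysis on whether each argument to $c_{X + Y}$ is of the form $\inl$ or $\inr$. In every case, the claim reduces either to the corresponding property of $c_X$ or $c_Y$, or to a trivial fact about $\underline{0}$ and $\infty$.

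For property (1), I split into four cases. When both arguments are $\inl$, we have $c_{X+Y}(\inl\ x_1, \inl\ x_2) = c_X(x_1, x_2)$, and since $\inl$ is injective, $\inl\ x_1 = \inl\ x_2 \leftrightarrow x_1 = x_2$; the claim then follows from property (1) of $c_X$. The $\inr, \inr$ case is symmetric. In the two mixed cases the closeness value is $\underline{0}$, which differs from $\infty$ at index $0$, and the elements are unequal by disjointness of the constructors, so both sides of the biconditional are false. Property (2) is an immediate case analysis using symmetry of $c_X$ and $c_Y$ in the homogeneous cases, and reflexivity of $=$ applied to $\underline{0}$ in the mixed cases.

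Property (3) requires a more elaborate case split on the three arguments $x,y,z$, giving eight subcases. When $x,y,z$ are all of the form $\inl$ (respectively all $\inr$), the goal $\mathsf{min}(c(x,y), c(y,z)) \preceq c(x,z)$ reduces directly to property (3) of $c_X$ (respectively $c_Y$). In each of the remaining six subcases, at least one of $x,y$ or $y,z$ is a mixed pair, so at least one of $c(x,y)$ or $c(y,z)$ equals $\underline{0}$; hence $\mathsf{min}(c(x,y), c(y,z)) = \underline{0}$ (pointwise, since $\mathsf{min}(0, b) = 0$), and the conclusion $\underline{0} \preceq c(x,z)$ is immediate from \cref{lem:preceq-bounds}.

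I do not anticipate any real obstacles: the proof is routine case analysis, and every branch either delegates to the analogous property of $c_X$ or $c_Y$, or collapses to a triviality about $\underline{0}$ being the minimum element of $\preceq$. The only organisational care required is ensuring that the eight subcases of property (3) are handled uniformly, which I would do by observing that whenever the triple $(x,y,z)$ is not constructor-homogeneous, the minimum on the left-hand side is already $\underline{0}$.
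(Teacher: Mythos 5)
Your proof is correct and follows essentially the same route as the paper's: case analysis on the constructors, delegating the homogeneous cases to the corresponding property of $c_X$ or $c_Y$ and collapsing the mixed cases to facts about $\underline{0}$ via \cref{lem:preceq-bounds}. Your observation that any non-homogeneous triple must contain a mixed adjacent pair (so the left-hand minimum is already $\underline{0}$) packages the paper's three overlapping subcases for property (3) slightly more uniformly, but the underlying argument is the same.
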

\begin{proof}
We prove that each of the three conditions of \cref{def:cspace} is satisfied:
\begin{enumerate}[(i)]
\item 
\begin{enumerate}
    \item To show that \(c_{X + Y}({xy}_1,{xy}_2) = \infty\) whenever \({xy}_1 = {xy}_2\), we first recognise that the equality means either \({xy}_1 := \mathsf{inl}\ x_1 = \mathsf{inl}\ x_2 =: {xy_2}\) (for \(\ty{x_1,x_2}{X}\)) or \({xy}_1 := \mathsf{inr}\ y_1 = \mathsf{inr}\ y_2 =: {xy_2}\) (for \(\ty{y_1,y_2}{Y}\)) holds. As \(\inl\) and \(\inr\) are both embeddings (\cref{def:embedding}), in the former case \(x_1 = x_2\) and in the latter \(y_1 = y_2\). The former case is reduced to showing \(c_X(x_1,x_2) = \infty\), and the latter is reduced to showing \(c_Y(y_1,y_2) = \infty\). In either case we achieve the result by the assumption that the underlying closeness functions satisfy condition (i).(a).
    \item To show that \({xy}_1 = {xy}_2\) given \(c_{X + Y}({xy}_1,{xy}_2) = \infty\), we proceed by induction on \({xy}_1,{xy}_2\). The latter two cases of the definition of the disjoint union closeness function (\cref{def:+-closeness}) are impossible, as \(\infty \neq \underline 0\), and thus either \({xy}_1 := \mathsf{inl}\ x_1\) and \({xy_2} := \mathsf{inl}\ x_2\) for \(\ty{x_1,x_2}{X}\) or \({xy}_1 := \mathsf{inr}\ y_1\) and \({xy_2} := \mathsf{inr}\ y_2\) for \(\ty{y_1,y_2}{X}\). In the former case we now have \(c_{X + Y}(\inl\ x_1, \inl\ x_2) := c_X(x_1,x_2) = \infty\) and want to show \(x_1 = x_2\), and in the latter we have \(c_{X + Y}(\inr\ y_1, \inr\ y_2) := c_Y(y_1,y_2) = \infty\) and want to show \(y_1 = y_2\).
    In either case we achieve the result by the assumption that the underlying closeness functions satisfy condition (i).(b).
\end{enumerate}
\item To show that \(c_{X+Y}\) is symmetric, we check each case of its definition (\cref{def:+-closeness}). The first two cases are each symmetric by the assumption that the underlying closeness functions satisfy condition (ii); the latter two cases are already symmetries of each other, and so they are also symmetric.
\item To show that \(\mathsf{min}(c_{X+Y}({xy}_1,{xy}_2),c_{X+Y}({xy}_2,{xy}_3)) \preceq c_{X+Y}({xy}_1,{xy}_3)\) holds, we proceed by induction on \({xy}_1,{xy}_2,{xy}_3\):
\begin{itemize}
    \item[] If all three elements are from the same side of the disjoint union, the result follows by the assumption that the underlying closeness functions satisfy condition (iii),
    \item[] If \({xy}_1\) and \({xy}_2\), or \({xy}_2\) and \({xy}_3\), are from different sides of the disjoint union, then the left side of the inequality reduces to \(\underline 0\) and we simply need to show \(\underline 0 \preceq c_{X+Y}({xy}_1,{xy}_3)\). This is by \cref{lem:preceq-bounds}.
    \item[] If \({xy}_1\) and \({xy}_3\) are from different sides of the disjoint union, then we need to show \(\mathsf{min}(c_{X+Y}({xy}_1,{xy}_2),c_{X+Y}({xy}_2,{xy}_3)) \preceq \underline 0\), which can only occur if \(\mathsf{min}(c_{X+Y}({xy}_1,{xy}_2),c_{X+Y}({xy}_2,{xy}_3)) \sim \underline 0\). This is indeed the case, as no matter which side \({xy}_2\) is from, it will differ from either \({xy}_1\) or \({xy}_3\), and so the left side of the inequality reduces to \(\underline 0\).
\end{itemize}
\end{enumerate}
\end{proof}

\begin{lemma}
\thesislit{3}{ClosenessSpaces-Examples}{\urlplus-totally-bounded}
Given totally bounded closeness spaces \(X\) and \(Y\), the disjoint union closeness space \(X + Y\) is totally bounded.
\end{lemma}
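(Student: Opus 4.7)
The plan is to combine the nets from $X$ and $Y$ componentwise: given a precision $\ty{\varepsilon}{\N}$ (assume $\varepsilon \geq 1$, since the case $\varepsilon = 0$ is trivial via any pointed type by \cref{remark:0-close}), use total boundedness of $X$ to obtain an $\varepsilon$-net $(X',g_X,h_X)$, and similarly obtain $(Y',g_Y,h_Y)$ for $Y$. I propose to take $X' + Y'$ as the candidate $\varepsilon$-net of $X + Y$, equipped with the evident pair of functions
\begin{alignat*}{3}
g &: X' + Y' &&\to X + Y,\\
g &(\inl \ x') &&:= \inl \ (g_X(x')),\\
g &(\inr \ y') &&:= \inr \ (g_Y(y')),\\
h &: X + Y &&\to X' + Y',\\
h &(\inl \ x) &&:= \inl \ (h_X(x)),\\
h &(\inr \ y) &&:= \inr \ (h_Y(y)).
\end{alignat*}

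First I would verify that $X' + Y'$ is finite linearly ordered. Since $X' \simeq \F(n)$ and $Y' \simeq \F(m)$ for some $\ty{n,m}{\N}$, it suffices to prove $\F(n) + \F(m) \simeq \F(n + m)$, which follows by a routine induction on $m$ using the inductive definition $\F(k+1) := \F(k) + \1$ together with associativity of $+$ up to equivalence; the overall equivalence $X' + Y' \simeq \F(n+m)$ then follows by composing these with the given equivalences.

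Next I would verify the net condition: for every $\ty{{xy}}{X + Y}$, we have $C_\varepsilon({xy}, g(h({xy})))$. I proceed by case analysis on ${xy}$. If ${xy} := \inl \ x$, then $g(h(\inl \ x)) = \inl \ (g_X(h_X(x)))$, and by the first clause in the definition of $c_{X+Y}$ (\cref{def:+-closeness}) the target reduces to $C_\varepsilon(x, g_X(h_X(x)))$ in $X$, which is precisely the $\varepsilon$-net property of $(X',g_X,h_X)$. The case ${xy} := \inr \ y$ is symmetric, using the second clause of \cref{def:+-closeness} and the $\varepsilon$-net property of $(Y',g_Y,h_Y)$.

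There is no real obstacle here: the construction is forced by the form of the disjoint union closeness function, which was already defined so that intra-summand closeness is inherited directly from the underlying closeness functions. The only mildly bureaucratic part is the finite linearly ordered bookkeeping for $X' + Y'$, which is a standard induction already tacitly used when proving \cref{lem:plus-searchable}.
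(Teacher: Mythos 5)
Your proposal is correct and matches the paper's proof essentially exactly: the paper also takes $X' + Y'$ with the componentwise maps $g$ and $h$ and verifies the net condition by case analysis on the disjoint union, reducing each case to the underlying net property via \cref{def:+-closeness}. Your explicit treatment of the finite-linear-order bookkeeping ($\F(n) + \F(m) \simeq \F(n+m)$) and the trivial $\varepsilon = 0$ case is left implicit in the paper but is exactly the right filling-in.
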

\begin{proof}
For a given \(\ty{\varepsilon}{\N}\), both \(X\) and \(Y\) have \(\varepsilon\)-nets; i.e. there are types \(X'\) and \(Y'\) and functions \(\ty{g_X}{X' \to X}\), \(\ty{g_Y}{Y' \to Y}\), \(\ty{h_X}{X \to X'}\), \(\ty{h_Y}{Y \to Y'}\) such that for all \(\ty{x}{X}\) and \(\ty{y}{Y}\) we have \(C_\varepsilon(x,g_X(h_X(x)))\) and \(C_\varepsilon(y,g_Y(h_Y(y)))\).
To show that \(X + Y\) has an \(\varepsilon\)-net, we define the following functions:
\begin{alignat*}{3}
g_{X + Y} &: {X' + Y'} \to {X + Y}, \\
g_{X + Y} &(\inl\ x') := \inl\ g_X(x') ,\\
g_{X + Y} &(\inr\ y') := \inl\ g_Y(y') ,\\
h_{X + Y} &: {X + Y} \to {X' + Y'}, \\
h_{X + Y} &(\inl\ x) := \inl\ h_X(x) ,\\
h_{X + Y} &(\inr\ y) := \inl\ h_Y(y).
\end{alignat*}
We then need to show that for all \(\ty{xy}{X + Y}\) we have \(C_\varepsilon(xy,g_{X + Y}(h_{X + Y}(xy))\). This follows by induction on \({xy}\):
\begin{itemize}
\item[] If \({xy} := \inl\ x\) for some \(\ty{x}{X}\) then \(C_\varepsilon(xy,g_{X + Y}(h_{X + Y}(xy)) := C_\varepsilon(x,g_{X}(h_{X}(x))\) (by definition of \(g_{X + Y}\), \(h_{X + Y}\) and \cref{def:+-closeness}), which we have by the fact \((X',g_X)\) is an \(\varepsilon\)-net for \(X\),
\item[] Similarly, if \({xy} := \inr\ y\) for some \(\ty{y}{Y}\) then \(C_\varepsilon(xy,g_{X + Y}(h_{X + Y}(xy)) := C_\varepsilon(y,g_{Y}(h_{Y}(y))\), which we have by the fact \((Y',g_Y)\) is an \(\varepsilon\)-net for \(Y\).
\end{itemize}
Therefore the type \(X' + Y'\), equipped with the functions we defined above, is an \(\varepsilon\)-net for \(X + Y\). 
\end{proof}

\subsubsection{Finite product closeness spaces}

\begin{definition}
\label{def:prod-closeness}
\thesislit{3}{ClosenessSpaces-Examples}{\urlx-clofun'}
Given closeness spaces \(X\) and \(Y\), the \emph{binary product closeness function} is defined by,
\begin{alignat*}{3}
c_{X \x Y} &: \closeness{X \x Y}, \\
c_{X \x Y} &((x_1,y_1),(x_2,y_2)) := \mathsf{min}(c_X(x_1,x_2),c_Y(y_1,y_2)).
\end{alignat*}
\end{definition}

\begin{lemma}
\label{lem:bin-cspace}
\thesislit{3}{ClosenessSpaces-Examples}{\urlx-ClosenessSpace}
Given closeness spaces \(X\) and \(Y\), the type \(X \x Y\) is a closeness space by the binary product closeness function.
\end{lemma}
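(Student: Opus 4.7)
The plan is to verify each of the three axioms of \cref{def:cspace} for $c_{X \x Y}$, leveraging the fact that $c_X$ and $c_Y$ already satisfy them and that $\mathsf{min}$ on $\Ni$ is well-behaved. The structure will closely mirror the preceding proof for the disjoint union closeness space, though the reasoning is in fact simpler because there are no cross-cases to consider.

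For condition (i), the key fact I need is that for any $u, v : \Ni$, we have $\mathsf{min}(u,v) = \infty$ iff $u = \infty$ and $v = \infty$ (this is immediate from the pointwise definition of $\mathsf{min}$). The forward direction then proceeds as follows: if $(x_1,y_1) = (x_2,y_2)$, then by projection $x_1 = x_2$ and $y_1 = y_2$, so applying (i).(a) of the underlying closeness functions gives $c_X(x_1,x_2) = \infty$ and $c_Y(y_1,y_2) = \infty$, hence $c_{X \x Y}((x_1,y_1),(x_2,y_2)) = \mathsf{min}(\infty,\infty) = \infty$. For the backward direction, $c_{X \x Y}((x_1,y_1),(x_2,y_2)) = \infty$ forces both $c_X(x_1,x_2) = \infty$ and $c_Y(y_1,y_2) = \infty$, whence $x_1 = x_2$ and $y_1 = y_2$ by (i).(b) on each component, giving the pair equality.

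Condition (ii) is immediate from the symmetry of $c_X$ and $c_Y$:
\[ c_{X \x Y}((x_1,y_1),(x_2,y_2)) = \mathsf{min}(c_X(x_1,x_2),c_Y(y_1,y_2)) = \mathsf{min}(c_X(x_2,x_1),c_Y(y_2,y_1)) = c_{X \x Y}((x_2,y_2),(x_1,y_1)). \]

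Condition (iii) is the main step, and the challenge is to rearrange the nested minimums. I would first record two auxiliary facts about $\mathsf{min}$ on $\Ni$: associativity/commutativity in the form $\mathsf{min}(\mathsf{min}(a,b),\mathsf{min}(a',b')) = \mathsf{min}(\mathsf{min}(a,a'),\mathsf{min}(b,b'))$, and monotonicity with respect to $\preceq$, i.e.\ if $u_1 \preceq v_1$ and $u_2 \preceq v_2$ then $\mathsf{min}(u_1,u_2) \preceq \mathsf{min}(v_1,v_2)$. Both follow pointwise from the corresponding facts on $\2$. Given these, the proof proceeds:
\begin{align*}
\mathsf{min}(c_{X \x Y}((x_1,y_1),(x_2,y_2)),\, c_{X \x Y}((x_2,y_2),(x_3,y_3)))
  &= \mathsf{min}(\mathsf{min}(c_X(x_1,x_2),c_Y(y_1,y_2)),\mathsf{min}(c_X(x_2,x_3),c_Y(y_2,y_3))) \\
  &= \mathsf{min}(\mathsf{min}(c_X(x_1,x_2),c_X(x_2,x_3)),\mathsf{min}(c_Y(y_1,y_2),c_Y(y_2,y_3))) \\
  &\preceq \mathsf{min}(c_X(x_1,x_3),c_Y(y_1,y_3)) \\
  &=: c_{X \x Y}((x_1,y_1),(x_3,y_3)),
\end{align*}
where the inequality step uses monotonicity of $\mathsf{min}$ together with (iii) on each of $c_X$ and $c_Y$. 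The only real obstacle is confirming the two $\mathsf{min}$ lemmas on $\Ni$; once those are in hand, the rest is bookkeeping.
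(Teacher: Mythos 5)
Your proof is correct and follows essentially the same route as the paper's: componentwise reduction for conditions (i) and (ii), and for (iii) the same rearrangement of nested minima followed by monotonicity of $\mathsf{min}$ with respect to $\preceq$ applied to the triangle inequalities of $c_X$ and $c_Y$. The only caveat is that the backward direction of your condition (i) fact --- that $\mathsf{min}(u,v) = \infty$ forces $u = \infty$ and $v = \infty$ --- is not quite ``immediate from the pointwise definition'': pointwise reasoning only gives $u \sim \infty$ and $v \sim \infty$, and the paper explicitly invokes function extensionality (the proof is marked \axioms{f}) to upgrade this to equality.
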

\begin{proof} \axioms{f}
We prove that each of the three conditions of \cref{def:cspace} is satisfied:
\begin{enumerate}[(i)]
\item 
\begin{enumerate}
\item To show that \(c_{X \x Y}((x_1,y_1),(x_2,y_2)) = \infty\) whenever \((x_1,y_1) = (x_2,y_2)\), we first recognise that the equality means both \(x_1 = y_1\) and \(x_2 = y_2\) and therefore, by the assumption that the underlying closeness functions satisfy condition (i).(a), we have \(c_X(x_1,x_2) = \infty\) and \(c_Y(y_1,y_2) = \infty\). Thus, by definition of the binary product closeness function (\cref{def:prod-closeness}) we only need to show that \(\mathsf{min}(\infty,\infty) = \infty\), which is immediate as \(\lambda n.\mathsf{min}(1,1) = \lambda n.1\) (even without function extensionality).
\item To show that \((x_1,y_1) = (x_2,y_2)\) given \(c_{X \x Y}((x_1,y_1),(x_2,y_2)) = \infty\) then we first show that \(c_X(x_1,x_2) = \infty\) and \(c_Y(y_1,y_2) = \infty\) (these are because, by function extensionality, if \(\mathsf{min}(u,v) = \infty\) then \(u = \infty = v\)). We now use the assumption that the underlying closeness functions satisfy condition (i).(b) to give us \(x_1 = x_2\) and \(y_1 = y_2\), and thus the result follows immediately.
\end{enumerate}
\item The symmetry of \(c_{X\x Y}\) is immediate from the assumption that the underlying closeness functions satisfy condition (ii).
\item We want to show that \( \mathsf{min}(c_{X\x Y}((x_1,y_1),(x_2,y_2)),c_{X\x Y}((x_2,y_2),(x_3,y_3)))\) \(\preceq c_{X\x Y}((x_1,y_1),(x_3,y_3))\) \(:= \mathsf{min}(\mathsf{min}(a,c),\mathsf{min}(b,d)) \preceq \mathsf{min}(e,f)\) holds, where \(a := c_X(x_1,x_2)\), \(b := c_X(x_2,x_3)\), \(c := c_Y(y_1,y_2)\) \(d := c_Y(y_2,y_3)\), \(e := c_X(x_1,x_3)\) and \(f := c_Y(y_1,y_3)\).
Using the assumption that the underlying closeness functions satisfy condition (iii), we have \(\mathsf{min}(a,b) \preceq e\) and \(\mathsf{min}(c,d) \preceq f\), and therefore \(\mathsf{min}(\mathsf{min}(a,b),\mathsf{min}(c,d)) \preceq \mathsf{min}(e,f)\). The result then follows by a rearrangement of the arguments on the left-hand side of the inequality.~\qedhere
\end{enumerate}
\end{proof}

The following lemma, which follows easily from \cref{def:prod-closeness}, is useful when working with closeness relations of binary product closeness spaces:

\begin{lemma}
\label{lem:x-C}
\thesislit{3}{ClosenessSpaces-Examples}{\urlx-C-left}
\thesislit{3}{ClosenessSpaces-Examples}{\urlx-C-right}
\thesislit{3}{ClosenessSpaces-Examples}{\urlx-C-combine}
Given closeness spaces \(X\) and \(Y\), the binary product closeness space \(X \x Y\) is such that, for any \(\ty{\varepsilon}{\N}\) and \(\ty{(x_1,y_1),(x_2,y_2)}{X \x Y}\), we have \[C_\varepsilon((x_1,y_1),(x_2,y_2)) \Leftrightarrow C_\varepsilon(x_1,x_2) \x C_\varepsilon(y_1,y_2) .\]
\end{lemma}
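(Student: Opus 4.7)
The plan is to unfold both sides of the biconditional according to the definitions and reduce the claim to a purely pointwise statement about the underlying binary digits. By \cref{def:C}, the left-hand side $C_\varepsilon((x_1,y_1),(x_2,y_2))$ is $\underline{\varepsilon} \preceq c_{X \x Y}((x_1,y_1),(x_2,y_2))$, which by \cref{def:prod-closeness} equals $\underline{\varepsilon} \preceq \mathsf{min}(c_X(x_1,x_2),c_Y(y_1,y_2))$. Setting $u := c_X(x_1,x_2)$ and $v := c_Y(y_1,y_2)$, the claim reduces to the purely order-theoretic lemma on $\Ni$:
\[ \underline{\varepsilon} \preceq \mathsf{min}(u,v) \ \Leftrightarrow \ (\underline{\varepsilon} \preceq u) \x (\underline{\varepsilon} \preceq v). \]

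For the forward direction, I assume $\underline{\varepsilon} \preceq \mathsf{min}(u,v)$ and aim to produce both $\underline{\varepsilon} \preceq u$ and $\underline{\varepsilon} \preceq v$. Given any $\ty{n}{\N}$ with $\underline{\varepsilon}_n = 1$, the hypothesis supplies $\mathsf{min}(u_n,v_n) = 1$; by the pointwise definition of $\mathsf{min}$ on $\Ni$ and the fact that $\mathsf{min}(a,b) = 1$ on $\2$ forces $a = 1$ and $b = 1$ (a trivial case analysis on $\ty{a,b}{\2}$), I conclude $u_n = 1$ and $v_n = 1$. For the backward direction, given $\underline{\varepsilon} \preceq u$ and $\underline{\varepsilon} \preceq v$ and any $n$ with $\underline{\varepsilon}_n = 1$, both hypotheses yield $u_n = 1$ and $v_n = 1$, hence $\mathsf{min}(u_n,v_n) = 1$, establishing $\underline{\varepsilon} \preceq \mathsf{min}(u,v)$.

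There is no real obstacle to speak of: the work is entirely in unfolding $C_\varepsilon$, $c_{X \x Y}$ and $\preceq$, after which the result is just the observation that the binary $\mathsf{min}$ computes logical conjunction. The only minor care needed is keeping the three separate parts of the statement (the equivalence and its two sides, giving three distinct functions in the formalisation) straight when discharging the components of the pair on the right.
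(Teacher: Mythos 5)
Your proposal is correct and follows exactly the paper's route: the paper's own (sketch) proof likewise unfolds \cref{def:C} and \cref{def:prod-closeness} to reduce the claim to showing \(\underline{\varepsilon} \preceq \mathsf{min}(c_X(x_1,x_2),c_Y(y_1,y_2))\) holds iff both \(\underline{\varepsilon} \preceq c_X(x_1,x_2)\) and \(\underline{\varepsilon} \preceq c_Y(y_1,y_2)\) hold. You simply supply the pointwise case analysis on \(\2\) that the paper leaves implicit.
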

\begin{proof}[Proof (Sketch).]
This is straightforward to show once we recognise, by using the definition of the binary product closeness function (\cref{def:prod-closeness}), that we only need to show that \(\underline \varepsilon \preceq \mathsf{min}(c_X(x_1,x_2),c_Y(y_1,y_2))\) holds if and only if both \(\underline \varepsilon \preceq c_X(x_1,x_2)\) and \(\underline \varepsilon \preceq c_Y(y_1,y_2)\) hold.
\end{proof}

\begin{lemma}
\label{lem:bin-totallybounded}
\thesislit{3}{ClosenessSpaces-Examples}{\urlx-totally-bounded}
Given totally bounded closeness spaces \(X\) and \(Y\), the binary product closeness space \(X \x Y\) is totally bounded.
\end{lemma}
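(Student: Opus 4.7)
The plan is to construct an $\varepsilon$-net for $X \times Y$ directly from the given $\varepsilon$-nets of $X$ and $Y$, using the product of the two nets.

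Fix $\varepsilon : \mathbb{N}$. By total boundedness of $X$ and $Y$, we obtain finite linearly ordered types $X'$ and $Y'$ together with functions $g_X : X' \to X$, $h_X : X \to X'$, $g_Y : Y' \to Y$, $h_Y : Y \to Y'$ such that $C_\varepsilon(x, g_X(h_X(x)))$ holds for all $x : X$ and $C_\varepsilon(y, g_Y(h_Y(y)))$ holds for all $y : Y$. I would take the candidate net to be $X' \times Y'$, with
\[ g_{X \times Y}(x', y') := (g_X(x'), g_Y(y')), \qquad h_{X \times Y}(x, y) := (h_X(x), h_Y(y)). \]

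There are two obligations. First, $X' \times Y'$ must be finite linearly ordered; this requires a small auxiliary lemma stating that the product of two finite linearly ordered types is finite linearly ordered, which amounts to exhibiting an equivalence $\mathsf{Fin}(n) \times \mathsf{Fin}(m) \simeq \mathsf{Fin}(n \cdot m)$ and using the fact (already present in the thesis style of reasoning) that equivalences compose, so that $X' \times Y' \simeq \mathsf{Fin}(n) \times \mathsf{Fin}(m) \simeq \mathsf{Fin}(n \cdot m)$. Second, I must verify that for every $(x, y) : X \times Y$,
\[ C_\varepsilon\bigl((x, y),\, (g_X(h_X(x)),\, g_Y(h_Y(y)))\bigr). \]
Using \cref{lem:x-C}, this reduces to the conjunction of $C_\varepsilon(x, g_X(h_X(x)))$ and $C_\varepsilon(y, g_Y(h_Y(y)))$, both of which hold by the assumed net properties.

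The main obstacle is the bookkeeping around finiteness of $X' \times Y'$: since our notion of finite linearly ordered type is structure rather than property (see \cref{sec:finite-linear-ordered}), we must explicitly produce the equivalence with $\mathsf{Fin}(n \cdot m)$ rather than merely invoke that finiteness is closed under products. Once that auxiliary lemma is in hand, the rest of the argument is an immediate application of \cref{lem:x-C} and so is essentially mechanical.
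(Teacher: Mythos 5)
Your proposal is correct and matches the paper's own proof essentially exactly: the same product net $X' \times Y'$ with the componentwise maps $g_{X\times Y}$ and $h_{X\times Y}$, with the closeness condition reduced via \cref{lem:x-C} to the two component conditions. Your explicit attention to the finiteness of $X' \times Y'$ (via $\mathsf{Fin}(n)\times\mathsf{Fin}(m)\simeq\mathsf{Fin}(n\cdot m)$) is a detail the paper's prose glosses over but which is indeed required in the formalisation.
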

\begin{proof}
For a given \(\ty{\varepsilon}{\N}\), both \(X\) and \(Y\) have \(\varepsilon\)-nets; i.e. there are types \(X'\) and \(Y'\) and functions \(\ty{g_X}{X' \to X}\), \(\ty{g_Y}{Y' \to Y}\), \(\ty{h_X}{X \to X'}\), \(\ty{h_Y}{Y \to Y'}\) such that for all \(\ty{x}{X}\) and \(\ty{y}{Y}\) we have \(C_\varepsilon(x,g_X(h_X(x)))\) and \(C_\varepsilon(y,g_Y(h_Y(y)))\).
To show that \(X \x Y\) has an \(\varepsilon\)-net, we define the following functions:
\begin{alignat*}{3}
g_{X \x Y} &: {X' \x Y'} \to {X \x Y}, \\
g_{X \x Y} &(x',y') := (g_X(x'),g_Y(y')) ,\\
h_{X \x Y} &: {X \x Y} \to {X' \x Y'}, \\
h_{X \x Y} &(x,y) := (h_X(x),h_Y(y)).
\end{alignat*}
We then need to show that for all \(\ty{(x,y)}{X \x Y}\) we have \(C_\varepsilon((x,y),g_{X \x Y}(h_{X \x Y}((x,y))))\) which, by definition of \(g_{X \x Y}\), \(h_{X \x Y}\) and \cref{def:prod-closeness}, reduces to \(\underline \varepsilon \preceq \mathsf{min}(c_X(x,g_X(h_X(x))),c_Y(y,g_Y(h_Y(y))))\).
This is by \cref{lem:x-C} and the fact that \((X',g_X)\) and \((Y',g_Y)\) are, respectively, \(\varepsilon\)-nets for \(X\) and \(Y\).
Therefore the type \(X' \x Y'\), equipped with the functions we defined above, is an \(\varepsilon\)-net for \(X + Y\). 
\end{proof}

We can use the binary product closeness function to define closeness functions for finite product types (i.e.\ dependent and non-dependent vectors).

\begin{corollary}
\label{cor:dvec-cspace}
Given an \((\ty{n}{\N})\)-size vector \(\ty{Y}{\F \ n \to \U}\) of closeness spaces, the type of \(n\)-size dependent vectors \(\F \ n \to Y_n\) is a closeness space.
\end{corollary}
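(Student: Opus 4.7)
The plan is to proceed by induction on $n$, mirroring the primitive recursion in the definition of $\mathsf{Vec}$ itself and reducing the inductive step to the already-established binary product construction (\cref{lem:bin-cspace}).

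For the base case $n := 0$, we have $\mathsf{Vec}(0, Y) := \1$, so it suffices to equip $\1$ with a closeness function. The natural choice is the constant function $c_\1(\star,\star) := \infty$, and I would verify that it satisfies all three conditions of \cref{def:cspace}: condition (i) holds because the only pair of elements is $(\star,\star)$, for which both sides of the biconditional are trivially inhabited (we have $\star = \star$ and $c_\1(\star,\star) = \infty$ by definition); condition (ii) is immediate from the symmetry of the definition; and condition (iii) reduces to showing $\mathsf{min}(\infty,\infty) \preceq \infty$, which is a direct instance of \cref{lem:preceq-bounds}.

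For the inductive case $n := n' + 1$, the definition unfolds as $\mathsf{Vec}(n'+1, Y) := Y_0 \x \mathsf{Vec}(n', \lambda i.\, Y_{i+1})$. The first factor $Y_0$ is a closeness space by the hypothesis of the corollary. The second factor $\mathsf{Vec}(n', \lambda i.\, Y_{i+1})$ is a closeness space by the inductive hypothesis applied to the re-indexed family $\lambda i.\, Y_{i+1}$, which is itself a vector of closeness spaces. The result then follows immediately by applying \cref{lem:bin-cspace} to these two closeness spaces.

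I do not anticipate any significant obstacles here: the proof is a routine induction that delegates the substantive work to \cref{lem:bin-cspace}. The only minor subtlety worth flagging in the \textsc{Agda} formalisation is ensuring that the re-indexed family $\lambda i.\, Y_{i+1}$ is correctly recognised as a vector of closeness spaces when invoking the inductive hypothesis, but this should follow directly from the definition of $\mathsf{Vec}$.
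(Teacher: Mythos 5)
Your proposal is correct and follows essentially the same route as the paper, which proves this corollary simply ``by induction and \cref{lem:bin-cspace}''; you have merely spelled out the base case (equipping \(\1\) with the constant-\(\infty\) closeness function, which is just the discrete closeness function on \(\1\)) and the re-indexing in the inductive step, both of which are exactly the details the paper leaves implicit.
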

\begin{proof} \axioms{f}
By induction and \cref{lem:bin-cspace}.
\end{proof}

\begin{corollary}
\label{cor:dvec-totallybounded}
Given an \((\ty{n}{\N})\)-size vector \(\ty{Y}{\F \ n \to \U}\) of totally bounded closeness spaces, the finite product closeness space of \(n\)-size dependent vectors \(\F \ n \to Y_n\) is totally bounded.
\end{corollary}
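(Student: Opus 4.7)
The plan is to mirror the structure of \cref{cor:dvec-cspace} and proceed by induction on \(\ty{n}{\N}\), exploiting the fact that the finite product closeness space on \(n\)-size dependent vectors is, by the definition of \(\mathsf{Vec}\) from \cref{sec:sequences}, built iteratively from binary products together with a base case of \(\1\). At each inductive step we will appeal to \cref{lem:bin-totallybounded}, which already does the real work for us: binary products of totally bounded closeness spaces are totally bounded.

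In the base case \(n := 0\), the type of \(0\)-size vectors is \(\1\), and it suffices to give an \(\varepsilon\)-net for every \(\ty{\varepsilon}{\N}\). The natural candidate is \(\1\) itself equipped with \(\ty{g := h := \mathsf{id}_\1}{\1 \to \1}\); the condition \(C_\varepsilon(\star, g(h(\star)))\) reduces to \(C_\varepsilon(\star,\star)\), which holds by reflexivity of the closeness relation (\cref{lem:C-eq}). So \(\1\) is trivially totally bounded.

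In the inductive case \(n := n' + 1\), the finite product closeness space on \(\F(n'+1) \to Y\) unfolds definitionally to \(Y_0 \x \left(\F(n') \to \lambda i.Y_{i+1}\right)\). The first factor \(Y_0\) is totally bounded by assumption, while the second factor is totally bounded by the inductive hypothesis applied to the \(n'\)-size vector \(\lambda i.Y_{i+1}\). \cref{lem:bin-totallybounded} then gives that the binary product is totally bounded, which is what we need.

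There is essentially no obstacle here; this corollary is a direct structural consequence of the inductive construction of \(\mathsf{Vec}\) together with the binary case already established. The only point that requires a small amount of care is aligning the definitional unfolding of the finite product closeness function on \((n'+1)\)-size vectors with the binary product closeness function of \cref{def:prod-closeness}, so that \cref{lem:bin-totallybounded} applies on the nose --- but this alignment is exactly the one used in the proof of \cref{cor:dvec-cspace}, so no new work is involved.
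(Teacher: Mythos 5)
Your proof is correct and follows exactly the route the paper takes: its entire proof of this corollary is ``By induction and \cref{lem:bin-totallybounded}'', which is precisely the argument you spell out. Your expanded treatment of the base case via reflexivity of the closeness relation and the definitional unfolding of \(\mathsf{Vec}\) into binary products is a faithful elaboration of that one-line proof.
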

\begin{proof}
By induction and \cref{lem:bin-totallybounded}.
\end{proof}

\begin{corollary}
\label{cor:vec-cspace} \axioms{f}
\thesislit{3}{ClosenessSpaces-Examples}{Vec-ClosenessSpace}
Given a closeness space \(X\), the type of \(n\)-size vectors \(\F \ n \to X\) is a closeness space.
\end{corollary}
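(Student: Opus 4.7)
The plan is to obtain this immediately from the dependent version \cref{cor:dvec-cspace}, which already provides the general result for dependent vectors of closeness spaces. A non-dependent \(n\)-size vector with values in \(X\) is literally a dependent vector in the constant family, so almost no real work remains.

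More concretely, I would define \(Y := \lambda (\ty{i}{\F \ n}).X\), which gives an \((\ty{n}{\N})\)-size vector of closeness spaces, each component being \(X\) itself (and hence trivially a closeness space by assumption). Applying \cref{cor:dvec-cspace} to this family then shows that the type \(\F \ n \to Y_i \equiv \F \ n \to X\) is a closeness space, with its closeness function obtained by iterating the binary product closeness function of \cref{def:prod-closeness}. The invocation of function extensionality is inherited from \cref{cor:dvec-cspace} (which in turn inherits it from \cref{lem:bin-cspace}), so the corollary carries the \axioms{f} label for the same reason.

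There is essentially no obstacle here: the only minor bookkeeping concern is ensuring that the substitution \(Y_i := X\) on the nose yields the expected non-dependent function type \(\F \ n \to X\), but this holds definitionally in \textsc{Agda}. Thus, the entire proof consists of unfolding the definition of \(Y\) and appealing directly to the dependent corollary; no new induction or closeness-function construction is required at this stage.
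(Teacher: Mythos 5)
Your proposal is correct and matches the paper's proof exactly: the paper also derives this as an immediate consequence of \cref{cor:dvec-cspace} by specialising to the constant family. Your additional remarks about the definitional identification of the constant dependent vector type with \(\F \ n \to X\) and the inherited use of function extensionality are accurate but not needed beyond the one-line appeal.
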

\begin{proof}
By \cref{cor:dvec-cspace}.
\end{proof}

\begin{corollary}
\label{cor:vec-totallybounded}
\thesislit{3}{ClosenessSpaces-Examples}{Vec-totally-bounded}
Given a totally bounded closeness space \(X\), the closeness space of \(n\)-size vectors \(\F \ n \to X\) is totally bounded.
\end{corollary}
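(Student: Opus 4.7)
The plan is to reduce this to the dependent case already handled in \cref{cor:dvec-totallybounded}. Recall that vectors \(\F\ n \to X\) are just the non-dependent special case of dependent vectors: setting \(Y := (\lambda(\ty{i}{\F\ n}).X)\) gives a vector of types whose dependent product coincides (definitionally) with \(\F\ n \to X\). So I will apply \cref{cor:dvec-totallybounded} to this constant family \(Y\).

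The only thing to check is that \(Y\) satisfies the hypothesis of \cref{cor:dvec-totallybounded}, namely that each \(Y_i\) is a totally bounded closeness space. But \(Y_i\) is just \(X\) for every \(\ty{i}{\F\ n}\), and \(X\) is totally bounded by assumption. Hence the hypothesis holds, and \cref{cor:dvec-totallybounded} immediately yields that \(\F\ n \to X\) is totally bounded.

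There is no real obstacle here; the lemma is essentially a specialisation. The only minor caveat is to make sure the closeness-space structure on \(\F\ n \to X\) used here (via \cref{cor:vec-cspace}) is the same as the one obtained by viewing the type as a dependent vector in \cref{cor:dvec-cspace} --- but by construction both are defined by iterating the binary product closeness function from \cref{def:prod-closeness}, so they agree definitionally, and the \(\varepsilon\)-nets supplied by \cref{cor:dvec-totallybounded} transfer directly.
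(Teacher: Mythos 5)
Your proposal is correct and matches the paper's proof, which also simply specialises \cref{cor:dvec-totallybounded} to the constant family \(Y := \lambda i.X\) (mirroring how \cref{cor:vec-cspace} specialises \cref{cor:dvec-cspace}). Your extra remarks about verifying the hypothesis and the agreement of the two closeness-space structures are just a more explicit spelling-out of the same one-line argument.
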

\begin{proof}
By \cref{cor:dvec-totallybounded}.
\end{proof}

\subsubsection{Subtype closeness functions.}

\begin{definition}
\label{def:subtype-closeness}
\thesislit{3}{ClosenessSpaces-Examples}{\urlhook-clospace}
Given a type \(X\), closeness space \(Y\) and a function \(\ty{f}{X \to Y}\), the \emph{subtype closeness function} \(\ty{c_{X}}{\closeness{X}}\) is defined by,
\[ c_X(x_1,x_2) := c_Y(f(x_1),f(x_2)), \]
\end{definition}

\begin{lemma}
\label{lem:embedding-cspace}
\thesislit{3}{ClosenessSpaces-Examples}{\urlhook-clospace}
Given a type \(X\), closeness space \(Y\) and function \(\ty{f}{X \to Y}\), the type \(X\) is a closeness space by the subtype closeness function if \(f\) is an embedding (\cref{def:embedding}).
\end{lemma}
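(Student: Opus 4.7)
The plan is to check the three conditions of Definition~\ref{def:cspace} for $c_X$ in turn, observing that conditions (ii) and (iii) transport directly along $f$ and are genuinely routine, so the embedding hypothesis will be needed only to complete one direction of condition (i).

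For condition (ii), symmetry of $c_X$ reduces immediately to $c_Y(f(x_1),f(x_2)) = c_Y(f(x_2),f(x_1))$, which is exactly symmetry of $c_Y$. For condition (iii), after unfolding the definition of $c_X$, the goal $\min(c_X(x_1,x_2), c_X(x_2,x_3)) \preceq c_X(x_1,x_3)$ is literally the ultrametric condition for $c_Y$ applied to $f(x_1), f(x_2), f(x_3)$. Neither direction uses any property of $f$ beyond being a function, so these steps are one-liners.

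The meat of the proof is condition (i). One direction is easy: if $x_1 = x_2$, then $\mathsf{ap}(f)$ yields $f(x_1) = f(x_2)$, and condition (i).(a) for $c_Y$ gives $c_Y(f(x_1),f(x_2)) = \infty$, i.e.\ $c_X(x_1,x_2) = \infty$. The hard part will be the reverse direction. Starting from $c_X(x_1,x_2) = \infty$, condition (i).(b) for $c_Y$ gives only $f(x_1) = f(x_2)$, which is weaker than what we want: without further assumptions, $f$ could identify distinct points of $X$, and $c_X$ would then fail to separate them. This is precisely where the hypothesis that $f$ is an embedding enters.

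To close this gap, I would use Definition~\ref{def:embedding}: the fibre $\sigmaty{x}{X}{f(x) = f(x_1)}$ is a proposition. Inside this fibre live the two elements $(x_1, \mathsf{refl}(f(x_1)))$ and $(x_2, \mathsf{sym}(q))$, where $q : f(x_1) = f(x_2)$ is obtained from $c_X(x_1,x_2) = \infty$ as above. Because the fibre is a proposition, these two dependent pairs are equal; applying $\mathsf{pr}_1$ to that identification yields $x_1 = x_2$, as required. Altogether this shows that the embedding property is both used and sufficient for condition (i).(b), and the lemma follows.
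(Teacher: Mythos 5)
Your proof is correct and follows the same route as the paper: conditions (ii) and (iii) and the forward half of (i) are inherited directly from \(c_Y\), and the reverse half of (i) is exactly where the embedding hypothesis is used to pass from \(f(x_1) = f(x_2)\) to \(x_1 = x_2\). The only difference is that you spell out the standard derivation of injectivity from the fibre-is-a-proposition definition of embedding, which the paper leaves implicit.
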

\begin{proof}
Conditions 2 and 3 of \cref{def:cspace} are immediate from the fact \(Y\) is a closeness space.
The same is true of one direction of the first condition (i.e.\ that \(x = y \to c_X(x,y) = \infty\)).

\vspace{1em}
For the other direction of the first condition, we have \(c_X(x,y) := c_Y(f(x),f(y)) = \infty\) and need to prove \(x = y\). 
By the same condition on \(Y\), we have \(f(x) = f(y)\), and so \(x = y\) follows from the fact \(f\) is an embedding.
\end{proof}

\begin{corollary}
\label{cor:sigma-cspace}
\thesislit{3}{ClosenessSpaces-Examples}{\urlhook-ClosenessSpace}
Given a closeness space \(X\) and truth-valued function \(\ty{P}{X \to \Omega}\), the type \(\sigmaty{x}{X}{P(x)}\) is a closeness space.
\end{corollary}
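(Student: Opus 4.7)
The plan is to derive this corollary directly from the preceding \cref{lem:embedding-cspace}, which already tells us that any type admitting an embedding into a closeness space inherits a closeness-space structure via the subtype closeness function. So the entire task reduces to exhibiting a canonical embedding from $\sigmaty{x}{X}{P(x)}$ into $X$ and then invoking that lemma.

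First, I would take the function in question to be the first projection $\ty{\mathsf{pr}_1}{\sigmatye{x}{X}{P(x)} \to X}$. As mentioned in the discussion immediately after \cref{def:embedding}, this projection is an embedding precisely when each $P(x)$ is a proposition. Here $P$ takes values in $\Omega$, meaning $P(x)$ is definitionally a pair of a type in $\U$ together with a proof that it is a proposition, so the hypothesis is built into the type signature. To make this fully formal I would unpack the fibre $\sigmatye{(x,p)}{\sigmatye{x}{X}{P(x)}}{\mathsf{pr}_1(x,p) = y}$ over an arbitrary $\ty{y}{X}$, observe it is equivalent to $P(y)$ (with the equality contracted in the usual way), and then conclude it is a proposition using that $P(y) : \Omega$, applying \cref{lem:equiv-prop} if needed to transport the proposition property across this equivalence.

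With the embedding in hand, the closeness-space structure on $\sigmaty{x}{X}{P(x)}$ is obtained by instantiating \cref{lem:embedding-cspace} with the closeness space $X$, the type $\sigmaty{x}{X}{P(x)}$, and the embedding $\mathsf{pr}_1$. Concretely, the closeness function is $c_{\Sigma}((x_1,p_1),(x_2,p_2)) := c_X(x_1,x_2)$, and all three axioms of \cref{def:cspace} are supplied by the lemma.

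I do not anticipate a serious obstacle here: the only step that requires any real care is recognising that $\mathsf{pr}_1$ is an embedding, which is already standard in univalent mathematics and follows from the proposition-valuedness of $P$. The contraction of the fibre may require function extensionality or just a direct $\Sigma$-manipulation, but nothing beyond what has already been used elsewhere in this chapter. Everything else is a direct appeal to results already proved.
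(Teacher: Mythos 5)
Your proposal is correct and follows exactly the route the paper takes: instantiate \cref{lem:embedding-cspace} with the first projection \(\mathsf{pr}_1\), which is an embedding because \(P\) is proposition-valued. The extra detail you give about contracting the fibre over \(\ty{y}{X}\) to \(P(y)\) is a fine elaboration of the same one-line argument.
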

\begin{proof}
By \cref{lem:embedding-cspace} because \(\mathsf{pr}_1\) is an embedding.
\end{proof}

\begin{corollary}
\label{cor:Ni-cspace}
\thesislit{3}{ClosenessSpaces-Examples}{\urlN\urlinfty-ClosenessSpace}
\(\Ni\) is a closeness space.
\end{corollary}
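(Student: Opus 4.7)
The plan is to apply \cref{cor:sigma-cspace}, which turns any closeness space paired with a proposition-valued predicate into a closeness space on the corresponding $\Sigma$-subtype. Since $\Ni := \sigmatye{\alpha}{\seq \2}{\mathsf{is{\hy}decreasing}(\alpha)}$, it suffices to equip $\seq \2$ with a closeness-space structure and to verify that $\mathsf{is{\hy}decreasing}$ takes values in $\Omega$.

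The propositionality of $\mathsf{is{\hy}decreasing}$ is easy: unfolding gives a $\Pi$-type over $\N$ of pointwise inequalities $\alpha_{n+1} \leq \alpha_n$ in $\2$. Since $\2$ is a set (\cref{lem:fin-set}), each such inequality is a proposition, so by \cref{lem:pi-prop} (with function extensionality) the full $\Pi$-type is a proposition.

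The harder subgoal is the closeness-space structure on $\seq \2$, which has not been established earlier in the excerpt, since only \emph{finite} products of closeness spaces are shown to be closeness spaces (\cref{cor:vec-cspace}). I would define the closeness function directly: let $c_{\seq \2}(\alpha, \beta) \in \Ni$ be the binary sequence whose $n$-th bit encodes the decidable proposition $\alpha \sim^{n+1} \beta$, where decidability comes from \cref{lem:sim-decidable} and the discreteness of $\2$ (\cref{lem:fin-discrete}). This sequence is decreasing because agreement on the first $n+2$ positions implies agreement on the first $n+1$. Conditions (ii) and (iii) of \cref{def:cspace} then follow routinely from the symmetry and transitivity of $\sim^n$ (\cref{lem:sim-eq}), and condition (i).(a) follows from its reflexivity.

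The main obstacle is condition (i).(b): from $c_{\seq \2}(\alpha, \beta) = \infty$ one first extracts that $\alpha \sim^{n+1} \beta$ for every $n$, equivalently $\alpha_i = \beta_i$ at every index $i$, and one then closes the gap to $\alpha = \beta$ via function extensionality. With $\seq \2$ thus exhibited as a closeness space, \cref{cor:sigma-cspace} delivers that $\Ni$ is a closeness space, as required.
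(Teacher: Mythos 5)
Your proof is correct and follows essentially the same route as the paper, whose proof of \cref{cor:Ni-cspace} reads exactly ``by \cref{cor:sigma-cspace} and the later \cref{cor:disseq-cspace} applied to \(\2\)''. The only difference is that the closeness-space structure on \(\seq\2\) which you construct by hand (the \(n\)\textsuperscript{th} bit encoding the decidable agreement \(\alpha \sim^{n+1} \beta\)) is precisely the discrete-sequence closeness function of \cref{def:disseq-closeness}, already shown to give a closeness space in \cref{lem:disseq-cspace,cor:disseq-cspace} --- these appear later in the section than \cref{cor:Ni-cspace}, hence the paper's forward reference, so this ``harder subgoal'' did not actually need to be re-derived.
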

\begin{proof} \axioms{f}
By \cref{cor:sigma-cspace} and the later \cref{cor:disseq-cspace} applied to \(\2\).
\end{proof}

\begin{corollary}
\label{cor:equiv-closeness}
\thesislit{3}{ClosenessSpaces-Examples}{\urlsimeq-ClosenessSpace}
Given a type \(X\) and closeness space \(Y\) such that \(X \simeq Y\), the type \(X\) is a closeness space.
\end{corollary}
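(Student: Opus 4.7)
The plan is to reduce this to \cref{lem:embedding-cspace}, which already shows that any type $X$ equipped with an embedding into a closeness space $Y$ inherits a closeness space structure via the subtype closeness function of \cref{def:subtype-closeness}. So it suffices to extract from the assumed equivalence $X \simeq Y$ a function $\ty{f}{X \to Y}$ that is an embedding in the sense of \cref{def:embedding}.

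First I would unfold the equivalence: by \cref{def:simeq} we have $\ty{f}{X \to Y}$ together with $\ty{g,h}{Y \to X}$ satisfying $f \circ g \sim \mathsf{id}_Y$ and $h \circ f \sim \mathsf{id}_X$. The key observation is that every equivalence is an embedding, i.e.\ each fibre $\sigmaty{x}{X}{f(x) = y}$ is a proposition (in fact contractible). This is a standard fact of univalent foundations and is available in \textsc{TypeTopology}; intuitively, if $f(x_1) = y = f(x_2)$ then applying $h$ and using $h \circ f \sim \mathsf{id}_X$ yields $x_1 = x_2$, and a further diagram chase shows the associated fibre identifications are forced to agree. Once $f$ is known to be an embedding, \cref{lem:embedding-cspace} immediately supplies a closeness function on $X$.

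Alternatively, and perhaps more directly, one could bypass the general embedding lemma and simply define $c_X(x_1, x_2) := c_Y(f(x_1), f(x_2))$ and verify the three conditions of \cref{def:cspace} by hand. Conditions (ii) and (iii) are inherited verbatim from $c_Y$, and one direction of (i) (namely $x_1 = x_2 \to c_X(x_1,x_2) = \infty$) is immediate from the analogous property of $c_Y$ together with $\mathsf{ap}(f)$. The only condition requiring the equivalence structure is the converse $c_X(x_1,x_2) = \infty \to x_1 = x_2$: from $c_Y(f(x_1),f(x_2)) = \infty$ we obtain $f(x_1) = f(x_2)$ by condition (i) of $Y$, and then applying $h$ gives $h(f(x_1)) = h(f(x_2))$, which by the homotopy $h \circ f \sim \mathsf{id}_X$ yields $x_1 = x_2$.

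The main obstacle — if one takes the first route — is simply citing or reproving that equivalences are embeddings; this is routine in univalent mathematics but is not spelled out in the excerpt. The second route avoids this entirely at the cost of repeating the closeness-space verification, and is probably the shortest path in practice since it uses only the left inverse $h$ of $f$.
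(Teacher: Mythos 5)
Your first route is exactly the paper's proof: it appeals to \cref{lem:embedding-cspace} after observing that the equivalence \(\ty{f}{X \to Y}\) is an embedding, a standard fact available in \textsc{TypeTopology}. Your alternative direct verification is also sound (and correctly isolates the left inverse as the only place the equivalence structure is needed), but the paper takes the shorter first route.
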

\begin{proof}
By \cref{lem:embedding-cspace} because the equivalence \(\ty{f}{X \to Y}\) is an embedding.
\end{proof}

\begin{lemma}
\thesislit{3}{ClosenessSpaces-Examples}{\urlsimeq-totally-bounded}
Given a type \(X\) and totally bounded closeness space \(Y\) such that \(X \simeq Y\), the equivalent closeness space \(X\) is totally bounded.
\end{lemma}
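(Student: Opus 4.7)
The plan is to reuse the $\varepsilon$-net structure of $Y$ by transporting it across the equivalence. Let $\ty{f}{X \to Y}$ be the equivalence underlying $X \simeq Y$, with quasi-inverse $\ty{g}{Y \to X}$ satisfying $f \circ g \sim \mathsf{id}_Y$ (the other side, $g \circ f \sim \mathsf{id}_X$, will not actually be needed). Recall that the closeness function on $X$ is the subtype closeness function $c_X(x_1,x_2) := c_Y(f(x_1),f(x_2))$ (\cref{def:subtype-closeness}), so in particular $C_\varepsilon(x_1,x_2) \Leftrightarrow C_\varepsilon(f(x_1),f(x_2))$ for every $\ty{\varepsilon}{\N}$.

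Fix $\ty{\varepsilon}{\N}$. Since $Y$ is totally bounded, we obtain a finite linearly ordered type $Y'$ together with functions $\ty{g_Y}{Y' \to Y}$ and $\ty{h_Y}{Y \to Y'}$ such that $C_\varepsilon(y,g_Y(h_Y(y)))$ for every $\ty{y}{Y}$. I then take $Y'$ itself as the indexing type of the candidate $\varepsilon$-net of $X$, and define
\begin{alignat*}{3}
g_X &: Y' \to X, \qquad & g_X(y') &:= g(g_Y(y')), \\
h_X &: X \to Y', & h_X(x) &:= h_Y(f(x)).
\end{alignat*}

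What remains is to check that $C_\varepsilon(x,g_X(h_X(x)))$ for every $\ty{x}{X}$. Unfolding via the subtype closeness function this is $C_\varepsilon(f(x),f(g(g_Y(h_Y(f(x))))))$. The right-hand argument simplifies using $f \circ g \sim \mathsf{id}_Y$ applied at $g_Y(h_Y(f(x)))$, giving the equality $f(g(g_Y(h_Y(f(x))))) = g_Y(h_Y(f(x)))$; transporting $C_\varepsilon$ along this equality reduces the goal to $C_\varepsilon(f(x),g_Y(h_Y(f(x))))$, which is exactly the $\varepsilon$-net property of $(Y',g_Y,h_Y)$ instantiated at the point $\ty{f(x)}{Y}$.

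The step requiring the most care is the last one: it is tempting to try to match the net data on $X$ directly, but the definition of $c_X$ forces us to pass to $Y$ and then eliminate the occurrence of $f \circ g$ by transport. No other ingredient is subtle --- $Y'$ is automatically a finite linearly ordered type of the right shape, and neither function extensionality, propositional extensionality, nor univalence is needed.
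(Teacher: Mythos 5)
Your proposal is correct and follows essentially the same route as the paper: the net $Y'$ with $g \circ g_Y$ and $h_Y \circ f$ is exactly the construction used there, relying on the same two ingredients (the $\varepsilon$-net property of $Y$ and $f \circ g \sim \mathsf{id}_Y$). The only cosmetic difference is in the final step, where you transport $C_\varepsilon$ along the equality $f(g(g_Y(h_Y(f(x))))) = g_Y(h_Y(f(x)))$, while the paper converts that equality into an $\varepsilon$-closeness and concludes by transitivity of $C_\varepsilon$ --- the two are interchangeable.
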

\begin{proof}
For a given \(\ty{\varepsilon}{\N}\), both \(Y\) has an \(\varepsilon\)-nets; i.e. there is a type \(Y'\) and functions \(\ty{g_\varepsilon}{Y' \to Y}\) and \(\ty{h_\varepsilon}{Y \to Y'}\) such that for all \(\ty{y}{Y}\) we have \(C_\varepsilon(y,g_\varepsilon(h_\varepsilon(y)))\).
Furthermore, by \(X \simeq Y\) there is an equivalence \(\ty{f_\simeq}{X \to Y}\), and thus there is a function \(\ty{g_\simeq}{Y \to X}\) such that \(f_\simeq \circ g_\simeq \sim \mathsf{id}_Y\).
We will now show that \(Y'\), equipped with \(\ty{g := g_\simeq \circ g_\varepsilon}{Y' \to X}\), is an \(\varepsilon\)-net for \(X\): meaning there is some function \(\ty{h}{X \to Y'}\) such that for all \(\ty{x}{X}\) we have \(C_\varepsilon(f_\simeq(x),(f_\simeq \circ g \circ h)(x))\) (by \cref{def:subtype-closeness}). 

\vspace{1em}
This function is defined \(h := h_\varepsilon \circ f\), and we give \(C_\varepsilon(f_\simeq(x),(f_\simeq \circ g_\simeq \circ g_\varepsilon \circ h_\varepsilon \circ f_\simeq)(x))\) by transitivity of \(C_\varepsilon\) (\cref{lem:C-eq}) after first proving (i) \(C_\varepsilon(f_\simeq(x),(g_\varepsilon \circ h_\varepsilon \circ f_\simeq)(x))\) and (ii) \(C_\varepsilon((g_\varepsilon \circ h_\varepsilon \circ f_\simeq)(x),(f_\simeq \circ g_\simeq \circ g_\varepsilon \circ h_\varepsilon \circ f_\simeq)(x))\):
\begin{enumerate}[(i)]
\item holds because \((Y',g_\varepsilon)\) is an \(\varepsilon\)-net for \(Y\),
\item holds because \(f_\simeq \circ g_\simeq \sim \mathsf{id}_Y\), and therefore \((g_\varepsilon \circ h_\varepsilon \circ f_\simeq)(x) = (f_\simeq \circ g_\simeq \circ g_\varepsilon \circ h_\varepsilon \circ f_\simeq)(x)\), which in turn (by \cref{cor:equiv-closeness}) yields \(C_n((g_\varepsilon \circ h_\varepsilon \circ f_\simeq)(x),(f_\simeq \circ g_\simeq \circ g_\varepsilon \circ h_\varepsilon \circ f_\simeq)(x))\) for all \(\ty{n}{\N}\).
\end{enumerate}
\end{proof}

\subsubsection{Discrete-sequence closeness functions.}

Recalling the definition of sequence prefix equality from \cref{sec:sequences}, we define the closeness function on discrete sequence types.

\begin{definition}
\label{def:disseq-closeness}
\thesislit{3}{ClosenessSpaces-Examples}{discrete-seq-clofun}
Given an \(\N\)-indexed type family \(\ty{D}{\N \to \U}\) of discrete types, the \emph{discrete-sequence closeness function} \(\ty{c_{\Pi D}}{\closeness{\Pi D}}\) is defined at each point \(\ty{n}{\N}\) by case analysis of \(\ty{d(\alpha,\beta,n+1)}{\mathsf{decidable}(\alpha \sim^n \beta)}\) for arguments \(\ty{\alpha,\beta}{\Pi D}\):
\begin{itemize}
\item If \(\alpha \sim^{n+1} \beta\) then \(c_{\Pi D}(\alpha,\beta)_n := 1\),
\item If \(\neg (\alpha \sim^{n+1} \beta)\) then \(c_{\Pi D}(\alpha,\beta)_n := 0\).
\end{itemize}
\end{definition}

\noindent
This is decreasing, because \(\neg (\alpha \sim^n \beta)\) implies \(\neg (\alpha \sim^m \beta)\) for all \(\ty{n,m}{\N}\) such that \(m > n\).

\begin{lemma}
\label{lem:disseq-cspace}
\thesislit{3}{ClosenessSpaces-Examples}{\urlPia{}D-ClosenessSpace}
The product of an \(\N\)-indexed type family of discrete types is a closeness space by the discrete-sequence closeness function.
\end{lemma}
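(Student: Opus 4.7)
The plan is to verify the three conditions of \cref{def:cspace} for the discrete-sequence closeness function by reducing each one pointwise to the corresponding property of the prefix-equivalence relation \(\sim^n\) established in \cref{lem:sim-eq}. Two of the three conditions will require function extensionality: once to equate pointwise-equal sequences in \(\Pi D\) and once to equate pointwise-equal decreasing binary sequences in \(\Ni\).

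For condition (1), the forward direction is straightforward: assuming \(\alpha = \beta\), reflexivity of \(\sim^{n+1}\) gives \(\alpha \sim^{n+1} \beta\) at every \(n\), so \(c_{\Pi D}(\alpha,\beta)_n = 1\) for all \(n\), and we conclude \(c_{\Pi D}(\alpha,\beta) = \infty\) by function extensionality on the underlying binary sequence. For the reverse direction, from \(c_{\Pi D}(\alpha,\beta) = \infty\) we extract \(\alpha \sim^{n+1} \beta\) for every \(n\); specialising at each index \(i\) (with \(n := i\)) yields \(\alpha_i = \beta_i\), and a second use of function extensionality concludes \(\alpha = \beta\). Condition (2) is immediate: symmetry of \(\sim^{n+1}\) gives pointwise equality of \(c_{\Pi D}(\alpha,\beta)\) and \(c_{\Pi D}(\beta,\alpha)\) as binary sequences, and function extensionality lifts this to equality in \(\Ni\).

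For the ultrametric condition (3), I would unfold \(\preceq\) and reason pointwise. Fix \(\ty{n}{\N}\) and assume \(\mathsf{min}(c_{\Pi D}(\alpha,\beta),c_{\Pi D}(\beta,\gamma))_n = 1\); by definition of \(\mathsf{min}\) on binary digits this splits as \(c_{\Pi D}(\alpha,\beta)_n = 1\) and \(c_{\Pi D}(\beta,\gamma)_n = 1\), i.e., \(\alpha \sim^{n+1} \beta\) and \(\beta \sim^{n+1} \gamma\). Transitivity of \(\sim^{n+1}\) then yields \(\alpha \sim^{n+1} \gamma\), which by \cref{def:disseq-closeness} is exactly \(c_{\Pi D}(\alpha,\gamma)_n = 1\). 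Note that condition (3) does \emph{not} require function extensionality, since \(\preceq\) is already defined pointwise.

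The only mildly subtle step is recognising that the case analysis in \cref{def:disseq-closeness} produces a definitionally decreasing binary sequence whose value at index \(n\) reflects precisely the decidable proposition \(\alpha \sim^{n+1} \beta\); once this bridge is in place, each of the three closeness-space axioms is a direct transport of the corresponding property of \(\sim^n\) from \cref{lem:sim-eq}, with function extensionality \axioms{f} invoked only to upgrade pointwise equalities of binary sequences to identities in \(\Ni\) and \(\Pi D\).
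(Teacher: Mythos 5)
Your proposal is correct and follows essentially the same route as the paper's own proof: each of the three closeness-space axioms is reduced pointwise to reflexivity, symmetry, and transitivity of the decidable prefix relation \(\sim^{n+1}\), with function extensionality invoked only to upgrade the pointwise statements to identities in \(\Pi D\) and \(\Ni\). Your explicit remark that condition (3) needs no extensionality, and your identification of the decidability of \(\sim^{n+1}\) as the bridge between the closeness value and the prefix relation, match the paper's (sketched) argument exactly.
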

\begin{proof}[Proof (Sketch).] \axioms{f}
We prove each of the three conditions of \cref{def:cspace}:
\begin{enumerate}
\item In the direction where we have \(c_{\Pi D}(\alpha,\beta) = \infty\), we use the decidability of \(\alpha \sim^{n+1} \beta\) (\cref{lem:sim-decidable}) to ascertain that \(\alpha \sim^{n+1} \beta\) for all \(\ty{n}{\N}\) (because, if this were not the case, \(c_{\Pi D}(\alpha,\beta)_n\) would equal \(0\)). Therefore, we immediately have that \(\alpha \sim \beta\) and (by function extensionality) \(\alpha = \beta\). The other direction is trivial.

\item Given any \(\ty{n}{\N}\), in the case where \(\alpha \sim^{n+1} \beta\), by symmetry of prefix equality (\cref{lem:sim-eq}) we also have \(\beta \sim^{n+1} \alpha\), and therefore \(c(\alpha,\beta)_n = c(\beta,\alpha)_n\). The same is true in the case where \(\neg(\alpha \sim^{n+1} \beta)\).

\item Given any \(\ty{n}{\N}\) we need to show that if \(\mathsf{min}(c(\alpha,\beta)_n,c(\beta,\zeta)_n) = 1\) then \(c(\alpha,\zeta)_n = 1\). The assumption means that both \(c(\alpha,\beta)_n\) and \(c(\beta,\zeta)_n\) are \(1\); hence, \(\alpha \sim^{n+1} \beta\) and \(\beta \sim^{n+1} \zeta\) must hold.
The result then follows by transitivity of prefix equality (\cref{lem:sim-eq}).
\end{enumerate}
\end{proof}

\begin{lemma}
\label{lem:disseq-totallybounded}
\thesislit{3}{ClosenessSpaces-Examples}{\urlPia{}F-totally-bounded}
Given an \(\N\)-indexed type family \(\ty{F}{\N \to \U}\) of pointed finite linearly ordered types, the discrete-sequence closeness space \(\Pi F\) is totally bounded.
\end{lemma}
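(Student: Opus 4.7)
The plan is to build, for each precision $\varepsilon : \N$, an explicit $\varepsilon$-net whose underlying type is the finite product of the first $\varepsilon$ fibres of $F$. First I would unpack what $C_\varepsilon$ means in the discrete-sequence closeness space: by \cref{def:C} and \cref{def:disseq-closeness}, $C_\varepsilon(\alpha,\beta)$ amounts to $c_{\Pi F}(\alpha,\beta)_n = 1$ for every $n < \varepsilon$, which in turn says $\alpha \sim^{n+1} \beta$ for every such $n$. In other words, $C_\varepsilon(\alpha,\beta)$ holds exactly when $\alpha$ and $\beta$ agree on their first $\varepsilon$ coordinates (vacuously so for $\varepsilon = 0$, which also matches \cref{remark:0-close}).

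Given this, the candidate net is the type of length-$\varepsilon$ dependent vectors $V_\varepsilon := \mathsf{Vec}(\varepsilon, F)$. I would show $V_\varepsilon$ is pointed finite linearly ordered by induction on $\varepsilon$: $V_0 = \1$ is trivially so, and for the step $V_{\varepsilon+1} = F_0 \times V_{\varepsilon+1}'$ (where $V_{\varepsilon+1}'$ is the tail) I use that products of pointed finite linearly ordered types are again pointed finite linearly ordered --- pointedness follows from the assumption that each $F_i$ is pointed, and the finiteness claim reduces to producing an equivalence $\F(n) \times \F(m) \simeq \F(nm)$, which is a standard construction. The net map $g_\varepsilon : V_\varepsilon \to \Pi F$ takes a vector $(x_0,\dots,x_{\varepsilon-1})$ and extends it by the chosen basepoint of $F_i$ in every coordinate $i \geq \varepsilon$; and the inverse-direction map $h_\varepsilon : \Pi F \to V_\varepsilon$ simply reads off the first $\varepsilon$ coordinates of $\alpha$.

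It then remains to verify that $C_\varepsilon(\alpha, g_\varepsilon(h_\varepsilon(\alpha)))$ for every $\alpha : \Pi F$. By the unpacking in the first paragraph it suffices to show $\alpha \sim^\varepsilon g_\varepsilon(h_\varepsilon(\alpha))$, and this is immediate from the definitions of $g_\varepsilon$ and $h_\varepsilon$, since both sequences share exactly the first $\varepsilon$ coordinates of $\alpha$. A short induction on $i < \varepsilon$ handles the bookkeeping.

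The only genuinely non-trivial ingredient is the closure of pointed finite linearly ordered types under binary products (used to show $V_\varepsilon$ is a valid domain for the net, since \cref{def:totallybounded} and \cref{def:cover} require the net to live in a finite linearly ordered type). I expect this to be the main obstacle, but it is routine: one either cites a lemma like \cref{lem:fin-set}/\cref{lem:fin-discrete}-style closure properties, or constructs the equivalence $\F(n) \times \F(m) \simeq \F(nm)$ by hand. Everything else is a direct verification against the closeness-function definition.
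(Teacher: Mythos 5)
Your proposal is correct and takes essentially the same route as the paper: the $\varepsilon$-net is the type of $\varepsilon$-sized dependent vectors over the prefix of $F$, with $g$ extending a vector by basepoints and $h$ truncating to the $\varepsilon$-prefix, and the verification reduces to prefix agreement via the characterisation of $C_\varepsilon$. Your extra care about why the vector type is itself finite linearly ordered (closure under binary products, e.g.\ via $\F(n)\times\F(m)\simeq\F(nm)$) is a detail the paper's sketch leaves implicit but which is indeed needed for \cref{def:cover}.
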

\begin{proof}
For any \(\ty{\varepsilon}{\N}\), the closeness function considers only the \(\varepsilon\)-prefix of the sequences.
Therefore, the \(\varepsilon\)-net is the type of \(\varepsilon\)-sized dependent vectors \(\mathsf{Vec}(\varepsilon,F_{0,...,\varepsilon-1})\), where \(F_{0,...,\varepsilon-1}\) is the finite-indexed type family that is the \(\varepsilon\)-prefix of \(F\).
We define the two maps for the net:  \(\ty{g}{\mathsf{Vec}(\varepsilon,\ F_{0,...,\varepsilon-1}) \to \Pi F}\) is the function that outputs the sequence that is the \(\varepsilon\)-sized vector followed by repeating arbitrary elements of \(F_{\varepsilon,...}\) (which we attain from the pointedness of all types in \(F\)), while \(\ty{h}{\Pi F \to \mathsf{Vec}(\varepsilon,F_{0,...,\varepsilon-1})}\) is the function that gives the \(\varepsilon\)-prefix of the sequence.
\end{proof}

\begin{lemma}
\label{lem:disseq-C-equiv}
\thesislit{3}{ClosenessSpaces-Examples}{\urlsim\urlsuperscriptn-to-C}
\thesislit{3}{ClosenessSpaces-Examples}{C-to-\urlsim\urlsuperscriptn}
Given an \(\N\)-indexed type family \(\ty{D}{\N \to \U}\) of discrete types and two sequences \(\ty{\alpha,\beta}{\Pi D}\), the types \(C_n(\alpha,\beta)\) (derived from \cref{lem:disseq-cspace}) and \(\alpha \sim^n \beta\) are propositionally equivalent for all \(\ty{n}{\N}\).
\end{lemma}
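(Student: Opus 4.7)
The plan is to unfold all three definitions in sight and exploit the basic fact that ${\underline n}_k = 1$ holds precisely when $k < n$. Recall that $C_n(\alpha,\beta)$ unfolds to $\Pity{k}{\N}{{\underline n}_k = 1 \to c_{\Pi D}(\alpha,\beta)_k = 1}$, while the $k$-th bit $c_{\Pi D}(\alpha,\beta)_k$ is, by \cref{def:disseq-closeness}, exactly $1$ iff $\alpha \sim^{k+1} \beta$. Thus the two statements differ only by an off-by-one shift in the prefix length and a re-indexing of the quantifier, and the equivalence should be a matter of bookkeeping.

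For the forward direction ($\alpha \sim^n \beta \to C_n(\alpha,\beta)$), I assume $\ty{q}{\alpha \sim^n \beta}$ and a $\ty{k}{\N}$ with ${\underline n}_k = 1$, which tells us $k < n$. I need $c_{\Pi D}(\alpha,\beta)_k = 1$, and by the definition of the discrete-sequence closeness function it suffices to exhibit $\alpha \sim^{k+1} \beta$. Given any $i < k+1$, we have $i < n$ (since $k+1 \leq n$), so $\alpha_i = \beta_i$ follows directly from $q$.

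For the backward direction ($C_n(\alpha,\beta) \to \alpha \sim^n \beta$), I take $\ty{q}{C_n(\alpha,\beta)}$ and a given $i < n$, i.e.\ ${\underline n}_i = 1$. Then $q(i)$ yields $c_{\Pi D}(\alpha,\beta)_i = 1$. Using decidability of prefix equality (\cref{lem:sim-decidable}), I case split on $\alpha \sim^{i+1} \beta$: in the positive case I immediately obtain $\alpha_i = \beta_i$ by specialising the prefix equality to index $i$; in the negative case, the definition of $c_{\Pi D}$ forces $c_{\Pi D}(\alpha,\beta)_i = 0$, contradicting $q(i)$, so this case is eliminated by $\zelim$.

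There is no real obstacle here --- the only subtlety is keeping the ``$+1$'' bookkeeping straight between the two definitions and remembering that the case analysis inside $c_{\Pi D}$ is what lets us go from the Boolean value $1$ back to the prefix-equality witness. In the formalisation, this case analysis is handled uniformly by \cref{lem:sim-decidable}, which turns the Boolean information stored in the closeness function back into a usable proof (or refutation) of $\sim^{i+1}$.
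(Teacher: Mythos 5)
Your proof is correct and matches the paper's argument in substance: both directions hinge on the same key fact that \(c_{\Pi D}(\alpha,\beta)_k = 1\) holds exactly when \(\alpha \sim^{k+1} \beta\), recovered via the decidability of prefix equality (\cref{lem:sim-decidable}). The paper's sketch packages this as a case analysis on \(n\) (reading off only the top bit \(c_{\Pi D}(\alpha,\beta)_{n-1}\) to get \(\alpha \sim^{n} \beta\) in one step), whereas you unfold the \(\preceq\) quantifier and argue index by index, but this is only an organisational difference, not a different proof.
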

\begin{proof}[Proof (Sketch).]
By induction on \(\ty{n}{\N}\), though both \(C_0(\alpha,\beta)\) and \(\alpha \sim^0 \beta\) are trivially true, and so we only need to consider the case where \(n := n' + 1\) for some \(\ty{n'}{\N}\).

If \(C_{n+1}(\alpha,\beta) := \underline {n+1} \preceq c_{\Pi D}(\alpha,\beta)\) then clearly \(c_{\Pi D}(\alpha,\beta)_n = 1\); therefore by definition of the discrete-sequence closeness function (\cref{def:disseq-closeness}) and the decidability of \(\sim^{n+1}\) (\cref{lem:sim-decidable}), \(\alpha \sim^n \beta\) must hold.
The opposite direction follows similarly, but by the decidability of \(C_{n+1}\) (\cref{lem:C-dec}).
\end{proof}

Using the above, we define the non-dependent case; i.e.\ closeness spaces on types of sequences on a single discrete type.

\begin{corollary}
\label{cor:disseq-cspace}
\thesislit{3}{ClosenessSpaces-Examples}{\urlN\urlto{}D-ClosenessSpace}
\thesislit{3}{ClosenessSpaces-Examples}{\urlN\urlto{}F-totally-bounded}
The type of sequences on any finite linearly ordered type is a totally bounded closeness space.
\end{corollary}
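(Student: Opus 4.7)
The plan is to reduce both parts of the claim—that $\N \to F$ is a closeness space and that it is totally bounded—to the already-established dependent versions (Lemmas \ref{lem:disseq-cspace} and \ref{lem:disseq-totallybounded}) by instantiating them with the constant type family $\ty{D}{\N \to \U}$ defined by $D_n := F$ for every $\ty{n}{\N}$. Given a finite linearly ordered type $F$, Lemma \ref{lem:fin-discrete} tells us $F$ is discrete, so each $D_n$ is discrete, and Lemma \ref{lem:disseq-cspace} immediately gives that $\Pi D = (\N \to F)$ is a closeness space under the discrete-sequence closeness function.

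For total boundedness, the dependent lemma \ref{lem:disseq-totallybounded} additionally requires each type in the family to be pointed, which $F$ need not be in general. Since $F$ is finite linearly ordered, there is some $\ty{n}{\N}$ with $F \simeq \F(n)$, and I would split on $n$. If $n := n' + 1$, then $F \simeq \F(n') + \1$ is pointed (pick $\inr(\star)$ as the distinguished element transported along the equivalence), so the constant family $\lambda n.F$ is a family of pointed finite linearly ordered types and Lemma \ref{lem:disseq-totallybounded} applies directly.

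If $n = 0$, then $F \simeq \0$, meaning $\N \to F$ is empty, and I would dispatch this case by exhibiting an $\varepsilon$-net vacuously: take the net type itself to be $\0$ (which is trivially finite linearly ordered), with $\ty{g}{\0 \to (\N \to F)}$ given by $\0\mathsf{{\hy}elim}$ and $\ty{h}{(\N \to F) \to \0}$ obtained by feeding any putative sequence some $\ty{n}{\N}$ to extract an impossible element of $F \simeq \0$. The required closeness condition then holds vacuously for every $\varepsilon$.

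The main obstacle is just the empty-case bookkeeping—persuading the type-checker that the vacuous net maps genuinely witness an $\varepsilon$-net in the sense of \cref{def:cover}—but no new mathematical content is needed beyond the two cited lemmas. Everything else is a straightforward instantiation of the dependent results at the constant family $\lambda n.F$.
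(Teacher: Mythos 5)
Your proof is correct and takes essentially the same route as the paper, whose entire proof is ``By \cref{lem:disseq-cspace,lem:disseq-totallybounded}'' --- i.e.\ exactly the instantiation of the two dependent lemmas at the constant family \(\lambda n.F\). Your extra case split on whether \(F\) is pointed is a legitimate refinement rather than a divergence: \cref{lem:disseq-totallybounded} does hypothesise pointedness, the paper silently elides the empty case, and your vacuous \(\0\)-net handles it correctly.
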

\begin{proof} \axioms{f}
By \cref{lem:disseq-cspace,lem:disseq-totallybounded}.
\end{proof}

\begin{corollary}
\label{cor:disseq-C-equiv}
\thesislit{3}{ClosenessSpaces-Examples}{\urlsim\urlsuperscriptn-to-C}
\thesislit{3}{ClosenessSpaces-Examples}{C-to-\urlsim\urlsuperscriptn}
Given two sequences \(\ty{\alpha,\beta}{\N \to X}\) on a discrete type \(X\), the types \(C_n(\alpha,\beta)\) (derived from \cref{cor:disseq-cspace}) and \(\alpha \sim^n \beta\) are propositionally equivalent for all \(\ty{n}{\N}\).
\end{corollary}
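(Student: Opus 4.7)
The plan is to reduce this corollary directly to the more general dependent version proved in \cref{lem:disseq-C-equiv}. The key observation is that the non-dependent case of sequences on a single discrete type $X$ is just the instance of the dependent case where the $\N$-indexed family is constant.

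First, I would note that if $X$ is discrete, then the constant family $D := \lambda (\ty{n}{\N}).X$ is an $\N$-indexed family of discrete types, and the non-dependent sequence type $\N \to X$ is definitionally equal to the dependent sequence type $\Pi D$. Moreover, the totally bounded closeness space structure on $\N \to X$ given by \cref{cor:disseq-cspace} is by construction obtained from the closeness space structure on $\Pi D$ given by \cref{lem:disseq-cspace}; hence the closeness function, and therefore the derived closeness relation $C_n$, agree on both sides. Similarly, the prefix equality relation $\alpha \sim^n \beta$ is defined uniformly for any $\N$-indexed family, so the non-dependent instance is the same as the dependent one applied to the constant family.

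With these identifications, the corollary follows immediately by applying \cref{lem:disseq-C-equiv} to $D := \lambda (\ty{n}{\N}).X$, which gives the propositional equivalence $C_n(\alpha,\beta) \Leftrightarrow \alpha \sim^n \beta$ for every $\ty{n}{\N}$. There is no real obstacle here: the entire content of the corollary is the specialisation of the preceding lemma, and no additional argument or use of function extensionality beyond that already invoked in \cref{lem:disseq-C-equiv} is required.
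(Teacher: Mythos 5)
Your proposal is correct and matches the paper's own proof, which simply cites \cref{lem:disseq-C-equiv}: the corollary is exactly the specialisation of the dependent lemma to the constant family \(\lambda (\ty{n}{\N}).X\). You just spell out the (routine) identifications of the closeness structure and prefix relation that the paper leaves implicit.
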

\begin{proof}
By \cref{lem:disseq-C-equiv}.
\end{proof}

\subsubsection{Countable product closeness spaces.}

We have already shown that the product of an \(\N\)-indexed type family of discrete types is a closeness space (by \cref{lem:disseq-cspace}).
We will now generalise this by showing we can define a closeness function on the product of an \(\N\)-indexed type family of \emph{closeness spaces}.

In order to do this, we will employ a diagonal argument to the countably-many extended natural values of the closeness functions within the type family \(\ty{T}{\N \to \U}\) of closeness spaces (when applied to any given arguments \(\ty{\alpha,\beta}{\Pi T}\)).
To illustrate this idea, consider \cref{fig:diagonal} which gives potential values for \(\ty{c_{T_i}(\alpha_i,\beta_i)_j}{\Ni}\) where \(i \in \{0,...,3\}\) and \(j \in \{0,...,5\}\).
The diagonal argument that we employ to define \(\ty{c^{cs}_{\Pi T}(\alpha,\beta)}{\Ni}\) is such that the \(n\)\textsuperscript{th} digit should be \(1\) if the \(n\)\textsuperscript{th}  diagonal is made up only of \(1\)s. For the example, this means that \(c_{\Pi T}(\alpha,\beta) := 111000...\). After the third digit, no matter the values of the other extended naturals, \(c_2\) will always contribute a \(0\) to the diagonal --- this shows how the result is indeed decreasing.

\begin{figure}[h]
\centering
\begin{tabular}{ccccccccc}
 &  & \multicolumn{7}{c}{\textbf{Index}} \\
 &  & $0$ & $1$ & $2$ & $3$ & $4$ & $5$ & ... \\
 & $c_0$ & \cellcolor[HTML]{FFCCC9}1 & \cellcolor[HTML]{96FFFB}1 & \cellcolor[HTML]{9AFF99}1 & \cellcolor[HTML]{CBCEFB}1 & 1 & 0 & ... \\
 & $c_1$ & \cellcolor[HTML]{96FFFB}1 & \cellcolor[HTML]{9AFF99}1 & \cellcolor[HTML]{CBCEFB}1 & 1 & 1 & 1 & ... \\
 & $c_2$ & \cellcolor[HTML]{9AFF99}1 & \cellcolor[HTML]{CBCEFB}0 & 0 & 0 & 0 & 0 & ... \\
 & $c_3$ & \cellcolor[HTML]{CBCEFB}1 & 1 & 1 & 0 & 0 & 0 & ... \\
\multirow{-7}{*}{\rotatebox[origin=c]{90}{\textbf{Extended natural}}} & ... & ... & ... & ... & ... & ... & ... & ...
\end{tabular}\caption{Example indices of countably-many extended naturals \(c_i := c_{T_i}(\alpha_i,\beta_i)\) from countably-many closeness functions \(\ty{c_{T_i}}{T_i \to T_i \to \Ni}\) using \(\ty{\alpha,\beta}{\Pi T}\). The first four diagonals are coloured differently for emphasis.}
\label{fig:diagonal}
\end{figure}

This diagonalisation process can be inductively defined in our framework as follows:

\begin{definition}
\label{def:pi-closeness}
\thesislit{3}{ClosenessSpaces-Examples}{\urlPi-clofun}
Given an \(\N\)-indexed type family \(\ty{T}{\N \to \U}\) of closeness spaces (i.e.\ there is a family of closeness functions \(\ty{cs}{\pity{n}{\N}{\closeness{T_n}}}\)), the \emph{countable product closeness function} is defined by,
\begin{alignat*}{3}
c^{cs}_{\Pi T} &: \Pi T \to \Pi T &&\to \Ni, \\
c^{cs}_{\Pi T} &(\alpha,\beta)_0 &&:= cs_0(\alpha_0,\beta_0)_0, \\
c^{cs}_{\Pi T} &(\alpha,\beta)_{\suc{n}} &&:= \mathsf{min}(cs_0(\alpha_0,\beta_0)_{\suc{n}},c^{\mathsf{tail} \ cs}_{\Pi (\mathsf{tail} \ T)}(\mathsf{tail}\ \alpha,\mathsf{tail}\ \beta)_n).
\end{alignat*}
\end{definition}
\begin{lemma}
\label{lem:pi-cspace}
\thesislit{3}{ClosenessSpaces-Examples}{\urlPi-ClosenessSpace}
The product of an \(\N\)-indexed type family of closeness spaces is a closeness space by the countable product closeness function.
\end{lemma}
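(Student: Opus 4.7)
The plan is to verify each of the three conditions of \cref{def:cspace} for the countable product closeness function by induction on the index of the output extended natural. The guiding observation is that, unfolding the recursion of \cref{def:pi-closeness}, \(c^{cs}_{\Pi T}(\alpha,\beta)_n\) is effectively the minimum over the \(n\)\textsuperscript{th} anti-diagonal in \cref{fig:diagonal}, namely over the digits \(cs_i(\alpha_i,\beta_i)_{n-i}\) for \(i \leq n\). This makes both the pointwise behaviour and the induction pattern transparent.

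For condition (i), the forward direction assumes \(\alpha = \beta\): by induction on the index, the base case uses (i).(a) on \(cs_0\) to obtain \(cs_0(\alpha_0,\beta_0)_0 = 1\), and the inductive case takes the minimum of \(1\) (from \(cs_0\) at the new index) and \(1\) (from the inductive hypothesis applied to the tail closeness function on \(\mathsf{tail}\ \alpha\) and \(\mathsf{tail}\ \beta\)). For the backward direction, I want to extract \(cs_k(\alpha_k,\beta_k) = \infty\) for every \(k\) from the assumption \(c^{cs}_{\Pi T}(\alpha,\beta) = \infty\): given any \(k\) and any target index \(m\), the assumption \(c^{cs}_{\Pi T}(\alpha,\beta)_{k+m} = 1\) unfolds, by repeatedly stripping off leading coordinates, into a \(\mathsf{min}\) whose \(k\)\textsuperscript{th} component is precisely \(cs_k(\alpha_k,\beta_k)_m\), forcing it to be \(1\). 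Applying (i).(b) of each \(cs_k\) then gives \(\alpha_k = \beta_k\) for every \(k\), and function extensionality yields \(\alpha = \beta\).

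Condition (ii) (symmetry) is a straightforward induction on the index, using symmetry of \(cs_0\) at the leading factor and the inductive hypothesis on the tail. For condition (iii), I again induct on the index, showing pointwise that whenever \(\mathsf{min}(c^{cs}_{\Pi T}(\alpha,\beta)_n, c^{cs}_{\Pi T}(\beta,\zeta)_n) = 1\) then \(c^{cs}_{\Pi T}(\alpha,\zeta)_n = 1\). The base case is immediate from (iii) on \(cs_0\). In the inductive step, unpacking the hypothesised min at index \(n+1\) into its four binary-digit components forces all four to be \(1\); I then apply (iii) on \(cs_0\) at index \(n+1\) to combine the two leading components, and the inductive hypothesis to combine the two tail components, finally recombining to conclude \(c^{cs}_{\Pi T}(\alpha,\zeta)_{n+1} = 1\).

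The hard part will be the backward direction of condition (i): while the diagonal picture makes the extraction intuitive, formally exposing an arbitrary digit \(cs_k(\alpha_k,\beta_k)_m\) as a component of a deeply-nested minimum requires a careful auxiliary lemma about the recursive unfolding of \(c^{cs}_{\Pi T}\) together with some index arithmetic, rather than a direct unfolding inline. The use of function extensionality in this direction will also mean the proof carries an \textsuperscript{f} annotation, in line with the pattern seen in \cref{lem:disseq-cspace}.
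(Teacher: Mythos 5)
Your proposal is correct and matches the approach the paper intends: the paper itself omits the proof, saying only that it proceeds ``similarly to \cref{lem:disseq-cspace}'' but using the fact that the component types are themselves closeness spaces, and deferring details to the formalisation --- which is exactly what you do, replacing the prefix-equality arguments of \cref{lem:disseq-cspace} with the corresponding axioms of each \(cs_i\). Your anti-diagonal reading of \cref{def:pi-closeness}, the index arithmetic extracting \(cs_k(\alpha_k,\beta_k)_m\) from digit \(k+m\) in the backward direction of condition (i), and the four-component \(\mathsf{min}\) decomposition for condition (iii) are all sound, and your identification of the backward direction of (i) as the place needing an auxiliary unfolding lemma (plus function extensionality) accurately reflects where the formal effort lies.
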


\noindent
The proof of the above is similar to \cref{lem:disseq-cspace}, but requires us to use the fact that the underlying types of the type family are \emph{themselves} closeness spaces.
We recommend viewing the formalisation for more technical details on this relatively straightforward sizable proof.

\begin{lemma}
\label{lem:pi-totallybounded}
\thesislit{3}{ClosenessSpaces-Examples}{\urlPi-totally-bounded}
Given an \(\N\)-indexed type family \(\ty{T}{\N \to \U}\) of totally bounded closeness spaces, the countable product closeness space \(\Pi T\) is totally bounded.
\end{lemma}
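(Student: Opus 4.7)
The plan is to construct, for each $\varepsilon : \N$, an $\varepsilon$-net of $\Pi T$ by induction on $\varepsilon$, exploiting the recursive shape of the countable product closeness function. The first step I would take is to unfold \cref{def:pi-closeness} and the behaviour of $\mathsf{min}$ on decreasing binary sequences to obtain the decomposition
\[ C_{\varepsilon + 1}^{\Pi T}(\alpha, \beta) \Leftrightarrow C_{\varepsilon + 1}^{T_0}(\alpha_0, \beta_0) \wedge C_{\varepsilon}^{\Pi(\mathsf{tail}\, T)}(\mathsf{tail}\, \alpha, \mathsf{tail}\, \beta), \]
which reduces the construction of an $(\varepsilon + 1)$-net on $\Pi T$ to one on $T_0$ combined with one on the shifted family $\mathsf{tail}\, T$.

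For the inductive step I would take an $(\varepsilon + 1)$-net $(X', g_X, h_X)$ of $T_0$, supplied by $T_0$'s total boundedness, together with an $\varepsilon$-net $(Y', g_Y, h_Y)$ of $\Pi(\mathsf{tail}\, T)$, obtained from the inductive hypothesis applied to the shifted family (which is itself a family of totally bounded closeness spaces). The underlying net type for $\Pi T$ is the binary product $X' \x Y'$, which is finite linearly ordered as a product of such, with maps $g(x', y') := g_X(x') :: g_Y(y')$ and $h(\alpha) := (h_X(\alpha_0), h_Y(\mathsf{tail}\, \alpha))$. The required condition $C_{\varepsilon + 1}(\alpha, g(h(\alpha)))$ then follows immediately from the decomposition above together with the two component net properties, using that the head/tail split commutes with $g$ and $h$ by construction.

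The hard part will be the base case $\varepsilon = 0$. Although $C_0$ is vacuously satisfied by \cref{remark:0-close}, producing maps $g : Y \to \Pi T$ and $h : \Pi T \to Y$ for a finite linearly ordered $Y$ still requires a way to exhibit default sequences in $\Pi T$, which does not come directly from the total boundedness of each component. I would recover such defaults by invoking the $0$-net of each $T_i$: this is a finite linearly ordered type, so its inhabitation is decidable, and in the inhabited case the $g_i$ map applied to a chosen point yields a default element in $T_i$, whereas in the uninhabited case the existence of $h_i : T_i \to \0$ witnesses that $T_i$ and hence $\Pi T$ are empty. A careful case analysis consolidating these possibilities into a single coherent triple $(Y, g, h)$ --- taking $Y := \1$ with $g$ producing the pointwise-default sequence when the defaults are available, and falling back to $Y := \0$ routed through empty-type elimination otherwise --- finishes the proof.
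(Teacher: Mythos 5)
Your inductive step is exactly the paper's: the \((\varepsilon+1)\)-net of \(\Pi T\) is the binary product of an \((\varepsilon+1)\)-net of \(T_0\) with an \(\varepsilon\)-net of \(\Pi(\mathsf{tail}\ T)\), glued along the head/tail decomposition of the closeness relation (only the direction given by \cref{lem:Pi-C-combine} is actually needed), with \(g\) and \(h\) acting componentwise. Nothing to add there.

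The base case is where your proposal has a genuine gap. You correctly observe that, although the closeness condition \(C_0\) is vacuous, the data \((Y,g,h)\) still has to be produced, and that \(g : \1 \to \Pi T\) amounts to a point of \(\Pi T\) --- the paper's one-line ``the base case is vacuous'' glosses over exactly this. But your repair does not work constructively. You decide, for each individual \(i\), whether \(T_i\) is pointed (fine: its \(0\)-net is finite, so this is decidable), and then ``consolidate'' these countably many decisions into a single choice between \(Y := \1\) and \(Y := \0\). That consolidation is a decision of a countable conjunction of decidable propositions, which is a constructive taboo: taking \(T_n := \1\) when \(\alpha_n = 1\) and \(T_n := \0\) when \(\alpha_n = 0\) for an arbitrary binary sequence \(\alpha\) gives a family of finite discrete, hence totally bounded, closeness spaces, and your case split would decide whether \(\alpha\) is constantly \(1\), i.e.\ it would prove \(\mathsf{WLPO}\) (\cref{remark:WLPO}). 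Nor can a cleverer choice of \(Y\) help: any \(0\)-net \((Y,g,h)\) with \(Y \simeq \F(k)\) yields either a point \(g(y)\) of \(\Pi T\) (if \(k \geq 1\)) or a map \(h : \Pi T \to \0\) (if \(k = 0\)), so producing the base-case data is itself equivalent to deciding pointedness of \(\Pi T\). The honest fix is an extra hypothesis rather than a case split: assume each \(T_n\) is pointed (exactly as the paper does in the analogous \cref{lem:disseq-totallybounded}), whence \(\Pi T\) is pointed and \(\1\) serves as the \(0\)-net; with that hypothesis your argument goes through and coincides with the paper's.
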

\begin{proof}
The base case is vacuous (by \cref{remark:0-close}), and so we only consider the inductive case.
When \(\varepsilon := \varepsilon' + 1\), the \(\varepsilon\)-net is \(t_0' \x t_s'\), where \(t_0'\) is the \(\varepsilon\)-net of \(T_0\) and \(t_s'\) is the \(\varepsilon'\)-net of \(\mathsf{tail} \ T\) by the inductive hypothesis.
\end{proof}

Countable product closeness functions truly generalise discrete sequence closeness functions.
Given a family of discrete types, one can use \cref{lem:discrete-cspace} to transform this into a family of closeness spaces.
After doing this for a particular type family, the resulting discrete-sequence and countable product closeness functions will be identical.

\begin{lemma}\thesislit{3}{ClosenessSpaces-Examples}{\urlPi-clofuns-id}
Given an \(\N\)-indexed type family \(\ty{T}{\N \to \U}\) of discrete types the discrete-sequence closeness function \(c_{\Pi T}\) (\cref{def:disseq-closeness}) and the countable product closeness function \(c^{cs}_{\Pi T}\) (\cref{def:pi-closeness}), where \(\ty{cs}{\Pity{n}{\N}{\closeness{T_n}}}\) is the collection of discrete closeness functions (\cref{def:discrete-closeness}) on each \(T_n\), are identical.
\end{lemma}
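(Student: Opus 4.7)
The plan is to show these two sequences in $\Ni$ are identical by invoking function extensionality on their underlying binary sequences, since the decreasing-property is a proposition (so the $\Sigma$-equality reduces to equality of the first projections). That is, it suffices to prove for every $\ty{\alpha,\beta}{\Pi T}$ and every $\ty{n}{\N}$ that
\[ c_{\Pi T}(\alpha,\beta)_n \ = \ c^{cs}_{\Pi T}(\alpha,\beta)_n , \]
where both sides are elements of $\2$.

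I would proceed by induction on $n$, simultaneously generalising over the type family $T$ and sequences $\alpha,\beta$ (so that the inductive hypothesis can be applied to $\mathsf{tail}\ T$, $\mathsf{tail}\ \alpha$ and $\mathsf{tail}\ \beta$). The base case $n = 0$ reduces to showing that $c_{\Pi T}(\alpha,\beta)_0$, which by \cref{def:disseq-closeness} is $1$ precisely when $\alpha \sim^{1} \beta$ (i.e.\ when $\alpha_0 = \beta_0$), agrees with $cs_0(\alpha_0,\beta_0)_0$; by \cref{def:discrete-closeness}, the latter is $\infty_0 = 1$ exactly when $\alpha_0 = \beta_0$, and $\underline 0_0 = 0$ otherwise. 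A case split on the decidable equality $\alpha_0 = \beta_0$ concludes the base case.

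For the inductive step at $n = n'+1$, I unfold \cref{def:pi-closeness} to get
\[ c^{cs}_{\Pi T}(\alpha,\beta)_{n'+1} \ = \ \mathsf{min}\bigl(cs_0(\alpha_0,\beta_0)_{n'+1},\ c^{\mathsf{tail}\ cs}_{\Pi(\mathsf{tail}\ T)}(\mathsf{tail}\ \alpha,\mathsf{tail}\ \beta)_{n'}\bigr), \]
and perform a case split on whether $\alpha_0 = \beta_0$. If $\alpha_0 = \beta_0$ then $cs_0(\alpha_0,\beta_0) = \infty$, so the $\mathsf{min}$ reduces to the second argument, to which I apply the inductive hypothesis (on the tail) to obtain $c_{\Pi(\mathsf{tail}\ T)}(\mathsf{tail}\ \alpha,\mathsf{tail}\ \beta)_{n'}$; this in turn equals $c_{\Pi T}(\alpha,\beta)_{n'+1}$ by \cref{def:disseq-closeness}, because $\alpha \sim^{n'+2} \beta$ holds iff $\alpha_0 = \beta_0$ and $\mathsf{tail}\ \alpha \sim^{n'+1} \mathsf{tail}\ \beta$. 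If instead $\alpha_0 \neq \beta_0$ then $cs_0(\alpha_0,\beta_0) = \underline 0$, so $cs_0(\alpha_0,\beta_0)_{n'+1} = 0$ and hence the $\mathsf{min}$ is $0$; and, on the other side, $\alpha \sim^{n'+2}\beta$ fails (since already $\alpha_0 \neq \beta_0$), so $c_{\Pi T}(\alpha,\beta)_{n'+1} = 0$ as well.

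The main bookkeeping obstacle is keeping the two recursions aligned: $c_{\Pi T}$ is defined by a single decidability check on the $(n{+}1)$-prefix, while $c^{cs}_{\Pi T}$ unpeels one coordinate per recursive step. The key bridging observation is the prefix decomposition $\alpha \sim^{n'+2} \beta \Leftrightarrow (\alpha_0 = \beta_0) \x (\mathsf{tail}\ \alpha \sim^{n'+1} \mathsf{tail}\ \beta)$, which makes the inductive step line up with the $\mathsf{min}$-structure of the countable product closeness function. Once these two recursive shapes are matched, the proof is routine case analysis on $\alpha_0 = \beta_0$ at each step.
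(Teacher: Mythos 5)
Your proposal is correct and follows essentially the same route as the paper's proof: pointwise equality established by induction on \(n\) (with the base case handled by deciding \(\alpha_0 = \beta_0\)), followed by function extensionality. The only difference is organisational — in the inductive step you case-split on \(\alpha_0 = \beta_0\) and let the prefix decomposition \(\alpha \sim^{n'+2} \beta \Leftrightarrow (\alpha_0 = \beta_0) \x (\mathsf{tail}\ \alpha \sim^{n'+1} \mathsf{tail}\ \beta)\) absorb the tail comparison, whereas the paper case-splits on the whole prefix \(\alpha \sim^{n+1} \beta\) and then shows both arguments of the \(\mathsf{min}\) are \(1\); both are routine and equivalent.
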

\begin{proof}[Proof (Sketch).] \axioms{f}
We prove that the two functions are pointwise-equal --- the result then follows immediately by function extensionality.
To prove that, for all \(\ty{\alpha,\beta}{\Pi T}\) and \(\ty{n}{\N}\) we have \(c_{\Pi T}(\alpha,\beta)_n = c^{cs}_{\Pi T}(\alpha,\beta)_n\), we proceed by induction on the given \(n\).

\vspace{1em}
In the base case where \(n := 0\), we want to show that \(c_{\Pi T}(\alpha,\beta)_0 = c_{T_0}(\alpha_0,\beta_0)_0\) where \(c_{T_0} := cs_0\), i.e. the discrete closeness function on \(T_0\).
Using the discreteness of \(T_0\), we ask whether \(\alpha_0 = \beta_0\). If it does then we have \(\alpha \sim^1 \beta\) and therefore by definition of the discrete-sequence closeness function (\cref{def:disseq-closeness}) we have \(c_{\Pi T}(\alpha,\beta)_0 = 1\). We also have \(c_{T_0}(\alpha_0,\beta_0) = \infty\) by definition of the discrete closeness function, and therefore clearly also \(1 = c_{T_0}(\alpha_0,\beta_0)_0\). The proof technique is similar in the case where \(\neg (\alpha_0 = \beta_0)\).

\vspace{1em}
In the inductive case where \(n := n' + 1\) for some \(\ty{n'}{\N}\), we want to show that \(c_{\Pi T}(\alpha,\beta)_{n} = \mathsf{min}(c_{T_0}(\alpha_0,\beta_0)_{n},c_{\Pi (\mathsf{tail} \ T)}(\mathsf{tail}\ \alpha,\mathsf{tail}\ \beta)_{n'})\).
Using the discreteness of each type in \(T\), we ask whether \(\alpha \sim^{n + 1} \beta\). If it does, then we have \(c_{\Pi T}(\alpha,\beta)_n = 1\) and therefore only need to show that (i) \(c_{T_0}(\alpha_0,\beta_0)_{n} = 1\) and (ii) \(c_{\Pi (\mathsf{tail} \ T)}(\mathsf{tail}\ \alpha,\mathsf{tail}\ \beta)_{n'} = 1\).
\begin{enumerate}[(i)]
\item holds by definition of the discrete closeness function, because we have shown that \(\alpha_0 = \beta_0\),
\item holds by the inductive hypothesis once we prove that \(c_{\Pi T}(\mathsf{tail}\ \alpha,\mathsf{tail}\ \beta)_{n'} = 1\); this is by definition of the discrete-sequence closeness function and the fact that \(\alpha \sim^{n+1} \beta\) implies \(\mathsf{tail}\ \alpha \sim^{n} \mathsf{tail}\ \beta\).
\end{enumerate}
The proof technique is similar in the case where \(\neg (\alpha \sim^{n+1} \beta)\).
\end{proof}

We close this subsection by noting the following two lemmas concerning countable product closeness spaces; note that these can of course be specialised for discrete-sequence closeness spaces.
The first states that if the tails of two sequences have closeness \(\ty{\delta}{\N}\) and the head elements have closeness \(\delta + 1\), then the sequences themselves also have closeness \(\delta + 1\).

\begin{lemma}
\label{lem:Pi-C-combine}
\thesislit{3}{ClosenessSpaces-Examples}{\urlPi-C-combine}
Given an \(\N\)-indexed type family \(\ty{T}{\N \to \U}\) of closeness spaces, two head elements \(\ty{x_0,y_0}{T_0}\) and two tail sequences \(\ty{\alpha,\beta}{\Pi T}\) and some precision \(\ty{\delta}{\N}\), if \(C_{\delta+1}(\alpha_0,\beta_0)\) and \(C_\delta(\mathsf{tail} \ \alpha,\mathsf{tail} \ \beta)\) then \(C_{\delta+1}(\alpha,\beta)\).
\end{lemma}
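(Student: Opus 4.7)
The plan is to unfold $C_{\delta+1}(\alpha,\beta)$ all the way down to a statement about individual digits, using \cref{def:C,def:Ni-order}: we must show $c^{cs}_{\Pi T}(\alpha,\beta)_n = 1$ for every $n$ with $\underline{\delta+1}_n = 1$, i.e.\ for every $n < \delta + 1$. The two hypotheses translate similarly: from $C_{\delta+1}(\alpha_0,\beta_0)$ we get $cs_0(\alpha_0,\beta_0)_n = 1$ for every $n < \delta + 1$, and from $C_\delta(\mathsf{tail}\ \alpha, \mathsf{tail}\ \beta)$ we get $c^{\mathsf{tail}\ cs}_{\Pi(\mathsf{tail}\ T)}(\mathsf{tail}\ \alpha, \mathsf{tail}\ \beta)_m = 1$ for every $m < \delta$.

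With those reformulations in hand, I would case split on the index $n < \delta + 1$. For $n = 0$, \cref{def:pi-closeness} gives $c^{cs}_{\Pi T}(\alpha,\beta)_0 = cs_0(\alpha_0,\beta_0)_0$, which is $1$ by the first hypothesis (since $0 < \delta + 1$). For $n = m + 1$ with $m + 1 < \delta + 1$, hence $m < \delta$, \cref{def:pi-closeness} gives
\[ c^{cs}_{\Pi T}(\alpha,\beta)_{m+1} = \mathsf{min}\bigl(cs_0(\alpha_0,\beta_0)_{m+1},\ c^{\mathsf{tail}\ cs}_{\Pi(\mathsf{tail}\ T)}(\mathsf{tail}\ \alpha, \mathsf{tail}\ \beta)_m\bigr). \]
The first argument is $1$ by the first hypothesis (as $m + 1 < \delta + 1$), and the second is $1$ by the second hypothesis (as $m < \delta$), so the minimum is $1$.

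I do not expect a real obstacle; the lemma is essentially a computation once the definitions are unrolled. The only point requiring care is the off-by-one shift between $\underline{\delta + 1}$ on the left and $\underline \delta$ on the right of the goal, combined with the shift-by-one in the recursive clause of $c^{cs}_{\Pi T}$: once one observes that these two shifts cancel, so that the tail hypothesis $C_\delta$ is exactly what is needed to discharge the recursive part of $c^{cs}_{\Pi T}(\alpha,\beta)_{m+1}$, the proof writes itself.
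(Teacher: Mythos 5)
Your proof is correct and follows essentially the same route as the paper's: unfold $C$ via \cref{def:C,def:Ni-order}, apply the recursive clause of \cref{def:pi-closeness}, and observe that both arguments of the $\mathsf{min}$ are $1$ by the two hypotheses. The only (harmless) organisational difference is that you verify every digit $n < \delta+1$ by case-splitting on the index, whereas the paper case-splits on $\delta$ and implicitly uses the decreasing property of $\Ni$ to reduce to the single top digit.
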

\begin{proof}
When \(\delta := 0\), the result is vacuous by \cref{remark:0-close}, therefore we only consider the \(\delta := \delta' + 1\) case where we need to show \(\underline{\delta' + 1} \preceq c^*_{\Pi T}(\alpha,\beta)\).
By the definition of the partial order on \(\Ni\) (\cref{def:Ni-order}), this is reduced to needing to show \(c^*_{\Pi T}(\alpha,\beta)_{\delta'+1} = 1\).

\vspace{1em}
By the definition of the closeness function on countable products (\cref{def:pi-closeness}), this means we need to show \[\mathsf{min}(cs_0(\alpha_0,\beta_0)_{\delta'+1},c_{\Pi(\mathsf{tail} \ T)}(\mathsf{tail} \ \alpha,\mathsf{tail} \ \beta)_{\delta'}) = 1, \] which holds because, by our assumptions, both arguments to the \(\mathsf{min}\) function are \(1\).
\end{proof}

The second states that any sequence is infinitely close to the composition of its own head and tail --- the point of this lemma is to avoid invoking function extensionality in this case.

\begin{lemma}
\label{lem:Pi-head-tail-eta}
\thesislit{3}{ClosenessSpaces-Examples}{\urlPi-C-eta}
Given an \(\N\)-indexed type family \(\ty{T}{\N \to \U}\) of closeness spaces, every sequence \(\ty{\alpha}{\Pi T}\) is such that \(C_\delta(\alpha,\mathsf{head} \ \alpha :: \mathsf{tail} \ \alpha)\), for every precision \(\ty{\delta}{\N}\).
\end{lemma}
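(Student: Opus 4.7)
The plan is to proceed by induction on the precision \(\delta\), since the statement is explicitly indexed over \(\delta : \N\) and the countable product closeness function is itself defined by recursion matching the head/tail structure of sequences. Note that the whole point of the lemma is that \(\alpha\) and \(\mathsf{head}\ \alpha :: \mathsf{tail}\ \alpha\) are pointwise equal but not (without function extensionality) propositionally equal, so we cannot simply appeal to reflexivity \(C_\delta(\alpha,\alpha)\) after rewriting; we have to work at the level of the closeness function itself.

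For the base case \(\delta := 0\), the statement is immediate from \cref{remark:0-close}, which says any two elements of any closeness space are \(0\)-close.

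For the inductive step, suppose the statement holds for \(\delta'\); we want \(C_{\delta'+1}(\alpha, \mathsf{head}\ \alpha :: \mathsf{tail}\ \alpha)\). I would apply \cref{lem:Pi-C-combine} with the two sequences \(\alpha\) and \(\mathsf{head}\ \alpha :: \mathsf{tail}\ \alpha\) (both in \(\Pi T\)), which reduces the goal to two subgoals: (i) that the heads are \((\delta'+1)\)-close, i.e.\ \(C_{\delta'+1}(\alpha_0,(\mathsf{head}\ \alpha :: \mathsf{tail}\ \alpha)_0)\), and (ii) that the tails are \(\delta'\)-close, i.e.\ \(C_{\delta'}(\mathsf{tail}\ \alpha, \mathsf{tail}(\mathsf{head}\ \alpha :: \mathsf{tail}\ \alpha))\). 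By the defining equations of \(::\), indexing, and \(\mathsf{tail}\) from \cref{sec:sequences}, the right-hand sides of these closeness statements reduce to \(\alpha_0\) and \(\mathsf{tail}\ \alpha\) respectively, so both subgoals collapse to instances of reflexivity of \(C\) from \cref{lem:C-eq}.

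The only real subtlety is that the reductions \((x :: xs)_0 \equiv x\) and \(\mathsf{tail}(x :: xs) \equiv xs\) must hold definitionally in the formalisation; this is standard for the usual definitions of cons, head, and tail on sequences as described in \cref{sec:sequences}, so no extra work is needed there. If for some reason these were only propositional equalities, one would simply transport along them before invoking reflexivity, which would not disturb the structure of the argument. Crucially, nowhere in this proof do we pass from a pointwise equality of functions to an equality of functions, so function extensionality is indeed avoided, as promised.
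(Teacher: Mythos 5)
Your proof is correct and follows essentially the same route as the paper: induction (really just a case split) on \(\delta\), with the base case discharged by \cref{remark:0-close} and the successor case by \cref{lem:Pi-C-combine} together with the definitional reductions of head and tail on a cons and reflexivity of the closeness relation. The only minor observation is that your inductive hypothesis is never actually invoked, since both subgoals collapse to reflexivity after the definitional reductions — but this does not affect correctness.
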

\begin{proof}[Proof (Sketch).]
Proceeding by induction on the given \(\delta\), the base case is trivial and the inductive case is immediate by \cref{lem:Pi-C-combine}.
\end{proof}

\subsection{Pseudocloseness spaces}

We have found that closeness spaces are a convenient structure to reason about the closeness of elements of a wide variety of types in our framework.
Sometimes, such as for parametric regression (see \cref{sec:regression}), we find that we wish to use a more general structure so that we can reason about a wider variety of types.

We borrow the terminology of `pseudometric spaces' and define \emph{pseudo closeness spaces} below, in which we relax the first condition of \cref{def:cspace} such that we no longer require elements with (pseudo)closeness \(\infty\) to be identical.

\begin{definition}
\label{def:pcspace}
\thesislit{3}{ClosenessSpaces}{PseudoClosenessSpace}
A \emph{pseudocloseness space} is a type \(X\) equipped with a \emph{pseudocloseness function} \(\ty{c}{\closeness X}\) such that,
\begin{enumerate}
\item \({x = y} \to c(x,y) = \infty\),
\item \(c(x,y) = c(y,x)\),
\item \(\mathsf{min}(c(x,y),c(y,z)) \preceq c(x,z)\).
\end{enumerate}
\end{definition}

\noindent
The altered definitions of closeness relation (which remain equivalence relations) and uniform continuity for pseudocloseness spaces follow naturally from those on closeness spaces (\cref{def:C,def:clos-ucont,def:clos-ucont-pred}), and so we leave them out to avoid repetition.

The structure of a pseudocloseness space differs from a closeness space in that non-equal elements can have pseudocloseness \(\infty\).
This is required for parametric regression, for which we here define psuedocloseness spaces for function spaces, allowing functions to be compared at a finite number of given points.

\begin{definition}
\label{def:least-closeness}
\thesislit{3}{ClosenessSpace-Examples}{Least-PseudoClosenessSpace}
Given a type \(X\), closeness space \(Y\) and \((\ty{n}{\N})\)-size vector \(\ty{xs}{\F \ n \to X}\) of elements of \(X\), we define the \emph{least-closeness pseudocloseness function} as,
\begin{alignat*}{3}
{c'_{X \to Y}}^{(n,xs)} &: \closeness{(X \to Y)}  \\
{c'_{X \to Y}}^{(n,xs)} &(f,g) := c_{Y^n}(\mathsf{map}(f,xs),\mathsf{map}(g,xs)),
\end{alignat*}
where \(\ty{c_{Y^n}}{\closeness{(\F \ n \to X)}}\) is the closeness function derived from \cref{cor:vec-cspace}.
\end{definition}

Least-closeness pseudocloseness functions compare two functions \(f\) and \(g\) at a finite number of given points \(\{xs_0,...,xs_{n-1}\}\), and return the minimum closeness found between these points; i.e. at each point \(x_i\) (where \(i \in \{0,...,n-1\}\)) the closeness of \(f(xs_i)\) and \(g(xs_i)\) is computed as \(c_i := c_Y(f(xs_i),g(xs_i))\), and then the minimum of these values is returned as the least-closeness pseudocloseness \({c'_{X \to Y}}^{(n,xs)}(f,g) := \mathsf{min}(c_0,...,c_{n-1})\). This idea is inspired by least-squares approach to regression and pseudometrics on function spaces used in regression analysis~\cite{YanBook}.

\section{Searching infinite types}
\label{sec:search-infinite}

We return, equipped with closeness spaces, to the problem of searching  infinite types.

\subsection{Uniformly continuously searchable closeness spaces}

Recall from the close of \cref{sec:search-infinite-canwe} that we aim to restrict search to only those decidable predicates that have a constructive witness of their uniform continuity.
These moduli of uniform continuity will be provided by closeness spaces (\cref{def:clos-ucont-pred}).

\begin{definition}
\thesislit{3}{SearchableTypes}{decidable-uc-predicate}
The type of \emph{uniformly continuous and decidable predicates} on a closeness space \(K\) is defined by, \[\mathsf{decidable{\hy}uc{\hy}predicate}(K) := \Sigmatye{p}{\mathsf{decidable{\hy}predicate \ K}}{\mathsf{p{\hy}ucontinuous \ (p)}} .\]
\end{definition}

We now formally define the restricted definition of a searchable type (\cref{def:searchable}). We call a closeness space whose uniformly continuous and decidable predicates we can search a \emph{uniformly continuously searchable} closeness space.

\begin{definition}
\label{def:c-searcher}
\thesislit{3}{SearchableTypes}{csearchable\urlmathcalE}
A function \(\ty{\mathcal{E}_K}{\mathsf{decidable{\hy}uc{\hy}predicate}(K) \to K}\) is a \emph{uniformly continuous searcher} on a given closeness space \(K\) if, for all \(\ty{p}{\mathsf{decidable{\hy}uc{\hy}predicate}(K)}\), it is the case that \(p(\mathcal{E}_K(p))\) holds if there is some element \(\ty{k}{K}\) such that \(p(k)\) holds:
\[ \mathsf{is{\hy}uc{\hy}searcher}(\mathcal{E}) := \pitye{p}{\mathsf{decidable{\hy}uc{\hy}predicate}(K)}{\left( \sigmatye{k}{K}{p(k)} \right) \to p(\mathcal{E}(p))} .\]
\end{definition}

\begin{definition}
\label{def:c-searchable}
\thesislit{3}{SearchableTypes}{csearchable}
A closeness space \(K\) is \emph{uniformly continuously searchable} if we can define a uniformly continuous searcher on that closeness space:
\[ \mathsf{uc{\hy}searchable^\mathcal{E}}(K) := \sigmatye{\mathcal{E}_K}{\mathsf{decidable{\hy}uc{\hy}predicate} \ K \to K}{\mathsf{is{\hy}uc{\hy}searcher}(\mathcal{E})} .\]
We often use the following equivalent definition, which is more convenient:
\[ \mathsf{uc{\hy}searchable}(K) := \pitye{p}{\mathsf{decidable{\hy}uc{\hy}predicate}(K)}{ \sigmatye{k_0}{K}{\left( \sigmatye{k}{K}{p(k)} \right) \to p(k_0)}} .\]
\end{definition}

\begin{remark}
\label{remark:search-csearch}
\thesislit{3}{SearchableTypes}{csearchable\urlto{}csearchable}
Every searchable closeness space is automatically uniformly continuously searchable by discarding the continuity information.
\end{remark}

Much like searchable types, every uniformly continuously searchable closeness space is pointed.

\begin{lemma}
\label{lem:csearchable-pointed}
\thesislit{3}{SearchableTypes}{csearchable-pointed}
Every uniformly continuously searchable closeness space is pointed.
\end{lemma}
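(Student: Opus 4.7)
The plan is to mimic the proof of \cref{lem:searchable-pointed}, which established that searchable types are pointed by applying the searcher to the constant-true predicate. The only additional work needed here is to verify that the constant-true predicate is a valid input to a uniformly continuous searcher, i.e.\ that it is both decidable and uniformly continuous as a predicate on the given closeness space $K$.

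Concretely, I would proceed as follows. First, define the constant predicate $p^\top \colon K \to \Omega$ by $p^\top(k) := \top$ (with truth value given by $\1$, which is a proposition). Decidability is immediate, since for every $\ty{k}{K}$ we can return $\inl\ \star$ as a witness of $p^\top(k)$. Uniform continuity (in the sense of \cref{def:clos-ucont-pred}) is equally immediate: we can pick any modulus, for instance $\delta := 0$, and then given $\ty{k_1,k_2}{K}$ satisfying $C_0(k_1,k_2)$ (which in fact holds trivially for any two elements by \cref{remark:0-close}), the equivalence $p^\top(k_1) \Leftrightarrow p^\top(k_2)$ reduces to $\1 \Leftrightarrow \1$, which holds trivially in both directions. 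Packaging $p^\top$ with these proofs gives an element of $\mathsf{decidable{\hy}uc{\hy}predicate}(K)$.

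Finally, I apply the uniformly continuous searcher $\mathcal{E}_K$ supplied by the assumption $\mathsf{uc{\hy}searchable}^\mathcal{E}(K)$ to this packaged predicate, obtaining an element $\ty{\mathcal{E}_K(p^\top)}{K}$ and thereby exhibiting that $K$ is pointed. There is no real obstacle in this argument; the entire content is verifying that constant-true predicates fit the narrower interface demanded by \cref{def:c-searcher} compared to \cref{def:searcher}. In fact, this same observation gives an alternative, even shorter proof: combine \cref{remark:search-csearch} in reverse spirit, by noting that the pointedness argument of \cref{lem:searchable-pointed} uses only the constant-true predicate, which lies in the intersection of both predicate classes.
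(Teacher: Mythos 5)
Your proof is correct and takes exactly the same route as the paper: apply the uniformly continuous searcher to the constant-true predicate $p^\top$, which is trivially decidable and uniformly continuous with modulus $0$. You simply spell out the verification that $p^\top$ is a valid $\mathsf{decidable{\hy}uc{\hy}predicate}$ in more detail than the paper does.
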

\begin{proof}
For the given uniformly continuously searchable space \(K\), define the constant predicate \(p^\top(k) := \top\), which every element satisfies and therefore has modulus of uniform continuity \(0\).
We can then introduce the element \(\ty{\mathcal{E}_K(p^\top)}{K}\).
\end{proof}

When searching a closeness space \(X\), a consequence of knowing that \(\ty{\delta}{\N}\) is a modulus of uniform continuity for the predicate \(\ty{p}{X \to \Omega}\) is that instead of checking each individual candidate \(\ty{x}{X}\), we can instead check each \(\delta\)-neighbourhood of \(X\) collectively by a single representing element.
If the representative satisfies the predicate, then we can simply return it; if not, we can discard the entire \(\delta\)-neighbourhood in which it lives from the search.

A corollary to this is that if the closeness space has a finite \(\delta\)-net then it has a finite number of \(\delta\)-neighbourhoods and an answer to the predicate can be searched for.
If the closeness space is totally bounded therefore, no matter the modulus of uniform continuity \(\ty{\delta}{\N}\) of the predicate, it can be searched.

\begin{theorem}
\label{thm:tb-csearch}
\thesislit{3}{SearchableTypes}{totally-bounded-csearchable}
If a closeness space is totally bounded and pointed, then it is uniformly continuously searchable.
\end{theorem}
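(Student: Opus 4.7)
The plan is to lift the predicate to the finite $\delta$-net provided by total boundedness, where $\delta$ is the modulus of uniform continuity, and then apply the already-established searchability of pointed finite linearly ordered types (\cref{lem:fin-searchable}). The crucial observation is that uniform continuity ensures the truth of $p$ is constant on each $\delta$-neighbourhood, so it suffices to consult one representative per neighbourhood --- and the $\delta$-net provides exactly a finite collection of such representatives.

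More concretely, given a uniformly continuous and decidable predicate $(p,\delta,\phi) : \mathsf{decidable{\hy}uc{\hy}predicate}(K)$, where $\phi$ witnesses that $C_\delta(x,y) \to (p(x) \Leftrightarrow p(y))$, I would first invoke total boundedness at precision $\delta$ to obtain a finite linearly ordered type $K'$ together with maps $g : K' \to K$ and $h : K \to K'$ satisfying $C_\delta(k, g(h(k)))$ for all $k : K$ (\cref{def:cover,def:totallybounded}). Pointedness of $K$ then transfers to $K'$ via $h$, so by \cref{lem:fin-searchable} the type $K'$ is searchable. Next I would define the transported predicate $p' : K' \to \Omega$ by $p'(k') := p(g(k'))$, which inherits decidability from $p$, and apply the searcher $\mathcal{E}_{K'}$ to obtain $k'_0 := \mathcal{E}_{K'}(p')$. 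I would then return $k_0 := g(k'_0)$ as the distinguished element of $K$.

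The verification of the search condition is the substantive step, and it is where uniform continuity does its work. Suppose some $k : K$ satisfies $p(k)$. Since $C_\delta(k, g(h(k)))$, uniform continuity gives $p(g(h(k)))$, i.e.\ $p'(h(k))$. Hence $p'$ is pointed at $h(k)$, so by the search condition on $K'$ we obtain $p'(k'_0)$, which unfolds to $p(g(k'_0)) = p(k_0)$ as required.

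I do not expect any serious obstacle: every ingredient (net extraction, finite searchability, uniform continuity) is already established in the excerpt, and the argument is essentially ``check one point per $\delta$-ball''. The only points requiring a little care are (i) confirming that $p'$ is indeed decidable (immediate from decidability of $p$ composed with $g$), and (ii) threading the modulus $\delta$ through both the net and the uniform continuity witness so that the two agree; both are routine once the correspondence between $\delta$-nets and $\delta$-neighbourhoods (\cref{lem:C-eq}) is made explicit.
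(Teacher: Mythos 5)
Your proposal is correct and follows essentially the same route as the paper's proof: extract the $\delta$-net at the predicate's modulus of uniform continuity, search the finite net against $p \circ g$ via \cref{lem:fin-searchable}, and use $C_\delta(k, g(h(k)))$ together with uniform continuity to transfer a witness of $p$ into a witness of $p \circ g$. No gaps.
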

\begin{proof}
Given any uniformly continuous predicate \(\ty{p}{K \to \Omega}\), where \(K\) is the totally bounded closeness space to search which is pointed with \(\ty{k^*}{K}\), we take \(K'\) to be the \(\delta\)-net of \(K\), where \(\ty{\delta}{\N}\) is the modulus of uniform continuity of \(p\).
By definition of nets (\cref{def:cover}), \(K'\) is finite and there are functions \(\ty{g}{K' \to K}\) and \(\ty{h}{K \to K'}\) such that for all \(\ty{k}{K}\) we have \(C_\delta(k,g(h(k)))\).

\vspace{1em}
As \(K'\) is finite and pointed (by \(\ty{h(k^*)}{K}\)), it is searchable (by \cref{lem:fin-searchable}).
We therefore search it for an answer to the predicate \(\ty{\left(p \circ g\right)}{K' \to \Omega}\), which we label \(\ty{k'_0}{K'}\). 
By the search condition (\cref{def:searchable}), \(k'_0\) is such that, if there is some \(\ty{k'}{K}\) such that \(p(g(k'))\), then \(p(g(k'_0))\).

\vspace{1em}
We therefore take \(\ty{g(k'_0)}{K}\) as the answer to \(p\), and must show that it satisfies the search condition; i.e.\ given some \(\ty{k}{K}\) such that \(p(k)\), it is the case that \(p(g(k'_0))\).
Of course, \(k\) is such that \(C_\delta(x,g(h(k)))\), and therefore --- by the uniform continuity of \(p\) (\cref{def:clos-ucont-pred}) --- \(p(g(h(k)))\).
Hence, because \((p \circ g)\) is satisfied, it is the case that \(p(g(k'_0))\).
\end{proof}

This theorem can be used to search a wide variety of infinite types (examples of which are given in the next subsection), but \emph{cannot} be used to give a version of the Tychonoff theorem (as discussed in \cref{sec:search-infinite-canwe}) in our framework --- we will come back to this in \cref{sec:tychonoff}.

\subsection{Examples of uniformly continuously searchable closeness spaces}
\label{sec:csearch-examples}

In this subsection, we give a variety of examples of uniformly continuously searchable types. Some of these are finite or totally bounded closeness spaces, while the others are preservation properties of continuous searchability that match up to those on the original definition of searchability.

\subsubsection{Finite uniformly continuously searchable spaces}

\begin{lemma}
\thesislit{3}{SearchableTypes-Examples}{finite-csearchable}
Every pointed, finite linearly ordered closeness space is uniformly continuously searchable.
\end{lemma}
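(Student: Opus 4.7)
The plan is to reduce this statement immediately to results already established in the excerpt, without needing to touch the closeness structure directly. The key observation is that uniform continuity of the predicate is additional information that a searcher is free to ignore: any decidable-and-uniformly-continuous predicate is, in particular, a decidable predicate, so unrestricted searchability of the underlying type is strictly stronger than uniformly continuous searchability of the closeness space.

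Concretely, I would proceed in two short steps. First, I would strip away the closeness structure and observe that the underlying type of \(K\) is pointed and finite linearly ordered; then \cref{lem:fin-searchable} gives me a searcher \(\ty{\mathcal{E}_K}{\mathsf{decidable{\hy}predicate}(K) \to K}\) on \(K\) as a type, satisfying the search condition from \cref{def:searchable}. Second, given any uniformly continuous decidable predicate \(\ty{(p,d,\phi)}{\mathsf{decidable{\hy}uc{\hy}predicate}(K)}\), I would simply forget the uniform continuity witness \(\phi\) and feed \((p,d)\) into \(\mathcal{E}_K\); the resulting element \(\mathcal{E}_K(p,d)\) satisfies the uniformly continuous search condition of \cref{def:c-searchable} because it already satisfies the stronger, unrestricted search condition.

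This is really just the specialisation of \cref{remark:search-csearch} to the finite linearly ordered case, so I would state the proof accordingly: by \cref{lem:fin-searchable}, the underlying type is searchable, and by \cref{remark:search-csearch}, every searchable closeness space is uniformly continuously searchable. There is no main obstacle to speak of; the only thing requiring any care is the bookkeeping to distinguish between \(\mathsf{decidable{\hy}predicate}(K)\) and \(\mathsf{decidable{\hy}uc{\hy}predicate}(K)\), which amounts to discarding the modulus of uniform continuity. Note in particular that I deliberately avoid routing through \cref{thm:tb-csearch}, because total boundedness for a finite linearly ordered closeness space was established only in the special case of the \emph{discrete} closeness function, whereas the lemma as stated allows any closeness function on the finite linearly ordered underlying type.
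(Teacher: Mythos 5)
Your proposal is correct and is exactly the paper's argument: the paper proves this lemma by citing \cref{lem:fin-searchable} together with \cref{remark:search-csearch}, i.e.\ the underlying pointed finite linearly ordered type is searchable and the uniform continuity witness is simply discarded. Your additional remark about avoiding \cref{thm:tb-csearch} is a reasonable observation but does not change the substance.
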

\begin{proof}
By \cref{lem:fin-searchable,remark:search-csearch}.
\end{proof}

\subsubsection{Disjoint union of uniformly continuously searchable spaces}

\begin{lemma}
\thesislit{3}{SearchableTypes-Examples}{\urlplus-csearchable}
Given uniformly continuously searchable closeness spaces \(K\) and \(J\), the disjoint union closeness space \(K + J\) is uniformly continuously searchable.
\end{lemma}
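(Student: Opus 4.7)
The plan is to adapt the proof of Lemma \ref{lem:plus-searchable} (searchability of disjoint unions) to the uniformly continuous setting, with the additional obligation of tracking moduli of uniform continuity. Given a uniformly continuous decidable predicate $\ty{p}{K + J \to \Omega}$ with modulus $\ty{\delta}{\N}$, I would first restrict it to each side by setting $p_K(k) := p(\inl \ k)$ and $p_J(j) := p(\inr \ j)$. These are clearly decidable, inheriting decidability from $p$.

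The key preliminary step is to verify that $p_K$ and $p_J$ are uniformly continuous with the same modulus $\delta$. This follows from the definition of the disjoint union closeness function (\cref{def:+-closeness}): for any $\ty{k_1,k_2}{K}$, we have $c_{K+J}(\inl \ k_1, \inl \ k_2) = c_K(k_1,k_2)$, so $C_\delta(k_1,k_2)$ in $K$ immediately lifts to $C_\delta(\inl \ k_1, \inl \ k_2)$ in $K + J$, at which point the uniform continuity of $p$ gives $p_K(k_1) \Leftrightarrow p_K(k_2)$. The argument for $p_J$ is symmetric.

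Having established continuity of the restrictions, I apply the uniformly continuous searchers of $K$ and $J$ to obtain candidates $k_0 := \mathcal{E}_K(p_K)$ and $j_0 := \mathcal{E}_J(p_J)$. Decidability of $p$ lets us test whether $p(\inl \ k_0)$ holds; if so, return $\inl \ k_0$, otherwise return $\inr \ j_0$. To verify the search condition (\cref{def:c-searcher}), suppose some $\ty{{k\hspace{-0.05em}j}}{K + J}$ satisfies $p$. If the test succeeded, we are done. If it failed, then $\neg p_K(k_0)$, and by the search condition for $\mathcal{E}_K$ this implies that no $\ty{k}{K}$ satisfies $p_K$ (via the contrapositive used in \cref{lem:searchable-none}, which transfers directly to the uniformly continuous setting). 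Hence ${k\hspace{-0.05em}j}$ must be of the form $\inr \ j$ for some $\ty{j}{J}$ with $p_J(j)$, and the search condition for $\mathcal{E}_J$ then yields $p_J(j_0) = p(\inr \ j_0)$.

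I do not anticipate a genuine obstacle here: the structure of the proof is entirely parallel to \cref{lem:plus-searchable}, and the only new content is the modulus-tracking observation in the second paragraph. The proof is essentially a bookkeeping exercise showing that uniform continuity is preserved under the natural restriction maps $\inl$ and $\inr$, which is immediate from the fact that the disjoint union closeness function agrees with the component closeness functions on elements from the same side.
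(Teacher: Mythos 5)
Your proposal is correct and matches the paper's proof: the paper likewise reuses the technique of \cref{lem:plus-searchable} and adds only the observation that \(p \circ \inl\) and \(p \circ \inr\) are uniformly continuous, which it obtains via the uniform continuity of \(\inl\) and \(\inr\) together with \cref{lem:f-p-ucont}, whereas you unfold the same fact directly from \cref{def:+-closeness}. This is a cosmetic difference only.
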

\begin{proof}[Proof (Sketch).]
Given the predicate \(\ty{p}{\mathsf{decidable{\hy}predicate}(K + J)}\), we follow the same technique as \cref{lem:plus-searchable}, except we also have to show that the predicates\(p_K := p \circ \mathsf{inl}\) and \(p_J := p \circ \mathsf{inr}\) are uniformly continuous by the respective closeness functions on \(K\) and \(J\). Both of these proofs are immediate from the uniform continuity of \(p\) by the disjoint union closeness function (\cref{def:+-closeness}), as well as the uniform continuity of \(\mathsf{inl}\) and \(\mathsf{inr}\) respectively, and \cref{lem:f-p-ucont}.
\end{proof}

\subsubsection{Finite product of uniformly continuously searchable spaces}

\begin{lemma}
\label{lem:prod-csearchable}
\thesislit{3}{SearchableTypes-Examples}{\urlx-csearchable}
Given uniformly continuously searchable closeness spaces \(K\) and \(J\), the binary product closeness space \(K \x J\) is uniformly continuously searchable.
\end{lemma}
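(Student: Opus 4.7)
The plan is to mirror the proof of \cref{lem:prod-searchable}, adapting it to track uniform continuity moduli, using the uniform continuous searchers $\mathcal{E}_K$ and $\mathcal{E}_J$ on the two component closeness spaces. Given a uniformly continuous and decidable predicate $\ty{p}{K \x J \to \Omega}$ with modulus of uniform continuity $\ty{\delta}{\N}$, we set out to produce some $\ty{(k_0,j_0)}{K \x J}$ satisfying the search condition for $p$.

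First, I would define the family of predicates $p_J(k)(j) := p(k,j)$, and show that each $p_J(k)$ is a member of $\mathsf{decidable{\hy}uc{\hy}predicate}(J)$ with modulus $\delta$: decidability is immediate from that of $p$, and for uniform continuity, if $C_\delta(j_1,j_2)$ then $C_\delta((k,j_1),(k,j_2))$ by combining reflexivity of $C_\delta$ on $k$ with \cref{lem:x-C}, whence $p(k,j_1) \Leftrightarrow p(k,j_2)$ by uniform continuity of $p$. We may then introduce $p_K(k) := p(k,\mathcal{E}_J(p_J(k)))$, which is again decidable from the decidability of $p$.

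The main obstacle is showing that $p_K$ is uniformly continuous (again with modulus $\delta$), since given $C_\delta(k_1,k_2)$ the two elements $\mathcal{E}_J(p_J(k_1))$ and $\mathcal{E}_J(p_J(k_2))$ returned by the $J$-searcher may differ arbitrarily, so one cannot directly invoke uniform continuity of $p$ to equate $p_K(k_1)$ and $p_K(k_2)$. The trick is to go through the search condition on $J$: assuming $p_K(k_1)$, set $j^* := \mathcal{E}_J(p_J(k_1))$, so $p(k_1,j^*)$ holds; by \cref{lem:x-C} and uniform continuity of $p$ we transfer this to $p(k_2,j^*) = p_J(k_2)(j^*)$, which witnesses that $p_J(k_2)$ is satisfied, and hence the search condition on $\mathcal{E}_J(p_J(k_2))$ yields $p(k_2,\mathcal{E}_J(p_J(k_2))) =: p_K(k_2)$. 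The converse direction is symmetric.

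Finally I would set $k_0 := \mathcal{E}_K(p_K)$ and $j_0 := \mathcal{E}_J(p_J(k_0))$ and verify the search condition for $p$ on $K \x J$: given $\ty{(k,j)}{K \x J}$ with $p(k,j)$, the search condition on $\mathcal{E}_J(p_J(k))$ promotes this to $p_K(k)$, and then the search condition on $\mathcal{E}_K(p_K)$ gives $p_K(k_0) = p(k_0,j_0)$, as required. This essentially matches the plain searchability argument, with the only real subtlety concentrated in the uniform continuity verification for $p_K$.
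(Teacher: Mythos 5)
Your proof is correct, and its overall skeleton (the predicates \(p_J\), \(p_K\), the answer \((k_0,j_0)\) and the final verification of the search condition) is exactly the adaptation of \cref{lem:prod-searchable} that the paper also performs. However, you handle the one genuinely delicate step --- the uniform continuity of \(p_K\) --- by a different and arguably cleaner route. The paper's proof follows the method of \cref{lem:head-pred-ucont}: it shows that for \(\delta\)-close \(k_1,k_2\) the predicates \(p_J(k_1)\) and \(p_J(k_2)\) are pointwise propositionally equivalent, invokes propositional extensionality (and function extensionality) to conclude they are \emph{equal} as decidable uc-predicates, hence that \(\mathcal{E}_J\) returns the same element for both, and then applies \cref{lem:x-C} to get closeness of the two pairs; this is why the paper's proof carries the \axioms{fp} annotation and why the formalisation needs auxiliary lemmas on equality of uc-predicates. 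You instead never compare the two searched elements at all: you transfer satisfiability of \(p_K(k_1)\) to satisfiability of \(p_J(k_2)\) at the concrete witness \(j^* := \mathcal{E}_J(p_J(k_1))\) (using \cref{lem:x-C}, reflexivity of the closeness relation, and uniform continuity of \(p\)), and then let the search condition of \(\mathcal{E}_J\) do the rest. This buys you a proof that is free of the extensionality axioms and of any reasoning about equality of predicate records, at no cost in generality --- the same trick would in fact also discharge the head-predicate continuity in the Tychonoff theorem. The paper's extensionality route has the mild advantage of establishing the stronger fact that the searched elements coincide, but that strength is not needed here.
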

\begin{proof}[Proof (Sketch.)] \axioms{fp}
We follow the same technique as \cref{lem:prod-searchable}, except we also have to prove that --- assuming the given predicate is uniformly continuous by the binary product closeness function (\cref{def:prod-closeness}) --- each predicate in the family \(\ty{p_J}{K \to \mathsf{decidable{\hy}predicate}(J)}\) and the predicate \(\ty{p_K}{\mathsf{decidable{\hy}predicate}(K)}\) are uniformly continuous by the respective closeness functions on \(J\) and \(K\).

The former is straightforward by \cref{lem:x-C}, while the latter uses propositional extensionality in a similar way to the later \cref{lem:head-pred-ucont} but using \cref{lem:x-C} instead of \cref{lem:Pi-C-combine}. 
\end{proof}

\begin{corollary}
Given an \(\ty{n}{\N}\) and an \((n+1)\)-size vector \(\ty{Y}{\F (n+1) \to \U}\) of uniformly continuously searchable closeness spaces, the finite product closeness space of \((n+1)\)-size dependent vectors \(\F (n+1) \to Y_n\) is uniformly continuously searchable.
\end{corollary}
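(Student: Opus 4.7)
The plan is to proceed by induction on \(\ty{n}{\N}\), using \cref{lem:prod-csearchable} as the inductive step. Recall from \cref{sec:sigma} that dependent vectors are defined via the recursion \(\mathsf{Vec}(0,X) := \1\) and \(\mathsf{Vec}(n+1,X) := X_0 \x \mathsf{Vec}(n,\lambda i.X_{i+1})\), so the type we wish to show is uniformly continuously searchable unfolds naturally into a nested binary product.

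In the base case, where \(n := 0\), the type of \(1\)-size dependent vectors over \(Y\) is \(Y_0 \x \1\). By hypothesis, \(Y_0\) is uniformly continuously searchable. The unit closeness space \(\1\) is finite linearly ordered and pointed, hence uniformly continuously searchable (by \cref{lem:fin-searchable,remark:search-csearch}). The result then follows by applying \cref{lem:prod-csearchable} to \(Y_0\) and \(\1\).

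In the inductive case, where \(n := n' + 1\), the type of \((n'+2)\)-size dependent vectors over \(Y\) unfolds as \(Y_0 \x \mathsf{Vec}(n'+1,\mathsf{tail} \ Y)\). The vector \(\mathsf{tail} \ Y\) is an \((n'+1)\)-size vector of uniformly continuously searchable closeness spaces, so the inductive hypothesis yields that \(\mathsf{Vec}(n'+1,\mathsf{tail} \ Y)\) is uniformly continuously searchable. Applying \cref{lem:prod-csearchable} to \(Y_0\) and this tail-product then gives the desired conclusion.

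The only subtlety is that we must check this induction produces the same closeness function as the one that is induced on \(\mathsf{Vec}(n+1,Y)\) by \cref{cor:dvec-cspace}. This is immediate however, since \cref{cor:dvec-cspace} is itself defined by exactly the same induction on the binary product closeness function (\cref{def:prod-closeness}) that we are recursing along here. Hence no additional reasoning about closeness structure is needed, and the main ``obstacle'' reduces to a bookkeeping check that the binary-product closeness space appearing in each inductive step agrees with the iterated-product closeness space on dependent vectors.
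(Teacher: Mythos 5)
Your proposal is correct and follows exactly the paper's argument, which proves this corollary ``by induction and \cref{lem:prod-csearchable}''. The additional bookkeeping you note about the closeness function agreeing with \cref{cor:dvec-cspace} is a reasonable observation but does not change the substance of the proof.
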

\begin{proof} \axioms{fp}
By induction and \cref{lem:prod-csearchable}.
\end{proof}

\subsubsection{Equivalent uniformly continuously searchable spaces}

\begin{lemma}
\thesislit{3}{SearchableTypes-Examples}{\urlsimeq-csearchable}
Given a closeness space \(K\) and a uniformly continuously searchable closeness space \(J\) such that \(K \simeq J\), the equivalent closeness space \(K\) is uniformly continuously searchable.
\end{lemma}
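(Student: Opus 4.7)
The plan is to mimic the proof of \cref{lem:equiv-searchable} (equivalence preserving searchability) while keeping track of the extra uniform-continuity data required by \cref{def:c-searchable}. From $K \simeq J$ I obtain $f : K \to J$ together with inverses $g, h : J \to K$ satisfying $f \circ g \sim \mathsf{id}_J$ and $h \circ f \sim \mathsf{id}_K$; from these I first extract a single quasi-inverse $f^{-1} : J \to K$ satisfying both $f \circ f^{-1} \sim \mathsf{id}_J$ and $f^{-1} \circ f \sim \mathsf{id}_K$, so that I can argue symmetrically in both directions.

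Given a uniformly continuous decidable predicate $(p, \delta)$ on $K$, I transport it to $\ty{p'}{J \to \Omega}$ defined by $p'(j) := p(f^{-1}(j))$. Decidability of $p'$ is inherited pointwise from that of $p$. For uniform continuity I claim the same $\delta$ works as a modulus. Recall from \cref{cor:equiv-closeness} (via \cref{lem:embedding-cspace}) that the closeness on $K$ is the subtype closeness pulled back along $f$, so $c_K(x_1,x_2) = c_J(f(x_1),f(x_2))$. Hence for any $j_1, j_2$ that are $\delta$-close in $J$, we compute
\[ c_K(f^{-1}(j_1), f^{-1}(j_2)) \ = \ c_J(f(f^{-1}(j_1)), f(f^{-1}(j_2))) \ = \ c_J(j_1, j_2), \]
using $f \circ f^{-1} \sim \mathsf{id}_J$ on each component. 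Thus $f^{-1}(j_1)$ and $f^{-1}(j_2)$ are $\delta$-close in $K$, and the uniform continuity of $p$ at modulus $\delta$ yields $p'(j_1) \Leftrightarrow p'(j_2)$.

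Next, I invoke the uniformly continuous searcher $\mathcal{E}_J$ of $J$ on $(p', \delta)$ to obtain $\ty{j_0}{J}$, and return $k_0 := f^{-1}(j_0)$ as the searcher's output on $p$. To verify the search condition (\cref{def:c-searchable}), suppose some $\ty{k}{K}$ satisfies $p(k)$; then $f(k) : J$ witnesses $p'(f(k)) = p(f^{-1}(f(k))) = p(k)$ by $f^{-1} \circ f \sim \mathsf{id}_K$, so the hypothesis of the search condition for $\mathcal{E}_J$ is met and we conclude $p'(j_0)$, i.e.\ $p(k_0)$.

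The main potential obstacle is the asymmetry in the raw data of an equivalence: the two inverses $g$ and $h$ play distinct roles in the closeness-transfer step (which requires $f \circ (-) \sim \mathsf{id}_J$) and in the final satisfaction step (which requires $(-) \circ f \sim \mathsf{id}_K$). Collapsing them into a single quasi-inverse $f^{-1}$ at the start sidesteps this asymmetry, so both the uniform-continuity transfer and the search-condition verification drop out cleanly, without invoking function extensionality (the relevant equalities in $\Ni$ are propositional and transported directly into the $\preceq$ relation used in \cref{def:C}).
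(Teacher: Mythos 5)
Your proof is correct and follows essentially the same route as the paper's: transport the predicate along the inverse of the equivalence, show the same modulus \(\delta\) works because the closeness on \(K\) is the subtype closeness \(c_K(x_1,x_2) = c_J(f(x_1),f(x_2))\), and pull the witness back through the searcher on \(J\). The only differences are cosmetic — you collapse the two one-sided inverses into a single quasi-inverse and rewrite directly along the homotopy equalities, where the paper keeps the left inverse \(g\) and instead derives \(C_n(j_i, f(g(j_i)))\) for all \(n\) and appeals to transitivity of \(C_\delta\).
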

\begin{proof}
Given the predicate \(\ty{p}{\mathsf{decidable{\hy}predicate}(K)}\), we follow the same technique as \cref{lem:equiv-searchable}, except we also have to show that the predicate \(p' := p \circ g\) (where \(\ty{g}{J \to K}\) is derived from the proof of \(K \simeq J\), see \cref{def:simeq}) is uniformly continuous by the closeness function \(\ty{c_J}{\closeness{J}}\). 

We first assume that \(\ty{\delta}{\N}\) is the modulus of uniform continuity of \(p\), which is uniformly continuous by the subtype closeness function (\cref{def:subtype-closeness}) \(c_K := c_J \circ f\) (where \(\ty{f}{X \to Y}\) is the equivalence derived from the proof of \(K \simeq J\) which is such that \(f \circ g \sim \mathsf{id}_J\), again see \cref{def:simeq}). Recall that this means given any \(\ty{k_1,k_2}{K}\) such that \(C_\delta(k_1,k_2) := \underline \delta \preceq c_K(k_1,k_2) := \underline \delta \preceq c_J(f(k_1),f(k_2))\) then \(p(k_1) \Leftrightarrow p(k_2)\).

We will show that \(\delta\) is also the modulus of uniform continuity for \(\left( p \circ g \right)\); i.e.\ given \(\ty{j_1,j_2}{J}\) such that \(C_\delta(j_1,j_2) := \underline \delta \preceq c_J(j_1,j_2)\) then \(p(g(j_1)) \Leftrightarrow p(g(j_2))\). Because for \(i \in \{1,2\}\) we have \(j_i = f(g(j_i))\) then we also have \(C_n(j_i,f(g(j_i)))\) for all \(\ty{n}{\N}\). Therefore, by transitivity of the closeness relation (\cref{lem:C-eq}) we have \(C_\delta(f(g(j_1)),f(g(j_2))) := \underline \delta \preceq c_J(f(g(j_1)),f(g(j_2)))\), and thus the result follows by the uniform continuity of \(p\). 
\end{proof}

\subsubsection{Finite-sequence uniformly continuously searchable spaces}

One proof that discrete-sequence closeness spaces are uniformly continuously searchable comes from the fact that all such closeness spaces are totally bounded.

\begin{corollary}
\label{cor:disseq-csearch}
\thesislit{3}{SearchableTypes-Examples}{dep-discrete-finite-seq-csearchable}
Given an \(\N\)-indexed type family \(\ty{F}{\N \to \U}\) of finite linearly ordered types, the discrete-sequence closeness space \(\Pi F\) is uniformly continuously searchable.
\end{corollary}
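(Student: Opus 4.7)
The plan is to recognise this as a direct corollary of Theorem \ref{thm:tb-csearch} (the fact that every pointed, totally bounded closeness space is uniformly continuously searchable), combined with Lemma \ref{lem:disseq-totallybounded} (total boundedness of discrete-sequence closeness spaces over a pointed family). Concretely, I would first apply Lemma \ref{lem:disseq-cspace} to obtain the discrete-sequence closeness structure on \(\Pi F\), then Lemma \ref{lem:disseq-totallybounded} to obtain that this closeness space is totally bounded, and finally Theorem \ref{thm:tb-csearch} to conclude uniform continuous searchability.

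The only thing that is not immediate from those three results is pointedness of \(\Pi F\) itself, which is required both by \cref{lem:disseq-totallybounded} (whose proof constructs the net by ``repeating arbitrary elements of \(F_{\varepsilon,...}\)'', i.e.\ invoking pointedness of each \(F_n\)) and by \cref{thm:tb-csearch} (for the searcher to produce an element at all). The natural reading of the statement is therefore that each \(F_n\) is additionally pointed, in which case \(\Pi F\) is pointed by taking the sequence \(\lambda n.f_n\) for any chosen family of points \(\ty{f_n}{F_n}\); this is the same hypothesis already used in \cref{lem:disseq-totallybounded}.

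The main obstacle is purely bookkeeping: there is no genuine mathematical content beyond chaining these earlier results together, but one must be careful that the pointedness hypothesis is tracked correctly (in the formalisation this will appear as an explicit argument, even though the statement as written in the excerpt leaves it implicit). There are no nontrivial continuity arguments to make here, because uniform continuity of the searched predicate is exactly what allows the reduction to the finite \(\delta\)-net already performed inside the proof of \cref{thm:tb-csearch}.
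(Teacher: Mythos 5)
Your proposal is correct and matches the paper's own proof, which is simply ``By \cref{lem:disseq-totallybounded,thm:tb-csearch}.'' Your additional observation about the implicit pointedness hypothesis (needed for both cited results, and omitted from the corollary's statement as written) is accurate and careful, but does not change the route.
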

\begin{proof}
By \cref{lem:disseq-totallybounded,thm:tb-csearch}.
\end{proof}

\begin{corollary}
\label{lem:disseq-csearch-1}
\thesislit{3}{SearchableTypes-Examples}{discrete-finite-seq-csearchable}
The type of sequences on any finite linearly ordered type is a uniformly continuously searchable closeness space.
\end{corollary}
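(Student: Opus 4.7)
The plan is to derive this non-dependent corollary as an immediate instance of the preceding dependent result \cref{cor:disseq-csearch}. Given a finite linearly ordered type $F$, I would form the constant $\N$-indexed type family $\ty{\lambda(\ty{n}{\N}).F}{\N \to \U}$, each component of which is trivially the finite linearly ordered type $F$. The countable product $\Pi(\lambda n. F)$ is then, up to definitional unfolding, exactly the type $(\N \to F)$ of sequences on $F$, and the discrete-sequence closeness function (\cref{def:disseq-closeness}) on it coincides with the one used for the dependent case, since the prefix equality relation $\sim^n$ depends only on the pointwise equality of the sequences.

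Hence, applying \cref{cor:disseq-csearch} to this constant family immediately yields that the type of sequences on $F$, equipped with the discrete-sequence closeness function, is uniformly continuously searchable. No additional machinery is needed beyond this instantiation.

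There is essentially no mathematical obstacle here; the only small subtlety is a bookkeeping one, namely checking that the closeness space structure on $(\N \to F)$ arising from \cref{cor:disseq-cspace} (the non-dependent case) is the same as that obtained from \cref{lem:disseq-cspace} applied to the constant family. Since both are defined by case analysis on prefix equality $\sim^n$, and prefix equality for the constant family reduces directly to the non-dependent notion, this identification is routine and the corollary follows.
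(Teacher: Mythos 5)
Your proposal is correct and matches the paper's proof exactly: the paper also derives this corollary as an immediate instance of \cref{cor:disseq-csearch} applied to the constant family, with the non-dependent sequence type being the special case of the dependent product on a constant type family (as the paper sets up in \cref{sec:sequences}). The bookkeeping point you raise about the two closeness structures coinciding is indeed routine and is implicit in the paper's treatment.
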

\begin{proof}
By \cref{cor:disseq-csearch}.
\end{proof}

However, when extracting a search algorithm from this corollary, the \(\delta\)-net (where \(\ty{\delta}{\N}\) is the modulus of uniform continuity of the predicate being searched) must be fully computed before search can begin. This can cause efficiency issues, that are mildly improved by instead extracting a proof from the Tychonoff theorem (\cref{thm:tychonoff}) for uniformly continuously searchable types instead\footnote{In actuality, we rewrite the Tychonoff theorem specifically for finite-sequence spaces (as can be seen in the \textsc{Agda} formalisation), but the proof method is similar, and simpler, and so we leave it out to avoid repetition.}.

\subsection{Tychonoff theorem for uniformly continuously searchable spaces}
\label{sec:tychonoff}

\cref{thm:tb-csearch} allows us to prove that the Cantor space and most of the types we wish to search in \cref{chap:exact-real-search} are indeed uniformly continuously searchable.
However, in much the same way that searchability does not imply finiteness, continuous searchability does not imply totally boundedness.
This means that we cannot combine the theorem and \cref{lem:pi-totallybounded} to prove that continuous searchability preserves countable products.
In this subsection, we use a different proof technique, inspired by Escard\'o's (in \cite{Escardo08}) but which is not general recursive, to prove the Tychonoff theorem for uniformly continuously searchable spaces.

\begin{theorem}[Tychonoff theorem]
\label{thm:tychonoff}
\thesislit{3}{SearchableTypes-Examples}{tychonoff}
Given an \(\N\)-indexed type family \(\ty{T}{\N \to \U}\) of uniformly continuously searchable closeness spaces, the countable product closeness space \(\Pi T\) (see \cref{def:pi-closeness}) is uniformly continuously searchable.
\end{theorem}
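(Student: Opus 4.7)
The plan is to proceed by induction on the modulus of uniform continuity $\delta : \N$ of the decidable predicate to be searched, proving the strengthened statement: for every family $T : \N \to \U$ of uniformly continuously searchable closeness spaces, every decidable predicate on $\Pi T$ with modulus $\delta$ admits a witness satisfying the search condition of \cref{def:c-searchable}. The base case $\delta = 0$ is immediate: by \cref{remark:0-close} any two elements of $\Pi T$ are $0$-close, so $p$ takes the same truth value on every element, and the sequence $n \mapsto \mathcal{E}_{T_n}(p^\top)$ (well-defined because each $T_n$ is pointed by \cref{lem:csearchable-pointed}) serves as a valid return value.

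For the inductive step with $\delta = \delta' + 1$, I mirror the strategy used for the binary product in \cref{lem:prod-csearchable}. For each candidate head element $x_0 : T_0$, form the predicate $p_{x_0}(\alpha) := p(x_0 :: \alpha)$ on $\Pi(\mathsf{tail}\ T)$; an application of \cref{lem:Pi-C-combine} (with $x_0$ playing both head roles) shows that $p_{x_0}$ has modulus of uniform continuity $\delta'$, so by the inductive hypothesis --- applied to the family $\mathsf{tail}\ T$ and modulus $\delta'$ --- I obtain a searched element $\alpha^{x_0} : \Pi(\mathsf{tail}\ T)$. Next, consider the predicate $q(x_0) := p(x_0 :: \alpha^{x_0})$ on $T_0$ and search it with the given searcher on $T_0$ to obtain $x_0^*$; the proposed witness for $\Pi T$ is then $x_0^* :: \alpha^{x_0^*}$.

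The main obstacle is to show that $q : T_0 \to \Omega$ is uniformly continuous, since $\alpha^{x_0}$ is produced by a black-box searcher and need not depend continuously on $x_0$. I nevertheless claim $q$ has modulus $\delta' + 1$, arguing as follows: given $C_{\delta'+1}(x_0, x_0')$ and an assumption $p(x_0 :: \alpha^{x_0})$, apply \cref{lem:Pi-C-combine} with identical tails $\alpha^{x_0}$ to obtain $C_{\delta'+1}(x_0 :: \alpha^{x_0},\, x_0' :: \alpha^{x_0})$, hence $p(x_0' :: \alpha^{x_0})$ by uniform continuity of $p$; this witnesses an element satisfying $p_{x_0'}$, so the search condition for $\alpha^{x_0'}$ yields $p(x_0' :: \alpha^{x_0'})$, which is $q(x_0')$, and the reverse implication is symmetric.

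Finally, to verify the overall search condition, given any $\alpha : \Pi T$ with $p(\alpha)$, I use \cref{lem:Pi-head-tail-eta} together with uniform continuity of $p$ to replace $\alpha$ by $\alpha_0 :: \mathsf{tail}(\alpha)$ and obtain $p(\alpha_0 :: \mathsf{tail}(\alpha))$; this witnesses an element satisfying $p_{\alpha_0}$, so the search condition for $\alpha^{\alpha_0}$ yields $p(\alpha_0 :: \alpha^{\alpha_0})$, which is $q(\alpha_0)$; finally, the search condition for $x_0^*$ concludes $q(x_0^*)$, that is, $p(x_0^* :: \alpha^{x_0^*})$, as required. Note that the recursion is well-founded on $\delta$, not on the index of the family, so no general recursion is invoked.
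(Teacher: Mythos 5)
Your proposal is correct and follows the same skeleton as the paper's proof: induction on the modulus of uniform continuity $\delta$, with the base case handled by \cref{remark:0-close}, the inductive step built from a family of tail predicates $p_{x_0}$ (the paper's $p_t(p,x)$) and a head predicate $q$ (the paper's $p_h(p)$), and the final verification of the search condition via \cref{lem:Pi-head-tail-eta} exactly as in the paper. The one place where you genuinely diverge is the crucial step, the uniform continuity of the head predicate (\cref{lem:head-pred-ucont}). The paper argues that $C_{\delta'+1}(x_0,x_0')$ makes the two tail predicates $p_t(p,x_0)$ and $p_t(p,x_0')$ pointwise propositionally equivalent, then invokes propositional extensionality (and function extensionality) to conclude they are \emph{equal} as predicates, so the tail searcher returns the very same element and $C_{\delta'}$ of the two tails holds by reflexivity; this is why that lemma carries the \axioms{fp} label. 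You instead never compare $\alpha^{x_0}$ and $\alpha^{x_0'}$ at all: from $p(x_0 :: \alpha^{x_0})$ you transfer satisfaction to $p(x_0' :: \alpha^{x_0})$ via \cref{lem:Pi-C-combine} and the uniform continuity of $p$, exhibit $\alpha^{x_0}$ as a witness for $p_{x_0'}$, and then let the tail searcher's \emph{search condition} deliver $p(x_0' :: \alpha^{x_0'})$. This is a strictly more elementary argument for that step: it avoids propositional extensionality entirely (you only need the searcher's specification, not that equal predicates yield equal answers), at the cost of not establishing that the head predicate factors through a closeness-preserving assignment $x_0 \mapsto \alpha^{x_0}$ --- which you do not need anyway. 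Both routes are sound; yours trades an appeal to extensionality axioms for an extra use of the inductive search condition.
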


The proof is by induction on the searched predicate's modulus of uniform continuity \(\ty{\delta}{\N}\). When \(\delta := 0\), then the result is trivial.
Otherwise, the idea of the technique is that we recursively construct finitely-many uniformly continuous and decidable predicates that test sequences \(\ty{xs}{\Pi T}\) with fixed prefixes of elements.
Each time the fixed prefix increases towards \(\delta\), the modulus of uniform continuity decreases towards \(0\); thus, the result will follow by \(\delta\)-many applications of the inductive hypothesis.

We first define the following family of `tail predicates'.

\begin{definition}
\thesislit{3}{SearchableTypes-Examples}{tail-predicate-tych}
Given an \(\N\)-indexed type family \(\ty{T}{\N \to \U}\), a decidable predicate \(\ty{p}{\mathsf{decidable{\hy}predicate}(\Pi T)}\) and a fixed head element \(\ty{x}{T_0}\), we define the decidable \emph{tail predicate} as follows:
\begin{alignat*}{3}
p_t &: \mathsf{decidable{\hy}predicate}(\Pi T) \to T_0 \to \mathsf{decidable{\hy}predicate}(\Pi (\mathsf{tail} \ T)) ,\\
p_t &(p,x) := \lambda xs.p(x :: xs).
\end{alignat*}
\end{definition}

\begin{lemma}
\label{lem:tail-pred-ucont}
\thesislit{3}{SearchableTypes-Examples}{tail-predicate-tych}
Given an \(\N\)-indexed type family \(\ty{T}{\N \to \U}\) of uniformly continuously searchable closeness spaces and a uniformly continuous and decidable predicate \(\ty{p}{\mathsf{decidable{\hy}uc{\hy}predicate}(\Pi T)}\) with modulus of uniform continuity \(\ty{\delta+1}{\N}\), the tail predicate \(p_t(p,x)\) for any fixed head element \(\ty{x}{T_0}\) is uniformly continuous with modulus of uniform continuity \(\ty{\delta}{\N}\).
\end{lemma}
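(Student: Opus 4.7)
The plan is to reduce the uniform continuity of the tail predicate directly to the uniform continuity of \(p\) via the combine-lemma for countable product closeness (\cref{lem:Pi-C-combine}). Given two tail sequences \(\ty{ys_1,ys_2}{\Pi (\mathsf{tail} \ T)}\) with \(C_\delta(ys_1,ys_2)\), I need to exhibit \(p_t(p,x)(ys_1) \Leftrightarrow p_t(p,x)(ys_2)\), which unfolds to \(p(x :: ys_1) \Leftrightarrow p(x :: ys_2)\). Since \(p\) is uniformly continuous with modulus \(\delta+1\), it suffices to prove \(C_{\delta+1}(x :: ys_1, x :: ys_2)\) in the countable product closeness space \(\Pi T\).

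To get this, I would apply \cref{lem:Pi-C-combine}: its hypotheses ask for \(C_{\delta+1}\) on the heads and \(C_\delta\) on the tails. The tail hypothesis is exactly the assumption \(C_\delta(ys_1,ys_2)\). The head hypothesis is \(C_{\delta+1}(x,x)\), which holds for every precision by reflexivity of the closeness relation (\cref{lem:C-eq}). Feeding these into the combine lemma gives \(C_{\delta+1}(x :: ys_1, x :: ys_2)\), and the uniform continuity of \(p\) then supplies the biconditional we need.

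There is no serious obstacle here: once \(\delta+1\) is written as a successor so that the combine lemma applies, the proof is essentially a one-line calculation using reflexivity of \(C\) together with \cref{lem:Pi-C-combine}. The only mildly delicate point is matching the formal representation of \((x :: ys)\) used in the definition of \(p_t\) with the head/tail decomposition that \cref{lem:Pi-C-combine} expects; this is a definitional identity for cons-ed sequences, but in the formalisation one may have to appeal to \cref{lem:Pi-head-tail-eta} to avoid invoking function extensionality when lining up \(\alpha\) with \(\mathsf{head} \ \alpha :: \mathsf{tail} \ \alpha\) inside the closeness function. Decidability of \(p_t(p,x)\) is immediate from decidability of \(p\), so the witness of uniform continuity \(\delta\) is all that remains to package up.
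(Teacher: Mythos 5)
Your proof is correct and follows exactly the paper's route: the paper's own proof reads ``This follows immediately from the uniform continuity of \(p\) and \cref{lem:Pi-C-combine}'', and your expansion — using reflexivity of the closeness relation for the head and the assumed \(C_\delta\) for the tails, then feeding the resulting \(C_{\delta+1}(x :: ys_1, x :: ys_2)\) into the uniform continuity of \(p\) — is precisely the intended argument. Your remark about \cref{lem:Pi-head-tail-eta} correctly identifies the only formalisation wrinkle.
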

\begin{proof}
This follows immediately from the uniform continuity of \(p\) and \cref{lem:Pi-C-combine}.
\end{proof}

\noindent
This family of `tail predicates' is matched by a family of `head predicates', which are defined mutually recursively with the proof of \cref{thm:tychonoff}.

\begin{definition}
\thesislit{3}{SearchableTypes-Examples}{head-predicate-tych}
Given an \(\N\)-indexed type family \(\ty{T}{\N \to \U}\) of uniformly continuously searchable closeness spaces and a uniformly continuous and decidable predicate \(\ty{p}{\mathsf{decidable{\hy}uc{\hy}predicate}(\Pi T)}\) with modulus of uniform continuity \(\ty{\delta+1}{\N}\), we define the decidable \emph{head predicate} as follows:
\begin{alignat*}{3}
p_h &: \mathsf{decidable{\hy}uc{\hy}predicate}(\Pi T) \to \mathsf{decidable{\hy}predicate}(T_0) ,\\
p_h &(p) := \lambda x.p(x :: \mathcal{E}_{\Pi (\mathsf{tail} \ T)}(p_t(p,x))),
\end{alignat*}
where \(\ty{\mathcal{E}_{\Pi (\mathsf{tail} \ T)}}{\mathsf{decidable{\hy}predicate}(\Pi (\mathsf{tail} \ T)) \to \mathsf{tail} \ T}\) is the uniformly continuous searcher (see \cref{def:c-searcher}) derived from the inductive hypothesis of \cref{thm:tychonoff}.
\end{definition}

\begin{lemma}
\label{lem:head-pred-ucont}
\thesislit{3}{SearchableTypes-Examples}{head-predicate-tych}
Given an \(\N\)-indexed type family \(\ty{T}{\N \to \U}\) of uniformly continuously searchable closeness spaces and a uniformly continuous and decidable predicate \(\ty{p}{\mathsf{decidable{\hy}uc{\hy}predicate}(\Pi T)}\) with modulus of uniform continuity \(\ty{\delta+1}{\N}\), the head predicate \(p_h(p,x)\) is uniformly continuous with modulus of uniform continuity \(\ty{\delta+1}{\N}\).
\end{lemma}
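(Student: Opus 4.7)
\axioms{fp}
The plan is to unfold the definition of uniform continuity for $p_h(p)$, so that given any $\ty{x_1,x_2}{T_0}$ with $C_{\delta+1}(x_1,x_2)$ I must show
\[ p\bigl(x_1 :: \mathcal{E}_{\Pi(\mathsf{tail}\ T)}(p_t(p,x_1))\bigr) \Leftrightarrow p\bigl(x_2 :: \mathcal{E}_{\Pi(\mathsf{tail}\ T)}(p_t(p,x_2))\bigr). \]
Since $p$ is uniformly continuous with modulus $\delta+1$, it suffices to prove that the two argument sequences are $(\delta+1)$-close. By \cref{lem:Pi-C-combine}, this reduces to two sub-obligations: (i) $C_{\delta+1}(x_1,x_2)$, which is the assumption; and (ii) $C_\delta$ of the two tails $\mathcal{E}_{\Pi(\mathsf{tail}\ T)}(p_t(p,x_1))$ and $\mathcal{E}_{\Pi(\mathsf{tail}\ T)}(p_t(p,x_2))$.

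The main work is in (ii). The idea is to show that, under our assumption $C_{\delta+1}(x_1,x_2)$, the two tail predicates $p_t(p,x_1)$ and $p_t(p,x_2)$ are \emph{equal} as decidable uniformly continuous predicates, so that the searcher $\mathcal{E}_{\Pi(\mathsf{tail}\ T)}$ returns the same element on both inputs. Given any tail $\ty{xs}{\Pi(\mathsf{tail}\ T)}$, we have $C_{\delta+1}(x_1 :: xs, x_2 :: xs)$ by applying \cref{lem:Pi-C-combine} to the heads $x_1,x_2$ (which are $(\delta+1)$-close by assumption) and the identical tails (which are trivially $\delta$-close by reflexivity, \cref{lem:C-eq}). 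Uniform continuity of $p$ then yields $p(x_1 :: xs) \Leftrightarrow p(x_2 :: xs)$, i.e., the two tail predicates are pointwise propositionally equivalent. By propositional extensionality each pointwise logical equivalence becomes an identity, and by function extensionality the two predicates are then equal as functions. Since equal inputs produce equal outputs, the searcher gives $\mathcal{E}_{\Pi(\mathsf{tail}\ T)}(p_t(p,x_1)) = \mathcal{E}_{\Pi(\mathsf{tail}\ T)}(p_t(p,x_2))$, from which $C_\delta$-closeness follows (indeed, closeness at any precision, by \cref{def:cspace}.1 and \cref{cor:C-mono}).

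Combining (i) and (ii) via \cref{lem:Pi-C-combine} establishes the required $(\delta+1)$-closeness of the two composite sequences, and one final appeal to the uniform continuity of $p$ closes the proof. The main obstacle is step (ii): one might initially try to prove directly that the two searcher outputs are $\delta$-close by reasoning about the search condition, but this is awkward because the searcher need not be canonical on merely equivalent predicates. The trick is instead to lift pointwise propositional equivalence of the tail predicates to definitional equality via propositional and function extensionality, which forces the searcher to return the same witness on the nose; this is also the reason the lemma must be labelled \axioms{fp}.
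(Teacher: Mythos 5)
Your proposal is correct and follows essentially the same route as the paper's proof: reduce to $(\delta+1)$-closeness of the two composite sequences via the uniform continuity of $p$, split off the heads with \cref{lem:Pi-C-combine}, and handle the searcher outputs by showing the two tail predicates are pointwise propositionally equivalent (again via \cref{lem:Pi-C-combine} and the uniform continuity of $p$) and then identifying them using propositional and function extensionality so that the searcher returns the same element. Your closing remark about why extensionality, rather than reasoning about the search condition, is the right tool here matches the paper's (implicit) rationale and its \axioms{fp} annotation.
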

\begin{proof} \axioms{fp}
Given \(\ty{x,y}{T_0}\), we need to show that if \(C_{\delta+1}(x,y)\) then \(p(x :: \mathcal{E}_{\Pi (\mathsf{tail} \ T)}(p_t(p,x)))\) implies \(p(y :: \mathcal{E}_{\Pi (\mathsf{tail} \ T)}(p_t(p,y)))\).
Because \(p\) is uniformly continuous with modulus of uniform continuity \(\delta + 1\), this means showing that \(C_\delta(x :: \mathcal{E}_{\Pi (\mathsf{tail} \ T)}(p_t(p,x)),y :: \mathcal{E}_{\Pi (\mathsf{tail} \ T)}(p_t(p,y)))\). 

\vspace{1em}
Because we have \(C_{\delta+1}(x,y)\), we use \cref{lem:Pi-C-combine} to reduce this to needing to show \(C_\delta(\mathcal{E}_{\Pi (\mathsf{tail} \ T)}(p_t(p,x)),\mathcal{E}_{\Pi (\mathsf{tail} \ T)}(p_t(p,y)))\).
Using propositional extensionality (\cref{def:prop-equiv}), this follows from the fact that \(p_t(p,x)(xs) \Leftrightarrow p_t(p,y)(xs)\) for all \(\ty{xs}{\Pi X}\), which we prove below.

\vspace{1em}
Because \(C_{\delta+1}(x,y)\) and \(xs\) is infinitely close to itself (\cref{def:cspace}), then by \cref{lem:Pi-C-combine} we have \(C_{\delta+1}(x :: xs, y :: xs)\) and (by symmetry) \(C_{\delta+1}(y :: xs, x :: xs)\).
Therefore, using the uniform continuity of \(p\), we have \(p_t(p,x)(xs) \Leftrightarrow p_t(p,y)(xs)\).
\end{proof}

To reiterate, the idea is that, given any uniformly continuous and decidable predicate \(p\) with modulus of uniform continuity \(\delta+1\), the head of the sequence \(x\) is computed by finding an answer to the head predicate \(p_h(p)\), which requires finding an answer to the tail predicate \(p_t(p,x)\).
As the tail predicate has modulus of uniform continuity lower than the original predicate, it is recursively valid (i.e.\ it does not break \textsc{Agda}'s termination checker) to search for such an answer to the tail predicate.
The process then continues until the final tail predicate has modulus of uniform continuity \(0\).

\begin{proof}[Proof of \cref{thm:tychonoff}] \axioms{fp}
By induction on the modulus of uniform continuity \(\ty{\delta}{\N}\) on the predicate \(\ty{p}{\mathsf{decidable{\hy}uc{\hy}predicate}(\Pi T)}\) to be searched.

\ \\
When \(\delta := 0\), then by
\cref{remark:0-close} any element of \(\Pi T\) will satisfy the predicate. Therefore, recalling \cref{lem:csearchable-pointed}, we return the element
\[\left( \lambda n.\mathcal{E}_{T_n}(p^\top)\right) ,\] where \(\ty{\mathcal{E}_{T_n}}{\mathsf{decidable{\hy}uc{\hy}predicate}(T_n) \to T_n}\) is the uniformly continuous searcher on \(T_n\).

\ \\
When \(\delta := \delta' + 1\), then by \cref{lem:tail-pred-ucont} and the inductive hypothesis we construct the head predicate \[p_h(p) := \lambda x.p(x :: \mathcal{E}_{\Pi (\mathsf{tail} \ T)}(p_t(p,x))) ,\] which in turn constructs a tail predicate.
By \cref{lem:head-pred-ucont}, this head predicate is uniformly continuous and can thus be searched because \(T_0\) is uniformly continuously searchable --- therefore, we define \[x_0 := \mathcal{E}_{T_0}(p_h(p))).\] We then, by \cref{lem:tail-pred-ucont} and the inductive hypothesis, also define \[xs_0 := \mathcal{E}_{\Pi (\mathsf{tail} \ T)}(p_t(p,x)).\]
We return \(\ty{\left( x_0 :: xs_0 \right)}{\Pi T}\) as our answer to the predicate.

\ \\
We now need to show that, if there is some \(\ty{\alpha}{\Pi T}\) satisfying \(p(\alpha)\), then indeed \(p\left( x_0 :: xs_0 \right)\).
By \cref{lem:tail-pred-ucont} and the inductive hypothesis, for any \(\ty{x}{X}\) the tail predicate \(p_t(p,x)\) is such that if there is some \(\ty{xs}{\Pi (\mathsf{tail} \ T)}\) satisfying \(p_t(p,x)(xs)\) then \(p_t(p,x)(\mathcal{E}_{\Pi (\mathsf{tail} \ T)}(p_t(p,x)))\).
Therefore, because \[p_t(p,\alpha_0)(\mathsf{tail} \ \alpha) := p(\alpha_0 :: \mathsf{tail} \ \alpha),\] (by \cref{lem:Pi-head-tail-eta}), we have \[p_t(p,\alpha_0)(\mathcal{E}_{\Pi (\mathsf{tail} \ T)}(p_t(p,\alpha_0))) := p(\alpha_0 :: \mathcal{E}_{\Pi (\mathsf{tail} \ T)}(p_t(p,\alpha_0))) =: p_h(p,\alpha_0).\]
Of course, \(\ty{x_0}{T_0}\) is such that if there is some \(\ty{x}{T_0}\) satisfying \(p_h(p)(x)\) then \(p_h(p)(x_0)\).
Therefore, because we have shown \(\alpha_0\) satisfies \(p_h(p)(\alpha_0)\), we have \[p_h(p)(x_0) := p(x_0 :: \mathcal{E}_{\Pi (\mathsf{tail} \ T)}(p_t(p,x_0))) =: p(x_0 :: xs_0),\]
which is what we wanted.
\end{proof}
\chapter{Generalised Optimisation and Regression}
\label{chap:generalised}

We are interested in how the general view of search on infinite structures yielded by uniformly continuously searchable types can be applied to the purposes of performing optimisation and regression.

These foundational concepts of analysis are usually defined on \emph{real-valued} functions of \emph{multiple real variables}.
Informally, given a type \(\R\) of real numbers, the central goal of \emph{optimisation} is to compute, with mathematical guarantees, an argument which gives an approximately optimal value (e.g.\ a local/global minimum or maximum) of some given objective function \(\ty{f}{\R^d \to \R}\) in the compact intervals \([a_0,b_0],...,[a_{d-1},b_{d-1}]\), subject to some given constraints and degree of approximation.
The applications are numerous and obvious, in all areas of computational sciences, but in particular, this problem subsumes many aspects of \emph{parametric regression} if the objective function is a loss function, representing a kind of distance between a parameterised model and a reference model (or just some sampled data), that we wish to minimise~\cite{LossFunction}.

In this chapter, we provide a methodological contribution by describing type-theoretic variants of optimisation and regression to the general class of types (i.e.\ totally bounded and uniformly continuous searchable closeness spaces) we introduced in \cref{chap:searchable}.

\section{Global Optimisation}
\label{sec:global-opt}

Much of the recent literature in optimisation theory has focused on \emph{local optimisation}\footnote{Optimisation often concerns itself with finding maxima, rather than minima --- however, as these two problems are equivalent (we can find a maxima of \(f\) by finding a minima of \(-f\)), we will only focus on finding minima.} and methods for computing it efficiently, such as \emph{gradient descent} and supporting techniques such as \emph{automatic differentiation}~\cite{Baydin}.
Local optimisation is attractive because it can be computed efficiently, even for functions with high dimensionality.

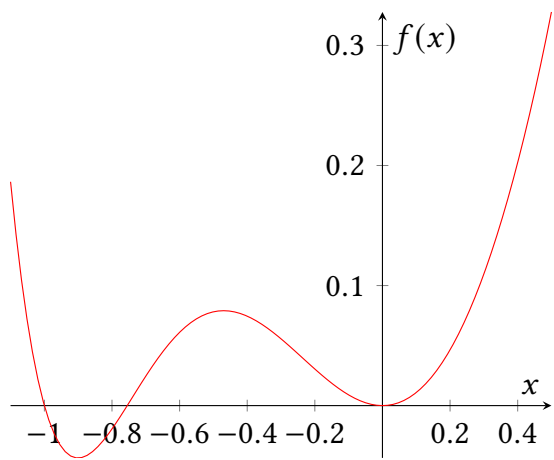
\begin{figure}
\centering
\resizebox{\columnwidth/2}{!}{
\begin{tikzpicture}
\begin{axis}[
    axis lines = center,
    xlabel = \(x\),
    ylabel = {\(f(x)\)},
]
\addplot [
    domain=-1.1:0.5,
    samples=100,
    color=red,
]
{x^6 - x^4 + x^3 + x^2};
\end{axis}
\end{tikzpicture}
}
\caption{Graph of the function \(f(x) = x^6 - x^4 + x^3 + x^2\) with \(-1.1 \leq x \leq 1\).}
\label{fig:graph}
\end{figure}

However, there exists a mathematically attractive alternative: computing a \emph{global minimum argument} of \(f\) in \([a,b]\), which can, for many problems, produce much stronger correctness guarantees than a local minimum~\cite{NewPerspectives,}.
A global minimum argument is an argument choice \(\ty{xs}{\R^d}\) such that \(f(xs)\) is a \emph{global minimum value} of \(f\) --- i.e., for any other choice \(\ty{ys}{\R^d}\), we have \(f(xs) \leq f(ys)\).

\begin{remark}
\lit
Given endpoints \(\ty{a,b}{\R}\) such that \(a \leq b\), we write \(x \in [a,b]\) to mean any real \(\ty{x}{\R}\) such that \(a \leq x \leq b\).
\end{remark}

\begin{definition}
\lit
Given a function \(\ty{f}{\R \to \R}\) and endpoints \(\ty{a,b}{\R}\), a given point \(x_0 \in [a,b]\) is a \emph{global minimum argument} of \(f\) in the compact interval \([a,b]\) if for all \(x \in [a,b]\) we have \(f(x_0) \leq f(x)\).
\end{definition}

For example, in \cref{fig:graph}, a global minimum argument of \(f(x)\) in the compact interval \([-1.1,1]\) is \(x \approx -0.90047\) (which yields a global minimum value \(f(x) \approx -0.04366\)), while a local minimum is \(x = 0\) (which yields a local minimum value \(f(x) = 0\)).

\clearpage

In constructive mathematics, computing a global minimum \emph{value} of a function on a compact interval is possible. For example, Simpson achieves this using the ternary signed-digit encoding of real numbers in the compact interval \([-1,1]\) (explored in \cref{sec:signed-digits})~\cite{Simpson}.
Nevertheless, the constructive existence of a global minimum \emph{argument} --- an argument to the function which gives a minimum value --- is a consequence of the extreme value theorem which is not valid in constructive mathematics, meaning that they cannot in general be computed~\cite{Troelstra}.
The fact that a global minimum argument is non-computable on the reals relies on the lack of a linear ordering: in constructive mathematics we cannot decide, for \(\ty{x,y}{\R}\), whether \((x \leq y) + (y \leq x)\)~\cite{Katz,Troelstra}. Indeed, this is sometimes referred to as the analytic \emph{lesser limited principle of omniscience} (LLPO)~\cite{ShulmanLLPO}.

Despite the non-computability of a global minimum argument, it is the case that, subject to continuity and compactness constraints (which are reasonable for many problem domains), an \emph{approximate} global minimum argument \(x\) of \(f\) is computable such that \(f(x)\) is a global minimum value \emph{up to a given precision} \(\varepsilon\).

\begin{definition}
\label{def:real-min}
\lit
Given a function \(\ty{f}{\R \to \R}\), endpoints \(\ty{a,b}{\R}\) and any precision \(\ty{\varepsilon}{\R}\), a given point \(x_0 \in [a,b]\) is an \emph{\(\varepsilon\)-global minimum argument} of \(f\) in the compact interval \([a,b]\) if for all \(x \in [a,b]\) such that \(\neg C_\varepsilon(f(x_0),f(x))\) we have \(f(x_0) \leq f(x)\).
\end{definition}

\noindent
Any algorithm which computes an \(\varepsilon\)-global minimum argument of a given function \(\ty{f}{\R \to \R}\), for any precision \(\ty{\varepsilon}{\R}\), is a \emph{global optimisation algorithm}\footnote{Global optimisation algorithms in general are defined for functions of multiple real variables; however, as our generalised perspective will eliminate the difference between the two, we stated only the single argument version for ease of presentation.}.

We are interested in constructing, for our wide class of types, global optimisation algorithms which are \emph{general} (few assumptions regarding a particular shape of the function graph) and \emph{complete} (guaranteed to find the solution within a specified margin of error).
In particular, we focus on algorithms in the style of Piyavskii, which apply to continuous functions and which discretise the domain of the function similar to a branch-and-bound-style optimisation algorithm~\cite{piyavskii1972algorithm,BnB}.
The continuity property is used for such algorithms to guarantee that the granularity of the domain can be used to bound the imprecision of the value of the objective function.

In this section, we outline how within our framework we can describe the convergent global optimisation of functions \(\ty{f}{X \to Y}\), where the types \(X\) and \(Y\) are kept as general as possible (and include those types that in \cref{chap:reals} we introduce to our type theory in order to encode the real numbers).

\subsection{Orders and approximate orders}
\label{sec:orders}

In order to reason about the minima of functions \(\ty{f}{X \to Y}\), our theory must have a concept of an order on those value types \(Y\).
We remain general in our idea of exactly which types these are, as we wish to establish a theory of \emph{generalised} optimisation and regression.

We first recall some basic notions of orders on a given type.

\begin{definition}
\label{def:preorder}
\thesislit{4}{ApproxOrder}{is-preorder}
For any type \(X\), a binary relation \(\ty{\leq}{X \to X \to \Omega}\) is a \emph{preorder on \(X\)} if it is reflexive and transitive:
\begin{enumerate}[(i)]
\item \(x \leq x\),
\item \(x \leq y \to y \leq z \to x \leq z\).
\end{enumerate}
\end{definition}

\begin{definition}
\label{def:linear-preorder}
\thesislit{4}{ApproxOrder}{is-linear-preorder}
For any type \(X\), a binary relation \(\ty{\leq}{X \to X \to \Omega}\) is a \emph{linear preorder on \(X\)} if it is a preorder and if, for all \(\ty{x,y}{X}\), it is the case that \((x \leq y) + (y \leq x)\) holds.
\end{definition}

\begin{definition}
\label{def:antisym-preorder}
\thesislit{4}{ApproxOrder}{is-partial-order}
For any type \(X\), a binary relation \(\ty{\leq}{X \to X \to \Omega}\) is a \emph{partial order on \(X\)} if it is an antisymmetric preorder; i.e. if for any \(\ty{x,y}{X}\) such that \(x \leq y\) and \(y \leq x\) it is the case that \(x = y\).
\end{definition}

\begin{definition}
\label{def:linear-order}
\thesislit{4}{ApproxOrder}{is-linear-order}
For any type \(X\), a binary relation \(\ty{\leq}{X \to X \to \Omega}\) is a \emph{linear order on \(X\)} if it is a preorder that is both antisymmetric and linear.
\end{definition}

The easiest example of a class of types with linear orders are the finite linearly ordered types.

\begin{remark}
\thesislit{4}{ApproxOrder-Examples}{finite-order}
Every finite linearly ordered type \(F\) has a linear order \(\ty{\leq_F}{F \to F \to \Omega}\) inherited from \(\F(n)\) as discussed in \cref{sec:finite-linear-ordered}.
\end{remark}

Recall that in the preamble to this section, we noted that we cannot compute a global minimum argument of a function \(\ty{f}{\R \to \R}\) because we cannot prove constructively that the expected preorder on the reals is linear.
We illustrate this further, without using the reals, by the following motivating example of a class of infinite types \(\seq{D}\) (where \(D\) is a discrete set) which has a partial order \(\ty{\leq_D}{D \to D \to \Omega}\) whose linearity --- similarly to that of the reals --- amounts (in non-trivial cases) to a constructive taboo.
This example is not just chosen for its intelligibility; this class of types is crucial for our later purposes of representing computable real numbers.

We first define the usual \emph{lexicographic order} on the discrete-sequence type \(\seq D\) --- recall from the literature that this order says \(\alpha \leq_{\seq D} \beta\) if at every point \(\ty{n}{\N}\) that \(\alpha\) and \(\beta\) agree prior to, we have \(\alpha_n \leq_{seq D}\beta_n\). For example, setting \(D := \2\), 
\vspace{-0.25em}
\[\{0,0,0,1,0,...\} \leq_{seq D}\{0,0,0,1,1,...\} \leq_{seq D}\{0,1,0,1,1,...\} \leq_{seq D}\{1,1,1,1,1,...\} .\]

\begin{definition}
\label{def:seq-order}
\thesislit{4}{ApproxOrder-Examples}{discrete-lexicorder}
Given a preorder \(\ty{\leq_D}{D \to D \to \Omega}\) on a discrete set \(D\), the \emph{lexicographic order} \(\ty{\leq_{\seq D}}{\seq D \to \seq D \to \Omega}\) on \(\seq D\) is defined:
\begin{alignat*}{3}
& \leq_{\seq D} &&: \seq{D} \to \seq{D} \to \Omega,\\
\alpha &\leq_{\seq D} \beta &&:= \Pity{n}{\N}{\alpha \sim^n \beta \to \alpha_n \leq_D \beta_n}.
\end{alignat*}
\end{definition}

\begin{lemma}
\label{lem:lexicographic-preorder}
\thesislit{4}{ApproxOrder-Examples}{discrete-lexicorder-is-preorder}
Given a partial order \(\ty{\leq_D}{D \to D \to \Omega}\) on a discrete set \(D\), the lexicographic order \(\ty{\leq_{\seq D}}{\seq D \to \seq D \to \Omega}\) is a preorder on \(\seq D\).
\end{lemma}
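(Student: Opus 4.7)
The plan is to verify the two defining conditions of a preorder (reflexivity and transitivity) separately. Reflexivity is essentially automatic: given $\ty{\alpha}{\seq D}$ and $\ty{n}{\N}$ with $\alpha \sim^n \alpha$ (which always holds by \cref{lem:sim-eq}), the goal $\alpha_n \leq_D \alpha_n$ follows immediately from reflexivity of $\leq_D$.

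The substantive work is in transitivity. Suppose $\alpha \leq_{\seq D} \beta$ and $\beta \leq_{\seq D} \zeta$, and fix $\ty{n}{\N}$ together with a witness $\alpha \sim^n \zeta$; the goal is $\alpha_n \leq_D \zeta_n$. The natural move is to case-split on whether $\alpha \sim^n \beta$ holds, which is decidable by \cref{lem:sim-decidable} since $D$ is discrete. In the positive case we can chain the hypotheses directly: $\alpha \leq_{\seq D} \beta$ applied to $\alpha \sim^n \beta$ gives $\alpha_n \leq_D \beta_n$; symmetry and transitivity of $\sim^n$ (\cref{lem:sim-eq}) convert $\alpha \sim^n \beta$ and $\alpha \sim^n \zeta$ into $\beta \sim^n \zeta$, so $\beta \leq_{\seq D} \zeta$ yields $\beta_n \leq_D \zeta_n$, and transitivity of $\leq_D$ finishes the case.

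The interesting case is when $\neg (\alpha \sim^n \beta)$, and this is where the antisymmetry of $\leq_D$ (which is why the hypothesis is a partial order rather than merely a preorder) must enter. The plan is to show that this case cannot arise. Using discreteness of $D$ and a bounded search below $n$, locate the least index $k < n$ at which $\alpha_k \neq \beta_k$; by minimality we have $\alpha \sim^k \beta$, and restricting $\alpha \sim^n \zeta$ gives $\alpha \sim^k \zeta$, hence by transitivity and symmetry $\beta \sim^k \zeta$. The two premises then supply $\alpha_k \leq_D \beta_k$ and $\beta_k \leq_D \zeta_k$; but $\zeta_k = \alpha_k$ (since $k < n$ and $\alpha \sim^n \zeta$), so the second inequality is in fact $\beta_k \leq_D \alpha_k$. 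Antisymmetry of $\leq_D$ then forces $\alpha_k = \beta_k$, contradicting the choice of $k$.

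The main obstacle is this second case: one has to be careful to extract the least disagreement index (a bounded minimisation on a decidable predicate) and to track the interplay between $\sim^k$ and $\sim^n$ closely enough that antisymmetry can be deployed. Once that contradiction is set up, the rest of the proof is routine rearrangement of $\sim^n$ and $\leq_D$ lemmas.
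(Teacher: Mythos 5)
Your proof is correct, but it takes a genuinely different route from the paper's. The paper proves transitivity by induction on the index \(n\): the inductive step reduces to the tails of the sequences, and the key fact — that \(\alpha \sim^{n}\zeta\) together with the two order hypotheses forces \(\alpha\) and \(\beta\) to agree on the relevant prefix — is established by a nested induction whose base case deploys antisymmetry of \(\leq_D\) exactly as you do. You instead make the same key fact explicit up front: you decide \(\alpha \sim^n \beta\) (legitimate, by \cref{lem:sim-decidable}), handle the agreeing case by direct chaining, and refute the disagreeing case by locating the least index \(k<n\) of disagreement and deriving \(\alpha_k = \beta_k\) from antisymmetry — a contradiction, so the branch is closed by \(\0\)-elimination. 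Both arguments are constructively sound and both hinge on antisymmetry in the same way. The trade-off is that your case-split genuinely uses discreteness of \(D\) (for decidability of \(\sim^n\) and the bounded minimisation), whereas the paper's double induction does not need it for transitivity at all; on the other hand, your "first disagreement index" argument is closer to the standard informal picture of lexicographic orders and arguably easier to read, while the paper's formulation is shaped to pass a termination checker without auxiliary bounded-search machinery.
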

\begin{proof} \axioms{f}
We prove both reflexivity and transitivity of the lexicographic order:
\begin{enumerate}[(i)]
\item Given \(\ty{\alpha}{\seq D}\), then \(\alpha \leq_{\seq D} \alpha\) because, by reflexivity of the partial order on \(D\), \(\alpha_n \leq_D \alpha_n\) for all \(\ty{n}{\N}\).
\item Given \(\ty{\alpha,\beta,\zeta}{\seq D}\), such that \(\alpha \leq_{\seq D} \beta\) and \(\beta \leq_{\seq D} \zeta\) we need to show that, given any \(\ty{n}{\N}\) such that \(\alpha \sim^n \zeta\) we have \(\alpha_n \leq_D \zeta_n\). We continue by induction on the given \(n\). 
    \begin{itemize}
    \item[] In the case where \(n := 0\) then we have \(\alpha_0 \leq_D \beta_0\) and \(\beta_0 \leq_D \zeta_0\), because \(\alpha \sim^0 \beta\) and \(\beta \sim^0 \zeta\) trivially hold; therefore \(\alpha_0 \leq_D \zeta_0\) holds by transitivity of the partial order on \(D\).
    \item[] In the case where \(n := n' + 1\) then we have \(\alpha \sim^{n'+1} \zeta\) (and hence \(\mathsf{tail}\ \alpha \sim^{n'} \mathsf{tail}\ \zeta\)) and want to show that \((\mathsf{tail}\ \alpha)_{n'} \leq_D (\mathsf{tail}\ \zeta)_{n'}\). We do this by invoking the inductive hypothesis after first proving \(\mathsf{tail}\ \alpha \leq_{\seq D} \mathsf{tail}\ \beta\) and \(\mathsf{tail}\ \beta \leq_{\seq D} \mathsf{tail}\ \zeta\). 
        \begin{itemize}
        \item[] In order to prove that \(\mathsf{tail}\ \alpha \leq_{\seq D} \mathsf{tail}\ \beta\), we need to show that, given any \(\ty{i}{\N}\) such that \(\mathsf{tail}\ \alpha \sim^i \mathsf{tail}\ \beta\) we have \((\mathsf{tail}\\alpha)_i \leq_D (\mathsf{tail}\ \beta)_i\). This follows from the fact that \(\alpha \leq_{\seq D} \beta\) once we show that \(\alpha \sim^{i+1} \beta\); i.e. for all \(\ty{j}{\N}\) such that \(j < i+1\) we have \(\alpha_j = \beta_j\). This is shown by induction on \(j\).
            \begin{itemize}
            \item[] In the case where \(j := 0\), we have both \(\alpha_0 \leq_D \beta_0\) (trivially from \(\alpha \leq_{\seq D} \beta\)) and \(\beta_0 \leq_D \alpha_0\) (by \(\alpha \sim^{n'+1} \zeta\) and \(\beta \leq_{\seq D} \zeta\)). Therefore \(\alpha_0 = \beta_0\) follows by antisymmetry of the partial order on \(D\).
            \item[] In the case where \(j := j' + 1\) then we have \(j' < i\); therefore \(\alpha_{j'+1} = \beta_{j'+1}\) follows immediately from \(\mathsf{tail}\ \alpha \sim^i \mathsf{tail}\ \beta\).
            \end{itemize}
        \item[] The proof that \(\mathsf{tail}\ \beta \leq_{\seq D} \mathsf{tail}\ \zeta\) is by the same reasoning.
        \end{itemize}
    \end{itemize}
\end{enumerate}
\end{proof}

Assuming this order is linear allows us to prove a constructive taboo, the \emph{weak principle of omniscience} (WLPO) --- therefore, the linearity of this order is constructively invalid.

\begin{remark}
\label{remark:WLPO}
WLPO states that, given a binary sequence, we can decide whether or not all points of the sequence are \(1\). Another way of saying this is that, given any extended natural number, we can decide whether or not it is equal to infinity.
\begin{alignat*}{2}
\mathsf{WLPO} &: \U,\\
\mathsf{WLPO} &:= \Pity{u}{\Ni}{\mathsf{is{\hy}decidable} (u = \infty)}.
\end{alignat*}
\noindent
Note that WLPO is implied by LPO (\cref{remark:LPO}). \MartinEscardo~has previously shown that WLPO is equivalent to the existence of a function \(\ty{f}{\Ni \to \2}\) that is \emph{discontinuous} in the sense that, given some \(\ty{u}{\Ni}\), it is the case that \(f(u) = 0\) if \(u = \overline n\) (for some \(\ty{n}{\N}\)) but \(f(u) = 1\) if \(u = \infty\)~\cite{EscardoWLPO}.
\begin{alignat*}{2}
\mathsf{WLPO}' &: \U,\\
\mathsf{WLPO}' &:= \\ & \Sigmatye{f}{\Ni \to \2}{\Pity{u}{\Ni}{\left( \Sigmaty{n}{\N}{u = \overline n} \to f(u) = 0 \right) \x \left( u = \infty \to f(u) = 1 \right)}}.
\end{alignat*}
\end{remark}

\begin{lemma}
\typetop{Taboos}{BasicDiscontinuity}{\urlN\urlinfty-linearity-taboo}
\thesislit{4}{ApproxOrder-Examples}{linear-finite-lexicorder-implies-WLPO}
The linearity of the lexicographic order \(\ty{\leq_{\seq F}}{\seq F \to \seq F \to \Omega}\) on \(\seq F\), where \(F\) is a finite linearly ordered type with more than one element, is logically equivalent to WLPO.
\end{lemma}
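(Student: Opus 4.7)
The plan is to prove both directions of the equivalence, making essential use of two distinct elements $a <_F b$ of $F$, which exist by hypothesis together with the linearity of $\leq_F$.

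For the implication WLPO $\Rightarrow$ linearity, given $\alpha, \beta : \seq F$, I would define two decreasing binary sequences $v, w \in \Ni$ tracking the ``partial truth'' of the two candidate orderings:
\[ v_n = 1 \iff \forall m \leq n,\ (\alpha \sim^m \beta \to \alpha_m \leq_F \beta_m), \]
and symmetrically for $w$ with $\alpha$ and $\beta$ swapped. Each predicate is decidable at each index by discreteness of $F$ (\cref{lem:fin-discrete}), decidability of prefix-equality (\cref{lem:sim-decidable}), and the linearity of $\leq_F$ on $F$; and by construction $v = \infty$ iff $\alpha \leq_{\seq F} \beta$, and $w = \infty$ iff $\beta \leq_{\seq F} \alpha$. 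Applying WLPO to each, if $v = \infty$ I return $\inl$ and if $w = \infty$ I return $\inr$. The remaining ``both fail'' case cannot arise: under double negation one extracts two prefix-agreeing indices $n, m$ where $\beta_n <_F \alpha_n$ and $\alpha_m <_F \beta_m$; taking $n \leq m$ without loss of generality, either $\alpha_n = \beta_n$ (from $\alpha \sim^m \beta$ and $n < m$) or both strict inequalities hold at the same index, both of which contradict the linearity of $\leq_F$.

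For the converse, given $u : \Ni$, I would form the constant sequence $\alpha = a^\infty$ and the $u$-dependent sequence $\beta_u$ with $\beta_u(n) = a$ if $u_n = 1$ and $\beta_u(n) = b$ otherwise. Antisymmetry of the lex order (which follows from antisymmetry of $\leq_F$ by an easy induction on prefix-equality) yields that $\alpha \leq_{\seq F} \beta_u$ always holds trivially, while $\beta_u \leq_{\seq F} \alpha$ holds precisely when $\alpha = \beta_u$, i.e.\ when $u = \infty$. Applying the hypothesised linearity to $(\alpha, \beta_u)$ gives a canonical element of $(\alpha \leq_{\seq F} \beta_u) + (\beta_u \leq_{\seq F} \alpha)$, on which I would pattern-match: the $\inr$ branch directly provides a proof of $u = \infty$.

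The main obstacle is the $\inl$ branch, which yields only the trivial fact $\alpha \leq_{\seq F} \beta_u$ and is a priori compatible with both $u = \infty$ and $u \neq \infty$. To handle it, I would also apply linearity to the mirror pair $(\alpha', \beta'_u)$ with $\alpha' = b^\infty$ and $\beta'_u(n) = b$ if $u_n = 1$ else $a$, and further pattern-match on the values $\mathrm{lin}(\alpha, \alpha)$ and $\mathrm{lin}(\alpha', \alpha')$ of the linearity function at the degenerate equal inputs arising when $u = \infty$. In each of the resulting sub-cases the plan is to assemble, from the forced values of $\mathrm{lin}$ on finite $u$ and its determined value on $\infty$, a function $\Ni \to \2$ which is constantly $0$ on every finite input $\underline n$ and takes value $1$ at $\infty$, thereby exhibiting the discontinuous $\Ni \to \2$ of WLPO$'$ (the variant discussed in \cref{remark:WLPO}), which is equivalent to WLPO. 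The delicate part is arranging the case split so that the unknown but fixed choice made by the linearity function at equal-argument inputs is always pinned down by one of the two mirror constructions.
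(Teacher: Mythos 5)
Your first direction (WLPO implies linearity of \(\leq_{\seq F}\)) is correct: encoding each of the two candidate orderings as a decreasing binary sequence, deciding each with WLPO, and eliminating the ``both fail'' case by a \(\neg\neg\)-stable contradiction at the first offending indices all goes through. Note that the paper, despite stating an equivalence, only proves the other direction, so here you are supplying an argument it omits. Your converse also has the right skeleton, and it works directly on \(\seq F\) rather than first reducing to the linearity of \(\preceq\) on \(\Ni\) as the paper does.

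The gap is exactly at the point you flag as delicate, and your chosen mirror does not close it. Write \(L\) for the linearity witness. You correctly observe that \(L(\alpha,\beta_u)\) is forced to be \(\inl\) for every finite \(u = \overline n\) and degenerates at \(u = \infty\) to the unconstrained value \(L(\alpha,\alpha)\); dually, \(L(\alpha',\beta'_u)\) is forced to be \(\inr\) at finite \(u\) and degenerates to \(L(\alpha',\alpha')\). But \(L(\alpha,\alpha)\) and \(L(\alpha',\alpha')\) are two \emph{independent} unconstrained choices: a perfectly good witness \(L\) may return \(\inl\) at \((\alpha,\alpha)\) and \(\inr\) at \((\alpha',\alpha')\), and in that sub-case both of your candidate functions are constantly \(0\) --- on every \(\overline n\) \emph{and} at \(\infty\) --- so neither witnesses \(\mathsf{WLPO}'\), and no case analysis on these two values can manufacture a discontinuity that is not there. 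The repair is to mirror by swapping the \emph{order of the arguments} rather than the roles of \(a\) and \(b\): the two families \(u \mapsto L(\alpha,\beta_u)\) and \(u \mapsto L(\beta_u,\alpha)\) both degenerate at \(u = \infty\) to the \emph{same} evaluation \(L(\alpha,\alpha)\) (using \(\beta_\infty = \alpha\), which needs function extensionality or a judgemental computation), while their forced values at finite \(u\) read that evaluation in opposite ways (\(\inl\) versus \(\inr\)). Hence whichever injection \(L(\alpha,\alpha)\) is --- and you can simply evaluate it --- exactly one of the two families is the required discontinuous function. This is precisely the paper's ``compare \(\infty\) to itself'' step, where it case-splits on \(\mathsf{linearity{\hy}decider}(\infty,\infty)\) and uses either \(x \mapsto \mathsf{linearity{\hy}decider}(x,\infty)\) or \(x \mapsto \mathsf{flip}_\2(\mathsf{linearity{\hy}decider}(\infty,x))\).
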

\begin{proof}
The first, rather straightforward step, is showing that the linearity of \(\ty{\leq_{\seq F}}{\relation{\seq F}}\) implies the linearity of \(\ty{\preceq}{\relation{\Ni}}\) (as defined in \cref{def:Ni-order}).
The idea here is that, because \(F\) is finite linearly ordered and has more than one element, we define an order-preserving function \(\ty{\rho}{\2 \to F}\) --- i.e. for \(\ty{a,b}{\2}\) if \(a \leq_\2 b\) then \(\rho(a) \leq_F \rho(b)\) --- that we use pointwise to define an order-preserving function \(\ty{\mathsf{map}(\rho)}{\seq \2 \to \seq F}\). 
This means that, given any \(\ty{u,v}{\Ni}\), it is the case that \(u \preceq v\) holds if and only if \(\mathsf{map}(\rho,\mathsf{fst}(u)) \leq_{\seq F} \mathsf{map}(\rho,\mathsf{fst}(v))\) holds.
Using this, the linearity of \(\leq_{\seq F}\) implies the linearity of \(\preceq\).

We can now proceed with the core of the proof\footnote{This interesting proof was written and shown to me by \MartinEscardo~while I was completing my thesis corrections, who gave me permission to reproduce it here.}, i.e. that the linearity of \(\preceq\) implies \(\mathsf{WLPO}'\) (as given in \cref{remark:WLPO}).
Using the linearity of \(\preceq\), we define a function \(\ty{\mathsf{linearity{\hy}decider}}{\Ni \to \Ni \to \2}\) that on input of two extended naturals \(\ty{u,v}{\Ni}\) checks which side of \((u \preceq v) + (v \preceq u)\) holds and returns \(0\) if the left-hand side holds and \(1\) if the right-hand side holds. Of course, given two identical items either \(0\) or \(1\) can be returned. We make the following three observations for all \(\ty{n}{\N}\):
\begin{enumerate}
\item \(\mathsf{linearity{\hy}decider}(\overline n,\infty) = 0\),
\item \(\mathsf{linearity{\hy}decider}(\infty,\overline n) = 1\),
\item \(\mathsf{linearity{\hy}decider}(\infty,\infty)\) could be either \(0\) or \(1\).
\end{enumerate}
Using these observations and the function, we define a discontinuous function \(\ty{f}{\Ni \to \2}\) that returns \(0\) on a finite input and \(1\) on an infinite input. We do this by comparing \(\infty\) to \emph{itself} and defining \(f\) based on what is output. 
\begin{itemize}
\item[] If \(\mathsf{linearity{\hy}decider}(\infty,\infty) = 1\) then we define \(f(x) := \mathsf{linearity{\hy}decider}(x,\infty)\). On finite input (when \(x = \overline n\) for some \(\ty{n}{\N}\)), it is the case that \(f(x) := \mathsf{linearity{\hy}decider}(\overline n,\infty) = 0\) (by the first observation above). On infinite input (when \(x = \infty\)) then it is the case that \(f(x) := \mathsf{linearity{\hy}decider}(\infty,\infty) = 1\).\vspace{0.5em}
\item[] If \(\mathsf{linearity{\hy}decider}(\infty,\infty) = 0\) then we define \(f(x) := \mathsf{flip}_\2 (\mathsf{linearity{\hy}decider}(\infty,x))\). On finite input, it is the case that \(f(x) := \mathsf{flip}_\2(\mathsf{linearity{\hy}decider}(\infty,\overline n)) = \mathsf{flip}_\2(1) = 0\) (by the second observation above). On infinite input (when \(x = \infty\)) then it is the case that \(f(x) := \mathsf{flip}_\2(\mathsf{linearity{\hy}decider}(\infty,\infty)) = \mathsf{flip}_\2(0) = 1\).
\end{itemize}
In both cases we have defined a discontinuous function, and therefore the linearity of \(\preceq\) allows us to define \(\mathsf{WLPO}'\) and (by \cref{remark:WLPO}) \(\mathsf{WLPO}\) itself.
\end{proof}

A global optimisation algorithm must compute an approximate global minimum argument; in order to do this, we require some form of approximate linear preordering that relates to the underlying pre-order we are unable to prove the linearity of.
We therefore use our closeness spaces to define this concept of approximate linear preorders.

\begin{definition}
\label{def:approx-order}
\thesislit{4}{ApproxOrder}{is-approx-order}
For any closeness space \(X\), an \(\N\)-indexed family of binary relations \(\ty{\leq^-}{\N \to X \to X \to \Omega}\) is an \emph{approximate linear preorder} if, for any \(\ty{\varepsilon}{\N}\), the relation \(\ty{\leq^\varepsilon}{X \to X \to \Omega}\) is a linear preorder satisfying,
\begin{enumerate}[(i)]
\item \(\mathsf{decidable}(x \leq^\varepsilon y)\),
\item \(C_\varepsilon(x,y) \to x \leq^\varepsilon y\),
\end{enumerate}
\end{definition}

\noindent
The idea of the above definition is that two elements of a closeness space with an approximate linear preorder can be ordered in a decidable way and that \(\varepsilon\)-close elements must be considered indistinguishable from the point of view of the approximate order.

In the following definition, we state what it means for an approximate linear preorder to relate to the \emph{genuine} preorder.

\begin{definition}
\label{def:approx-order-relates}
\thesislit{4}{ApproxOrder}{ApproxOrder-Relates.approx-order-relates}
For any closeness space \(X\), an approximate linear preorder \(\ty{\leq^-}{\N \to X \to X \to \Omega}\) \emph{relates to} a preorder \(\ty{\leq}{X \to X \to \Omega}\) if the following hold:
\begin{enumerate}[(i)]
\item \(x \leq y \to \existstye{n}{\N}{\Pity{\varepsilon}{\N}{n < \varepsilon \to x \leq^\varepsilon y}}\),
\item \(\Pity{n}{\N}{x \leq^n y} \to x \leq y\)
\end{enumerate}
\end{definition}

\noindent
This first half of the above relationship says that if the two elements are genuinely such that \(x \leq y\) then \emph{eventually} the approximate order recognises this; the second half says that if the two elements are \emph{always} approximately ordered such that \(x \leq^n y\) then they genuinely have that order. 

We briefly note here that approximate linear preorders yield uniformly continuous and decidable predicates.

\begin{lemma}
\label{lem:approx-order-ucd-pred}
\thesislit{4}{ApproxOrder}{approx-order-uc-predicate-l}
\thesislit{4}{ApproxOrder}{approx-order-uc-predicate-r}
Given an approximate linear preorder \(\ty{\leq^-}{\N \to X \to X \to \Omega}\) on a closeness space \(X\), the predicates \[ p_l^{y,\varepsilon}(x) := x \leq^\varepsilon y \text{ and } p_r^{y,\varepsilon}(x) := y \leq^\varepsilon x \] are uniformly continuous and decidable for any \(\ty{y}{X}\).
\end{lemma}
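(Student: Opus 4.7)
The plan is to unpack the two conclusions separately: decidability is essentially definitional, while uniform continuity is a direct application of the two defining conditions of an approximate linear preorder together with the basic properties of closeness relations established earlier.

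For decidability of $p_l^{y,\varepsilon}$ and $p_r^{y,\varepsilon}$, I would simply appeal to condition (i) of \cref{def:approx-order}, which says that $x \leq^\varepsilon y$ is decidable for any $x, y$ and any $\varepsilon$. Fixing $y$ and varying the other argument gives exactly the decidability of both predicates.

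For uniform continuity, the natural guess for the modulus of continuity is $\delta := \varepsilon$. I would then need to verify that, for any $x_1, x_2 : X$ with $C_\varepsilon(x_1, x_2)$, we have $p_l^{y,\varepsilon}(x_1) \Leftrightarrow p_l^{y,\varepsilon}(x_2)$ and similarly for $p_r^{y,\varepsilon}$. For the left predicate in the forward direction, assume $x_1 \leq^\varepsilon y$; by symmetry of the closeness relation (\cref{lem:C-eq}) we have $C_\varepsilon(x_2, x_1)$, so by condition (ii) of \cref{def:approx-order} we obtain $x_2 \leq^\varepsilon x_1$, and then transitivity of the linear preorder $\leq^\varepsilon$ gives $x_2 \leq^\varepsilon y$. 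The backward direction is symmetric, using $C_\varepsilon(x_1, x_2)$ directly. The argument for $p_r^{y,\varepsilon}$ is analogous: $C_\varepsilon(x_1, x_2)$ gives $x_1 \leq^\varepsilon x_2$, which when postcomposed with an assumed $y \leq^\varepsilon x_1$ yields $y \leq^\varepsilon x_2$ by transitivity.

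There is no real obstacle here; the content of the lemma is that the two defining properties of an approximate linear preorder (decidability and the fact that $\varepsilon$-close elements are $\leq^\varepsilon$-comparable) are precisely the ingredients needed to make threshold predicates against a fixed element behave well with respect to the closeness structure. The only mild care required is to invoke symmetry of $C_\varepsilon$ at the right moment, so that condition (ii) can be applied in both the forward and backward directions of the logical equivalence.
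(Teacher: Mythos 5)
Your proof is correct and matches the paper's argument exactly: decidability comes straight from \cref{def:approx-order}.(i), and uniform continuity uses $\varepsilon$ itself as the modulus, deriving $x_2 \leq^\varepsilon x_1$ from closeness via condition (ii) and concluding by transitivity of the linear preorder. The paper merely writes out one direction and declares the rest symmetric, whereas you spell out both directions and both predicates, but the route is the same.
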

\begin{proof}
The decidability of the predicates is immediate from \cref{def:approx-order}.(i).
The modulus of uniform continuity for both predicates is \(\ty{\varepsilon}{\N}\): we show this for \(p_l^{y,\varepsilon}\); the proof for \(p_r^{y,\varepsilon}\) is identical.

\vspace{0.25cm}
Given \(\ty{x_1,x_2}{X}\) such that \(C_\varepsilon(x_1,x_2)\), we want to show that \(x_1 \leq^\varepsilon y\) implies \(x_2 \leq^\varepsilon y\).
By the closeness of \(x_1\) and \(x_2\), we trivially have \(x_2 \leq^\varepsilon x_1\) from \cref{def:approx-order}.(ii).
The result then follows by the approximate linear preorder's transitivity.
\end{proof}

Returning to our motivating example, we show that sequences of finite linearly ordered types have an approximate linear preorder.

\begin{definition}
\thesislit{4}{ApproxOrder-Examples}{discrete-approx-lexicorder}
Given a preorder \(\ty{\leq_D}{D \to D \to \Omega}\) on a discrete set \(D\), the \emph{approximate lexicographic order} \(\ty{\leq^-_{\seq D}}{\N \to \seq D \to \seq D \to \Omega}\) on \(\seq D\) is defined:
\begin{alignat*}{3}
& \leq^\varepsilon_{\seq D} &&: \N \to \seq{D} \to \seq{D} \to \Omega,\\
\alpha &\leq^\varepsilon_{\seq D} \beta &&:= \Pity{n}{\N}{n < \varepsilon \to \alpha \sim^n \beta \to \alpha_n \leq_D \beta_n}.
\end{alignat*}
\end{definition}

\begin{lemma}
\label{lem:lexicographic-approx}
\thesislit{4}{ApproxOrder-Examples}{discrete-approx-lexicorder-is-approx-order}
Given a linear order \(\ty{\leq_D}{D \to D \to \Omega}\) on a discrete set \(D\), the approximate lexicographic order \(\ty{\leq^-_{\seq D}}{\N \to \seq D \to \seq D \to \Omega}\) is an approximate linear preorder.
\end{lemma}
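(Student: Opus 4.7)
The plan is to fix an arbitrary precision $\ty{\varepsilon}{\N}$ and verify the three clauses of \cref{def:approx-order} for $\leq^\varepsilon_{\seq D}$, together with the linear-preorder obligations of \cref{def:linear-preorder}.

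First I would dispatch the easy clauses. For the closeness-to-order property \cref{def:approx-order}.(ii), given $C_\varepsilon(\alpha,\beta)$ I use \cref{cor:disseq-C-equiv} to obtain $\alpha \sim^\varepsilon \beta$; then any $n < \varepsilon$ satisfies $\alpha_n = \beta_n$, so $\alpha_n \leq_D \beta_n$ by reflexivity of $\leq_D$. Reflexivity and transitivity of $\leq^\varepsilon_{\seq D}$ are proved exactly as in \cref{lem:lexicographic-preorder}, carrying the extra hypothesis $n < \varepsilon$ throughout; the inner sub-induction used to establish the ``tail'' premise $\alpha \sim^{i+1} \beta$ from $\mathsf{tail}\,\alpha \sim^i \mathsf{tail}\,\beta$ goes through without change, since nothing there depends on the bound.

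For decidability \cref{def:approx-order}.(i), note that $\leq_D$ is itself decidable on $D$: given $\ty{x,y}{D}$, use linearity to obtain $(x \leq_D y) + (y \leq_D x)$, and in the right branch use discreteness of $D$ to check $x = y$ --- if so, reflexivity gives $x \leq_D y$, otherwise antisymmetry rules it out. Decidability of $\alpha \leq^\varepsilon_{\seq D} \beta$ then reduces to a bounded universal quantifier over $n < \varepsilon$, with each conjunct decidable (the premise $\alpha \sim^n \beta$ by \cref{lem:sim-decidable}, the conclusion $\alpha_n \leq_D \beta_n$ by the above).

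The main obstacle will be linearity. My plan is to search the first $\varepsilon$ positions for a first point of disagreement: using \cref{lem:sim-decidable} I can, by induction on $\varepsilon$, decide whether $\alpha \sim^\varepsilon \beta$ and, if not, extract the least $\ty{k}{\N}$ with $k < \varepsilon$ and $\alpha_k \neq \beta_k$. If no such $k$ exists, both $\alpha \leq^\varepsilon_{\seq D} \beta$ and $\beta \leq^\varepsilon_{\seq D} \alpha$ hold trivially by reflexivity of $\leq_D$ at every index. Otherwise, apply linearity of $\leq_D$ to $\alpha_k, \beta_k$ and return the corresponding side; to verify, say, $\alpha \leq^\varepsilon_{\seq D} \beta$ in the case $\alpha_k \leq_D \beta_k$, take any $n < \varepsilon$ with $\alpha \sim^n \beta$: minimality of $k$ forces $n \leq k$ (otherwise $\alpha_k = \beta_k$, contradicting the choice of $k$), so either $n < k$ (giving $\alpha_n = \beta_n$ and reflexivity) or $n = k$ (the chosen inequality). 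The symmetric case is identical.
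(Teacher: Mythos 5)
Your proposal is correct, and clauses (ii), reflexivity/transitivity, and the reduction of everything to decidability of \(\leq_D\) (via linearity, discreteness and antisymmetry) match the paper's proof in substance. Where you genuinely diverge is in the two inductive components. For linearity, the paper proceeds by induction on \(\varepsilon\): it splits on \(\left(\alpha_0 \leq_D \beta_0\right) + \left(\beta_0 \leq_D \alpha_0\right)\) and on the inductive hypothesis for the tails, then resolves the two ``mixed'' sub-cases by testing \(\alpha_0 = \beta_0\) with discreteness. You instead locate the least index of disagreement \(k < \varepsilon\) (or observe there is none) and apply linearity of \(\leq_D\) once, at \(k\); the verification that this decides the whole bounded order is exactly your \(n \leq k\) dichotomy, which is sound. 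Similarly for decidability the paper again inducts on \(\varepsilon\), peeling off the last index, whereas you invoke decidability of a bounded universal quantifier over implications with decidable premise (\cref{lem:sim-decidable}) and decidable conclusion. Both routes are constructively valid; yours is arguably the more conceptual argument and does the case analysis once rather than at every level of the recursion, while the paper's head/tail induction is closer to the structurally recursive form one would write directly in \textsc{Agda} (and reuses the same recursion pattern for linearity and decidability). The only point you gloss over --- that the transitivity argument of \cref{lem:lexicographic-preorder} transfers to the bounded order when passing to tails --- is glossed over by the paper as well, so this is not a gap relative to the source.
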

\begin{proof} \axioms{f}
The proof that \(\alpha \leq^\varepsilon_{\seq D} \beta\) is a preorder is almost identical to \cref{lem:lexicographic-preorder}, and so we avoid repeating it here. In order to prove this preorder is linear, we proceed by induction on \(\ty{\varepsilon}{\N}\).
    \begin{itemize}
    \item[] In the case where \(\varepsilon := 0\) then both \(\alpha \leq^0_{\seq D} \beta\) and \(\beta \leq^0_{\seq D} \zeta\) are trivially proved. We arbitrarily choose the former and show that for all \(\ty{i}{\N}\) such that \(i < 0\) and \(\alpha \sim^i \beta\) we have \(\alpha_i \leq \beta_i\) vacuously holds because \(i < 0\) is empty.
    \item[] In the case where \(\varepsilon := \varepsilon' + 1\), we proceed by the linearity of the linear order on \(D\) -- i.e. we check which side of \(\left( \alpha_0 \leq_D \beta_0 \right) + \left( \beta_0 \leq_D \alpha_0 \right)\) and the inductive hypothesis -- i.e. we check which side of \(\left( \mathsf{tail}\ \alpha \leq^{\varepsilon'}_{\seq D} \mathsf{tail}\ \beta \right) + \left( \mathsf{tail}\ \beta \leq^{\varepsilon'}_{\seq D} \mathsf{tail}\ \alpha \right)\) holds.
        \begin{itemize}[]
        \item[] In the case where \(\alpha_0 \leq_D \beta_0\) and \(\mathsf{tail}\ \alpha \leq^{\varepsilon'}_{\seq D} \mathsf{tail}\ \beta\) then it is straightforward to show that \(\alpha \leq^\varepsilon_{\seq D} \beta\). The case where \(\beta_0 \leq_D \alpha_0\) and \(\mathsf{tail}\ \beta \leq^{\varepsilon'}_{\seq D} \mathsf{tail}\ \alpha\) is similarly straightforward.
        \item[] In the case where \(\alpha_0 \leq_D \beta_0\) and \(\mathsf{tail}\ \beta \leq^{\varepsilon'}_{\seq D} \mathsf{tail}\ \alpha\), we use the discreteness of \(D\) to check whether or not \(\alpha_0 = \beta_0\) holds. If it does, then also \(\beta_0 \leq \alpha_0\) by reflexivity of the linear order on \(D\), and hence \(\beta \leq^\varepsilon_{\seq D} \alpha\). If it does not, then we show that \(\alpha \leq^\varepsilon_{\seq D} \beta\) by induction on the \(\ty{i}{\N}\) that is such that \(i < \varepsilon\) and \(\alpha \sim^{i} \beta\) in order to show that \(\alpha_{i} \leq \beta_{i}\).
            \begin{itemize}
            \item[] The case where \(i := 0\) is trivial as we already have \(\alpha_0 \leq_D \beta_0\).
            \item[] The case where \(i := i' + 1\) is vacuous as there is a contradiction between \(\alpha \sim^{i'+1} \beta\) and \(\alpha_0 \neq \beta_0\).
            \end{itemize}
        \item[] The case where \(\beta_0 \leq_D \alpha_0\) and \(\mathsf{tail}\ \alpha \leq^{\varepsilon'}_{\seq D} \mathsf{tail}\ \beta\) uses the same technique as the previous case.
        \end{itemize}
    \end{itemize}
Following this, we prove the numbered conditions of \cref{def:approx-order}:
\begin{enumerate}[(i)]
\item In order to prove that the linear preorder is decidable, we again proceed by induction on \(\ty{\varepsilon}{\N}\). Recall from the above that the base case is trivial; in the case where \(\varepsilon := \varepsilon' + 1\) we use the induction hypothesis to check whether or not \(\alpha \leq^{\varepsilon'}_{\seq D} \beta\) holds. If it does not, then \(\alpha \leq^{\varepsilon}_{\seq D} \beta\) clearly does not hold. If it does, we next check whether or not \(\alpha \sim^{\varepsilon'} \beta\) holds (by \cref{lem:sim-decidable}).
    \begin{itemize}
    \item[] In the case where \(\alpha \sim^{\varepsilon'} \beta\) holds, we further check whether or not \(\alpha_{\varepsilon'} \leq \beta_{\varepsilon'}\) (by the fact that any linear order on a discrete type is decidable). If it does not, then \(\alpha \leq^{\varepsilon}_{\seq D} \beta\) clearly does not hold. If it does, then \(\alpha \leq^{\varepsilon}_{\seq D} \beta\) holds; i.e. if there is an \(\ty{i}{\N}\) that is such that \(i < \varepsilon\) and \(\alpha \sim^{i} \beta\) then \(\alpha_{i} \leq \beta_{i}\). If \(i < \varepsilon'\) then \(\alpha_{i} \leq \beta_{i}\) immediately follows by \(\alpha \leq^{\varepsilon'}_{\seq D} \beta\) and \(\alpha \sim^{\varepsilon'} \beta\). If \(i = \varepsilon'\), then the result is immediate as we have \(\alpha_{\varepsilon'} \leq \beta_{\varepsilon'}\).
    \item[] In the case where \(\alpha \sim^{\varepsilon'} \beta\) does not hold, then \(\alpha \leq^{\varepsilon}_{\seq D} \beta\) holds; i.e. if there is an \(\ty{i}{\N}\) that is such that \(i < \varepsilon\) and \(\alpha \sim^{i} \beta\) then \(\alpha_{i} \leq \beta_{i}\). If \(i < \varepsilon'\) then then, as above, the result is immediate. If \(i = \varepsilon'\), then result is vacuous as there is a contradiction between the assumptions that both \(\neg \alpha \sim^{\varepsilon'} \beta\) and \(\alpha \sim^{i} \beta\).
    \end{itemize}
\item By \cref{lem:disseq-C-equiv}, the assumption that \(C_\varepsilon(\alpha,\beta)\) is propositionally equivalent to \(\alpha \sim^\varepsilon \beta\). Therefore we assume the latter statement and want to show that if there is an \(\ty{i}{\N}\) that is such that \(i < \varepsilon\) and \(\alpha \sim^{i} \beta\) then \(\alpha_{i} \leq \beta_{i}\). Beecause \(i < \varepsilon\), the assumption tells us that \(\alpha_i = \beta_i\), and therefore the result follows by reflexivity of the linear order on \(D\).
\end{enumerate}
\end{proof}

\begin{lemma}
\label{lem:lexicographic-approx-relates}
\thesislit{4}{ApproxOrder-Examples}{LexicographicOrder-Relates.discrete-approx-lexicorder-relates}
Given a discrete set \(D\), the approximate lexicographic order \(\ty{\leq^-_{\seq D}}{\N \to \seq D \to \seq D \to \Omega}\) relates to the lexicographic order \(\ty{\leq_{\seq D}}{\seq D \to \seq D \to \Omega}\).
\end{lemma}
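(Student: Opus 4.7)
The plan is to unpack both directions of \cref{def:approx-order-relates} directly from the definitions of $\leq_{\seq D}$ and $\leq^-_{\seq D}$; the proof is essentially bookkeeping because the approximate order is a finitary restriction of the genuine one.

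For condition (ii), assume $\Pity{n}{\N}{\alpha \leq^n_{\seq D} \beta}$ and aim to show $\alpha \leq_{\seq D} \beta$. Fix any $\ty{k}{\N}$ with $\alpha \sim^k \beta$; we must produce $\alpha_k \leq_D \beta_k$. Specialise the hypothesis at $n := k + 1$ to obtain $\alpha \leq^{k+1}_{\seq D} \beta$, and then apply this at the index $k$ using $k < k + 1$ and the given prefix equality $\alpha \sim^k \beta$. This immediately yields $\alpha_k \leq_D \beta_k$, which is what was needed.

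For condition (i), assume $\alpha \leq_{\seq D} \beta$ and exhibit the truncated existential with witness $n := 0$, wrapping by the element truncation map $|-|$ from \cref{def:prop-trunc}. It then suffices to show, for every $\ty{\varepsilon}{\N}$ with $0 < \varepsilon$, that $\alpha \leq^\varepsilon_{\seq D} \beta$. Unfolding, given $\ty{k}{\N}$ with $k < \varepsilon$ and $\alpha \sim^k \beta$, we apply the hypothesis $\alpha \leq_{\seq D} \beta$ directly at $k$ (which only needs $\alpha \sim^k \beta$, not any bound on $k$) to deduce $\alpha_k \leq_D \beta_k$.

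There is no real obstacle here: the approximate lexicographic order is by design the restriction of the lexicographic order to indices below a precision bound, so (ii) follows by picking a large enough precision and (i) is a forgetful map in the other direction. The only minor care required is handling the propositional truncation in (i), which is discharged by producing a concrete witness and applying $|-|$; no axioms beyond propositional truncation (\axioms{t}) are invoked.
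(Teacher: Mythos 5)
Your proof is correct and follows essentially the same route as the paper's: condition (i) is the forgetful direction witnessed by \(n := 0\) followed by truncation, and condition (ii) specialises the hypothesis at precision \(k+1\) to recover the unbounded order. No issues.
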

\begin{proof} \axioms{t}
We prove the numbered conditions of \cref{def:approx-order-relates}:
\begin{itemize}[(i)]
\item Assuming \(\alpha \leq_{\seq D} \beta\), then for all \(\ty{i}{\N}\) such that \(\alpha \sim^i \beta\) we have \(\alpha_i \leq \beta_i\). This means that \(\alpha \leq^\varepsilon_{\seq D} \beta\) for \emph{all} \(\ty{\varepsilon}{\N}\), and so by setting \(n := 0\) we have \( \sigmatye{n}{\N}{\pity{\varepsilon}{\N}{n < \varepsilon \to x \leq^\varepsilon y}}\). The result then follows by truncating this final proof term.
\item Assuming \(\alpha \leq^n_{\seq D} \beta\) holds for all \(\ty{n}{\N}\), we need to show that given any \(\ty{i}{\N}\) such that \(\alpha \sim^i \beta\) we have \(\alpha_i \leq \beta_i\). This is easy: for the given \(i\) we simply use the proof that \(\alpha \leq^{i+1}_{\seq D} \beta\).
\end{itemize}
\end{proof}

\noindent
Therefore, there are indeed infinite types that we can perform global optimisation on, so long as we use closeness spaces (or metric spaces, as in the case of the real numbers) to utilise approximate linear preorders.
We illuminate this in \cref{sec:gen-global-opt}.

Before doing that --- for the later purposes of regression in \cref{sec:regression} wherein we will optimise functions with co-domain \(\Ni\) --- we show that the extended naturals have an approximate lexicographic order that coincides with the established order on them (as defined in \cref{def:Ni-order}).

\begin{definition}
\thesislit{4}{ApproxOrder-Examples}{\urlSigma-order}
\thesislit{4}{ApproxOrder-Examples}{\urlSigma-approx-order}
Given a preorder \(\ty{\leq}{\relation{X}}\) and an approximate linear preorder \(\ty{\leq^-}{\Nrelation{X}}\) on a closeness space \(X\), and a type family \(\ty{P}{X \to \V}\), we define the \emph{inclusion order} on the type \(\Sigma P\) and the \emph{inclusion approximate order} on the corresponding closeness space (defined in \cref{cor:sigma-cspace}) by the following:
\begin{alignat*}{3}
(x , px) &\leq (y , py) &&:= x \leq y, \\
(x , px) &\leq^\varepsilon (y , py) &&:= x \leq^\varepsilon y, \\
\end{alignat*}
\end{definition}

\begin{lemma}
\label{lem:sigma-order}
\thesislit{4}{ApproxOrder-Examples}{\urlSigma-approx-order-is-approx-order}
\thesislit{4}{ApproxOrder-Examples}{\urlSigmaa{}Order-Relates.\urlSigma-approx-order-relates}
Given a preorder \(\ty{\leq}{\relation{X}}\) and an approximate linear preorder \(\ty{\leq^-}{\Nrelation{X}}\) on a closeness space \(X\), and a truth-valued type family \(\ty{P}{X \to \Omega}\), the inclusion approximate order on the closeness space \(\Sigma P\) is indeed an approximate linear preorder which relates to the inclusion preorder on \(\Sigma P\).
\end{lemma}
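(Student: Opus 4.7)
The plan is to observe that both the inclusion preorder and the inclusion approximate order on $\Sigma P$ are defined by projecting onto the first coordinate, and that the closeness function on $\Sigma P$ (from \cref{cor:sigma-cspace}) is itself defined by projection via the embedding $\mathsf{pr}_1$. So every condition we need to verify reduces to the corresponding statement on $X$ and follows from the hypotheses. No genuine mathematical work is required beyond careful unfolding; the only mild subtlety is keeping track of the witnesses $px, py$ which are uniquely determined (because $P$ is $\Omega$-valued) and therefore never interfere with the arguments.

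First I would check the five conditions of \cref{def:approx-order} for the inclusion approximate order $\leq^\varepsilon$ at a fixed $\varepsilon$. Reflexivity, transitivity, and linearity of $\leq^\varepsilon$ on $\Sigma P$ at each $\varepsilon$ follow immediately from the corresponding properties of $\leq^\varepsilon$ on $X$, since $(x,px)\leq^\varepsilon(y,py)$ is definitionally $x\leq^\varepsilon y$. Decidability (condition (i)) is inherited in the same way. For condition (ii), I would unfold the definition of the subtype closeness function (\cref{def:subtype-closeness}): $C_\varepsilon((x,px),(y,py))$ on $\Sigma P$ is by definition $C_\varepsilon(x,y)$ on $X$, so the implication $C_\varepsilon \to (\leq^\varepsilon)$ on $\Sigma P$ is immediately the same implication on $X$.

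Next I would verify the two conditions of \cref{def:approx-order-relates} for the inclusion approximate order relating to the inclusion preorder. Condition (i) asks that $(x,px)\leq(y,py)$ implies $\existstye{n}{\N}{\pity{\varepsilon}{\N}{n<\varepsilon \to (x,px)\leq^\varepsilon(y,py)}}$. Unfolding both $\leq$ and $\leq^\varepsilon$ on $\Sigma P$ by their definitions, this becomes precisely the first clause of the hypothesis that $\leq^-$ relates to $\leq$ on $X$, applied to $x\leq y$. Condition (ii), that $\pity{n}{\N}{(x,px)\leq^n(y,py)}$ implies $(x,px)\leq(y,py)$, similarly reduces to the second clause of the relation hypothesis on $X$.

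There is no real obstacle here; the proof is essentially a transport across the definitional equalities induced by projecting the $\Sigma$-type onto $X$. The only thing to be careful about in the \textsc{Agda} formalisation is that the closeness structure used implicitly in \cref{def:approx-order}(ii) really is the subtype closeness function from \cref{cor:sigma-cspace}, so that the unfolding goes through without invoking function extensionality or any other axiom.
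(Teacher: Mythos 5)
Your proposal is correct and takes essentially the same approach as the paper, whose proof is a one-sentence sketch observing that since the inclusion orders compare only the first components, every required property is immediate from the corresponding property of the original orders on \(X\). Your more detailed unfolding (including the remark that the subtype closeness function is itself defined by projection, so condition (ii) of \cref{def:approx-order} and both clauses of \cref{def:approx-order-relates} reduce definitionally) is just an elaboration of the same argument.
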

\begin{proof}[Proof (Sketch).] \axioms{t}
Because the definition of the inclusion orders simply compares the first arguments by the original orders --- and discards the second arguments --- each property we are required to prove for the inclusion orders is immediate from the fact the original orders satisfy those same properties.
\end{proof}

\begin{corollary}
\label{cor:Ni-approx-order}
\thesislit{4}{ApproxOrder-Examples}{\urlN\urlinfty-approx-lexicorder}
There is an approximate lexicographic order on \(\Ni\) which relates to the standard lexicographic preorder (as defined in \cref{def:Ni-order}).
\end{corollary}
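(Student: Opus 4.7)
The plan is to assemble the approximate order on $\Ni$ by combining two prior constructions already in hand. First, since $\2$ is a pointed finite linearly ordered type, it is discrete (by \cref{lem:fin-discrete}) and carries the standard linear order $0 \leq_\2 1$. Applying \cref{lem:lexicographic-approx} then yields an approximate lexicographic order $\leq^-_{\seq \2}$ on $\seq \2$, and \cref{lem:lexicographic-approx-relates} shows that this approximate order relates to the lexicographic preorder $\leq_{\seq \2}$.

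Next, I would use the inclusion construction. Since $\Ni$ is by definition $\sigmaty{\alpha}{\seq \2}{\mathsf{is{\hy}decreasing}(\alpha)}$ and the predicate $\mathsf{is{\hy}decreasing}$ is a $\Pi$-type of propositions and hence itself a proposition (by \cref{lem:pi-prop}), I can apply \cref{lem:sigma-order} with $X := \seq \2$ and $P := \mathsf{is{\hy}decreasing}$. This transfers both the approximate order and the relating property to $\Ni$ equipped with its closeness space structure from \cref{cor:Ni-cspace}, giving an inclusion approximate lexicographic order on $\Ni$ that relates to the inclusion preorder, which compares underlying sequences by $\leq_{\seq \2}$.

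The final step is to identify this inclusion preorder with the standard $\preceq$ of \cref{def:Ni-order}. Because the conditions in \cref{def:approx-order-relates} refer to the preorder only through the truth of $x \leq y$, it suffices to show the propositional equivalence $u \leq_{\seq \2} v \Leftrightarrow u \preceq v$ for all $u, v$ with decreasing underlying sequences; the relating property then transfers across this equivalence. The easy direction is $u \preceq v \to u \leq_{\seq \2} v$: given $n$ with $u \sim^n v$, if $u_n = 0$ we are done, and if $u_n = 1$ then $u \preceq v$ gives $v_n = 1$, so $u_n \leq_\2 v_n$ either way. For the other direction, assume $u \leq_{\seq \2} v$ and $u_n = 1$. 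Decreasing of $u$ gives $u_i = 1$ for all $i \leq n$, so any hypothetical least index $i < n$ at which $u$ and $v$ first differ would satisfy $u \sim^i v$ and, by the lex condition, $1 = u_i \leq_\2 v_i$, forcing $v_i = 1 = u_i$ and contradicting that they differ at $i$. Hence $u \sim^n v$, and the lex condition yields $u_n \leq_\2 v_n$, i.e.\ $v_n = 1$.

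The main subtlety will be discharging this small case analysis cleanly in \textsc{Agda}, using the decidability of prefix equality (\cref{lem:sim-decidable}) to isolate the least point of disagreement between $u$ and $v$; but the argument introduces no new constructive content beyond what \cref{lem:lexicographic-approx}, \cref{lem:lexicographic-approx-relates} and \cref{lem:sigma-order} already supply.
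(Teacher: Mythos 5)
Your proposal is correct and takes essentially the same route as the paper, whose entire proof is ``By \cref{lem:lexicographic-approx,lem:lexicographic-approx-relates,lem:sigma-order}'' applied to \(\seq\2\) and the subtype \(\Ni\). Your third step --- explicitly checking that the inclusion lexicographic preorder on decreasing binary sequences coincides with \(\preceq\) from \cref{def:Ni-order}, via the least-point-of-disagreement argument --- is a correct and genuinely needed piece of care that the paper's one-line proof leaves implicit.
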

\begin{proof} \axioms{t}
By \cref{lem:lexicographic-approx,lem:lexicographic-approx-relates,lem:sigma-order}.
\end{proof}

\subsection{Generalised global optimisation}
\label{sec:gen-global-opt}

The definition of a global minimum argument of a function whose domain is linearly ordered is intuitive, and we can always compute a global minimum argument of such a function if the codomain is pointed and finite linearly ordered.

\begin{definition}
\label{def:gen-opt-global-min}
\thesislit{4}{GlobalOptimisation}{is-global-minimal}
For any type \(X\) and preorder \(\ty{\leq}{Y \to Y \to \Omega}\) on a type \(Y\), an element \(\ty{x_0}{X}\) is a \emph{global minimum argument} of a function \(\ty{f}{X \to Y}\) if \(f(x_0) \leq f(x)\) for all \(\ty{x}{X}\).
\end{definition}

\begin{lemma}
\label{lem:fin-has-global-min}
\thesislit{4}{GlobalOptimisation}{finite-global-minimal}
For any pointed finite linearly ordered type \(X\) and linear preorder \(\ty{\leq}{Y \to Y \to \Omega}\) on a type \(Y\), every function \(\ty{f}{X \to Y}\) has a global minimum argument.
\end{lemma}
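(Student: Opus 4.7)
The plan is to transfer the problem across the equivalence: let $\ty{g}{\F(n) \to X}$ and $\ty{h}{X \to \F(n)}$ be the two inverses witnessing $X \simeq \F(n)$, with $g \circ h \sim \mathsf{id}_X$. It suffices to find a global minimum argument $\ty{i_0}{\F(n)}$ of $\ty{f \circ g}{\F(n) \to Y}$, because then $\ty{g(i_0)}{X}$ is a global minimum of $f$: given any $\ty{x}{X}$, one has $x = g(h(x))$, so $f(g(i_0)) \leq f(g(h(x))) = f(x)$ using the pointwise inverse and transport of $\leq$ along the resulting identification.

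I would then induct on $\ty{n}{\N}$. The case $n := 0$ is impossible, since $X$ being pointed and equivalent to $\F(0) := \0$ would yield an element of $\0$. In the inductive step $n := m + 1$, recall $\F(n) := \F(m) + \1$, so the right summand always contributes the canonical candidate $\ty{\inr \ \star}{\F(n)}$. I would split further on $m$: when $m = 0$, the type $\F(1)$ reduces to $\0 + \1$ whose only genuine element is $\inr \ \star$, which is trivially a global minimum by reflexivity of $\leq$ (any putative competitor in the left summand is dispatched by $\mathsf{\0{\hy}elim}$). When $m = k + 1$, the left summand $\F(m)$ is itself pointed, and the inductive hypothesis applied to the restricted function $\ty{f \circ g \circ \inl}{\F(m) \to Y}$ produces a global minimum $\ty{i'}{\F(m)}$ for that restriction.

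With two candidates $\inl \ i'$ and $\inr \ \star$ in hand, I would invoke the linearity of $\leq$ on $Y$ (\cref{def:linear-preorder}) to decide whether $f(g(\inl \ i')) \leq f(g(\inr \ \star))$ or $f(g(\inr \ \star)) \leq f(g(\inl \ i'))$ holds, and pick the corresponding summand as $i_0$. Verifying $f(g(i_0)) \leq f(g(j))$ for every $\ty{j}{\F(n)}$ splits on $j$: when $j = \inl \ j'$, I combine the bound $f(g(\inl \ i')) \leq f(g(\inl \ j'))$ obtained from the inductive hypothesis with the chosen comparison via transitivity of $\leq$; when $j = \inr \ \star$, either the chosen comparison or reflexivity of $\leq$ closes the case directly.

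The hard part, if any, is simply bookkeeping: keeping track of the equivalence $X \simeq \F(n)$ so that the minimum found on $\F(n)$ correctly transports back to one on $X$, and correctly peeling off the rightmost $\1$ summand at each inductive step. Neither is a genuine mathematical obstacle; the whole argument is a finite induction with a single binary comparison at each step, relying only on reflexivity, transitivity, and linearity of $\leq$ together with $g \circ h \sim \mathsf{id}_X$.
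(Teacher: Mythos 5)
Your proposal is correct and follows essentially the same route as the paper's proof: induction on the \(n\) with \(X \simeq \F(n)\), applying the inductive hypothesis to \(f \circ g \circ \inl\) on the left summand and using linearity of \(\leq\) to compare the resulting candidate with \(\inr\ \star\). Your treatment is in fact slightly more careful than the paper's about the direction of the equivalence maps and the transport of the minimum back to \(X\) via \(g \circ h \sim \mathsf{id}_X\), which the paper leaves implicit.
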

\begin{proof}
Recall from \cref{def:fin} of finite linearly ordered types that \(X \simeq \F(n)\) for some \(\ty{n}{\N}\), and therefore that (by \cref{def:simeq}) there is some equivalence \(\ty{g}{X \to \F(n)}\).

\vspace{1em}
We proceed by induction on \(n\). 
When \(n := 0\) then \(X \simeq \0\), which is not pointed and therefore the result follows vacuously.
When \(n : = 1\) then \(X \simeq \0 + \1\), and thus \(X\) has only one element \(\ty{g(\inr \ \star)}{X}\) which must, by reflexivity of the linear preorder, be the global minimum argument of \(f\).

\vspace{1em}
When \(n := n' + 1\) then \(X \simeq \F(n') + \1\). We use the inductive hypothesis to find the global minimum argument \(\ty{x}{\F(n')}\) of the function \(\ty{\left( f \circ g \circ \inl \right)}{\F(n') \to Y}\). 
This means that only \(\ty{g(\inl \ x)}{X}\) or \(\ty{g(\inr \ \star)}{X}\) can be the global minimum argument to \(f\). We therefore use the linearity of the linear preorder to find out which side of \(\left( f(g(\inl \ x)) \leq f(g(\inr \ \star) \right) + \left( f(g(\inr \ \star)) \leq f(g(\inl \ x)) \right)\) holds ---
if the left-hand side holds then \(f(g(\inl \ x)) \leq f(x)\) for all \(\ty{x}{X}\), while if the right-hand side holds then \(f(g(\inr \ \star)) \leq f(x)\) for all \(\ty{x}{X}\).
\end{proof}

\noindent
Replacing the order with an approximate linear preorder, we now state the general version of the global optimisation problem using closeness spaces.
Similarly to the above, it is the case that any function on pointed finite linearly ordered codomain and whose domain has an approximate linear preorder has a computable global minimum up to any degree of precision.

\begin{definition}
\thesislit{4}{GlobalOptimisation}{is_global-minimal}
For any type \(X\) and approximate linear preorder \(\ty{\leq^-}{\N \to Y \to Y \to \Omega}\) on a closeness space \(Y\), an element \(\ty{x_0}{X}\) is an \emph{\(\varepsilon\)-global minimum argument} of a function \(\ty{f}{X \to Y}\), given any precision \(\ty{\varepsilon}{\N}\), if \(f(x_0) \leq^\varepsilon f(x)\) for all \(\ty{x}{X}\).
\end{definition}

\begin{lemma}
\label{lem:fin-has-eps-global-min}
\thesislit{4}{GlobalOptimisation}{F-\urlepsilon-global-minimal}
For any pointed finite linearly ordered type \(X\) and approximate linear preorder \(\ty{\leq^-}{\N \to Y \to Y \to \Omega}\) on a closeness space \(Y\), every function \(\ty{f}{X \to Y}\) has an \(\varepsilon\)-global minimum argument given any precision \(\ty{\varepsilon}{\N}\).
\end{lemma}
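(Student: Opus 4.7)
The plan is to reduce this statement directly to Lemma \ref{lem:fin-has-global-min}, exploiting the fact that an approximate linear preorder is, at every precision, already a linear preorder. Concretely, by Definition \ref{def:approx-order}, for any fixed $\varepsilon : \mathbb{N}$ the relation $\leq^\varepsilon \, : Y \to Y \to \Omega$ is a linear preorder on $Y$ (the closeness-space structure on $Y$ is not needed here, only the linear preorder structure it induces at level $\varepsilon$). Moreover, an $\varepsilon$-global minimum argument of $f$ is by definition just a global minimum argument of $f$ with respect to the linear preorder $\leq^\varepsilon$, in the sense of Definition \ref{def:gen-opt-global-min}.

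Given this observation, the proof is a one-line appeal: fix $\varepsilon : \mathbb{N}$, instantiate Lemma \ref{lem:fin-has-global-min} with the pointed finite linearly ordered type $X$, the type $Y$, the linear preorder $\leq^\varepsilon$, and the function $f$, and return the resulting global minimum argument $x_0$. Unfolding definitions, the guarantee $f(x_0) \leq^\varepsilon f(x)$ for all $\ty{x}{X}$ is exactly the required $\varepsilon$-global minimality.

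Alternatively, if one prefers a self-contained argument parallel to the proof of Lemma \ref{lem:fin-has-global-min}, I would proceed by induction on the $\ty{n}{\N}$ with $X \simeq \F(n)$, equipped with the equivalence $\ty{g}{X \to \F(n)}$. The $n := 0$ case is vacuous by pointedness, the $n := 1$ case returns the unique element and uses reflexivity of $\leq^\varepsilon$, and the inductive step $n := n' + 1$ obtains an $\varepsilon$-global minimum argument $\ty{x}{\F(n')}$ of $\left( f \circ g \circ \inl \right)$ and then uses the linearity of $\leq^\varepsilon$ (guaranteed by \cref{def:approx-order}) to decide which of $g(\inl \ x)$ or $g(\inr \ \star)$ to return.

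There is no real obstacle here: the statement was set up precisely so that the approximate ordering at a fixed precision inherits all of the structure that made the non-approximate case work. The only minor subtlety is to recognise that neither the closeness structure nor the decidability clause of Definition \ref{def:approx-order} are required for this lemma — the decidability only becomes essential later, when we search \emph{over} $\varepsilon$-neighbourhoods via uniformly continuous predicates (cf.\ \cref{lem:approx-order-ucd-pred}).
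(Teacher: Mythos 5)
Your proposal is correct and matches the paper's own proof exactly: the paper likewise observes that $\leq^\varepsilon$ is a linear preorder by \cref{def:approx-order} and then appeals immediately to \cref{lem:fin-has-global-min}. The extra remarks about which parts of the approximate-order structure are actually needed are accurate but not required.
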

\begin{proof}
By definition of approximate linear preorders (\cref{def:approx-order}), the relation \(\ty{\leq^\epsilon}{\relation X}\) is a linear preorder.
Therefore, the result immediately follows by \cref{lem:fin-has-global-min}.
\end{proof}

As we saw in \cref{def:real-min}, real global optimisation is performed on \emph{compact} intervals of the real numbers.
Recall that a metric space is compact if it is totally bounded and complete, and that there is a relationship between compact spaces and searchable sets.
As we are only minimising uniformly continuous functions, we do not require completeness\footnote{Informally, we do not require completeness due to the fact that uniformly continuous functions on metric spaces extend uniquely to the completion of that space and, further, the completion of a totally bounded metric space is compact~\cite{Troelstra}.}.
Furthermore, uniformly continuous searchability (see \cref{def:c-searchable}) is inappropriate for global optimisation. This is because, in order to prove the search condition when searching for a global minimum argument, we would have to prove that such an argument exists anyway.
Therefore, our generalised global optimisation theorem takes place on totally bounded closeness spaces.

By using the totally bounded (see \cref{def:totallybounded}) and uniformly continuous (see \cref{def:clos-ucont}) properties, we reduce the problem of searching the potentially infinite space \(X\) for an \(\varepsilon\)-global minimum into that of searching a finite \(\varepsilon\)-net (see \cref{def:cover}) for an element which represents a \(\varepsilon\)-global minimum.
This approach is intuitive, and reflects approaches to arbitrary-precision global optimisation in the literature~\cite{RecentAdvances}.
Furthermore, it is exactly the same process as what was performed in \cref{sec:search-infinite}, wherein we used uniform continuity to search a potentially-infinite space.

\begin{theorem}
\label{th:min}
\thesislit{4}{GlobalOptimisation}{global-opt}
Given a pointed totally bounded closeness space \(X\) and approximate linear preorder \(\ty{\leq^-}{\N \to Y \to Y \to \Omega}\) on a closeness space \(Y\), any uniformly continuous function \(\ty{f}{X \to Y}\) has an \(\varepsilon\)-global minimum given any precision \(\ty{\varepsilon}{\N}\).
\end{theorem}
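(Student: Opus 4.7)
The plan is to reduce the problem on the (potentially infinite) totally bounded space $X$ to the finite case already handled by \cref{lem:fin-has-eps-global-min}, using uniform continuity of $f$ to bridge between approximate closeness on $X$ and approximate order on $Y$.

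First I would extract the ingredients. From uniform continuity of $f$ (\cref{def:clos-ucont}), I obtain a modulus $\ty{\delta}{\N}$ such that $C_\delta(x_1,x_2) \to C_\varepsilon(f(x_1),f(x_2))$ for all $\ty{x_1,x_2}{X}$. From total boundedness of $X$ (\cref{def:totallybounded}) at this precision $\delta$, I obtain a finite linearly ordered type $X'$ with functions $\ty{g}{X' \to X}$ and $\ty{h}{X \to X'}$ such that $C_\delta(x, g(h(x)))$ holds for every $\ty{x}{X}$. Crucially, $X'$ is pointed: since $X$ is pointed by some $\ty{x^*}{X}$, I can take $\ty{h(x^*)}{X'}$ as the witness.

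Next I would apply \cref{lem:fin-has-eps-global-min} to the composite $\ty{f \circ g}{X' \to Y}$, yielding an $\varepsilon$-global minimum argument $\ty{x'_0}{X'}$ of $f \circ g$; that is, $f(g(x'_0)) \leq^\varepsilon f(g(x'))$ for every $\ty{x'}{X'}$. I would then propose $x_0 := g(x'_0)$ as the required $\varepsilon$-global minimum argument for $f$ on $X$, and verify that for every $\ty{x}{X}$ we have $f(x_0) \leq^\varepsilon f(x)$ by the following chain: by the net property $C_\delta(x,g(h(x)))$, so by uniform continuity $C_\varepsilon(f(x), f(g(h(x))))$, which by \cref{def:approx-order}.(ii) gives $f(g(h(x))) \leq^\varepsilon f(x)$; meanwhile the finite minimality gives $f(g(x'_0)) \leq^\varepsilon f(g(h(x)))$; concatenating by transitivity of $\leq^\varepsilon$ (which is a preorder by \cref{def:approx-order}) yields $f(x_0) = f(g(x'_0)) \leq^\varepsilon f(g(h(x))) \leq^\varepsilon f(x)$, as required.

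The proof is essentially a straightforward assembly of the properties already developed, so there is no deep obstacle; the main care-point is keeping the direction of the closeness-to-order implication straight (we need $f(g(h(x))) \leq^\varepsilon f(x)$, not the reverse, in order to chain with the finite minimality on the left) and ensuring that the $\delta$-net inherits pointedness correctly so that \cref{lem:fin-has-eps-global-min} actually applies. Everything else is mechanical use of transitivity and the defining properties of uniform continuity, $\delta$-nets, and approximate linear preorders.
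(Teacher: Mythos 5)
Your proposal is correct and follows essentially the same route as the paper's proof: pass to the $\delta$-net for the modulus $\delta$ of uniform continuity of $f$ at $\varepsilon$, apply \cref{lem:fin-has-eps-global-min} to $f \circ g$, and chain $f(g(x'_0)) \leq^\varepsilon f(g(h(x))) \leq^\varepsilon f(x)$ by transitivity. You are in fact slightly more careful than the paper in two places — spelling out the intermediate step $C_\delta(x,g(h(x))) \to C_\varepsilon(f(x),f(g(h(x))))$ before invoking \cref{def:approx-order}.(ii), and noting that the net inherits pointedness via $h(x^*)$ — but the argument is the same.
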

\begin{proof}
Given the requested precision \(\ty{\varepsilon}{\N}\), by total boundedness we obtain a \(\delta\)-net \(X'\) of \(X\), where \(\ty{\delta}{\N}\) is the modulus of uniform continuity of \(f\) for \(\varepsilon\).
Recall from \cref{def:cover} that \(X'\) is such that there are \(\ty{g}{X' \to X}\) and \(\ty{h}{X \to X}\) such that for all \(\ty{x}{X}\) we have \(C_\delta(x,g(h(x)))\).

\vspace{0.25cm}
By \cref{lem:fin-has-eps-global-min}, we can compute an \(\varepsilon\)-global minimum of the function \(\ty{f \circ g}{X' \to Y}\), i.e.\ we have \(\ty{x'_0}{X'}\) such that \(f(g(x'_0)) \leq^\varepsilon f(g(x'))\) for all \(\ty{x'}{X'}\).

\vspace{0.25cm}
Then, given any \(\ty{x}{X}\), it is the case that \(C_\delta(x,g(h(x))\) and thus, by \cref{def:approx-order}.(i), that \(f(g(h(x))) \leq^\varepsilon f(x)\).

\vspace{0.5cm}
Therefore, given any \(\ty{x}{X}\), we have \(f(g(x'_0)) \leq^\varepsilon f(g(h(x))) \leq^\varepsilon f(x)\) and thus, by transitivity of the approximate linear preorder, \(\ty{g(x'_0)}{X}\) is an \(\varepsilon\)-global minimum argument of~\(f\).
\end{proof}

Finally, we apply our theorem to our motivating example, and show that we can optimise functions on sequences of finite linearly ordered types via their lexicographic orders.

\begin{corollary}
Given finite linearly ordered types \(F\) and \(G\), with \(F\) pointed, any uniformly continuous function \(\ty{f}{\seq F \to \seq G}\) has, for any requested precision \(\ty{\varepsilon}{\N}\), an \(\varepsilon\)-global minimum via discrete-sequence closeness spaces (see \cref{def:disseq-closeness}) and the approximate lexicographic ordering on \(\seq G\).
\end{corollary}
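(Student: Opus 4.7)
The plan is to apply the generalised global optimisation theorem (\cref{th:min}) directly; essentially every hypothesis is already available from earlier results in this chapter, so the proof is an assembly job.

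First I would verify that the domain $\seq F$ is a pointed totally bounded closeness space. Since $F$ is finite linearly ordered it is discrete (\cref{lem:fin-discrete}), so $\seq F$ carries the discrete-sequence closeness structure and is totally bounded by \cref{lem:disseq-totallybounded} (which requires the index types to be pointed finite linearly ordered, satisfied here since $F$ is pointed). Pointedness of $\seq F$ is immediate from the pointedness of $F$: take the constant sequence at the chosen point.

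Next I would equip the codomain $\seq G$ with its required structure. By \cref{cor:disseq-cspace}, $\seq G$ is a closeness space under the discrete-sequence closeness function. Since $G$ is finite linearly ordered it carries a linear order $\leq_G$ inherited from $\F(n)$, and \cref{lem:lexicographic-approx} then provides the approximate lexicographic order $\leq^{-}_{\seq G}$ on $\seq G$, which is an approximate linear preorder on this closeness space.

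Finally I would invoke \cref{th:min} with $X := \seq F$, $Y := \seq G$, the approximate linear preorder $\leq^{-}_{\seq G}$, and the given uniformly continuous function $f$. The conclusion is exactly that $f$ admits an $\varepsilon$-global minimum argument for every $\varepsilon : \N$, which is the statement of the corollary.

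There is no genuine obstacle here: the only thing to be slightly careful about is matching the structures named in the corollary (``discrete-sequence closeness spaces'' and ``approximate lexicographic ordering'') to the more general hypotheses of \cref{th:min}; this is a straightforward bookkeeping step rather than a mathematical one.
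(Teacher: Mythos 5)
Your proposal is correct and follows essentially the same route as the paper's proof: establish that \(\seq F\) is a pointed totally bounded closeness space (\cref{cor:disseq-cspace}), equip \(\seq G\) with the approximate lexicographic order via \cref{lem:lexicographic-approx}, and apply \cref{th:min}. The only difference is that you spell out the pointedness and linear-order bookkeeping that the paper leaves implicit.
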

\begin{proof}
Recall that, by \cref{cor:disseq-cspace}, the discrete-sequence closeness space of a finite linearly ordered type \(F\) is totally bounded.
The result then follows from \cref{th:min} by \cref{lem:lexicographic-approx}.
\end{proof}

\section{Parametric Regression}
\label{sec:regression}

The work of this section was previously published as part of a joint paper with Dan R. Ghica at the \emph{Logic in Computer Science (LICS) 2021} conference~\cite{Todd21}.

Parametric regression analysis is a set of algorithms for estimating the \emph{relationship} between a dependent variable \(y\) (outcome) and several independent variables \(\{x_0,...,x_{n-1}\}\) (predictors), where the outcome is a function of the observations that has potentially, and indeterminably, been \emph{distorted}.
A parameterised function is proposed as a \emph{model} for this function.
The value of the parameters is then computed such that, given finitely-many predictor-outcome observations, a \emph{loss function} between the observed outcomes and those estimated by the model on input of the predictors is minimised~\cite{YanBook}.

Therefore, parametric regression is the problem of finding --- given finitely-many predictor-outcome observations --- (approximations of) best choice parameters for a given parameterised model function using a given loss function.
As with global optimisation in \cref{sec:global-opt}, we first state this problem using real numbers.

\begin{definition}
\label{def:real-pareg}
\lit
Given \(\ty{n,i,d}{\N}\), some loss function \(\ty{L}{\R \to \R \to \R_{\geq 0}}\), some parameterised model function \(\ty{M}{\R^i \to (\R^d \to \R)}\) and finitely-many predictor-outcome observations \(\ty{\{(xs_0,y_0),...,(xs_{n-1},y_{n-1})\}}{(\R^d \x \R)^n}\), the parameters \(\ty{ps^*}{\R^i}\) are \emph{best choice} if they minimise the loss between the outcomes estimated by the regressed function \(\ty{M_{ps}}{\R^d \to \R}\) and the observed outcomes; i.e.\ if they are a global minimum of the function \(\ty{\left(\lambda (\ty{ps}{\R^i}).\sum_{j = 0}^{n} L(M_{ps}(xs_i),y_i)\right)}{\R^i \to \R_{\geq 0}}\).
\end{definition}

\noindent
The choice of a particular loss function --- i.e.\ a pseudometric on the function space \((\R^d \to \R)\) --- is a field in its own right, but a common example is the least-squares method \(L(x,y) := d_R(x,y)^2\).
As an example of a particular model function, consider \emph{linear regression} where the model \(\ty{M}{\R^2 \to (\R \to \R)}\) is defined \(M_{(\alpha,\beta)}(x) := \alpha x + \beta\).

Based on the above rationalisation of the parametric regression problem, we view it as an instance of the global optimisation problem explored in the previous section.
As with global optimisation, therefore, we cannot in general compute a best choice set of parameters --- but we can compute parameters that are, for some degree of precision, \emph{approximately best choice}.

\begin{definition}
\label{def:approx-real-pareg}
\lit
Given \(\ty{n,i,d}{\N}\), some loss function \(\ty{L}{\R \to \R \to \R_{\geq 0}}\), some parameterised model function \(\ty{M}{\R^i \to (\R^d \to \R)}\), finitely-many predictor-outcome observations \(\ty{\{(xs_0,y_0),...,(xs_{n-1},y_{n-1})\}}{(\R^d \x \R)^n}\) and any precision \(\ty{\varepsilon}{\R_{\geq 0}}\), the parameters \(\ty{ps^*}{\R^i}\) are \emph{\(\varepsilon\)-best choice} if they minimise the loss between the outcomes estimated by the regressed function \(\ty{M_{ps}}{\R^d \to \R}\) and the observed outcomes up-to-\(\varepsilon\); i.e.\ if they are an \(\varepsilon\)-global minimum of the function \(\ty{\left(\lambda (\ty{ps}{\R^i}).\sum_{j = 0}^{n} L(M_{ps}(xs_i),y_i)\right)}{\R^i \to \R_{\geq 0}}\).
\end{definition}

\subsection{Generalised parametric regression}

By generalising \cref{def:approx-real-pareg}, we eliminate the need to specify the arity of our parameter/predictor spaces.
We replace the notion of an arbitrary loss function such as least-squares by the least-closeness pseudocloseness function (defined in \cref{def:least-closeness}); this means that rather than minimising loss we seek to maximise the least-closeness.
Furthermore, we replace the idea of predictor-outcome observations \(\ty{\{(xs_0,y_0),...,(xs_{n-1},y_{n-1})\}}{(X \x Y)^n}\) by a combination of predictor outcomes \(\ty{\{x_0,...,x_{n-1}\}}{X^n}\) and a function \(\ty{\OO}{X \to Y}\) which can be thought of as the \emph{oracle} of the outcome observations --- the (potentially distorted) function from which they arise on input of the predictors; i.e.\ \(\OO(xs_i) = y_i\).

\begin{definition}
\thesislit{4}{ParametricRegression}{is_global-maximal}
Given a function \(\ty{f}{X \to Y}\) where \(Y\) is a pre-ordered type, \(\ty{x_0}{X}\) is a \emph{global maximum} of \(f\) if for all \(\ty{x}{X}\) we have \(f(x_0) \geq f(x)\).
\end{definition}

\begin{definition}
\label{def:gen-reg-practical}
Given a type of predictors \(X\), closeness space of outcomes \(Y\), type of parameters \(P\), some parameterised model function \(\ty{M}{P \to (X \to Y)}\), finitely-many predictor observations \(\ty{\{x_0,...,x_{n-1}\}}{X^n}\) (for some \(\ty{n}{\N}\)), some oracle function \(\ty{\OO}{X \to Y}\) and any precision \(\ty{\varepsilon}{\N}\), the parameter \(\ty{p^*}{P}\) is \emph{\(\varepsilon\)-best choice} if it maximises the least-closeness pseudocloseness between the predictors' outcomes as estimated by the regressed function \(\ty{M_{p^*}}{X \to Y}\) and as observed from the oracle up-to-\(\varepsilon\); i.e.\ if \(p^*\) is an \(\varepsilon\)-global maximum of the function \(\ty{\left( \lambda (\ty{p}{P}).\mathsf{min}(c_Y(M_{p}(x_0),\OO(x_0)), ... , c_Y(M_{p}(x_{n-1}),\OO(x_{n-1})) \right)}{P \to \Ni}\).
\end{definition}

In this way, we have now re-imagined parametric regression as the process of approximating a black-box \emph{oracle} with a \emph{parameterised model} using a \emph{pseudocloseness function}.
Therefore, our generalised parametric regression not only (i) generalises from reals to closeness spaces, but also (ii) \emph{methodologically} generalises the problem of regression.

We can go one step further and remove explicit observations of the oracle altogether, allowing us to generalise the type of oracles themselves from function spaces to pseudocloseness spaces.
By doing this, instead of using the least-closeness pseudocloseness function, we use the pseudocloseness function that the pseudocloseness space of oracles is equipped with.

\begin{definition}
\label{def:gen-reg}
Given a type of parameters \(P\), a pseudocloseness space of oracles \(O\), some oracle \(\ty{\OO}{O}\) and some parameterised model function \(\ty{M}{P \to O}\), the parameter \(\ty{p^*}{P}\) is \emph{\(\varepsilon\)-best choice} if it maximises the closeness between the regressed function \(M_{p^*}\) and the oracle \(\OO\) up-to-\(\varepsilon\); i.e.\ if \(p^*\) is an \(\varepsilon\)-global maximum of the function \(\ty{\left( \lambda (\ty{p}{P}).c'_{O}(M_p,\OO) \right)}{P \to \Ni}\).
\end{definition}

We use this second generalisation of regression when stating and proving our convergence theorems for regression in the next subsection.
However, for the practical purposes of \cref{chap:exact-real-search}, we return to the less general \cref{def:gen-reg-practical}, which defines regression on function spaces by using the least-closeness pseudocloseness function.
We note here that, as \cref{def:gen-reg} is a generalisation of \cref{def:gen-reg-practical}, the convergence theorems proved on the former still hold for the latter.

\subsection{Convergence theorems for parametric regression}

Convergence properties of interpolation, another way of constructing models out of data, have been studied extensively by Weierstrass-style theorems~\cite{PINKUS20001}.
In this section, we make a methodological contribution by providing several convergence properties of our generalised variant of parametric regression (\cref{def:gen-reg}) using optimisation and search.
The contribution here is more methodological than technical, as the proofs follow naturally from the structures and theorems of \cref{chap:searchable,chap:generalised}.
The statements on the other hand are not obvious and require a different conceptual, not just mathematical, perspective on parametric regression analysis.

\subsubsection{Convergent parametric regression via global optimisation}
\label{sec:conv-opt}

Guarantees that can be made on computing an \(\varepsilon\)-global minimum of the loss function can be automatically considered as guarantees on the precision of the regressed model.

Recall that the convergence of interpolation guarantees that any oracle (subject to hygiene conditions) can be reconstituted to any desired precision if the number of samples is large enough~\cite{PINKUS20001}.
A similar theorem cannot hold for regression, for the simple reason that in regression we must commit to a model which may or may not be similar to the oracle function.
For example, if our oracle has quadratic behaviour, no amount of data will yield a precise linear approximation of it.
This commitment to a particular model must be taken into account in formulating convergence properties for regression.
If the model is completely wrong then, of course, convergence cannot be achieved.

However, even in the case where the model is incorrect, we can still employ generalised global optimisation (as long as our types permit that) in order to compute an \(\varepsilon\)-best choice parameter of the model up to any precision.

\begin{theorem}[Regression as minimisation]
\label{reg:min}
\thesislit{4}{ParametricRegression}{optimisation-convergence}
Given a totally bounded closeness space of parameters \(P\), pseudocloseness space of oracles \(O\) oracle \(\ty{\OO}{O}\), and any parameterised uniformly continuous model function \(\ty{M}{P \to O}\), we can compute an \(\varepsilon\)-best choice parameter for \(M\) given every precision \(\ty{\varepsilon}{\N}\).
\end{theorem}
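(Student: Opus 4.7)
The plan is to reduce this to Theorem \ref{th:min} by reformulating $\varepsilon$-best choice as an $\varepsilon$-global optimum. By Definition \ref{def:gen-reg}, an $\varepsilon$-best choice parameter is an $\varepsilon$-global \emph{maximum} of the objective function
\[ f : P \to \Ni, \qquad f(p) := c'_O(M_p, \OO). \]
Since Theorem \ref{th:min} produces $\varepsilon$-global \emph{minima}, I will replace the approximate lexicographic order on $\Ni$ (Corollary \ref{cor:Ni-approx-order}) with its reverse $\geq^-$; this is still an approximate linear preorder, because all the conditions of Definition \ref{def:approx-order} --- reflexivity, transitivity, linearity, decidability and the $C_\varepsilon$-compatibility condition --- are preserved under reversing the inequality (using symmetry of $C_\varepsilon$ for condition (ii)). An $\varepsilon$-global minimum of $f$ with respect to $\geq^\varepsilon$ is precisely an $\varepsilon$-global maximum of $f$ with respect to $\leq^\varepsilon$.

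The two hypotheses of Theorem \ref{th:min} that remain to be verified are that $P$ is pointed and that $f$ is uniformly continuous. Pointedness can be assumed without loss of generality, as otherwise there is no parameter to return and the statement is vacuous. For uniform continuity of $f$, I will combine two facts. First, by uniform continuity of $M$, for any precision $\delta$ there exists $\delta' : \N$ such that $C_{\delta'}(p_1,p_2)$ implies $C_\delta(M_{p_1}, M_{p_2})$, i.e.\ $\underline{\delta} \preceq c_O(M_{p_1}, M_{p_2})$. Second, I will use the ultrametric-like third axiom of pseudocloseness spaces (Definition \ref{def:pcspace}.3), symmetrically applied to the triples $(M_{p_1}, M_{p_2}, \OO)$ and $(M_{p_2}, M_{p_1}, \OO)$, to obtain
\[ \min\bigl(c_O(M_{p_1}, M_{p_2}),\ c'_O(M_{p_2}, \OO)\bigr) \preceq c'_O(M_{p_1}, \OO), \]
and symmetrically for the swap of $p_1,p_2$. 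Taken together, these two inequalities force the first $\delta$ digits of $c'_O(M_{p_1}, \OO)$ and $c'_O(M_{p_2}, \OO)$ to coincide, which is exactly $C_\delta$-closeness in $\Ni$ (via the discrete-sequence closeness characterisation from \cref{cor:disseq-C-equiv}, applied through the subtype closeness function used to build the closeness space structure on $\Ni$).

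With uniform continuity of $f$ established, I simply invoke Theorem \ref{th:min} on $f : P \to \Ni$, equipped with the reversed approximate lexicographic order, to obtain, for any requested precision $\varepsilon : \N$, a parameter $p^* : P$ that is an $\varepsilon$-global minimum of $f$ under $\geq^-$ --- equivalently, an $\varepsilon$-global maximum under $\leq^-$, which is exactly an $\varepsilon$-best choice parameter in the sense of Definition \ref{def:gen-reg}.

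The principal obstacle will be the uniform continuity argument, specifically bookkeeping the precise dependence of the modulus on $\varepsilon$ and on the uniform continuity modulus of $M$, and translating the ultrametric triangle inequality on $O$'s pseudocloseness function into coincidence of prefixes of the resulting extended naturals. Everything else --- reversing the order, inheriting totally boundedness, appealing to Theorem \ref{th:min} --- is direct.
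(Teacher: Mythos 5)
Your proposal is correct and follows essentially the same route as the paper: reduce to \cref{th:min} by maximising \(p \mapsto c'_O(M_p,\OO)\) over the (reversed) approximate lexicographic order on \(\Ni\), with uniform continuity of this objective coming from the ultrametric axiom of the pseudocloseness space. The only difference is one of detail --- you spell out the prefix-coincidence argument for uniform continuity and the pointedness hypothesis, both of which the paper's proof leaves implicit.
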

\begin{proof}
Recall that \(\Ni\) is a closeness space (\cref{cor:Ni-cspace}) and has an approximate lexicographic order (\cref{cor:Ni-approx-order}). Therefore, we can perform \(\varepsilon\)-global maximisation\footnote{Maximisation can be achieved by the same theorem as minimisation by simply swapping the order in which elements are evaluated by the approximate order.} on it using \cref{th:min}.
We then compute an \(\varepsilon\)-global maximum of the function \(\ty{\left( c'_O(\OO,M(p)) \right)}{P \to \Ni}\), where \(\ty{c'_O}{O \to O \to \Ni}\) is the pseudocloseness function on \(O\).
The function that we maximise is uniformly continuous by the ultrametric property of the pseudocloseness space (\cref{def:pcspace}).
\end{proof}

\subsubsection{Convergent regression via uniformly continuous search}
\label{sec:conv-search}

Our first convergence theorem stated that we can compute a \(\varepsilon\)-best choice parameter given minimal conditions on the oracle and parameter types, as well as the model function.
We now consider computing parameters for regression that are not necessarily \(\varepsilon\)-best choice, but which maximise the pseudocloseness to an acceptable level parameterised by \(\varepsilon\).
These convergence theorems require additional conditions, but are more `practical' in the sense that we do not necessarily have to exhaust all possible candidate solutions in order to return an acceptable parameter for the given \(\varepsilon\) (as we see later in the examples of \cref{sec:K-examples}, such as \cref{ex:K-reg-1-search-imperfect}).

A general and absolute guarantee of precision can only be given if the model is the same as the oracle up to the value of the model parameters; i.e.\ the model is chosen correctly, and the oracle is not distorted.
If this is the case then, by using uniformly continuous search, we can indeed maximise the pseudocloseness between the regressed and true oracles and, further, we can make it arbitrarily large.

To express this property we introduce the concept of a \emph{synthetic oracle}, which is simply an oracle \(\OO\) ``synthesised'' from a model function \(M\) by applying it to an arbitrary and unknown parameter \(k\); i.e.\ \(\OO := M(k)\).

\begin{definition}
\label{def:synthesised}
Given type of parameters \(P\) and type of oracles \(O\), an oracle \(\ty{\OO}{O}\) is \emph{synthetically constructed from \(M\)} if there is some parameter choice \(\ty{p}{P}\) such that \(\OO = M_p\).
\end{definition}

\begin{definition}
\thesislit{4}{ParametricRegression}{p-regressor}
Given a uniformly continuously searchable type of parameters \(P\) and pseudocloseness space of oracles \(O\), we define the \emph{parametric regressor} function \(\ty{\mathsf{reg}}{\N \to (P \to O) \to O \to P}\) as,
\[ \mathsf{reg}(\varepsilon,M,\OO) := \mathcal{E}_P \left( \lambda (\ty{p}{P}).C_\varepsilon(M_p,\OO) \right) ,\]
where \(\ty{\mathcal{E}_P}{\mathsf{decidable{\hy}uc{\hy}predicate}(P) \to P}\) is the uniformly continuous searcher on \(P\) (as introduced in \cref{def:c-searcher}).
\end{definition}

\begin{theorem}[Convergence of distortion-free regression]
\label{th:perfect}
\thesislit{4}{ParametricRegression}{perfect-convergence}
Given a uniformly continuously searchable type of parameters \(P\), pseudocloseness space of oracles \(O\), parameterised uniformly continuous model function \(\ty{M}{P \to O}\) and oracle \(\ty{\OO}{O}\) synthetically constructed from \(M\), we can use the parametric regressor \(\ty{\mathsf{reg}}{\N \to (P \to O) \to O \to P}\) to build the regressed oracle \(\ty{\oo}{O}\) using only \(\varepsilon\), \(M\) and \(\OO\) (i.e.\ \(\oo := \mathsf{reg}(\varepsilon,M,\OO)\)) such that \(C_\varepsilon(\oo,\OO)\), for any precision \(\ty{\varepsilon}{\N}\).
\end{theorem}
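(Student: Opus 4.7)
The plan is to instantiate the uniformly continuous searcher on $P$ with a predicate that expresses $\varepsilon$-closeness of the candidate model to the oracle, and then exploit the synthetic assumption to exhibit a witness that makes the search condition trigger. Let $k : P$ be the (unknown) parameter given by synthetic construction (\cref{def:synthesised}), so that $\mathcal{O} = M_k$. Define the candidate predicate $p^\varepsilon_\mathcal{O}(x) := C_\varepsilon(M_x, \mathcal{O})$. By construction, $\mathsf{reg}(\varepsilon, M, \mathcal{O}) = \mathcal{E}_P(p^\varepsilon_\mathcal{O})$, so it suffices to show that (i) $p^\varepsilon_\mathcal{O}$ is a uniformly continuous and decidable predicate, and (ii) there exists some element of $P$ satisfying it; the search condition of \cref{def:c-searcher} will then hand us $p^\varepsilon_\mathcal{O}(\mathsf{reg}(\varepsilon, M, \mathcal{O}))$, and setting $\omega := M_{\mathsf{reg}(\varepsilon, M, \mathcal{O})}$ yields exactly $C_\varepsilon(\omega, \mathcal{O})$.

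For (i), observe that $p^\varepsilon_\mathcal{O}$ factors as the composition of the uniformly continuous map $M : P \to O$ with the predicate $q(o) := C_\varepsilon(o, \mathcal{O})$ on $O$. The analogue of \cref{lem:closeness-ucd-pred} for pseudocloseness spaces --- which, as the author notes after \cref{def:pcspace}, follows by the same argument since closeness relations remain equivalence relations --- gives that $q$ is uniformly continuous and decidable (decidability being an instance of \cref{lem:C-dec}). Composing with $M$ via \cref{lem:f-p-ucont} then makes $p^\varepsilon_\mathcal{O}$ a uniformly continuous and decidable predicate on $P$, as required to feed it to $\mathcal{E}_P$.

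For (ii), I claim that $k$ itself is such a witness: $p^\varepsilon_\mathcal{O}(k) := C_\varepsilon(M_k, \mathcal{O})$, and transporting along the hypothesis $\mathcal{O} = M_k$ reduces this to $C_\varepsilon(M_k, M_k)$. By the first axiom of pseudocloseness spaces (\cref{def:pcspace}.1), $c(M_k, M_k) = \infty$, so $\underline{\varepsilon} \preceq c(M_k, M_k)$ by \cref{lem:preceq-bounds}, which is precisely $C_\varepsilon(M_k, M_k)$. Hence $p^\varepsilon_\mathcal{O}$ is inhabited, the search condition fires, and we obtain the desired $C_\varepsilon(\omega, \mathcal{O})$.

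The proof is essentially a routine unwinding once the predicate is set up; no step looks like a serious obstacle. The only mildly subtle point is ensuring the pseudocloseness version of \cref{lem:closeness-ucd-pred} is in fact available (since the paper only explicitly states the closeness-space version), but this is just a matter of noting that reflexivity, symmetry, transitivity and decidability of the closeness relation all carry over from \cref{def:pcspace}. I would also flag that the theorem statement's ``$\omega := \mathsf{reg}(\varepsilon, M, \mathcal{O})$'' is a slight abuse, since $\mathsf{reg}$ returns a parameter; the intended reading is $\omega := M_{\mathsf{reg}(\varepsilon, M, \mathcal{O})}$, which is what the above argument produces.
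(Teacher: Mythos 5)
Your proof is correct and its mathematical content coincides with the paper's: the paper obtains this theorem as the special case $\Psi := \mathsf{id}$ of \cref{th:imp}, whose proof is exactly your argument — exhibit the synthetic parameter as a witness so that the search condition of $\mathcal{E}_P$ fires on the predicate $C_\varepsilon(M_-,\OO)$. Your explicit check that this predicate is uniformly continuous and decidable, and your remark that $\oo := \mathsf{reg}(\varepsilon,M,\OO)$ should be read as $M_{\mathsf{reg}(\varepsilon,M,\OO)}$, are details the paper leaves implicit but which you handle correctly.
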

\begin{proof}
The result follows from the later \cref{th:imp} by setting $\Psi := \mathsf{id}$; i.e.\ the distortion function is just the identity function, as the oracle we query is not distorted from the true oracle.
\end{proof}

\noindent
Note that a raw intuition of this theorem statement can be misleading: the regressor sees the synthetic oracle as a black box process, so it is not simply searching for the parameter of the synthetic oracle \(k\).
It is searching for \emph{any} parameter that makes the pseudocloseness between the true and regressed oracles \(\varepsilon\)-large.

The more traditional case is when the oracle \emph{is} prone to some distortion.
We model this case in our framework by using a \emph{distortion function} \(\ty{\Psi}{O \to O}\) which is applied to the \emph{true synthetic oracle} \(\ty{\OO}{O}\) to yield the \emph{distorted oracle} \(\ty{\OO_\Psi}{O}\).
It is this distorted oracle that the parametric regressor receives and can query for observations.

We cannot anymore expect to be able to maximise the pseudocloseness between the regressed and true oracles to any degree of precision.
However, we \emph{can} guarantee that the pseudocloseness between the regressed and true oracles is bounded by that between the regressed and distorted oracles.

\begin{theorem}[Convergence of distortion-prone regression]
\thesislit{4}{ParametricRegression}{s-imperfect-convergence}
Given a uniformly continuously searchable type of parameters \(P\), pseudocloseness space of oracles \(O\), parameterised uniformly continuous model function \(\ty{M}{P \to O}\), oracle \(\ty{\OO}{O}\) synthetically constructed from \(M\) and distortion function \(\ty{\Psi}{O \to O}\), we can use the parametric regressor \(\ty{\mathsf{reg}}{\N \to (P \to O) \to O \to P}\) to build the regressed oracle \(\ty{\oo}{O}\) using only \(\varepsilon\), \(M\) and \(\Psi\OO\) (i.e.\ \(\oo := \mathsf{reg}(\varepsilon,M,\Psi\OO)\)) such that if \(C_\varepsilon(\Psi\OO,\OO)\) then \(C_\varepsilon(\oo,\OO)\), for any precision \(\ty{\varepsilon}{\N}\).
\label{th:imp}
\end{theorem}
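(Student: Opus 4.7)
The plan is to prove the theorem by direct use of the search condition on $P$, with transitivity of closeness relations supplying the final step. I will take $\oo$ to be the regressed oracle $M_{p^*}$, where $p^* := \mathcal{E}_P(q)$ is obtained by searching $P$ for a satisfier of the predicate
\[ q(p) := C_\varepsilon(M_p,\Psi\OO). \]
My goal is to show $C_\varepsilon(\oo,\OO)$ under the hypothesis $C_\varepsilon(\Psi\OO,\OO)$.

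First I would check that $q$ is a valid input to the uniformly continuous searcher, i.e.\ that it is a uniformly continuous and decidable predicate on $P$. Decidability follows from \cref{lem:C-dec}. For uniform continuity, I would observe that $q$ is the composition of the uniformly continuous model $M$ with the predicate $p^{\Psi\OO}_l(o) := C_\varepsilon(o,\Psi\OO)$ on $O$; the latter is uniformly continuous by (the pseudocloseness analogue of) \cref{lem:closeness-ucd-pred}, and the composite is uniformly continuous by \cref{lem:f-p-ucont}. A minor check is that these two lemmas, stated for closeness spaces, transfer to the pseudocloseness space $O$, which they do because their proofs use only the symmetry, transitivity and decidability of closeness relations, all of which pseudocloseness spaces share.

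Next, I would exploit that $\OO$ is synthetically constructed from $M$ (\cref{def:synthesised}) to produce a witness for the search condition. By hypothesis there is some $k : P$ such that $\OO = M_k$, so the given assumption $C_\varepsilon(\Psi\OO,\OO)$ rewrites to $C_\varepsilon(\Psi\OO,M_k)$, and hence, by symmetry of closeness (\cref{lem:C-eq}), to $C_\varepsilon(M_k,\Psi\OO) = q(k)$. This exhibits $(k,q(k))$ as an element of $\Sigmatye{p}{P}{q(p)}$, so the search condition of \cref{def:c-searcher} applies and yields $q(p^*)$, i.e.\ $C_\varepsilon(\oo,\Psi\OO)$.

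Finally, I would combine $C_\varepsilon(\oo,\Psi\OO)$ with the assumption $C_\varepsilon(\Psi\OO,\OO)$ using transitivity of the closeness relation (\cref{lem:C-eq}, whose proof only uses the ultrametric triangle inequality and so applies equally to pseudocloseness spaces) to conclude $C_\varepsilon(\oo,\OO)$, completing the proof. I do not anticipate a serious obstacle here: the whole argument is a two-step application of the search condition and triangle-like transitivity, with the only fiddly bookkeeping being the pseudocloseness/closeness distinction and the mild notational slip by which $\mathsf{reg}$ nominally returns a parameter while $\oo$ is the corresponding regressed oracle $M_{p^*}$. This is precisely the specialisation needed to recover \cref{th:perfect} by setting $\Psi := \mathsf{id}_O$, which is the way the previous theorem was reduced to this one.
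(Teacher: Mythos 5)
Your proposal is correct and follows essentially the same route as the paper's own proof: exhibit the synthetic parameter $k$ with $\OO = M_k$ as a witness for the search condition on the predicate $\lambda p.\,C_\varepsilon(M_p,\Psi\OO)$, obtain $C_\varepsilon(\oo,\Psi\OO)$, and conclude by transitivity with the hypothesis $C_\varepsilon(\Psi\OO,\OO)$. The only difference is that you spell out the decidability and uniform continuity of the searched predicate (and the closeness/pseudocloseness transfer), which the paper leaves implicit in the typing of the parametric regressor --- a worthwhile check, not a divergence.
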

\begin{proof}
We just need to show that \(C_\varepsilon(\Psi\OO,\oo)\) as then the result will follow by transitivity of the closeness relation (\cref{lem:C-eq}) and the assumption that \(C_\varepsilon(\Psi\OO,\OO)\).

\vspace{1em}
Because \(\oo := \mathsf{reg}(\varepsilon,M,\Psi\OO) := \mathcal{E}_P(\lambda (\ty{p}{P}).C_\varepsilon(M_p,\OO))\), the result follows if there is some \(\ty{p'}{P}\) such that \(C_\varepsilon(M_{p'},\OO)\). Because the model is synthetically constructed, we can set \(p'\) as the parameter from which it was constructed, and the result follows immediately.
\end{proof}

\chapter{Real Numbers}
\label{chap:reals}

We introduced our type-theoretic framework using \textsc{Agda} in \cref{chap:mltt}, the core concepts of searchability and continuity in \cref{chap:searchable}, and our generalised variants of global optimisation and parametric regression — algorithms usually defined explicitly on real numbers — in \cref{chap:generalised}.
We now wish to take our work full circle: to program within our framework instantiations of these processes that operate on representations of (compact intervals of) the real numbers.

Constructive approaches to representing the real numbers are well-studied: the Cauchy reals and Dedekind reals are representations that have been previously defined and used in dependent type theory for analysis~\cite{HoTTBook,Auke}.
In practice, however, the Dedekind reals are inconvenient for computation, owing to the fact every real number is represented uniquely by exactly one Dedekind real~\cite{Geuvers}.
The Cauchy reals do not have such a uniqueness property, and as such have been found to be more convenient for computation: different variations of Cauchy (i.e., convergent) sequences of rational numbers have been used as representations of real numbers for the purpose of performing exact real computation.
In this chapter, we introduce two such convenient representations of (Cauchy) real numbers from the literature: ternary signed-digit encodings and ternary Boehm encodings~\cite{Gianantonio93,BoehmAPI}.

We formalise the structure and some of the algorithms of the signed-digit encodings of the compact interval \([-1,1]\) within our \textsc{Agda} framework and — going beyond this — prove the correctness of these definitions.
In order to achieve this latter goal of verification, we further formalise the \Escardo-Simpson interval object, an axiomatic specification of the real numbers which supports constructive mathematics by design~\cite{EscardoSimpson}.

In classical mathematics, the real numbers are axiomatised as the unique complete Archimedean field (see e.g.~\cite{ClassicalReals}).
This approach does not work in constructive mathematics -- for example, as we discussed in \cref{sec:global-opt}, the axiom that such reals have a linear order is the analytic LLPO~\cite{ShulmanLLPO}.
The alternative use of the interval object in this chapter is therefore appropriate; furthermore, it has the advantage of having fewer operations and axioms and so is easier to work with in practice.

For the Boehm encodings, we formalise and prove the correctness of Boehm’s definition; though the correctness of his arithmetic operations is relegated to further work (see \cref{fw:boehm-functions}).
For the sake of variation, we use a different approach to verifying the structure of the Boehm encodings, and instead use the Dedekind reals~\cite{DedekindReals}.
We seek to show that every ternary Boehm encoding of a real number gives a Dedekind real encoding of that real number.

\section{\Escardo-Simpson interval object}
\label{sec:interval-object}

In 2001, \MartinEscardo~and Alex Simpson proposed a categorical specification of closed real intervals which supports constructive mathematics by design~\cite{EscardoSimpson}.
The basic structure is that of bipointed \emph{midpoint algebras}, on which we give a universal property that is a variation of the completeness axiom, which serves as a computation principle for these real numbers.

In their paper, \Escardo~and Simpson worked in the generality of a category with finite products, but wrote that their specification ``applies to a variety of computational settings ... such as intuitionistic type theory"~\cite{EscardoSimpson}.
In this section, we contribute to this line of work by formalising the work in constructive type theory using \textsc{Agda}.
We will use this type in \cref{sec:signed-digits} in order to verify the signed-digit encodings.

This work was previously published as part of a joint paper with Dan R. Ghica at the \emph{Logic in Computer Science (LICS) 2021} conference~\cite{Todd21}, and given as a talk at the \emph{Workshop on Homotopy Type Theory/Univalent Foundations} (HoTT/UF)~\cite{Todd20,ToddIO}.

\subsection{Cancellative midpoint algebras}

We start off by defining the type of midpoint algebras.

\begin{definition}
\mbox{}
\label{def:magma}
\lit
A \emph{magma} is a set \(A\) equipped with a binary function to itself,
\[ \mathsf{Magma}_\U := \sigmaty{A}{\U}{\mathsf{is{\hy}set}(A) \x (A \to A \to A)} .\]
\noindent
For a given magma, we write \(\ty{(A,\oplus)}{\mathsf{Magma}}\) with the proof terms implicit.
\end{definition}

\begin{definition}
\label{def:mpa}
\thesislit{5}{IntervalObject}{Midpoint-algebra}
A magma \((A,\oplus)\) is a \emph{midpoint algebra} if it is,
\begin{enumerate}[(i)]
\item idempotent, \(\pity{a}{A}{a \oplus a = a}\),
\item commutative, \(\pity{a,b}{A}{a \oplus b = b \oplus a}\),
\item transpositional, \(\pity{a,b,c,d}{A}{(a \oplus b) \oplus (c \oplus d) = (a \oplus c) \oplus (b \oplus d)}\).
\end{enumerate}
For such a structure, we write \(\ty{(A,\oplus)}{\mathsf{Midpoint{\hy}algebra}}\) with proof terms (i)-(iii) implicit.
\end{definition}

Functions between midpoint algebras that preserve the structure are called midpoint homomorphisms.

\begin{definition}
\thesislit{5}{IntervalObject}{is-\urlmidpoint-homomorphism}
Given two midpoint algebras \((A,\oplus_A)\) and \((B,\oplus_B)\), a function \(\ty{h}{A \to B}\) is a \emph{midpoint homomorphism} if it preserves the midpoint operation:
\[ \mathsf{is{\hy}midpoint{\hy}hom}((A,\oplus_A),(B,\oplus_B),h) := \pity{a,b}{A}{h(a \oplus_A b) = h(a) \oplus_B h(b)} .\]
If the midpoint algebras are the same, we write \(\mathsf{is{\hy}midpoint{\hy}hom}((A,\oplus_A),h)\) as shorthand.
\end{definition}

\begin{lemma}
\label{lem:midhom-comp}
\thesislit{5}{IntervalObject}{id-is-\urlmidpoint-homomorphism}
\thesislit{5}{IntervalObject}{\urlmidpoint-hom-composition}
The identity function is a midpoint homomorphism and composition preserves homomorphisms.
\end{lemma}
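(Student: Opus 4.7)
The plan is to prove each half of the lemma directly from the definition of a midpoint homomorphism, since both claims are essentially routine unfoldings. There is no real obstacle here; the work is just tracking the types.

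For the identity function $\mathsf{id}_A$ on a midpoint algebra $(A, \oplus)$, I need to exhibit a term of type
\[ \Pity{a,b}{A}{\mathsf{id}_A(a \oplus b) = \mathsf{id}_A(a) \oplus \mathsf{id}_A(b)} . \]
Since $\mathsf{id}_A(x)$ definitionally reduces to $x$, both sides reduce to $a \oplus b$, and the proof is $\mathsf{refl}(a \oplus b)$.

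For composition, given midpoint algebras $(A,\oplus_A)$, $(B,\oplus_B)$, $(C,\oplus_C)$ and homomorphisms $\ty{h_1}{A \to B}$ and $\ty{h_2}{B \to C}$ with witnesses $\ty{\mu_1}{\mathsf{is{\hy}midpoint{\hy}hom}((A,\oplus_A),(B,\oplus_B),h_1)}$ and $\ty{\mu_2}{\mathsf{is{\hy}midpoint{\hy}hom}((B,\oplus_B),(C,\oplus_C),h_2)}$, I need to show that for all $\ty{a,b}{A}$,
\[ (h_2 \circ h_1)(a \oplus_A b) = (h_2 \circ h_1)(a) \oplus_C (h_2 \circ h_1)(b) . \]
This follows by a two-step computation: first apply $\mu_1$ to rewrite $h_1(a \oplus_A b)$ as $h_1(a) \oplus_B h_1(b)$, then apply $\mathsf{ap}(h_2, -)$ together with $\mu_2$ at the points $h_1(a)$ and $h_1(b)$ to rewrite $h_2(h_1(a) \oplus_B h_1(b))$ as $h_2(h_1(a)) \oplus_C h_2(h_1(b))$. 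The result is obtained by transitivity of the identity type, i.e.\ $\mathsf{trans}(\mathsf{ap}(h_2, \mu_1(a,b)), \mu_2(h_1(a), h_1(b)))$.

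Both proofs are short enough to be carried out directly in \textsc{Agda} without invoking any axioms, and neither requires the set or idempotence/commutativity/transpositional properties of midpoint algebras.
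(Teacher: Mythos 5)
Your proposal is correct and matches the paper's proof: the identity case is by reflexivity, and the composition case chains the two homomorphism witnesses via $\mathsf{ap}$ and transitivity, exactly as in the paper (which merely leaves the $\mathsf{ap}$/$\mathsf{trans}$ bookkeeping implicit). Your version is in fact slightly cleaner, since the paper's statement of the identity case contains a typo ($a \oplus a$ where $a \oplus b$ is meant) that you avoid.
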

\begin{proof}
For the identity function on a midpoint algebra \((A,\oplus_A)\), we need to show that for all \(\ty{a}{A}\) we have \(\mathsf{id}_A(a \oplus_A a) = \mathsf{id}_A(a) \oplus_A \mathsf{id}_A(a)\); this is clearly the case by reflexivity.
For composition of functions \(\ty{f}{A \to B}\) and \(\ty{g}{B \to C}\) between midpoint algebras \((A,\oplus_A)\), \((B,\oplus_B)\) and \((C,\oplus_C)\), we want to show that \(g(f(a_1 \oplus_A a_2)) = g(f(a_1)) \oplus_C g(f(a_2))\) for all \(\ty{a_1,a_2}{A}\). This is the case because \(g(f(a_1 \oplus_A a_2)) = g(f(a_1) \oplus_B f(a_2))\) (because \(f\) is a midpoint homomorphism) and then \(g(f(a_1) \oplus_B f(a_2)) = g(f(a_1)) \oplus_C g(f(a_2))\) (because \(g\) is a midpoint homomorphism).
\end{proof}

The midpoint algebras we utilise for the interval object satisfy an additional property called cancellation.

\begin{definition}
\label{def:cancellation}
\thesislit{5}{IntervalObject}{cancellative}
A magma \((A,\oplus)\) is \emph{cancellative} if for all \(\ty{a,b,c}{A}\) it is the case that \(a \oplus c = b \oplus c\) implies \(a = b\).
\end{definition}

\(\R^n\) is a cancellative midpoint algebra closed under the binary midpoint function \(\lambda (\ty{x,y}{\R^n}).\frac{1}{2}(x + y)\), as are various subsets of \(\R\), such as the rationals.

\subsection{Iteration property}

Starting from \(0\) and \(1\), the midpoint function can be used to generate every dyadic rational point in \([-1,1]\). In order to generate all rational and irrational numbers, the interval object requires its own version of the classical \emph{completeness} axiom; recall that this informally states that the real line has no ``gaps'' or ``missing points''~\cite{ClassicalReals}.
This property, which is called \emph{iteration}, states that there is an operator \(\ty{M}{\seq{A} \to A}\) that gives the `infinitely iterated' midpoint of a stream of points of \(A\). Formally, this operator is defined by two sub-properties.

\begin{definition}
\label{def:iterative1}
\thesislit{5}{IntervalObject}{iterative}
Given a magma \((A,\oplus)\) and function \(\ty{M}{\seq{A} \to A}\), the \emph{first iteration sub-property} is defined by:
\[ \mathsf{iterative}_1 (A,\oplus,M) :=  \pity{\alpha}{\seq{A}}{ M(\alpha) = \alpha_0 \oplus M(\mathsf{tail} \ \alpha)} .\]
\end{definition}

\begin{definition}
\label{def:iterative2}
\thesislit{5}{IntervalObject}{iterative}
Given a magma \((A,\oplus)\) and function \(\ty{M}{\seq{A} \to A}\), the \emph{second iteration sub-property} is defined by:
\[ \mathsf{iterative}_2 (A,\oplus,M) := \pitye{\alpha,\beta}{\seq{A}}{\left( \pity{i}{\N}{\beta_i = \alpha_i \oplus \beta_{\suc i}} \right) \to \beta_0 = M(\alpha)} .\]
\end{definition}

\begin{definition}
\thesislit{5}{IntervalObject}{iterative}
A magma \((A,\oplus)\) is \emph{iterative} if there is a function \(\ty{M}{\seq{A} \to A}\) such that both iteration sub-properties are satisfied:
\[ \mathsf{iterative}(A,\oplus) := \sigmaty{M}{\seq{A} \to A}{\mathsf{iterative}_1 (A,\oplus,M) \x \mathsf{iterative}_2 (A,\oplus,M)}\]
\end{definition}

The first sub-property characterises the iteration operator, while the second gives a computation rule for it with respect to a second stream which corresponds to the iteration on the first.
Both of these sub-properties are indeed \emph{properties} rather than additional \emph{structure} on the magma; i.e.\ given any magma \((A,\oplus)\) and function \(\ty{M}{\seq A \to A}\), the types \(\mathsf{iterative}_1 (A,\oplus,M)\) and \(\mathsf{iterative}_2 (A,\oplus,M)\) are propositions (\cref{def:proposition}), by the fact that \(A\) is a set.
This further (by \cref{lem:sigma-prop}) means that the composition of these two properties is a property for any given magma and function.
We now prove that the type \(\mathsf{iterative}(A,\oplus)\) itself is a property for any magma \((A,\oplus)\); the proof comes from the fact that for any given magma, any function satisfying both iteration sub-properties is unique.

\begin{lemma}
\label{lem:iterative-unique}
\thesislit{5}{IntervalObject}{iterative-uniqueness\urlcdot}
Given a magma \((A,\oplus)\) any two functions that satisfy both iteration sub-properties are pointwise-equal.
\end{lemma}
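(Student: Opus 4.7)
The plan is to use the second iteration sub-property of one of the functions applied to a witness sequence built from the other function. Concretely, let $M, M' : \seq{A} \to A$ both satisfy the two iteration sub-properties, and fix an arbitrary $\alpha : \seq{A}$; I want to show $M(\alpha) = M'(\alpha)$.

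First I would construct the candidate sequence $\beta : \seq{A}$ defined by $\beta_i := M(\lambda n.\alpha_{n+i})$, i.e.\ the sequence whose $i$\textsuperscript{th} entry is the iterated midpoint of the $i$\textsuperscript{th} tail of $\alpha$. The idea is that $\beta$ should play the role of the ``partial iterates'' of $\alpha$, and its head $\beta_0$ is definitionally $M(\alpha)$.

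Next I would verify that $\beta$ satisfies the hypothesis of $\mathsf{iterative}_2$ with respect to $\alpha$, namely that $\beta_i = \alpha_i \oplus \beta_{i+1}$ for every $i : \N$. This is immediate from $\mathsf{iterative}_1(A,\oplus,M)$ (\cref{def:iterative1}) applied to the tail sequence $\lambda n.\alpha_{n+i}$: indeed,
\[ \beta_i := M(\lambda n.\alpha_{n+i}) = \alpha_i \oplus M(\lambda n.\alpha_{n+i+1}) =: \alpha_i \oplus \beta_{i+1}. \]
So far this uses only the iteration sub-properties of $M$.

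Now I would invoke $\mathsf{iterative}_2(A,\oplus,M')$ (\cref{def:iterative2}) on the pair $(\alpha, \beta)$: since $\beta_i = \alpha_i \oplus \beta_{i+1}$ holds for all $i$, the second sub-property for $M'$ forces $\beta_0 = M'(\alpha)$. Combining with $\beta_0 = M(\alpha)$ gives $M(\alpha) = M'(\alpha)$, as desired. No step here looks like a serious obstacle; the main conceptual move is recognising that the second sub-property is a \emph{uniqueness} principle picking out the iterated midpoint, so any function satisfying the first sub-property automatically agrees with any function satisfying the second. The only mild care needed is keeping track of the tail-indexing $\lambda n.\alpha_{n+i}$, which one can equivalently phrase using iterated application of $\mathsf{tail}$.
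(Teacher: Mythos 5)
Your proposal is correct and is essentially identical to the paper's own proof: both define \(\beta_i := M(\lambda n.\alpha_{n+i})\), derive \(\beta_i = \alpha_i \oplus \beta_{i+1}\) from the first sub-property of \(M\), and then apply the second sub-property of \(M'\) to conclude \(M(\alpha) = \beta_0 = M'(\alpha)\). No differences worth noting.
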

\begin{proof}
Given two functions \(\ty{M_1,M_2}{\seq\I \to \I}\) that satisfy the two iteration sub-properties, we want to show that for any \(\ty{\alpha}{\seq\I}\), it is the case that \(M_1(\alpha) = M_2(\alpha)\).

\vspace{1em}
To do this, we define a sequence \(\ty{\beta}{\seq\I}\) which gives the behaviour of \(M_1(\alpha)\) as a sequence: i.e.\ \(\beta_i := M_1(\lambda n.\alpha_{n+i})\).
By the first iteration sub-property on \(M_1\), for any \(\ty{i}{\N}\) we have that \(\beta_i = \alpha_i \oplus \beta_{i+1}\).
Therefore by the second iteration sub-property on \(M_2\), we have that \(\beta_0 = M_2(\alpha)\); i.e.\ \(M_1(\alpha) = M_2(\alpha)\).
\end{proof}

\begin{corollary}
\thesislit{5}{IntervalObject}{iterative-uniqueness}
Given a magma \((A,\oplus)\) the type \(\mathsf{iterative}(M,\oplus)\) is a proposition.
\end{corollary}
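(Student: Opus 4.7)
The plan is direct: given any two inhabitants $(M_1, p_1)$ and $(M_2, p_2)$ of $\mathsf{iterative}(A,\oplus)$, I want to show they are equal. Since this type is a $\Sigma$-type, it suffices to exhibit an identification of the first projections together with an identification of the (appropriately transported) second projections; by the standard treatment of $\Sigma$-types of propositions, the latter step becomes free once the predicate $\lambda M.\,\mathsf{iterative}_1(A,\oplus,M) \times \mathsf{iterative}_2(A,\oplus,M)$ is shown to be proposition-valued.

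First I would establish that, for any fixed $\ty{M}{\seq A \to A}$, both $\mathsf{iterative}_1(A,\oplus,M)$ and $\mathsf{iterative}_2(A,\oplus,M)$ are propositions. Inspecting Definitions \ref{def:iterative1} and \ref{def:iterative2}, each is a $\Pi$-type whose codomain is (an iterated) identity type in the carrier $A$ of the magma. Since $A$ is a set by Definition \ref{def:magma}, those identity types are propositions, so by repeated application of \cref{lem:pi-prop} each iteration sub-property is a proposition, and by \cref{lem:sigma-prop} (in its binary-product specialisation) so is their conjunction.

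Next I would identify the iterator components: \cref{lem:iterative-unique} yields $M_1 \sim M_2$, which function extensionality promotes to $M_1 = M_2$. Applying \cref{lem:sigma-prop} once more — now to the outer $\Sigma$ with proposition-valued predicate established in the previous paragraph — we conclude $(M_1, p_1) = (M_2, p_2)$, and hence that $\mathsf{iterative}(A,\oplus)$ is a proposition. The only non-routine ingredient is the appeal to function extensionality, so the proof should be marked \axioms{f}; there is no real obstacle, since the heavy lifting has already been done in \cref{lem:iterative-unique}, and the rest is a mechanical combination of the propositional calculus for $\Pi$- and $\Sigma$-types developed in \cref{sec:props}.
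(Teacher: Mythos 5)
Your proposal is correct and follows essentially the same route as the paper's proof: both sub-properties are propositions because \(A\) is a set, \cref{lem:iterative-unique} together with function extensionality identifies the two iterators, and the \(\Sigma\)-type machinery of \cref{lem:sigma-prop} finishes the argument. Your appeal to function extensionality (and the \axioms{f} marking) matches the paper exactly, so there is nothing to add.
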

\begin{proof} \axioms{f}
By \cref{lem:iterative-unique} and function extensionality, any two functions that satisfy both iteration sub-properties are equal. Hence, because both sub-properties are subsingletons, the result follows by \cref{lem:sigma-prop}.
\end{proof}

From these sub-properties, \Escardo~and Simpson prove some expected properties about our iterative midpoint operator.

\begin{lemma}
\label{lem:M-idem}
\thesislit{5}{IntervalObject}{basic-interval-object-development.M-idem}
Given an idempotent magma \((A,\oplus)\) and \(\ty{M}{\seq{A} \to A}\) which satisfies the second iteration sub-property, \(M\) is itself idempotent,
\[ \pity{a}{A}{M (\lambda (\ty{-}{\N}).a) = a} .\]
\end{lemma}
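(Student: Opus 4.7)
The plan is to apply the second iteration sub-property (\cref{def:iterative2}) with a carefully chosen witness sequence $\beta$. Specifically, I would instantiate both $\alpha$ and $\beta$ as the constant sequence $\lambda (\ty{-}{\N}).a$, and use the idempotency of $\oplus$ to verify the hypothesis of $\mathsf{iterative}_2$.

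Concretely, first I would set $\alpha := \lambda (\ty{-}{\N}).a$, which is the sequence whose infinite iterated midpoint we wish to compute. Then I would take the ``witness'' sequence $\beta := \lambda (\ty{-}{\N}).a$ as well, and verify the premise of \cref{def:iterative2}: for every $\ty{i}{\N}$, we need $\beta_i = \alpha_i \oplus \beta_{i+1}$, which unfolds to $a = a \oplus a$. This equation holds immediately by the idempotency of the magma $(A,\oplus)$ (\cref{def:mpa}.(i)).

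Applying the second iteration sub-property to this data yields $\beta_0 = M(\alpha)$, i.e.\ $a = M(\lambda (\ty{-}{\N}).a)$. The desired conclusion $M(\lambda (\ty{-}{\N}).a) = a$ then follows by symmetry of the identity type. I do not anticipate any real obstacle here --- the entire argument is a direct instantiation of $\mathsf{iterative}_2$, and the only non-trivial step is recognising that the constant sequence at $a$ serves as its own witness precisely because $a \oplus a = a$.
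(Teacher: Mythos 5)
Your proof is correct and is exactly the paper's argument: instantiate $\mathsf{iterative}_2$ with $\alpha := \beta := \lambda (\ty{-}{\N}).a$, discharge the premise $a = a \oplus a$ by idempotency of $\oplus$, and read off $a = M(\lambda (\ty{-}{\N}).a)$. Nothing to add.
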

\begin{proof}
By the second iteration sub-property (\cref{def:iterative2}), if we set \(\alpha,\beta := \lambda -.a\) then once we to prove the antecedent --- that for all \(\ty{i}{\N}\) we have \((\lambda -.a)_i = (\lambda -.a)_i \oplus (\lambda -.a)_i\) --- the result will follow. The antecedent just asks us to show that \(a = a \oplus a\) holds, which is by the idempotency of \(\oplus\) (\cref{def:mpa}).
\end{proof}

\begin{lemma}
\label{lem:M-hom-prop}
\thesislit{5}{IntervalObject}{basic-interval-object-development.M-hom}
Given a transpositional magma \((A,\oplus)\) that is iterative by \(\ty{M}{\seq{A} \to A}\), it is the case that \(M\) satisfies the following homomorphic property,
\[ \pity{\theta,\zeta}{\seq{A}}{M(\theta) \oplus M(\zeta) = M(\lambda(\ty{i}{\N}).\theta_i \oplus \zeta_i))} .\]
\end{lemma}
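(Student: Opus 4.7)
The plan is to apply the second iteration sub-property (\cref{def:iterative2}) with a carefully chosen auxiliary sequence. Writing $\gamma := \lambda i.\theta_i \oplus \zeta_i$, the goal becomes $M(\theta) \oplus M(\zeta) = M(\gamma)$; hence, by \cref{def:iterative2} applied to $\alpha := \gamma$, it suffices to exhibit some $\ty{\beta}{\seq{A}}$ with $\beta_0 = M(\theta) \oplus M(\zeta)$ satisfying $\beta_i = \gamma_i \oplus \beta_{i+1}$ for every $\ty{i}{\N}$.

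The natural candidate is the pointwise midpoint of the tails: set $\beta_i := M(\lambda n.\theta_{n+i}) \oplus M(\lambda n.\zeta_{n+i})$, which gives $\beta_0 = M(\theta) \oplus M(\zeta)$ on the nose. First I would apply the first iteration sub-property (\cref{def:iterative1}) to each factor of $\beta_i$, yielding
\[ \beta_i = \bigl(\theta_i \oplus M(\lambda n.\theta_{n+i+1})\bigr) \oplus \bigl(\zeta_i \oplus M(\lambda n.\zeta_{n+i+1})\bigr). \]
Then I would use the transposition axiom of \cref{def:mpa}(iii) to rearrange the four arguments, obtaining
\[ \beta_i = (\theta_i \oplus \zeta_i) \oplus \bigl(M(\lambda n.\theta_{n+i+1}) \oplus M(\lambda n.\zeta_{n+i+1})\bigr) = \gamma_i \oplus \beta_{i+1}, \]
which is exactly the hypothesis needed to invoke \cref{def:iterative2}.

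The main obstacle is purely bookkeeping: making sure the tail-indexed sequences $\lambda n.\theta_{n+i}$ behave as expected when fed into the first iteration sub-property. This is a routine but fiddly calculation about re-indexing sequences, and in a formalisation it may require either an explicit lemma of the form $M(\lambda n.\alpha_{n+i}) = \alpha_i \oplus M(\lambda n.\alpha_{n+i+1})$ (an immediate instance of \cref{def:iterative1} applied to the shifted sequence) or a careful unfolding of the $\mathsf{tail}$ operator. Once that re-indexing is handled, the single application of transposition and the final appeal to \cref{def:iterative2} close the proof without needing cancellativity, idempotence, or commutativity.
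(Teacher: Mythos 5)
Your proof is correct and matches the paper's argument essentially verbatim: the paper also applies the second iteration sub-property with \(\alpha := \lambda i.\theta_i \oplus \zeta_i\) and \(\beta_i := M(\lambda n.\theta_{n+i}) \oplus M(\lambda n.\zeta_{n+i})\), verifying the antecedent via the first iteration sub-property and transpositionality. No gaps.
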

\begin{proof}[Proof (Sketch).]
By the second iteration sub-property (\cref{def:iterative2}), if we set \(\alpha := \lambda i.\theta_i \oplus \zeta_i\) and \(\beta := \lambda i.M(\lambda n.\theta_{n+i}) \oplus M(\lambda n.\zeta_{n+i})\) then once we prove the antecedent --- that for all \(\ty{i}{\N}\) we have \(M(\lambda n.\theta_{n+i}) \oplus M(\lambda n.\zeta_{n+i}) = (\theta_i \oplus \zeta_i) \oplus (M(\lambda n.\theta_{n+i+1}) \oplus M(\lambda n.\zeta_{n+i+1}))\) --- the result will follow.
The antecedent follows by the first iteration sub-property (\cref{def:iterative1}) and the transpositionality of \(\oplus\) (\cref{def:mpa}).
\end{proof}

\begin{lemma}
\thesislit{5}{IntervalObject}{basic-interval-object-development.M-symm}
Given a transpositional magma \((A,\oplus)\) that is iterative by \(\ty{M}{\seq{A} \to A}\), it is the case that \(M\) is symmetric,
\[ \pity{\theta}{\seq{(\seq{A})}}{M(\lambda i. M(\lambda j. \theta_{i,j})) = M(\lambda i. M(\lambda j. \theta_{j,i})) } .\]
\end{lemma}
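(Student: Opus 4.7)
The plan is to reduce symmetry of $M$ to an application of the second iteration sub-property (\cref{def:iterative2}), combined with the homomorphic property just proved in \cref{lem:M-hom-prop} and the first iteration sub-property (\cref{def:iterative1}). The key step will be constructing an auxiliary sequence whose $0$\textsuperscript{th}{} term is the left-hand side of the desired equation but which satisfies the defining recurrence corresponding to the right-hand side.

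Concretely, I would first define the auxiliary sequence
\[ \eta_n := M(\lambda i. M(\lambda j. \theta_{i, j+n})) \]
of ``row-then-column'' iterations where the inner column index is shifted by $n$. Observe immediately that $\eta_0 = M(\lambda i. M(\lambda j. \theta_{i,j}))$, the left-hand side of the symmetry equation. I would also set $\alpha_n := M(\lambda i. \theta_{i,n})$, the sequence of column-wise midpoints, so that $M(\alpha) = M(\lambda i. M(\lambda j. \theta_{j,i}))$ after renaming indices: this is the right-hand side.

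Next I would derive the recurrence $\eta_n = \alpha_n \oplus \eta_{n+1}$ in two moves. First, applying the first iteration sub-property to the inner iteration gives $M(\lambda j. \theta_{i,j+n}) = \theta_{i,n} \oplus M(\lambda j. \theta_{i, j+n+1})$, so
\[ \eta_n = M\bigl(\lambda i.\ \theta_{i,n} \oplus M(\lambda j.\ \theta_{i,j+n+1})\bigr). \]
Then \cref{lem:M-hom-prop} lets me split the midpoint of pointwise midpoints into the midpoint of the two components, yielding exactly $\eta_n = \alpha_n \oplus \eta_{n+1}$ as required.

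With the recurrence in hand, the second iteration sub-property (\cref{def:iterative2}), instantiated with $\alpha$ as above and $\beta := \eta$, immediately gives $\eta_0 = M(\alpha)$; unfolding both sides of this equality and renaming the bound variables delivers the desired symmetry. The only mild obstacle I anticipate is the bookkeeping of the shifted sequences $\lambda j.\theta_{i,j+n+1}$ and the invocation of \cref{lem:M-hom-prop} with the correct pair of sequences; this is purely syntactic and should pose no real mathematical difficulty, since every nontrivial axiom of the iterative midpoint algebra has already been discharged in the previous two lemmas.
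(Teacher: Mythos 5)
Your proposal is correct and follows essentially the same route as the paper: the paper's proof also instantiates the second iteration sub-property with \(\alpha_n := M(\lambda j.\theta_{j,n})\) and \(\beta_n := M(\lambda i.M(\lambda j.\theta_{i,j+n}))\) (your \(\eta\)), and establishes the recurrence \(\beta_n = \alpha_n \oplus \beta_{n+1}\) via the first iteration sub-property together with \cref{lem:M-hom-prop}. No gaps.
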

\begin{proof}[Proof (Sketch).] \axioms{f}
By the second iteration sub-property (\cref{def:iterative2}), if we set \(\alpha := \lambda n.M(\lambda j.\theta_{j,n})\) and \(\beta := \lambda n.M(\lambda i.M(\lambda j.\theta_{i,j+n}))\) then once we prove the antecedent --- that for all \(\ty{n}{\N}\) we have \(M(\lambda i.M(\lambda j.\theta_{i,j+n})) = M (\lambda j.\theta_{j,n}) \oplus M(\lambda i.M(\lambda j.\theta_{i,j+n+1}))\) --- the result will follow.
The antecedent follows by the first iteration sub-property (\cref{def:iterative1}) and the above homomorphic property (\cref{lem:M-hom-prop}).
\end{proof}

\noindent
For the specific details of the latter two proofs, we invite the interested reader to view the formalisation.

The iteration property also gives us the notion of an iterated midpoint homomorphism (or `\(M\)-homomorphism').
Any midpoint homomorphism is automatically an \(M\)-homomorphism.

\begin{definition}
\thesislit{5}{IntervalObject}{basic-interval-object-development.\urlmidpoint-homs-are-M-homs}
Given midpoint algebras \((A,\oplus_A)\) and \((B,\oplus_B)\) that are iterative by functions \(\ty{M_A}{\seq{A} \to A}\) and \(\ty{M_B}{\seq{B} \to B}\), a function \(\ty{h}{A \to B}\) is an \emph{iterated midpoint homomorphism} if,
\[ \mathsf{is{\hy}M{\hy}hom}((A,\oplus_A),(B,\oplus_B),M_A,M_B,h) := \pity{\alpha}{\seq A}{h(M_A(\alpha)) = M_B(\lambda n.h(\alpha_n)} .\]
If the midpoint algebras are the same, we write \(\mathsf{is{\hy}M{\hy}hom}((A,\oplus_A),M,f)\) as shorthand.
\end{definition}

\begin{lemma}
\label{lem:mp-hom-M-hom}
\thesislit{5}{IntervalObject}{basic-interval-object-development.\urlmidpoint-homs-are-M-homs}
Given midpoint algebras \(\ty{(A,\oplus),(B,\oplus)}{\mathsf{Midpoint{\hy}algebra}}\) that are iterative by functions \(\ty{M_A}{\seq{A} \to A}\) and \(\ty{M_B}{\seq{B} \to B}\), if a function \(\ty{h}{A \to B}\) is a midpoint homomorphism then it is also an iterated midpoint homomorphism.
\end{lemma}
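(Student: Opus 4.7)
The plan is to use the second iteration sub-property (\cref{def:iterative2}) on $M_B$ with a carefully chosen auxiliary sequence $\beta$, so that the conclusion $\beta_0 = M_B(\lambda n.h(\alpha_n))$ unfolds to exactly what we want.

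Concretely, given $\ty{\alpha}{\seq A}$, I would apply $\mathsf{iterative}_2$ instantiated with the sequence $\alpha' := \lambda n.h(\alpha_n)$ in $\seq B$, and with the auxiliary sequence
\[ \beta := \lambda n.\, h\bigl(M_A(\lambda j.\alpha_{j+n})\bigr). \]
Then $\beta_0 = h(M_A(\alpha))$ definitionally (using $\lambda j.\alpha_{j+0} = \alpha$), so the conclusion $\beta_0 = M_B(\alpha')$ of $\mathsf{iterative}_2$ is exactly the statement $h(M_A(\alpha)) = M_B(\lambda n.h(\alpha_n))$ that we need.

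It remains to verify the hypothesis of $\mathsf{iterative}_2$, namely that $\beta_i = \alpha'_i \oplus_B \beta_{i+1}$ for every $\ty{i}{\N}$. For a fixed $i$, the first iteration sub-property (\cref{def:iterative1}) applied to $M_A$ on the shifted sequence $\lambda j.\alpha_{j+i}$ gives
\[ M_A(\lambda j.\alpha_{j+i}) = \alpha_i \oplus_A M_A(\lambda j.\alpha_{j+i+1}). \]
Applying $h$ to both sides and then using the assumption that $h$ is a midpoint homomorphism yields
\[ h\bigl(M_A(\lambda j.\alpha_{j+i})\bigr) = h(\alpha_i) \oplus_B h\bigl(M_A(\lambda j.\alpha_{j+i+1})\bigr), \]
which is precisely $\beta_i = \alpha'_i \oplus_B \beta_{i+1}$.

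There is no real obstacle here; the only subtlety is spotting the right choice of $\beta$, namely the pointwise shifts of $\alpha$ pushed through $M_A$ and then $h$. Once that is in place, the proof is a direct combination of $\mathsf{iterative}_1$ on $M_A$, the homomorphism equation for $h$, and $\mathsf{iterative}_2$ on $M_B$, without requiring any extensionality or uniqueness arguments (although morally this is just a reflex of \cref{lem:iterative-unique} applied after transporting the structure across $h$).
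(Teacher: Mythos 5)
Your proposal is correct and is essentially identical to the paper's proof: the same auxiliary sequence \(\beta_i := h(M_A(\lambda n.\alpha_{n+i}))\), the same verification of \(\beta_i = h(\alpha_i) \oplus_B \beta_{i+1}\) via the first iteration sub-property on \(M_A\) followed by the homomorphism equation for \(h\), and the same conclusion via the second iteration sub-property on \(M_B\).
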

\begin{proof}
In order to show that \(h(M_A(\alpha)) = M_B(\lambda n.h(\alpha_n))\) for any \(\ty{\alpha}{\seq A}\) we first define a sequence \(\ty{\beta}{\seq B}\) which gives the behaviour of \(\ty{h(M_A(\alpha))}{B}\) as a sequence: i.e.\ \(\beta_i := h(M_A(\lambda n.\alpha_{n+1}))\).
We next show for all \(\ty{i}{\N}\) we have \(\beta_i = h(\alpha_i) \oplus \beta_{i + 1}\). Once, we have done this, our result will follow by the second iteration sub-property (\cref{def:iterative2}).
\begin{align*}
 && \beta_i& &\\
:= && h(M_A(\lambda n.\alpha_{n+i}))& &\\
= && h(\alpha_i \oplus M_A (\lambda n.\alpha_{n+i+1}))& &\text{by the first iteration sub-property ((\cref{def:iterative1})},\\
= && h(\alpha_i) \oplus h(M_A (\lambda n.\alpha_{n+i+1}))& &\text{by the fact \(h\) is an \(\oplus\)-homomorphism},\\
:= && h(\alpha_i) \oplus \beta_{i+1}& &\text{by (i)}.
\end{align*}
\end{proof}

\subsection{Finite approximations}

We formalise one final structure for iterative midpoint algebras: \emph{finite approximations}, the existence of which is equivalent to having the cancellation property (\cref{def:cancellation})~\cite{EscardoSimpson}.

\begin{definition}\mbox{}
\label{def:finiteapprox}
\thesislit{5}{IntervalObjectApproximation}{n-approx}
Given a midpoint algebra \((A,\oplus)\), two sequences \(\ty{\alpha,\beta}{\seq A}\) are \emph{\(n\)-approximately equal}, for a given \(\ty{n}{\N}\), if,
\[ \sigmaty{w,z}{A}{\alpha_0 \oplus (\alpha_1 \oplus ... (\alpha_{n-1} \oplus w)) = \beta_0 \oplus (\beta_1 \oplus ... (\beta_{n-1} \oplus z))} .\]
\end{definition}

\begin{definition}
\thesislit{5}{IntervalObjectApproximation}{approximation}
A midpoint algebra \((A,\oplus)\) that is iterative by \(\ty{M}{\seq{A} \to A}\) has \emph{finite approximations} if given two sequences \(\ty{\alpha,\beta}{\seq A}\) that are \(n\)-approximately equal for all \(\ty{n}{\N}\), then \(M(\alpha) = M(\beta)\).
\end{definition}

We formalise the more straightforward direction first: that having finite approximations implies the cancellation property.

\begin{lemma}
\thesislit{5}{IntervalObjectApproximation}{cancellation-holds}
A midpoint algebra \(\ty{(A,\oplus)}{\mathsf{Midpoint{\hy}algebra}}\) that is iterative by \(\ty{M}{\seq{A} \to A}\) and has finite approximations is cancellative.
\end{lemma}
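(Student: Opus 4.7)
The plan is to unpack cancellativity: assume $a \oplus c = b \oplus c$ and aim to conclude $a = b$. The key idea is to apply the finite approximations axiom to the constant sequences $\alpha := \lambda n.a$ and $\beta := \lambda n.b$. By \cref{lem:M-idem}, we have $M(\alpha) = a$ and $M(\beta) = b$, so once we exhibit $n$-approximate equality of $\alpha$ and $\beta$ for every $n \in \mathbb{N}$, the finite approximations property will give $M(\alpha) = M(\beta)$, and hence $a = b$.

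First I would unfold the goal to the construction of, for every $\ty{n}{\N}$, a pair of witnesses $\ty{w_n,z_n}{A}$ satisfying
\[
\underbrace{a \oplus (a \oplus \cdots (a \oplus w_n))}_{n \text{ copies of } a}
\;=\;
\underbrace{b \oplus (b \oplus \cdots (b \oplus z_n))}_{n \text{ copies of } b}.
\]
The base cases are easy: for $n=0$ pick any common value $w_0 = z_0$, and for $n=1$ take $w_1 = z_1 := c$ so that the required equation is precisely the hypothesis $a \oplus c = b \oplus c$.

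For the inductive step I would proceed by manipulating the deeply nested midpoints using the midpoint-algebra axioms from \cref{def:mpa}. A useful auxiliary observation is that, from the hypothesis, transpositionality and commutativity yield the identity $a \oplus (b \oplus c) = b \oplus (a \oplus c)$, obtained by rewriting $a = a \oplus a$ (by idempotency) in the left-hand side, applying transposition to get $(a \oplus b) \oplus (a \oplus c)$, substituting $a \oplus c = b \oplus c$, and then reversing the same moves on the $b$-side. Iterating similar `swap' manoeuvres at each depth provides a template for constructing $w_n, z_n$ as appropriate nested midpoints built from $a$, $b$ and $c$ whose outer layers line up.

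The main obstacle I anticipate is precisely this inductive construction: the hypothesis only identifies terms at depth one, so propagating it to arbitrary nesting depth requires a careful combinatorial argument shuffling $a$'s and $b$'s through nested midpoints using idempotency, commutativity and transpositionality. Once the witnesses are in place for every $n$, the conclusion follows in one line from the finite approximations axiom together with \cref{lem:M-idem}, giving $a = M(\alpha) = M(\beta) = b$ as required.
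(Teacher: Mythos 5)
Your overall strategy is exactly the paper's: apply the finite-approximation property to the constant sequences $\lambda n.a$ and $\lambda n.b$, use \cref{lem:M-idem} to convert $M(\lambda n.a) = M(\lambda n.b)$ into $a = b$, and take $c$ as the witness at depth one so that the $n=1$ case is literally the hypothesis $a \oplus c = b \oplus c$. Where the proposal stops short is the part you yourself flag as the main obstacle: the inductive step is only gestured at (``iterating similar swap manoeuvres \ldots provides a template''), and you leave open the possibility that the witnesses $w_n, z_n$ must be elaborate nested midpoints built from $a$, $b$ and $c$. In the paper's proof no such construction is needed: the witnesses are simply $c$ at \emph{every} depth, and all of the content goes into the equational reasoning.

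The way the induction actually closes is the one non-obvious move, so it is worth spelling out. Write $wa$ and $wb$ for the depth-$n'$ expressions $a \oplus (a \oplus \cdots (a \oplus c))$ and $b \oplus (b \oplus \cdots (b \oplus c))$. The induction is arranged so that the hypothesis supplies \emph{two} consecutive equalities, $wa = wb$ and $a \oplus wa = b \oplus wb$ --- this is why there are two base cases, $n = 0$ and $n = 1$. The step to the next depth is then the chain
\[
a \oplus (a \oplus wa) = (a \oplus a) \oplus (a \oplus wa) = (a \oplus a) \oplus (b \oplus wb) = (a \oplus b) \oplus (a \oplus wb) = \cdots = (b \oplus b) \oplus (b \oplus wb) = b \oplus (b \oplus wb),
\]
using idempotency to duplicate the head, transpositionality to move one copy of $a$ (respectively $b$) into the inner position, and the two inductive equalities to trade $a \oplus wa$ for $b \oplus wb$ and $wa$ for $wb$. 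Your auxiliary identity $a \oplus (b \oplus c) = b \oplus (a \oplus c)$ is precisely the depth-one instance of this manoeuvre, so your instinct is right; but as written the proposal contains no proof of the inductive step, and that step --- together with the realisation that the induction must carry two levels at once and that the witnesses never need to change from $c$ --- is where the entire difficulty of the lemma lives.
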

\begin{proof}
Given \(\ty{a,b,c}{A}\) such that \((a \oplus c) = (b \oplus c)\), we wish to show that \(a = b\). This follows from the idempotency of \(M\) (\cref{lem:M-idem}) once we show that \(M(\lambda -.a) = M (\lambda -.b)\), which we will show using the finite approximation property \(\cref{def:finiteapprox}\). Therefore, we simply need to show that for any \(\ty{n}{\N}\), we have \((\lambda -.a)_0 \oplus ((\lambda -.a)_1 \oplus ... ((\lambda -.a)_{n-1} \oplus c)) = (\lambda -.b)_0 \oplus ((\lambda -.b)_1 \oplus ... ((\lambda -.b)_{n-1} \oplus c))\). We proceed by induction on \(n\). The first base case where \(n := 0\) is trivial, as we only need to show that \(c = c\); the second base case where \(n := 1\) requires us to show that \((a \oplus c) = (b \oplus c)\), which we have already assumed.

\vspace{1em}
The inductive case where \(n := n' + 1\) for some \(\ty{n'}{\N}\) requiers us to show that \(a \oplus (a \oplus wa) = b \oplus (b \oplus wb)\) where \(wa := (\lambda -.a)_0 \oplus ((\lambda -.a)_1 \oplus ... ((\lambda -.a)_{n'-1} \oplus c))\) and \(wb := (\lambda -.b)_0 \oplus ((\lambda -.b)_1 \oplus ... ((\lambda -.b)_{n'-1} \oplus c))\). By the inductive hypothesis, both \(a \oplus wa = b \oplus wb\) and \(wa = wb\); the result then follows by the below equational reasoning:
\begin{align*}
\ &a \oplus  (a \oplus wa)& &\\
&= (a \oplus a) \oplus (a \oplus wa),& &\text{by idempotency of \(\oplus\)},\\
&= (a \oplus a) \oplus (a \oplus wa),& &\text{by idempotency of \(\oplus\)},\\
&= (a \oplus a) \oplus (b \oplus wb),& &\text{by the inductive hypothesis},\\
&= (a \oplus b) \oplus (a \oplus wb),& &\text{by transpositionality of \(\oplus\)},\\
&= (a \oplus b) \oplus (a \oplus wa),& &\text{by the inductive hypothesis},\\
&= (a \oplus b) \oplus (b \oplus wb),& &\text{by the inductive hypothesis},\\
&= (b \oplus a) \oplus (b \oplus wb),& &\text{by commutativity of \(\oplus\)},\\
&= (b \oplus b) \oplus (a \oplus wb),& &\text{by transpositionality of \(\oplus\)},\\
&= (b \oplus b) \oplus (a \oplus wa),& &\text{by the inductive hypothesis},\\
&= (b \oplus b) \oplus (b \oplus wb),& &\text{by the inductive hypothesis},\\
&= b \oplus (b \oplus wb),& &\text{by idempotency of \(\oplus\)}.
\end{align*}
\end{proof}

The converse to the above is rather more complicated; we give the idea below.

\begin{theorem}
\label{thm:finite-approx}
\thesislit{5}{IntervalObjectApproximation}{approx-holds}
A cancellative midpoint algebra \(\ty{(A,\oplus)}{\mathsf{Midpoint{\hy}algebra}}\) that is iterative by \(\ty{M}{\seq{A} \to A}\) has finite approximations.
\end{theorem}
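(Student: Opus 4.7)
The plan is to derive $M(\alpha) = M(\beta)$ by appealing to the uniqueness guaranteed by the second iteration sub-property. Concretely, I will aim to construct a sequence $\ty{\delta}{\seq A}$ with $\delta_0 = M(\alpha)$ and $\delta_n = \beta_n \oplus \delta_{n+1}$ for all $\ty{n}{\N}$; then \cref{def:iterative2} applied to $\beta$ and $\delta$ forces $\delta_0 = M(\beta)$, so that $M(\alpha) = M(\beta)$ follows.

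The first preparatory step is to introduce, for any sequence $\ty{\gamma}{\seq A}$ and any $\ty{n}{\N}$, the partial-midpoint operator $P_n^\gamma(x) := \gamma_0 \oplus (\gamma_1 \oplus \cdots (\gamma_{n-1} \oplus x))$, and record three easy facts about it, provable by induction on $n$ from the midpoint axioms (\cref{def:mpa}) and cancellativity (\cref{def:cancellation}): (a) $P_{n+1}^\gamma(x) = P_n^\gamma(\gamma_n \oplus x)$; (b) $P_n^\gamma(a \oplus b) = P_n^\gamma(a) \oplus P_n^\gamma(b)$, via transposition and idempotency; and (c) $P_n^\gamma$ is injective, i.e.\ $P_n^\gamma(a) = P_n^\gamma(b)$ implies $a = b$, by applying cancellativity $n$ times. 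Moreover, iterating \cref{def:iterative1} gives $M(\gamma) = P_n^\gamma(M(\lambda i.\gamma_{n+i}))$, which I will use for $\gamma = \alpha$ in the shape $M(\alpha) = P_n^\alpha(u_n)$, with $u_n := M(\lambda i.\alpha_{n+i})$.

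Once the sequence $\delta$ is in hand, the recurrence is cheap: from $P_n^\beta(\delta_n) = M(\alpha) = P_{n+1}^\beta(\delta_{n+1}) = P_n^\beta(\beta_n \oplus \delta_{n+1})$ (using (a)), injectivity (c) yields $\delta_n = \beta_n \oplus \delta_{n+1}$, and $\delta_0 = M(\alpha)$ is immediate from $P_0^\beta(\delta_0) = \delta_0$. So the entire proof reduces to showing, for every $\ty{n}{\N}$, the existence of an element $\delta_n \in A$ with $P_n^\beta(\delta_n) = M(\alpha)$; injectivity then makes this $\delta_n$ unique and the recurrence automatic.

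The main obstacle is precisely this existence step: nothing in the midpoint-algebra axioms says that a prefix $\beta_0, \ldots, \beta_{n-1}$ admits ``division''. This is where the hypothesis is indispensable. For each $n$, the $n$-approximate-equality hypothesis supplies witnesses $w_n, z_n \in A$ with $P_n^\alpha(w_n) = P_n^\beta(z_n)$; combined with $M(\alpha) = P_n^\alpha(u_n)$, I plan to construct $\delta_n$ by a midpoint expression in $u_n$, $w_n$ and $z_n$ (using (b) to push partial sums through midpoints) and verify $P_n^\beta(\delta_n) = M(\alpha)$ using cancellativity together with the hypothesis. I expect that weaving the witness data for all $n$ together coherently --- so that the $\delta_n$ constitute a single sequence of $A$ rather than only pointwise existence statements --- is the delicate part; however, the uniqueness of $\delta_n$ granted by injectivity of $P_n^\beta$ should make this coherence automatic once existence is secured at each level. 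With the sequence $\delta$ thereby in place, the conclusion $M(\alpha) = M(\beta)$ drops out of \cref{def:iterative2}.
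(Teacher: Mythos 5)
Your preparatory machinery is sound, and in fact your ``reduce to a sequence $\delta$ with $P_n^\beta(\delta_n) = M(\alpha)$, derive the recurrence $\delta_n = \beta_n \oplus \delta_{n+1}$ by cancellation, conclude by \cref{def:iterative2}'' is precisely the paper's intermediate \emph{one-sided approximation} lemma (stated there as: if $a = \theta_0 \oplus (\cdots(\theta_{n-1}\oplus\zeta_n))$ for all $n$, then $a = M(\theta)$, proved via a finite-cancellation property exactly as you propose). The genuine gap is the existence step you defer to the end: producing $\delta_n$ with $P_n^\beta(\delta_n) = M(\alpha)$. No ``midpoint expression in $u_n$, $w_n$ and $z_n$'' can do this. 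Working in the model $A = [0,1]$ with $x \oplus y = \tfrac{x+y}{2}$, the required element is $\delta_n = u_n + z_n - w_n$, whose coefficients $(1,1,-1)$ do not sum to a dyadic convex combination, and which can fall outside $A$: take $\alpha = (0,0,\ldots)$, $\beta = (1,0,0,\ldots)$, $n=1$; then $w_1 = 1$, $z_1 = 0$ witness $1$-approximate equality, $M(\alpha)=0$, but $\beta_0 \oplus \delta_1 = 0$ forces $\delta_1 = -1 \notin A$. So $\delta_n$ need not exist from the level-$n$ data; and using all levels does not help either, because $M(\alpha)$ lies in the image of $P_n^\beta$ exactly when $M(\alpha) = M(\beta)$ (in which case $\delta_n := M(\lambda i.\beta_{n+i})$ works) --- i.e.\ your remaining obligation is essentially equivalent to the theorem itself, and the ``coherence is automatic once existence is secured'' remark hides the entire difficulty.

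The paper escapes this by never inverting $P_n^\beta$ at $M(\alpha)$. Writing $\theta,\zeta$ for the witness sequences, it instead applies the one-sided approximation lemma to the \emph{thickened} sequences $\alpha' := \lambda i.\,\alpha_i \oplus \theta_{i+1}$ and $\beta' := \lambda i.\,\beta_i \oplus \theta_{i+1}$: the point is that $M(\alpha')$ \emph{does} admit an explicit $\beta'$-partial-sum decomposition, with remainder
\[
\gamma_n := M\bigl((\beta_n \oplus \zeta_{n+1}) :: (\lambda i.\,\alpha_{i+n+1} \oplus \theta_{i+n+2})\bigr),
\]
built directly from the hypothesis data using $M$ (this is the only way to combine information from infinitely many levels in the abstract algebra). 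One then gets $M(\alpha') = M(\beta')$, rewrites both sides as $M(\alpha)\oplus M(\mathsf{tail}\ \theta)$ and $M(\beta)\oplus M(\mathsf{tail}\ \theta)$ via the homomorphic property of $M$ (\cref{lem:M-hom-prop}), and cancels. To repair your proof you would need to replace your final step with this (or an equivalent) construction; as written, the plan fails at its load-bearing point.
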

\begin{proof}[Proof (Sketch).] \axioms{f}
We first prove the \emph{finite-cancellation property}: that for any \(\ty{w,z}{A}\) and \(\ty{\alpha}{\seq A}\), if for any \(\ty{n}{\N}\) we have \(\alpha_0 \oplus (\alpha_1 \oplus ... (\alpha_{n-1} \oplus w)) = \alpha_0 \oplus (\alpha_1 \oplus ... (\alpha_{n-1} \oplus z))\), then \(w = z\). The proof of this is straightforward by the fact \(\oplus\) is cancellative and induction on \(\ty{n}{\N}\).

\vspace{1em}
We next prove \emph{one-sided approximation}: that for any \(\ty{a}{A}\) and \(\ty{\theta}{\seq A}\), if we have a sequence \(\ty{\zeta}{\seq A}\) such that \(a = \theta_0 \oplus (\theta_1 \oplus ... (\theta_{n-1} \oplus \zeta_n))\) for all \(\ty{n}{\N}\), then \(a = M(\theta)\). The proof of this is by the second iteration sub-property (\cref{def:iterative2}), as if we set \(\alpha := \theta\), \(\beta_0 := a\) and \(\beta_{i+1} = \zeta_{i+1}\) then once we prove the antecedent --- that for all \(\ty{i}{\N}\) we have \(\beta_i = \alpha_i \oplus \beta_{i+1}\) --- the result will follow. The antecedent follows by induction on \(i\). In the base case, where \(i := 0\), we give \(a = \theta_0 \oplus \zeta_1\) by the above equation concerning \(\alpha\) and \(\theta\) when \(n := 1\). In the inductive case, where \(i := i' + 1\) for some \(\ty{i'}{\N}\), we give \(\zeta_i = \theta_i \oplus \zeta_{i+1}\) by the equation when \(n := i + 1\) and the finite-cancellation property.

\vspace{1em}
Finally, we prove that we have \emph{finite approximations}: that for all \(\ty{\alpha,\beta}{\seq A}\), if we have sequences \(\ty{\theta,\zeta}{\seq A}\) such that \(\alpha_0 \oplus (\alpha_1 \oplus ... (\alpha_{n-1} \oplus \theta_n)) = \beta_0 \oplus (\beta_1 \oplus ... (\beta_{n-1} \oplus \zeta_n))\) for all \(\ty{n}{\N}\), then \(M(\alpha) = M(\beta)\). This follows by cancellation once we prove that \(M(\alpha) \oplus M(\mathsf{tail} \ \theta) = M(\beta) \oplus M(\mathsf{tail} \ \theta)\), and in turn this follows by the homomorphic property (\cref{lem:M-hom-prop}) once we prove that \(M(\lambda i.\alpha_i \oplus \theta_{i+1}) = M(\lambda i.\beta_i \oplus \theta_{i+1})\). We prove this using one-sided approximation: by setting \(\beta' := \lambda i.\beta_i \oplus \theta_{i+1}\) the result is reduced to showing that there is a sequence \(\ty{\gamma}{\seq A}\) such that \(M(\lambda i.\alpha_i \oplus \theta_{i+1}) = \beta'_0 \oplus (\beta'_1 \oplus ... (\beta'_{n-1} \oplus \gamma_n))\) for all \(\ty{n}{\N}\). 

\vspace{1em}
This sequence is defined \(\gamma_n := M((\beta_n \oplus \zeta_{n+1}) :: (\lambda i.\alpha_{i+n+1} \oplus \theta_{i+n+2}))\). Given any \(\ty{n}{\N}\), we prove that \(M(\lambda i.\alpha_i \oplus \theta_{i+1}) = (\beta_0 \oplus \theta_1) \oplus ((\beta_1 \oplus \theta_2) \oplus ... ((\beta_{n-1} \oplus \theta_n) \oplus M((\beta_n \oplus \zeta_{n+1}) :: (\lambda i.\alpha_{i+n+1} \oplus \theta_{i+n+2}))))\) by various straightforward rearrangements\footnote{For specific details on these rearrangements, we invite the interested reader to view the formalisation.} of \(\oplus\) and \(M\), and by the assumption that \(\alpha_0 \oplus (\alpha_1 \oplus ... (\alpha_{n-1} \oplus \theta_n)) = \beta_0 \oplus (\beta_1 \oplus ... (\beta_{n-1} \oplus \zeta_n))\).
\end{proof}

\begin{corollary}
\label{cor:fg-approx}
\thesislit{5}{IntervalObjectApproximation}{fg-approx-holds}
Given a midpoint algebra \((A,\oplus)\), some type \(X\) and two functions \(\ty{f,g}{X \to \seq A}\),
if for all \(\ty{x}{X}\) and \(\ty{n}{\N}\) we have that \(f(x)\) and \(g(x)\) are \(n\)-approximately equal,
then \(\ty{\left(M \circ f \right)}{X \to A}\) and \(\ty{\left(M \circ g \right)}{X \to A}\) are pointwise-equal.
\end{corollary}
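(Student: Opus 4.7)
The plan is to apply Theorem \ref{thm:finite-approx} pointwise. Recall that pointwise-equality of two functions $h_1, h_2 : X \to A$ means showing $h_1(x) = h_2(x)$ for every $\ty{x}{X}$, so the goal reduces to proving $M(f(x)) = M(g(x))$ for each $\ty{x}{X}$.

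Fixing an arbitrary $\ty{x}{X}$, the hypothesis of the corollary gives us that the sequences $\ty{f(x), g(x)}{\seq A}$ are $n$-approximately equal for every $\ty{n}{\N}$. Since the ambient midpoint algebra $(A,\oplus)$ is assumed to be cancellative and iterative (which must be in scope here, as $M$ is used in the conclusion), Theorem \ref{thm:finite-approx} tells us that such an algebra has finite approximations, i.e.\ any two sequences that are $n$-approximately equal for all $n$ have equal iterated midpoints. Applying this directly to $f(x)$ and $g(x)$ yields $M(f(x)) = M(g(x))$, which is exactly $(M \circ f)(x) = (M \circ g)(x)$.

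The proof is essentially a one-line invocation, and there is no real obstacle: the hard mathematical work has already been done in Theorem \ref{thm:finite-approx}. The only minor subtlety is bookkeeping — ensuring that the universal quantifier over $\ty{n}{\N}$ in the hypothesis is threaded through for the specific $x$ under consideration before feeding into \ref{thm:finite-approx}. Concretely, from the hypothesis $\ty{h}{\Pity{x}{X}{\Pity{n}{\N}{\mathsf{n{\hy}approx}(f(x), g(x), n)}}}$ we derive, for each fixed $x$, the term $\ty{h(x)}{\Pity{n}{\N}{\mathsf{n{\hy}approx}(f(x), g(x), n)}}$, and then apply the finite approximation theorem.
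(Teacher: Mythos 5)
Your proof is correct and is essentially the paper's own proof: the paper also just invokes \cref{thm:finite-approx} pointwise, and your observation that cancellativity and iterativity must be in scope (since \(M\) appears in the conclusion) matches the formalisation's actual assumptions. No gap.
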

\begin{proof} \axioms{f}
By \cref{thm:finite-approx}.
\end{proof}

\subsection{Bipointed convex bodies}

Adding iteration to a cancellative midpoint algebra gives us the structure we call an \emph{abstract convex body}.

\begin{definition}
\thesislit{5}{IntervalObject}{Convex-body}
A \emph{convex body} is a cancellative midpoint algebra \((A,\oplus)\) that is iterative by \(\ty{M}{\seq{A} \to A}\).
For such a structure, we write \(\ty{(A,\oplus,M)}{\mathsf{Convex{\hy}body}}\) with the \(M\) operator explicit and the proof terms implicit.
\end{definition}

\noindent
Every line segment of \(\mathbb{R}^n\) is an abstract convex body; following this fashion, a closed line segment corresponds to a \emph{bipointed convex body}~\cite{EscardoSimpson}.

\begin{definition}
A \emph{bipointed convex body} is a convex body \((A,\oplus,M)\) with two distinguished points (the `\emph{endpoints}') \(\ty{u,v}{A}\),
\[ \mathsf{Bipointed{\hy}convex{\hy}body} := \sigmaty{(A,\oplus,M)}{\mathsf{Convex{\hy}body}}{A \times A} .\]
\noindent
Note that the above defines the \emph{type of} bipointed convex bodies.
\end{definition}

Finally, a closed and bounded line segment -- called an \emph{interval object} -- is defined as a bipointed convex body that satisfies the following universal property.

\begin{definition}
\label{def:universal}
\thesislit{5}{IntervalObject}{is-interval-object}
A bipointed convex body \((A,\oplus_A,M_A,u,v)\) satisfies the \emph{universal property of interval objects} if, given any bipointed convex body \((B,\oplus_B,M_B,s,t)\) there is a unique\footnote{Note we use here the \textsc{TypeTopology} notation for \(\Sigma\)-types that have a unique witness.
Given a type \(\ty{X}{\U}\) and \(\ty{Y}{X \to V}\), \emph{unique existence} is the type that proves \(\Sigma Y\) is a singleton: i.e.\
\(\existsutye{A}{X \to \V} \ := \sigmatye{p}{\Sigma A}{\pity{q}{\Sigma A}{p = q}}\).} function \(\ty{h}{A \to B}\) that maps the endpoints of \(A\) to their respective endpoints of \(B\) and is a midpoint homomorphism,
\begin{gather*}
\mathsf{is{\hy}interval{\hy}object}((A,\oplus_A,M_A,u,v)) :=
\pitye{(B,\oplus_B,M_B,s,t)}{\mathsf{Bipointed{\hy}convex{\hy}body}}
\\ \existsuty{h}{A \to B}{(h(u) = s) \x (h(v) = t) \x \mathsf{is{\hy}midpoint{\hy}hom}((A,\oplus_A),(B,\oplus_B),h)} .
\end{gather*}
\end{definition}

\noindent
The idea of the universal property is illustrated in \cref{fig:affine}.

\begin{figure}
    \centering
    \includegraphics[width=\textwidth]{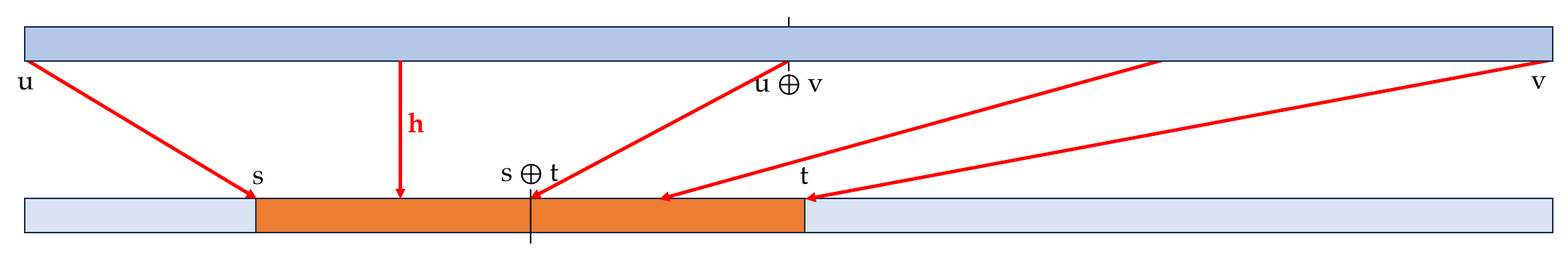}
    \caption{Illustration of the universal property \cref{def:universal}; the map \color{red}{\(\ty{h}{A \to B}\)} \color{black}{maps points on the bipointed convex body} \color{Periwinkle}{\((A,\oplus_A,M_A,u,v)\)} \color{black}{to the relative point on the bipointed convex body} \color{orange}{\((B,\oplus_B,M_B,s,t)\)}\color{black}{. The colours used here are as in the illustration.}}
    \label{fig:affine}
\end{figure}

\begin{definition}
\thesislit{5}{IntervalObject}{Interval-object}
An \emph{interval object} is a bipointed convex body that satisfies the universal property of interval objects,
\[ \mathsf{interval{\hy}object}_\U := \sigmatye{(A,\oplus,M,u,v)}{\mathsf{Bipointed{\hy}convex{\hy}body}}{\mathsf{is{\hy}interval{\hy}object}((A,\oplus,M,u,v))} .\]
For such a structure, we write \(\ty{(A,\oplus,M,u,v)}{\mathsf{interval{\hy}object}}\) with the proof term of the universal property implicit.
\end{definition}

\noindent
Note that the above defines the \emph{type of} interval objects.
This type is in fact a subsingleton in any univalent universe: we proved this using \Escardo's \textsc{TypeTopology} version of the \emph{structure identity principle}~\cite{EscardoSIP,AhrensSIP}.

\begin{theorem}
\label{thm:io-sip}
\typetop{TWA}{SIP-IntervalObject}{interval-object-prop}
For any universe \(\U\) such that \(\mathsf{is{\hy}univalent}(\U)\), it is the case that \(\mathsf{is{\hy}prop}(\mathsf{interval{\hy}object}_\U)\).
\end{theorem}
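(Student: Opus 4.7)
The plan is to exhibit, for any two interval objects $I_1 := (A, \oplus_A, M_A, u_A, v_A, \mathit{univ}_A)$ and $I_2 := (B, \oplus_B, M_B, u_B, v_B, \mathit{univ}_B)$, a canonical identification $I_1 = I_2$, after which propositionality of the whole $\Sigma$-type reduces to propositionality of the witness of $\mathsf{is{\hy}interval{\hy}object}$. First I would observe that $\mathsf{is{\hy}interval{\hy}object}(-)$ is pointwise a subsingleton: it is a $\Pi$-type of unique-existence types, each of which is a proposition by construction, and $\Pi$-types of propositions are propositions by \cref{lem:pi-prop} (so one use of function extensionality, which is a consequence of univalence, suffices). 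Consequently, once the underlying bipointed convex body is shown to be determined up to equality, the full $\Sigma$-type is a proposition by \cref{lem:sigma-prop}.

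To identify the underlying bipointed convex bodies, I would use the universal property symmetrically. Applying $\mathit{univ}_A$ with $I_2$ as target yields a unique midpoint homomorphism $h : A \to B$ with $h(u_A) = u_B$ and $h(v_A) = v_B$; applying $\mathit{univ}_B$ with $I_1$ as target gives a unique $k : B \to A$ with $k(u_B) = u_A$ and $k(v_B) = v_A$. By \cref{lem:midhom-comp}, both $k \circ h$ and $\mathsf{id}_A$ are endpoint-preserving midpoint homomorphisms $A \to A$; by applying $\mathit{univ}_A$ with $I_1$ itself as target and reading off the uniqueness clause, $k \circ h = \mathsf{id}_A$. Symmetrically, $h \circ k = \mathsf{id}_B$. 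Thus $h$ is a quasi-inverse and hence an equivalence, and moreover it preserves all the structure (the midpoint operation, iteration via \cref{lem:mp-hom-M-hom}, and the two endpoints).

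Next I would invoke the \textsc{TypeTopology} version of the structure identity principle (SIP) for the signature of bipointed convex bodies, which is available under univalence. The signature packages a carrier set $A$ with a binary operation $\oplus$, an iteration operator $M : \seq A \to A$, and two points $u, v$; SIP says that identifications of such structures correspond to structure-preserving equivalences of carriers, i.e.\ equivalences $h$ with $h(u_A) = u_B$, $h(v_A) = v_B$, $h(a \oplus_A b) = h(a) \oplus_B h(b)$, and $h(M_A \alpha) = M_B(h \circ \alpha)$. The $h$ produced above satisfies exactly these conditions, the last one being \cref{lem:mp-hom-M-hom}. The properties of being a midpoint algebra, of being cancellative, and of being iterative are all subsingletons (they are propositional axioms on a fixed carrier and operation), so they tag along automatically and SIP applies to the full bipointed-convex-body signature. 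This produces the required identification at the level of bipointed convex bodies.

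The main obstacle is technical rather than conceptual: it is not the uniqueness-via-universal-property argument, which is clean, but the bookkeeping needed to instantiate SIP for this particular signature — verifying that each component is either structure (handled by the equivalence clauses) or a propositional axiom (handled by \cref{lem:sigma-prop}), and that the $\Sigma$-encoding of $\mathsf{Bipointed{\hy}convex{\hy}body}$ is laid out in a way compatible with \textsc{TypeTopology}'s SIP combinators. Once this is in place, combining the SIP-produced identification with the earlier observation that $\mathsf{is{\hy}interval{\hy}object}$ is a proposition yields $\mathsf{is{\hy}prop}(\mathsf{interval{\hy}object}_\U)$ directly.
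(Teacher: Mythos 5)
Your proposal is correct and follows essentially the same route as the paper: the paper's (sketched) proof likewise uses the universal property to obtain a structure-preserving equivalence between any two interval objects, converts it to an identification via univalence through \textsc{TypeTopology}'s structure identity principle, and relies on \(\mathsf{is{\hy}interval{\hy}object}\) being a proposition to conclude. Your write-up simply spells out the details (the symmetric uniqueness argument giving the quasi-inverse, the \(M\)-homomorphism clause via \cref{lem:mp-hom-M-hom}, and the propositionality bookkeeping) that the paper defers to the formalisation.
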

\begin{proof}[Proof (Sketch).] \axioms{fu}
By the universal property (\cref{def:universal}), any two interval objects are equivalent --- thus, by univalence, they are equal. For the full proof, see \cite{ToddSIP}.
\end{proof}

\noindent
Given this, we will only ever want to use the map derived from the universal property from an interval object to cast objects from and to the same underlying convex body.

We define, using the universal property map, the following \emph{affine maps} on an interval object, which cast objects from that interval object onto ``sub-intervals'' of that object.

\begin{definition}
\thesislit{5}{IntervalObject}{basic-interval-object-development.affine}
Given an interval object \((A,\oplus,M,u,v)\) we define the \emph{affine map} \(\ty{\mathsf{affine}}{A \to A \to A \to A}\) as,
\[ \mathsf{affine}(s,t,a) := h(a) ,\]
where \(\ty{h}{A \to A}\) is the map derived from the universal property, which maps elements from the interval object to the `sub-interval' bipointed convex body \((A,\oplus,M,s,t)\).
\end{definition}

\begin{lemma}
\label{lem:affine-ends}
\thesislit{5}{IntervalObject}{basic-interval-object-development.affine-equation-l}
\thesislit{5}{IntervalObject}{basic-interval-object-development.affine-equation-r}
Given an interval object \((A,\oplus,M,u,v)\), the affine map \(\ty{\mathsf{affine}(s,t)}{A \to A}\), for any points \(\ty{s,t}{A}\), correctly maps the endpoints; i.e.\ \(\mathsf{affine}(s,t,u) = s\) and \(\mathsf{affine}(s,t,v) = t\).
\end{lemma}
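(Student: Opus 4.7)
The plan is to simply unfold the definition of $\mathsf{affine}(s,t,-)$ and invoke the defining property of the universal map. Recall that $\mathsf{affine}(s,t,a)$ is defined to be $h(a)$, where $h : A \to A$ is the map obtained from the universal property of $(A,\oplus,M,u,v)$ applied to the bipointed convex body $(A,\oplus,M,s,t)$. By \cref{def:universal}, the existential part of the universal property not only gives us this $h$ but also guarantees that it satisfies $h(u) = s$, $h(v) = t$, and that it is a midpoint homomorphism.

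Concretely, to prove $\mathsf{affine}(s,t,u) = s$ I would chain the equalities
\[
\mathsf{affine}(s,t,u) \;=\; h(u) \;=\; s,
\]
where the first step is the definition of $\mathsf{affine}$ and the second is the first endpoint-preservation clause extracted from the witness of $\mathsf{is{\hy}interval{\hy}object}((A,\oplus,M,u,v))$ at the bipointed convex body $(A,\oplus,M,s,t)$. The proof of $\mathsf{affine}(s,t,v) = t$ is symmetric, using the second endpoint-preservation clause.

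There is no real obstacle here: the universal property is precisely the hypothesis we need, and the affine map is defined so that these equations hold by construction. The only thing worth being careful about in the formalisation is the shape in which the witness of the universal property packages its data --- since it is phrased with $\existsu$, we need to project out the unique $h$ together with its three accompanying properties and then extract the appropriate conjunct. Once the correct projection functions are applied, both equations are immediate.
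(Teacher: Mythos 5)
Your proof is correct and is exactly the paper's argument: the paper's proof is simply ``By the correct mapping of endpoints requirement in \cref{def:universal}'', which is the unfolding and projection you describe. Nothing further is needed.
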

\begin{proof}
By the correct mapping of endpoints requirement in \cref{def:universal}.
\end{proof}

\begin{lemma}
\label{lem:affine-midhom}
\thesislit{5}{IntervalObject}{basic-interval-object-development.affine-is-\urlmidpoint-homomorphism}
Given an interval object \((A,\oplus,M,u,v)\), the affine map \(\ty{\mathsf{affine}(s,t)}{A \to A}\), for any points \(\ty{s,t}{A}\), is a midpoint homomorphism,
\[\mathsf{is{\hy}midpoint{\hy}hom}((A,\oplus),\mathsf{affine}(s,t)) .\]
\end{lemma}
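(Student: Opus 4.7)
The plan is essentially to unfold the definition of $\mathsf{affine}$ and then read off the conclusion directly from the universal property. Recall that $\mathsf{affine}(s,t,a)$ is defined as $h(a)$, where $h : A \to A$ is the unique map supplied by the universal property (\cref{def:universal}) witnessing that $(A,\oplus,M,u,v)$ is an interval object with respect to the bipointed convex body $(A,\oplus,M,s,t)$. The universal property states that this $h$ satisfies three conjuncts: it sends $u \mapsto s$, it sends $v \mapsto t$, and --- the conjunct of interest here --- $\mathsf{is{\hy}midpoint{\hy}hom}((A,\oplus),(A,\oplus),h)$.

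First I would introduce arbitrary $a_1, a_2 : A$ and reduce the goal $\mathsf{affine}(s,t, a_1 \oplus a_2) = \mathsf{affine}(s,t,a_1) \oplus \mathsf{affine}(s,t,a_2)$ by definitional unfolding to $h(a_1 \oplus a_2) = h(a_1) \oplus h(a_2)$. Then I would extract the midpoint homomorphism component from the $\Sigma$-type supplied by the universal property and apply it to $a_1, a_2$ to conclude. In more concrete terms, if we write the witness of the universal property as a term of type $\exists!\, h : A \to A.\, (h(u) = s) \times (h(v) = t) \times \mathsf{is{\hy}midpoint{\hy}hom}((A,\oplus),(A,\oplus),h)$, then taking the first projection gives us $(h, p_u, p_v, p_\oplus)$ and $p_\oplus(a_1, a_2)$ is exactly the required equation.

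There is really no obstacle here; the proof is immediate from the definitions, and I would expect the formalised version to be similarly short. The only small bookkeeping point to be careful about is that both source and target bipointed convex bodies share the same underlying midpoint algebra $(A,\oplus)$, so the two midpoint operations in the homomorphism equation are literally the same operation $\oplus$, and no coercion or reindexing is needed. Thus the result follows immediately by projecting out the homomorphism witness from the universal property.
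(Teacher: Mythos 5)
Your proposal is correct and is exactly the paper's argument: the paper's proof reads ``By the midpoint homomorphism requirement in \cref{def:universal}'', which is precisely your step of unfolding \(\mathsf{affine}(s,t)\) to the universal-property map \(h\) and projecting out its homomorphism witness. Your additional remark that source and target share the same \(\oplus\) is a correct but minor bookkeeping observation that the paper leaves implicit.
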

\begin{proof}
By the midpoint homomorphism requirement in \cref{def:universal}.
\end{proof}

\begin{corollary}
\thesislit{5}{IntervalObject}{basic-interval-object-development.affine-M-hom}
Given an interval object \((A,\oplus,M,u,v)\), the affine map \(\ty{\mathsf{affine}(s,t)}{A \to A}\), for any points \(\ty{s,t}{A}\), is an \(M\)-homomorphism,
\[\mathsf{is{\hy}M{\hy}hom}((A,\oplus),M,\mathsf{affine}(s,t)) .\]
\end{corollary}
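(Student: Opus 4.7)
The plan is to combine the two results immediately preceding the corollary. By \cref{lem:affine-midhom}, the affine map $\mathsf{affine}(s,t) : A \to A$ is a midpoint homomorphism on the midpoint algebra $(A,\oplus)$, i.e.\ it satisfies $\mathsf{is{\hy}midpoint{\hy}hom}((A,\oplus),\mathsf{affine}(s,t))$. On the other hand, \cref{lem:mp-hom-M-hom} asserts that any midpoint homomorphism between two iterative midpoint algebras is automatically an iterated midpoint homomorphism. Since we are working within a single interval object $(A,\oplus,M,u,v)$, we can instantiate this lemma with both source and target set to $(A,\oplus)$ and with $M_A = M_B = M$.

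Concretely, I would just write: apply \cref{lem:mp-hom-M-hom} with the data $(A,\oplus)$, $(A,\oplus)$, $M$, $M$, and $h := \mathsf{affine}(s,t)$, feeding it the witness supplied by \cref{lem:affine-midhom}. The output is exactly $\mathsf{is{\hy}M{\hy}hom}((A,\oplus),M,\mathsf{affine}(s,t))$, which is the claim. No induction, no appeal to the universal property or the iteration sub-properties is needed at this level, since all the real work was already done in the proofs of the two supporting lemmas.

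There is no serious obstacle: the corollary is essentially a one-line consequence, and the only thing to double-check is that the implicit arguments line up (in particular that the two midpoint algebras in the application of \cref{lem:mp-hom-M-hom} are the same, and hence both carry the same iterator $M$ coming from the interval object structure). The proof is short enough that a one-sentence justification citing \cref{lem:affine-midhom,lem:mp-hom-M-hom} suffices.
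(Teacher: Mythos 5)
Your proposal is correct and matches the paper's proof exactly: the corollary is proved by citing \cref{lem:affine-midhom} (the affine map is a midpoint homomorphism) together with \cref{lem:mp-hom-M-hom} (midpoint homomorphisms between iterative midpoint algebras are \(M\)-homomorphisms), instantiated with source and target both equal to \((A,\oplus)\) with iterator \(M\). Nothing further is needed.
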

\begin{proof}
By \cref{lem:mp-hom-M-hom,lem:affine-midhom}.
\end{proof}

Many properties of our later-defined computational functions are proved by the following lemma, which gives the uniqueness of the affine map.

\begin{lemma}
\label{lem:affine-unique}
\thesislit{5}{IntervalObject}{basic-interval-object-development.affine-uniqueness}
Given an interval object \((A,\oplus,M,u,v)\), the affine map \(\ty{\mathsf{affine}(s,t)}{A \to A}\), for any points \(\ty{s,t}{A}\), is identical to any function \(\ty{f}{A \to A}\) that satisfies (i) \(f(u) = s\), (ii) \(f(v) = t\) and (iii) \(\mathsf{is{\hy}midpoint{\hy}hom}(f)\).
\end{lemma}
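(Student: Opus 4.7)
The plan is to invoke the uniqueness clause in the universal property of the interval object (Definition~\ref{def:universal}) applied to the bipointed convex body $(A,\oplus,M,s,t)$ itself. The affine map $\mathsf{affine}(s,t)$ is defined precisely as the map $h$ picked out by this universal property, so any other function satisfying the same three defining conditions must coincide with it.

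More concretely, first I would observe that $(A,\oplus,M,s,t)$ is a bipointed convex body: the underlying convex body structure $(A,\oplus,M)$ is exactly the one on our interval object, and we are simply choosing $s$ and $t$ as the distinguished endpoints. Invoking the universal property on this bipointed convex body gives the unique existence type
\[ \existsuty{h}{A \to A}{(h(u) = s) \x (h(v) = t) \x \mathsf{is{\hy}midpoint{\hy}hom}((A,\oplus),h)} . \]
By the definition of $\mathsf{affine}(s,t)$, together with Lemmas~\ref{lem:affine-ends} and~\ref{lem:affine-midhom}, the tuple $(\mathsf{affine}(s,t), \star, \star, \star)$ inhabits the underlying $\Sigma$-type. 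By assumption, the tuple $(f,\ i,\ ii,\ iii)$ does too.

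The main step is then to extract equality of the two functions from the fact that the above $\Sigma$-type is a singleton. Since unique existence is interpreted as the $\Sigma$-type being contractible, any two of its inhabitants are equal; projecting out the first components yields $f = \mathsf{affine}(s,t)$ as an identification of functions. I do not anticipate a genuine obstacle here — the entire content of the lemma is already packaged inside the $\exists!$ quantifier of the universal property. The only subtlety worth flagging is that, at the level of the formalisation, extracting equality of the underlying functions from equality of the $\Sigma$-tuples uses that the three side conditions are propositions (the endpoint equations because $A$ is a set, and the midpoint-homomorphism condition by Lemma~\ref{lem:pi-prop} applied pointwise), so that the relevant $\Sigma$-type is equivalent to its first projection; this is a standard move and should not require function extensionality beyond what is already in scope via the universal property itself.
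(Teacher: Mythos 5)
Your proposal is correct and is essentially the paper's own proof, which simply reads ``By the uniqueness requirement in \cref{def:universal}''; you have just spelled out the details of instantiating the universal property at the bipointed convex body \((A,\oplus,M,s,t)\) and projecting out of the contractible \(\Sigma\)-type. (The only minor remark: extracting \(f = \mathsf{affine}(s,t)\) from equality of the two tuples needs nothing more than \(\mathsf{ap}\) of the first projection, so the propositionality of the side conditions is not actually required for that step.)
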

\begin{proof}
By the uniqueness requirement in \cref{def:universal}.
\end{proof}

The affine map functional is the computational seed of the interval object, forth from which springs the algorithms we will define on the interval, which represent algorithms on the real numbers.
Two basic examples are the identity function and any constant map.

\begin{lemma}
\label{lem:affine-id}
\thesislit{5}{IntervalObject}{basic-interval-object-development.affine-uv-involutive}
Given an interval object \((A,\oplus,M,u,v)\), the identity function \(\ty{\mathsf{id}}{A \to A}\) is given by the affine map \(\ty{\mathsf{affine}(u,v)}{A \to A}\).
\end{lemma}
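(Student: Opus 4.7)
The plan is to invoke the uniqueness of the affine map (\cref{lem:affine-unique}) directly. Recall that $\mathsf{affine}(u,v)$ is defined as the map $h : A \to A$ arising from the universal property of the interval object applied to the bipointed convex body $(A,\oplus,M,u,v)$ itself; by \cref{lem:affine-unique} it is characterised (up to pointwise equality) as the unique function satisfying (i) $f(u) = u$, (ii) $f(v) = v$, and (iii) $\mathsf{is\text{-}midpoint\text{-}hom}((A,\oplus),f)$.

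First I would observe that the identity function $\mathsf{id} : A \to A$ trivially satisfies all three of these conditions: (i) and (ii) hold by $\mathsf{refl}$, and (iii) is precisely the first half of \cref{lem:midhom-comp}, which states that the identity map is a midpoint homomorphism. Applying \cref{lem:affine-unique} with $f := \mathsf{id}$ then yields $\mathsf{affine}(u,v) = \mathsf{id}$, which is the claim.

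There is essentially no obstacle here — the content of the lemma is already packaged inside the universal property. The only subtle point is that the universal property (\cref{def:universal}) delivers a \emph{unique} such $h$ in the strong sense of $\exists!$, and we are relying on \cref{lem:affine-unique} to expose this uniqueness in the convenient form needed to compare $\mathsf{affine}(u,v)$ with an externally given function. This is merely an unfolding of definitions, so the proof reduces to supplying the three witnesses above.
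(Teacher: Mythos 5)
Your proof is correct and is essentially identical to the paper's: both verify that \(\mathsf{id}\) maps the endpoints to themselves and is a midpoint homomorphism, then conclude by the uniqueness of the affine map (\cref{lem:affine-unique}). The citation of \cref{lem:midhom-comp} for condition (iii) is a harmless extra precision over the paper's ``trivially satisfies''.
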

\begin{proof}
The function \(\ty{\mathsf{id}}{A \to A}\) trivially satisfies (i) \(\mathsf{id}(u) = u\), (ii) \(\mathsf{id}(v) = v\) and (iii) \(\pity{a,b}{A}{\mathsf{id}(a \oplus b) = \mathsf{id}(a) \oplus \mathsf{id}(b)}\).
Therefore, by \cref{lem:affine-unique}, \(\mathsf{id} = \mathsf{affine}(u,v)\).
\end{proof}

\begin{lemma}
\label{lem:affine-const}
\thesislit{5}{IntervalObject}{basic-interval-object-development.affine-constant}
Given an interval object \((A,\oplus,M,u,v)\), the constant function \(\ty{\left( \lambda (\ty{-}{A}).x \right)}{A \to A}\), for any \(\ty{x}{A}\), is given by the affine map \(\ty{\mathsf{affine}(x,x)}{A \to A}\).
\end{lemma}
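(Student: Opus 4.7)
The plan is to apply the uniqueness of the affine map from \cref{lem:affine-unique}, exactly as was done in the preceding \cref{lem:affine-id}. Concretely, I would set $f := \lambda (\ty{-}{A}).x$ and verify the three hypotheses of \cref{lem:affine-unique} with $s := x$ and $t := x$, so that $\mathsf{affine}(x,x) = f$ follows immediately.

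The three verifications are all routine. For (i), $f(u) := x$ holds definitionally, and likewise for (ii), $f(v) := x$. For (iii), we need to show that $f$ is a midpoint homomorphism, i.e.\ $f(a \oplus b) = f(a) \oplus f(b)$ for all $\ty{a,b}{A}$; unfolding the constant function, this reduces to $x = x \oplus x$, which is precisely the idempotency axiom of the midpoint algebra (\cref{def:mpa}.(i)).

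There is no real obstacle here --- the only content is recognising that idempotency of $\oplus$ is exactly what makes constant maps midpoint homomorphisms. Once the three conditions are checked, \cref{lem:affine-unique} gives $\mathsf{affine}(x,x) = \lambda (\ty{-}{A}).x$, which is the desired statement.
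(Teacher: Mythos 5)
Your proof is correct and matches the paper's argument exactly: both verify that the constant map sends the endpoints to $x$ and is a midpoint homomorphism by idempotency of $\oplus$, then invoke the uniqueness of the affine map (\cref{lem:affine-unique}).
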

\begin{proof}
For any \(\ty{x}{A}\) the function \(\ty{\left( \lambda (\ty{-}{A}).x \right)}{A \to A}\) trivially maps both endpoints of \(A\) to the only endpoint \(x\) of the sub-interval, and is a midpoint homomorphism by idempotency of \(\oplus\) in \cref{def:mpa}.
Therefore, by \cref{lem:affine-unique}, \(\left( \lambda (\ty{-}{A}).x \right) = \mathsf{affine}(x,x)\).
\end{proof}

\subsection{Arithmetic on \texorpdfstring{\([-1,1]\)}{[-1,1]} by assuming an interval object}

For the rest of this section, we assume that an interval object \(\ty{\I}{\mathsf{interval{\hy}object}}_\U\) is given in any given type universe \(\U\) ---
we will \emph{not} try to construct the interval object in this thesis.

\begin{assumption}
\thesislit{5}{IntervalObject}{basic-interval-object-development}
We assume the existence of \(\I := (\I,\oplus,M,-1,+1)\) to specify the closed real interval \([-1,1]\).
\end{assumption}

\noindent
Note that we abuse notation and write \(\I\) both for the interval object itself and the set over which it operates.

Of course \(\ty{-1,+1}{\I}\) represent those same numbers. Using the midpoint, we can also represent other dyadic numbers in this interval.

\begin{definition}
\thesislit{5}{IntervalObject}{basic-interval-object-development.O}
The distinguished element \(\ty{0}{\I}\) is given by \(-1 \oplus +1\).
\end{definition}

Arithmetic functions such as negation and multiplication are not further axioms of this specification. They are instead defined, and their properties derived, from the existing --- rather small number --- of axioms in the interval object specification.
In particular, as we saw with the two trivial examples above, the affine map is used to define both negation and multiplication.

\subsubsection{Negation}

Negation is defined by using the affine map that `flips' the interval, casting elements of \(\I\) to their negated counterparts.

\begin{definition}
\thesislit{5}{IntervalObject}{basic-interval-object-development.\urlminus_}
\emph{Negation} is defined on the interval \(\I\) as the unique function \(\ty{-}{\I \to \I}\),
\[ -x := \mathsf{affine}(+1,-1,x) .\]
\end{definition}

\begin{lemma}
\label{lem:neg-props}
\thesislit{5}{IntervalObject}{basic-interval-object-development.\urlminus1-inverse}
\thesislit{5}{IntervalObject}{basic-interval-object-development.+1-inverse}
\thesislit{5}{IntervalObject}{basic-interval-object-development.\urlminus-is-\urlmidpoint-homomorphism}
Negation is the unique function on \(\I\) that negates every element of the object, i.e.,
\begin{enumerate}[(i)]
\item \(-(-1) = +1\),
\item \(-(+1) = -1\),
\item \(\pity{x,y}{\I}{- (x \oplus y) = - x \oplus - y}\).
\end{enumerate}
\end{lemma}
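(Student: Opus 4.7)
The plan is to exploit the three standard facts about the affine map that have already been established: \cref{lem:affine-ends} on endpoints, \cref{lem:affine-midhom} on being a midpoint homomorphism, and \cref{lem:affine-unique} on uniqueness. Since negation is defined by $-x := \mathsf{affine}(+1,-1,x)$, properties (i), (ii) and (iii) will fall out immediately from these lemmas applied at the specific choice of endpoints $s := +1$ and $t := -1$.

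Concretely, for (i) I would compute $-(-1) := \mathsf{affine}(+1,-1,-1) = +1$ by the first half of \cref{lem:affine-ends}, and for (ii) likewise $-(+1) := \mathsf{affine}(+1,-1,+1) = -1$ by the second half. For (iii), given $\ty{x,y}{\I}$, I would unfold
\[ -(x \oplus y) := \mathsf{affine}(+1,-1,x \oplus y) = \mathsf{affine}(+1,-1,x) \oplus \mathsf{affine}(+1,-1,y) =: -x \oplus -y, \]
where the middle equality is exactly the midpoint-homomorphism property of $\mathsf{affine}(+1,-1)$ given by \cref{lem:affine-midhom}.

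For uniqueness, I would suppose $\ty{f}{\I \to \I}$ is any function satisfying (i), (ii) and (iii). Then $f$ satisfies the three hypotheses of \cref{lem:affine-unique} for the choice $s := +1$, $t := -1$: condition (i) gives $f(-1) = +1$, condition (ii) gives $f(+1) = -1$, and condition (iii) is precisely $\mathsf{is{\hy}midpoint{\hy}hom}(f)$. Therefore $f = \mathsf{affine}(+1,-1) =: (-)$, establishing uniqueness.

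There is no real obstacle here: the lemma is essentially a bookkeeping consequence of the universal property, with \cref{lem:affine-unique} doing all the work for uniqueness and \cref{lem:affine-ends,lem:affine-midhom} doing all the work for the three listed properties. The only mild subtlety is keeping the order of the two endpoints straight (we instantiate with $s := +1$ and $t := -1$, not the other way around), which is exactly what makes the map act as negation rather than as the identity (compare \cref{lem:affine-id}).
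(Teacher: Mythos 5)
Your proposal is correct and matches the paper's proof exactly: the paper cites \cref{lem:affine-ends,lem:affine-midhom,lem:affine-unique} and your argument simply spells out how each of those three lemmas, instantiated at the endpoints \(s := +1\) and \(t := -1\), yields properties (i), (ii), (iii) and the uniqueness claim respectively.
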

\begin{proof}
By \cref{lem:affine-ends,lem:affine-midhom,lem:affine-unique}.
\end{proof}

\noindent
We can further show that negation behaves as we expect; for example by showing it has a unit and is involutive.

\begin{corollary}
\label{lem:neg-zero}
\thesislit{5}{IntervalObject}{basic-interval-object-development.O-inverse}
Negating \(\ty{0}{\I}\) gives back \(0\).
\end{corollary}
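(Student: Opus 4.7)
The plan is to unfold the definition $0 := -1 \oplus +1$ and then push the negation through the midpoint, using the three properties of negation established in the preceding lemma on negation (negation of the endpoints and the midpoint homomorphism property). Concretely, I would chain the equalities
\[ -0 \;=\; -(-1 \oplus +1) \;=\; (-(-1)) \oplus (-(+1)) \;=\; (+1) \oplus (-1) \;=\; (-1) \oplus (+1) \;=\; 0, \]
where the second step invokes clause (iii) of the negation lemma (that $-$ is a midpoint homomorphism), the third step invokes clauses (i) and (ii) (the two endpoint equations), the fourth step uses commutativity of $\oplus$ from the midpoint algebra axioms (\cref{def:mpa}), and the final step is just the definition of $0$ unfolded in reverse.

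There is no real obstacle here: every step is a direct application of an already established fact, and no appeal to iteration, finite approximation, universal property, or any axiom outside the midpoint-algebra laws and \cref{lem:neg-props} is required. The only minor bookkeeping point is making sure to apply commutativity to swap $(+1) \oplus (-1)$ into $(-1) \oplus (+1)$ before folding back up the definition of $0$; this is immediate from \cref{def:mpa}.(ii). Hence the proof is a short equational reasoning chain and can be written in a single displayed calculation.
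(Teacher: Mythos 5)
Your proof is correct and follows exactly the paper's argument: unfold \(0 := -1 \oplus +1\), push the negation through \(\oplus\) via the midpoint-homomorphism clause of \cref{lem:neg-props}, negate the endpoints, and swap back with commutativity from \cref{def:mpa}. No differences worth noting.
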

\begin{proof}
By definition \(0 := -1 \oplus +1\), so we wish to show that \(- (-1 \oplus +1) = -1 \oplus +1\).
By \cref{lem:neg-props}, \(- (-1 \oplus +1) = -(-1) \oplus -(+1)\) because negation is a midpoint homomorphism; then, because negation negates the endpoints, we have \(-(-1) \oplus -(+1) = +1 \oplus -1\).
The final step \(+1 \oplus -1 = -1 \oplus +1\) follows by the commutativity of the midpoint operator in \cref{def:mpa}.
\end{proof}

\begin{lemma}
\label{lem:double-neg}
\thesislit{5}{IntervalObject}{basic-interval-object-development.\urlminus-involutive}
Negation on \(\I\) is an involution, i.e.\ \(- (-x) = x\) for all \(\ty{x}{\I}\).
\end{lemma}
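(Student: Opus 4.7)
The plan is to show that the composition $(- \circ -) : \I \to \I$ coincides with the identity function, from which $-(-x) = x$ follows immediately by pointwise evaluation. By \cref{lem:affine-id}, the identity on $\I$ is characterised (uniquely) as $\mathsf{affine}(-1,+1)$, so it suffices by \cref{lem:affine-unique} to verify that $(- \circ -)$ sends $-1 \mapsto -1$, sends $+1 \mapsto +1$, and is a midpoint homomorphism.

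First I would discharge the endpoint conditions. Using \cref{lem:neg-props}(i)--(ii), we get $(- \circ -)(-1) = -(-(-1)) = -(+1) = -1$, and symmetrically $(- \circ -)(+1) = -(-(+1)) = -(-1) = +1$. Next I would show that $(- \circ -)$ is a midpoint homomorphism: for any $x,y : \I$, two applications of \cref{lem:neg-props}(iii) give
\[ -(-(x \oplus y)) = -((-x) \oplus (-y)) = (-(-x)) \oplus (-(-y)), \]
as required. Equivalently, this is just the general fact that composition of midpoint homomorphisms is a midpoint homomorphism.

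Having verified the three conditions, \cref{lem:affine-unique} applied with $s := -1$, $t := +1$ tells us that $(- \circ -) = \mathsf{affine}(-1,+1)$, and by \cref{lem:affine-id} this is $\mathsf{id}_\I$. Evaluating at an arbitrary $x : \I$ yields $-(-x) = x$.

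I do not expect any real obstacle here: the whole argument is an instance of the ``characterisation by the universal property'' pattern already used in \cref{lem:affine-id,lem:affine-const}, and the endpoint and homomorphism conditions follow directly from the three clauses of \cref{lem:neg-props}. The only mild subtlety is remembering to invoke uniqueness against $\mathsf{id} = \mathsf{affine}(-1,+1)$ rather than trying to compute $-(-x)$ directly from the definition of $\mathsf{affine}$, which we have no explicit formula for.
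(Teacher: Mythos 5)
Your proposal is correct and follows essentially the same route as the paper: verify that double negation fixes the endpoints and is a midpoint homomorphism (the paper cites \cref{lem:midhom-comp} for the latter, which you also note as an alternative to applying \cref{lem:neg-props}(iii) twice), then conclude by uniqueness of the affine map and \cref{lem:affine-id}. No gaps.
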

\begin{proof}
We first use the fact that negation negates the endpoints (\cref{lem:neg-props}) to derive \(- - (-1) = - (+1) = -1\) and \(- - (+1) = - (-1) = +1\).
Also, because negation is a midpoint homomorphism (\cref{lem:neg-props}) and the composition of two homomorphisms is itself a homomorphism (\cref{lem:midhom-comp}), double negation is a midpoint homomorphism.
Therefore, because any affine map is unique (\cref{lem:affine-unique}) and double negation is a midpoint homomorphism that maps the endpoints of \(\I\) to themselves, it is the case that \(\left( \lambda(\ty{x}{\I}).-(-x) \right) = \mathsf{affine}(-1,+1)\).
The proof follows by the fact that, due to its identification with \(\mathsf{affine}(-1,+1)\), double negation gives the identity map (
\cref{lem:affine-id}) and therefore \(- (-x) = \mathsf{id}(x) = x\) for all \(\ty{x}{\I}\).
\end{proof}

\subsubsection{Multiplication}

Multiplication by some \(\ty{x}{\I}\) is defined by using the affine map that maps the interval to the sub-interval \([-x,x]\).

\begin{definition}
\thesislit{5}{IntervalObject}{basic-interval-object-development._*_}
\emph{Multiplication} is defined on the interval \(\I\) as the unique function \(\ty{*}{\I \to \I \to \I}\),
\[ x * y := \mathsf{affine}(-x,x,y) .\]
\end{definition}

\begin{lemma}
\label{lem:mul-props}
\thesislit{5}{IntervalObject}{basic-interval-object-development.*-gives-negation-l}
\thesislit{5}{IntervalObject}{basic-interval-object-development.*-gives-id-l}
\thesislit{5}{IntervalObject}{basic-interval-object-development.*-is-\urlmidpoint-homomorphism-l}
Multiplication of some \(\ty{x}{\I}\) is the unique function on \(\I\) that multiplies elements of the object by \(x\), i.e.,
\begin{enumerate}[(i)]
\item \(x * -1 = -x\),
\item \(x * +1 = x\),
\item \(\pity{x,y}{\I}{x * (y \oplus z) = (x * y) \oplus (x * z)}\).
\end{enumerate}
\end{lemma}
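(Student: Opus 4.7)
The plan is to mirror exactly the proof strategy used for negation in \cref{lem:neg-props}, since multiplication is defined by the same affine-map mechanism but with endpoints $-x$ and $x$ in place of $+1$ and $-1$. All three clauses will fall out immediately from the three basic facts about the affine map: that it sends $-1$ to its first argument (\cref{lem:affine-ends}), that it sends $+1$ to its second argument (\cref{lem:affine-ends}), and that it is a midpoint homomorphism (\cref{lem:affine-midhom}).

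Concretely, unfolding the definition $x * y := \mathsf{affine}(-x, x, y)$:
\begin{itemize}
\item For (i), $x * -1 := \mathsf{affine}(-x, x, -1)$, and by the first half of \cref{lem:affine-ends} this equals $-x$.
\item For (ii), $x * +1 := \mathsf{affine}(-x, x, +1)$, and by the second half of \cref{lem:affine-ends} this equals $x$.
\item For (iii), fixing $x$ and viewing $\mathsf{affine}(-x, x, -)$ as a function of its third argument, \cref{lem:affine-midhom} tells us this function is a midpoint homomorphism, so $x * (y \oplus z) := \mathsf{affine}(-x, x, y \oplus z) = \mathsf{affine}(-x, x, y) \oplus \mathsf{affine}(-x, x, z) =: (x * y) \oplus (x * z)$.
\end{itemize}

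The ``uniqueness'' half of the statement (the claim that multiplication by $x$ is the \emph{unique} such function on $\I$) is then immediate from \cref{lem:affine-unique}: any function $f : \I \to \I$ satisfying (i)--(iii) is a midpoint homomorphism mapping $-1 \mapsto -x$ and $+1 \mapsto x$, hence coincides with $\mathsf{affine}(-x, x, -)$, which is exactly multiplication by $x$.

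There is no real obstacle here; the work has already been done in setting up \cref{lem:affine-ends}, \cref{lem:affine-midhom}, and \cref{lem:affine-unique}, and this lemma is essentially just applying those three facts to the particular choice of endpoints $(-x, x)$. The only thing to be slightly careful about is that the three clauses are being stated with $x$ as the multiplier held fixed on the left (reflecting the subscript ``$-l$'' on the referenced formalisation lemmas), so the midpoint-homomorphism condition in (iii) is in the \emph{second} argument of $*$, matching the second argument of $\mathsf{affine}(-x, x, -)$.
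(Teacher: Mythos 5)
Your proof is correct and follows exactly the paper's approach: the paper's proof of this lemma is simply ``By \cref{lem:affine-ends,lem:affine-midhom,lem:affine-unique}'', which is precisely the three applications you spell out (endpoint mapping for (i) and (ii), the midpoint-homomorphism property for (iii), and uniqueness of the affine map for the uniqueness claim). Your observation that the homomorphism condition is in the second argument of $*$, matching the third argument of $\mathsf{affine}(-x,x,-)$, is also the right reading of the statement.
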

\begin{proof}
By \cref{lem:affine-ends,lem:affine-midhom,lem:affine-unique}.
\end{proof}

\noindent
Furthermore, multiplication also behaves as expected in a variety of ways.

\begin{lemma}
\label{lem:mul-io--1}
\thesislit{5}{IntervalObject}{basic-interval-object-development.*-gives-negation-r}
Multiplication on \(\I\) satisfies \(-1 * y = -y\).
\end{lemma}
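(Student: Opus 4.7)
The plan is to prove the statement by straightforward unfolding of definitions and appeal to the known behaviour of negation on the left endpoint. The right-hand side $-y$ is defined as $\mathsf{affine}(+1, -1, y)$, while the left-hand side $-1 * y$ unfolds, by the definition of multiplication, to $\mathsf{affine}(-(-1), -1, y)$. So the entire content of the proof reduces to rewriting the first argument of the affine map using $-(-1) = +1$, which is exactly \cref{lem:neg-props}.(i).

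In more detail, I would first write $-1 * y = \mathsf{affine}(-(-1), -1, y)$ by definition of $*$, then use $-(-1) = +1$ (\cref{lem:neg-props}.(i)) to transport this to $\mathsf{affine}(+1, -1, y)$, and finally note that this is exactly the definition of $-y$. No appeal to uniqueness of affine maps is needed here --- unlike the analogous \cref{lem:mul-props} which characterises multiplication from the right, the left-multiplication $-1 * (-)$ and negation $-(-)$ are \emph{literally defined} by the same affine map up to the trivial rewrite of one endpoint.

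There is essentially no obstacle: the only care required is to make sure the rewriting step is set up correctly, i.e., that we are using \(\mathsf{ap}(\lambda z.\mathsf{affine}(z, -1, y), -(-1) = +1)\) rather than attempting to argue semantically. This is a routine \textsc{Agda} transport. If one wanted to avoid the rewrite altogether, one could alternatively invoke \cref{lem:affine-unique} by showing directly that $\lambda y.{-1 * y}$ satisfies the three characterising properties of $\mathsf{affine}(+1, -1)$ (sending $-1 \mapsto +1$, $+1 \mapsto -1$, and being a midpoint homomorphism), all of which follow from \cref{lem:mul-props} and \cref{lem:neg-props}; but the direct unfolding is shorter and more transparent.
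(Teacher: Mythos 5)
Your proof is correct and follows essentially the same route as the paper: both unfold \(-1 * y\) to \(\mathsf{affine}(-(-1),-1,y)\) and then rewrite \(-(-1) = +1\) via \cref{lem:neg-props} to recognise the negation map \(\mathsf{affine}(+1,-1)\). The alternative via \cref{lem:affine-unique} that you mention is also valid but, as you note, unnecessary here.
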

\begin{proof}
The function \(\lambda y.-1 * y := \mathsf{affine}(-(-1),-1)\). By \cref{lem:neg-props}, this is pointwise-equal to \(\mathsf{affine}(+1,-1)\), which is the negation function.
\end{proof}

\begin{lemma}
\label{lem:mul-io-1}
\thesislit{5}{IntervalObject}{basic-interval-object-development.*-gives-id-r}
Multiplication on \(\I\) satisfies \(+1 * y = y\).
\end{lemma}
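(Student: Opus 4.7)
The plan is to unfold the definition of multiplication and then use two previously established lemmas to reduce it to the identity function. By definition, $+1 * y := \mathsf{affine}(-(+1), +1, y)$. The first step is to rewrite $-(+1)$ as $-1$ using \cref{lem:neg-props}(ii), which gives $+1 * y = \mathsf{affine}(-1, +1, y)$. The second step is to invoke \cref{lem:affine-id}, which tells us that $\mathsf{affine}(-1, +1) = \mathsf{id}$; therefore $\mathsf{affine}(-1, +1, y) = y$, completing the proof.

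There is no real obstacle here: the proof is essentially a two-step rewriting argument, entirely parallel in spirit to the proof of \cref{lem:mul-io--1}, which instead identified $\lambda y.\,-1 * y$ with negation via \cref{lem:neg-props}(i). The only subtlety worth noting is that the intermediate equality $-(+1) = -1$ should be combined with the congruence of $\mathsf{affine}(-,+1,y)$ in its first argument (i.e.\ transport along the equation), but this is routine. Once the rewriting is done, the result is immediate from the uniqueness of the affine map on the endpoints of $\I$ itself.
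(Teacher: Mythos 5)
Your proof is correct and follows exactly the paper's own argument: unfold \(+1 * y := \mathsf{affine}(-(+1),+1,y)\), rewrite \(-(+1) = -1\) via \cref{lem:neg-props}, and conclude by \cref{lem:affine-id} that this is the identity. No differences worth noting.
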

\begin{proof}
The function \(\lambda y.+1 * y := \mathsf{affine}(-(+1),+1)\). By \cref{lem:neg-props}, this is pointwise-equal to \(\mathsf{affine}(-1,+1)\), which by \cref{lem:affine-id} is the identity function.
\end{proof}

\begin{lemma}
\label{lem:mul-io-0}
\thesislit{5}{IntervalObject}{basic-interval-object-development.*-gives-zero-l}
\thesislit{5}{IntervalObject}{basic-interval-object-development.*-gives-zero-r}
Multiplication on \(\I\) satisfies \(x * 0 = 0\) and \(0 * y = 0\).
\end{lemma}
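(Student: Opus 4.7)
The plan is to prove the two equations by rather different routes, handling the right-hand equation $0 * y = 0$ first because it is nearly immediate, then reducing the left-hand equation $x * 0 = 0$ to an auxiliary lemma about the function $x \mapsto -x \oplus x$.

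For $0 * y = 0$, I would unfold the definition to $0 * y := \mathsf{affine}(-0, 0, y)$. Since $-0 = 0$ by \cref{lem:neg-zero}, this is $\mathsf{affine}(0, 0, y)$, which by \cref{lem:affine-const} is the constant function at $0$. Thus $0 * y = 0$ for all $\ty{y}{\I}$, with no further work required.

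For $x * 0 = 0$ the strategy is to use the midpoint homomorphism property of multiplication by $x$ (\cref{lem:mul-props}.(iii)), computing $x * 0 := x * (-1 \oplus +1) = (x * -1) \oplus (x * +1) = (-x) \oplus x$, using the endpoint equations of \cref{lem:mul-props}. It then suffices to show that the function $f(x) := (-x) \oplus x$ is constantly $0$. The main (though still easy) step is to check that $f$ is itself a midpoint homomorphism: given $\ty{a,b}{\I}$, we have $f(a \oplus b) = -(a \oplus b) \oplus (a \oplus b) = (-a \oplus -b) \oplus (a \oplus b)$ by \cref{lem:neg-props}.(iii), and then transpositionality of $\oplus$ (\cref{def:mpa}) rearranges this to $(-a \oplus a) \oplus (-b \oplus b) = f(a) \oplus f(b)$.

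Finally, I would evaluate $f$ at the endpoints: $f(-1) = -(-1) \oplus -1 = +1 \oplus -1 = -1 \oplus +1 =: 0$ using \cref{lem:neg-props}.(i) and commutativity of $\oplus$, and $f(+1) = -(+1) \oplus +1 = -1 \oplus +1 =: 0$ using \cref{lem:neg-props}.(ii). Thus $f$ is a midpoint homomorphism mapping both endpoints of $\I$ to $0$, so by the uniqueness of affine maps (\cref{lem:affine-unique}) it coincides with $\mathsf{affine}(0,0)$, which in turn is the constant $0$ function by \cref{lem:affine-const}. I do not anticipate any serious obstacle here; the only mildly delicate point is the transpositionality-based rearrangement in the homomorphism check for $f$, which is a standard algebraic manipulation of the midpoint axioms.
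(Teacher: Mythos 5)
Your proof is correct and follows essentially the same route as the paper: the $0 * y = 0$ case is verbatim the paper's argument via $\mathsf{affine}(-0,0) = \mathsf{affine}(0,0)$ and \cref{lem:affine-const}, while for $x * 0 = 0$ the paper merely cites \cref{lem:mul-props} and your reduction to $-x \oplus x = 0$ via the affine-uniqueness argument is the natural (and correct) filling-in of that terse citation.
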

\begin{proof}
The former follows by \cref{lem:mul-props}.
The latter follows because the function \(\lambda y.0 * y := \mathsf{affine}(-0,0)\), and by \cref{lem:neg-zero} this is pointwise-equal to \(\mathsf{affine}(0,0)\); this, by \cref{lem:affine-const}, is the constant function that outputs \(0\).
\end{proof}

\begin{theorem}
\label{lem:mul-comm}
\thesislit{5}{IntervalObject}{basic-interval-object-development.*-commutative}
Multiplication on \(\I\) is commutative.
\end{theorem}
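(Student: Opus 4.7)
The plan is to use the uniqueness of the affine map (\cref{lem:affine-unique}) twice: once to reduce the commutativity problem to showing that $\lambda x.x * y$ is a midpoint homomorphism with the right endpoints, and then again inside that argument to obtain the needed homomorphism property from a compatible affine characterisation.

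First, fix $\ty{y}{\I}$. By definition $\lambda x.y * x = \mathsf{affine}(-y,y)$, so it suffices to show that $\lambda x.x*y$ coincides with $\mathsf{affine}(-y,y)$. Applying \cref{lem:affine-unique} to $\lambda x.x * y$, we must check three things: (i) $-1 * y = -y$, which is \cref{lem:mul-io--1}; (ii) $+1 * y = y$, which is \cref{lem:mul-io-1}; and (iii) $(x_1 \oplus x_2) * y = (x_1 * y) \oplus (x_2 * y)$. Only (iii) requires real work.

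For (iii), I would introduce the auxiliary function $\ty{k}{\I \to \I}$ defined by $k(y) := (x_1 * y) \oplus (x_2 * y)$, and show by a second application of \cref{lem:affine-unique} that $k = \mathsf{affine}(-(x_1 \oplus x_2),\, x_1 \oplus x_2)$ — which, unfolding the definition of multiplication, is exactly $\lambda y.(x_1 \oplus x_2) * y$. The endpoint conditions for $k$ follow directly from \cref{lem:mul-props}(i),(ii): $k(-1) = (-x_1) \oplus (-x_2)$, which equals $-(x_1 \oplus x_2)$ by \cref{lem:neg-props}(iii), and $k(+1) = x_1 \oplus x_2$. The midpoint homomorphism property of $k$ is where the one real calculation happens: expanding $k(y_1 \oplus y_2)$ using \cref{lem:mul-props}(iii) on each factor yields $((x_1 * y_1) \oplus (x_1 * y_2)) \oplus ((x_2 * y_1) \oplus (x_2 * y_2))$, and the transpositionality axiom of midpoint algebras (\cref{def:mpa}(iii)) rearranges this into $k(y_1) \oplus k(y_2)$.

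Once $k(y) = (x_1 \oplus x_2) * y$ is established, condition (iii) is immediate, and the outer application of \cref{lem:affine-unique} delivers $\lambda x.x*y = \mathsf{affine}(-y,y) = \lambda x.y*x$, evaluating at any $\ty{x}{\I}$ gives $x * y = y * x$. The main obstacle is really just recognising that the transpositionality axiom is exactly what is needed to turn the pointwise midpoint computation into a homomorphism statement; after that, the double use of affine-uniqueness makes the proof structurally clean and avoids any need to reason directly about the interval object's underlying carrier.
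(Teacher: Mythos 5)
Your proof is correct and follows essentially the same route as the paper's: an outer application of \cref{lem:affine-unique} reducing commutativity to showing right-multiplication is a midpoint homomorphism with the correct endpoints (via \cref{lem:mul-io--1,lem:mul-io-1}), then an inner application of \cref{lem:affine-unique} whose homomorphism condition is discharged by \cref{lem:mul-props}(iii) twice plus transpositionality. The only differences are notational (the roles of \(x\) and \(y\) swapped), and your citation of \cref{lem:neg-props}(iii) for \((-x_1)\oplus(-x_2) = -(x_1\oplus x_2)\) is in fact slightly more precise than the paper's.
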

\begin{proof}
Given any \(\ty{x,y}{\I}\), we wish to show that \(x * y = y * x\); i.e.\ that \(\mathsf{affine}(-x,x,y) = y * x\).
This is proved by showing \(\mathsf{affine}(-x,x) = \lambda y.y * x)\) by \cref{lem:affine-unique}.
Therefore, we must show that (1) \(-1 * x = -x\), (2) \(+1 * x = x\) and (3) \(\mathsf{is{\hy}midpoint{\hy}hom}(\lambda y.y * x)\).

\vspace{1em}
The first two conditions are given by  \cref{lem:mul-io--1,lem:mul-io-1}.
The third is more complicated: given any \(\ty{y',z'}{\I}\) we need to show that \((y' \oplus z') * x = (y' * x) \oplus (z' * x)\).
This is given by another application of \cref{lem:affine-unique} wherein we show \(\mathsf{affine}(-(y' \oplus z'),y' \oplus z') = \lambda x.(y' * x) \oplus (z' * x)\).
Therefore, we must now show that (1) \((y' * -1) \oplus (z' * -1) = -(y' \oplus z')\), (2) \((y' * +1) \oplus (z' * +1) = (y' \oplus z')\) and (3) \(\mathsf{is{\hy}midpoint{\hy}hom}(\lambda x.(y' * x) \oplus (z' * x)\).

\vspace{1em}
For the first condition, by \cref{lem:mul-props}.(i) and (iii) we have that \((y' * -1) \oplus (z' * -1) = (-y' \oplus -z') = -(y' \oplus z')\).
For the second, by \cref{lem:mul-props}.(ii) we have that \((y' * +1) \oplus (z' * +1) = (y' \oplus z')\).
The third is again more complicated: given any \(\ty{a,b}{\I}\), the result is given by the following equational reasoning:
\begin{align*}
\ & (x * (a \oplus b)) \oplus (y * (a \oplus b)),& &\\
= & ((x * a) \oplus (x * b)) \oplus (y * (a \oplus b))& &\text{by \cref{lem:mul-props}.(iii)},\\
= & ((x * a) \oplus (x * b)) \oplus ((y * a) \oplus (y * b))& &\text{by \cref{lem:mul-props}.(iii)},\\
= & ((x * a) \oplus (y * a)) \oplus ((x * b) \oplus (y * b)) & &\text{by trans. of \(\oplus\) (\cref{def:mpa})}.
\end{align*}
\vspace{-1em}
\end{proof}

\begin{lemma}
\label{lem:mul-hom-r}
\thesislit{5}{IntervalObject}{basic-interval-object-development.*-is-\urlmidpoint-homomorphism-r}
Multiplication on \(\I\) satisfies \(((x \oplus y) * z) = (x * z) \oplus (y * z)\).
\end{lemma}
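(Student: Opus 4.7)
The plan is to derive right distributivity of multiplication over the midpoint directly from the already-established left distributivity (\cref{lem:mul-props}.(iii)) together with the commutativity of multiplication (\cref{lem:mul-comm}). Since commutativity has just been proved, this lemma should be a one-line corollary by the standard trick of swapping arguments, applying the left-hand version, and swapping back.

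Concretely, I would carry out the following equational reasoning on arbitrary \(\ty{x,y,z}{\I}\):
\begin{align*}
(x \oplus y) * z
&= z * (x \oplus y) && \text{by \cref{lem:mul-comm}}, \\
&= (z * x) \oplus (z * y) && \text{by \cref{lem:mul-props}.(iii)}, \\
&= (x * z) \oplus (y * z) && \text{by \cref{lem:mul-comm} in each summand}.
\end{align*}

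I do not anticipate any real obstacle here: the work was done in proving \cref{lem:mul-comm}, which itself invoked the uniqueness of the affine map in a nontrivial way. Given commutativity, the right-hand midpoint-homomorphism property reduces to pure rewriting along the two lemmas above. In the \textsc{Agda} formalisation this amounts to a short equational chain using the congruence of \(\oplus\) with respect to equality. No further appeal to \cref{lem:affine-unique} or to the iteration sub-properties is required, and no axioms beyond those already invoked earlier in this subsection (in particular, neither function extensionality nor univalence) need be assumed.
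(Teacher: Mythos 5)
Your proposal is correct and matches the paper's proof exactly: the paper derives this lemma ``by \cref{lem:mul-props,lem:mul-comm}'', i.e.\ precisely the swap--distribute--swap argument you spell out. No further comment is needed.
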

\begin{proof}
By \cref{lem:mul-props,lem:mul-comm}.
\end{proof}

\begin{theorem}
\thesislit{5}{IntervalObject}{basic-interval-object-development.*-assoc}
Multiplication on \(\I\) is associative.
\end{theorem}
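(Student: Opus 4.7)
The plan is to follow the same proof strategy used earlier for commutativity (Theorem~\ref{lem:mul-comm}): fix two arguments and apply the uniqueness of affine maps (\cref{lem:affine-unique}) to show that the two sides of the associativity equation are equal as functions of the third argument. Fixing $\ty{x,y}{\I}$, the function $\lambda z.(x*y)*z$ is by definition the affine map $\mathsf{affine}(-(x*y),\,x*y)$. My goal is therefore to show that $\lambda z.x*(y*z)$ is \emph{also} identical to this affine map, via the three conditions of \cref{lem:affine-unique}:
\begin{enumerate}[(i)]
    \item $x*(y*{-1}) = -(x*y)$,
    \item $x*(y*{+1}) = x*y$,
    \item $\lambda z.x*(y*z)$ is a midpoint homomorphism.
\end{enumerate}
Condition (ii) follows immediately from \cref{lem:mul-props}.(ii) applied twice on the inside. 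Condition (iii) is also routine: given $\ty{a,b}{\I}$, applying \cref{lem:mul-props}.(iii) first to $y$ and then to $x$ gives $x*(y*(a\oplus b)) = x*((y*a)\oplus(y*b)) = (x*(y*a))\oplus(x*(y*b))$.

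The main obstacle is condition (i), which reduces by \cref{lem:mul-props}.(i) to proving the auxiliary fact
\[ x*(-y) = -(x*y) \qquad \text{for all } \ty{x,y}{\I}. \]
I would prove this as a separate lemma, again by \cref{lem:affine-unique}, fixing $x$ and varying $y$. Both sides, viewed as functions of $y$, will turn out to equal $\mathsf{affine}(x,-x)$. Indeed, $\lambda y.x*(-y)$ sends $-1$ to $x*(-(-1)) = x*(+1) = x$ (using \cref{lem:neg-props}.(i) and \cref{lem:mul-props}.(ii)) and sends $+1$ to $x*(-(+1)) = x*(-1) = -x$ (using \cref{lem:neg-props}.(ii) and \cref{lem:mul-props}.(i)); it is a midpoint homomorphism as the composition of negation and $\lambda w.x*w$, both of which are homomorphisms by \cref{lem:neg-props}.(iii) and \cref{lem:mul-props}.(iii), combined with \cref{lem:midhom-comp}. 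Symmetrically, $\lambda y.-(x*y)$ sends $-1$ to $-(x*{-1}) = -(-x) = x$ (by \cref{lem:mul-props}.(i) and \cref{lem:double-neg}) and $+1$ to $-(x*{+1}) = -x$, and is a homomorphism for the same reason. By \cref{lem:affine-unique} both equal $\mathsf{affine}(x,-x)$ and hence each other.

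Putting everything together, conditions (i)--(iii) all hold, so \cref{lem:affine-unique} gives $\lambda z.x*(y*z) = \mathsf{affine}(-(x*y),x*y) = \lambda z.(x*y)*z$; evaluating at any $\ty{z}{\I}$ yields $x*(y*z) = (x*y)*z$, as required. I expect the only subtle point to be keeping track of the order of applying \cref{lem:mul-props} and \cref{lem:neg-props} inside the auxiliary lemma; everything else is a direct application of the affine-uniqueness technique already exercised in the proofs of commutativity, involutivity of negation, and the basic multiplicative identities above.
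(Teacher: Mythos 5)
Your proposal is correct and follows exactly the strategy the paper indicates for this theorem (the affine-uniqueness technique from the commutativity proof, applied in a nested fashion, with the auxiliary identity $x*(-y) = -(x*y)$ handled by a second application of \cref{lem:affine-unique}). The paper defers the details to the formalisation, but your decomposition into the three conditions of \cref{lem:affine-unique} and the verification of each is sound.
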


\noindent
The proof that multiplication is associative has a similar proof technique as the proof of commutativity (\cref{lem:mul-comm}). For the full details, we invite the interested reader to view the formalisation.

\section{Verified ternary signed-digit encodings}
\label{sec:signed-digits}

\subsection{Background and definition in our type theory}
\label{sec:signed-digits-background}

Boehm et al.'s early paper on exact real arithmetic explores multiple representations of real numbers, one of which is an approach where a real number is represented by an infinitary sequence of digits~\cite{Boehm86}.

The idea is that a (base \(2\)) \emph{digit encoding} of a real number in the compact interval \([0,d]\) (for some \(\ty{d}{\N}\)) is an infinitary seqeuence of digits \(\ty{\alpha}{\{0,...,d\}^\N}\). The real number represented by such a digit encoding \(\ty{\llbracket \alpha \rrbracket}{[0,d]}\) is defined by,
\[ \llbracket \alpha \rrbracket := \sum_{n=0}^\infty \frac{\alpha_i}{2^{n+1}} .\]

\noindent
For example, with \(d := 2\), \(\llbracket \{0,2,0,2,0,...\} \rrbracket = 0.666...\) and \(\llbracket \{0,1,0,1,0,...\} \rrbracket = 0.333...\).
This approach is problematic, however, as even addition is not definable~\cite{Boehm86}. An illuminating example is the addition of the two above realisers of \(0.666...\) and \(0.333...\). Computing only the first digit of the output requires the function to look at an infinite number of digits of the input, because (for example)
\(\llbracket \{0,2,0,2,0,...\} \rrbracket + \llbracket \{0,1,0,1,0,...,0,0,0,...\} \rrbracket = 0.9...\), whereas \(\llbracket \{0,2,0,2,0,...\} \rrbracket + \llbracket \{0,1,0,1,0,...,0,2,0,...\} \rrbracket = 1.0...\).

However, the \emph{signed-digit encoding} representation does not have this problem.
A signed-digit encoding of a real number in a compact interval \([-d,d]\) (for some \(\ty{d}{\N}\)) is an infinitary sequence of digits \(\ty{\alpha}{\seq{\{-d,...,d\}}}\).
The real number represented by such a signed-digit encoding \(\ty{\llbracket \alpha \rrbracket}{\R}\) is defined by,
\[ \llbracket \alpha \rrbracket := \sum_{n = 0}^\infty \frac{\alpha_n}{2^{n+1}} ,\]
which can be infinitely unfolded using a midpoint operator \(x \oplus y := \frac{x + y}{2}\):
\[ \llbracket \alpha \rrbracket := \alpha_0 \oplus (\alpha_1 \oplus (\alpha_2 \oplus ...)) .\]

This representation of the reals uses extra `redundant' digits than is necessary for representing each number in the interval.
The redundant digits, however, are what enable addition (and multiplication) to be defined at the expense of having multiple representations for the same real number.
In the previous addition example, the second item of the sequence could have been set as \(2\) and then, if necessary, corrected by negative digits later~\cite{Plume98}~\cite{Escardo11fun}.

In this section, we recall the type of \emph{ternary signed-digit encodings} of real numbers in the compact interval \([-1,1]\), as extensively explored in the literature of exact real arithmetic~\cite{Gianantonio93,Plume98,BergerCoinductive}.
Every real number in \([-1,1]\) can be represented by a function of type \(\N \to \{-1,1\}\), but by using the redundant representation \(\N \to \{-1,0,1\}\) we can perform exact real arithmetic on representations of this compact interval.

Our formalisation is based on \Escardo's \textsc{Haskell} library for exact real computation on ternary signed-digits~\cite{Escardo11fun}.
We first convert his definitions of various arithmetic functions to \textsc{Agda}, which is non-trivial as we must convince \textsc{Agda}'s termination checker.
Following this, we verify the correctness of his algorithms using the interval object seen in \cref{sec:interval-object}, showing that they encode the correct operations on the real numbers.

\begin{definition}
    \thesislit{5}{SignedDigit}{\urlthree}
The type of \emph{ternary digits} \(\3\), equivalent to \(\F(3)\), is defined by its elements \(\ty{\overline{1},0,1}{\3}\).
\end{definition}

\begin{definition}
\thesislit{5}{SignedDigit}{\urlthree\urlsuperscriptN}
A \emph{ternary signed-digit encoding} of a real number in \([-1,1]\) is any function \(\ty{\alpha}{\seq \3}\).
\end{definition}

\subsection{Representation via the interval object}
\label{sec:representation-map}

A \emph{representation map} takes a representation of a real to the real that it represents.

\begin{definition}
\label{def:representation-map}
\thesislit{5}{SignedDigitIntervalObject}{_realises\urlsuperscriptone_}
\thesislit{5}{SignedDigitIntervalObject}{_realises\urlsuperscripttwo_}
\thesislit{5}{SignedDigitIntervalObject}{_realises\urlsuperscriptN_}
Given a type of real numbers \(\R\), a type for representing reals \(K\) and a representation map \(\ty{\llbracket - \rrbracket}{K \to \R}\), the \(n\)-ary function \(\ty{\overline f}{K^n \to K}\) (for \(\ty{n}{\N}\)) \emph{realises} a function \(\ty{f}{\R^n \to \R}\) if,
\[ \pitye{(x_0 ,..., x_{n-1})}{K^n} \llbracket \overline f \left(x_0 ,..., x_{n-1}\right) \rrbracket = f \left( \llbracket x_0 \rrbracket ,..., \llbracket x_{n-1} \rrbracket \right) .\]
\end{definition}

\noindent
The idea here is that in order to verify \(\ty{\overline f}{K^n \to K}\) realises \(\ty{f}{\R \to \R}\), using the representation map \(\ty{\llbracket - \rrbracket}{K \to \R}\), we show that the diagram in \cref{fig:commute} commutes.

\begin{figure}[ht]
\centering
\tikzset{ampersand replacement=\&}
    \hspace{-1.5cm}
    \begin{tikzcd}[sep=large]
	{K^n} \& {K} \\
	{\R^n} \& {\R}
	\arrow["{\overline f}", from=1-1, to=1-2]
	\arrow["{f}"', from=2-1, to=2-2]
	\arrow["{\llbracket - \rrbracket \ \x \ ... \ \x \ \llbracket - \rrbracket}"', from=1-1, to=2-1]
	\arrow["{\llbracket - \rrbracket}", from=1-2, to=2-2]
    \end{tikzcd}
\caption{Commutative diagram illustrating \cref{def:representation-map}.}
\label{fig:commute}
\end{figure}

The identity map always has a realiser, and the composition of two realisers is a realiser of the composition of the two realised functions; i.e.\ the diagrams in \cref{fig:id-comp-commute} commute.

\begin{figure}[ht]
\centering
\tikzset{ampersand replacement=\&}
\begin{floatrow}
    \begin{tikzcd}[sep=large]
	{K} \& {K} \\
	{\R} \& {\R}
	\arrow["{\mathsf{id}_K}", from=1-1, to=1-2]
	\arrow["{\mathsf{id}_\R}"', from=2-1, to=2-2]
	\arrow["{\llbracket - \rrbracket}"', from=1-1, to=2-1]
	\arrow["{\llbracket - \rrbracket}", from=1-2, to=2-2]
    \end{tikzcd}
    \ \ \
	\begin{tikzcd}[sep=large]
	K \& K \& K \\
	\R \& \R \& \R
	\arrow["{\overline f}", from=1-1, to=1-2]
	\arrow["f"', from=2-1, to=2-2]
	\arrow["{\llbracket - \rrbracket}"', from=1-1, to=2-1]
	\arrow["{\llbracket - \rrbracket}"{description}, from=1-2, to=2-2]
	\arrow["g"', from=2-2, to=2-3]
	\arrow["{\overline g}", from=1-2, to=1-3]
	\arrow["{\llbracket - \rrbracket}", from=1-3, to=2-3]
\end{tikzcd}
\end{floatrow}
\caption{Commutative diagrams illustrating \cref{lem:realise-id,lem:realise-comp}.}
\label{fig:id-comp-commute}
\end{figure}

\begin{lemma}
\label{lem:realise-id}
\label{lem:realise-comp}
\thesislit{5}{SignedDigitIntervalObject}{id-realiser}
\thesislit{5}{SignedDigitIntervalObject}{\urlcirc-realiser}
The identity map is realised by the identity map, and composition preserves realisers.
\end{lemma}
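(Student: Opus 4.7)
The plan is to verify each of the two claims by direct unfolding of \cref{def:representation-map}, which requires us to establish pointwise equalities of maps \(K \to \R\). In both cases the proofs are essentially \emph{reflexivity combined with function application}, with no combinatorial content beyond chasing the commuting square in \cref{fig:id-comp-commute}.

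For the identity clause, I would fix an arbitrary \(\ty{x}{K}\) and observe that \(\llbracket \mathsf{id}_K(x) \rrbracket = \llbracket x \rrbracket = \mathsf{id}_\R(\llbracket x \rrbracket)\), where both equalities hold definitionally by the reduction rules for the identity function on \(K\) and on \(\R\). The required proof term is therefore just \(\mathsf{refl}(\llbracket x \rrbracket)\), witnessing that the outer square in the left diagram of \cref{fig:id-comp-commute} trivially commutes.

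For the composition clause, given realisers \(\ty{\overline f}{K \to K}\) of \(\ty{f}{\R \to \R}\) and \(\ty{\overline g}{K \to K}\) of \(\ty{g}{\R \to \R}\), with associated proof terms
\[
\ty{\overline{f}\mathsf{{\hy}real}}{\Pity{x}{K}{\llbracket \overline f(x) \rrbracket = f(\llbracket x \rrbracket)}},
\quad
\ty{\overline{g}\mathsf{{\hy}real}}{\Pity{y}{K}{\llbracket \overline g(y) \rrbracket = g(\llbracket y \rrbracket)}},
\]
I would fix an arbitrary \(\ty{x}{K}\) and establish \(\llbracket (\overline g \circ \overline f)(x) \rrbracket = (g \circ f)(\llbracket x \rrbracket)\) by a two-step chain: first instantiate \(\overline{g}\mathsf{{\hy}real}\) at \(\overline f(x)\) to obtain \(\llbracket \overline g(\overline f(x)) \rrbracket = g(\llbracket \overline f(x) \rrbracket)\), then apply the \(\mathsf{ap}\) rule (\cref{sec:id}) to \(g\) together with \(\overline{f}\mathsf{{\hy}real}(x)\) to obtain \(g(\llbracket \overline f(x) \rrbracket) = g(f(\llbracket x \rrbracket))\), and compose these with the transitivity rule for identities.

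There is no real obstacle here: both proofs are short routine computations, and this lemma is essentially a lubricant for subsequent realisation arguments (such as the verifications of negation, midpoint and multiplication realisers advertised in the thesis outline). The only mild bookkeeping concern is that \cref{def:representation-map} is stated for \(n\)-ary functions, so if a fully general composition result is desired one must decide whether to compose along the codomain (which is what the commutative square depicts, and what I would formalise) or to handle pre-composition with tuples of realisers; I would stick to the codomain-composition version suggested by \cref{fig:id-comp-commute}, since the \(n\)-ary case is not used in the surrounding verification work.
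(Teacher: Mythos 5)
Your proof is correct and takes essentially the same route as the paper's: reflexivity for the identity clause, and for composition the two-step chain \(\llbracket \overline g(\overline f(x)) \rrbracket = g(\llbracket \overline f(x) \rrbracket) = g(f(\llbracket x \rrbracket))\) obtained from the two realiser witnesses together with \(\mathsf{ap}\) and transitivity. If anything, you attribute each step to the correct hypothesis (\(\overline g\)'s realiser for the first equality, \(\overline f\)'s for the second), whereas the paper's prose swaps the two justifications.
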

\begin{proof}
For the identity map, we need to show that for all \(\ty{k}{K}\) we have \(\llbracket \mathsf{id}_K(k) \rrbracket = \mathsf{id}_{\R}\left( \llbracket k \rrbracket \right)\); this is clearly the case by reflexivity.
For composition of functions \(\ty{\overline f,\overline g}{K \to K}\), which realise \(\ty{f,g}{\R \to \R}\) respectively, we want to show that \(\llbracket \overline g(\overline f(k)) \rrbracket = g(f(\llbracket k \rrbracket))\). This is the case because \(\llbracket \overline g(\overline f(k)) \rrbracket = g (\llbracket \overline f (x) \rrbracket )\) (because \(\overline f\) realises \(f\)) and then \(g (\llbracket \overline f (x) \rrbracket ) = g(f(\llbracket k \rrbracket))\) (because \(\overline g\) realises \(g\)).
\end{proof}

We define the \emph{representation map} for ternary signed-digit encodings, which maps such encodings of reals in \([-1,1]\) to the numbers they represent on the interval object \(\I\), using the infinitary midpoint operator \(\ty{M}{\seq{\I} \to \I}\).

\begin{definition}
\label{def:repr-map-digits}
\thesislit{5}{SignedDigitIntervalObject}{\urllangle_\urlrangle}
We define the \emph{representation map} from ternary digits \(\3\) to the interval object \(\I\),
\begin{alignat*}{3}
\langle - \rangle &: \3 \to \I,\\
\langle \overline 1 \rangle &:= -1,\\
\langle           0 \rangle &:= 0,\\
\langle           1 \rangle &:= +1,
\end{alignat*}
\end{definition}

\begin{definition}
\label{def:repr-map-signed-digits}
\thesislit{5}{SignedDigitIntervalObject}{\urlllangle_\urlrrangle}
We define the \emph{representation map} from ternary signed-digit encodings \(\seq \3\) to the interval object \(\I\),
\begin{alignat*}{3}
\llangle - \rrangle &: \K \to \I,\\
\llangle \alpha \rrangle & := M(\mathsf{map}( \langle - \rangle,\alpha)).
\end{alignat*}
\end{definition}

In order to verify that an operation on ternary digits \(\ty{\overline{f'}}{\3^n \to \3}\) or one on ternary signed-digit encodings \(\ty{\overline f}{(\K)^n \to \K}\) correctly realises an operation on the interval object \(\ty{f}{\I^n \to \I}\), we will show that the diagrams in \cref{fig:ternary-commute} commute.

\begin{figure}[ht]
\centering
\tikzset{ampersand replacement=\&}
\begin{floatrow}
    \begin{tikzcd}[sep=large]
	{\3^n} \& {\3} \\
	{\I^n} \& {\I}
	\arrow["{\overline {f'}}", from=1-1, to=1-2]
	\arrow["{f}"', from=2-1, to=2-2]
	\arrow["{\langle - \rangle^n}"', from=1-1, to=2-1]
	\arrow["{\langle - \rangle}", from=1-2, to=2-2]
    \end{tikzcd}
    \ \ \
    \begin{tikzcd}[row sep=large, column sep=large]
	(\seq\3)^n \& \seq\3 \\
	\I^n \& \I
	\arrow["{\overline f}", from=1-1, to=1-2]
	\arrow["{f}"', from=2-1, to=2-2]
	\arrow["{\llangle - \rrangle^n}"', from=1-1, to=2-1]
	\arrow["{\llangle - \rrangle}", from=1-2, to=2-2]
    \end{tikzcd}
\end{floatrow}
\caption{Commutative diagrams illustrating \cref{def:representation-map} (which is itself illustrated in \cref{fig:commute}) using the representation maps defined in \cref{def:repr-map-digits} (left) and \cref{def:repr-map-signed-digits} (right).}
\label{fig:ternary-commute}
\end{figure}

\subsection{Exact real arithmetic}
\label{sec:signed-digits-era}

\subsubsection{Negation}

Negation on signed-digit encodings is straightforward to define --- we simply flip every digit of the sequence.

\begin{definition}
\label{def:neg-sd}
\thesislit{5}{SignedDigit}{flip}
\thesislit{5}{SignedDigit}{neg}
We first define the negation function \(\mathsf{flip}\) on ternary digits in the expected way:
\begin{alignat*}{3}
\mathsf{flip} &: \3 && \to \3, \\
\mathsf{flip} &(\overline 1) && :=           1 ,\\
\mathsf{flip} &(          0) && :=           0 ,\\
\mathsf{flip} &(          1) && := \overline 1 .
\end{alignat*}
Then, the negation function \(\mathsf{neg}\) on ternary signed-digit encodings is defined as follows:
\begin{alignat*}{3}
\mathsf{neg} &: \K && \to \K, \\
\mathsf{neg} &(x) &&:= \mathsf{map}(\mathsf{flip},x) .
\end{alignat*}
\end{definition}

We then verify these operations; i.e.\ we show that the following diagrams commute.

\begin{figure}[ht]
\centering
\tikzset{ampersand replacement=\&}
\begin{floatrow}
    \begin{tikzcd}[sep=large]
	{\3} \& {\3} \\
	{\I} \& {\I}
	\arrow["{\mathsf{flip}}", from=1-1, to=1-2]
	\arrow["{-}"', from=2-1, to=2-2]
	\arrow["{\langle - \rangle}"', from=1-1, to=2-1]
	\arrow["{\llangle - \rrangle}", from=1-2, to=2-2]
    \end{tikzcd}
    \ \ \
    \begin{tikzcd}[baseline=-0.05cm, row sep=large, column sep=large]	\seq\3 \& \seq\3 \\
	\I \& \I
	\arrow["{\mathsf{neg}}", from=1-1, to=1-2]
	\arrow["{\oplus}"', from=2-1, to=2-2]
	\arrow["{\llangle - \rrangle}"', from=1-1, to=2-1]
	\arrow["{\llangle - \rrangle}", from=1-2, to=2-2]
    \end{tikzcd}
\end{floatrow}
\caption{Commutative diagrams illustrating \cref{lem:neg-pw-realise} (left) and \cref{lem:neg-realise} (right).}
\label{fig:neg-commute}
\end{figure}

\begin{lemma}
\label{lem:neg-pw-realise}
\thesislit{5}{SignedDigitIntervalObject}{flip-realiser}
The negation function on ternary digits realises the negation function on the interval object.
\end{lemma}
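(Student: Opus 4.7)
The plan is to prove this by a straightforward three-way case analysis on the ternary digit argument, leveraging the properties of negation on the interval object that we established earlier in Lemma \ref{lem:neg-props} and Corollary \ref{lem:neg-zero}. Concretely, I need to show that for every $\ty{d}{\3}$ we have $\langle \mathsf{flip}(d) \rangle = -\langle d \rangle$, which is exactly what it means (via Definition \ref{def:representation-map} specialised to the left-hand diagram in Figure \ref{fig:neg-commute}) for $\mathsf{flip}$ to realise negation on $\I$.

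First I would unfold both sides according to Definition \ref{def:neg-sd} of $\mathsf{flip}$ and Definition \ref{def:repr-map-digits} of $\langle - \rangle$. In the case $d := \overline 1$, the left-hand side reduces to $\langle 1 \rangle = +1$ and the right-hand side reduces to $-(-1)$, which equals $+1$ by Lemma \ref{lem:neg-props}.(i). In the case $d := 1$, the left-hand side reduces to $\langle \overline 1 \rangle = -1$ and the right-hand side reduces to $-(+1)$, which equals $-1$ by Lemma \ref{lem:neg-props}.(ii). In the case $d := 0$, the left-hand side reduces to $\langle 0 \rangle = 0$ and the right-hand side reduces to $-0$, which equals $0$ by Corollary \ref{lem:neg-zero}.

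There is no real obstacle here: the proof is purely definitional unfolding followed by the three equational facts we already derived about how negation acts on the distinguished points $-1$, $0$ and $+1$ of the interval object. The only mildly non-routine ingredient is the $d := 0$ case, which relies on the derived fact that negation has $0$ as a fixed point (\cref{lem:neg-zero}) --- this is not part of the axiomatic behaviour of $-$ on the endpoints but a consequence of the midpoint homomorphism property. Note that this pointwise lemma on digits is precisely the fact we will need to bootstrap the subsequent \cref{lem:neg-realise} on whole sequences, where the infinitary midpoint homomorphism property of the affine map $-$ (via \cref{cor:fg-approx} or directly via the $M$-homomorphism property of affine maps) will propagate this digit-level agreement up to agreement on entire signed-digit encodings.
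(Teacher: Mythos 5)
Your proof is correct and follows exactly the paper's approach: a three-way case split on the ternary digit, discharging the $\overline 1$ and $1$ cases by the endpoint-mapping clauses of \cref{lem:neg-props} and the $0$ case by \cref{lem:neg-zero}. No issues.
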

\begin{proof}
We prove \(\langle \mathsf{flip}(t) \rangle = - \langle t \rangle\) by case splitting on the given \(\ty{t}{\3}\).
In the \(t := 0\) case, we show \(\langle 0 \rangle = - \langle 0 \rangle\) by \cref{lem:neg-zero}.
In the other two cases, the proof follows by the mapping of the endpoints in \cref{lem:neg-props}.
\end{proof}

In order to verify negation on signed-digit encodings, we use the fact that the \(\mathsf{map}\) function preserves a realiser; this is illustrated by the diagram in \cref{fig:map-commute}.

\begin{figure}[ht]
\tikzset{ampersand replacement=\&}
    \centering
\begin{floatrow}
    \begin{tikzcd}[sep=large]
	\3 \& \3 \\
	\I \& \I
	\arrow["{\langle - \rangle}"', from=1-1, to=2-1]
	\arrow["f"', from=2-1, to=2-2]
	\arrow["{\langle - \rangle}", from=1-2, to=2-2]
	\arrow["{\overline f}", from=1-1, to=1-2]
    \end{tikzcd}
    \(\Longrightarrow\)
    \begin{tikzcd}[sep=large]
	\seq\3 \& \seq\3 \\
	\I \& \I
	\arrow["{\llangle - \rrangle}"', from=1-1, to=2-1]
	\arrow["f"', from=2-1, to=2-2]
	\arrow["{\llangle - \rrangle}", from=1-2, to=2-2]
	\arrow["{\mathsf{map} \ \overline f}", from=1-1, to=1-2]
    \end{tikzcd}
\end{floatrow}
    \caption{Commutative diagram illustrating \cref{lem:map-realise}.}
    \label{fig:map-commute}
\end{figure}

\begin{lemma}
\label{lem:map-realise}
\thesislit{5}{SignedDigitIntervalObject}{map-realiser}
If a function \(\ty{\overline{f}}{\3 \to \3}\) on ternary digits realises a midpoint homomorphism \(\ty{f}{\I \to \I}\) on the interval object, then \(\ty{ \mathsf{map}(\overline f) }{\K \to \K}\) realises \(f\) on signed-digit encodings.
\end{lemma}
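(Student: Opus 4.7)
The plan is to establish the equation $\llangle \mathsf{map}(\overline f, \alpha) \rrangle = f(\llangle \alpha \rrangle)$ for every $\alpha \in \K$ by unfolding the representation map on both sides and then commuting $M$ past the pointwise action of $f$, using the fact that any midpoint homomorphism on the interval object is automatically an iterated midpoint homomorphism.

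First, I would unfold the left-hand side by Definition \ref{def:repr-map-signed-digits}, which gives $\llangle \mathsf{map}(\overline f, \alpha) \rrangle = M(\mathsf{map}(\langle - \rangle, \mathsf{map}(\overline f, \alpha)))$. By the defining clauses of $\mathsf{map}$, this inner sequence agrees at every index $n$ with $\langle \overline f(\alpha_n) \rangle$. The hypothesis that $\overline f$ realises $f$ is precisely the pointwise equation $\langle \overline f(t) \rangle = f(\langle t \rangle)$ for all $t \in \3$, so the sequence agrees pointwise with $\lambda n.\, f(\langle \alpha_n \rangle)$, which in turn is $\mathsf{map}(f, \mathsf{map}(\langle - \rangle, \alpha))$. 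Applying function extensionality, the two sequences under $M$ are equal, hence $M$ assigns them the same value.

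For the final step, because $f$ is a midpoint homomorphism on the convex body $(\I, \oplus, M)$, Lemma \ref{lem:mp-hom-M-hom} upgrades it to an iterated midpoint homomorphism. Consequently,
\[ M(\mathsf{map}(f, \mathsf{map}(\langle - \rangle, \alpha))) = f(M(\mathsf{map}(\langle - \rangle, \alpha))) = f(\llangle \alpha \rrangle), \]
which closes the chain of equalities and yields the claim.

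The main obstacle I anticipate is not conceptual but bookkeeping: threading the several applications of $\mathsf{map}$-functoriality and the pointwise rewriting cleanly so that they line up inside $M$. If one wishes to avoid an explicit appeal to function extensionality here, Corollary \ref{cor:fg-approx} provides an alternative route, since the two sequences in question are trivially $n$-approximately equal for every $n$, which already forces their $M$-values to coincide. Either path reduces the proof to the already-established upgrade from midpoint homomorphism to $M$-homomorphism.
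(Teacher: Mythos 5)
Your proof is correct and follows essentially the same route as the paper's: unfold the representation map, rewrite pointwise using the hypothesis that \(\overline f\) realises \(f\) (via function extensionality), and then commute \(M\) past \(f\) using \cref{lem:mp-hom-M-hom}. The paper's proof also invokes function extensionality here, so your main path matches it exactly; the aside about \cref{cor:fg-approx} is a reasonable extra observation but not needed.
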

\begin{proof} \axioms{f}
We want to show that, for all \(\ty{\alpha}{\K}\), \(\llangle \mathsf{map}(\overline f,\alpha) \rrangle = f(\llangle \alpha \rrangle)\).
By function extensionality and \cref{def:repr-map-signed-digits}, the left-hand side of the equation becomes \(M(\lambda n.\langle \overline f (\alpha_n) \rangle)\).
By the fact that \(\overline f\) realises \(f\), it further becomes \(M(\lambda n. f(\langle \alpha_n \rangle))\).
By \cref{lem:mp-hom-M-hom}, \(f\) is an \(M\)-homomorphism, and therefore the equation becomes \(f(M(\lambda n. \langle \alpha_n \rangle))\); which is definitionally equal to the conclusion by \cref{def:repr-map-signed-digits}.
\end{proof}

\begin{theorem}
\label{lem:neg-realise}
\thesislit{5}{SignedDigitIntervalObject}{neg-realiser}
The negation function on signed-digit encodings realises the negation function on the interval object.
\end{theorem}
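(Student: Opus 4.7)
The plan is to apply \cref{lem:map-realise} directly, since by definition $\mathsf{neg} := \mathsf{map}(\mathsf{flip})$, and the hypotheses of that lemma are already established in the preceding material. In other words, the proof is essentially an assembly of three ingredients, with no new technical content required.

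First I would unfold the definition of $\mathsf{neg}$ from \cref{def:neg-sd} to reduce the goal to showing that $\mathsf{map}(\mathsf{flip})$ realises $-$ on signed-digit encodings. To invoke \cref{lem:map-realise} I then need two things: (i) that $\mathsf{flip}$ realises $-$ pointwise on digits, which is exactly \cref{lem:neg-pw-realise}; and (ii) that negation on $\I$ is a midpoint homomorphism, which is \cref{lem:neg-props}.(iii). With both hypotheses discharged, \cref{lem:map-realise} gives the conclusion immediately.

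I do not anticipate any obstacle here — the whole point of factoring out \cref{lem:map-realise} earlier was to make realisers of pointwise operations on digits lift automatically to realisers on signed-digit encodings, provided the target operation on $\I$ is a midpoint homomorphism. Negation satisfies that hypothesis by the very fact that it was defined via the affine map, whose defining property includes being a midpoint homomorphism (cf.\ \cref{lem:affine-midhom,lem:neg-props}). Consequently the proof is just three lines citing \cref{def:neg-sd}, \cref{lem:neg-pw-realise}, \cref{lem:neg-props}, and \cref{lem:map-realise}, and will implicitly rely on function extensionality through the latter.
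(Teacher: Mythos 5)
Your proposal is correct and matches the paper's proof exactly: the paper also concludes by combining \cref{lem:map-realise}, \cref{lem:neg-pw-realise}, and \cref{lem:neg-props} (with function extensionality inherited from \cref{lem:map-realise}). No further comment is needed.
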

\begin{proof} \axioms{f}
By \cref{lem:map-realise,lem:neg-pw-realise,lem:neg-props}.
\end{proof}

\subsubsection{Binary midpoint}

The midpoint functions --- both binary and infinitary, which are intended to realise \(\oplus\) and \(M\), respectively --- are much more complicated to define on signed-digit encodings. We follow the definitions \Escardo~gave in \cite{Escardo11fun}, formalising them in \textsc{Agda} for our framework.

The binary midpoint defined here sums two encodings of type \(\seq \3\) (which encodes the interval \([-1,1]\)) to achieve an encoding of type \(\seq \5\) (which encodes the interval \([-2,2]\)) which is then divided by two to output an encoding once again in \(\seq \3\).

\begin{definition}
\thesislit{5}{SignedDigit}{\urlfive}
The type of \emph{quinary digits} \(\5\), equivalent to \(\F(5)\), is defined by its elements \(\ty{\overline{2},\overline{1},0,1,2}{\5}\).
\end{definition}

\begin{definition}
\thesislit{5}{SignedDigit}{_+\urlthree_}
We define the addition function on ternary digits \(\ty{\mathsf{add\3}}{\3 \to \3 \to \5}\) in the expected way by pattern matching (e.g.\ \(\mathsf{add\3}(\overline 1,\overline 1) := \overline 2)\), etc.).
\end{definition}

\noindent
Now we define the halving function \(\ty{\mathsf{div2}}{\seq \5 \to \K}\) corecursively (though, as we see in the below \cref{remark:div2-aux}, we define the function by induction in our \textsc{Agda} formalisation).
When the first element is \(\overline 2, 0\) or \(2\), the output is straightforward. Otherwise, the definition must `offset' the output using the redundant digits \(\overline 1\) and \(1\).

\begin{definition}
\label{def:div2}
\thesislit{5}{SignedDigit}{div2}
We define the halving function from quinary signed-digit encodings to ternary signed-digit encodings as,
\begin{alignat*}{6}
\mathsf{div2} &: \seq \5 &\to \K, & & & &\\
\mathsf{div2} & (\overline 2                &:: \alpha) & := \overline 1                &\ :: \mathsf{div2} &(              &\alpha),\\
\mathsf{div2} & (\overline 1 :: \overline 2 &:: \alpha) & := \overline 1            &\ :: \mathsf{div2} &(0 ::            &\alpha),\\
\mathsf{div2} & (\overline 1 :: \overline 1 &:: \alpha) & := \overline 1                &\ :: \mathsf{div2} &(          1 :: &\alpha),\\
\mathsf{div2} & (\overline 1 ::           0 &:: \alpha) & :=           0 &\ :: \mathsf{div2} &(\overline 2 :: &\alpha),\\
\mathsf{div2} & (\overline 1 ::           1 &:: \alpha) & :=           0                &\ :: \mathsf{div2} &(\overline 1 :: &\alpha),\\
\mathsf{div2} & (\overline 1 ::           2 &:: \alpha) & :=           0            &\ :: \mathsf{div2} &(   0 ::        &\alpha),\\
\mathsf{div2} & (          0                &:: \alpha) & :=           0                &\ :: \mathsf{div2} &(               &\alpha),\\
\mathsf{div2} & (          1 :: \overline 2 &:: \alpha) & :=           0 &\ :: \mathsf{div2} &(   0 ::        &\alpha),\\
\mathsf{div2} & (          1 :: \overline 1 &:: \alpha) & :=           0                &\ :: \mathsf{div2} &(          1 :: &\alpha),\\
\mathsf{div2} & (          1 ::           0 &:: \alpha) & :=           0             &\ :: \mathsf{div2} &(   2 ::        &\alpha),\\
\mathsf{div2} & (          1 ::           1 &:: \alpha) & :=           1                &\ :: \mathsf{div2} &(\overline 1 :: &\alpha),\\
\mathsf{div2} & (          1 ::           2 &:: \alpha) & :=           1 &\ :: \mathsf{div2} &(   0 ::        &\alpha),\\
\mathsf{div2} & (          2                &:: \alpha) & :=           1                &\ :: \mathsf{div2} &(               &\alpha).
\end{alignat*}
\end{definition}

\begin{remark}
\label{remark:div2-aux}
\thesislit{5}{SignedDigit}{div2-aux}
We actually define this function in \textsc{Agda} using an auxiliary function \[\ty{\mathsf{div2'}}{\5 \x \5 \to \3 \x \5}\] such that we define
\begin{alignat*}{3}
\mathsf{div2} &(\alpha)_0 &&:= \mathsf{pr_1}(\mathsf{div2'} (\alpha_0,\alpha_1)),\\
\mathsf{div2} &(\alpha)_{n+1} &&:= \mathsf{div2}(\mathsf{pr_2}(\mathsf{div2'}(\alpha_0,\alpha_1) :: \mathsf{tail}(\mathsf{tail} \ \alpha))).
\end{alignat*}
The values of \(\mathsf{div2'}\) can be seen in \cref{def:div2}. For example \(\mathsf{div2'}(\overline{1},\overline{2}) := (\overline{1},0)\).
\end{remark}

\noindent
We use this function to define the binary midpoint function.

\begin{definition}
\label{def:mid-sd}
\thesislit{5}{SignedDigit}{mid}
The binary midpoint function on ternary signed-digit encodings is defined by adding the two sequences and then halving the result:
\begin{alignat*}{3}
\mathsf{mid} &: \K \to &&\K \to \K, \\
\mathsf{mid} &(\alpha,\beta) && := \mathsf{div2}(\mathsf{zipWith}(\mathsf{add\3},\alpha,\beta)).
\end{alignat*}
\end{definition}

\noindent
To verify that \(\mathsf{mid}\) correctly realises \(\oplus\), we utilise the following halving map \(\mathsf{half}\) and prove its relationship to \(\mathsf{div2}\)

\begin{definition}
\thesislit{5}{SignedDigitIntervalObject}{half}
We define the halving map from quinary digits to the interval object as,
\begin{alignat*}{3}
\mathsf{half} &: \5 &&\to \I, \\
\mathsf{half} &(\overline 2) &&:= -1, \\
\mathsf{half} &(\overline 1) &&:= -1 \oplus 0, \\
\mathsf{half} &(          0) &&:= 0,  \\
\mathsf{half} &(          1) &&:= +1 \oplus 0, \\
\mathsf{half} &(          2) &&:= +1.
\end{alignat*}
\end{definition}

\begin{figure}[ht]
\centering
\tikzset{ampersand replacement=\&}
\begin{floatrow}
    \begin{tikzcd}[sep=large]
	{\3 \x \3} \& \5 \\
	{\I \x \I} \& \I
	\arrow["{\mathsf{add}\3}", from=1-1, to=1-2]
	\arrow["{\oplus}"', from=2-1, to=2-2]
	\arrow["{\langle - \rangle \x \langle - \rangle}"', from=1-1, to=2-1]
	\arrow["{\mathsf{half}}", from=1-2, to=2-2]
    \end{tikzcd}
    \ \ \
    \begin{tikzcd}[row sep=large, column sep = huge]
	{\seq\3 \x \seq\3} \& \seq\5 \& \seq\3 \\
	{\I \x \I} \& \I \& \I
	\arrow["{\mathsf{zipWith}(\mathsf{add}\3)}", from=1-1, to=1-2]
	\arrow["\oplus"', from=2-1, to=2-2]
	\arrow["{\llangle - \rrangle \x \llangle - \rrangle}"', from=1-1, to=2-1]
	\arrow["{M \circ \mathsf{map}(\mathsf{half})}"{description}, from=1-2, to=2-2]
	\arrow[Rightarrow, no head, from=2-2, to=2-3]
	\arrow["{\mathsf{div2}}", from=1-2, to=1-3]
	\arrow["{\llangle - \rrangle}", from=1-3, to=2-3]
	\arrow["{\mathsf{mid}}", curve={height=-30pt}, from=1-1, to=1-3]
\end{tikzcd}
\end{floatrow}
\caption{Commutative diagrams illustrating \cref{lem:half-add-pw} (left) and \cref{lem:half-add,lem:div2-realise,cor:mid-realise} (right).}
\label{fig:mid-commute}
\end{figure}

We first prove the following lemma about the auxiliary function \(\mathsf{div2'}\) (\cref{remark:div2-aux}).

\begin{lemma}
\label{lem:div2-aux}
\thesislit{5}{SignedDigitIntervalObject}{div2-aux-\urleq}
Given any \(\ty{x,y}{\5}\) and \(\ty{z}{\I}\), we have \[\langle a \rangle \oplus (\mathsf{half} \ b \oplus z) = (\mathsf{half} \ x \oplus (\mathsf{half} \ y \oplus z)),\]
where \(\ty{a,b}{\3 \x \5} := \mathsf{div2'}(\alpha_0,\alpha_1)\).
\end{lemma}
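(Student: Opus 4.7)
The plan is a brute-force case analysis on the pair $(x,y) \in \5 \x \5$ — twenty-five cases in total — each following the table of values defining $\mathsf{div2'}$ (\cref{def:div2,remark:div2-aux}). In each case, the pair $(a,b) = \mathsf{div2'}(x,y)$ is a fixed pair of digits, so both sides of the equation become concrete midpoint expressions in the three points $-1$, $0$ and $+1$ (together with $z$), which must then be shown equal using only the midpoint algebra axioms from \cref{def:mpa}: idempotency, commutativity, and transpositionality.

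The easy cases are those where the first argument is already unambiguous, namely $x = \overline 2$, $x = 0$, $x = 2$ (and their mirror images under the algorithm when only one case is used, such as when $y$ has the "clean" value). For instance if $x = \overline 2$ then $(a,b) = (\overline 1, y)$, so both sides literally reduce to $-1 \oplus (\mathsf{half}\ y \oplus z)$ after unfolding $\langle \overline 1 \rangle = -1$ and $\mathsf{half}(\overline 2) = -1$, and the equation holds by reflexivity. The same kind of trivial reduction covers $x \in \{0, 2\}$. This disposes of fifteen of the twenty-five cases outright.

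The interesting cases are the ten where $x \in \{\overline 1, 1\}$, in which the algorithm peeks at $y$ and performs the redundancy correction. These all reduce to an instance of the same algebraic identity pattern: one must show, for suitable $p, q, r, s \in \I$, something of the shape
\[
 p \oplus (q \oplus z) \;=\; (p \oplus r) \oplus (s \oplus z),
\]
where the point is that $p \oplus r$ captures $\mathsf{half}\ x$ (a midpoint of an endpoint with $0$), $s \oplus z$ captures $\mathsf{half}\ y \oplus z$, and the correction digit $q$ is chosen precisely so that the transpositional rearrangement $(p \oplus r) \oplus (s \oplus z) = (p \oplus s) \oplus (r \oplus z)$ followed by idempotency on $p \oplus s$ (or $p \oplus p = p$ after one more commutation) produces the left-hand side. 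A representative case is $x = \overline 1$, $y = \overline 2$, giving $(a,b) = (\overline 1, 0)$; here the required identity $-1 \oplus (0 \oplus z) = (-1 \oplus 0) \oplus (-1 \oplus z)$ follows by transpositionality $(-1 \oplus 0) \oplus (-1 \oplus z) = (-1 \oplus -1) \oplus (0 \oplus z)$ and then idempotency $-1 \oplus -1 = -1$. Each of the other nine correction cases unfolds analogously.

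The main obstacle is not mathematical depth but the bookkeeping: formalising twenty-five cases with the correct witness of the algebraic identity in each. The strategy is therefore to isolate a small family of lemmas — essentially (i) idempotency, (ii) the "swap" identity $(p \oplus q) \oplus (r \oplus s) = (p \oplus r) \oplus (q \oplus s)$, and (iii) the absorbing identity $(p \oplus q) \oplus (p \oplus s) = p \oplus (q \oplus s)$ derived from (i) and (ii) — and then mechanically apply the appropriate composition in each case. In \textsc{Agda}, this amounts to pattern matching on $x, y : \5$ and discharging each branch by equational reasoning; most branches will be literally \texttt{refl} after unfolding, and the remaining ten will be one- or two-step rewrites using the identities above.
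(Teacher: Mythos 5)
Your proposal is correct and matches the paper's own proof: a case split on $(x,y)$ in which the cases with $x \in \{\overline 2, 0, 2\}$ are immediate and the remaining ten are discharged by rearrangement via idempotency, commutativity and transpositionality of $\oplus$. Your worked example for $x = \overline 1$, $y = \overline 2$ is exactly the kind of two-step transposition-then-idempotency argument the formalisation uses.
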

\begin{proof}
The cases where \(a := \{\overline{2},0,2\}\) are trivial.
The ten other cases all require slightly different approaches, but all are just the rearrangement of elements of the midpoint algebra by idempotency, commutativity and transpositionality (\cref{def:mpa}).
\end{proof}

\begin{lemma}
\label{lem:div2-realise}
\thesislit{5}{SignedDigitIntervalObject}{div2-realiser}
Given any quinary signed-digit encoding \(\ty{\alpha}{\seq \5}\), we have that \(\llangle \mathsf{div2}(\alpha) \rrangle = M(\mathsf{map}(\mathsf{half},\alpha))\).
\end{lemma}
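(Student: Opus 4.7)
The plan is to reduce the equation to the finite approximation property of $\I$. Since $\I$ is a cancellative midpoint algebra that is iterative, it has finite approximations by \cref{thm:finite-approx}. Unfolding the definition of $\llangle - \rrangle$ (\cref{def:repr-map-signed-digits}), the goal becomes $M(\mathsf{map}(\langle - \rangle, \mathsf{div2}(\alpha))) = M(\mathsf{map}(\mathsf{half}, \alpha))$, and it suffices to prove that the two sequences $\mathsf{map}(\langle - \rangle, \mathsf{div2}(\alpha))$ and $\mathsf{map}(\mathsf{half}, \alpha)$ are $n$-approximately equal for every $n : \N$.

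I would first expose the carries hidden inside $\mathsf{div2}$. Using its auxiliary formulation (\cref{remark:div2-aux}), set $c_0 := \alpha_0$ and, inductively, $(a_{i+1}, c_{i+1}) := \mathsf{div2'}(c_i, \alpha_{i+1})$ for each $i : \N$. A straightforward induction on $n$ then shows $\mathsf{div2}(\alpha)_n = a_{n+1}$, equivalently giving the $n$-fold unfolding $\mathsf{div2}(\alpha) = a_1 :: a_2 :: \cdots :: a_n :: \mathsf{div2}(c_n :: \alpha_{n+1} :: \alpha_{n+2} :: \cdots)$.

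The key step will then be to prove by induction on $n \geq 1$ the identity, for every $z : \I$,
\[
\langle a_1 \rangle \oplus (\langle a_2 \rangle \oplus \cdots (\langle a_n \rangle \oplus (\mathsf{half}(c_n) \oplus z))) = \mathsf{half}(\alpha_0) \oplus (\mathsf{half}(\alpha_1) \oplus \cdots (\mathsf{half}(\alpha_n) \oplus z)).
\]
The base case $n = 1$ is precisely \cref{lem:div2-aux} instantiated at $(x, y) := (\alpha_0, \alpha_1)$. For the inductive step, I would instantiate \cref{lem:div2-aux} at $(x, y) := (c_n, \alpha_{n+1})$ to rewrite the innermost $\langle a_{n+1} \rangle \oplus (\mathsf{half}(c_{n+1}) \oplus z)$ as $\mathsf{half}(c_n) \oplus (\mathsf{half}(\alpha_{n+1}) \oplus z)$, and then apply the inductive hypothesis with the updated parameter $z' := \mathsf{half}(\alpha_{n+1}) \oplus z$.

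This identity supplies the required witnesses $w := \mathsf{half}(c_n) \oplus z$ (on the $\mathsf{div2}$-side) and $\mathsf{half}(\alpha_n) \oplus z$ (on the $\mathsf{half}$-side) of $n$-approximate equality for each $n \geq 1$; the case $n = 0$ is trivial by picking $w = z$. The main obstacle will be the carry bookkeeping through $\mathsf{div2'}$: since the recursion of $\mathsf{div2}$ in \cref{remark:div2-aux} bundles both an output digit and a fresh carry into every step, correctly stating and verifying the $n$-fold unfolding above --- so that the $a_i$'s and $c_i$'s match the recursive structure used by the formalisation --- requires care. Once that bookkeeping is in place, the inductive identity above follows cleanly from two invocations of \cref{lem:div2-aux}, and the theorem follows from finite approximations.
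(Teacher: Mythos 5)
Your proposal is correct and follows essentially the same route as the paper: reduce the goal via the finite-approximation property (\cref{thm:finite-approx}/\cref{cor:fg-approx}) to $n$-approximate equality of $\mathsf{map}(\langle - \rangle,\mathsf{div2}(\alpha))$ and $\mathsf{map}(\mathsf{half},\alpha)$, and discharge the inductive step with \cref{lem:div2-aux}. The only difference is presentational — you unroll the carries into an explicit chain $(a_i,c_i)$ and induct along it, whereas the paper applies the inductive hypothesis directly to the shifted sequence $b :: \mathsf{tail}(\mathsf{tail}\,\alpha)$ — but the witnesses and the key algebraic step are identical.
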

\begin{proof} \axioms{f}
By using \cref{cor:fg-approx}, this is reduced to showing \(\mathsf{map}(\langle - \rangle,\mathsf{div2}(\alpha))\) and \(\mathsf{map}(\mathsf{half},\alpha)\), for any \(\ty{\alpha}{\seq\5}\), are \(n\)-approximately equal for any \(\ty{n}{\N}\). 
In the base case, when \(n := 0\), the proof is trivial --- we can set \(\ty{z,w}{\I}\) as any elements of \(\I\), and so we only need to consider the inductive case where \(n := n' + 1\).

\vspace{1em}
Recall from \cref{def:finiteapprox} that this means we must give some \(\ty{z,w}{\I}\) that satisfy \[\langle \mathsf{div2}(\alpha)_0 \rangle \oplus (\langle \mathsf{div2}(\alpha)_1 \rangle \oplus ... (\mathsf{div2}(\alpha)_n \oplus w)) = \mathsf{half}(x_0) \oplus (\mathsf{half}(\alpha_1) \oplus ... (\mathsf{half}(\alpha_n) \oplus z)).\]

By taking \(\ty{(a,b)}{\3\x\5} := \mathsf{div2'}(\alpha_0,\alpha_1)\) this reduces to
\[\langle a \rangle \oplus (\langle \mathsf{div2}(b :: \alpha')_1 \rangle \oplus ... (\mathsf{div2}(b :: \alpha')_n \oplus w)) = \mathsf{half}(\alpha_0) \oplus (\mathsf{half}(\alpha_1) \oplus ... (\mathsf{half}(\alpha_n) \oplus z)),\]
where \(\alpha' := \mathsf{tail}(\mathsf{tail}(\alpha))\). 
By using the inductive hypothesis on the sequence \(b :: \alpha'\), this becomes
\[\langle a \rangle \oplus (\mathsf{half}(b) \oplus ... (\mathsf{half}(\alpha_n) \oplus z)) = \mathsf{half}(\alpha_0) \oplus (\mathsf{half}(\alpha_1) \oplus ... (\mathsf{half}(\alpha_n) \oplus z)).\]
Therefore, the result follows by \cref{lem:div2-aux}.
\end{proof}

We next show that halving an added sequence in the representation space realises the midpoint operation.

\begin{lemma}
\label{lem:half-add-pw}
\thesislit{5}{SignedDigitIntervalObject}{half-add-realiser}
Given any ternary digits \(\ty{a,b}{\3}\), we have that \(\mathsf{half}(\mathsf{add\3}(a,b)) = \langle a \rangle \oplus \langle b \rangle\).
\end{lemma}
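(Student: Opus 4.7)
The plan is to prove this by exhaustive case analysis on the two ternary digits $a, b \in \3$, yielding nine cases. In each case, both sides of the equation reduce by definition to a concrete element of $\I$, and we verify equality using the midpoint algebra axioms (\cref{def:mpa}) together with the definitions of $\langle - \rangle$ (\cref{def:repr-map-digits}), $\mathsf{add}\3$, $\mathsf{half}$, and the distinguished element $0 := -1 \oplus +1$.

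First I would dispatch the six ``off-diagonal'' cases where at least one of $a, b$ is $0$. In these cases, $\mathsf{add}\3$ simply returns the non-zero digit (or $0$ if both are zero), so $\mathsf{half}(\mathsf{add}\3(a,b))$ reduces directly to $\langle a \rangle \oplus \langle 0 \rangle$ or $\langle 0 \rangle \oplus \langle b \rangle$ by the definition of $\mathsf{half}$. For example, when $a = \overline{1}$ and $b = 0$, the left-hand side is $\mathsf{half}(\overline{1}) = -1 \oplus 0$, which is definitionally equal to the right-hand side $\langle \overline{1} \rangle \oplus \langle 0 \rangle$. The case $a = b = 0$ requires only idempotency: $\mathsf{half}(0) = 0 \oplus 0 = \langle 0 \rangle \oplus \langle 0 \rangle$.

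Next I would handle the three ``extremal'' cases where $a, b \in \{\overline{1}, 1\}$. The cases $(a,b) = (\overline{1}, \overline{1})$ and $(1,1)$ require idempotency of $\oplus$: for instance, $\mathsf{half}(\mathsf{add}\3(\overline{1},\overline{1})) = \mathsf{half}(\overline{2}) = -1$, which equals $-1 \oplus -1 = \langle \overline{1} \rangle \oplus \langle \overline{1} \rangle$ by \cref{def:mpa}.(i). The two mixed cases $(\overline{1},1)$ and $(1,\overline{1})$ each reduce via $\mathsf{add}\3$ to $0$, so the left-hand side is $\mathsf{half}(0) = 0$; the right-hand side is $-1 \oplus +1$ or $+1 \oplus -1$, which equals $0$ by definition (using commutativity in one case).

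I do not anticipate any significant obstacles here: the proof is essentially a finite table-lookup, and the only nontrivial identifications are instances of idempotency and commutativity, both directly available from \cref{def:mpa}. The main care needed is simply to match the slightly different cases of the definition of $\mathsf{add}\3$ against those of $\mathsf{half}$ and to rewrite using the definitional equality $0 := -1 \oplus +1$ where required.
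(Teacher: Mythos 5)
Your proof is correct and takes essentially the same approach as the paper, which also proceeds by case analysis on \(\ty{a,b}{\3}\) (nine cases) using only the idempotency and commutativity of \(\oplus\) from \cref{def:mpa}. Your more detailed breakdown of which cases need which axiom (idempotency for the diagonal, commutativity for the mixed/zero cases, definitional unfolding elsewhere) matches what the paper's one-line proof leaves implicit.
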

\begin{proof} \axioms{f}
By induction on \(\ty{a,b}{\3}\), as well as the idempotency and commutativity of \(\oplus\) by \cref{def:mpa}.
\end{proof}

\begin{lemma}
\label{lem:half-add}
\thesislit{5}{SignedDigitIntervalObject}{half-add-realiser}
Given any ternary signed-digit encodings \(\ty{\alpha,\beta}{\K}\), we have that \(M(\mathsf{map}(\mathsf{half},\mathsf{zipWith}(\mathsf{add\3},\alpha,\beta))) = \llangle \alpha \rrangle \oplus \llangle \beta \rrangle\).
\end{lemma}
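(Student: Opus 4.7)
The plan is to lift the pointwise equation from \cref{lem:half-add-pw} to the level of sequences and then apply the homomorphic property of the iteration operator $M$ in reverse. First, I would observe that by definition of $\mathsf{map}$ and $\mathsf{zipWith}$ (\cref{def:zipWith}), the sequence $\mathsf{map}(\mathsf{half}, \mathsf{zipWith}(\mathsf{add\3}, \alpha, \beta))$ is pointwise equal to $\lambda n.\mathsf{half}(\mathsf{add\3}(\alpha_n, \beta_n))$, which by \cref{lem:half-add-pw} is in turn pointwise equal to $\lambda n.\langle \alpha_n \rangle \oplus \langle \beta_n \rangle$. Invoking function extensionality promotes these pointwise equalities to genuine identifications of sequences, and therefore applying $M$ to both sides yields
\[ M(\mathsf{map}(\mathsf{half}, \mathsf{zipWith}(\mathsf{add\3}, \alpha, \beta))) = M(\lambda n.\langle \alpha_n \rangle \oplus \langle \beta_n \rangle). \]

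Next, I would apply the homomorphic property of $M$ established in \cref{lem:M-hom-prop}, but in the reverse direction. Setting $\theta := \mathsf{map}(\langle - \rangle, \alpha)$ and $\zeta := \mathsf{map}(\langle - \rangle, \beta)$, the right-hand side of the above display becomes $M(\theta) \oplus M(\zeta)$, which by \cref{def:repr-map-signed-digits} is definitionally $\llangle \alpha \rrangle \oplus \llangle \beta \rrangle$. Composing the two steps gives the desired equation.

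I do not expect any genuine obstacle in this proof: the heavy lifting was done in \cref{lem:half-add-pw} (pointwise correctness of $\mathsf{half} \circ \mathsf{add\3}$) and in \cref{lem:M-hom-prop} (the homomorphic property of $M$ derived from transpositionality and the second iteration sub-property). The only small subtlety is the invocation of function extensionality to pass from the pointwise equality of the mapped sequences to equality of the sequences themselves, which is why the proof carries the \axioms{f} annotation. Everything else is a direct chain of rewrites using the existing definitions of $\mathsf{map}$, $\mathsf{zipWith}$, and $\llangle - \rrangle$.
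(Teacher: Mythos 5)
Your proposal matches the paper's proof exactly: function extensionality together with \cref{lem:half-add-pw} identifies the mapped sequence with \(\lambda n.\langle \alpha_n \rangle \oplus \langle \beta_n \rangle\), and \cref{lem:M-hom-prop} then collapses \(M\) of that sequence to \(M(\mathsf{map}(\langle-\rangle,\alpha)) \oplus M(\mathsf{map}(\langle-\rangle,\beta)) = \llangle \alpha \rrangle \oplus \llangle \beta \rrangle\). Your write-up is merely more explicit about the intermediate pointwise rewrites; there is no substantive difference.
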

\begin{proof} \axioms{f}
By function extensionality and \cref{lem:half-add-pw}, \(M(\mathsf{map}(\mathsf{half},\mathsf{zipWith}(\mathsf{add\3},\alpha,\beta))) = M(\lambda n.\langle \alpha \rangle \oplus \langle \beta \rangle)\).
The result then follows by \cref{lem:M-hom-prop}.
\end{proof}

Using the two lemmas we have proved, we can now complete our verification of the midpoint operation.

\begin{theorem}
\label{cor:mid-realise}
\thesislit{5}{SignedDigitIntervalObject}{mid-realiser}
The midpoint function on signed-digit encodings realises the midpoint operator on the interval object.
\end{theorem}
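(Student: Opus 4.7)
The plan is to unfold the definition of $\mathsf{mid}$ and then chain together the two lemmas that have just been proved, essentially reading off the right-hand column of the commutative diagram in \cref{fig:mid-commute}. Given any $\ty{\alpha,\beta}{\K}$, I would start from the left-hand side $\llangle \mathsf{mid}(\alpha,\beta)\rrangle$ and unfold using \cref{def:mid-sd} to obtain $\llangle \mathsf{div2}(\mathsf{zipWith}(\mathsf{add\3},\alpha,\beta)) \rrangle$.

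Next, I would apply \cref{lem:div2-realise} with $\alpha$ instantiated by the quinary sequence $\mathsf{zipWith}(\mathsf{add\3},\alpha,\beta)$, rewriting the expression as $M(\mathsf{map}(\mathsf{half},\mathsf{zipWith}(\mathsf{add\3},\alpha,\beta)))$. Finally, a single application of \cref{lem:half-add} transforms this into $\llangle \alpha \rrangle \oplus \llangle \beta \rrangle$, which is exactly the desired right-hand side.

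There is no real obstacle here: all of the genuine work has already been done in \cref{lem:div2-aux}, \cref{lem:div2-realise}, \cref{lem:half-add-pw} and \cref{lem:half-add}. The only minor bookkeeping is being clear that the proof invokes function extensionality indirectly through its use of \cref{lem:div2-realise} and \cref{lem:half-add} (both tagged \axioms{f}), so the theorem itself inherits the \axioms{f} annotation. Correspondingly, the formal proof is essentially a three-step equational chain, and I expect it to be one line of \textsc{Agda} after pattern matching on the given $\alpha$ and $\beta$.
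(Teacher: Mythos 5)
Your proposal matches the paper's proof exactly: the paper proves the theorem "by \cref{lem:div2-realise,lem:half-add}" (with the \axioms{f} annotation), which is precisely your three-step equational chain of unfolding \cref{def:mid-sd} and applying those two lemmas. No gaps.
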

\begin{proof} \axioms{f}
By \cref{lem:div2-realise,lem:half-add}.
\end{proof}

\subsubsection{Infinitary midpoint}

The infinitary midpoint function is obviously more complicated, both to define and verify. \Escardo~defines the infinitary midpoint on signed-digit encodings by using another intermediary signed-digit representation (this time of the interval \([-4,4]\)).

\begin{definition}
\thesislit{5}{SignedDigit}{\urlnine}
The type of \emph{nonary digits} \(\9\), equivalent to \(\F(9)\), is defined by its elements \(\ty{\overline{4},\overline{3},\overline{2},\overline{1},0,1,2,3,4}{\9}\).
\end{definition}

\begin{definition}
\thesislit{5}{SignedDigit}{_+\urlfive_}
We define the addition function on quinary digits \(\ty{\mathsf{add\5}}{\5 \to \5 \to \9}\) in the expected way by pattern matching.
\end{definition}

\noindent
We also define the quartering function \(\ty{\mathsf{div4}}{\seq \9 \to \K}\) by induction, and the related function \(\ty{\mathsf{quarter}}{\9 \to \I}\) on nonary digits.

\begin{definition}
\thesislit{5}{SignedDigit}{div4}
We define the quartering function \(\ty{\mathsf{div4}}{\seq \9 \to \K}\) from nonary signed-digit encodings to ternary signed-digit encodings similarly to how we defined \(\ty{\mathsf{div2}}{\seq \5 \to \K}\).
\end{definition}

\begin{definition}
\thesislit{5}{SignedDigitIntervalObject}{quarter}
We define the quartering map \(\ty{\mathsf{quarter}}{\9 \to \I}\) from nonary digits to the interval object similarly to how we defined \(\ty{\mathsf{half}}{\5 \to \I}\).
\end{definition}

\begin{lemma}
\label{lem:bigMid-finite-approx-2}
\thesislit{5}{SignedDigitIntervalObject}{quarter-realiser}
Given any ternary signed-digit encoding \(\ty{\alpha}{\K}\), we have that \(\llangle \mathsf{div4}(\alpha) \rrangle = M(\mathsf{map}(\mathsf{quarter},\alpha))\).
\end{lemma}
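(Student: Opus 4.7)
The plan is to follow the exact same template used for the halving function in Lemma \ref{lem:div2-realise}, but adapted to the quartering operation. The first step is to reduce the equality $\llangle \mathsf{div4}(\alpha) \rrangle = M(\mathsf{map}(\mathsf{quarter},\alpha))$ to an $n$-approximate equality using Corollary \ref{cor:fg-approx}. Specifically, by unfolding the definition of $\llangle - \rrangle$, we see that both sides are of the form $M(\cdot)$, so it suffices to show that $\mathsf{map}(\langle - \rangle, \mathsf{div4}(\alpha))$ and $\mathsf{map}(\mathsf{quarter}, \alpha)$ are $n$-approximately equal for every $\ty{n}{\N}$ and every $\ty{\alpha}{\seq\9}$.

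The proof of this approximate equality will proceed by induction on $n$. The base case $n := 0$ is vacuous in the same way as in Lemma \ref{lem:div2-realise} (we may pick $\ty{z,w}{\I}$ arbitrarily). For the inductive case $n := n' + 1$, I expect \(\mathsf{div4}\) to be defined, as with \(\mathsf{div2}\), by a lookahead auxiliary \(\ty{\mathsf{div4'}}{}\) that returns a ternary digit and a ``corrected'' nonary digit to feed back into the recursion. Writing $\ty{(a,b)}{\3 \times \9} := \mathsf{div4'}(\alpha_0, \alpha_1, \ldots)$ and $\alpha'$ for the appropriate tail, the approximate equality reduces (by one unfolding of $\mathsf{div4}$ followed by the inductive hypothesis applied to the corrected sequence $b :: \alpha'$) to a single algebraic identity of the shape
\[
\langle a \rangle \oplus (\mathsf{quarter}(b) \oplus z) = \mathsf{quarter}(\alpha_0) \oplus (\mathsf{quarter}(\alpha_1) \oplus (\cdots \oplus z)).
\]

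The crux of the proof is therefore an auxiliary lemma analogous to Lemma \ref{lem:div2-aux}: for every configuration of input digits handled by a clause of $\mathsf{div4'}$, the outputs $a$ and $b$ must satisfy the corresponding local midpoint identity in $\I$. Each case is a routine rearrangement using only idempotency, commutativity and transpositionality of $\oplus$ (Definition \ref{def:mpa}), exactly as with the thirteen cases of \(\mathsf{div2'}\). The main obstacle is not mathematical but combinatorial: since $\mathsf{div4}$ operates on nonary digits and represents a division by $4$ rather than by $2$, the lookahead required is larger (likely two digits rather than one), so the number of clauses of $\mathsf{div4'}$ and hence the number of algebraic cases to discharge is substantially greater. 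The \textsc{Agda} formalisation will absorb most of this bureaucracy, and from the user's side each case reduces to invoking the appropriate midpoint rearrangement. Assembling the inductive step then gives the result, with the use of function extensionality inherited (via \cref{cor:fg-approx}) from the same axiom usage as in \cref{lem:div2-realise}.
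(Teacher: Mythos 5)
Your proposal is correct and matches the paper exactly: the paper's own proof of this lemma is literally ``Similar to \cref{lem:div2-realise}'', and you have correctly fleshed out what that similarity entails --- reduction via \cref{cor:fg-approx} to $n$-approximate equality, induction on $n$ with a trivial base case, and an auxiliary lookahead function with a case-by-case midpoint-rearrangement lemma analogous to \cref{lem:div2-aux}. The only caveat is that the paper never spells out the clauses of $\mathsf{div4}$, so your guess about the size of the lookahead is plausible but unverifiable from the text; it does not affect the structure of the argument.
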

\begin{proof} \axioms{f}
Similar to \cref{lem:div2-realise}.
\end{proof}

We now define Adam Scriven's infinitary midpoint function on ternary signed-digit encodings in our framework. This definition was first given by Scriven in his MSc thesis, and reimplemented by \Escardo in his \textsc{Haskell} library~\cite{Scriven,Escardo11fun}.

\begin{definition}
\label{def:bigMid-sd}
\thesislit{5}{SignedDigit}{bigMid}
The infinitary midpoint function on ternary signed-digit encodings is defined by translating Scriven's definition to \textsc{Agda}:
\begin{alignat*}{3}
\mathsf{bigMid}' &: \seq{(\K)} \to \seq \9, && \\
\mathsf{bigMid}' &(((a :: b :: x) :: (c :: y) :: \zeta))_0 && := \mathsf{add\5}(\mathsf{add\3}(a,a),\mathsf{add\3}(b,c)), \\
\mathsf{bigMid}' &(((a :: b :: x) :: (c :: y) :: \zeta))_{\suc n} && := \mathsf{bigMid}'(\mathsf{mid}(x,y) :: \zeta)_n, \\
\ \\
\mathsf{bigMid} &: \seq{(\K)} \to K, && \\
\mathsf{bigMid} & && := \mathsf{div4} \circ \mathsf{bigMid}'.
\end{alignat*}
\end{definition}

\begin{figure}[ht]
    \centering
    \[\begin{tikzcd}[sep=large]
	{(\seq\3)^\N} & \seq\9 & \seq\3 \\
	\seq\I & \I & \I
	\arrow["{\mathsf{bigMid'}}", from=1-1, to=1-2]
	\arrow["M"', from=2-1, to=2-2]
	\arrow["{\mathsf{map} \ \llangle - \rrangle}"', from=1-1, to=2-1]
	\arrow["{M \circ \mathsf{map}(\mathsf{quarter})}"{description}, from=1-2, to=2-2]
	\arrow[Rightarrow, no head, from=2-2, to=2-3]
	\arrow["{\llangle - \rrangle}", from=1-3, to=2-3]
	\arrow["{\mathsf{div4}}", from=1-2, to=1-3]
	\arrow["{\mathsf{bigMid}}", curve={height=-30pt}, from=1-1, to=1-3]
    \end{tikzcd}\]
    \caption{Commutative diagram illustrating \cref{thm:bigmid-realiser}.}
    \label{fig:bigMid-commute}
\end{figure}

This operation does indeed realise the iteration operator \(\ty{M}{\seq \I \to \I}\) on the interval object; i.e.\ the diagram in \cref{fig:bigMid-commute} commutes.
We prove this by using the following lemmas.

\begin{lemma}
\label{lem:quarter-add-pw}
\thesislit{5}{SignedDigitIntervalObject}{\urlninea{}s-conv-\urleq}
Given any ternary digits \(\ty{a,b,c}{\3}\), we have that \(\mathsf{quarter}(\mathsf{add\5}(\mathsf{add\3}(a,a),\mathsf{add\3}(b,c))) = \langle a \rangle \oplus (\langle b \rangle \oplus \langle c \rangle)\).
\end{lemma}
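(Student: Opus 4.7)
The plan is to proceed by exhaustive case analysis on the three ternary digits $a, b, c : \3$, giving $3^3 = 27$ cases to check. In each case, the left-hand side reduces definitionally (via the computation rules for $\mathsf{add\3}$, $\mathsf{add\5}$ and $\mathsf{quarter}$) to a concrete element of $\I$ built from $-1$, $0$, $+1$ and $\oplus$, while the right-hand side similarly reduces to an explicit term in $\I$. The remaining task in each case is to verify an equation between two such explicit terms, which follows from the midpoint-algebra axioms of \cref{def:mpa} together with the facts $0 := -1 \oplus +1$ and $-0 = 0$ (\cref{lem:neg-zero}).

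First I would record the relevant values $\mathsf{add\3}(a,a) \in \5$ for each $a$: these are $\overline 2$, $0$, $2$ respectively for $a \in \{\overline 1, 0, 1\}$, so $\mathsf{quarter}(\mathsf{add\5}(\mathsf{add\3}(a,a), -))$ simplifies substantially. This collapses the 27 cases into three families of 9 cases, one for each choice of $a$. Within each family, a further split on $b$ and $c$ determines $\mathsf{add\3}(b,c) \in \5$, and then the outer $\mathsf{add\5}$ and $\mathsf{quarter}$ yield an explicit value. For instance, when $a = 0$, the left-hand side reduces to $\mathsf{quarter}(\mathsf{add\5}(0,\mathsf{add\3}(b,c))) = \mathsf{quarter}(\mathsf{add\3}(b,c))$ viewed in $\9$, and we must check that this equals $0 \oplus (\langle b \rangle \oplus \langle c \rangle)$; symmetric cases like $b = c$ use the idempotency of $\oplus$, while asymmetric cases use commutativity to align the two midpoints.

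The more delicate cases are those where $a$ is $\pm 1$ and $b,c$ have mixed signs, because then the right-hand side contains a nested midpoint $\langle a \rangle \oplus (\langle b \rangle \oplus \langle c \rangle)$ with three distinct endpoint values, and the left-hand side produces a single $\mathsf{quarter}$ value that must be recognised as the same point. For these, transpositionality $(x \oplus y) \oplus (z \oplus w) = (x \oplus z) \oplus (y \oplus w)$ together with idempotency of the form $x = x \oplus x$ will be needed to rearrange; for example, with $a = 1, b = 1, c = \overline 1$ one needs $(+1 \oplus 0)  = +1 \oplus (+1 \oplus -1) = +1 \oplus 0$, which is immediate once $0$ is unfolded.

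The main obstacle I anticipate is not mathematical depth but book-keeping: the 27 cases must each be verified with a short algebraic rearrangement, and care is needed to choose the rearrangement so that \textsc{Agda} accepts it without invoking function extensionality. I would therefore isolate a small library of reusable rewrites (idempotency, commutativity, transpositionality, and $0 = -1 \oplus +1$) before attempting the case split, so that each of the 27 subgoals can be dispatched by a short sequence of $\mathsf{ap}$ and transitivity steps rather than by an ad hoc calculation. Once the lemma is proved, it will feed directly into \cref{thm:bigmid-realiser} via the head case of $\mathsf{bigMid}'$.
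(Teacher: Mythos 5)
Your proposal is correct and matches the paper's proof, which is exactly an exhaustive case split on \(a,b,c\) discharged by the idempotency, commutativity and transpositionality axioms of \cref{def:mpa}. The additional organisational observations (collapsing into three families via \(\mathsf{add\3}(a,a)\), and pre-packaging the reusable rewrites) are sensible implementation details but do not change the argument.
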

\begin{proof}
By induction on \(\ty{a,b,c}{\3}\), as well as the idempotency, commutativity and transpositionality of \(\oplus\) by \cref{def:mpa}.
\end{proof}

\begin{lemma}
\label{lem:mid-first-three}
\thesislit{5}{SignedDigitIntervalObject}{M-bigMid'-\urleq}
Given any ternary signed-digit encodings \(\ty{\alpha,\beta}{\K}\) and real number \(\ty{z}{\I}\) in the interval object, we have, \[\llangle \alpha \rrangle \oplus (\llangle  \beta \rrangle \oplus z) = (\langle a \rangle \oplus (\langle b \rangle \oplus \langle c \rangle)) \oplus (\llangle \mathsf{mid}(\alpha',\beta') \rrangle \oplus z) ,\]
where \((a :: b :: \alpha') := \alpha\) and \((c :: \beta') := \beta\).
\end{lemma}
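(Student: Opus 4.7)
The plan is to first unfold both representation maps using the first iteration sub-property and the realisation of $\mathsf{mid}$, reducing the claim to a pure algebraic identity in the midpoint algebra.

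Starting from \cref{def:repr-map-signed-digits}, two applications of the first iteration sub-property (\cref{def:iterative1}) to $\alpha = a :: b :: \alpha'$ give $\llangle \alpha \rrangle = \langle a \rangle \oplus (\langle b \rangle \oplus \llangle \alpha' \rrangle)$, and one application to $\beta = c :: \beta'$ gives $\llangle \beta \rrangle = \langle c \rangle \oplus \llangle \beta' \rrangle$. By \cref{cor:mid-realise}, $\llangle \mathsf{mid}(\alpha', \beta') \rrangle = \llangle \alpha' \rrangle \oplus \llangle \beta' \rrangle$. Writing $A := \langle a \rangle$, $B := \langle b \rangle$, $C := \langle c \rangle$, $X := \llangle \alpha' \rrangle$, $Y := \llangle \beta' \rrangle$ for brevity, the goal reduces to proving, for any $A, B, C, X, Y, z \in \I$,
\[(A \oplus (B \oplus X)) \oplus ((C \oplus Y) \oplus z) = (A \oplus (B \oplus C)) \oplus ((X \oplus Y) \oplus z).\]

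The remaining step is to prove this identity using only the midpoint axioms of \cref{def:mpa} --- idempotency, commutativity, and transpositionality. The strategy is to duplicate $A$ and $z$ via idempotency (writing $A = A \oplus A$ and $z = z \oplus z$), converting both sides into four-way midpoints over the same six atoms $A, B, C, X, Y, z$. The two forms can then be equated by a sequence of transpositions at the inner and outer levels, interleaved with commutativity to line up factors. The conceptual heart of the rewrite is the transpositionality identity $(B \oplus X) \oplus (C \oplus Y) = (B \oplus C) \oplus (X \oplus Y)$, which performs the essential swap of $X$ with $C$; however, because $A$ and $z$ sit at different depths on the two sides, the full equality requires several additional reassociations to move $A$ and $z$ past the swapped elements.

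The main obstacle will be the bookkeeping in this algebraic step: although each rewrite is immediate from a single axiom, assembling a chain that brings both sides to a common form requires care. In the \textsc{Agda} formalisation this would be handled by a chain of small equational steps, each justified by one of idempotency, commutativity, or transpositionality from \cref{def:mpa}.
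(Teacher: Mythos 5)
Your proposal is correct and takes essentially the same approach as the paper: both unfold \(\llangle - \rrangle\) via the first iteration sub-property, invoke \cref{cor:mid-realise} to identify \(\llangle \mathsf{mid}(\alpha',\beta') \rrangle\) with \(\llangle \alpha' \rrangle \oplus \llangle \beta' \rrangle\), and finish with a midpoint-algebra rearrangement whose key step is exactly the transposition \((B \oplus X) \oplus (C \oplus Y) = (B \oplus C) \oplus (X \oplus Y)\) you identify. The only cosmetic difference is in the final bookkeeping: the paper's chain needs no idempotency at all --- one commutativity to move \(\langle a \rangle\) to the right of its pair, an outer transposition to pair it with \(z\), the inner transposition, and one more outer transposition plus commutativity suffice.
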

\begin{proof} \axioms{f}
By the following equational reasoning, (i) idempotency of \(M\) (\cref{lem:M-idem}), (ii) commutativity and (iii) transpositionality of \(\oplus\) (\cref{def:mpa}), and (iv) the correctness of the binary midpoint operation (\cref{cor:mid-realise}),
\begin{align*}
&\llangle \alpha \rrangle &\oplus \  &(\llangle \beta \rrangle \oplus z),& &\\
:= &\llangle (a :: (b :: \alpha')) \rrangle &\oplus \  &(\llangle (c :: \beta') \rrangle \oplus z),& &\\
= &\langle a \rangle \oplus \llangle (b :: \alpha') \rrangle &\oplus \  &(\llangle (c :: \beta') \rrangle \oplus z),& &\text{by (i)},\\
= &(\langle a \rangle \oplus ( \langle b \rangle \oplus \llangle \alpha' \rrangle )) &\oplus \  &(\llangle (c :: \beta') \rrangle \oplus z),& &\text{by (i)},\\
= &(\langle a \rangle \oplus ( \langle b \rangle \oplus \llangle \alpha' \rrangle )) &\oplus \  &((\langle c \rangle \oplus \llangle \beta' \rrangle) \oplus z),& &\text{by (i)},\\
= &( (\langle b \rangle \oplus \llangle \alpha' \rrangle) \oplus \langle a \rangle) &\oplus \  &((\langle c \rangle \oplus \llangle \beta' \rrangle) \oplus z),& &\text{by (ii)},\\
= &( (\langle b \rangle \oplus \llangle \alpha' \rrangle) \oplus (\langle c \rangle \oplus \llangle \beta' \rrangle)) &\oplus \  &(\langle a \rangle \oplus z),& &\text{by (iii)},\\
= &( (\langle b \rangle \oplus \langle c \rangle) \oplus (\llangle \alpha' \rrangle \oplus \llangle \beta' \rrangle)) &\oplus \  &(\langle a \rangle \oplus z),& &\text{by (iii)},\\
= &( (\langle b \rangle \oplus \langle c \rangle) \oplus (\langle a \rangle) &\oplus \ &(\llangle \alpha' \rrangle \oplus \llangle \beta' \rrangle) \oplus z),& &\text{by (iii)},\\
= &(\langle a \rangle \oplus (\langle b \rangle \oplus \langle c \rangle) &\oplus \ &(\llangle \alpha' \rrangle \oplus \llangle \beta' \rrangle) \oplus z),& &\text{by (iii)},\\
= &(\langle a \rangle \oplus (\langle b \rangle \oplus \langle c \rangle) &\oplus \ &(\mathsf{mid}(\alpha',\beta') \oplus z)),& &\text{by (iv)}.
\end{align*}
\end{proof}

\begin{corollary}
\label{cor:mid-first-three}
\thesislit{5}{SignedDigitIntervalObject}{bigMid'-approx}
Given any ternary signed-digit encodings \(\ty{\alpha,\beta}{\K}\) and real number \(\ty{z}{\I}\) in the interval object, we have, \[\llangle \alpha \rrangle \oplus (\llangle  \beta \rrangle \oplus z) = \mathsf{quarter}(\mathsf{add\5}(\mathsf{add\3}(a,a),\mathsf{add\3}(b,c))) \oplus (\llangle \mathsf{mid}(\alpha',\beta') \rrangle \oplus z) ,\]
where \((a :: b :: \alpha') := \alpha\) and \((c :: \beta') := \beta\).
\end{corollary}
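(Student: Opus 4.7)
The plan is to obtain this corollary as an immediate consequence of the two preceding results, namely \cref{lem:mid-first-three} and \cref{lem:quarter-add-pw}. The former establishes exactly the desired equation but with the right-hand side expressed via the interval-object operations as $(\langle a \rangle \oplus (\langle b \rangle \oplus \langle c \rangle)) \oplus (\llangle \mathsf{mid}(\alpha',\beta') \rrangle \oplus z)$, while the latter tells us that the ternary computation $\mathsf{quarter}(\mathsf{add\5}(\mathsf{add\3}(a,a),\mathsf{add\3}(b,c)))$ realises precisely this three-way midpoint expression on the interval object.

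Concretely, I would first invoke \cref{lem:mid-first-three} with the given $\alpha$, $\beta$ and $z$, destructing $\alpha$ as $(a :: b :: \alpha')$ and $\beta$ as $(c :: \beta')$ to obtain the equation whose right-hand side contains $\langle a \rangle \oplus (\langle b \rangle \oplus \langle c \rangle)$. Then I would apply \cref{lem:quarter-add-pw} at the digits $a$, $b$, $c$ to rewrite $\langle a \rangle \oplus (\langle b \rangle \oplus \langle c \rangle)$ as $\mathsf{quarter}(\mathsf{add\5}(\mathsf{add\3}(a,a),\mathsf{add\3}(b,c)))$. Chaining these two equalities via transitivity of identity yields the desired statement.

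There is essentially no obstacle here: neither function extensionality nor any further properties of $\oplus$ or $M$ are required beyond what has already been deployed in the two cited lemmas. The only minor care point is ensuring that the pattern match on $\alpha$ and $\beta$ (as $a :: b :: \alpha'$ and $c :: \beta'$) aligns definitionally with the way the lemmas were stated, so that the substitution into the right-hand side goes through without needing a further transport. Given that \cref{lem:mid-first-three} is already phrased in this destructured form, this alignment should be immediate and the corollary follows by a one-line application of transitivity to the two preceding results.
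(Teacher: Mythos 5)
Your proposal is correct and is precisely the paper's own proof: the corollary is obtained by applying \cref{lem:mid-first-three} and then rewriting the three-way midpoint term via \cref{lem:quarter-add-pw}, chained by transitivity. No further comment is needed.
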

\begin{proof} \axioms{f}
By \cref{lem:quarter-add-pw,lem:mid-first-three}.
\end{proof}

\noindent
Thus, it turns out that the \(\mathsf{bigMid}\) operation provides an \(n\)-approximation of \(M\) for every \(\ty{n}{\N}\).

\begin{lemma}
\label{lem:bigMid-finite-approx-1}
\thesislit{5}{SignedDigitIntervalObject}{bigMid'-approx}
The functions \(\mathsf{map}(\llangle - \rrangle)\) and \(\mathsf{map}(\mathsf{quarter}) \circ \mathsf{bigMid'}\) --- both of type \(\seq{(\K)} \to \seq \I\) --- are \(n\)-approximately equal for all \(\ty{n}{\N}\).
\end{lemma}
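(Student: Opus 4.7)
The plan is to proceed by induction on $n$. For the base case $n = 0$, finite approximation is vacuous: any choice of $w, z \in \I$ (e.g.\ both $-1$) gives the trivial equation, independently of $\zeta$.

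For the inductive case, assume the claim holds for $n$ and fix $\zeta \in \seq{(\K)}$. Writing $\zeta = ((a :: b :: \alpha') :: (c :: \beta') :: \zeta')$ --- this is always valid since the head of each ternary signed-digit encoding splits as head/head-of-tail/rest --- I would apply the inductive hypothesis to the \emph{reduced} sequence $\zeta^* := \mathsf{mid}(\alpha', \beta') :: \zeta'$, obtaining witnesses $w_0, z_0 \in \I$ such that
\[
\llangle \mathsf{mid}(\alpha', \beta') \rrangle \oplus (\llangle \zeta'_0 \rrangle \oplus \cdots (\llangle \zeta'_{n-2} \rrangle \oplus w_0))
= \mathsf{quarter}(\mathsf{bigMid}'(\zeta^*)_0) \oplus \cdots (\mathsf{quarter}(\mathsf{bigMid}'(\zeta^*)_{n-1}) \oplus z_0).
\]
I will then set $w := w_0$ and $z := z_0$ as the witnesses for $(n+1)$-approximate equality of the original sequences, and show that the two $(n+1)$-length nested midpoints agree.

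The key step is a single application of \cref{cor:mid-first-three}, taking its $z$ parameter to be the nested midpoint $\llangle \zeta'_0 \rrangle \oplus \cdots (\llangle \zeta'_{n-2} \rrangle \oplus w_0)$. This rewrites the leading $\llangle \alpha \rrangle \oplus (\llangle \beta \rrangle \oplus -)$ of the left-hand side of the goal into $\mathsf{quarter}(\mathsf{add}5(\mathsf{add}3(a,a), \mathsf{add}3(b,c))) \oplus (\llangle \mathsf{mid}(\alpha', \beta') \rrangle \oplus -)$. The leading quarter term is definitionally $\mathsf{quarter}(\mathsf{bigMid}'(\zeta)_0)$ by the first clause of \cref{def:bigMid-sd}, so it matches the head of the right-hand side of the goal. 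The remaining equation, on the tails, is exactly the inductive hypothesis instance above --- provided we observe that $\mathsf{bigMid}'(\zeta)_{k+1} = \mathsf{bigMid}'(\zeta^*)_k$ for every $k$, which is again immediate from the second clause of \cref{def:bigMid-sd}.

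I expect the main obstacle to be purely bureaucratic rather than conceptual: bookkeeping the off-by-one index shift between the two sides (two $\llangle - \rrangle$ elements on the left are consumed in exchange for one $\mathsf{quarter}$ element on the right, with the `leftover' $\llangle \mathsf{mid}(\alpha', \beta') \rrangle$ then aligning with the first element of the tail on the right via the recursive clause of $\mathsf{bigMid}'$). The mathematical content is entirely packaged into \cref{cor:mid-first-three}; function extensionality may be required to justify the equality of subsequences written via $\mathsf{map}$, mirroring its use in \cref{lem:div2-realise,lem:half-add}.
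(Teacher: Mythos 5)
Your proposal is correct and is precisely the argument the paper intends: its proof of this lemma is simply ``by induction and \cref{cor:mid-first-three}'', and your write-up is a faithful expansion of exactly that — induction on \(n\), reduction of the head via \cref{cor:mid-first-three}, and the observation that \(\mathsf{bigMid}'(\zeta)_{k+1} = \mathsf{bigMid}'(\zeta^*)_k\) so the inductive hypothesis on the reduced sequence closes the tails.
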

\begin{proof} \axioms{f}
By induction and \cref{cor:mid-first-three}.
\end{proof}

\begin{theorem}
\label{thm:bigmid-realiser}
\thesislit{5}{SignedDigitIntervalObject}{M-realiser}
The infinitary midpoint function on ternary signed-digit encodings realises the iteration operator on the interval object; i.e.\ given any sequence of ternary signed-digit encodings \(\ty{\zeta}{\seq{(\K)}}\), we have \(\llangle \mathsf{bigMid}(\zeta) \rrangle = M(\mathsf{map}(\llangle - \rrangle,\zeta))\).
\end{theorem}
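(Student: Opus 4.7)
The plan is to chain together the two lemmas that have just been established, namely \cref{lem:bigMid-finite-approx-2} (which verifies $\mathsf{div4}$ against $\mathsf{quarter}$) and \cref{lem:bigMid-finite-approx-1} (which states that $\mathsf{map}(\llangle - \rrangle)$ and $\mathsf{map}(\mathsf{quarter}) \circ \mathsf{bigMid}'$ are $n$-approximately equal for every $\ty{n}{\N}$), together with the finite-approximation corollary \cref{cor:fg-approx}. The overall shape is simply to unfold the definition $\mathsf{bigMid} := \mathsf{div4} \circ \mathsf{bigMid}'$ on one side and to push $M$ through on both sides until the two expressions land in the same form.

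First I would rewrite the left-hand side using the definition of $\mathsf{bigMid}$, obtaining $\llangle \mathsf{div4}(\mathsf{bigMid}'(\zeta)) \rrangle$, and then apply \cref{lem:bigMid-finite-approx-2} with $\alpha := \mathsf{bigMid}'(\zeta)$ to rewrite this as $M(\mathsf{map}(\mathsf{quarter}, \mathsf{bigMid}'(\zeta)))$, i.e.\ $M\bigl((\mathsf{map}(\mathsf{quarter}) \circ \mathsf{bigMid}')(\zeta)\bigr)$. The goal is therefore reduced to showing
\[ M\bigl((\mathsf{map}(\mathsf{quarter}) \circ \mathsf{bigMid}')(\zeta)\bigr) \ = \ M(\mathsf{map}(\llangle - \rrangle, \zeta)). \]

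For this I would invoke \cref{cor:fg-approx} with $X := \seq{(\K)}$, taking $f := \mathsf{map}(\mathsf{quarter}) \circ \mathsf{bigMid}'$ and $g := \mathsf{map}(\llangle - \rrangle)$; its hypothesis is precisely that $f(\zeta)$ and $g(\zeta)$ are $n$-approximately equal for every $\ty{n}{\N}$ and every $\ty{\zeta}{\seq{(\K)}}$, which is exactly the statement of \cref{lem:bigMid-finite-approx-1}. The conclusion then gives that $M \circ f$ and $M \circ g$ are pointwise-equal, which instantiates to the required equation at $\zeta$. Chaining this with the earlier rewrite closes the proof, with the use of function extensionality inherited via \cref{cor:fg-approx}.

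There is no serious obstacle here: all the genuine combinatorial and algebraic work has already been done in \cref{lem:mid-first-three}, \cref{cor:mid-first-three} and \cref{lem:bigMid-finite-approx-1}, where the three-at-a-time rearrangement using idempotency, commutativity and transpositionality is carried out, and in \cref{lem:bigMid-finite-approx-2}, which verifies the quartering. The only thing to be careful about is bookkeeping: making sure that the $n$-approximation statement is quantified strongly enough in its $\ty{\zeta}{\seq(\K)}$ and $\ty{n}{\N}$ arguments so that \cref{cor:fg-approx} applies uniformly, so that the final appeal really does yield pointwise equality of $M \circ f$ and $M \circ g$ rather than a weaker per-$n$ statement.
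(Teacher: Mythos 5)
Your proposal is correct and follows essentially the same route as the paper, whose proof is simply the citation of \cref{thm:finite-approx,lem:bigMid-finite-approx-2,lem:bigMid-finite-approx-1}; your use of \cref{cor:fg-approx} rather than \cref{thm:finite-approx} directly is an immaterial difference, since that corollary is just the pointwise-function packaging of the finite-approximation theorem. You have merely spelled out the chaining that the paper leaves implicit.
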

\begin{proof} \axioms{f}
By \cref{thm:finite-approx,lem:bigMid-finite-approx-2,lem:bigMid-finite-approx-1}.
\end{proof}

\subsubsection{Multiplication}

Finally, we define and verify multiplication.
In the Haskell library, \Escardo~defined a variety of multiplication functions on signed-digit encodings; we choose to define and verify \texttt{mul\_version0}, which uses the \(\mathsf{bigMid}\) function~\cite{Escardo11fun}.

\begin{definition}
\thesislit{5}{SignedDigit}{digitMul}
The auxiliary multiplication function, which multiplies a ternary signed-digit encoding by a ternary digit, is defined as follows:
\begin{alignat*}{3}
\mathsf{digitMul} &: \3 \to &&\K \to \K, \\
\mathsf{digitMul} &(\overline 1,\beta) && := \mathsf{neg}(\beta), \\
\mathsf{digitMul} &(          0,\beta) && := \lambda n.0, \\
\mathsf{digitMul} &(          1,\beta) && := \beta.
\end{alignat*}
\end{definition}

\begin{definition}
\label{def:mul-sd}
\thesislit{5}{SignedDigitInterval}{mul}
The multiplication function on ternary signed-digit encodings is defined by multiplying the first argument against each individual digit of the second, and then taking the infinitary midpoint:
\begin{alignat*}{3}
\mathsf{mul} &: \K \to &&\K \to \K, \\
\mathsf{mul} &(\alpha,\beta) && := \mathsf{bigMid}(\mathsf{zipWith}(\mathsf{digitMul},\alpha,\mathsf{repeat} \ \beta)).
\end{alignat*}
\end{definition}

\noindent
This function realises the multiplication function on the interval object; i.e.\ the diagrams in \cref{fig:mul-commute} commute.

\begin{figure}[ht]
\centering
\tikzset{ampersand replacement=\&}
\begin{floatrow}
    \begin{tikzcd}[sep=large]
	{\3 \x \seq\3} \& K \\
	{\I \x \I} \& \I
	\arrow["{\mathsf{digitMul}}", from=1-1, to=1-2]
	\arrow["{*}"', from=2-1, to=2-2]
	\arrow["{\langle - \rangle \x \llangle - \rrangle}"', from=1-1, to=2-1]
	\arrow["{\llangle - \rrangle}", from=1-2, to=2-2]
    \end{tikzcd}
    \ \ \
    \begin{tikzcd}[column sep=huge,row sep=large]
	{\seq\3 \x \seq\3} \& {(\seq\3)^\N} \& \seq\3 \\
	{\I \x \I} \& {\seq\I} \& \I
	\arrow["{\mathsf{repeat} \ \circ \ *}"', from=2-1, to=2-2]
	\arrow["{\mathsf{map} \ \llangle - \rrangle}"{description}, from=1-2, to=2-2]
	\arrow["{\mathsf{digitMul'}}", from=1-1, to=1-2]
	\arrow["{\llangle - \rrangle \x \llangle - \rrangle}"', from=1-1, to=2-1]
	\arrow["M"', from=2-2, to=2-3]
	\arrow["{\mathsf{bigMid}}", from=1-2, to=1-3]
	\arrow["{\mathsf{mul}}", curve={height=-30pt}, from=1-1, to=1-3]
	\arrow["{\llangle - \rrangle}", from=1-3, to=2-3]
	\arrow["{*}"', curve={height=30pt}, from=2-1, to=2-3]
    \end{tikzcd}
\end{floatrow}
\caption{Commutative diagrams illustrating \cref{lem:digitMul-realiser} (left) and \cref{thm:mul-realiser} (right). In the diagram, \(\mathsf{digitMul'} := \lambda \alpha\beta.\mathsf{zipWith}(\mathsf{digitMul},\alpha,\mathsf{repeat}\ \beta)\).}
\label{fig:mul-commute}
\end{figure}

\begin{lemma}
\label{lem:digitMul-realiser}
\thesislit{5}{SignedDigitIntervalObject}{digitMul-realiser}
The auxiliary multiplication operator realises multiplication on the interval object; i.e.\ given any ternary digit \(\ty{t}{\3}\) and a ternary signed-digit encodings \(\ty{\beta}{\K}\), we have \(\llangle \mathsf{digitMul}(t,\beta) \rrangle = \langle t \rangle * \llangle \beta \rrangle\).
\end{lemma}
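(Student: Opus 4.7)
The plan is to prove this by case analysis on the ternary digit $t : \3$, since both $\mathsf{digitMul}$ and the representation map $\langle - \rangle$ are defined by cases on the three possible values $\overline{1}, 0, 1$. In each case, the left-hand side will reduce via the definition of $\mathsf{digitMul}$ (\cref{def:mul-sd}'s auxiliary) and $\llangle - \rrangle$ (\cref{def:repr-map-signed-digits}), while the right-hand side will reduce via the corresponding already-verified property of multiplication on the interval object $\I$.

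Concretely, the three subcases proceed as follows. For $t := \overline{1}$, the left-hand side becomes $\llangle \mathsf{neg}(\beta) \rrangle$, which by \cref{lem:neg-realise} equals $-\llangle \beta \rrangle$; the right-hand side is $-1 * \llangle \beta \rrangle$, which equals $-\llangle \beta \rrangle$ by \cref{lem:mul-io--1}. For $t := 0$, the left-hand side is $\llangle \lambda n.0 \rrangle := M(\mathsf{map}(\langle - \rangle, \lambda n.0))$; this is $M(\lambda n.0)$, which by idempotency of $M$ (\cref{lem:M-idem}, applied using the idempotency of $\oplus$ noting $\langle 0 \rangle := -1 \oplus +1 = 0$) equals $0$; the right-hand side is $0 * \llangle \beta \rrangle = 0$ by \cref{lem:mul-io-0}. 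For $t := 1$, the left-hand side is just $\llangle \beta \rrangle$ directly, and the right-hand side is $+1 * \llangle \beta \rrangle = \llangle \beta \rrangle$ by \cref{lem:mul-io-1}.

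I do not anticipate any serious obstacles: every ingredient is already available from the interval object development (\cref{lem:mul-io--1,lem:mul-io-1,lem:mul-io-0}) and the previously verified negation realiser (\cref{lem:neg-realise}). The only mild subtlety is the $t := 0$ case, where we must recognise that the constant-zero sequence is sent by $\mathsf{map}(\langle - \rangle)$ to the constant sequence $\lambda n.0$ in $\I$ (requiring function extensionality to pass through $\mathsf{map}$, much as in \cref{lem:map-realise}) before invoking the idempotency of the iteration operator $M$. Thus the lemma should be labelled with the \axioms{f} annotation, consistent with the rest of the section.
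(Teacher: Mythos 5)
Your proof is correct and matches the paper's argument exactly: case analysis on \(t\), using the negation realiser together with \cref{lem:mul-io--1} for \(\overline{1}\), the idempotency of \(M\) together with \cref{lem:mul-io-0} for \(0\), and \cref{lem:mul-io-1} for \(1\). The observation about function extensionality in the \(t := 0\) case and the \axioms{f} annotation also agree with the paper.
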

\begin{proof} \axioms{f}
By induction on \(\ty{t}{\3}\).

In the case where \(t := \overline 1\), we must show that \(\llangle \mathsf{neg}(\beta) \rrangle = -1 * \llangle \beta \rrangle\). By the negation realiser (\cref{lem:neg-realise}), \(\llangle \mathsf{neg}(\beta) \rrangle = - \llangle \beta \rrangle\). The result follows by \cref{lem:mul-io--1}.

In the case where \(t :=           0\), we must show that \(\llangle \lambda n.0 \rrangle = 0 * \llangle \beta \rrangle\). By the idempotency of \(M\) (\cref{lem:M-idem}), \(\llangle \lambda n.0 \rrangle := M(\mathsf{map}(\langle - \rangle,\lambda n.0)) = \langle 0 \rangle = 0\). The result follows by \cref{lem:mul-io-0}.

In the case where \(t :=           1\), we must show that \(\llangle \beta \rrangle = 1 * \llangle \beta \rrangle\). The result follows by \cref{lem:mul-io-1}.
\end{proof}

\begin{theorem}
\label{thm:mul-realiser}
\thesislit{5}{SignedDigitIntervalObject}{mul-realiser}
The multiplication function on ternary signed-digit encodings realises the multiplication operation on the interval object.
\end{theorem}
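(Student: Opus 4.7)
\axioms{f}
The plan is to follow the diagram on the right of \cref{fig:mul-commute} by chaining the realisers we have already established. Writing out the goal, we want to show that for all \(\ty{\alpha,\beta}{\K}\) we have
\[ \llangle \mathsf{bigMid}(\mathsf{zipWith}(\mathsf{digitMul},\alpha,\mathsf{repeat} \ \beta)) \rrangle = \llangle \alpha \rrangle * \llangle \beta \rrangle. \]

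First I would apply \cref{thm:bigmid-realiser} to the outer \(\mathsf{bigMid}\), turning the left-hand side into
\[ M(\mathsf{map}(\llangle - \rrangle, \mathsf{zipWith}(\mathsf{digitMul},\alpha,\mathsf{repeat} \ \beta))). \]
Next, applying \cref{lem:digitMul-realiser} pointwise to each index of the inner sequence (and invoking function extensionality once to identify the two sequences in \(\seq \I\)), this becomes
\[ M(\lambda n.\langle \alpha_n \rangle * \llangle \beta \rrangle). \]

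The remaining task is to rewrite this as \(\llangle \alpha \rrangle * \llangle \beta \rrangle\), which by commutativity of multiplication (\cref{lem:mul-comm}) is \(\llangle \beta \rrangle * M(\mathsf{map}(\langle - \rangle, \alpha))\). The key observation is that, for any fixed \(\ty{x}{\I}\), the function \(\lambda y.x * y := \mathsf{affine}(-x,x)\) is a midpoint homomorphism (\cref{lem:affine-midhom}), and therefore an \(M\)-homomorphism by \cref{lem:mp-hom-M-hom}. Applied with \(x := \llangle \beta \rrangle\), this yields
\[ \llangle \beta \rrangle * M(\mathsf{map}(\langle - \rangle, \alpha)) = M(\lambda n.\llangle \beta \rrangle * \langle \alpha_n \rangle), \]
and one more invocation of commutativity of \(*\) inside the \(M\) matches this with the expression we already have. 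Composing the chain of equalities gives the result.

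I do not expect a single hard obstacle: every ingredient (the two intermediate realiser theorems, the \(M\)-homomorphism property of affine maps, and the commutativity of \(*\)) is already available. The one mildly delicate step is the pointwise-to-sequence passage in the second paragraph, which must be carried out carefully so that the sequence fed to \(M\) is definitionally of the right shape; in \textsc{Agda} this amounts to a \(\mathsf{map}\)-\(\mathsf{zipWith}\) fusion that is discharged by function extensionality together with \cref{lem:digitMul-realiser}.
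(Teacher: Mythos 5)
Your proposal is correct and follows essentially the same route as the paper's proof: apply \cref{thm:bigmid-realiser}, then \cref{lem:digitMul-realiser} pointwise under \(M\) (with function extensionality), and finally pull the fixed factor \(\llangle \beta \rrangle\) out of the \(M\) using the fact that multiplication by a fixed element is a midpoint homomorphism and hence an \(M\)-homomorphism (\cref{lem:mp-hom-M-hom}). The only cosmetic difference is that you re-derive the needed one-sided homomorphism property inline from commutativity and \cref{lem:mul-props}.(iii), whereas the paper cites it directly as \cref{lem:mul-hom-r} (whose own proof is exactly that derivation).
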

\begin{proof} \axioms{f}
By the following equational reasoning, (i) the correctness of the infinitary midpoint operation (\cref{thm:bigmid-realiser}), (ii) the correctness of the auxiliary multiplication operation (\cref{lem:digitMul-realiser}), (iii) the fact right-multiplication --- i.e.\ \(\lambda y.x * y\) for a given \(\ty{x}{\I}\) --- is a midpoint homomorphism (\cref{lem:mul-hom-r}), and (iv) the fact that midpoint homomorphisms are \(M\)-homomorphisms (\cref{lem:mp-hom-M-hom}),
\begin{align*}
&\llangle \mathsf{mul}(\alpha,\beta) \rrangle, & &\\
:= &\llangle \mathsf{bigMid}(\mathsf{zipWith}(\mathsf{digitMul},\alpha,\mathsf{repeat} \ \beta)) \rrangle, & &\\
=  & M(\mathsf{map}(\llangle - \rrangle,\mathsf{zipWith}(\mathsf{digitMul},\alpha,\mathsf{repeat} \ \beta))) & &\text{by (i)},\\
:= & M(\lambda n. \llangle \mathsf{digitMul}(\alpha_n,\beta) \rrangle), & &\\
=  & M(\lambda n. \langle \alpha_n \rangle * \llangle \beta \rrangle) & &\text{by (ii)},\\
=  & M(\lambda n. \langle \alpha_n \rangle) * \llangle \beta \rrangle & &\text{by (iii) and (iv)},\\
:=  & \llangle \alpha \rrangle * \llangle \beta \rrangle. & &
\end{align*}
\end{proof}

\section{Ternary Boehm encodings}
\label{sec:boehm}

We shall see in \cref{chap:exact-real-search} that the ternary signed-digit encodings are an ideal type for formalising the convergence of what we call `exact real search'.
However, in that section it will also become clear that the efficiency of the algorithms extracted from the convergence theorems of search, optimisation and regression on that type leave much to be desired.
For this reason, we introduce another type --- ternary Boehm encodings --- that yield more efficient algorithms at the expense of being more difficult to formalise in our \textsc{Agda} framework\footnote{Hence, we have not verified the ternary Boehm encodings to the same extent as the ternary signed-digit encodings.}.

In the 1990s, Hans-J. Boehm produced a practical \textsc{Java} library for exact real arithmetic~\cite{Boehm90s}.
Since then, Boehm's library has been developed to become the underpinning of Google's bespoke Android calculator mobile phone application~\cite{Boehm17,BoehmAPI}.
The star of the library is the class \texttt{CR}, objects of which Boehm calls \emph{constructive reals}.
Objects \(x\) of \texttt{CR} have a single private method \texttt{approximate}, which takes a (small) integer and outputs an (effectively unbounded) integer.
The input integer \(n\) denotes the requested \emph{precision-level} of the constructive real, while the output integer \(x_n\) denotes an integer approximation --- scaled relative to the given precision-level --- of the real number \(\ty{\llbracket x \rrbracket}{\R}\) which \(x\) encodes.

Every object of \texttt{CR} is therefore a bi-infinite sequence of integer approximations of a real number, which is constructed and manipulated to ensure that the following condition about its relationship to the real number it encodes always holds:
\[ d_\R(\llbracket x \rrbracket,2^{n}x_n) < 2^{n} .\]

In this section, we will show how we re-rationalise this class --- adapted slightly for the purposes of exact real search --- within our \textsc{Agda} framework as the type \(\T\).
We will then verify the structure of the type by defining the representation map \(\ty{\llbracket - \rrbracket}{\T \to \R}\); for this purpose, we utilise the Dedekind reals rather than the interval object, both for the sake of variety and because elements of \(\T\) represent reals without reference to a particular compact interval.
However, following this, we will show how we use subtypes of \(\T\) to represent only reals in particular compact intervals, as this is required for search.
Finally, we will discuss how Boehm defines exact real arithmetic on \texttt{CR} by using interval arithmetic.

\subsection{Definition in our type theory}

In order to define \texttt{CR} in our type theory, we start off by representing the class as functions \(\ty{x}{\Z \to \Z}\).
It is helpful to think of the integer approximation \(\ty{x_n}{\Z}\), at any precision-level \(\ty{n}{\Z}\), as an \emph{interval approximation} of \(\llbracket x \rrbracket\) --- namely, \(x_n\) represents to the open interval \(\left( 2^n(x_n-1),2^n(x_n+1) \right),\) which clearly contains \(\llbracket x \rrbracket\).

As we wish to use these encodings for search, the open interval structure is not ideal. Therefore, we alter \texttt{CR} by changing the order used in the relationship above from \emph{strict} to \emph{non-strict}. This is done because we require the use of compact intervals for search, though it makes little difference to the way \texttt{CR} objects are manipulated algorithmically.
Furthermore, for stylistic reasons we make two further changes. Firstly, with the ternary signed-digits we used higher precision-levels \(n\) to mean \emph{more} precise integer approximations, rather than less, and so we alter \texttt{CR} so that we can do the same. Furthermore, we slightly reposition the interval codes to avoid unnecessary subtraction in our formalisation.

By making the two changes above, the functions \(\ty{x}{\Z \to \Z}\) we define for \texttt{CR} in our type theory must satisfy \[ d_\R(\llbracket x \rrbracket,2^{-n}(x_n+1)) \leq 2^{-n} ,\] meaning that they are an infinitary sequence of integers, with each input-output pair \(\ty{(x_n,n)}{\Z^2}\) representing the compact interval with dyadic rational endpoints \[\left[\frac{x_n}{2^n},\frac{x_n+2}{2^n}\right].\]

We illustrate the structure of these encodings in \cref{fig:brick-pattern}, which shows some possible interval approximations of real numbers in \([-3,3]\) at precision-levels \(0\), \(1\) and \(2\) --- note that the intervals halve in size as the precision-level increases further down the illustration.
We refer to this as the \emph{ternary structure}, because each interval represented by \(\ty{(n,m)}{\Z^2}\) in the structure is perfectly trisected into three representations \((2n,m+1)\), \((2n+1,m+1)\) and \((2n+2,m+1)\) on the next precision-level.

\begin{figure}
\centering
\includegraphics[width=\textwidth]{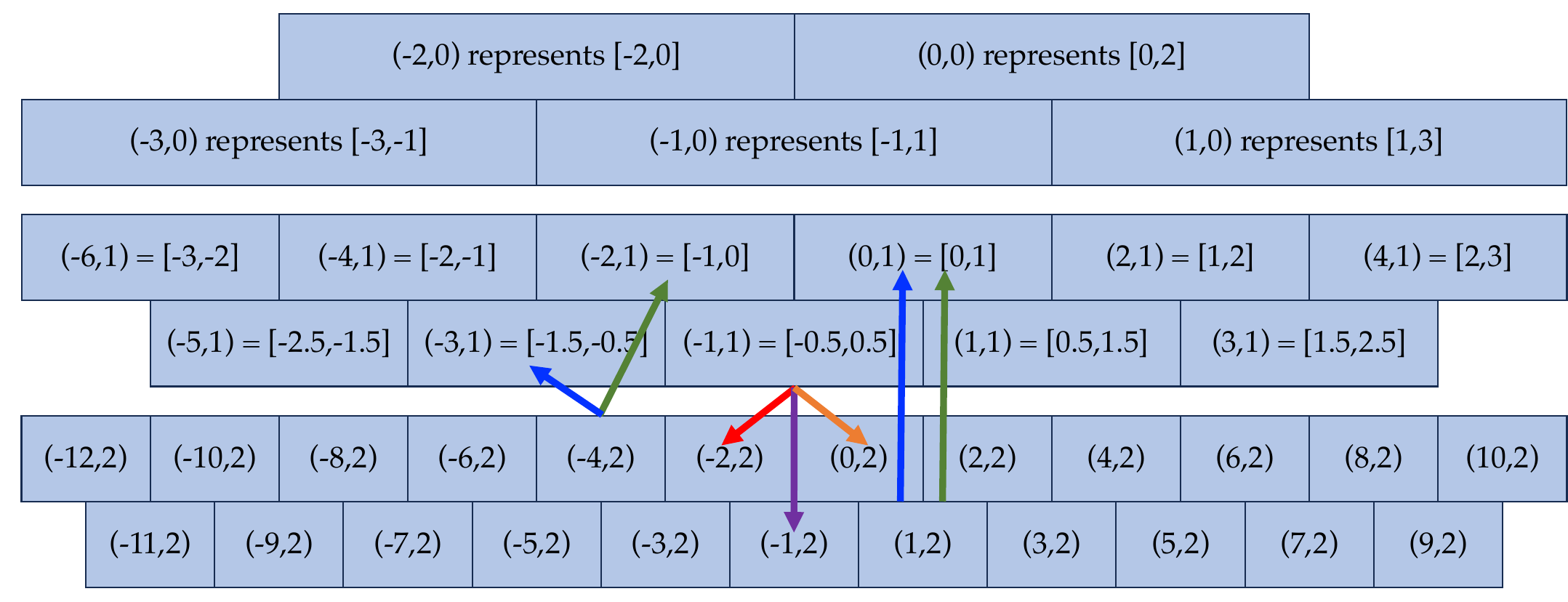}
\caption{Ternary structure underlying the Boehm encodings. Each interval approximation is represented by an integer pair. The coloured arrows illustrate how the structural operations \(\color{red}{\mathsf{downLeft}}\), \(\color{violet}{\mathsf{downMid}}\), \(\color{orange}{\mathsf{downRight}}\), \(\color{blue}{\mathsf{upLeft}}\) and \(\color{OliveGreen}{\mathsf{upRight}}\) work.}
\label{fig:brick-pattern}
\end{figure}

This structure can be `navigated' by the following operations on integer approximations.
The operations \(\color{red}{\mathsf{downLeft}}\), \(\color{violet}{\mathsf{downMid}}\) and \(\color{orange}{\mathsf{downRight}}\) take an integer that refers to an interval on an arbitrary precision-level and return one which refers, respectively, to one of the three possible trisections on the next precision-level. Meanwhile, \(\color{blue}{\mathsf{upLeft}}\) and \(\color{OliveGreen}{\mathsf{upRight}}\) do the reverse.
The names here are colour coded with respect to the colours in \cref{fig:brick-pattern}, which illustrate how the operations work for particular intervals.
\begin{alignat*}{3}
\mathsf{downLeft},& \mathsf{downMid}, \mathsf{downRight}, \mathsf{upLeft}, \mathsf{upRight} : \Z \to \Z, \\
\mathsf{downLeft}(k) & := 2k, \\
\mathsf{downMid}(k) & := 2k+1, \\
\mathsf{downRight}(k) & := 2k+2, \\
\mathsf{upRight}(\pos k) & := \pos (k/2), \\
\mathsf{upRight}(\negsuc k) & := \negsuc (k/2), \\
\mathsf{upLeft}(k) & := \mathsf{upRight}(k-1).
\end{alignat*}

\begin{definition}
\thesislit{5}{BoehmVerification}{_below_}
\thesislit{5}{BelowAndAbove}{_below'_}
Given two integers \(\ty{n,m}{\Z}\), we say that \(n\) is \emph{below} \(m\) if the interval represented by \(\ty{(n,i+1)}{\Z^2}\) is a strict subinterval of that which is represented by \(\ty{(m,i)}{\Z^2}\) for any \(\ty{i}{\Z}\):
\begin{alignat*}{3}
& \mathsf{below} &&: \Z \to \Z \to \Omega,\\
n \ &\mathsf{below} \ m &&:= \mathsf{downLeft} \ m \leq n \leq \mathsf{downRight} \ m.
\end{alignat*}
Equivalently, this can be defined as follows:
\begin{alignat*}{3}
& \mathsf{below}' &&: \Z \to \Z \to \Omega,\\
n \ &\mathsf{below}' \ m &&:= (n = \mathsf{downLeft} \ m) + (n = \mathsf{downMid} \ m) + (n = \mathsf{downRight} \ m).
\end{alignat*}
\end{definition}

To further illustrate how ternary Boehm encodings work, let us now give an example encoding \(\ty{x}{\Z \to \Z}\) which is intended to represent the real number \(\ty{2}{\R}\); \(x\) could be defined in a variety of ways:
\begin{itemize}
\item \(x_0 = 1\), \(x_1 = 2\), ...
\item \(x_0 = 1\), \(x_1 = 3\), ...
\item \(x_0 = 1\), \(x_1 = 4\), ...
\item \(x_0 = 0\), \(x_1 = 2\), ...
\item \(x_0 = 0\), \(x_1 = 3\), ...
\end{itemize}

\noindent
All of these are valid representations of \(2\) as a ternary Boehm encoding because the two interval approximations \emph{intersect}. However, we wish in our type theory to restrict ourselves to only those encodings that follow the ternary structure --- the last item in the above list would not, therefore, be valid.

\begin{definition}
\label{def:ternary}
\thesislit{5}{BoehmVerification}{ternary}
A sequence of integers \(\ty{x}{\Z \to \Z}\) is \emph{ternary} if, \[\mathsf{ternary}(x) := \pity{n}{\Z}{x_{n+1} \ \mathsf{below} \ x_n} .\]
\end{definition}

\noindent
Restricting ourselves in this way will allow us to remove cases that are difficult to reason about, and to define continuity and search similarly to how we defined it for the ternary signed-digit encodings.

\begin{definition}
\label{def:ternary-boehm-encoding}
\thesislit{5}{BoehmVerification}{\urlT}
The type of \emph{ternary Boehm encodings} is defined as the collection of ternary sequences:
\[ \T := \sigmatye{x}{\Z \to \Z}{\mathsf{ternary}(x)} .\]
\end{definition}

Returning to our illustration, we show one possible map that takes any integer to a ternary Boehm encoding that represents it.

\begin{definition}
The function that maps any integer to a ternary Boehm encodings is defined as follows:
\begin{alignat*}{3}
\mathsf{\Z{\hy}to{\hy}\T} &: \Z \to \T, && \ \\
\mathsf{\Z{\hy}to{\hy}\T} &(n)(\pos \ 0) &&:= n, \\
\mathsf{\Z{\hy}to{\hy}\T} &(n)(\pos \ (i + 1)) &&:= \mathsf{downLeft}^{i+1}(n), \\
\mathsf{\Z{\hy}to{\hy}\T} &(n)(\negsuc \ i) &&:= \mathsf{upRight}^{i+1}(n).
\end{alignat*}
\end{definition}

\noindent
Before proceeding, we invite the reader to verify in their minds that this map is indeed correct.

\subsection{Verification of Boehm encodings via Dedekind reals}

In this subsection, we will formally verify that the above definition of ternary Boehm encodings does indeed represent real numbers. The application of this work is for the future verification of exact real arithmetic on the Boehm encodings (discussed in \cref{sec:boehm-era,fw:boehm-functions}).

We first quickly recap the Dedekind reals, which we use for the type of real numbers \(\R\) in this section.
The work of implementing and formalising the Dedekind reals (as well as the rationals and dyadic rationals) for the \textsc{Agda} library \textsc{TypeTopology} was performed by Andrew Sneap in his M.Sci. project~\cite{Sneap}.

The Dedekind reals can be defined over any dense subset of the real numbers, and are usually defined over the rationals \(\Q\).
We choose to instead define them over the \emph{dyadic rationals} \(\D\) --- i.e.\ those rational numbers of form \(\frac{z}{2^n}\) for \(\ty{z,n}{\Z}\) --- as this will be more convenient when verifying the structure of \(\T\), whose interval approximations have dyadic endpoints.
We use the dyadics informally in this thesis; simply noting they are a discrete set with a partial order, and that there is a function \(\ty{\iota}{\Z \x \Z \to \D}\) which map pairs of integers \(\ty{(z,n)}{\Z \x \Z}\) to the dyadic rational \(\frac{z}{2^n}\).

\begin{definition}
\label{def:dedekind}
\andrew{DyadicReals}{\urlR-d}
A \emph{Dedekind real number} \(\ty{(L,R)}{\R}\) consists of two predicates on dyadics \(\ty{L,R}{\D \to \Omega}\) (the \emph{left cut} and the \emph{right cut} respectively) that satisfy,
\begin{enumerate}
\item \(\existstye{p}{\D}{L(p)}\),
\item \(\existstye{q}{\D}{R(p)}\),
\item \(\pity{p}{\D}{L(p) \Leftrightarrow \existsty{p'}{\D}{L(p') \x p < p'}}\),
\item \(\pity{q}{\D}{R(q) \Leftrightarrow \existsty{q'}{\D}{R(q') \x q' < q}}\),
\item \(\pity{p,q}{\D}{L(p) \x R(q) \to p < q}\),
\item \(\pity{p,q}{\D}{p < q \to L(p) \vee R(q)}\).
\end{enumerate}
\end{definition}

Now recall the definitions of representation maps in \cref{sec:representation-map}.
In order to define the representation map \(\ty{\llbracket - \rrbracket}{\T \to \R}\) on ternary Boehm encodings, we first look at how one can represent real numbers as sequences of dyadic rational intervals that satisfy the \emph{nested} and \emph{positioned} properties (defined below).
We will then show that --- as we illustrated in the previous subsection --- every ternary Boehm real \(\ty{\chi}{\T}\) can be transformed into such a sequence of dyadic intervals, and therefore that we can indeed define \(\ty{\llbracket \chi \rrbracket}{\R}\).

\begin{definition}
\thesislit{5}{BoehmVerification}{\urlZ[1/2]\urlsuperscriptI}
The type of \emph{valid representations of dyadic intervals} \(\D^I\) is defined as the type of pairs of dyadics such that the first is smaller than the second:
\[ \D^I := \sigmaty{(l,r)}{\D \x \D}{l \leq r} .\]
For convenience, we simply refer to elements of this type as \emph{dyadic intervals}.
\end{definition}

\begin{definition}
\thesislit{5}{BoehmVerification}{_covers_}
One dyadic interval \emph{covers} another if,
\begin{alignat*}{3}
& \ \mathsf{covers} \ && : \relation{\D^I}, \\
(l_1 , r_1) &\ \mathsf{covers} \ &&(l_2,r_2) := (l_1 \leq l_2) \x (r_2 \leq r_1) .
\end{alignat*}
\end{definition}

\begin{definition}
\thesislit{5}{BoehmVerification}{nested}
A sequence of dyadic intervals \(\ty{\chi}{\Z \to \D^I}\) is \emph{nested} if every interval in the sequence covers the next:
\begin{alignat*}{3}
\mathsf{nested} &: (\Z \to \D^I) \to \Omega, \\
\mathsf{nested} &(\chi) := \Pity{n}{\N}{\chi_n \ \mathsf{covers} \ \chi_{\suc n}} .
\end{alignat*}
\end{definition}

\begin{definition}
\label{def:positioned}
\thesislit{5}{BoehmVerification}{positioned}
A sequence of dyadic intervals \(\ty{\chi}{\Z \to \D^I}\) is \emph{positioned} if it contains arbitrary small intervals,
\begin{alignat*}{3}
\mathsf{positioned} &: (\Z \to \D^I) \to \Omega, \\
\mathsf{positioned} &(\chi) := \Pitye{\varepsilon}{\D}{(\varepsilon > 0) \to \Sigmaty{n}{\N}{r_n - l_n \leq \varepsilon}} \ \ \emph{(where \((l_n,r_n) := \chi_n\))}.
\end{alignat*}
\end{definition}

\begin{lemma}
\label{lem:real-seq-dyad}
\thesislit{5}{BoehmVerification}{\urlllparenthesis_\urlrrparenthesis}\footnote{In \textsc{Agda}, the formalisation of this proof currently requires the use of a number of unformalised lemmas about dyadic numbers.}
If a sequence of dyadic intervals \(\ty{\chi}{\Z \to \D^I}\) is nested and positioned, there is a real number \(\ty{\llparenthesis \chi \rrparenthesis}{\R}\) which it represents.
\end{lemma}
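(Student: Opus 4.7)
The plan is to construct, for a given nested and positioned sequence $\chi$ with $\chi_n = (l_n, r_n)$, a Dedekind real by defining the two cuts
\[ L(p) := \existstye{n}{\N}{p < l_n}, \qquad R(q) := \existstye{n}{\N}{r_n < q}. \]
Both $L$ and $R$ take values in $\Omega$ because propositionally truncated existence is itself a proposition. Intuitively, the represented real $x$ is the common point of every dyadic interval $[l_n, r_n]$, so the left cut collects the dyadics that eventually sit strictly below some $l_n$, and the right cut collects those that eventually sit strictly above some $r_n$. It then remains to verify in turn the six axioms of \cref{def:dedekind}.

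Conditions (1) and (2) (inhabitedness of both cuts) are witnessed by picking any dyadic strictly below $l_0$ and any one strictly above $r_0$. Conditions (3) and (4) (roundedness) use the density of $\D$: if $p < l_n$, choosing any dyadic midpoint $p'$ with $p < p' < l_n$ yields the required witness for $L$, and the converse direction is transitivity of $<$; the argument for $R$ is symmetric.

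Condition (5) (disjointness) is where nestedness enters. Given witnesses $n_1$ of $L(p)$ and $n_2$ of $R(q)$, set $n := \max(n_1, n_2)$; a short induction on the covering chain supplied by $\mathsf{nested}(\chi)$ yields $l_{n_1} \leq l_n$ and $r_n \leq r_{n_2}$, which together with the validity inequality $l_n \leq r_n$ built into $\chi_n : \D^I$ gives
\[ p < l_{n_1} \leq l_n \leq r_n \leq r_{n_2} < q. \]
Since $p < q$ is propositional, the truncated existentials can be unpacked via \cref{lem:truncate-commute}.

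Condition (6) (locatedness) is the main obstacle, and is precisely where positionedness is needed. Given $p < q$, instantiate $\mathsf{positioned}$ at $\varepsilon := (q - p)/2 > 0$ to obtain an index $n$ with $r_n - l_n \leq (q-p)/2$. Since the order on $\D$ is decidable, case split on $p < l_n$ versus $l_n \leq p$: the first case immediately supplies a witness for $L(p)$, while in the second
\[ r_n \leq l_n + (q - p)/2 \leq p + (q - p)/2 < q \]
supplies a witness for $R(q)$, with the final strict inequality coming from $p < q$. Truncating this $+$-sum produces the required $L(p) \vee R(q)$. The only real difficulty is the dyadic bookkeeping, in particular ensuring that $(q-p)/2$ stays inside $\D$ (which it does since $\D$ is closed under halving) and invoking the basic order/arithmetic lemmas from Sneap's dyadic rationals library; no conceptual hurdle arises beyond this.
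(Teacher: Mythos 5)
Your proposal is correct and follows exactly the same route as the paper's (sketched) proof: the same Dedekind cuts \(L(p) := \exists n.\, p < l_n\) and \(R(q) := \exists n.\, r_n < q\), with conditions (1)--(4) from basic dyadic order properties, (5) from nestedness, and (6) from positionedness. Your write-up simply fills in the details the paper leaves implicit, including the correct choice of \(\varepsilon := (q-p)/2\) needed to get the strict inequality \(r_n < q\) in the locatedness case.
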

\begin{proof} \axioms{t}
The proof is due to Andrew Sneap.
The Dedekind cuts \(\ty{L,R}{\D \to \Omega}\) are defined as follows:
\begin{alignat*}{3}
L&(p) &&:= \existsty{n}{\Z}{p < \mathsf{pr_1}(\chi_n)},\\
R&(q) &&:= \existsty{n}{\Z}{\mathsf{pr_2}(\chi_n) < q}.
\end{alignat*}
The idea of \(L\) is that a rational number is less than the represented number if there is an interval in the sequence that it is smaller than the lower endpoint of. The idea of \(R\) is the opposite of this.
The first four properties of the cuts (\cref{def:dedekind}) are proved by properties of the dyadics, while the fifth is by the nested property and the sixth is by the positioned property.
\end{proof}

Any interval that has dyadic endpoints with the same denominator can be represented as a triple of integers.

\begin{definition}
\label{def:dyadic-interval-code}
\thesislit{5}{BoehmVerification}{\urlZ\urlsuperscriptthree}
We call a triple of integers \(\ty{(k,c,p)}{\Z^3}\) such that \(k \leq c\) (the proof term of which we leave implicit) a \emph{dyadic interval code} when we use it for the purpose of representing the dyadic interval \(\ty{\Eta(k,c,p) := (\iota(k,p),\iota(c,p))}{\D^I}\) (which has width \(\frac{c-k}{2^p}\)).
\end{definition}

\noindent
Given a sequence of dyadic interval codes \(\ty{\chi}{\Z \to \Z^3}\), we overload terminology by saying \(\chi\) is nested/positioned to mean that \(\mathsf{map}(\Eta,\chi)\) is nested/positioned.

\begin{corollary}
\label{cor:real-seq-dyad-code}
\thesislit{5}{BoehmVerification}{\urlllparenthesis_\urlrrparenthesis'}
If a sequence of dyadic interval codes \(\ty{\chi}{\Z \to \Z^3}\) is nested and positioned, it represents the real number \(\ty{\llparenthesis \chi \rrparenthesis' := \llparenthesis \mathsf{map}(\Eta,\chi) \rrparenthesis}{\R}\).
\end{corollary}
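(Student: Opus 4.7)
The plan is to observe that this is a direct consequence of \cref{lem:real-seq-dyad} together with the terminological overloading introduced immediately before the corollary's statement. Recall that for a sequence of dyadic interval codes $\ty{\chi}{\Z \to \Z^3}$, the properties ``nested'' and ``positioned'' were declared to mean precisely that the induced sequence $\mathsf{map}(\Eta,\chi)$ of dyadic intervals is nested and positioned. So the two hypotheses of the corollary unfold definitionally into the two hypotheses of the lemma applied to $\mathsf{map}(\Eta,\chi)$.

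First I would check that $\Eta(k,c,p) := (\iota(k,p),\iota(c,p))$ is indeed a well-formed element of $\D^I$. This requires exhibiting the proof term $\iota(k,p) \leq \iota(c,p)$, which follows from the implicit hypothesis $k \leq c$ carried by any dyadic interval code (\cref{def:dyadic-interval-code}) by monotonicity of the embedding $\iota(-,p)$ for a fixed denominator exponent $p$ (the scaling factor $2^p$ is positive). This ensures that $\mathsf{map}(\Eta,\chi) : \Z \to \D^I$ is a legitimate sequence of dyadic intervals to which \cref{lem:real-seq-dyad} applies.

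Then I would invoke \cref{lem:real-seq-dyad} on $\mathsf{map}(\Eta,\chi)$, producing a real number $\llparenthesis \mathsf{map}(\Eta,\chi) \rrparenthesis : \R$, and simply take this to be the defining value of $\llparenthesis \chi \rrparenthesis'$. Since the definition is precisely $\llparenthesis \chi \rrparenthesis' := \llparenthesis \mathsf{map}(\Eta,\chi) \rrparenthesis$, nothing further needs to be verified — the Dedekind cuts, and all six cut axioms, are inherited without modification from the lemma.

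There is no substantive obstacle here; the corollary is essentially bookkeeping. The only mild subtlety is the one noted above about carrying the $k \leq c$ witness through the $\Eta$ construction, which in the formalisation would be handled either by packaging $\Eta$ to take the inequality proof as an implicit argument, or by an auxiliary lemma showing that any $(k,c,p)$ with $k \leq c$ gives an element of $\D^I$. All the genuine mathematical content — the construction of the left and right cuts, and the use of the nested and positioned properties to verify the cut axioms — lives in \cref{lem:real-seq-dyad} and its supporting lemmas about dyadics.
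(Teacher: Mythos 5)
Your proposal is correct and matches the paper's proof, which is simply an application of \cref{lem:real-seq-dyad} to \(\mathsf{map}(\Eta,\chi)\) using the overloaded meaning of nested/positioned for sequences of dyadic interval codes. The extra bookkeeping you note about the \(k \leq c\) witness being carried through \(\Eta\) into \(\D^I\) is handled implicitly by \cref{def:dyadic-interval-code} and does not change the argument.
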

\begin{proof} \axioms{t}
By \cref{lem:real-seq-dyad}.
\end{proof}

The dyadic intervals underlying the structure of \(\T\) (illustrated in \cref{fig:brick-pattern}) can be represented by pairs of integers.

\begin{definition}
\label{def:ternary-interval-code}
\thesislit{5}{BoehmVerification}{\urlZ\urlsuperscripttwo}
We call a pair of integers \(\ty{(k,p)}{\Z^2}\) a \emph{ternary interval code} when we use it for the purpose of representing the interval \(\ty{\Rho(k,p) := (\iota(k,p),\iota(k+2,p))}{\D^I}\) (which has width \(\frac{1}{2^{p-1}}\)).
\end{definition}

\noindent
Furthermore, every ternary interval code \((k,p)\) gives a dyadic interval code \[\Mu(k,p) := (k,k+2,p), \] such that the below diagram commutes.

\[
  \begin{tikzcd}[sep=large]
    \Z^2 \arrow{r}{\Mu} \arrow[swap]{dr}{\Rho} & \Z^3 \arrow{d}{\Eta} \\
     & \D^I
  \end{tikzcd}
\]

Given a sequence of ternary interval codes \(\ty{\chi}{\Z \to \Z^2}\), we overload terminology by saying \(\chi\) is nested/positioned to mean that \(\mathsf{map}(\Rho,\chi)\) is nested/positioned.

\begin{corollary}
\label{cor:real-seq-ternary-code}
\thesislit{5}{BoehmVerification}{\urlllparenthesis_\urlrrparenthesis''}
If sequence of ternary interval codes \(\ty{\chi}{\Z \to \Z^2}\)  is nested and positioned, it represents the real number \(\ty{\llparenthesis \chi \rrparenthesis'' := \llparenthesis \mathsf{map}(\Mu,\chi) \rrparenthesis'}{\R}\).
\end{corollary}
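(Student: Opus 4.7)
The plan is to reduce this corollary to \cref{cor:real-seq-dyad-code} by transporting the nested and positioned hypotheses along the composition $\Mu$, exploiting the commutativity $\Eta \circ \Mu = \Rho$ displayed just above the statement.

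First I would unfold the overloaded terminology. By the convention stated after \cref{def:dyadic-interval-code}, saying that $\mathsf{map}(\Mu,\chi)$ is nested (respectively positioned) as a sequence of dyadic interval codes means that $\mathsf{map}(\Eta,\mathsf{map}(\Mu,\chi))$ is nested (respectively positioned) as a sequence of dyadic intervals in $\D^I$. Similarly, by the convention after \cref{def:ternary-interval-code}, the hypothesis that $\chi$ is nested (respectively positioned) as a sequence of ternary interval codes means that $\mathsf{map}(\Rho,\chi)$ is nested (respectively positioned) in $\D^I$. Thus the whole corollary reduces to showing that $\mathsf{map}(\Eta,\mathsf{map}(\Mu,\chi))$ and $\mathsf{map}(\Rho,\chi)$ are pointwise equal as sequences in $\D^I$, whence they satisfy exactly the same nested/positioned properties.

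Next I would verify this pointwise equality by direct computation from the definitions. For any $(k,p) \in \Z^2$ we have
\[
  \Eta(\Mu(k,p)) \;=\; \Eta(k,k+2,p) \;=\; (\iota(k,p),\iota(k+2,p)) \;=\; \Rho(k,p),
\]
which is exactly the commutativity of the triangle preceding the statement. Hence $\mathsf{map}(\Eta,\mathsf{map}(\Mu,\chi))_n = \mathsf{map}(\Rho,\chi)_n$ for every $n : \Z$, giving the required pointwise equality without any appeal to function extensionality (and with it, judgemental equality of the sequences themselves).

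Finally I would invoke \cref{cor:real-seq-dyad-code} on the sequence of dyadic interval codes $\mathsf{map}(\Mu,\chi)$, which by the previous paragraph inherits the nested and positioned properties from the assumption on $\chi$. This produces the real number $\llparenthesis \mathsf{map}(\Mu,\chi) \rrparenthesis'$, which is by definition $\llparenthesis \chi \rrparenthesis''$, completing the proof. There is no real obstacle here: the entire content is bookkeeping about the two layers of overloading plus the commutativity $\Eta \circ \Mu = \Rho$; the substantive analytic work has already been discharged inside \cref{lem:real-seq-dyad} via Sneap's construction of the Dedekind cuts.
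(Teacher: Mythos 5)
Your proposal is correct and follows exactly the route the paper takes: the paper's proof is simply an appeal to \cref{cor:real-seq-dyad-code}, with the transport of the nested and positioned hypotheses along \(\Eta \circ \Mu = \Rho\) left implicit in the overloading conventions that you have spelled out. The extra bookkeeping you supply is the right justification for that one-line citation.
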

\begin{proof} \axioms{t}
By \cref{cor:real-seq-dyad-code}.
\end{proof}

For sequences of ternary interval codes, being positioned is implied by a stronger property we call the \emph{normalised} property.
A sequence of such interval codes is normalised if the precision-level of the output interval is always identical to that which was requested.

\begin{definition}
\label{def:normalised}
\thesislit{5}{BoehmVerification}{normalised}
A sequence of ternary interval codes is \emph{normalised} if the precision-level of the code at every point \(\ty{n}{\N}\) is always exactly \(n\).
\begin{alignat*}{3}
\mathsf{normalised} &: (\Z \to \Z^2) \to \Omega ,\\
\mathsf{normalised} &(\chi) :=  \Pity{n}{\N}{\mathsf{pr_2}(\chi_n) = n}.
\end{alignat*}
\end{definition}

\begin{lemma}
\thesislit{5}{BoehmVerification}{normalised-positioned}\footnote{In \textsc{Agda}, the formalisation of this proof currently requires the use of a number of unformalised lemmas about dyadic numbers.}
If a sequence of ternary interval codes is normalised, then it is positioned.
\end{lemma}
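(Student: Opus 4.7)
The plan is to unfold the \emph{positioned} condition (\cref{def:positioned}) and exploit the fact that normalisation fixes the width of each interval as a simple function of the index. Concretely, given any normalised $\chi : \Z \to \Z^2$, for every $n \in \N$ we have $\chi_n = (k_n, n)$ for some $k_n \in \Z$, and hence by \cref{def:ternary-interval-code} the $n$\textsuperscript{th} interval is
\[\Rho(\chi_n) = (\iota(k_n, n),\ \iota(k_n + 2, n)),\]
whose width is exactly $\iota(k_n+2,n) - \iota(k_n,n) = \tfrac{2}{2^n} = \tfrac{1}{2^{n-1}}$.

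So, given an arbitrary positive dyadic $\varepsilon$, the task reduces to producing an index $n \in \N$ such that $\tfrac{1}{2^{n-1}} \leq \varepsilon$. This is an instance of the Archimedean property for the dyadics: since $\varepsilon > 0$, the reciprocal $1/\varepsilon$ is bounded above by some power $2^{m}$ of two, and then any $n \geq m + 1$ suffices. In the formalisation this should be a direct consequence of Sneap's development of the dyadics (the existence of such an $n$ for every positive dyadic $\varepsilon$).

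The only step that requires real care is the arithmetic manipulation of the dyadic endpoints. One must show $\iota(k+2,n) - \iota(k,n) = \iota(2, n)$ and that $\iota(2, n) = \iota(1, n-1)$, reducing everything to a comparison of powers of two. Provided the expected lemmas about $\iota$ (additivity of the numerator and the $2 \cdot \iota(k, n+1) = \iota(k,n)$ rescaling identity) are available, the proof concludes by witnessing $n$ via the Archimedean bound and discharging $\tfrac{1}{2^{n-1}} \leq \varepsilon$ using monotonicity of division by powers of two. The main obstacle, then, is not conceptual but rather assembling the right dyadic-arithmetic lemmas — which the author has flagged as currently unformalised in the companion result.
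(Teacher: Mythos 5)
Your proposal is correct and follows essentially the same route as the paper's proof: use normalisation to read off that the $n$\textsuperscript{th} interval has width $\tfrac{2}{2^n}$, then choose an index $q$ with $\tfrac{2}{2^q} \leq \varepsilon$ via the Archimedean-style bound on the dyadics. Your extra remarks on the $\iota$-arithmetic lemmas correspond exactly to the unformalised dyadic lemmas the paper flags in its footnote.
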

\begin{proof}
Recall from \cref{def:positioned} that, given some \(\ty{\varepsilon}{\D} > 0\) we wish to find an interval in the sequence whose width is at most \(\varepsilon\).

We define \(\ty{q}{\Z}\) to be any integer such that \(\frac{2}{2^q} < \varepsilon\).
By the normalised property, \(q = pr_2(\chi_q)\) and, by definition of ternary interval codes, \(\chi_q\) has width \(\frac{2}{2^q}\), which is smaller than \(\varepsilon\).
\end{proof}

If a sequence of ternary interval codes \(\ty{\chi}{\Z \to \Z^2}\) is normalised, we can effectively discard the precision-level information in the output and simply consider it as an integer sequence \(\ty{\mathsf{map}(\mathsf{pr_1},\chi)}{\Z \to \Z}\).
Further, if \(\chi\) is nested, then \(\mathsf{map}(\mathsf{pr_1},\chi)\) is ternary (\cref{def:ternary}).

\begin{lemma}
\label{lem:ternary-nested}
\thesislit{5}{BoehmVerification}{ternary-nested}\footnote{In \textsc{Agda}, the formalisation of the proof of this lemma currently requires the use of a number of unformalised lemmas about dyadic numbers.}
Given a normalised sequence of ternary interval codes, it is nested if and only if it is ternary.
\end{lemma}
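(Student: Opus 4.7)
The plan is to show that, under the normalised assumption, the covering relation between $\chi_n$ and $\chi_{n+1}$ unfolds pointwise into exactly the $\mathsf{below}$ relation, so the two global properties (nested and ternary) coincide because both are defined as $\Pi$-statements over $\mathbb{N}$.

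First I would use the normalised hypothesis to rewrite each $\chi_n$ as $(k_n, n)$ for some $k_n := \mathsf{pr}_1(\chi_n) \in \mathbb{Z}$. This reduces the whole statement to showing, for each $n$, that $\Rho(k_n,n) \ \mathsf{covers}\ \Rho(k_{n+1},n+1)$ holds if and only if $k_{n+1} \ \mathsf{below}\ k_n$ holds. Unfolding $\Rho$, the left-hand side becomes the conjunction $\iota(k_n,n) \leq \iota(k_{n+1},n+1)$ and $\iota(k_{n+1}+2,n+1) \leq \iota(k_n+2,n)$.

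Next I would invoke the (unformalised) arithmetic lemma for the dyadics which states that $\iota(a,p) \leq \iota(b,p+1) \Leftrightarrow 2a \leq b$ (essentially clearing denominators by multiplication by $2^{p+1}$). Applied to both inequalities above, the covering relation becomes $2k_n \leq k_{n+1}$ and $k_{n+1}+2 \leq 2k_n+2$, i.e.\ $k_{n+1} \leq 2k_n+2$. Recognising $2k_n = \mathsf{downLeft}(k_n)$ and $2k_n+2 = \mathsf{downRight}(k_n)$, this is literally the definition of $k_{n+1} \ \mathsf{below}\ k_n$. Both directions of the biconditional fall out of this equivalence simultaneously.

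Finally I would lift this pointwise equivalence to the global statement: since $\mathsf{nested}(\chi) = \Pi_{n} \chi_n \ \mathsf{covers}\ \chi_{n+1}$ and $\mathsf{ternary}(\mathsf{map}(\mathsf{pr}_1,\chi)) = \Pi_{n}\ k_{n+1} \ \mathsf{below}\ k_n$, the equivalence of the bodies yields the equivalence of the $\Pi$-types. The main obstacle will be the dyadic arithmetic step — this is precisely the content flagged by the footnote, as it requires establishing the order-preservation properties of $\iota$ under rescaling by powers of two, together with the identification of the arithmetic expressions $2k$ and $2k+2$ with $\mathsf{downLeft}(k)$ and $\mathsf{downRight}(k)$ from Section~\ref{sec:boehm}. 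Once those numerical facts are in place, the structural argument is essentially immediate.
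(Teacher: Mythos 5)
The paper gives no written proof of this lemma: it is stated bare, with the footnote deferring everything to the \textsc{Agda} formalisation, which itself rests on unformalised lemmas about dyadic arithmetic. So there is nothing to compare against line by line, but your argument is clearly the intended one and is essentially correct. Normalisation lets you write $\chi_n = (k_n,n)$, the covering condition unfolds to two inequalities between dyadics at adjacent precision levels, and the rescaling fact $\iota(a,p)\leq\iota(b,p+1)\Leftrightarrow 2a\leq b$ is precisely the unformalised dyadic content the footnote alludes to; the pointwise equivalence then lifts to the $\Pi$-types on both sides. One small arithmetic slip: for the right endpoints, $\iota(k_{n+1}+2,\,n+1)\leq\iota(k_n+2,\,n)$ clears denominators to $k_{n+1}+2\leq 2(k_n+2)=2k_n+4$, not $2k_n+2$ as you wrote; your stated conclusion $k_{n+1}\leq 2k_n+2=\mathsf{downRight}(k_n)$ is nonetheless the correct one, so this is a transcription error in the intermediate step rather than a gap in the argument.
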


\noindent
As the resulting integer sequence is ternary, it is clearly a ternary Boehm encoding (\cref{def:ternary-boehm-encoding}).
Indeed, the type of ternary interval codes that are nested and positioned are \emph{equivalent} to the type of ternary Boehm encodings \(\T\).

\begin{definition}
\thesislit{5}{BoehmVerification}{to-interval-seq}
We define the inclusion map from ternary Boehm encodings to sequences of ternary interval codes by the following:
\begin{alignat*}{3}
\mathsf{to{\hy}interval{\hy}seq} &: \T \to (\Z \to \Z^2) ,\\
\mathsf{to{\hy}interval{\hy}seq} &(\chi)_n :=  (\chi_n,n).
\end{alignat*}
\end{definition}

\begin{lemma}
\label{lem:T-equiv}
\thesislit{5}{BoehmVerification}{ternary-normalised\urlsimeq\urlT}
The type of nested and normalised sequences of ternary interval codes that are equivalent to the type of ternary Boehm encodings:
\[ \sigmaty{\chi}{\Z \to \Z^2}{ \mathsf{nested}(\mathsf{map}(\mathsf{pr_1},\chi)) \x \mathsf{normalised}(\chi)} \simeq \T. \]
\end{lemma}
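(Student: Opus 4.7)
My plan is to construct explicit maps in both directions and show they are mutually inverse, using \cref{lem:ternary-nested} as the key ingredient to swap between the ``nested'' condition on the ternary interval codes and the ``ternary'' condition on their first projections.

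\medskip

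\emph{Forward map.} Given $(\chi, n, p) : \Sigma_{\chi : \Z \to \Z^2} (\mathsf{nested}(\mathsf{map}(\mathsf{pr}_1,\chi)) \times \mathsf{normalised}(\chi))$, I define $f(\chi,n,p) := (\mathsf{map}(\mathsf{pr}_1,\chi), t)$, where $t : \mathsf{ternary}(\mathsf{map}(\mathsf{pr}_1,\chi))$ is obtained as follows: since $\chi$ is normalised, \cref{lem:ternary-nested} tells us that for such a sequence being nested is logically equivalent to the integer sequence $\mathsf{map}(\mathsf{pr}_1,\chi)$ being ternary, so from $n$ I extract $t$.

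\medskip

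\emph{Backward map.} Given $(x, t) : \T$, I define $g(x,t) := (\mathsf{to{\hy}interval{\hy}seq}(x), n, p)$, where $p : \mathsf{normalised}(\mathsf{to{\hy}interval{\hy}seq}(x))$ is immediate from the definition (since $\mathsf{pr}_2(\mathsf{to{\hy}interval{\hy}seq}(x)_n) = n$ definitionally), and $n$ is obtained by first observing that $\mathsf{map}(\mathsf{pr}_1,\mathsf{to{\hy}interval{\hy}seq}(x)) = x$ is ternary (which is exactly $t$), and then applying the other direction of \cref{lem:ternary-nested}.

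\medskip

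\emph{Mutual inverses.} For $f \circ g \sim \mathsf{id}$: given $(x,t) : \T$, we have $\mathsf{map}(\mathsf{pr}_1,\mathsf{to{\hy}interval{\hy}seq}(x)) = x$ definitionally, and since $\mathsf{ternary}(x)$ is a proposition (a $\Pi$-type of propositions, using that \verb|below| is truth-valued), the two ternary proof terms coincide, giving an identity of $\Sigma$-types. For $g \circ f \sim \mathsf{id}$: given $(\chi,n,p)$, the composite produces the sequence $\lambda i.(\mathsf{pr}_1(\chi_i), i)$; using the normalisation witness $p$ which gives $\mathsf{pr}_2(\chi_i) = i$ pointwise, this sequence is pointwise-equal to $\chi$, and hence equal by function extensionality. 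Equality of the accompanying proof terms then follows because both $\mathsf{nested} \circ \mathsf{map}(\mathsf{pr}_1,-)$ and $\mathsf{normalised}$ are propositions (using \cref{lem:pi-prop} together with the fact that equalities in the discrete type $\Z$ and the preorder on $\D$ are propositional).

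\medskip

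The main (minor) obstacle is the $g \circ f \sim \mathsf{id}$ direction, which genuinely requires function extensionality to promote the pointwise equality $(\mathsf{pr}_1(\chi_i),i) = \chi_i$ (itself obtained from $p$ together with the $\eta$-rule for products) to an identity of the underlying functions $\Z \to \Z^2$; once this is done, the propositional nature of the three structural conditions makes the remaining component-wise equalities automatic, and we apply $f$ and $g$ with the axiom label \axioms{f}.
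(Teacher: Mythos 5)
Your proposal is correct and matches the paper's proof: the paper likewise uses $\mathsf{to{\hy}interval{\hy}seq}$ and $\mathsf{map}(\mathsf{pr}_1)$ as the two maps, invokes \cref{lem:ternary-nested} to transfer between the nested and ternary conditions, and relies on function extensionality (the proof carries the \axioms{f} label) for the round-trip identities. You have simply spelled out the details that the paper's sketch declares ``trivial to see.''
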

\begin{proof} \axioms{f}
Converting from \(\T \to (\Z \to \Z^2)\) is done by \(\mathsf{to{\hy}interval{\hy}seq}\), while converting in the other direction is done by \(\mathsf{map}(\mathsf{pr_1})\); in both cases, we prove the necessary properties of the constructed representation using \cref{lem:ternary-nested}.
It is then trivial to see that these two functions yield identities.
\end{proof}

\begin{corollary}
\label{def:real-T}
\thesislit{5}{BoehmVerification}{\urlllbracket_\urlrrbracket}
Any given ternary Boehm encoding \(\ty{x}{\T}\) represents the real number
\[\ty{\llbracket x \rrbracket := \llparenthesis \lambda n.(x_n,n) \rrparenthesis''}{\R}.\]
\end{corollary}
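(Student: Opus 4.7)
The plan is to use the sequence of ternary interval codes $\chi := \lambda n.(x_n, n)$ naturally induced by the ternary Boehm encoding $x$, and to show that it satisfies the two hypotheses of Corollary \ref{cor:real-seq-ternary-code} (nested and positioned). The real number $\llbracket x \rrbracket$ is then defined to be $\llparenthesis \chi \rrparenthesis''$.

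First, I would observe that $\chi$ is by construction normalised in the sense of Definition \ref{def:normalised}: for every $n$ we have $\mathsf{pr_2}(\chi_n) = n$ definitionally. This is the routine bookkeeping step.

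Next, I would establish that $\chi$ is nested. Since $\mathsf{map}(\mathsf{pr_1},\chi) = x$ and $x : \T$ means, by Definition \ref{def:ternary-boehm-encoding}, that $x$ is ternary (Definition \ref{def:ternary}), Lemma \ref{lem:ternary-nested} gives that $\chi$ is nested: concretely, the fact that for every $n$ we have $x_{n+1} \ \mathsf{below} \ x_n$ at successive precision-levels $n$ and $n+1$ translates, via the map $\Rho$, into the covering relation $\Rho(\chi_n) \ \mathsf{covers} \ \Rho(\chi_{n+1})$. Equivalently, one may route this through the equivalence of Lemma \ref{lem:T-equiv} and take $\chi$ to be the image of $x$ under the right-to-left map, avoiding a direct application of Lemma \ref{lem:ternary-nested}.

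Finally, since $\chi$ is normalised, the unnumbered lemma immediately preceding this corollary (\texttt{normalised-positioned}) yields that $\chi$ is positioned: given $\varepsilon > 0$, pick any $q$ with $\tfrac{2}{2^q} < \varepsilon$; then $\chi_q$ has width exactly $\tfrac{2}{2^q}$ because $\mathsf{pr_2}(\chi_q) = q$. With both the nested and positioned properties in hand, Corollary \ref{cor:real-seq-ternary-code} applies and produces the Dedekind real $\llparenthesis \chi \rrparenthesis''$, which we take as $\llbracket x \rrbracket$. The only potential obstacle is the routing through $\Mu$ and $\Eta$ in Corollary \ref{cor:real-seq-ternary-code}, but this is pure composition of definitions and presents no real difficulty once the nested and positioned properties are in place.
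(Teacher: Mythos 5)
Your proof is correct and follows essentially the same route as the paper, which simply cites \cref{lem:T-equiv} together with \cref{cor:real-seq-ternary-code}; you have merely unfolded the content of \cref{lem:T-equiv} (normalisation by construction, nestedness via \cref{lem:ternary-nested}, and positionedness via the normalised-implies-positioned lemma) before applying \cref{cor:real-seq-ternary-code}.
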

\begin{proof} \axioms{ft}
By \cref{lem:T-equiv,cor:real-seq-ternary-code}.
\end{proof}

\subsection{Exact real arithmetic}
\label{sec:boehm-era}

Boehm defines arithmetic operations on \texttt{CR} by looking to the interval arithmetic on the dyadic rational intervals~\cite{BoehmAPI}.
In this subsection we follow this same process in order to informally define negation, addition and multiplication on \(\T\).
The definitions here differ to those on \texttt{CR} because elements of \(\T\) are different to those of \texttt{CR} (for example we must satisfy the ternary property, \cref{def:ternary}); furthermore, we give simpler, more convenient definitions than Boehm's, though this is at the expense of some efficiency.

Note that the definitions of this subsection are informal --- we do not define these operations in our formal library and have not verified their correctness; indeed, this is ongoing work as discussed in \cref{fw:boehm-functions}.
We will, however, use these definitions in our own proof-of-concept \textsc{Java} library for exact real search on ternary Boehm encodings in \cref{chap:exact-real-search}.

\subsubsection{Negation}

Negation on dyadic intervals is defined by
\[-\left[\frac{k}{2^{p}},\frac{c}{2^{p}}\right] := \left[\frac{-c}{2^{p}},\frac{-k}{2^{p}}\right].\]
Hence, negation on dyadic interval codes \(\Z^3\) is defined by \(-(k,c,p) := (-c,-k,p)\).
The width of the output interval is the same as that of the input interval, meaning that if the input was in fact a ternary interval code \(\Z^2\), then the output also would be a ternary interval code on the same precision-level.
Therefore, we can easily define negation on ternary Boehm encodings.

\begin{definition}
\emph{Negation} on ternary Boehm encodings is defined by the following:
\begin{alignat*}{3}
- &: \T \to \T, \\
- &x := \lambda n.- x_n - 2.
\end{alignat*}
\end{definition}

\subsubsection{Addition}

Addition on dyadic intervals is defined by
\begin{multline*}
\left[\frac{k_1}{2^{p_1}},\frac{c_1}{2^{p_1}}\right] + \left[\frac{k_2}{2^{p_2}},\frac{c_2}{2^{p_2}}\right] := \\
\left[\frac{2^{p_2-\mathsf{min}(p_1,p_2)}k_1 + 2^{p_1-\mathsf{min}(p_1,p_2)}k_2}{2^{\mathsf{max}(p_1,p_2)}},\frac{2^{p_2-\mathsf{min}(p_1,p_2)}c_1 + 2^{p_1-\mathsf{min}(p_1,p_2)}c_2}{2^{\mathsf{max}(p_1,p_2)}}\right].
\end{multline*}
As we are looking to define this operation on \(\T\), whereon we can approximate the elements to any precision-level we like, we can consider the simpler case where the precision-levels of the inputs are identical.
Hence, addition on dyadic interval codes \(\Z^3\) in the case where the precision-levels of the inputs is identical is defined by \((k_1,c_1,p) + (k_2,c_2,p) :=
(k_1 + k_2, c_1 + c_2, p)\).
The width of the output interval in this case is
\[ \frac{(c_1 + c_2) - (k_1 + k_2)}{2^p}, \]
meaning that given two ternary interval codes on precision-level \(p\) the width would be
\[ \frac{(k_1 + 2 + k_2 + 2) - (k_1 + k_2)}{2^p} = \frac{1}{2^{p-2}} .\]
In order to achieve an output interval approximation of a ternary Boehm encoding at a requested precision-level \(\ty{p}{\Z}\), therefore, the input ternary Boehm encodings must be evaluated to precision-level \(p+2\).

\begin{definition}
\emph{Addition} on ternary Boehm encodings is defined by the following:
\begin{alignat*}{3}
\ &+ &&: \T \to \T \to \T, \\
x &+ y &&:= \lambda n.(x_{n+2} + y_{n+2}).
\end{alignat*}
\end{definition}

\subsubsection{Multiplication}

Multiplication on dyadic intervals is defined by
\[\left[\frac{k_1}{2^{p_1}},\frac{c_1}{2^{p_1}}\right] * \left[\frac{k_2}{2^{p_2}},\frac{c_2}{2^{p_2}}\right] := \left[\frac{\mathsf{min}(k_1k_2,k_1c_2,c_1k_2,k_2c_2)}{2^{p_1+p_2}},\frac{\mathsf{max}(k_1k_2,k_1c_2,c_1k_2,k_2c_2)}{2^{p_1+p_2}}\right].\]
As with addition, we can consider the simpler case where the precision-levels of the inputs are identical.
Hence, multiplication on dyadic interval codes \(\Z^3\) in the case where the precision-levels of the inputs is identical is defined by \((k_1,c_1,p) * (k_2,c_2,p) :=
(\mathsf{min}(k_1k_2,k_1c_2,c_1k_2,k_2c_2),\mathsf{max}(k_1k_2,k_1c_2,c_1k_2,k_2c_2),2p)\).
The width of the output interval varies based on the inputs --- this is related to the fact that multiplication is continuous but not uniformly continuous.
In the case where both intervals are positive, for example, the width of the output interval is
\[ \frac{((k_1+2)(k_2+2))-k_1k_2}{2^{2p}} = \frac{k1+k2+2}{2^{2p-1}} = \frac{1}{2^{2p-1-\mathsf{log2}(k1+k2+2)}}. \]
This case in fact supercedes all others, and so in order to achieve an output interval approximation of a ternary Boehm encoding at a requested precision-level \(\ty{p}{\Z}\), the input ternary Boehm encodings must be evaluated to precision-level \((p + \mathsf{log2}(\mathsf{abs}(k_1)+\mathsf{abs}(k_2)+2) + 1) / 2\).

\begin{definition}
\emph{Multiplication} on ternary Boehm encodings is defined by the following:
\begin{alignat*}{3}
\ &* &&: \T \to \T \to \T, \\
x &* y &&:= \lambda n.(x_{(n + \mathsf{log2}(\mathsf{abs}(x_n)+\mathsf{abs}(y_n)+2) + 1) / 2} + y_{(n + \mathsf{log2}(\mathsf{abs}(x_n)+\mathsf{abs}(y_n)+2) + 1) / 2}).
\end{alignat*}
\end{definition}

\subsection{Representing compact intervals}

A key difference between ternary Boehm encodings \(\T\) and ternary signed-digit encodings \(\K\) is that with the former we can represent real numbers across the real line, whereas with the latter we can only represent reals in a given compact interval (in this thesis, we use the example of \([-1,1]\)).
However, \(\T\) is not a uniformly continuous searchable type --- in order to search the ternary Boehm encodings, we will have to restrict ourselves to particular compact intervals.
In this subsection, we define three subtypes of \(\T\) that can be used to represent real numbers that are in the compact interval \([\frac{k}{2^i},\frac{k+2}{2^i}]\), which is represented by the ternary interval code \(\ty{(k,i)}{\Z^2}\).

The first subtype \(\T(k,i)_1\) is straightforward to define: it is simply the type of ternary Boehm encodings which `pass through' the interval encoded by \((k,i)\) at precision-level \(i\).

\begin{definition}
\thesislit{5}{BoehmVerification}{CompactInterval}
For every ternary interval code \(\ty{(k,i)}{\Z^2}\) there is a subtype \(\T(k,i)_1\) of \emph{ternary Boehm encodings in the compact interval \([\frac{k}{2^i},\frac{k+2}{2^i}]\)} defined as the collection of ternary Boehm encodings that feature the interval approximation \((k,i)\):
\[ \T(k,i)_1 := \sigmaty{x}{\T}{x_i = k} .\]
\end{definition}

\noindent
This subtype is useful because it is easy to map back to elements of \(\T\) by simply taking the first projection.

The second subtype \(\T(k,i)_2\) is a little less convenient to work with directly, but is more convenient for the purpose of defining a closeness function (which we will do in \cref{chap:exact-real-search}).
Elements \(\ty{x}{\T(k,i)_2}\) are \(\N\)-indexed sequences that only give the interval approximations of the represented real number \(\llbracket x \rrbracket\) for precision-levels greater than \(i\) --- i.e.\ for any \(\ty{n}{\N}\), an interval approximation of \(\llbracket x \rrbracket\) is represented by the interval code \(\ty{(x_n,i+1+n)}{\Z^2}\).

\begin{definition}
\label{def:compact-2}
\thesislit{5}{BoehmVerification}{CompactInterval2}
For every ternary interval code \(\ty{(k,i)}{\Z^2}\) there is a subtype \(\T(k,i)_2\) of \emph{ternary Boehm encodings in the compact interval \([\frac{k}{2^i},\frac{k+2}{2^i}]\)} defined as the collection of sequences of type \(\seq\Z\) that locate, after the interval approximation \((k,i)\), any ternary Boehm encoding:
\[ \T(k,i)_2 := \sigmaty{\chi}{\seq\Z}{\below{\chi_0}{k} \x \Pity{n}{\N}{\below{\chi_{n+1}}{\chi_n}}} .\]
\end{definition}

We can convert from each of these definitions to the other.

\begin{definition}
\thesislit{5}{BoehmVerification}{CompactInterval-1-to-2}
Given any ternary interval code \(\ty{(k,i)}{\Z^2}\), the function \[\T(k,i)_1\mathsf{{\hy}to{\hy}}\T(k,i)_2 : \T(k,i)_1 \to \T(k,i)_2\] is defined for every \(\ty{((x,b),e)}{\T(k,i)_1}\) where
\begin{itemize}
\item \(\ty{x}{\Z \to \Z}\),
\item \(\ty{b}{\mathsf{ternary}(x)}\),
\item \(\ty{e}{x_i = k}\).
\end{itemize}

We first define the sequence
\begin{alignat*}{3}
\chi &: \seq\Z, \\
\chi&_n := x_{i+1+n}.
\end{alignat*}
The proof that \((\below{\chi_0}{k})\) is then by \(\ty{b_i}{(\below{x_{i+1}}{x_i})}\) and \(e\); the proof that \(\Pity{n}{\N}{\below{\chi_{n+1}}{\chi_n}}\) is by \(b\).
\end{definition}

\begin{definition}
\label{def:compact-2-to-1}
\thesislit{5}{BoehmVerification}{CompactInterval-2-to-1}
Given any ternary interval code \(\ty{(k,i)}{\Z^2}\), the function \[\T(k,i)_2\mathsf{{\hy}to{\hy}}\T(k,i)_1 : \T(k,i)_2 \to \T(k,i)_1\] is defined for every \(\ty{(\chi,b_0,b_s)}{\T(k,i)_2}\) where
\begin{itemize}
\item \(\ty{\chi}{\seq\Z}\),
\item \(\ty{b_0}{\below{\chi_0}{k}}\),
\item \(\ty{b_s}{\Pity{n}{\N}{\below{\chi_{n+1}}{\chi_n}}}\).
\end{itemize}

\vspace{0.25cm}
We first define the sequence
\begin{alignat*}{3}
x &: \Z \to \Z,\\
x&_n := k \ \ & \text{\emph{(when \(n = i\))}}, \\
x&_n := \chi_{n-1-i} \ \ & \text{\emph{(when \(n > i\))}}, \\
x&_n := \mathsf{upRight}^{}(k) \ \ & \text{\emph{(when \(n < i\))}}.
\end{alignat*}

The proof that \((\mathsf{ternary}(x))\) is then by \(b_0\) and \(b_s\); the proof that \(x_n = k\) is immediate.
\end{definition}

Despite the fact that, given a particular interval code \(\ty{(k,i)}{\Z^2}\), the types \(\T(k,i)_1\) and \(\T(k,i)_2\) can represent the same real numbers, the types are not equivalent.
This is because \(\T(k,i)_1\) contains more information about locating the represented real number than \(\T(k,i)_2\), and this information cannot be recovered when converting from the latter to the former --- indeed, in \cref{def:compact-2-to-1} we used the \(\mathsf{upRight}\) function to prepend arbitrary interval approximations of the real number for precision-levels higher than \(i\).
Despite this, it \emph{is} the case that the composition of the two functions defined above preserves the represented real number.
The informal idea of this is that from precision-level \(\ty{i}{\Z}\) the ternary Boehm encoding (and thus, the underlying dyadic interval approximations) are identical --- hence, the limit of the two sequences is the same real number.

We now connect the two representations of real numbers in compact intervals that we use in this thesis, and show that the second subtype defined in this subsection is equivalent to the type of ternary Boehm encodings.

\begin{lemma}
\label{thm:Compact-2-simeq-K}
\thesislit{5}{BoehmVerification}{CompactInterval2-ternary}
Given any ternary interval code \(\ty{(k,i)}{\Z^2}\), it is the case that \(\T(k,i)_2 \simeq \K\).
\end{lemma}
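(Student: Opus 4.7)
The plan is to construct an equivalence directly, exploiting the fact that the \textsf{below} relation provides exactly three choices — precisely matching the three ternary digits of $\3$. Using the equivalent reformulation $\mathsf{below}'$ given in the excerpt, $n \ \mathsf{below} \ m$ is (up to logical equivalence) a three-way disjoint union $(n = \mathsf{downLeft} \ m) + (n = \mathsf{downMid} \ m) + (n = \mathsf{downRight} \ m)$, which carries the same information as a choice of element of $\3$.

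First I would define $\phi : \T(k,i)_2 \to \K$. Given $(\chi, b_0, b_s) \in \T(k,i)_2$, the proof $b_0$ witnesses $\chi_0 \ \mathsf{below} \ k$ and each $b_s(n)$ witnesses $\chi_{n+1} \ \mathsf{below} \ \chi_n$. Via $\mathsf{below}'$, each such witness picks out one of three cases, which I map to $\overline{1}$, $0$, $1$ respectively, producing a sequence in $\seq\3 = \K$. Conversely I would define $\psi : \K \to \T(k,i)_2$ using the helper
\[
f : \Z \to \3 \to \Z, \quad f(m, \overline 1) := \mathsf{downLeft}(m),\ f(m,0) := \mathsf{downMid}(m),\ f(m,1) := \mathsf{downRight}(m),
\]
setting $\chi_0 := f(k, \alpha_0)$ and $\chi_{n+1} := f(\chi_n, \alpha_{n+1})$; the required below-proofs $b_0$ and $b_s$ hold immediately from the three defining equalities of $f$, translated back through the $\mathsf{below}' \to \mathsf{below}$ direction.

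Next I would verify the two round-trip identities. For $\phi \circ \psi \sim \mathrm{id}_{\K}$: by induction on $n$, the digit extracted from the reconstructed $\chi_n = f(\chi_{n-1}, \alpha_n)$ must be $\alpha_n$ itself, since $f(m, d)$ lands in exactly one of the three cases of $\mathsf{below}'$ and the extraction is the inverse of that case distinction. For $\psi \circ \phi \sim \mathrm{id}_{\T(k,i)_2}$: on the underlying sequence, I induct on $n$ and use the fact that $b_s(n)$ (or $b_0$) already constrains $\chi_{n+1}$ to equal $f(\chi_n, \phi(\chi)_{n+1})$. For the proof components ($b_0$ and $b_s$), it suffices that $\mathsf{below}$ is a proposition (which it is, being a conjunction of $\leq$-statements on $\Z$, each of which is a proposition), after which function extensionality yields equality of the $\Sigma$-type pairs.

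The main obstacle is handling the proof components of elements of $\T(k,i)_2$ cleanly: the $\Sigma$-type has three layers, and showing that two such elements are equal requires the sequence equality (via function extensionality) together with propositionality of both $b_0$ and the pointwise $b_s$. Provided a small lemma establishing $\mathsf{is{\hy}prop}(n \ \mathsf{below} \ m)$ — immediate from propositionality of $\leq_{\Z}$ and \cref{lem:sigma-prop} — the equivalence follows. I would package the result either as a direct quasi-inverse (appealing to the fact that every quasi-inverse is an equivalence) or by assembling the two homotopies explicitly into a proof of $\mathsf{is{\hy}equiv}(\phi)$.
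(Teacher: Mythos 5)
Your proposal is correct and follows essentially the same route as the paper's (sketched) proof: extract a digit from each \(\mathsf{below}\) witness via the three-way case split, reconstruct with \(\mathsf{downLeft}/\mathsf{downMid}/\mathsf{downRight}\), and check the round trips using function extensionality and the propositionality of \(\mathsf{below}\). Your write-up is in fact more explicit than the paper's about the proof-component bookkeeping, which is exactly where the paper concedes the formalisation is "rather involved".
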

\begin{proof}[Proof (Sketch).] \axioms{f}
Although the formalisation is rather involved, the intuition here is clear. 
We convert any ternary Boehm encoding in a compact interval \(\ty{\chi}{\T(k,i)_2}\) into a ternary signed-digit encoding \(\ty{\alpha}{\K}\) by using the values of \(\ty{b_0}{\below{\chi_0}{k}}\) and \(\ty{b_s}{\pitye{n}{\N}\below{\chi_{n+1}}{\chi_n}}\).
If, at any point \(\ty{n}{\N}\) in the sequence, \(\chi_n\) goes \(\mathsf{downLeft}\) then \(\alpha_n = \mone\); if \(\chi_n\) goes \(\mathsf{downMid}\) then \(\alpha_n = 0\); and else if \(\chi_n\) goes \(\mathsf{downRight}\) then \(\alpha_n = 1\).
Following the opposite method, we can also convert in the other direction; and clearly either composition of these conversions gives an identity.
\end{proof}

Finally, we give the third subtype for representing compact intervals using \(\T\).
We can remove the redundant interval approximations from the structure of \(\T(k,i)_2\), which reflects the ability to represent the same real number as any ternary signed-digit encoding \(\K\) as a function \(\N \to \2\) (discussed in \cref{sec:signed-digits-background}).

\begin{definition}
\thesislit{5}{BoehmVerification}{_split-below_}
Given two integers \(\ty{n,m}{\Z}\), we say that \(n\) is \emph{split-below} \(m\) if \(n\) is either \(\mathsf{downLeft} \ m\) or \(\mathsf{downRight} \ m\):
\begin{alignat*}{3}
& \mathsf{split{\hy}below} &&: \Z \to \Z \to \Omega,\\
n \ &\mathsf{split{\hy}below} \ m &&:= (n = \mathsf{downLeft} \ m) + (n = \mathsf{downRight} \ m).
\end{alignat*}
\end{definition}

\begin{definition}
\label{def:compact-3}
\thesislit{5}{BoehmVerification}{CompactInterval3}
For every ternary Boehm interval code \(\ty{(k,i)}{\Z^2}\) there is a subtype \(\T(k,i)_3\) of \emph{ternary Boehm encodings in the compact interval \([\frac{k}{2^i},\frac{k+2}{2^i}]\)}  defined as the collection of sequences of type \(\seq\Z\) that locate, after the interval approximation \((k,i)\), only those ternary Boehm encodings that never use \(\mathsf{downMid}\):
\[ \T(k,i)_3 := \sigmaty{\chi}{\N \to \Z}{\belowsplit{\chi_0}{k} \x \Pity{n}{\N}{\belowsplit{\chi_{n+1}}{\chi_n}}} .\]
\end{definition}

Therefore, this subtype defined is equivalent to the Cantor type \(\seq \2\), as at each interval approximation there are two choices for the next.

\begin{lemma}
\label{thm:Compact-3-simeq-Cantor}
\thesislit{5}{BoehmVerification}{CompactInterval3-cantor}
Given any ternary Boehm interval code \(\ty{(k,i)}{\Z^2}\), it is the case that \(\T(k,i)_3 \simeq \seq\2\).
\end{lemma}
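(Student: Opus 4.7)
The plan is to construct explicit maps in both directions, mirroring (but simplifying) the proof of \cref{thm:Compact-2-simeq-K}. The intuition is that, starting from the integer $k$ at level $i$, each subsequent interval approximation in an element of $\T(k,i)_3$ is forced by the \(\mathsf{split{\hy}below}\) condition to be one of exactly two choices — $\mathsf{downLeft}$ or $\mathsf{downRight}$ of the previous one — and these successive binary choices are precisely what is recorded by an element of $\seq\2$.

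For the forward map $\T(k,i)_3 \to \seq\2$, I would take an element $(\chi, b_0, b_s)$ — where $b_0 : \belowsplit{\chi_0}{k}$ and $b_s : \Pity{n}{\N}{\belowsplit{\chi_{n+1}}{\chi_n}}$ — and inspect which side of each disjunction is inhabited. Concretely, $\alpha_0 := 0$ if $b_0$ is $\inl$ and $\alpha_0 := 1$ if $b_0$ is $\inr$; similarly $\alpha_{n+1}$ is read off from $b_s(n)$. For the backward map $\seq\2 \to \T(k,i)_3$, I would recursively define the integer sequence by $\chi_0 := \mathsf{downLeft}\ k$ if $\alpha_0 = 0$ and $\chi_0 := \mathsf{downRight}\ k$ otherwise, and $\chi_{n+1} := \mathsf{downLeft}\ \chi_n$ or $\mathsf{downRight}\ \chi_n$ according to $\alpha_{n+1}$, producing the required split-below proof terms directly from the chosen branches (so $b_0$ and each $b_s(n)$ are literally $\inl\ (\mathsf{refl}\ -)$ or $\inr\ (\mathsf{refl}\ -)$).

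Verifying that these maps are mutually inverse should be mostly routine. The round trip $\seq\2 \to \T(k,i)_3 \to \seq\2$ reduces, at each index, to a case split on the binary digit, and in both cases the first map recovers the original bit from the $\inl$/$\inr$ tag produced by the second. The round trip $\T(k,i)_3 \to \seq\2 \to \T(k,i)_3$ requires (i) function extensionality to establish pointwise equality of the underlying integer sequences, which follows by induction and case analysis on each $b_s(n)$, and (ii) showing that the reconstructed split-below proofs agree with the originals. For (ii) I cannot appeal to propositionality, since $+$-types are structure rather than property, so I would instead proceed by case analysis on the original proof terms, confirming that an $\inl\ e$ (respectively $\inr\ e$) with $e : \chi_0 = \mathsf{downLeft}\ k$ (respectively $\mathsf{downRight}\ k$) is sent via the forward map to $0$ (respectively $1$) and then back to the same $\inl\ e$ (respectively $\inr\ e$) — this uses that the integers are a set (\cref{def:hset}) so that the equality component $e$ is uniquely determined.

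The main obstacle will be the careful bookkeeping of $\Sigma$-types and dependent pattern matching in the second round trip: because the type family $\belowsplit{-}{-}$ is built from identity types of integers rather than arbitrary propositions, I must genuinely reconstruct the constructors, which forces a dependent induction simultaneously on $\chi$, $b_0$, and $b_s$ rather than a clean use of extensionality alone. This is the same flavour of bureaucracy that arises in the proof of \cref{thm:Compact-2-simeq-K}, but strictly simpler since only two branches (rather than three) occur at each step. Function extensionality is invoked to finalise the equality of sequences, giving the axiom usage \axioms{f}.
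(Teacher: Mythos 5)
Your proposal is correct and takes essentially the same route as the paper: the paper's proof is literally ``similar to \cref{thm:Compact-2-simeq-K}'', i.e.\ it constructs the two conversions by reading off which of $\mathsf{downLeft}$/$\mathsf{downRight}$ each step takes (with the same $0\mapsto\mathsf{downLeft}$, $1\mapsto\mathsf{downRight}$ convention) and observes that both composites are identities, exactly as you do. One minor simplification you passed over: $\mathsf{split{\hy}below}$ is in fact a proposition --- it is a disjoint union of two mutually exclusive identity types over the set $\Z$ (cf.\ \cref{lem:plus-prop}) --- so the agreement of the reconstructed proof terms with the originals is automatic and does not require your bespoke case analysis on $b_0$ and $b_s$.
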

\begin{proof} \axioms{f}
Similar to \cref{thm:Compact-2-simeq-K}.
\end{proof}

\noindent
Split-belowness trivially implies belowness, and hence every element of \(\T(k,i)_3\) can be mapped to one in \(\T(k,i)_2\) --- and, hence, one in \(\T(k,i)_3\) --- all of which represent the same real number.

As discussed in \cref{sec:signed-digits-background}, there is a problem when using the Cantor type \(\seq\2\) for representing real numbers: namely, it is unsuitable for computation due to its lack of redundant digits. This problem extends to the equivalent subtype \(\T(k,i)_3\), and so it is important to understand that we do not (and indeed cannot) use this type for computation in our framework. The reason for its introduction here is instead for the purposes of \emph{search} --- in \cref{sec:boehm-suitable}, we will see that it is this subtype which is most suitable for searching the type of ternary Boehm encodings in the given compact interval \(\left[\frac{k}{2^i},\frac{k+2}{2^i}\right]\) (for any \(\ty{(k,i)}{\Z^2}\)). Indeed, this further means that the type \(\seq\2\) is suitable for searching ternary signed-digit encodings, as we explore in \cref{sec:K-suitable}.
\chapter{Exact Real Search}
\label{chap:exact-real-search}

Our generalised framework for search, optimisation and regression introduced in \cref{chap:searchable,chap:generalised} has been shown to operate on a wide class of types.
Optimisation takes place on totally bounded (\cref{def:totallybounded}) closeness spaces (\cref{def:cspace}) equipped with a preorder (\cref{def:preorder}) and approximate linear preorder (\cref{def:approx-order}); whereas search takes place on uniformly continuously searchable (\cref{def:c-searchable}) closeness spaces --- further, recall that totally boundedness implies uniformly continuous searchability (\cref{thm:tb-csearch}).
Regression can take place on either of the above classes of closeness space, depending on whether we perform it via optimisation (\cref{reg:min}) or search (\cref{th:perfect,th:imp}).

In this thesis' final chapter, we bring this generalised framework full circle by instantiating it on the two representations of real numbers we explored and verified in \cref{chap:reals}: ternary signed-digits and ternary Boehm encodings.
In order to achieve this, we formalise that each type's representations of real numbers in compact intervals --- i.e.\ \(\K\) itself for the former and the subtype \(\T(k,i)_3\) for any given \(\ty{(k,i)}{\Z^2}\) (\cref{def:compact-2}) for the latter --- are members of the aforementioned class of types, in that we can perform search, optimisation and regression on them.
We then formalise, only for \(\K\), that the exact real arithmetic functions we wish to search, optimise and regress are indeed uniformly continuous functions --- proofs that are required for the algorithms to run.
Finally, we give a variety of toy, proof-of-concept examples of search, optimisation and regression using these functions.

On signed-digit encodings, our example \textsc{Haskell} algorithms are compiled directly from instantiations of the formal \textsc{Agda} proofs of convergence (we explain how in the final paragraph of \cref{appendix:agda}). This means that, while the extracted computation is relatively inefficient, we have formalised the fact that it will compute a correct answer.
Meanwhile, our examples of the framework working on ternary Boehm encodings are written in a \textsc{Java} library which informally reflects the ideas of the formal framework; we sacrifice direct proofs of convergence in order to attain more efficient proof-of-concept algorithms.

\section{Exact Real Search using signed-digit encodings}

Some of the work of this section was previously published as part of a joint paper with Dan R. Ghica at the \emph{Logic in Computer Science (LICS) 2021} conference~\cite{Todd21}.

\subsection{Suitability for search, optimisation and regression}
\label{sec:K-suitable}

In this subsection, we show that the type of ternary signed digit encodings \(\K\) --- explored in \cref{sec:signed-digits} --- is a member of our class of types in that we can perform search, optimisation and regression upon it.
We show, therefore, that \(\K\) is a totally bounded and uniformly continuously searchable closeness space with a preorder and approximate linear preorder.

\subsubsection{\(\K\) is a continuously searchable closeness space}

The closeness space on \(\K\) is a discrete-sequence closeness space (\cref{def:disseq-closeness}), as the type \(\3\) is finite and, therefore, discrete.
Further, by the finiteness of \(\3\), this closeness space is totally bounded.

\begin{corollary}
\label{cor:signed-digit-tb}
\thesislit{6}{SignedDigitSearch}{\urlthree\urlsuperscriptN-totally-bounded}
The type of ternary signed-digit encodings \(\K\) is a totally bounded closeness space.
\end{corollary}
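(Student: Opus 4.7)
The plan is to observe that this corollary is an immediate instance of the general results already established in Section~\ref{sec:cspace-examples}. Since $\K := \seq\3$ is literally the type of sequences on the finite linearly ordered type $\3$, and $\3$ is pointed (e.g.\ by $0 \in \3$), we can invoke Corollary~\ref{cor:disseq-cspace} directly, which states that the type of sequences on any finite linearly ordered type is a totally bounded closeness space.

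Unpacking this slightly: first, $\3$ is finite linearly ordered by definition (it is equivalent to $\F(3)$), so by Lemma~\ref{lem:fin-discrete} it is discrete, which gives us via Lemma~\ref{lem:disseq-cspace} the discrete-sequence closeness function on $\K$, making $\K$ a closeness space. Second, because $\3$ is additionally \emph{pointed}, Lemma~\ref{lem:disseq-totallybounded} (applied to the constant $\N$-indexed family $\lambda (\ty{-}{\N}).\3$) supplies the required $\varepsilon$-net for every precision $\ty{\varepsilon}{\N}$: namely the type $\mathsf{Vec}(\varepsilon,\3)$ of length-$\varepsilon$ prefixes, with $g$ appending an arbitrary constant tail (using pointedness of $\3$) and $h$ truncating to the $\varepsilon$-prefix.

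There is no real obstacle here — the proof is a one-line specialisation of a previously formalised result, and its role in the thesis is simply to identify $\K$ with the generic construction so that subsequent search, optimisation and regression theorems (notably Theorem~\ref{thm:tb-csearch}, Theorem~\ref{th:min}, and Theorems~\ref{th:perfect} and~\ref{th:imp}) apply to it without further work.
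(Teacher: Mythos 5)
Your proposal is correct and matches the paper's own proof, which likewise dispatches the corollary in one line by noting that \(\3\) is finite and discrete and invoking \cref{cor:disseq-cspace} (itself resting on \cref{lem:disseq-cspace,lem:disseq-totallybounded}). Your additional remark that pointedness of \(\3\) is what feeds the \(\varepsilon\)-net construction is a faithful unpacking of the same argument, not a different route.
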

\begin{proof} \axioms{f}
Because \(\3\) is finite and discrete, the result follows from \cref{cor:disseq-cspace}.~\qedhere
\end{proof}

There are two proofs of uniformly continuous searchability we can employ for \(\K\).
The first is directly from the above fact that \(\K\) is totally bounded: this yields the \emph{totally bounded uniformly continuous searcher}.

\begin{corollary}
\thesislit{6}{SignedDigitSearch}{\urlthree\urlsuperscriptN-csearchable-tb}
\thesislit{6}{SignedDigitSearch}{\urlthree\urlsuperscriptN-csearchable}
\label{cor:signed-digit-ucsearch}
The closeness space of ternary signed-digit encodings \(\K\) is uniformly continuously searchable.
\end{corollary}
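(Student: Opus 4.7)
The plan is to give two parallel proofs corresponding to the two formal references attached to the statement. The first route will invoke Theorem \ref{thm:tb-csearch} directly: Corollary \ref{cor:signed-digit-tb} has just established that $\K$ is a totally bounded closeness space, so I need only additionally exhibit a point of $\K$ --- which is immediate, for example via the constant sequence $\lambda n.\,0$ (using that $\3$ is pointed). Applying Theorem \ref{thm:tb-csearch} to $\K$ together with this witness then yields the required uniformly continuous searcher.

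The second route will instead view $\K = \seq \3$ as the countable product $\Pi T$ of the constant family $T := \lambda (\ty{-}{\N}).\,\3$ and apply the Tychonoff Theorem \ref{thm:tychonoff}. To feed that theorem I need each component $T_n = \3$ to be a uniformly continuously searchable closeness space: $\3$ is discrete by Lemma \ref{lem:fin-discrete}, so it carries the discrete closeness function (Lemma \ref{lem:discrete-cspace}); it is pointed and finite linearly ordered, and so searchable by Lemma \ref{lem:fin-searchable}; and every searchable closeness space is uniformly continuously searchable by Remark \ref{remark:search-csearch}. Theorem \ref{thm:tychonoff} then delivers uniform continuous searchability of the product $\K$.

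Mathematically, both routes reduce to routine applications of the preceding general machinery, so no step is a real obstacle. The only point that warrants care is that the two constructions, while both satisfying the searcher specification of Definition \ref{def:c-searcher}, are \emph{computationally} quite different: the totally-bounded route enumerates a full $\delta$-net of $\K$ before interrogating the predicate, whereas the Tychonoff route builds the answer digit-by-digit via the head and tail predicates from the proof of Theorem \ref{thm:tychonoff}. This computational distinction will matter for the practical examples later in this chapter, and motivates keeping both proofs in the formalisation even though either one is sufficient for the present existence statement.
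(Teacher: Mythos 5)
Your proposal is correct and matches the paper's own treatment: the paper likewise gives two proofs, the first by combining \cref{cor:signed-digit-tb} with \cref{thm:tb-csearch} (your explicit pointedness witness is left implicit there but is of course needed), and the second a decreasing-modulus searcher in the style of the Tychonoff construction. The only small point to flag on your second route is that \cref{thm:tychonoff} produces a searcher for the \emph{countable product} closeness space on \(\Pi(\lambda n.\3)\), whereas \(\K\) carries the \emph{discrete-sequence} closeness function, so you must additionally invoke the lemma that these two closeness functions coincide for families of discrete types; the paper's formalisation instead sidesteps this by rewriting the Tychonoff argument directly for finite-sequence spaces (its formal second proof cites \cref{cor:disseq-csearch}). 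Your closing observation about the computational difference between the two searchers is exactly the point the paper makes when choosing between them in the examples.
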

\begin{proof}[Proof 1]
By \cref{cor:signed-digit-tb,thm:tb-csearch}.
\end{proof}

\noindent
The second proof is inspired instead by Berger's searcher on the Cantor space and Escard\'o's on countable product spaces (which we formalised in \cref{sec:tychonoff}): this yields the \emph{decreasing-modulus uniformly continuous searcher}.

\begin{proof}[Proof 2]
By \cref{cor:disseq-csearch}.
\end{proof}

\noindent
In the examples given in \cref{sec:K-examples}, it will become apparent that the totally bounded searcher is almost always more efficient than the decreasing-modulus searcher; specifically, the decreasing-modulus searcher is only better in \cref{ex:K-reg-1-search-imperfect}.

By combining the proofs of \cref{cor:signed-digit-ucsearch}
with the examples in \cref{sec:cspace-examples,sec:csearch-examples,sec:tychonoff}, we further have (among other things) that arbitrary products of \(\K\) are uniformly continuously searchable and totally bounded closeness spaces.
However, we do not usually wish to search \(\K\) \emph{directly}, because (as discussed in \cref{sec:signed-digits}) there are infinitely-many redundant representations of any number that features a \(0\), and so a direct searcher will have inherent inefficiencies.
Instead, we search \(\K\) \emph{indirectly} using the Cantor type \(\seq\2\).
Recall that every real number in \([-1,1]\) can be represented as a sequence \(\N \to \{-1,1\}\) and, hence, by using \(0\) for \(-1\), as a sequence \(\seq\2\).
We can therefore trivially define the following inclusion map from \(\seq\2\) to \(\K\):

\begin{definition}
\thesislit{6}{SignedDigitSearch}{_\urluparrow}
\label{def:uparrow-map}
We define the inclusion map \(\ty{-^\uparrow}{\seq\2 \to \K}\) that converts representations of \([-1,1]\) as Cantor sequences \(\seq\2\) to ternary signed-digit encodings of the same number by mapping every \(0\) to \(\mone\).
\end{definition}

\noindent
Using this map, which is trivially uniformly continuous, we can convert a uniformly continuous function \(\ty{f}{\K \to X}\), for any closeness space \(X\), into a uniformly continuous function \(\ty{f^\uparrow}{\seq\2 \to X}\) with the same modulus of uniform continuity. 
For our examples, therefore, we usually convert the uniformly continuous predicates and functions on \(\K\) that we wish to search, optimise and regress into ones on \(\seq\2\) that we can search, optimise and regress indirectly using uniformly continuous searchers on \(\seq\2\).

\begin{corollary}
\label{cor:cantor-tb}
\thesislit{6}{SignedDigitSearch}{\urltwo\urlsuperscriptN-totally-bounded}
The Cantor type \(\seq\2\) is a totally bounded closeness space.
\end{corollary}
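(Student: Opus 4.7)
The plan is to derive this as an immediate instance of \cref{cor:disseq-cspace}, which already establishes that the type of sequences on any finite linearly ordered type is a totally bounded closeness space. Since the two-element type \(\2\) is (by construction) finite linearly ordered (it is equivalent to \(\F(2)\), cf.\ \cref{def:fin}), the statement follows by specialising that corollary to \(\2\). No further work is needed. This mirrors exactly the pattern used for \(\K\) in \cref{cor:signed-digit-tb}, which specialises the same machinery to \(\3\).

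If one wishes to unfold the reference, the proof reduces to two independent observations. First, the discrete-sequence closeness function \(\ty{c_{\seq\2}}{\closeness{\seq\2}}\) from \cref{def:disseq-closeness} endows \(\seq\2\) with a closeness space structure via \cref{lem:disseq-cspace}, using that \(\2\) is discrete (and hence the constant family \(\lambda n.\2\) is a family of discrete types). Second, total boundedness follows by \cref{lem:disseq-totallybounded}, taking the constant family \(\lambda n.\2\) of pointed finite linearly ordered types; for each precision \(\ty{\varepsilon}{\N}\) the corresponding \(\varepsilon\)-net is the finite type of \(\varepsilon\)-length binary vectors \(\mathsf{Vec}(\varepsilon,\lambda n.\2)\), with \(g\) extending a vector by repeating (say) \(0\) and \(h\) truncating to the \(\varepsilon\)-prefix.

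There is no real obstacle here: the result is a direct instantiation of general machinery already developed in \cref{sec:cspace-examples}. The only minor bookkeeping step, if one invokes the function-extensionality-requiring \cref{lem:disseq-cspace}, is the usual \axioms{f} annotation, matching the proof discipline used throughout for closeness spaces built from discrete-sequence closeness functions.
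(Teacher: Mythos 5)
Your proposal is correct and matches the paper's proof exactly: the paper likewise derives the result as an immediate instance of \cref{cor:disseq-cspace} using the finiteness and discreteness of \(\2\), with the \axioms{f} annotation. Your additional unfolding into \cref{lem:disseq-cspace} and \cref{lem:disseq-totallybounded} is accurate but not needed beyond the one-line citation.
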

\begin{proof} \axioms{f}
Because \(\2\) is finite and discrete, the result follows from \cref{cor:disseq-cspace}.~\qedhere
\end{proof}

\begin{corollary}
\label{cor:cantor-ucsearch}
\thesislit{6}{SignedDigitSearch}{\urltwo\urlsuperscriptN-csearchable-tb}
\thesislit{6}{SignedDigitSearch}{\urltwo\urlsuperscriptN-csearchable}
The Cantor closeness space \(\seq\2\) is uniformly continuously searchable.
\end{corollary}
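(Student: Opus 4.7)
The plan is to give two independent proofs, mirroring the double proof pattern established for \cref{cor:signed-digit-ucsearch}. Both proofs are essentially one-line appeals to already-established results, so the work here is primarily in recognising which lemmas apply, not in constructing new arguments. The main (and only) obstacle is a minor bookkeeping one: \cref{thm:tb-csearch} additionally requires the closeness space to be pointed, so I need to note that \(\2\) (and hence \(\seq\2\)) is trivially pointed.

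For the first proof, I would appeal to \cref{cor:cantor-tb}, which gives that \(\seq\2\) is a totally bounded closeness space. Since \(\2\) is pointed (by either of its elements), the constant sequence \(\lambda n.0\) (say) witnesses that \(\seq\2\) is pointed as well. The result then follows immediately from \cref{thm:tb-csearch}, which says that every pointed totally bounded closeness space is uniformly continuously searchable.

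For the second proof, I would appeal directly to \cref{cor:disseq-csearch} (or equivalently \cref{lem:disseq-csearch-1}), which states that the type of sequences on any finite linearly ordered type is a uniformly continuously searchable closeness space. Since \(\2\) is the prototypical finite linearly ordered type (it is \(\F(2)\) up to equivalence, as established in \cref{sec:finite-linear-ordered}), this applies directly to \(\seq\2\). As with the ternary case, I would expect the first proof to yield a more efficient extracted searcher in most use-cases, while the second (decreasing-modulus style) searcher mirrors Berger's original searcher on the Cantor space and may be preferable in certain scenarios — a tradeoff to be explored in the examples of \cref{sec:K-examples}.
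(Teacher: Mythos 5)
Your proposal matches the paper's own argument exactly: the paper gives the same two proofs, the first by \cref{cor:cantor-tb,thm:tb-csearch} and the second by \cref{cor:disseq-csearch}. Your additional remark that \(\seq\2\) is pointed (needed to invoke \cref{thm:tb-csearch}) is correct and a harmless elaboration of a detail the paper leaves implicit.
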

\begin{proof}[Proof 1]
By \cref{cor:cantor-tb,thm:tb-csearch}.
\end{proof}
\begin{proof}[Proof 2]
By \cref{cor:disseq-csearch}.
\end{proof}

\subsubsection{\(\K\) has an approximate linear preorder}

A preorder on \(\K\) is given by the lexicographic order (\cref{lem:lexicographic-preorder}), whereas an approximate linear preorder is given by the approximate lexicographic order (\cref{lem:lexicographic-approx}).
Recall that these are defined by the following:
\begin{alignat*}{3}
\alpha &\leq_{\seq \3} \beta &&:= \Pity{n}{\N}{\alpha \sim^n \beta \to \alpha_n \leq_D \beta_n}, \\
\alpha &\leq^\varepsilon_{\seq \3} \beta &&:= \Pity{n}{\N}{n < \varepsilon \to \alpha \sim^n \beta \to \alpha_n \leq_D \beta_n}.
\end{alignat*}

\noindent
However, it is not appropriate to use these orders for optimisation and regression on ternary signed-digit encodings, because the lexicographic ordering of the representations of type \(\K\) does not amount to the numerical ordering of the real numbers of type \(\I\).
As an example, consider the following different representations \(\ty{z\alpha,z\beta}{\K}\) of \(\ty{0}{\I}\):
\[ z\alpha := 01\mone\mone\mone\mone\mone\mone\ldots \text{ and } z\beta := 1\mone\mone\mone\mone\mone\mone\mone\ldots .\]
By the above lexicographic order on \(\K\), it is the case that \(z\alpha \leq z\beta\) but \(\neg (z\beta \leq z\alpha)\). This issue also propagates to the approximate ordering.
Escard\'o has shown that the lexicographic ordering on signed-digit encodings and the ordering on the encoded numbers can indeed coincide by introducing a normalisation operator \(\ty{\mathsf{onorm}}{\K \x \K \to \K \x \K}\)~\cite{Escardo98}.
The idea is that the normalised pair represents the same numbers (i.e.\ \(\llangle \alpha \rrangle = \llangle \mathsf{pr_1}(\mathsf{onorm}(\alpha,\beta)) \rrangle\) and \(\llangle \beta \rrangle = \llangle \mathsf{pr_2}(\mathsf{onorm}(\alpha,\beta)) \rrangle\)) but that, on the normalised pair, the lexicographic and numerical orderings coincide.

We take a simpler approach to the problem, and utilise the equivalence between \(\K\) and the type \(\T(-1,0)_2\) of ternary Boehm encodings used for representing the interval \([-1,1]\). This equivalence, given in \cref{thm:Compact-2-simeq-K}, yields a `normalisation' map \[\ty{\mathsf{integer{\hy}approx}}{\K \to (\N \to \Z)} ,\] where the \(n\)\textsuperscript{th} integer \(k\) of the sequence \(\mathsf{integer{\hy}approx}(\alpha)\), for any \(\ty{\alpha}{\K}\), refers to the \(n\)\textsuperscript{th} ternary interval code (\cref{def:ternary-interval-code}) approximation of \(\alpha\) --- i.e., the interval \(\left[\frac{k}{2^n},\frac{k+2}{2^n}\right]\), in which \(\llangle \alpha \rrangle\) lies.
Any two signed-digit encodings whose \(n\)-prefixes evaluate to the same interval approximations will therefore have the same value of \(k\). 

\begin{figure}
    \centering
    \includegraphics[width=6cm]{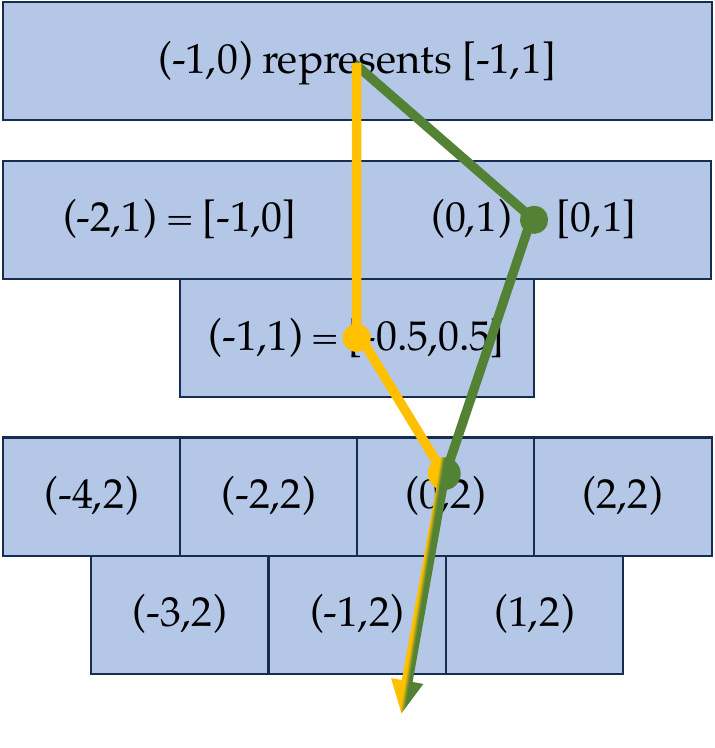}
    \caption{Illustration of the interval approximation evaluations of {\color{orange} \(z\alpha := 01\mone\mone\mone\mone\mone\mone\ldots\)} and {\color{ForestGreen} \(z\beta := 1\mone\mone\mone\mone\mone\mone\mone\ldots\)} at the first three precision-levels.
    }
    \label{fig:norm}
\end{figure}

The formal definition of the normalisation map is given in \cref{def:integer-approx}, and its use is informally visualised in \cref{fig:norm}, which shows the relationship between the evaluations of \(z\alpha\) and \(z\beta\) and their normalised interval approximations.
From this illustration, we can read off that
\[ \mathsf{integer{\hy}approx}(z\alpha) := -1,-1,0,0,\ldots \text{ and } \mathsf{integer{\hy}approx}(z\beta) := -1,0,0,0,\ldots .\]

\begin{definition}
\label{def:integer-approx}
\thesislit{6}{SignedDigitOrder}{integer-approx}
The \emph{normalisation map} \(\ty{\mathsf{integer{\hy}approx}}{\K \to (\N \to \Z)}\), which maps every ternary signed-digit encoding \(\ty{\alpha}{\K}\) to a sequence of integers such that, for any \(\ty{\alpha}{\K}\), \(\mathsf{integer{\hy}approx}(\alpha)_n\) refers to the \(n\)\textsuperscript{th} ternary interval code (\cref{def:ternary-interval-code}) approximation of \(\alpha\) is defined as follows:
\begin{alignat*}{3}
\mathsf{\3{\hy}to{\hy}down} &: \3 &&\to (\Z \to \Z),\\
\mathsf{\3{\hy}to{\hy}down} &\ \overline{1} &&:= \mathsf{downLeft}, \\
\mathsf{\3{\hy}to{\hy}down} &\ 0 &&:= \mathsf{downMid}, \\
\mathsf{\3{\hy}to{\hy}down} &\ 1 &&:= \mathsf{downRight},
\end{alignat*}
\vspace{-1cm}
\begin{alignat*}{3}
\mathsf{integer{\hy}approx}' &: \Z \to \K \to &&\ \N \to \Z,\\
\mathsf{integer{\hy}approx}' &(k,\alpha,0) && := k,\\
\mathsf{integer{\hy}approx}' &(k,\alpha,n + 1) &&:= \mathsf{integer{\hy}approx}' (\mathsf{\3{\hy}to{\hy}down} (\mathsf{head} \ \alpha, k),\mathsf{tail} \ \alpha,n),
\end{alignat*}
\vspace{-1cm}
\begin{alignat*}{3}
\mathsf{integer{\hy}approx} &: &&\K \to \N \to \Z,\\
\mathsf{integer{\hy}approx} &\ \alpha && := \mathsf{integer{\hy}approx}' (\alpha,-1).
\end{alignat*}
\end{definition}

Using the normalisation map, we define the following preorder and approximate linear preorder on ternary signed-digit encodings \(\K\), which \emph{are} correct with respect to the numerical ordering on the interval object \(\I\). However, the order on \(\I\) is not well established, and we have not formalised its notion or verified the correctness of the preorder --- we discuss this as further work in \cref{fw:io}.

\begin{definition}
\label{def:real-order-preorder}
\thesislit{6}{SignedDigitOrder}{RealPresOrder._\urlleq\urlthree\urlsuperscriptN_}
We define the \emph{real-order preserving linear preorder} on \(\K\) by using the normalisation map \(\ty{\mathsf{integer{\hy}approx}}{\K \to (\N \to \Z)}\):
\begin{alignat*}{3}
\ & \leq_\K && : \relation{\K}, \\
\alpha & \leq_\K \beta && := \existsty{n}{\N}{\Pity{i}{\N}{n \leq i \to \mathsf{integer{\hy}approx}(\alpha)_i \leq \mathsf{integer{\hy}approx}(\beta)_i}}.
\end{alignat*}
\end{definition}

\begin{lemma}
\thesislit{6}{SignedDigitOrder}{RealPresOrder.\urlleq\urlthree\urlsuperscriptN-is-preorder}
The real-order preserving preorder is indeed a preorder (\cref{def:preorder}).
\end{lemma}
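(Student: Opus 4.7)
The plan is to verify the two preorder axioms (reflexivity and transitivity) directly from the definition, using only the corresponding properties of the integer order $\leq$ on $\Z$ together with the universal property of propositional truncation (\cref{def:prop-trunc}). Since the conclusion of $\leq_\K$ is a truncated existential, which is a proposition, we are free to unpack any $\exists$ hypotheses via \cref{lem:truncate-commute}.

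For reflexivity, given any $\ty{\alpha}{\K}$ I would witness $\alpha \leq_\K \alpha$ by choosing $n := 0$. The inner universal statement then reduces to showing, for every $\ty{i}{\N}$, that $\mathsf{integer{\hy}approx}(\alpha)_i \leq \mathsf{integer{\hy}approx}(\alpha)_i$, which is immediate from reflexivity of $\leq$ on $\Z$. I then apply the element truncation $|-|$ to obtain the truncated existential.

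For transitivity, suppose $\ty{p}{\alpha \leq_\K \beta}$ and $\ty{q}{\beta \leq_\K \zeta}$. Since the goal $\alpha \leq_\K \zeta$ is itself a truncation, hence a proposition, I may apply the induction principle for $\|-\|$ twice to extract genuine witnesses $\ty{n_1,n_2}{\N}$ together with the corresponding inner dependent functions. Setting $n := \mathsf{max}(n_1,n_2)$, for any $\ty{i}{\N}$ with $n \leq i$ we have both $n_1 \leq i$ and $n_2 \leq i$, so the two inner hypotheses yield $\mathsf{integer{\hy}approx}(\alpha)_i \leq \mathsf{integer{\hy}approx}(\beta)_i$ and $\mathsf{integer{\hy}approx}(\beta)_i \leq \mathsf{integer{\hy}approx}(\zeta)_i$. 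Transitivity of $\leq$ on $\Z$ then gives $\mathsf{integer{\hy}approx}(\alpha)_i \leq \mathsf{integer{\hy}approx}(\zeta)_i$, and truncating the resulting $\Sigma$-pair yields the required element of $\alpha \leq_\K \zeta$.

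There is no serious obstacle here: both parts are essentially inherited pointwise from the preorder structure of $\Z$, and the only subtle point is bookkeeping around the truncation in the transitivity step, which is justified because the target type is manifestly a proposition. This proof requires the propositional truncation axiom (label \axioms{t}) but neither function extensionality nor univalence.
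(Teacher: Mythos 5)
Your proposal is correct and matches the paper's proof essentially step for step: reflexivity via the witness \(n := 0\) and reflexivity of the integer order, then truncation; transitivity by eliminating the two truncated existentials (the paper phrases this as truncating a function out of the untruncated \(\Sigma\)-pair, which is the same use of the universal property), taking the maximum of the two witnesses, and appealing to transitivity of \(\leq\) on \(\Z\). No differences worth noting.
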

\begin{proof} \axioms{t}
Using propositional truncation (recall this from \cref{sec:prop-trunc}), we prove both reflexivity and transitivity of the real-order preserving preorder:
\begin{enumerate}[(i)]
\item For any \(\ty{\alpha}{\K}\), we have that \(\mathsf{integer{\hy}approx}(\alpha)_i \leq \mathsf{integer{\hy}approx}(\beta)_i\) by reflexivity of the order on the integers. Therefore, by setting \(n := 0\) we have a proof term \(\ty{(0,p)}{\Sigmaty{n}{\N}{\Pity{i}{\N}{n \leq i \to \mathsf{integer{\hy}approx}(\alpha)_i \leq \mathsf{integer{\hy}approx}(\beta)_i}}}\). The result then follows by truncating this proof term; i.e. \(\ty{| (0,p) |}{\existsty{n}{\N}{\Pity{i}{\N}{n \leq i \to \mathsf{integer{\hy}approx}(\alpha)_i \leq \mathsf{integer{\hy}approx}(\beta)_i}}}\).
\item Given \(\ty{\alpha,\beta,\zeta}{\K}\) such that \(\alpha \leq_\K \beta\) and \(\beta \leq_\K \zeta\), we wish to show that \(\alpha \leq_\K \zeta\). To do this, we define a function \(\ty{f}{\Sigmatye{(n,m)}{\N \x \N}{} {(\Pity{i}{\N}{n \leq i \to \mathsf{integer{\hy}approx}(\alpha)_i \leq \mathsf{integer{\hy}approx}(\beta)_i}}}\) \(\x (m \leq i \to \mathsf{integer{\hy}approx}(\beta)_i \leq \mathsf{integer{\hy}approx}(\zeta)_i )) \to \existstye{k}{\N}{}(\Pitye{i}{\N}{(k \leq i \to}\) \({\mathsf{integer{\hy}approx}(\alpha)_i \leq \mathsf{integer{\hy}approx}(\zeta)_i}))\), and the conclusion is given by the truncation of \(f\).
To define this function, we simply set \(k := \mathsf{max}(n,m)\), and the proof that \(\mathsf{integer{\hy}approx}(\alpha)_i \leq \mathsf{integer{\hy}approx}(\zeta)_i\) follows by transitivity of the order on the integers.
\end{enumerate}
\end{proof}

\begin{definition}
\label{def:real-order-approx-order}
\thesislit{6}{SignedDigitOrder}{_\urlleq\urlsuperscriptn\urlthree\urlsuperscriptN_}
We define the \emph{real-order preserving approximate linear preorder} on \(\K\) by using the normalisation map \(\ty{\mathsf{integer{\hy}approx}}{\K \to (\N \to \Z)}\):
\begin{alignat*}{3}
\ & \leq^-_\K && : \Nrelation{\K}, \\
\alpha & \leq^n_\K \beta && := \mathsf{integer{\hy}approx}(\alpha)_n \leq \mathsf{integer{\hy}approx}(\beta)_n.
\end{alignat*}
\end{definition}

\begin{lemma}
\thesislit{6}{SignedDigitOrder}{\urlleq\urlsuperscriptn\urlthree\urlsuperscriptN-is-approx-order}
The real-order preserving approximate linear preorder is indeed an approximate linear preorder (\cref{def:approx-order}).
\end{lemma}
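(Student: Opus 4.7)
The plan is to verify, for each fixed precision $\ty{n}{\N}$, the four conditions required by \cref{def:approx-order}: that $\ty{\leq^n_\K}{\relation{\K}}$ is a preorder, that it is linear, that it is decidable, and that $C_n(\alpha,\beta)$ implies $\alpha \leq^n_\K \beta$. By \cref{def:real-order-approx-order}, the relation $\alpha \leq^n_\K \beta$ is by definition the order $\mathsf{integer{\hy}approx}(\alpha)_n \leq \mathsf{integer{\hy}approx}(\beta)_n$ on $\Z$. Hence reflexivity, transitivity, linearity and decidability all transport from the corresponding properties of the linear order on $\Z$ applied pointwise to the two integers $\mathsf{integer{\hy}approx}(\alpha)_n$ and $\mathsf{integer{\hy}approx}(\beta)_n$; each of these is essentially a one-line appeal to the relevant property on $\Z$, mirroring the structure of the easier portion of \cref{lem:sigma-order}.

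The main obstacle is the compatibility condition with closeness: $C_n(\alpha,\beta) \to \alpha \leq^n_\K \beta$. Here I would first pass from $C_n(\alpha,\beta)$ to the prefix-equality $\alpha \sim^n \beta$ using \cref{cor:disseq-C-equiv}. The heart of the argument is then an auxiliary lemma saying that $\mathsf{integer{\hy}approx}(\alpha)_n$ depends only on the first $n$ digits of $\alpha$. Because \cref{def:integer-approx} defines the value at index $n$ by iterating the helper $\mathsf{integer{\hy}approx}'$ exactly $n$ times, consuming one element of the input sequence at each step via $\mathsf{head}$ and $\mathsf{tail}$, I would strengthen the target to
\[\pity{k}{\Z}\,\pity{\alpha,\beta}{\K}\,\left(\alpha \sim^n \beta \to \mathsf{integer{\hy}approx}'(k,\alpha,n) = \mathsf{integer{\hy}approx}'(k,\beta,n)\right)\]
so that the induction on $n$ can carry the evolving starting integer $k$. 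The base case $n := 0$ is immediate since both sides reduce to $k$ without touching the sequences. In the step case $n := n'+1$, unfolding $\mathsf{integer{\hy}approx}'$ once on each side reduces the goal to an equation between $\mathsf{integer{\hy}approx}'(\mathsf{\3{\hy}to{\hy}down}(\mathsf{head}\ \alpha,k),\mathsf{tail}\ \alpha,n')$ and the analogous term with $\beta$; the assumption $\alpha \sim^{n'+1} \beta$ yields both $\mathsf{head}\ \alpha = \mathsf{head}\ \beta$ (so the updated starting integer is the same on both sides) and $\mathsf{tail}\ \alpha \sim^{n'} \mathsf{tail}\ \beta$, and the inductive hypothesis closes the goal. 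Once this auxiliary equality is in hand, condition (ii) follows by reflexivity of $\leq$ on $\Z$.

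I expect the induction to be the only genuinely technical step; everything else is a transport of structure from $\Z$ to $\K$ along $\mathsf{integer{\hy}approx}(-)_n$. No extensionality or truncation axiom should be required, since we are comparing individual integer approximations rather than entire sequences — a useful simplification compared with the preorder proof, where truncation of existence was needed. The overall structure of the proof therefore mirrors the easy direction of \cref{lem:sigma-order}, with the closeness-compatibility clause replaced by the prefix-dependence lemma sketched above.
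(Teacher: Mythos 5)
Your proposal is correct and follows essentially the same route as the paper's own proof: the preorder, linearity and decidability clauses are discharged by transporting the corresponding properties of the order on \(\Z\), and the closeness-compatibility clause is reduced via \cref{cor:disseq-C-equiv} to exactly the auxiliary lemma the paper proves, namely that \(\alpha \sim^\varepsilon \beta\) implies \(\mathsf{integer{\hy}approx}'(k,\alpha,\varepsilon) = \mathsf{integer{\hy}approx}'(k,\beta,\varepsilon)\) for all \(k\), established by the same induction on \(\varepsilon\) with the starting integer \(k\) generalised, followed by reflexivity of \(\leq\) on \(\Z\). No gaps.
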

\begin{proof}
The proof that, given any \(\ty{\varepsilon}{\N}\), the relation \(\ty{\leq^\varepsilon_\K}{\relation \K}\) is decidable and a linear preorder is immediate from the fact that the order on the integers is a linear preorder.

\vspace{1em}
To prove that if \(C_\varepsilon(\alpha,\beta)\) then \(\alpha \leq^\varepsilon_\K \beta\) for all \(\ty{\alpha,\beta}{\K}\), we need to show that \(\mathsf{integer{\hy}approx}(\alpha)_\varepsilon \leq \mathsf{integer{\hy}approx}(\beta)_\varepsilon\). We first prove the following lemma:
\[ \alpha \sim^\varepsilon \beta \to \Pity{k}{\Z}{\mathsf{integer{\hy}approx}'(k,\alpha,\varepsilon) = \mathsf{integer{\hy}approx}'(k,\beta,\varepsilon)} .\]

We prove this lemma by induction on the given \(\varepsilon\). In the base case where \(\varepsilon := 0\), then by definition of \(\mathsf{integer{\hy}approx}'\) (see \cref{def:integer-approx}) we simply have to show that \(k = k\).
In the inductive case where \(\varepsilon := \varepsilon' + 1\) for some \(\ty{\varepsilon'}{\N}\), then from \(\alpha \sim^{\varepsilon' + 1} \beta\) we have that \(\mathsf{head}\ \alpha = \mathsf{head}\ \beta\) and therefore \(\3\mathsf{{\hy}to{\hy}down}(\mathsf{head}\ \alpha,k) = \3\mathsf{{\hy}to{\hy}down}(\mathsf{head}\ \beta,k)\); the result then follows by the inductive hypothesis.

\vspace{1em}
Now that we have proved the lemma, the result follows from it, \cref{lem:disseq-C-equiv} and the reflexivity of the order on the integers.
\end{proof}

\begin{lemma}
\thesislit{6}{SignedDigitOrder}{RealPresOrder-Relates.\urlleq\urlsuperscriptn\urlthree\urlsuperscriptN-relates}
The real-order preserving approximate linear preorder relates (by \cref{def:approx-order-relates}) to the real-order preserving preorder.
\end{lemma}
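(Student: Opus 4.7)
The plan is to prove the two clauses of \cref{def:approx-order-relates} essentially by unfolding the definitions of $\leq_\K$ and $\leq^-_\K$ and observing that they share the same pointwise integer comparison $\mathsf{integer{\hy}approx}(\alpha)_i \leq \mathsf{integer{\hy}approx}(\beta)_i$; the structure of the argument will closely mirror \cref{lem:lexicographic-approx-relates}. Since the target types in both clauses are propositional truncations, truncation elimination (and its induction principle) is available throughout, so I will only have to exhibit untruncated witnesses and then apply $|-|$ at the end.

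For clause (i), I assume $\alpha \leq_\K \beta$, which by \cref{def:real-order-preserving-preorder} unfolds to the truncated type $\existsty{n}{\N}{\Pity{i}{\N}{n \leq i \to \mathsf{integer{\hy}approx}(\alpha)_i \leq \mathsf{integer{\hy}approx}(\beta)_i}}$. Since the goal $\existstye{n}{\N}{\pity{\varepsilon}{\N}{n < \varepsilon \to \alpha \leq^\varepsilon_\K \beta}}$ is itself propositional, I eliminate the truncation to obtain an untruncated witness $(n, p)$. I then claim that the same $n$ works for the approximate-order statement: given $\varepsilon$ with $n < \varepsilon$, we have $n \leq \varepsilon$, so $p(\varepsilon)$ yields $\mathsf{integer{\hy}approx}(\alpha)_\varepsilon \leq \mathsf{integer{\hy}approx}(\beta)_\varepsilon$, which by \cref{def:real-order-approx-order} is exactly $\alpha \leq^\varepsilon_\K \beta$. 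Finally, I apply $|-|$ to the pair.

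For clause (ii), I assume a pointwise hypothesis $h : \pity{n}{\N}{\alpha \leq^n_\K \beta}$, which unfolds to $\pity{n}{\N}{\mathsf{integer{\hy}approx}(\alpha)_n \leq \mathsf{integer{\hy}approx}(\beta)_n}$. I take $n := 0$ as my witness; then for any $i$ with $0 \leq i$ (which always holds), $h(i)$ gives exactly the required inequality, and I truncate the resulting pair to obtain $\alpha \leq_\K \beta$.

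I do not anticipate any real obstacle here, as both directions are almost definitional once the right witnesses are identified. The only subtlety worth flagging is careful handling of the propositional truncation in clause (i) — specifically, ensuring that the elimination is valid because the goal is a proposition — but this is standard and already used in the analogous \cref{lem:lexicographic-approx-relates}.
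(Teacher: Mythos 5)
Your proof is correct and follows essentially the same route as the paper: reuse the witness $n$ for clause (i) (the paper dismisses this as the types being "identical", while you more carefully note that $n < \varepsilon$ gives $n \leq \varepsilon$ so the hypothesis applies), and take $n := 0$ with the pointwise hypothesis, then truncate, for clause (ii).
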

\begin{proof} \axioms{t}
The first condition, that if \(\alpha \leq_\K \beta\) then \(\existstye{n}{\N}{\Pitye{\varepsilon}{\N}{(n < \varepsilon \to \alpha \leq^\varepsilon_\K \beta)}}\), is immediate, because the types are identical.
The second condition, that if \(x \leq^n_\K y\) holds for all \(\ty{n}{\N}\) then \(x \leq_\K y\) follows by truncating the proof term of type \(\Sigmatye{n}{\N}{\Pitye{\varepsilon}{\N}{(n < i \to \alpha \leq^i_\K \beta})}\), which is constructed by setting \(n := 0\) and using the proof \(\ty{f(i)}{x \leq^i_\K y}\).
\end{proof}

\subsection{Uniformly continuous exact real arithmetic}
\label{sec:K-ucont}

In order to perform exact real search on ternary signed-digit encodings, we must prove the uniform continuity of the exact real arithmetic operations.
In this subsection, we prove that negation, binary midpoint, infinitary midpoint and multiplication functions (all defined in \cref{sec:signed-digits-era}) are uniformly continuous via the closeness space \(\K\).

\subsubsection{Sequence uniform continuity}

Recall that discrete-sequence closeness spaces satisfy, for all \(\ty{n}{\N}\), \(C_n(\alpha,\beta) \Leftrightarrow \alpha \sim^n \beta\) (\cref{cor:disseq-C-equiv}); a fact which we use to give the following bespoke definitions of uniform continuity on sequence functions and predicates on discrete types.

\begin{definition}
\thesislit{6}{SequenceContinuity}{seq-f-ucontinuous\urlsuperscriptone}
\label{def:dis-seq-ucont}
Given discrete types \(X\) and \(Y\), a unary sequence function \(\ty{f}{\seq X \to \seq Y}\) is \emph{uniformly continuous} if for all \(\ty{\varepsilon}{\N}\) there is some \(\ty{\delta}{\N}\) such that sequences that agree in their \(\delta\)-prefixes map by \(f\) to sequences that agree in their \(\varepsilon\)-prefixes:
\begin{multline*}
\mathsf{seq{\hy}f{\hy}ucontinuous}^1(f) := \\
\Pitye{\varepsilon}{\N} \Sigmatye{\delta}{\N}{ \Pitye{x_1,x_2}{\seq X} ( x_1 \sim^\delta x_2 ) \to  ( f(x_1) \sim^\varepsilon f(x_2) ) } .
\end{multline*}
\end{definition}

\begin{lemma}
\thesislit{6}{SequenceContinuity}{seq-f-ucontinuous\urlsuperscriptone-\urlleftrightarrow-closeness}
Given discrete types \(X\) and \(Y\), a unary sequence function \(\ty{f}{\seq X \to \seq Y}\) is uniformly continuous by \cref{def:dis-seq-ucont} if and only if it is uniformly continuous by \cref{def:clos-ucont} from the discrete-sequence closeness space yielded by \(X\) to the discrete-sequence closeness space yielded by \(Y\).
\end{lemma}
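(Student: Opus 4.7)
The plan is to observe that the two definitions have identical logical structure, with only the closeness relation $C_n$ being replaced by the prefix-equality relation $\sim^n$. Since Corollary \ref{cor:disseq-C-equiv} gives us a propositional equivalence $C_n(\alpha,\beta) \Leftrightarrow \alpha \sim^n \beta$ for any $n : \N$ and any sequences on discrete types, the proof is essentially a mechanical translation across this equivalence applied twice (once in the hypothesis about $x_1, x_2$, and once in the conclusion about $f(x_1), f(x_2)$).

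More concretely, I would proceed as follows. In the forward direction, assume $f$ satisfies $\mathsf{seq\hy f\hy ucontinuous}^1$; given $\ty{\varepsilon}{\N}$, I extract the same $\ty{\delta}{\N}$ provided by that hypothesis to witness $\mathsf{f\hy ucontinuous}$. Then for arbitrary $\ty{x_1,x_2}{\seq X}$ with $C_\delta(x_1,x_2)$, I apply the right-to-left direction of Corollary \ref{cor:disseq-C-equiv} to obtain $x_1 \sim^\delta x_2$, feed this into the sequence-continuity hypothesis to get $f(x_1) \sim^\varepsilon f(x_2)$, and then apply the left-to-right direction of Corollary \ref{cor:disseq-C-equiv} to conclude $C_\varepsilon(f(x_1),f(x_2))$. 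The reverse direction is entirely symmetric, using the same modulus and applying the equivalence in the opposite order.

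I do not anticipate any genuine obstacle here: both definitions are $\Pi\Sigma\Pi$-quantified statements of exactly the same shape, and the translation is pointwise. The only minor care needed is to be explicit about which direction of the propositional equivalence from Corollary \ref{cor:disseq-C-equiv} is being applied at each step, to keep the formalisation clean. No axioms beyond what Corollary \ref{cor:disseq-C-equiv} itself requires (function extensionality, implicit in the underlying closeness-space machinery on sequences) should be needed.
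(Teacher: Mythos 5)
Your proposal is correct and matches the paper's proof exactly: the paper's entire argument is ``By \cref{cor:disseq-C-equiv}'', and your expansion — keeping the same modulus $\delta$ and translating the hypothesis and conclusion across the propositional equivalence $C_n(\alpha,\beta) \Leftrightarrow \alpha \sim^n \beta$ in each direction — is precisely the intended content of that one-line citation.
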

\begin{proof}
By \cref{cor:disseq-C-equiv}.
\end{proof}

We next define the binary version of discrete-sequence uniform continuity.

\begin{definition}
\label{def:dis-seq-ucont-bin}
\thesislit{6}{SequenceContinuity}{seq-f-ucontinuous\urlsuperscripttwo}
Given discrete types \(X\), \(Y\) and \(Z\), a binary sequence function \(\ty{f}{\seq X \to \seq Y \to \seq Z}\) is \emph{uniformly continuous} if for all \(\ty{\varepsilon}{\N}\) there are some \(\ty{\delta_1,\delta_2}{\N}\) such that, given two pairs of sequence arguments, if the first pair agree in their \(\delta_1\)-prefixes and the second pair agree in their \(\delta_2\)-prefixes then together they map by \(f\) to sequences that agree in their \(\varepsilon\)-prefixes:
\begin{multline*}
\mathsf{seq{\hy}f{\hy}ucontinuous}^2(f) :=
\Pitye{\varepsilon}{\N} \Sigmatye{(\delta_1,\delta_2)}{\N \x \N} \Pitye{x_1,x_2}{\seq X} \Pitye{y_1,y_2}{\seq Y} \\ 
\hspace{2cm} ( x_1 \sim^{\delta_1} x_2 ) \to  ( y_1 \sim^{\delta_2} y_2 ) \to ( f(x_1,y_1) \sim^\varepsilon f(x_2,y_2) ) .
\end{multline*}
\end{definition}

\noindent
Note that although the proof of uniform continuity of binary sequence functions by the above definition is logically equivalent to that using the definition of uniform continuity on the binary product (\cref{def:prod-closeness}) of two discrete-sequence closeness functions, the former has two moduli of continuity (one for each argument) whereas the latter has only one (i.e.\ the maximum of the two). This means that the computational content of the proofs differ.

\begin{lemma}
\thesislit{6}{SequenceContinuity}{seq-f-ucontinuous\urlsuperscripttwo-\urlleftrightarrow-closeness}
Given discrete types \(X\), \(Y\) and \(Z\), a binary sequence function \(\ty{f}{\seq X \to \seq Y \to \seq Z}\) is uniformly continuous by \cref{def:dis-seq-ucont-bin} if and only if it is uniformly continuous by \cref{def:clos-ucont} from the binary product of the discrete-sequence closeness spaces yielded by \(X\) and \(Y\) to the discrete-sequence closeness space yielded by \(Z\).
\end{lemma}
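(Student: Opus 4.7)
The plan is to prove both directions of the biconditional by unpacking each definition and translating between closeness relations and sequence prefix equality using the key bridges already established earlier in the chapter: \cref{cor:disseq-C-equiv} (which gives the propositional equivalence $C_n(\alpha,\beta) \Leftrightarrow \alpha \sim^n \beta$ in discrete-sequence closeness spaces) and \cref{lem:x-C} (which decomposes the binary product closeness relation componentwise).

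For the forward direction, I would assume \(f\) is uniformly continuous in the sense of \cref{def:dis-seq-ucont-bin} and exhibit a single modulus in the closeness-space sense. Given $\varepsilon$, extract the pair of moduli $(\delta_1,\delta_2)$ provided by the hypothesis and set $\delta := \mathsf{max}(\delta_1,\delta_2)$. Given $\ty{(x_1,y_1),(x_2,y_2)}{\seq X \x \seq Y}$ with $C_\delta((x_1,y_1),(x_2,y_2))$, apply \cref{lem:x-C} to obtain $C_\delta(x_1,x_2)$ and $C_\delta(y_1,y_2)$; apply monotonicity (\cref{cor:C-mono}) to weaken these to $C_{\delta_1}(x_1,x_2)$ and $C_{\delta_2}(y_1,y_2)$; then apply \cref{cor:disseq-C-equiv} to rewrite both as prefix agreements $x_1 \sim^{\delta_1} x_2$ and $y_1 \sim^{\delta_2} y_2$. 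Feeding these into the hypothesis yields $f(x_1,y_1) \sim^\varepsilon f(x_2,y_2)$, and a final application of \cref{cor:disseq-C-equiv} converts this back to $C_\varepsilon(f(x_1,y_1),f(x_2,y_2))$.

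For the backward direction, I would assume the closeness-space version and exhibit the pair $(\delta_1,\delta_2)$ required by \cref{def:dis-seq-ucont-bin}. Given $\varepsilon$, take the modulus $\delta$ supplied by the hypothesis and set $\delta_1 := \delta$ and $\delta_2 := \delta$. Assuming $x_1 \sim^\delta x_2$ and $y_1 \sim^\delta y_2$, convert both into $C_\delta(x_1,x_2)$ and $C_\delta(y_1,y_2)$ via \cref{cor:disseq-C-equiv}, combine them using \cref{lem:x-C} to obtain $C_\delta((x_1,y_1),(x_2,y_2))$, invoke the hypothesis to get $C_\varepsilon(f(x_1,y_1),f(x_2,y_2))$, and translate this conclusion back into $f(x_1,y_1) \sim^\varepsilon f(x_2,y_2)$ using \cref{cor:disseq-C-equiv} once more.

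There is no substantial obstacle here; the proof is essentially bookkeeping that strings together three already-formalised facts (closeness-versus-prefix equivalence, the product-closeness decomposition, and monotonicity). The only mildly delicate point is the asymmetry between the two definitions — \cref{def:clos-ucont} supplies a single modulus while \cref{def:dis-seq-ucont-bin} supplies two — which is handled in the forward direction by taking the maximum, and in the backward direction by duplicating. As noted in the remark preceding the lemma, the two formulations are logically equivalent but carry different computational content, so the proof does not claim that the extracted moduli coincide, only that each definition can be reconstructed from the other.
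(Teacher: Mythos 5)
Your proof is correct and follows essentially the same route as the paper's: both directions use \cref{cor:disseq-C-equiv} and \cref{lem:x-C} (plus \cref{cor:C-mono} in the forward direction), taking $\delta := \mathsf{max}(\delta_1,\delta_2)$ one way and duplicating $\delta$ the other way. Your write-up is merely more explicit about the intermediate translations between $C_n$ and $\sim^n$.
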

\begin{proof}
Once again we use the equivalence of \(C_n(\alpha,\beta)\) and \(\alpha \sim^n \beta\) for all \(\ty{\alpha,\beta}{\seq D}\) (where \(D\) is discrete) and \(\ty{n}{\N}\); i.e.\ \cref{cor:disseq-C-equiv}.

\vspace{0.25cm}
We first show sequence uniform continuity implies closeness uniform continuity. 
By the former, given \(\ty{\varepsilon}{\N}\), \(\ty{x_1,x_2}{\seq X}\) and \(\ty{y_1,y_2}{\seq Y}\), we have that there are \(\ty{\delta_1,\delta_2}{\N}\) such that if \(C_{\delta_1}(x_1,x_2)\) and \(C_{\delta_2}(y_1,y_2)\) then \(C_\varepsilon(f(x_1,y_1),f(x_2,y_2))\).
Therefore we take the modulus of uniform continuity for the closeness space definition to be \(\ty{\delta := \mathsf{max}(\delta_1,\delta_2)}{\N}\), because (by \cref{lem:x-C,cor:C-mono}) \(C_\delta((x_1,y_1),(x_2,y_2))\) implies \(C_{\delta_1}(x_1,x_2)\) and \(C_{\delta_2}(y_1,y_2)\) and, hence, \(C_\varepsilon(f(x_1,y_1),f(x_2,y_2))\) as desired.

\vspace{0.25cm}
The other direction is more straightforward. Given \(\ty{\varepsilon}{\N}\), \(\ty{x_1,x_2}{\seq X}\) and \(\ty{y_1,y_2}{\seq Y}\), we have that there is \(\ty{\delta}{\N}\) such that if \(C_{\delta}((x_1,y_1),(x_2,y_2))\) then \(C_\varepsilon(f(x_1,y_1),f(x_2,y_2))\).
Therefore we take the moduli of uniform continuity for the sequence definition to be \(\ty{(\delta,\delta)}{\N \x \N}\), because (by \cref{lem:x-C}) \(C_\delta((x_1,y_1),(x_2,y_2))\) implies \(C_\delta(x_1,x_2)\) and \(C_\delta(y_1,y_2)\) and, hence, \(C_\varepsilon(f(x_1,y_1),f(x_2,y_2))\) as desired.
\end{proof}

Lastly, we define a specialisation of uniform continuity on certain infinitary sequence functions that on discrete types is logically equivalent to uniform continuity on the dependent product (\cref{def:prod-closeness}) of a discrete-sequence closeness type.

\begin{definition}
\label{def:dis-seq-ucont-inf}
\thesislit{6}{SequenceContinuity}{seq-f-ucontinuous\urlsuperscriptN}
Given discrete types \(X\) and \(Y\), an infinitary sequence function \(\ty{f}{\seq{(\seq X)} \to \seq Y}\) is \emph{uniformly continuous} if
for all \(\ty{\varepsilon}{\N}\) there are \(\ty{d,\delta}{\N}\) such that infinitary sequences whose \(d\)-prefixes agree in their respective \(\delta\)-prefixes map by \(f\) to sequences that agree in their \(\varepsilon\)-prefixes:
\begin{multline*}
\mathsf{seq{\hy}f{\hy}ucontinuous}^\N(f) := \Pitye{\varepsilon}{\N} \Sigmatye{(d,\delta)}{\N \x \N} \\
 \big( (d \leq \delta) \x (\Pitye{\alpha,\beta}{\seq{(\seq X)}} \Pitye{n}{\N} ( n < d ) \to  ( \alpha_n \sim^\delta \beta_n ) \to ( f(\alpha) \sim^\varepsilon f(\beta) )) \big) .
\end{multline*}
\end{definition}

\begin{lemma}
\thesislit{6}{SequenceContinuity}{seq-f-ucontinuous\urlsuperscriptN-\urlleftrightarrow-closeness}
Given discrete types \(X\) and \(Y\), an infinitary sequence function \(\ty{f}{\seq{(\seq X)} \to \seq Y}\) is uniformly continuous by \cref{def:dis-seq-ucont-inf} if and only if it is uniformly continuous by \cref{def:clos-ucont} from the countable product of the discrete-sequence closeness space yielded by \(X\) to the discrete-sequence closeness space yielded by \(Y\).
\end{lemma}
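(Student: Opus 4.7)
The plan is to mimic the structure of the previous two equivalence lemmas, but with an additional layer accounting for the diagonal definition of the countable product closeness function (Definition \ref{def:pi-closeness}). As before, the bridge between the two formulations is Corollary \ref{cor:disseq-C-equiv}, which gives \(C_n(\alpha,\beta) \Leftrightarrow \alpha \sim^n \beta\) for any two sequences on a discrete type. The only non-routine ingredient will be a small arithmetic lemma about the diagonal: namely, that for any countable family of discrete-sequence closeness spaces yielded by the discrete type \(X\), and for any \(\ty{\alpha,\beta}{\seq{(\seq X)}}\), we have \(c^{cs}_{\Pi T}(\alpha,\beta)_m = 1\) if and only if \(cs_i(\alpha_i,\beta_i)_j = 1\) for all \(i,j\) with \(i+j \leq m\). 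This is a straightforward induction on the definition of \(c^{cs}_{\Pi T}\); the base case is immediate, and the inductive case uses the characterisation of \(\mathsf{min}\) and splits on whether \(i = 0\) or \(i = i'+1\).

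For the forward direction, suppose \(f\) is uniformly continuous in the sense of Definition \ref{def:dis-seq-ucont-inf}, and that for a given \(\ty{\varepsilon}{\N}\) we have moduli \(\ty{d,\delta}{\N}\) with \(d \leq \delta\). I would take the closeness-space modulus to be \(\delta' := d + \delta\). Given \(\ty{\alpha,\beta}{\seq{(\seq X)}}\) with \(C_{\delta'}(\alpha,\beta)\), one unfolds the inequality \(\underline{\delta'} \preceq c^{cs}_{\Pi T}(\alpha,\beta)\) to obtain \(c^{cs}_{\Pi T}(\alpha,\beta)_m = 1\) for every \(m < \delta'\). Applying the auxiliary diagonal lemma, this yields \(cs_i(\alpha_i,\beta_i)_j = 1\) for every \(i < d\) and \(j < \delta\), whence (using Corollary \ref{cor:disseq-C-equiv} applied on each index) \(\alpha_i \sim^\delta \beta_i\) for every \(i < d\). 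The hypothesis on \(f\) then gives \(f(\alpha) \sim^\varepsilon f(\beta)\), which, again by Corollary \ref{cor:disseq-C-equiv}, is \(C_\varepsilon(f(\alpha),f(\beta))\).

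For the backward direction, suppose \(f\) is uniformly continuous in the sense of Definition \ref{def:clos-ucont}, with closeness-space modulus \(\ty{\delta'}{\N}\) for the given \(\ty{\varepsilon}{\N}\). I would take both sequence moduli to be \(d := \delta := \delta'\), which trivially satisfies \(d \leq \delta\). Given \(\ty{\alpha,\beta}{\seq{(\seq X)}}\) such that \(\alpha_n \sim^{\delta'} \beta_n\) for every \(n < \delta'\), Corollary \ref{cor:disseq-C-equiv} gives \(cs_n(\alpha_n,\beta_n)_k = 1\) for all \(n < \delta'\) and all \(k < \delta'\). In particular, whenever \(i + j < \delta'\) we have \(cs_i(\alpha_i,\beta_i)_j = 1\), so the auxiliary diagonal lemma yields \(c^{cs}_{\Pi T}(\alpha,\beta)_m = 1\) for all \(m < \delta'\); equivalently, \(C_{\delta'}(\alpha,\beta)\). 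Applying the closeness-space hypothesis on \(f\) then gives \(C_\varepsilon(f(\alpha),f(\beta))\), which by Corollary \ref{cor:disseq-C-equiv} is \(f(\alpha) \sim^\varepsilon f(\beta)\), as required.

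The expected main obstacle is the auxiliary diagonal lemma: the definition of \(c^{cs}_{\Pi T}\) interleaves successor indices at the outer and inner levels, so one must be careful to keep track of which \(i + j = m\) the inductive unfolding picks up at each step. Once that is in place, however, the rest is essentially bookkeeping, and the two moduli choices \(\delta' := d + \delta\) and \(d := \delta := \delta'\) immediately yield the desired equivalence.
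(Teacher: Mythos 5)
Your proof is correct and follows essentially the same route as the paper's: both hinge on a diagonal characterisation of the countable product closeness function (the paper phrases it as the pair of implications \(C_{2\delta}(\alpha,\beta) \to \pity{n}{\N}{n < \delta \to \alpha_n \sim^\delta \beta_n}\) and its converse, you as the anti-diagonal condition \(i+j\leq m\)), and then translate moduli back and forth via \cref{cor:disseq-C-equiv}. The only difference is your forward-direction modulus \(d+\delta\) versus the paper's \(2\delta\); since \(d\leq\delta\) yours is marginally tighter, and both are valid.
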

\begin{proof}
We first require a relationship similar to \cref{cor:disseq-C-equiv} but for sequences \(\ty{\alpha,\beta}{\seq{(\seq X)}}\).
Recall from the definition of countable product closeness functions (\cref{def:pi-closeness}) and the accompanying visualisation of the diagonalisation argument (\cref{fig:diagonal}) that for all \(\ty{\delta}{\N}\) having \(C_\delta(\alpha,\beta)\) means we have \(\alpha_0 \sim^{\delta} \beta_0\), \(\alpha_1 \sim^{\delta-1} \beta_1\), \(\alpha_2 \sim^{\delta-2} \beta_2\), etc.; i.e., for all \(\ty{n}{\N}\) such that \(n < \delta\) we have \(\alpha_n \sim^{\delta-n} \beta_n\).
Therefore, we conclude the following relationship:
\begin{enumerate}[(i)]
\item \(C_{2\delta}(\alpha,\beta) \to \Pity{n}{\N}{n < \delta \to \alpha_n \sim^\delta \beta_n}\),
\item \(\Pity{n}{\N}{n < \delta \to \alpha_n \sim^\delta \beta_n} \to C_\delta(\alpha,\beta)\).
\end{enumerate}
From this relationship, we prove the logical equivalence between the two definitions of uniform continuity.

\vspace{1em}
We first show sequence uniform continuity implies closeness uniform continuity.
By the former, given \(\ty{\varepsilon}{\N}\), \(\ty{\alpha,\beta}{\seq{(\seq X)}}\) we have that there are \(\ty{d,\delta'}{\N}\) such that \(d \leq \delta'\) and such that if \(\alpha_n \sim^{\delta'} \beta_n\) for all \(n < d\) then (by \cref{cor:disseq-C-equiv}) \(C_\varepsilon(f(\alpha),f(\beta))\). 
To show that closeness uniform continuity is satisfied, we need to give some \(\ty{\delta}{\N}\) such that if \(C_\delta(\alpha,\beta)\) then \(C_\varepsilon(f(\alpha),f(\beta))\). 
We set \(\delta := 2{\delta'}\), as by (i) this means we have \(\alpha_n \sim^{\delta'} \beta_n\) for all \(n < \delta'\), and therefore for all \(n < d \leq \delta'\). By sequence uniform continuity as described, we hence have \(C_\varepsilon(f(\alpha),f(\beta))\).

\vspace{1em}
The other direction is again more straightforward. Given \(\ty{\alpha,\beta}{\seq{(\seq X)}}\) we have by closeness uniform continuity that there is some \(\ty{\delta'}{\N}\) such that if \(C_\delta(\alpha,\beta)\) then \(C_\epsilon(f(\alpha),f(\beta))\). To show that sequence uniform continuity is satisfied, we need to give some \(\ty{d,\delta}{\N}\) such that \(d \leq \delta\) and such that if \(\alpha_n \sim^\delta \beta_n\) for all \(n < d\) then \(C_\varepsilon(f(\alpha),f(\beta))\).
We set both \(d,\delta := \delta'\), as by (ii) this means we have \(C_\delta(\alpha,\beta)\). By closeness continuity as described, we hence have \(C_\varepsilon(f(\alpha),f(\beta))\).
\end{proof}

Now we have specialised definitions of uniform continuity for discrete-sequence types which will aid us in proving our operations on \(\K\) are uniformly continuous. Before doing this, we note some expected facts about discrete-sequence uniform continuity.

\begin{lemma}
\label{lem:id-comp-ucont-seq}
The identity function on discrete-sequences is uniformly continuous and composition of discrete-sequence functions preserves discrete-sequence uniform continuity.
\end{lemma}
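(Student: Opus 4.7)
The plan is to mirror the structure of \cref{lem:id-comp-ucont}, which established the analogous fact for uniformly continuous functions between closeness spaces, but now working directly with the prefix-agreement formulation given in \cref{def:dis-seq-ucont}. Since both sub-claims admit an essentially combinatorial proof (the first by triviality, the second by sequential composition of moduli), no real obstacle is expected --- the main care lies simply in unfolding the definitions correctly.

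First I would dispatch the identity case. Given any requested precision \(\ty{\varepsilon}{\N}\), I propose the modulus of uniform continuity \(\delta := \varepsilon\). Then for any \(\ty{x_1,x_2}{\seq X}\) satisfying \(x_1 \sim^\varepsilon x_2\), the conclusion \(\mathsf{id}(x_1) \sim^\varepsilon \mathsf{id}(x_2)\) is immediate by reflexivity of \(\sim^\varepsilon\) as recalled from \cref{lem:sim-eq} (indeed, it holds definitionally, since applying \(\mathsf{id}\) alters nothing).

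Next I would tackle the compositional case. Suppose \(\ty{f}{\seq X \to \seq Y}\) and \(\ty{g}{\seq Y \to \seq Z}\) are uniformly continuous in the sense of \cref{def:dis-seq-ucont}, so that for every \(\ty{\varepsilon}{\N}\) there exist respective moduli \(\ty{\delta_f^\varepsilon, \delta_g^\varepsilon}{\N}\) witnessing the defining property. To show \(\ty{g \circ f}{\seq X \to \seq Z}\) is uniformly continuous, I would, given any \(\ty{\varepsilon}{\N}\), nest the moduli by setting \(\delta := \delta_f^{\delta_g^\varepsilon}\). Then for any \(\ty{x_1,x_2}{\seq X}\) with \(x_1 \sim^{\delta} x_2\), applying the uniform continuity of \(f\) at precision \(\delta_g^\varepsilon\) yields \(f(x_1) \sim^{\delta_g^\varepsilon} f(x_2)\), and a second application using the uniform continuity of \(g\) at precision \(\varepsilon\) gives \(g(f(x_1)) \sim^\varepsilon g(f(x_2))\), as required.

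The only mild care required is notational: the moduli of continuity are functions \(\N \to \N\), so one must be explicit that \(\delta_g^\varepsilon\) is a fixed natural once \(\varepsilon\) is chosen and that \(\delta_f^{\delta_g^\varepsilon}\) is then also a fixed natural, ensuring the construction is well-typed. A binary-sequence variant (and even an infinitary-sequence variant) of this lemma can be proved in the same style by analogously nesting the moduli from \cref{def:dis-seq-ucont-bin,def:dis-seq-ucont-inf}, but for the stated lemma only the unary case is required.
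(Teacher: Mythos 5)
Your proof is correct and follows essentially the same route as the paper, which simply notes the argument is ``similar to \cref{lem:id-comp-ucont}'': the identity case uses \(\delta := \varepsilon\) and the composition case nests the moduli as \(\delta := \delta_f^{\delta_g^\varepsilon}\), exactly as you do. Your extra remark about the prefix-agreement formulation and well-typedness of the nested moduli is a harmless elaboration of the same idea.
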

\begin{proof}[Proof (Sketch).]
Similar to \cref{lem:id-comp-ucont}.
\end{proof}

\begin{lemma}
\label{lem:map-ucont}
\thesislit{6}{SequenceContinuity}{map-ucontinuous'}
For any discrete types \(X\) and \(Y\), given any function \(\ty{f}{X \to Y}\), the function \(\ty{\mathsf{map}(f)}{\seq X \to \seq Y}\) is discrete-sequence uniformly continuous.
\end{lemma}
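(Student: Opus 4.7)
The plan is to show that $\mathsf{map}(f)$ preserves the modulus of uniform continuity by simply taking $\delta := \varepsilon$. That is, for any requested output precision $\ty{\varepsilon}{\N}$, the same $\varepsilon$ works as the input modulus.

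Concretely, given \(\ty{\varepsilon}{\N}\) and two sequences \(\ty{x_1,x_2}{\seq X}\) with \(x_1 \sim^\varepsilon x_2\), I need to establish \(\mathsf{map}(f,x_1) \sim^\varepsilon \mathsf{map}(f,x_2)\). Unfolding the definition of \(\sim^\varepsilon\), this amounts to showing that for every \(\ty{i}{\N}\) with \(i < \varepsilon\) we have \(\mathsf{map}(f,x_1)_i = \mathsf{map}(f,x_2)_i\). By the pointwise definition of \(\mathsf{map}\) (from \cref{sec:sequences}), the left-hand side equals \(f(x_{1,i})\) and the right-hand side equals \(f(x_{2,i})\). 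The assumption \(x_1 \sim^\varepsilon x_2\) gives \(x_{1,i} = x_{2,i}\) for every such \(i\), so the conclusion follows by an application of the \(\mathsf{ap}\) rule for identity types (defined at the end of \cref{sec:id}).

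There is no real obstacle in this proof — it is essentially an unfolding of definitions combined with congruence of function application — so the formalisation should be direct. Note moreover that this argument does not actually require either \(X\) or \(Y\) to be discrete; the discreteness assumption appears in the statement only because the sequence-level notion of uniform continuity in \cref{def:dis-seq-ucont} is formulated for discrete types, matching the setting in which its closeness-space equivalent (via \cref{cor:disseq-C-equiv}) applies.
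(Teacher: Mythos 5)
Your proof is correct and is essentially the paper's own argument: take the modulus to be \(\varepsilon\) itself and apply \(\mathsf{ap}\) pointwise to transfer \(x_{1,i} = x_{2,i}\) to \(f(x_{1,i}) = f(x_{2,i})\). Your closing observation that discreteness is not actually used is also accurate — it enters only through the formulation of \cref{def:dis-seq-ucont}, not the argument.
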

\begin{proof}
For every \(\ty{\varepsilon}{\N}\) the modulus of uniform continuity is \(\varepsilon\) itself, because \(\alpha_\varepsilon = \beta_\varepsilon\) gives \(\left( \mathsf{map}(f,\alpha)_\varepsilon = \mathsf{map}(f,\beta)_\varepsilon \right) := \left( f(\alpha_\varepsilon) = f(\beta_\varepsilon) \right)\) using \(\mathsf{ap}\).
\end{proof}

\begin{lemma}
\label{lem:zip-ucont}
\thesislit{6}{SequenceContinuity}{zipWith-ucontinuous'}
For any discrete types \(X\), \(Y\) and \(Z\), given any function \(\ty{f}{X \to Y \to Z}\), the function \(\ty{\mathsf{zipWith}(f)}{\seq X \to \seq Y \to \seq Z}\) is discrete-sequence uniformly continuous.
\end{lemma}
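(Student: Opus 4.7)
The plan is to proceed in direct analogy to \cref{lem:map-ucont}, exploiting the fact that \(\mathsf{zipWith}(f,\alpha,\beta)\) is defined pointwise as \(\lambda n.f(\alpha_n,\beta_n)\) (see \cref{def:zipWith}), so that the \(n\)\textsuperscript{th} output digit depends only on the \(n\)\textsuperscript{th} digits of each input. Because the definition of discrete-sequence uniform continuity for binary functions (\cref{def:dis-seq-ucont-bin}) allows one modulus per argument, I can simply take both moduli to equal \(\varepsilon\).

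Concretely, given \(\ty{\varepsilon}{\N}\) I would set \((\delta_1,\delta_2) := (\varepsilon,\varepsilon)\). Then given any \(\ty{\alpha_1,\alpha_2}{\seq X}\) and \(\ty{\beta_1,\beta_2}{\seq Y}\) with \(\alpha_1 \sim^\varepsilon \alpha_2\) and \(\beta_1 \sim^\varepsilon \beta_2\), the goal is to show \(\mathsf{zipWith}(f,\alpha_1,\beta_1) \sim^\varepsilon \mathsf{zipWith}(f,\alpha_2,\beta_2)\). Unfolding this requires, for each \(\ty{n}{\N}\) with \(n < \varepsilon\), the equation \(f(\alpha_1(n),\beta_1(n)) = f(\alpha_2(n),\beta_2(n))\), which follows from two applications of \(\mathsf{ap}\) (once in each argument, or a single \(\mathsf{ap}\) on the curried application) to the hypotheses \(\alpha_1(n) = \alpha_2(n)\) and \(\beta_1(n) = \beta_2(n)\) supplied by the prefix-equality assumptions at index \(n\).

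There is no real obstacle here: the proof is essentially the same shape as that of \cref{lem:map-ucont}, with the one minor wrinkle being that \(\mathsf{ap}\) needs to be used twice (or that one uses the standard two-argument congruence principle for functions) in order to propagate equality through both arguments of \(f\). Discreteness of \(X\), \(Y\) and \(Z\) is not actually required in this particular step -- it enters only insofar as it is needed to even state discrete-sequence uniform continuity via \cref{def:dis-seq-ucont-bin}.
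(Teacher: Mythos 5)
Your proof is correct and matches the paper's approach: the paper's own proof simply says it is similar to \cref{lem:map-ucont}, taking both moduli to be \(\varepsilon\) and propagating the pointwise equalities through \(f\) by congruence. Your observation that \(\mathsf{ap}\) must be applied twice (once per argument) is exactly the only wrinkle involved.
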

\begin{proof}
Similar to \cref{lem:map-ucont}.
\end{proof}

We now explicitly prove the uniform continuity, via the above discrete-sequence definitions (and hence, by the logical equivalences set out above and the discreteness of \(\3\), via the necessary closeness space definitions) of the functions \(\mathsf{neg}\), \(\mathsf{mid}\), \(\mathsf{bigMid}\) and \(\mathsf{mul}\).
This task takes up the rest of this subsection.

\subsubsection{Negation}

First, we prove the uniform continuity of negation.

\begin{corollary}
\label{lem:neg-ucont}
\thesislit{6}{SignedDigitContinuity}{neg-ucontinuous'}
Negation on signed-digits encodings is a uniformly continuous function.
\end{corollary}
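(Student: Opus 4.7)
The plan is to observe that negation on signed-digit encodings is literally defined as $\mathsf{neg}(x) := \mathsf{map}(\mathsf{flip}, x)$ in Definition~\ref{def:neg-sd}, where $\ty{\mathsf{flip}}{\3 \to \3}$ is a function between discrete types (since $\3$ is finite linearly ordered and hence discrete by \cref{lem:fin-discrete}). So the result should follow almost immediately from the general fact that the $\mathsf{map}$ combinator preserves uniform continuity on discrete sequences.

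More concretely, I would first invoke \cref{lem:map-ucont}, instantiated at $X := Y := \3$ and $f := \mathsf{flip}$, to conclude that $\ty{\mathsf{map}(\mathsf{flip})}{\seq\3 \to \seq\3}$ is discrete-sequence uniformly continuous (\cref{def:dis-seq-ucont}). By unfolding the definition, this says that for every precision $\ty{\varepsilon}{\N}$ there is a modulus $\ty{\delta}{\N}$ --- in fact $\delta := \varepsilon$ by the proof of that lemma --- such that $\alpha \sim^\delta \beta$ implies $\mathsf{map}(\mathsf{flip},\alpha) \sim^\varepsilon \mathsf{map}(\mathsf{flip},\beta)$.

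Finally, I would transport this along the logical equivalence between discrete-sequence uniform continuity and uniform continuity via the discrete-sequence closeness space (the lemma in the excerpt stating $\mathsf{seq{\hy}f{\hy}ucontinuous}^1 \Leftrightarrow \mathsf{f{\hy}ucontinuous}$ on discrete-sequence closeness spaces). This then delivers uniform continuity of $\mathsf{neg}$ in the sense of \cref{def:clos-ucont} via the closeness space $\K$, as required.

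There is really no obstacle here: the entire content is routine once the preceding infrastructure (in particular \cref{lem:map-ucont}) is in place. The only subtle point is making sure the definitional unfolding $\mathsf{neg} := \mathsf{map}(\mathsf{flip})$ is applied before invoking the map lemma, and remembering to appeal to the equivalence between the sequence and closeness formulations of uniform continuity so that the conclusion is phrased in the form needed for search and optimisation downstream.
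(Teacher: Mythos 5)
Your proposal is correct and matches the paper's own proof exactly: the paper likewise unfolds $\mathsf{neg} := \mathsf{map}(\mathsf{flip})$ and applies \cref{lem:map-ucont}, with the equivalence between the sequence and closeness formulations of uniform continuity supplied by the surrounding infrastructure. No issues.
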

\begin{proof}
Recall from \cref{def:neg-sd} that negation is defined \(\mathsf{neg} := \mathsf{map}(\mathsf{flip})\); therefore this is immediately uniformly continuous by \cref{lem:map-ucont}.
\end{proof}

\subsubsection{Binary midpoint}

Next, the uniform continuity of binary midpoint is slightly more complex.

\begin{lemma}
\label{lem:div2-ucont}
\thesislit{6}{SignedDigitContinuity}{div2-ucontinuous'}
The function \(\ty{\mathsf{div2}}{\seq \5 \to \K}\) is uniformly continuous.
\end{lemma}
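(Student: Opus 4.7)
My plan is to prove this via the specialised sequence version of uniform continuity (\cref{def:dis-seq-ucont}), which the preceding lemma shows coincides with the closeness-space notion, since both \(\5\) and \(\3\) are discrete. I claim that for output precision \(\varepsilon\) the modulus of uniform continuity is \(\varepsilon + 1\), and I shall verify this by induction on \(\varepsilon\). The base case \(\varepsilon = 0\) is vacuous, since \(\alpha \sim^0 \beta\) holds for all \(\alpha, \beta\) and \(\mathsf{div2}(\alpha) \sim^0 \mathsf{div2}(\beta)\) is trivially satisfied.

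For the inductive step, I would work with the auxiliary-function presentation from \cref{remark:div2-aux}: the first output digit is \(\mathsf{div2}(\alpha)_0 = \mathsf{pr_1}(\mathsf{div2'}(\alpha_0, \alpha_1))\), while the tail of the output satisfies \(\mathsf{div2}(\alpha)_{n+1} = \mathsf{div2}(\gamma^\alpha)_n\), where \(\gamma^\alpha := \mathsf{pr_2}(\mathsf{div2'}(\alpha_0, \alpha_1)) :: \mathsf{tail}(\mathsf{tail}(\alpha))\). Suppose \(\alpha \sim^{\varepsilon + 2} \beta\); I want \(\mathsf{div2}(\alpha) \sim^{\varepsilon + 1} \mathsf{div2}(\beta)\). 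Because \(\alpha_0 = \beta_0\) and \(\alpha_1 = \beta_1\), the function \(\mathsf{div2'}\) (being just an ordinary function on the discrete set \(\5 \x \5\)) gives \(\mathsf{div2'}(\alpha_0,\alpha_1) = \mathsf{div2'}(\beta_0,\beta_1)\), so the first output digit agrees and the carry prepended in \(\gamma^\alpha\) and \(\gamma^\beta\) also agree. Since additionally \(\mathsf{tail}(\mathsf{tail}(\alpha)) \sim^\varepsilon \mathsf{tail}(\mathsf{tail}(\beta))\), we have \(\gamma^\alpha \sim^{\varepsilon + 1} \gamma^\beta\). The inductive hypothesis then delivers \(\mathsf{div2}(\gamma^\alpha) \sim^\varepsilon \mathsf{div2}(\gamma^\beta)\), which combined with the agreement of the first output digit yields \(\mathsf{div2}(\alpha) \sim^{\varepsilon + 1} \mathsf{div2}(\beta)\), as desired.

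I do not expect any serious obstacles here: because \(\mathsf{div2'}\) lives on the finite discrete type \(\5 \x \5\), it is trivially a function that respects equality of its two inputs, so the twelve-case definition in \cref{def:div2} does not need to be analysed individually. The only subtlety is careful bookkeeping of indices when re-indexing from \(\alpha\) to \(\gamma^\alpha\), which is where the \(+1\) in the modulus originates --- intuitively, each of the first \(\varepsilon\) output digits is determined by a sliding window of two consecutive input digits, so the first \(\varepsilon\) output digits collectively depend on the first \(\varepsilon + 1\) input digits.
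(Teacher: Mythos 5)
Your proof is correct and takes essentially the same approach as the paper, which simply observes that each output digit of \(\mathsf{div2}\) is determined by a two-digit window of the input and concludes that \(\varepsilon + 1\) is the modulus. Your version spells out the induction (tracking the carry through \(\mathsf{div2'}\)) that the paper's one-sentence justification leaves implicit, and is if anything more careful, since the carry means each output digit actually depends on the whole input prefix up to that point rather than literally on just \(\alpha_\varepsilon\) and \(\alpha_{\varepsilon+1}\) --- a subtlety your inductive bookkeeping handles correctly.
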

\begin{proof}
By its recursive definition (\cref{def:div2}), it can be seen that determining the value of \(\mathsf{div2}(\alpha)_\varepsilon\) relies only on \(\alpha_\varepsilon\) and \(\alpha_{\suc \varepsilon}\). Therefore, \(\suc \varepsilon\) is the modulus of uniform continuity for a given \(\varepsilon\).
\end{proof}

\begin{corollary}
\label{lem:mid-ucont}
\thesislit{6}{SignedDigitContinuity}{mid-ucontinuous'}
Binary midpoint on signed-digit encodings is a uniformly continuous function.
\end{corollary}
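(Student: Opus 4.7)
The plan is to exploit the definitional decomposition $\mathsf{mid} = \mathsf{div2} \circ \mathsf{zipWith}(\mathsf{add\3})$ from Definition \ref{def:mid-sd} and assemble uniform continuity from the pieces we already have in hand. The outer component $\mathsf{div2}$ is uniformly continuous by Lemma \ref{lem:div2-ucont}, while $\mathsf{zipWith}(\mathsf{add\3})$ is uniformly continuous as a binary sequence function by Lemma \ref{lem:zip-ucont} (applied to the discrete types $\3$, $\3$, $\5$ and the function $\mathsf{add\3}$). So the task reduces to showing that composing a binary uniformly continuous sequence function with a unary one preserves discrete-sequence uniform continuity.

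Concretely, given a requested precision $\varepsilon : \N$, I would first invoke the modulus of $\mathsf{div2}$ at $\varepsilon$ to obtain a $\delta : \N$ such that $\gamma_1 \sim^\delta \gamma_2$ implies $\mathsf{div2}(\gamma_1) \sim^\varepsilon \mathsf{div2}(\gamma_2)$ (explicitly, $\delta := \suc \varepsilon$ from the proof of Lemma \ref{lem:div2-ucont}). Then I would invoke the moduli of $\mathsf{zipWith}(\mathsf{add\3})$ at $\delta$ to obtain $(\delta_1,\delta_2) : \N \x \N$ — by the proof of Lemma \ref{lem:zip-ucont} one can simply take $\delta_1 = \delta_2 = \delta$ — such that $\alpha_1 \sim^{\delta_1} \alpha_2$ and $\beta_1 \sim^{\delta_2} \beta_2$ imply $\mathsf{zipWith}(\mathsf{add\3},\alpha_1,\beta_1) \sim^\delta \mathsf{zipWith}(\mathsf{add\3},\alpha_2,\beta_2)$. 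Chaining these two implications then gives $\mathsf{mid}(\alpha_1,\beta_1) \sim^\varepsilon \mathsf{mid}(\alpha_2,\beta_2)$, witnessing Definition \ref{def:dis-seq-ucont-bin} for $\mathsf{mid}$ with moduli $(\delta_1,\delta_2)$.

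Equivalently, and more succinctly, I would appeal to a binary analogue of Lemma \ref{lem:id-comp-ucont-seq} stating that the composition of a uniformly continuous unary sequence function with a uniformly continuous binary sequence function is uniformly continuous. This is essentially trivial — just compose the moduli as above — and it mirrors the unary composition lemma. Applied to $\mathsf{div2}$ and $\mathsf{zipWith}(\mathsf{add\3})$, it immediately yields the result.

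There is no real obstacle here; the only mild bookkeeping is keeping track of the two moduli associated with the binary function versus the single modulus associated with the unary one, which is why the composition argument needs to be spelled out (or abstracted into a small lemma) rather than invoked from Lemma \ref{lem:id-comp-ucont-seq} directly. All the computational content has already been established in Lemmas \ref{lem:div2-ucont} and \ref{lem:zip-ucont}, and the discreteness of $\3$ and $\5$ ensures that the sequence-level notion of uniform continuity transfers to closeness-space uniform continuity via the equivalences proved earlier in this subsection.
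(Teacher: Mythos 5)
Your proposal is correct and follows essentially the same route as the paper: decompose $\mathsf{mid}$ as $\mathsf{div2} \circ \mathsf{zipWith}(\mathsf{add\3})$ and combine \cref{lem:zip-ucont}, \cref{lem:div2-ucont} and composition of moduli. Your explicit remark that the unary composition lemma (\cref{lem:id-comp-ucont-seq}) needs a binary analogue, with the moduli $\delta_1 = \delta_2 = \suc\varepsilon$ spelled out, is a point the paper glosses over by citing that lemma directly, but it does not change the argument.
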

\begin{proof}
Recall from \cref{def:mid-sd} that binary midpoint is defined \(\mathsf{mid}(\alpha,\beta) := \mathsf{div2}(\mathsf{zipWith}(\mathsf{add\3}(\alpha,\beta)))\). The \(\mathsf{zipWith}\) function is uniformly continuous by \cref{lem:zip-ucont} and the \(\mathsf{div2}\) function is uniformly continuous by \cref{lem:div2-ucont}. Therefore, the composed function \(\mathsf{mid}\) is uniformly continuous by \cref{lem:id-comp-ucont-seq}.
\end{proof}

\subsubsection{Infinitary midpoint}

Before we come to multiplication, we must first prove the infinitary midpoint function is uniformly continuous.

\begin{lemma}
\thesislit{6}{SignedDigitContinuity}{bigMid'-ucontinuous'}
\label{lem:bigMid-prime-ucont}
The function \(\ty{\mathsf{bigMid'}}{\seq{(\K)} \to \seq \9}\) is uniformly continuous.
\end{lemma}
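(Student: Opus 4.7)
The plan is to induct on the output precision $\varepsilon$, peeling off one output nonary digit at a time and translating the recursive call of $\mathsf{bigMid}'$ into a smaller instance of the same uniform continuity claim. In order to make the bookkeeping work, I will keep track of two indices: $d$, the number of outer sequence positions consulted, and $\delta$, the depth consulted in each of those positions. I will need the invariant $d \le \delta$ required by Definition \ref{def:dis-seq-ucont-inf}.

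For the base case $\varepsilon := 0$, the proposition $\mathsf{bigMid}'(\alpha) \sim^0 \mathsf{bigMid}'(\beta)$ holds vacuously, so setting $(d,\delta) := (0,0)$ suffices.

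For the inductive step $\varepsilon := \varepsilon' + 1$, suppose the inductive hypothesis furnishes a pair $(d',\delta')$ with $d' \le \delta'$ such that $\alpha_n \sim^{\delta'} \beta_n$ for all $n < d'$ implies $\mathsf{bigMid}'(\alpha) \sim^{\varepsilon'} \mathsf{bigMid}'(\beta)$. I claim the pair $(d,\delta) := (d'+1,\delta'+3)$ works for $\varepsilon$. Unfolding the definition of $\mathsf{bigMid}'$, the equality of the first output digit only depends on $\alpha_0$ up to depth $2$ and $\alpha_1$ up to depth $1$, which is supplied by the assumption $\alpha_n \sim^{\delta'+3} \beta_n$ for $n < d'+1$ (both depths $2$ and $1$ are bounded by $\delta'+3$). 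For the remaining $\varepsilon'$ digits, I need to show that the tails of $\mathsf{bigMid}'(\alpha)$ and $\mathsf{bigMid}'(\beta)$ agree in their $\varepsilon'$-prefixes; by the recursive equation, this reduces to showing that the infinitary sequences $\alpha^* := \mathsf{mid}(\mathsf{tail}^2\, \alpha_0, \mathsf{tail}\, \alpha_1) :: \mathsf{tail}^2\, \alpha$ and $\beta^* := \mathsf{mid}(\mathsf{tail}^2\, \beta_0, \mathsf{tail}\, \beta_1) :: \mathsf{tail}^2\, \beta$ satisfy $\alpha^*_n \sim^{\delta'} \beta^*_n$ for all $n < d'$, and then invoking the inductive hypothesis.

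For the zeroth component of $\alpha^*$ and $\beta^*$, I need $\mathsf{mid}(\mathsf{tail}^2\, \alpha_0, \mathsf{tail}\, \alpha_1) \sim^{\delta'} \mathsf{mid}(\mathsf{tail}^2\, \beta_0, \mathsf{tail}\, \beta_1)$, which follows from the uniform continuity of $\mathsf{mid}$ (\cref{lem:mid-ucont}) together with the hypotheses $\alpha_0 \sim^{\delta'+3} \beta_0$ and $\alpha_1 \sim^{\delta'+3} \beta_1$ (using that $\mathsf{mid}$ has modulus at most $\delta'+1$ per Corollary~\ref{lem:mid-ucont}, and that dropping two or one head digits only increases the required input depth by $2$ or $1$ respectively). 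For $1 \le n < d'$, the $n$-th component of $\alpha^*$ is $\alpha_{n+1}$, so the required statement $\alpha_{n+1} \sim^{\delta'} \beta_{n+1}$ follows from the hypothesis $\alpha_{n+1} \sim^{\delta'+3} \beta_{n+1}$ (with $n+1 < d'+1 = d$) and the monotonicity of prefix equality in the depth. Finally, the invariant $d \le \delta$ is preserved since $d'+1 \le \delta'+1 \le \delta'+3$.

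The main obstacle will be the arithmetic bookkeeping around the $+2$ and $+1$ shifts induced by $\mathsf{bigMid}'$ consuming two digits of its first input sequence and one of its second; in particular, correctly quantifying how deep one must look into the outer sequence versus how many outer positions must be consulted, and discharging both in one uniform pair $(d,\delta)$. No single step is deep: each invocation is a routine application of the uniform continuity of $\mathsf{mid}$ and of discrete-sequence uniform continuity combinators, but the formalisation will involve careful termination justification since the recursion on $\varepsilon$ is paired with a shrinking recursion on the structure of $\mathsf{bigMid}'$ itself.
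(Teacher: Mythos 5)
Your overall strategy — induction on the output precision $\varepsilon$, a base case read off from which input digits the first output nonary digit consults, and an inductive step that pushes the recursive equation for $\mathsf{bigMid}'$ through the uniform continuity of $\mathsf{mid}$ and the composition combinators — is exactly the paper's proof. However, your bookkeeping for $d$ has an off-by-one that makes the claimed pair fail at the very first inductive step. With base case $(d,\delta) := (0,0)$ at $\varepsilon := 0$ and the rule $(d,\delta) := (d'+1,\delta'+3)$, you get $(d,\delta) = (1,3)$ at $\varepsilon := 1$. But $d = 1$ only licenses the hypothesis $\alpha_0 \sim^{3} \beta_0$; it says nothing about $\alpha_1$ versus $\beta_1$. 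Both the first output digit $\mathsf{add\5}(\mathsf{add\3}(a,a),\mathsf{add\3}(b,c))$ (through $c = (\alpha_1)_0$) and the head $\mathsf{mid}(\mathsf{tail}^2\,\alpha_0,\mathsf{tail}\,\alpha_1)$ of the recursive argument depend on $\alpha_1$, so neither can be discharged. Your own justification quietly assumes $d'+1 \ge 2$ when it appeals to ``$\alpha_n \sim^{\delta'+3} \beta_n$ for $n < d'+1$'' to cover $n = 1$, which is false when $d' = 0$.

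The fix is minor: take $d := \mathsf{max}(d'+1, 2)$ in the inductive step, or equivalently anchor the induction so that the case producing the first output digit already sets $d := 2$ (this is what the paper does, taking $d,\delta := 2$ in its base case precisely because $\mathsf{bigMid}'(\zeta)_0$ consults $(\zeta_0)_0$, $(\zeta_0)_1$ and $(\zeta_1)_0$). The resulting moduli are $d_\varepsilon = \varepsilon + 1$ rather than your $d_\varepsilon = \varepsilon$. The $\delta$ side of your accounting (depth $\delta'+3$ to absorb the two dropped digits of $\zeta_0$, one of $\zeta_1$, and the modulus $\delta'+1$ of $\mathsf{mid}$) is correct, as is the preservation of the invariant $d \le \delta$.
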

\begin{proof}
By induction on the requested precision \(\ty{\varepsilon}{\N}\). In the base case, recall from \cref{def:bigMid-sd} that \(\mathsf{bigMid}'(\zeta)_0\) only uses \((\zeta_0)_0\), \((\zeta_0)_1\) and \((\zeta_1)_0\); therefore \(d,\delta := 2\).
In the inductive case, recall from \cref{def:bigMid-sd} that \(\mathsf{bigMid}'(\zeta)_{\suc \varepsilon} := \mathsf{bigMid}'(\mathsf{mid}(\mathsf{tail} (\mathsf{tail} \ \zeta_0),\mathsf{tail} \ \zeta_1) :: (\mathsf{tail} (\mathsf{tail} \ \zeta)))\). The argument is the composition of the successor, composition and binary midpoint functions, all of which are uniformly continuous (for binary midpoint, see \cref{lem:mid-ucont}); therefore, the function is uniformly continuous by the inductive hypothesis and \cref{lem:id-comp-ucont-seq}.
\end{proof}

\begin{lemma}
\label{lem:div4-ucont}
\thesislit{6}{SignedDigitContinuity}{div4-ucontinuous'}
The function \(\ty{\mathsf{div4}}{\seq \9 \to \K}\) is uniformly continuous.
\end{lemma}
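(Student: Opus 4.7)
The plan is to mirror the proof of \cref{lem:div2-ucont} essentially verbatim, since $\mathsf{div4}$ is defined by exactly the same inductive pattern as $\mathsf{div2}$ but on nonary digits rather than quinary ones. That is, $\mathsf{div4}$ should be defined (like $\mathsf{div2}$ via an auxiliary function $\mathsf{div4'} : \9 \x \9 \to \3 \x \9$) by the equations
\begin{alignat*}{3}
\mathsf{div4}(\alpha)_0 &:= \mathsf{pr}_1(\mathsf{div4'}(\alpha_0,\alpha_1)), \\
\mathsf{div4}(\alpha)_{n+1} &:= \mathsf{div4}(\mathsf{pr}_2(\mathsf{div4'}(\alpha_0,\alpha_1)) :: \mathsf{tail}(\mathsf{tail} \ \alpha))_n.
\end{alignat*}
So the first step is simply to unfold this definition and observe, as in \cref{lem:div2-ucont}, that to determine $\mathsf{div4}(\alpha)_\varepsilon$ only the entries $\alpha_\varepsilon$ and $\alpha_{\varepsilon+1}$ are consulted (each recursive call shifts the sequence by one and emits one output digit, peeking one digit ahead). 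This is the one substantive computational observation in the proof; everything else is bookkeeping.

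Given this observation, for a requested precision $\ty{\varepsilon}{\N}$ I would take the modulus of uniform continuity to be $\suc \varepsilon$. Concretely, assuming $\alpha \sim^{\varepsilon+1} \beta$, I would prove $\mathsf{div4}(\alpha) \sim^\varepsilon \mathsf{div4}(\beta)$ by induction on $\varepsilon$. In the base case $\varepsilon := 0$, the values $\mathsf{div4}(\alpha)_0$ and $\mathsf{div4}(\beta)_0$ depend only on $\alpha_0,\alpha_1$ and $\beta_0,\beta_1$ respectively, which agree by $\alpha \sim^1 \beta$, so the results are equal by $\mathsf{ap}$ of $\mathsf{pr}_1 \circ \mathsf{div4'}$. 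In the inductive case $\varepsilon := \varepsilon' + 1$, the head digit of each output agrees by the same reasoning, and the tail reduces to $\mathsf{div4}$ applied to the sequences $\mathsf{pr}_2(\mathsf{div4'}(\alpha_0,\alpha_1)) :: \mathsf{tail}(\mathsf{tail} \ \alpha)$ and $\mathsf{pr}_2(\mathsf{div4'}(\beta_0,\beta_1)) :: \mathsf{tail}(\mathsf{tail} \ \beta)$, whose heads agree (again by $\mathsf{ap}$) and whose tails agree in their $\varepsilon'$-prefixes by $\alpha \sim^{\varepsilon+1} \beta$; the inductive hypothesis then yields the result.

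The only mild obstacle is purely syntactic: marshalling the prefix-equality between the recursive arguments correctly. Since $\mathsf{tail}(\mathsf{tail} \ \alpha) \sim^{\varepsilon'-1} \mathsf{tail}(\mathsf{tail} \ \beta)$ follows from $\alpha \sim^{\varepsilon+1} \beta$, and the prepended heads are pointwise equal, the tail hypothesis has the right shape to apply the induction. No ``continuity lemma'' about composition or $\mathsf{map}$ is needed here, unlike in \cref{lem:mid-ucont,lem:bigMid-prime-ucont}; the proof is entirely self-contained and directly analogous to \cref{lem:div2-ucont}, so I expect it to be straightforward with no genuine conceptual difficulty.
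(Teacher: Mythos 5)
Your proposal is correct and takes essentially the same route as the paper, which simply notes that the argument for \(\mathsf{div4}\) is identical to that for \(\mathsf{div2}\): by the recursive definition, the \(\varepsilon\)\textsuperscript{th} output digit is determined by the \((\varepsilon+1)\)-prefix of the input together with one look-ahead digit, so \(\suc\varepsilon\) is a modulus of uniform continuity. Your spelled-out induction is just a more detailed rendering of the same one-line observation.
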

\begin{proof}
Similar argument to \cref{lem:div2-ucont}.
\end{proof}

\begin{corollary}
\label{cor:bigMid-ucont}
\thesislit{6}{SignedDigitContinuity}{bigMid-ucontinuous'}
Infinitary midpoint on signed-digit encodings is a uniformly continuous function.
\end{corollary}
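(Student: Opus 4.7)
The plan is to obtain the result by composition, using the two already-established continuity lemmas, namely that $\mathsf{bigMid'} : \seq{(\K)} \to \seq\9$ is uniformly continuous (\cref{lem:bigMid-prime-ucont}) and that $\mathsf{div4} : \seq\9 \to \K$ is uniformly continuous (\cref{lem:div4-ucont}), together with the definition $\mathsf{bigMid} := \mathsf{div4} \circ \mathsf{bigMid'}$ from \cref{def:bigMid-sd}. The only subtlety is that $\mathsf{bigMid}$ has domain $\seq{(\K)}$ and codomain $\K$, so we must verify uniform continuity via \cref{def:dis-seq-ucont-inf} (the infinitary sequence version) on the input side and via \cref{def:dis-seq-ucont} (the unary sequence version) on the output side, gluing them together through the intermediate type $\seq\9$.

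Concretely, I would proceed as follows. Fix a requested precision $\ty{\varepsilon}{\N}$. First, apply the uniform continuity of $\mathsf{div4}$ at $\varepsilon$ to extract a modulus $\ty{\delta'}{\N}$ such that $\gamma_1 \sim^{\delta'} \gamma_2$ implies $\mathsf{div4}(\gamma_1) \sim^{\varepsilon} \mathsf{div4}(\gamma_2)$ for any $\ty{\gamma_1,\gamma_2}{\seq\9}$. Next, apply the uniform continuity of $\mathsf{bigMid'}$ at $\delta'$ to obtain $\ty{(d,\delta)}{\N \x \N}$ with $d \leq \delta$ such that whenever $\alpha_n \sim^{\delta} \beta_n$ for all $n < d$, we have $\mathsf{bigMid'}(\alpha) \sim^{\delta'} \mathsf{bigMid'}(\beta)$ for any $\ty{\alpha,\beta}{\seq{(\K)}}$. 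Returning the same pair $(d,\delta)$ as the witnesses for uniform continuity of $\mathsf{bigMid}$: given $\ty{\alpha,\beta}{\seq{(\K)}}$ whose $d$-prefixes agree in their $\delta$-prefixes, the uniform continuity of $\mathsf{bigMid'}$ yields $\mathsf{bigMid'}(\alpha) \sim^{\delta'} \mathsf{bigMid'}(\beta)$, and then the uniform continuity of $\mathsf{div4}$ yields $\mathsf{bigMid}(\alpha) := \mathsf{div4}(\mathsf{bigMid'}(\alpha)) \sim^{\varepsilon} \mathsf{div4}(\mathsf{bigMid'}(\beta)) =: \mathsf{bigMid}(\beta)$, which is exactly what we need.

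There is no real obstacle here — this is essentially the infinitary analogue of the composition preservation statement in \cref{lem:id-comp-ucont-seq}, and could alternatively be obtained by invoking the closeness space composition lemma \cref{lem:id-comp-ucont} after translating back through the logical equivalences between the sequence and closeness formulations of uniform continuity. The most routine thing to double-check is that the pair $(d,\delta)$ produced by $\mathsf{bigMid'}$ indeed satisfies $d \leq \delta$ (required by \cref{def:dis-seq-ucont-inf}), but this is inherited directly from the conclusion of \cref{lem:bigMid-prime-ucont}.
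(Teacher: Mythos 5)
Your proposal is correct and matches the paper's proof, which likewise defines \(\mathsf{bigMid} := \mathsf{div4} \circ \mathsf{bigMid}'\) and concludes by composing \cref{lem:bigMid-prime-ucont} with \cref{lem:div4-ucont}; you have simply spelled out the chaining of moduli that the paper leaves implicit. The check that the pair \((d,\delta)\) satisfies \(d \leq \delta\) is indeed inherited directly from \cref{lem:bigMid-prime-ucont}, as you say.
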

\begin{proof}
Recall from \cref{def:bigMid-sd} that infinitary midpoint is defined \(\mathsf{bigMid} := \mathsf{div4} \circ \mathsf{bigMid}'\). The result follows by \cref{lem:bigMid-prime-ucont,lem:div4-ucont}.
\end{proof}

\subsubsection{Multiplication}

Finally, the uniform continuity of multiplication follows from the above.

\begin{corollary}
\label{lem:mul-ucont}
\thesislit{6}{SignedDigitContinuity}{mul-ucontinuous'}
Multiplication on signed-digit encodings is a uniformly continuous function.
\end{corollary}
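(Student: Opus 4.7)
The plan is to unfold the definition $\mathsf{mul}(\alpha,\beta) := \mathsf{bigMid}(\mathsf{zipWith}(\mathsf{digitMul},\alpha,\mathsf{repeat}\ \beta))$ and use the uniform continuity of the outer operator $\mathsf{bigMid}$ (Corollary \ref{cor:bigMid-ucont}) to reduce the problem to showing that the inner argument of type $\seq{(\K)}$ depends uniformly continuously on the pair $(\alpha,\beta)$, in the appropriate sense for the infinitary variant of continuity (Definition \ref{def:dis-seq-ucont-inf}). Given a target precision $\varepsilon$, $\mathsf{bigMid}$ supplies moduli $d,\delta$ with $d \leq \delta$; it then suffices to prove that if $\alpha_1 \sim^{\delta_1} \alpha_2$ and $\beta_1 \sim^{\delta_2} \beta_2$ for some $\delta_1,\delta_2$ to be chosen, then for every $n < d$ we have $\mathsf{zipWith}(\mathsf{digitMul},\alpha_i,\mathsf{repeat}\ \beta_i)_n = \mathsf{digitMul}(\alpha_i{}_n,\beta_i)$ agreeing in their $\delta$-prefixes across $i \in \{1,2\}$.

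The natural choice is $\delta_1 := d$ and $\delta_2 := \delta$. Then $\alpha_1 \sim^d \alpha_2$ gives $\alpha_1{}_n = \alpha_2{}_n$ for every $n < d$, so it only remains to show that for each fixed digit $t \in \3$ the unary function $\lambda \beta.\mathsf{digitMul}(t,\beta)$ is uniformly continuous with $\delta$ serving as modulus at $\delta$. This is a three-way case split that follows from the definition of $\mathsf{digitMul}$: when $t = \overline 1$ the function is $\mathsf{neg}$, uniformly continuous by Corollary \ref{lem:neg-ucont}; when $t = 0$ it is the constant $\lambda n.0$, trivially uniformly continuous; and when $t = 1$ it is the identity, uniformly continuous by Lemma \ref{lem:id-comp-ucont-seq}. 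In all three cases the modulus for output precision $\delta$ can be taken to be $\delta$ itself, so $\beta_1 \sim^\delta \beta_2$ gives $\mathsf{digitMul}(t,\beta_1) \sim^\delta \mathsf{digitMul}(t,\beta_2)$.

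Structurally, I would either do this by hand as above or factor it through a small auxiliary lemma: a binary analogue of Lemma \ref{lem:zip-ucont} stating that $\lambda (\alpha,\beta).\mathsf{zipWith}(f,\alpha,\mathsf{repeat}\ \beta)$ is uniformly continuous (in the infinitary sense of Definition \ref{def:dis-seq-ucont-inf}) whenever each section $\lambda \beta.f(t,\beta)$ is uniformly continuous, and then invoke composition (Lemma \ref{lem:id-comp-ucont-seq}) with $\mathsf{bigMid}$. The main obstacle is purely bookkeeping: the three notions of sequence continuity (unary, binary, infinitary) each have different shapes of moduli, so one has to pick the correct variant at each junction and make sure the $d \leq \delta$ constraint from $\mathsf{bigMid}$'s modulus is compatible with the $(\delta_1,\delta_2) = (d,\delta)$ choice above. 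Once this alignment is done, no genuinely new analytic content is required — it is uniform continuity of negation, the identity and constants, threaded through composition.
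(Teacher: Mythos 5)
Your proposal is correct and follows essentially the same route as the paper, which also reduces the claim to the uniform continuity of $\mathsf{bigMid}$ (\cref{cor:bigMid-ucont}), the $\mathsf{zipWith}$ lifting (\cref{lem:zip-ucont}) and closure under composition (\cref{lem:id-comp-ucont-seq}), with the sections of $\mathsf{digitMul}$ handled by negation, constants and the identity. The only difference is that you spell out the modulus bookkeeping (the choice $(\delta_1,\delta_2)=(d,\delta)$ against $\mathsf{bigMid}$'s infinitary modulus) that the paper's one-line proof delegates to the cited lemmas.
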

\begin{proof}
Recall from \cref{def:mul-sd} that multiplication is defined \(\mathsf{mul}(\alpha,\beta) := \mathsf{bigMid}(\mathsf{zipWith}(\mathsf{digitMul},\alpha,\lambda n.\beta))\). This is uniformly continuous by \cref{lem:zip-ucont,cor:bigMid-ucont,lem:id-comp-ucont-seq}.
\end{proof}

\subsection{\textsc{Agda}-extracted examples}
\label{sec:K-examples}

We have developed, within the \textsc{TypeTopology} library, a large framework of \textsc{Agda} proofs concerning searchable types, generalised optimisation and regression and ternary signed-digit encodings.
The true test of this framework is in extracting some proof-of-concept computational algorithms of our generalised framework on \(\K\).
The \textsc{Agda} code is compiled into \textsc{Haskell} as described in \cref{appendix:agda}, which allows it to run faster than if we ran it directly in \textsc{Agda} --- though the extracted algorithms are still slow.
The reader can try these examples themselves by following the instructions in \cref{appendix:agda}.

Whether we are searching for an answer to a predicate or optimising/regressing a function up to a given precision \(\ty{\varepsilon}{\N}\), our algorithms will --- using the witness of uniform continuity --- effectively compute a finite prefix of a sequence such that any sequence with that prefix will be a correct answer, approximate minimum or satisfactory parameter of the model function.

For each example, we give a table which notes the answer \(\ty{x}{\K}\) computed for the requested precision-level \(\ty{\varepsilon}{\N}\).
In the few cases where we search \(\K\) \emph{directly}\footnote{Recall the discussion on direct and indirect search from \cref{sec:K-suitable}} each answer is given as a finite prefix and then an ellipses `\(\ldots\)' to notate the infinitely many \(0\)s our algorithm repeats after the computed prefix.
In the usual case where we search \(\K\) \emph{indirectly}, and hence \(x\) is mapped from a sequence of type \(\seq\2\) using the map defined in \cref{def:uparrow-map}), the ellipses `\(\ldots\)' instead denotes the infinitely-many \(\mone\)s that the algorithm repeats after the computed prefix.
To aid illustration, we also note which real \(\ty{\llangle x \rrangle}{\R}\) it is that \(x\) represents and we often give further such information, such as the represented value \(\llangle f(x) \rrangle\) when we are minimising a particular function \(f\).
We note any times above one second taken to compute this answer\footnote{For reference, all of our examples are computed using a MacBook Air M1 laptop.}, which grows quickly due to the inefficiency of our underlying arithmetic and the exhaustive nature of our search (further discussions on this folllow in \cref{sec:exact-real-search-boehm}).

Note that, for ease of reading, we abuse notation and often write functions and representations on \(\K\) as if they are those numbers they represent on the reals \(\I\). For example, we write \(\frac{1}{4}\) instead of \(\mone :: (1 :: (\mone :: \mathsf{repeat} \ 1))\) --- but recall that all of these algorithms operate on the representations of the reals.

\subsubsection{Uniformly continuous search}

Search on ternary signed-digit encodings has been performed previously, for example by Escard\'o in \textsc{Haskell}~\cite{Escardo11fun}.
However, we still provide four examples of uniformly continuous search as proof-of-concept examples of our explicit-continuity assumptions.

\begin{example}
\thesislit{6}{SignedDigitExamples}{Search-Example1}
\label{ex:K-search-1a}
We search for a ternary signed-digit encoding \(\ty{x}{\K}\) that satisfies \[p(x) := \frac{-x}{2} \leq^\varepsilon \frac{1}{4},\] for a variety of requested precision values \(\ty{\varepsilon}{\N}\) indirectly using the uniformly continuous searcher derived from the totally boundedness of \(\seq\2\) (i.e.\ the first proof of \cref{cor:cantor-ucsearch}).

\vspace{0.25cm}
The decidability and uniform continuity of the predicate \(\ty{p}{\K \to \Omega}\) is ensured by that of the approximate linear preorder (\cref{lem:approx-order-ucd-pred}), the uniform continuity of \(\mathsf{mid}\) (\cref{lem:neg-ucont,lem:mid-ucont}) and the composition of these (\cref{lem:f-p-ucont}).

\begin{center}

\begin{tabular}{ L L L L l }
\hline
    \varepsilon & x & \llangle x \rrangle & \frac{-\llangle x \rrangle}{2} & Time (s) \\ \hline\hline
    5 & 1\ldots & 0 & 0 & \\ \hline
    10 & 1\ldots & 0 & 0 & \\ \hline
    15 & 1\ldots & 0 & 0 & 1.09\\ \hline
    20 & 1\ldots & 0 & 0 & 24.88\\ \hline
\end{tabular}
\end{center}

Not long after this, the searcher causes a stack overflow. This was not unexpected: recall that the totally bounded searcher in must compute a \(2^\delta\)-sized \(\delta\)-net of \(\seq\2\) in advance of the search.

\vspace{0.25cm}
We next try using the indirect uniformly continuous searcher derived from the decreasing-modulus uniformly continuous searcher of \(\seq\2\) (i.e.\ the second proof of \cref{cor:cantor-ucsearch}).

\begin{center}
\begin{tabular}{ L L L L l }
\hline
    \varepsilon & x & \llangle x \rrangle & \frac{-\llangle x \rrangle}{2} & Time (s) \\ \hline\hline
    3 & \mone1\ldots & -0.5 & 0.25 & \\ \hline
    6 & \mone\mone1\ldots & -0.515625 & 0.2578125 & \\ \hline
    9 & \mone\mone11\ldots & -0.501953125 & 0.25097656 & 35.68 \\ \hline
\end{tabular}
\end{center}

This searcher doesn't have the overflow problem, but is in this case less efficient than the totally bounded searcher (likely because the proof of the decreasing-modulus searcher is more computationally expensive).
Note also that the two searchers computed different answers; this is because the order in which they evaluate candidate solutions differs --- though of course, both are correct up to the requested precision.
\end{example}

The totally bounded searcher in these examples is usually more efficient, but sometimes (due to the difference in search strategy) the decreasing-modulus searcher is better.
As in this section we wish to illustrate the correctness of our algorithms, and not their efficiency, from now on we use whichever searcher allows us to produce better results for the given example.

\begin{example}
\thesislit{6}{SignedDigitExamples}{Search-Example2}
\label{ex:K-search-2a}
We search for a ternary signed-digit encoding \(\ty{x}{\K}\) that satisfies \[p(x) := C_\varepsilon(\mathsf{mul}(x,x),\frac{1}{2}),\] for a variety of requested precision values \(\ty{\varepsilon}{\N}\) indirectly using the totally bounded uniformly continuous searcher on \(\seq\2\).

\vspace{0.25cm}
The decidability and uniform continuity of the predicate \(\ty{p}{\K \to \Omega}\) is ensured by that of the closeness relation (\cref{lem:closeness-ucd-pred}), the uniform continuity of \(\mathsf{mul}\) (\cref{lem:mul-ucont}) and the composition of these (\cref{lem:f-p-ucont}).

\begin{center}
\begin{tabular}{ L L L L l }
\hline
    \varepsilon & x & \llangle x \rrangle & \llangle \mathsf{mul}(x,x) \rrangle & Time (s) \\ \hline\hline
    1 & \ldots & -1 & 1 & \\ \hline
    2 & \mone\mone1\ldots & -0.75 & 0.5625 & \\ \hline
    3 & \mone\mone1\ldots & -0.75 & 0.5625 & \\ \hline
    4 & 11\mone11\ldots & 0.6875 & 0.47265625 & \\ \hline
    5 & \mone\mone1\mone\mone1\ldots & -0.71875 & 0.516601563 & 3.32 \\ \hline
    6 & 11\mone11\mone1\ldots & 0.703125 & 0.494384766 & 24.19 \\ \hline
    \end{tabular}
\end{center}
The answer correctly converges towards \(\pm\sqrt{0.5} = \pm0.707106781187\ldots\); indeed, it flips between approximations of the two answers for different levels of precision.

\vspace{0.25cm}
At \(\varepsilon := 7\), there was a stack overflow. But the decreasing-modulus searcher did not produce an answer in two minutes for \(n := 4\).
Therefore, this search is much less efficient than that in \cref{ex:K-search-1a}; this is because multiplication requires much higher degrees of input precision than negation and binary midpoint.
\end{example}

\begin{example}
\label{ex:K-search-3}
\thesislit{6}{SignedDigitExamples}{Search-Example3}
We search for a pair of ternary signed-digit encodings \(\ty{(x,y)}{\K \x \K}\) that satisfy \[p(x,y) := C_\varepsilon(\mathsf{mid}(x,y),0),\] for a variety of requested precision values \(\ty{\varepsilon}{\N}\) indirectly using the totally bounded uniformly continuous searcher on \(\seq\2\x\seq\2\).

\vspace{0.25cm}
The decidability and uniform continuity of the predicate \(\ty{p}{\K \to \Omega}\) is ensured by that of the closeness relation (\cref{lem:closeness-ucd-pred}), the uniform continuity of \(\mathsf{mid}\) (\cref{lem:mid-ucont}) and the composition of these (\cref{lem:f-p-ucont}).

\begin{center}
\begin{tabular}{ L L L L l }
\hline
    \varepsilon & x , y & \llangle x \rrangle , \llangle y \rrangle & \llangle \mathsf{mid}(x,y) \rrangle & Time (s) \\ \hline\hline
    5 & 11111\ldots, & 0.9375, & -0.03125 & 9.84 \\ & \mone\mone\mone\mone\mone\ldots & -1 & &  \\ \hline
    10 & 1111111111\ldots, & 0.998046875, & -0.000976563 & \\ & \mone\mone\mone\mone\mone\mone\mone\mone\mone\mone\ldots & -1 & & 75.5  ~ \\ \hline
\end{tabular}
\end{center}

We have shown we can search for two answers in parallel; although this predicate was particularly well-suited to the search strategy of our exhaustive searcher.
\end{example}

\subsubsection{Global optimisation}

Using the optimisation algorithm (\cref{th:min}), we can optimise any function of ternary signed-digit encodings composed from \(\mathsf{neg}\), \(\mathsf{mid}\), \(\mathsf{bigMid}\) and \(\mathsf{mul}\) --- including multivariable and stream functions --- to any degree of precision.
We give two examples of this.

As global optimisation must evaluate a \(\delta\)-net of candidates (for a required degree of input precision \(\ty{\delta}{\N}\) for the requested output precision) in advance of the optimisation process \emph{and} must check each one of these candidates, we find that it very quickly becomes inefficient.

\begin{example}
\label{ex:K-opt-1}
\thesislit{6}{SignedDigitExamples}{Optimisation-Example1}
We compute an \(\varepsilon\)-global minimum of the function \[ f(x) := \mathsf{neg}(x) \] for a variety of requested precision values \(\ty{\varepsilon}{\N}\) indirectly using the totally bounded property of \(\seq\2\).

\vspace{0.25cm}
The continuity of the function \(f\) is by \cref{lem:neg-ucont}.

\begin{center}
\begin{tabular}{ L L L L l }
\hline
    \varepsilon & x & \llangle x \rrangle & \llangle f(x) \rrangle & Time (s) \\ \hline
    \hline
    10 & 1111111111\ldots & 0.998046875 & -0.999511719 & 7.68 \\  \hline
    11 & 11111111111\ldots & 0.999023438 & -0.999511719 & 30.23\\ \hline
    12 & 111111111111\ldots & 0.999511719 & -0.999511719 & 117.18\\ \hline\hline
\end{tabular}
\end{center}
\end{example}

\begin{example}
\label{ex:K-opt-2}
\thesislit{6}{SignedDigitExamples}{Optimisation-Example2}
We compute an \(\varepsilon\)-global minimum of the function \[ f(x) := \mathsf{mul}(x,x) \] for a variety of requested precision values \(\ty{\varepsilon}{\N}\) indirectly using the totally bounded property of \(\seq\2\).

\vspace{0.25cm}
The continuity of the function \(f\) is by \cref{lem:mul-ucont}.

\begin{center}
\begin{tabular}{ L L L L l }
\hline
    \varepsilon & x & \llangle x \rrangle & \llangle f(x) \rrangle & Time (s) \\ \hline
    \hline
    1 & \mone\ldots & -1 & 1 & \\ \hline
    2 & 1\mone\ldots & 0 & 0 & 1.54 \\ \hline
    3 & 1\mone\ldots & 0 & 0 & 81.33 \\ \hline
\end{tabular}
\end{center}

Although the same answer is returned each time, the larger \(\varepsilon\) values means, especially due to the modulus of uniform continuity of exponentiation, an exponentially larger \(\delta\)-net to exhaust; hence the large jump between the time taken for \(\varepsilon := 2\) and \(\varepsilon := 3\).
\end{example}

\subsubsection{Parametric regression}

For regression on ternary signed-digits, we follow the model outlined in \cref{def:gen-reg-practical}.
This means that we will be performing regression where the oracle \(\OO\) is a function and the loss function used is the least-closeness pseudocloseness (\cref{def:least-closeness}) function defined from a given vector \(\ty{v}{(\K)^n}\) of \(n\)-many predictor observations.
Recall that this means the algorithm will only have access to the oracle at the outcomes of the given observations.

Using the regression-as-\emph{optimisation} algorithm (\cref{reg:min}), we can find an \(\varepsilon\)-best choice parameter for any uniformly continuous function.

\begin{example}
\thesislit{6}{SignedDigitExamples}{Regression-Example1a-Optimisation}
\label{ex:K-reg-1-opt}
By fixing the predictor observations \(v := \ty{\{-1,0,1\}}{(\K)^3}\), we define the least-closeness pseudocloseness function \(\ty{L_v}{\closeness{(X \to Y)}}\) between functions, which compares their values at the points in \(v\).
For the oracle function \[ \OO(x) := \mathsf{mid}(\frac{1}{3},x), \] we compute an \(\varepsilon\)-best choice parameter \(\ty{p}{\K}\) of the parameterised model function \[M(p,x) := \mathsf{mid}(\mathsf{neg}(p),x)\] for a variety of requested precision values \(\ty{\varepsilon}{\N}\) indirectly by maximising the function \(\ty{\left(\lambda p.L_v(\OO,M(p^\uparrow))\right)}{\seq\2 \to \Ni}\).The shape of the model function matches the oracle function exactly, except for the fact that the parameter is negated -- we therefore expect the computed parameter to be close to \(-\frac{1}{3}\).

\vspace{0.25cm}
The continuity of the model function is by \cref{lem:mid-ucont,lem:neg-ucont,lem:id-comp-ucont-seq}.

\begin{center}
\begin{tabular}{ L L L l }
\hline
    \varepsilon & p & \llangle p \rrangle & Time (s) \\ \hline
    \hline
    2 & \mone1\mone\ldots & -0.25 & \\ \hline
    4 & \mone1\mone1\mone\ldots & -0.3125 & \\ \hline
    6 & \mone1\mone1\mone1\mone\ldots & -0.328125 & \\ \hline
    8 & \mone1\mone1\mone1\mone1\mone\ldots & -0.33203125 & 1.26 \\ \hline
    10 & \mone1\mone1\mone1\mone1\mone1\mone\ldots & -0.333007813 & 13.79 \\ \hline
    \end{tabular}
\end{center}

The optimisation has returned the value that best connects the observations; with this parameter, the regressed function clearly matches the true oracle up to the requested precision.
\end{example}

It is often more practical to use \emph{search} for regression; i.e.\ to use the algorithms derived from \cref{th:perfect,th:imp}, depending on whether or not there is distortion present in the oracle function.

\begin{example}
\label{ex:K-reg-1-search-perfect}
\thesislit{6}{SignedDigitExamples}{Regression-Example1a-SearchDistortionFree}
For the same predictor observations \(v\), oracle \(\OO\) and parameterised model function \(M\) as in \cref{ex:K-reg-1-opt}, we search for a parameter \(\ty{p}{\K}\) such that \[\underline \varepsilon \preceq L_v(\OO,M(p)),\] for a variety of requested precision values \(\ty{\varepsilon}{\N}\) indirectly using the totally bounded uniformly continuous searcher on \(\seq\2\).

\begin{center}
\begin{tabular}{ L L L l }
\hline
    \varepsilon & p & \llangle p \rrangle & Time (s) \\ \hline
    \hline
    4 & \mone1\mone\ldots & -0.25 & \\ \hline
    8 & \mone1\mone1\mone1\mone\ldots & -0.328125 & 2.20 \\ \hline
    12 & \mone1\mone1\mone1\mone1\mone1\mone\ldots & -0.333007813 & 9.07 \\ \hline
    16 & \mone1\mone1\mone1\mone1\mone1\mone1\mone1\mone\ldots & -0.333343506 & 37.18 \\ \hline
    \end{tabular}
\end{center}

Although the parameter is not necessarily \(\varepsilon\)-best choice, compared to \cref{ex:K-reg-1-opt} the search routine is quicker and (due to lack of distortion in the oracle) the regressed function still matches the true oracle up to the requested precision.
\end{example}

We continue this example by exploring what happens when we receive outcome observations that are distorted from the true oracle.

\begin{example}
\label{ex:K-reg-1-search-imperfect}
\thesislit{6}{SignedDigitExamples}{Regression-Example1a-SearchDistortionProne}
For the same predictor observations \(v\), oracle \(\OO\) and parameterised model function \(M\) as in \cref{ex:K-reg-1-opt}, we search for a parameter \(\ty{p}{\K}\) such that \[\underline \varepsilon \preceq L_v(\Psi(\OO),M(p)),\] directly using the decreasing-modulus uniformly continuous searcher on \(\K\),
for a variety of requested precision values \(\ty{\varepsilon}{\N}\). \(\ty{\Psi}{(\K \to \K) \to (\K \to \K)}\) is a distortion function that distorts the oracle like so:
\[ \Psi(\OO) := \lambda x.\OO(\mathsf{mid}(x,\frac{1}{4})).\]

The graph below shows the true oracle function \(\OO\) (in \textcolor{red}{red}) plotted against the distorted oracle function \(\Psi(\OO)\) (in \textcolor{brown}{brown}), the latter of which the searcher can query at points \(-1\), \(0\) and \(1\).

\begin{center}
\resizebox{\columnwidth/2}{!}{
\begin{tikzpicture}
\begin{axis}[
    axis lines = center,
    xlabel = \(x\),
    ylabel = {\(f(x)\)},
]
\addplot [
    domain=-1:1, 
    samples=100, 
    color=red,
]
{(x + -1/3)/2)};
\addplot [
    domain=-1:1, 
    samples=100, 
    color=brown,
]
{(((5 * x / 8) + -1/3)/2)};
\end{axis}
\end{tikzpicture}
}
\end{center}

\begin{center}
\begin{tabular}{ L L L l }
\hline
    \varepsilon & p & \llangle p \rrangle \\ \hline
    \hline
    1 & 0\mone\ldots & -0.25 \\ \hline
    2 & 0\mone\ldots & -0.25 \\ \hline
    3 & 1111\ldots & 0.9375 \\ \hline
    \end{tabular}
\end{center}

After precision-level \(\varepsilon := 2\), the searcher cannot find a parameter \(p\) that allows the least-close points (either \(M(p,-1)\) and \(\Psi(\OO(-1))\), \(M(p,0)\) and \(\Psi(\OO(0))\), or \(M(p,1)\) and \(\Psi(\OO(1))\)) to become \(\varepsilon\)-close.
\end{example}

However, we could instead simply use optimisation to find the \(\varepsilon\)-best choice parameter.

\begin{example}
\label{ex:K-reg-1-search-imperfect-2}
\thesislit{6}{SignedDigitExamples}{Regression-Example1a-OptimisationDistortionProne}
For the same predictor observations \(v\), oracle \(\OO\), parameterised model function \(M\) and distortion function \(\Psi\) as in \cref{ex:K-reg-1-search-imperfect}, we compute an \(\varepsilon\)-best choice parameter \(\ty{p}{\K}\) of \(M\)
for a variety of requested precision values \(\ty{\varepsilon}{\N}\) directly using the totally bounded property of \(\K\).

\begin{center}
\begin{tabular}{ L L L l }
\hline
    \varepsilon & p & \llangle p \rrangle & Time (s) \\ \hline
    \hline
    1 & \mone1\ldots & -0.25 & \\ \hline
    2 & \mone11\ldots & -0.125 & \\ \hline
    3 & \mone111\ldots & -0.0625 & \\ \hline
    7 & \mone1111111\ldots & -0.00390625 & 138.13 \\ \hline
    \end{tabular}
\end{center}

The graph below shows the true oracle function \(\OO\) (in {\textcolor{red}{red}}) plotted against the distorted oracle function \(\Psi(\OO)\) (in \textcolor{brown}{brown}) and the function \(M(p^7)\) (in {\textcolor{blue}{blue}}) where \(p^7\) is the \(p\) computed when \(\varepsilon := 7\), for the interval \([-1,1]\)
\begin{center}
\resizebox{\columnwidth/2}{!}{
\begin{tikzpicture}
\begin{axis}[
    axis lines = center,
    xlabel = \(x\),
    ylabel = {\(f(x)\)},
]
\addplot [
    domain=-1:1, 
    samples=100, 
    color=red,
]
{(x + -1/3)/2)};
\addplot [
    domain=-1:1, 
    samples=100, 
    color=brown,
]
{(((5 * x / 8) + -1/3)/2)};
\addplot [
    domain=-1:1, 
    samples=100, 
    color=blue,
]
{(x + -0.00390625)/2)};
\end{axis}
\end{tikzpicture}
}
\end{center}
\end{example}

Our final example for this section is inspired by linear regression.

\begin{example}
\label{ex:K-reg-2-search}
\thesislit{6}{SignedDigitExamples}{Regression-Example2-SearchDistortionFree}
By fixing the predictor observations \(v := \ty{\{-\frac{1}{2},\frac{1}{2}\}}{(\K)^2}\), we define the least-closeness pseudocloseness function \(\ty{L_v}{\closeness{(X \to Y)}}\) between functions, which compares their values at the points in \(v\).
We employ the parameterised model function \[ M((p_1,p_2),x) := \mathsf{mid}(p_1,\mathsf{mul}(p_2,x)) \] to search for a parameters \(\ty{p_1,p_2}{\K}\) such that \[\underline \varepsilon \preceq L_v(\OO,M(p_1,p_2)),\]
for a variety of precision values \(\ty{\varepsilon}{\N}\) where \(\OO\) is the synthetically-constructed oracle function (\cref{def:synthesised}) \[ \OO := M(\frac{1}{3},-1) .\] 

\vspace{0.25cm}
The continuity of the model function is by \cref{lem:mid-ucont,lem:mul-ucont,lem:id-comp-ucont-seq}.

\begin{center}
\begin{tabular}{ L L L L L l }
\hline
    \varepsilon & p_1 & p_2 & \llangle p_1 \rrangle & \llangle p_2 \rrangle & Time (s) \\ \hline
    \hline
    3 & 1\mone\ldots & 11\ldots & 0.5 & 0.5 & \\ \hline
    4 & 1\mone1\mone\ldots & 11\ldots & 0.375 & 0.5 & 2.30 \\ \hline
    5 & 1\mone1\mone1\mone\ldots & 11\ldots & 0.34375 & 0.5 & 15.65 \\ \hline
    \end{tabular}
\end{center}

The graph below shows the oracle function \(\OO\) (in {\textcolor{red}{red}}) plotted against the function \(M((p_i)^5)\) (in {\textcolor{blue}{blue}}) where \((p_i)^5\) are the \(p_i\) computed when \(\varepsilon := 5\), for the interval \([-1,1]\)

\begin{center}
\resizebox{\columnwidth/2}{!}{
\begin{tikzpicture}
\begin{axis}[
    axis lines = center,
    xlabel = \(x\),
    ylabel = {\(f(x)\)},
]
\addplot [
    domain=-1:1, 
    samples=100, 
    color=red,
]
{(1/3 + 0.5 * x)/2};
\addplot [
    domain=-1:1, 
    samples=100, 
    color=blue,
]
{(0.34375 + 0.5 * x)/2};
\end{axis}
\end{tikzpicture}
}
\end{center}
\end{example}

Thus we have concluded that we can indeed perform uniformly continuous search and generalised global optimisation and parametric regression on ternary signed-digits.
The correctness of the algorithms are immediate, because they are extracted from our formal \textsc{Agda} framework, but we enjoyed illustrating this fact.
Unfortunately, but not unexpectedly, the efficiency of the algorithms leaves a lot to be desired.

\section{Exact Real Search using ternary Boehm encodings}
\label{sec:exact-real-search-boehm}

The signed-digits have provided a proof-of-concept for applying our generalised perspective of optimisation and regression to types for representing real numbers. 
Furthermore, we have verified the signed-digits, so that we are genuinely optimising/regressing representations of functions on the compact interval we are searching.

There are, however, clear problems with using signed-digits for potential practical applications of exact real search, which can be summed up in two points.
Firstly, the arithmetic defined on the signed-digits is not user-friendly in the same way that, say, arithmetic on the floating-point numbers are --- we cannot even perform addition using signed-digit numbers without utilising multiple representations of different intervals.
Secondly, arithmetic and search on the signed-digits is inefficient; in particular, determining the \(n\)-approximation of a represented number requires the evaluation of the whole \(n\)-prefix of the sequence.

The ternary Boehm encodings address both of these problems. 
For the former, the Boehm encodings represent reals across the real line, meaning we can immediately perform arithmetic operations such as addition that are not suitable for compact intervals.
For the latter, although the arithmetic still remains inefficient, it is much more eficient than that on the ternary signed-digits, and leads us towards much more practical algorithms. Further, recall that the structure of the ternary Boehms means that the \(n\)\textsuperscript{th} interval approximation of a represented number can be determined by evaluating \emph{exactly} the \(n\)\textsuperscript{th} integer approximation of the sequence.

We turn our attention in this final section, therefore, towards the potential of practical search, optimisation and regression using the ternary Boehm encodings.
From this point on we depart from absolute guarantees of correctness in favour of investigating whether our framework can lead us towards a practical implementation of exact real search.
Although we find a positive answer, there is much work to be done to actually deliver on this desire --- and, further still, to tie it in with guarantees of correctness, which we discuss as further work in \cref{fw:practical}.

\subsection{Suitability for search, optimisation and regression}
\label{sec:boehm-suitable}

Recall that, in \cref{sec:boehm}, we re-rationalised Boehm's encodings in our formal framework as the type \(\T\) and prepared them for search by defining the subtypes \(\T(k,i)_i\) (for \(i \in \{1,2,3\}\) for representing real numbers in compact intervals \(\left[ \frac{k}{2^i},\frac{k+2}{2^i} \right]\) encoded as pairs \(\ty{(k,i)}{\Z^2}\).

\subsubsection{\(\T\) yields continuously searchable closeness spaces}

We can easily show that any type \(\T(k,i)_2\) is suitable for search, optimisation and regression because we have already proved its equivalence with \(\K\) (\cref{thm:Compact-2-simeq-K}), a type we recently proved is suitable for these processes (in \cref{sec:K-suitable}).
Therefore, we can search \(\T\) in the same way as we directly search \(\K\).

\begin{corollary}
\(\T(k,i)_2\) is a totally bounded, uniformly continuously searchable closeness space.
\end{corollary}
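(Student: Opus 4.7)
The plan is to piggyback entirely on the corresponding facts already established for the signed-digit encodings $\K$, transferring them along the equivalence $\T(k,i)_2 \simeq \K$ recorded in \cref{thm:Compact-2-simeq-K}. Concretely, the corollary has three assertions bundled together: (a) that $\T(k,i)_2$ carries a closeness space structure, (b) that this closeness space is totally bounded, and (c) that it is uniformly continuously searchable. Each of these is already closed under equivalence by results proved in \cref{chap:searchable}.

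First, to equip $\T(k,i)_2$ with a closeness function, I would apply \cref{cor:equiv-closeness}, which transfers a closeness space along an equivalence; the witness is $\T(k,i)_2 \simeq \K$ together with the discrete-sequence closeness space on $\K$ (\cref{lem:disseq-cspace}). Second, for total boundedness, I would invoke the lemma that equivalences preserve total boundedness, combined with \cref{cor:signed-digit-tb}, which gives that $\K$ is totally bounded. Third, for uniformly continuous searchability, I would use the corresponding preservation lemma for uniformly continuously searchable spaces together with \cref{cor:signed-digit-ucsearch}.

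There is no real obstacle here: the result is a direct corollary of the equivalence theorem and the three closure-under-equivalence lemmas already established. The only minor subtlety worth flagging is that the closeness space structure chosen on $\T(k,i)_2$ is the one induced by the equivalence with $\K$ (rather than some structure inherited more directly from $\T$ itself), so all three parts must be read as being about this transported structure; otherwise there is nothing to prove beyond citing the appropriate results.
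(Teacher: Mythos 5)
Your proposal is correct and matches the paper's proof, which is exactly "by \cref{cor:signed-digit-tb}, via the equivalence with \(\K\) (\cref{thm:Compact-2-simeq-K})" — i.e.\ transferring the structure and both properties of \(\K\) along the equivalence using the closure-under-equivalence lemmas. You simply spell out the three transfers that the paper leaves implicit.
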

\begin{proof} \axioms{f}
By \cref{cor:signed-digit-tb}, via the equivalence with \(\K\) (\cref{thm:Compact-2-simeq-K}).
\end{proof}

\noindent
We can instead search \(\T(k,i)_3\) in the same way as we \emph{indirectly} search \(\K\) using \(\seq\2\). 
This is because --- recalling the structural operations from \cref{sec:boehm} --- we only need to consider those interval approximations that are recursively \(\mathsf{downLeft}\) or \(\mathsf{downRight}\) from \(\ty{(k,i)}{\Z^2}\) in order to to determine an answer for any of our algorithms (i.e.\ we do not need to use \(\mathsf{downMid}\)).
In our \textsc{Java} library, we effectively search elements of \(\T(k,i)_3\) as described in \cref{sec:B-examples}.

\subsubsection{\(\T\) yields approximate linear preorders}

Recall that in order to determine the \(n\)\textsuperscript{th} interval approximation of some \(\ty{x}{\T}\), rather than evaluating the whole \(n\)-prefix, it is enough to evaluate \(\ty{x_n}{\Z}\).
This means that we can immediately compare these interval approximations in a more convenient way than the order and closeness functions are used for searching \(\K\).
For example, instead of a predicate asking whether \(C_\varepsilon(x,y)\), requiring us to evaluate \(x\) and \(y\) up to \(\varepsilon\) in order to find out (by \cref{remark:closeness-metric}) whether or not \(d_\R(\llbracket x \rrbracket, \llbracket y \rrbracket) < 2^{-\varepsilon}\), we can instead ask whether \(| x_{\varepsilon} - y_{\varepsilon} | \leq 1\). This further means our representations do not have to match at every point up to \(\varepsilon\), only at \(\varepsilon\) itself.

Another consequence of this direct evaluation is for comparing the order of elements of \(\T\). 
Recall that, for \(\K\), we had to introduce a new approximate linear preorder (\cref{def:real-order-approx-order}) which evaluates prefixes of \(\K\) and converts them to ternary interval codes so that they can be compared.
Elements of \(\T\) are trivial to convert to ternary interval codes at any point, and therefore we can directly compare the interval approximations without any need for exhaustive evaluation.

\subsection{\textsc{Java}-implemented examples}
\label{sec:B-examples}

We have written, in \textsc{Java}, an implementation of the ternary Boehm encodings that allows us to perform search, optimisation and regression.
The implementation is based on the re-rationalisation of ternary Boehm encodings in our \textsc{Agda} library (described in \cref{sec:boehm}) as well as on our informal discussions in the previous section. 
We define arithmetic operations on \(\T\) by completing approximations of them defined on dyadic interval codes \(\Z^2\) (\cref{def:dyadic-interval-code}). A base operation's modulus of uniform continuity is hard-coded into the function object, and composing functions builds a new modulus of uniform continuity from its constitutent functions'.

The \textsc{Java} implementation is outlined in \cref{appendix:java}; here we just give a brief idea of the algorithms, which are effectively the same as the indirect algorithms on \(\seq\2\) that we used to search \(\K\).
For the compact interval represented by \(\ty{(k,i)}{\Z^2}\), whether we are searching for an answer to a predicate or optimising/regressing a function up to a given precision \(\ty{\varepsilon}{\Z}\), our implementation will --- using the witness of uniform continuity \(\ty{\delta}{\Z}\) --- search the finitely-many search candidates that are recursively \(\mathsf{downLeft}\) or \(\mathsf{downRight}\) of \((k,i)\) on precision-level \(\delta\).
Each candidate \((c,\delta)\) is then cast to a ternary Boehm encoding \(\ty{x}{\T}\) such that \(x_\delta = c\), which can be tested against the predicate (or passed to the function being optimised).
In this way, the algorithms compute an interval approximation of a ternary Boehm encoding such that any element of \(\T(k,i)_3\) that features that interval approximation will be a correct answer, approximate minimum or satisfactory parameter of the model function.

For each example, we give a table which notes the computed interval approximation \(\ty{(x_\delta,\delta)}{\Z^2}\) of the answer \(\ty{x}{\T(k,i)_3}\) for the requested output precision-level \(\varepsilon\), required input-precision level \(\delta\) and searched compact interval \((k,i)\).
To aid illustration, we also note the dyadic \(\frac{k+1}{2^i}\) at the center of the interval that \((k,i)\) represents, and we often give further such information, such as the represented value \(\frac{f(\varepsilon)+1}{2^\varepsilon}\) when we are minimising a particular function \(f\).

We note any times above one second taken to compute this answer, and note that we are broadly more efficient than on the ternary signed-digits.
However, as the nature of the search is still exhaustive, the improvements are not seismic.
We discuss the ability to perform branch-and-bound style optimisation (and search) techniques, to further increase efficiency, in \cref{sec:exact-real-search-bnb}.

\subsubsection{Uniformly continuous search}

\begin{example}
\label{ex:B-search-1}
This example is based on \cref{ex:K-search-2a}.
We search for a ternary Boehm encoding \(\ty{x}{\T}\) in \([-1,1]\) that satisfies \[p(x) := \mathsf{abs} \left( \left( x^2 \right)_{\varepsilon+1} - \left(\frac{1}{2}\right)_{\varepsilon+1} \right) \leq 1,\] for a variety of requested precision values \(\ty{\varepsilon}{\Z}\).

\begin{center}
\begin{tabular}{ L L L L l }
\hline
    \varepsilon & (x_\delta,\delta) & \frac{x_\delta+1}{2^\delta} & \frac{(x^2)_\varepsilon+1}{2^\varepsilon} & Time (s) \\ \hline\hline
    5 & (-96 , 7) & -0.7421875 & 0.550842285 & ~ \\ \hline
    10 & (-2902 , 12) & -0.708251953 & 0.501620829 & ~ \\ \hline
    15 & (-92688 , 17) & -0.707145691 & 0.500055028 & ~ \\ \hline
    20 & (-2965826 , 22) & -0.707107782 & 0.500001416 & 1.03 \\ \hline
    25 & (-94906272 , 27) & -0.707106821 & 0.500000057 & 32.4 \\ \hline
\end{tabular}
\end{center}

The answer correctly converges towards the irrational number \(-\sqrt{0.5} = -0.707106781187\ldots\).
It could have alternatively converged towards \(\sqrt{0.5}\), but our searcher evaluates candidates in ascending integer approximation order.

\vspace{0.25cm}
Due to the efficiency gains of ternary Boehm encodings, along with the better (but not formally verified) modulus of uniform continuity on multiplication, we are able to compute the answer to a much higher precision-level than we could in \cref{ex:K-search-2a}. 
\end{example}

\begin{example}
\label{ex:B-search-2}
This example is based on \cref{ex:K-search-3}.
We search for ternary Boehm encodings \(\ty{x,y}{\T}\) in \([-1,1]\) that satisfy \[p(x) := \mathsf{abs} \left( \mathsf{mid}(x,y)_{\varepsilon+1} - 0_{\varepsilon+1} \right) \leq 1,\] for a variety of requested precision values \(\ty{\varepsilon}{\Z}\).

\begin{center}
\begin{tabular}{ L L L L L l }
\hline
    \varepsilon & (x_{\delta_1},\delta_1) & (y_{\delta_2},\delta_2) & \frac{x_{\delta_1}+1}{2^{\delta_1}} & \frac{y_{\delta_1}+2}{2^{\delta_2}} & Time (s) \\ \hline\hline
    5 & (-256 , 8) & (242, 8) & -0.99609375 & 0.94921875 & ~ \\ \hline
    10 & (-8192 , 13) & (8178, 13) & -0.99987793 & 0.998413086 & ~ \\ \hline
    15 & (-262144 , 18) & (262130 , 18) & -0.999996185 & 0.999950409 & \\ \hline
    20 & (-8388608 , 23) & (8388594 , 23) & -0.999999881 & 0.99999845 \\ \hline
    25 & (-268435456 , 28) & (268435442 , 28) & -0.999999996 & 0.999999952 & 356.72 \\ \hline
\end{tabular}
\end{center}

This answer computes quickly to a reasonably high degree of precision; although, as we noted in \cref{ex:K-search-3}, the predicate is particularly well-suited to the search strategy of our exhaustive searcher.
\end{example}

\begin{example}
\label{ex:B-search-3}
This example is of a predicate not well-suited to our exhaustive searcher, and in a different interval than \([-1,1]\).

\vspace{0.25cm}
We tried search for a ternary Boehm encoding \(\ty{x}{\T}\) in \([16,24]\) that satisfies \[p(x) := x^3 + 3x \geq^\varepsilon 9000 ,\] for a variety of requested precision values \(\ty{\varepsilon}{\Z}\).
Unfortunately, even for \(\varepsilon := 1\), the search process hanged for over two minutes. Clearly the number of search candidates at the level of input precision required is too great for an efficient result to be returned. We will revisit this example later, in \cref{ex:B-search-3-branch}.
\end{example}

\subsubsection{Global optimisation}

We define the optimisation algorithm on ternary Boehm encodings based on that arising from \cref{th:min}; it computes the \(\delta\)-net of interval approximations of \(\T(k,i)_3\), where \(\ty{\delta}{\Z}\) is the modulus of uniform continuity of the function being optimised and \((k,i)\) is the interval being searched for an \(\varepsilon\)-global minimum.

\begin{example}
\label{ex:B-opt-1}
This example is based on \cref{ex:K-opt-2}.
We compute an \(\varepsilon\)-global minimum of the function \[ f(x) := x * -1 \] in \([-1,1]\) for a variety of requested precision values \(\ty{\varepsilon}{\Z}\).

\begin{center}
\begin{tabular}{ L L L L l }
\hline
    \varepsilon & (x_\delta,\delta) & \frac{x_\delta+1}{2^\delta} & \frac{(-x)_\varepsilon+1}{2^\varepsilon} & Time (s) \\ \hline\hline
    1 & (14 , 4) & 0.9375 & -0.9375 & ~ \\ \hline
    2 & (62 , 6) & 0.984375 & -0.984375 & ~ \\ \hline
    3 & (254 , 8) & 0.99609375 & -0.99609375 & ~ \\ \hline
    4 & (1022 , 10) & 0.999023438 & -0.999023438 & ~ \\ \hline
    5 & (4094 , 12) & 0.999755859 & -0.999755859 & ~ \\ \hline
    6 & (16382 , 14) & 0.999938965 & -0.999938965 & ~ \\ \hline
    7 & (65534 , 16) & 0.999984741 & -0.999984741 & ~ \\ \hline
    8 & (262142 , 18) & 0.999996185 & -0.999996185 & 4.45 \\ \hline
    9 & (1048574 , 20) & 0.999999046 & -0.999999046 & 97.94 \\ \hline
\end{tabular}
\end{center}

This problem was designed specifically so that we had to nearly exhaust the net; hence, it takes a long time to compute.
Compared to \cref{ex:K-opt-2}, we are only able to compute slightly more precise approximations. This is because, although multiplication's modulus of uniform continuity and the structure of the ternary Boehm encodings admit more efficiency than those on ternary signed-digit encodings, the fact that we have to exhaust the \(\delta\)-net cannot be avoided. 
We will revisit this example later, in \cref{ex:B-opt-1-branch}.
\end{example}

\begin{example}
\label{ex:B-opt-2}
The efficiency issues seen in \cref{ex:B-opt-1} become worse with a more complicated function. 
We tried to compute an \(\varepsilon\)-global minimum of the function \[ f(x) := x^6 + x^5 - x^4 + x^2 \] in \([-2,2]\) (illustrated in \cref{fig:graph}) for a variety of requested precision values \(\ty{\varepsilon}{\Z}\).
Unfortunately, even for \(\varepsilon := 1\), the search process hanged for over two minutes. Clearly the number of candidates in the \(\delta\)-net is too great for an result to be returned in reasonable time. We will revisit this example later, in \cref{ex:B-opt-2-branch}.
\end{example}

\subsubsection{Parametric regression}

For regression on ternary Boehm encodings, we follow the rough idea of the model outlined in \cref{def:gen-reg-practical} and used in \cref{sec:K-examples}.
However, as we now have access to addition, we can tweak our loss function to be more practical.
Rather than returning the least-closeness value \(\mathsf{min}(c(M_p(x_0),\OO(x_0)),...,c(M_p(x_{n-1}),\OO(x_{n-1})))\), for model function \(M\), oracle \(\OO\), parameter choice \(p\) and observations \(x_0,...,x_{n-1}\), we instead simply sum the distances\footnote{This is similar to using the least-squares loss function, a common loss function for parametric regression.}:
\[ \sum_{i:=0}^n \mathsf{abs}(M_p(x_i) - \OO(x_i)) .\]

By implementing the regression-as-\emph{optimisation} algorithm (\cref{reg:min}), we can find an \(\varepsilon\)-best choice parameter for any uniformly continuous function.

\begin{example}
\label{ex:B-reg-1-opt}
This example is based on \cref{ex:K-reg-1-opt}.
By fixing the predictor observations \(v := \ty{\{-1,0,1\}}{(\T)^3}\), we define the absolute loss function \(\ty{L_v}{(\T \to \T) \to (\T \to \T) \to \T}\) between functions, which sums the difference of their values at the points in \(v\).
For the oracle function \[ \OO(x) := \mathsf{mid}(\frac{1}{3},x), \] we compute an \(\varepsilon\)-best choice parameter \(\ty{p}{\T}\) in \([-1,1]\) of the parameterised model function \[M(p,x) := \mathsf{mid}(-p,x)\] for a variety of requested precision values \(\ty{\varepsilon}{\Z}\) by minimising the function \(\ty{\left(\lambda p.L_v(\OO,M(p))\right)}{\T(-1,0)_3 \to \T}\) in the interval \([-1,1]\).

\begin{center}
\begin{tabular}{ L L L l }
\hline
    \varepsilon & (p_\delta,\delta) & \frac{p_\delta+1}{2^\delta} & Time (s) \\ \hline
    \hline
    3 & (-84,8) & -0.25 & \\ \hline
    6 & (-682,11) & -0.328125 & \\ \hline
    9 & (-5460,14) & -0.33203125 & \\ \hline
    12 & (-43690,17) & -0.333251953125 & 4.72 \\ \hline
    15 & (-349524,20) & -0.33331298828125 & 103.59 \\ \hline
    \end{tabular}
\end{center}

The graph below shows the oracle function \(\OO\) (in {\textcolor{red}{red}}) plotted against the function \(M(p^{15})\) (in {\textcolor{blue}{blue}}) where \(p^{15}\) is the \(p\) computed when \(\varepsilon := 15\), for the interval \([0.3,0.4]\). It is hard to tell the two lines apart due to the closeness of the true parameter and the regressed parameter.

\begin{center}
\resizebox{\columnwidth/2}{!}{
\begin{tikzpicture}
\begin{axis}[
    axis lines = center,
    xlabel = \(x\),
    ylabel = {\(f(x)\)},
]
\addplot [
    domain=0.3:0.4, 
    samples=100, 
    color=red,
]
{(x + -1/3)/2)};
\addplot [
    domain=0.3:0.4, 
    samples=100, 
    color=blue,
]
{(x + -0.33331298828125)/2)};
\end{axis}
\end{tikzpicture}
}
\end{center}

We can see that the output precisions are very close, and that the efficiency is much improved when using ternary Boehm, encodings.
We will revisit this example later, in \cref{ex:B-reg-1-opt-branch}.
\end{example}

As we have discussed, it is often more practical to use our regression algorithms that are derived from our searchers.

\begin{example}
\label{ex:B-reg-1-search-perfect}
This example is based on \cref{ex:K-reg-1-search-perfect}.
For the same predictor observations \(v\), oracle \(\OO\) and parameterised model function \(M\) as in \cref{ex:K-reg-1-opt}, we search for a parameter \(\ty{p}{\T}\) in \([-1,1]\) such that \[L_v(\OO,M(p))_\varepsilon \leq 1,\] for a variety of requested precision values \(\ty{\varepsilon}{\N}\).

\begin{center}
\begin{tabular}{ L L L l }
\hline
    \varepsilon & (p_\delta,\delta) & \frac{p_\delta+1}{2^\delta} & Time (s) \\ \hline
    \hline
    3 & (-218,9) & -0.42578125 & \\ \hline
    6 & (-1412,12) & -0.3447265625 & \\ \hline
    9 & (-10970,15) & -0.33477783203125 & 1.76 \\ \hline
    12 & (-87428,18) & -0.3335113525390625 & 20.2 \\ \hline
    15 & (-699098,21) & -0.3333559036254883 & 161.84 \\ \hline
    \end{tabular}
\end{center}

The graph below shows the oracle function \(\OO\) (in {\textcolor{red}{red}}) plotted against the function \(M(p^8)\) (in {\textcolor{blue}{blue}}) where \(p^8\) is the \(p\) computed when \(\varepsilon := 8\), for the interval \([0.3,0.4]\)

\begin{center}
\resizebox{\columnwidth/2}{!}{
\begin{tikzpicture}
\begin{axis}[
    axis lines = center,
    xlabel = \(x\),
    ylabel = {\(f(x)\)},
]
\addplot [
    domain=0.3:0.4, 
    samples=100, 
    color=red,
]
{(x + -1/3)/2)};
\addplot [
    domain=0.3:0.4, 
    samples=100, 
    color=blue,
]
{(x + -0.33333396911621094)/2)};
\end{axis}
\end{tikzpicture}
}
\end{center}

We found that with the ternary signed-digit encodings, the search version of this example (i.e.\ \cref{ex:K-reg-1-search-perfect}) was more efficient than the optimisation version (\cref{ex:K-reg-1-opt}). Interestingly, we find that this is reversed when using the ternary Boehm encodings (i.e.\ this example is less efficient than \cref{ex:B-reg-1-opt}).
\end{example}

\begin{example}
\label{ex:B-reg-1-search-imperfect}
This example is based on \cref{ex:K-reg-1-search-imperfect-2}.
For the same predictor observations \(v\), oracle \(\OO\) and parameterised model function \(M\) as in \cref{ex:B-reg-1-search-perfect}, we compute an \(\varepsilon\)-best choice parameter \(\ty{p}{\T}\) of \(M\) for a variety of requested precision values \(\ty{\varepsilon}{\Z}\) by minimising the function \(\ty{\left(\lambda p.L_v(\Psi(\OO),M(p))\right)}{\T(-1,0)_3 \to \T}\) in the interval \([-1,1]\).

\begin{center}
\begin{tabular}{ L L L l }
\hline
    \varepsilon & (p_\delta,\delta) & \frac{p_\delta+1}{2^\delta} \rrangle & Time (s) \\ \hline
    \hline
    3 & (-114,8) & -0.44140625 & \\ \hline
    6 & (-936,11) & -0.456542969 & \\ \hline
    9 & (-7506,14) & -0.458068848 &  \\ \hline
    12 & (-60072,17) & -0.458305359 & 5.42 \\ \hline
    15 & (-480594,21) & -0.458329201 & 110.76 \\ \hline
    \end{tabular}
\end{center}

The graph below shows the true oracle function \(\OO\) (in {\textcolor{red}{red}}) plotted against the distorted oracle function \(\Psi(\OO)\) (in \textcolor{brown}{brown}) and the function \(M(p^{15})\) (in {\textcolor{blue}{blue}}) where \(p^{15}\) is the \(p\) computed when \(\varepsilon := 15\), for the interval \([-1,1]\)

\begin{center}
\resizebox{\columnwidth/2}{!}{
\begin{tikzpicture}
\begin{axis}[
    axis lines = center,
    xlabel = \(x\),
    ylabel = {\(f(x)\)},
]
\addplot [
    domain=-1:1, 
    samples=100, 
    color=red,
]
{(x + -1/3)/2)};
\addplot [
    domain=-1:1, 
    samples=100, 
    color=brown,
]
{(((5 * x / 8) + -1/3)/2)};
\addplot [
    domain=-1:1, 
    samples=100, 
    color=blue,
]
{(x + -0.458329201)/2)};
\end{axis}
\end{tikzpicture}
}
\end{center}
\end{example}

Our final example for this subsection is an instantiation of linear regression.

\begin{example}
\label{ex:B-reg-2-search}
This example is based on \cref{ex:K-reg-2-search} (but with binary midpoint replaced with addition).
By fixing the predictor observations \(v := \ty{\{-\frac{1}{2},\frac{1}{2}\}}{(\T)^2}\), we define the absolute loss function \(\ty{L_v}{(\T \to \T) \to (\T \to \T) \to \T}\) between functions, which sums the difference of their values at the points in \(v\).
We employ the parameterised model function \[ M((p_1,p_2),x) := p_1 + p_2 * x \] to search for parameters \(\ty{p_1,p_2}{\T}\) in \([-1,1]\) such that \[L_v(\OO,M(p_1,p_2))_\varepsilon \leq 1,\] for a variety of precision values \(\ty{\varepsilon}{\Z}\) where \(\OO\) is the synthetically-constructed oracle function (\cref{def:synthesised}) \[ \OO := M(\frac{1}{3},-1) .\] 

\begin{center}
\begin{tabular}{ L L L L L l }
\hline
    \varepsilon & ((p_1)_{\delta_1},\delta_1) & ((p_2)_{\delta_2},\delta_2) & \frac{(p_1)_{\delta_1}+1}{\delta_1} & \frac{(p_2)_{\delta_2}+1}{\delta_2} & Time (s) \\ \hline
    \hline
    1 & (0,5) & (-2048,11) & 0.03125 & -0.999511719 & 1.3 \\ \hline
    2 & (10,6) & (-8192,13) & 0.171875 & -0.99987793 & 9.50 \\ \hline
    3 & (32,7) & (-32768,15) & 0.2578125 & -0.999969482 & 86.52 \\ \hline
    \end{tabular}
\end{center}

The graph below shows the oracle function \(\OO\) (in {\textcolor{red}{red}}) plotted against the function \(M(p^3)\) (in {\textcolor{blue}{blue}}) where \(p^3\) is the \(p\) computed when \(\varepsilon := 3\), for the interval \([-1,1]\)

\begin{center}
\resizebox{\columnwidth/2}{!}{
\begin{tikzpicture}
\begin{axis}[
    axis lines = center,
    xlabel = \(x\),
    ylabel = {\(f(x)\)},
]
\addplot [
    domain=-1:1, 
    samples=100, 
    color=red,
]
{(1/3 + -1 * x)};
\addplot [
    domain=-1:1, 
    samples=100, 
    color=blue,
]
{(0.2578125 + -0.999969482 * x)};
\end{axis}
\end{tikzpicture}
}
\end{center}

The computation when fitting a linear model to a correct oracle, even in a simple case, is still remarkably inefficient.
We will revisit this example later, in \cref{ex:B-reg-2-search-branch}.
\end{example}

\subsection{\textsc{Java}-implemented branch-and-bound examples}
\label{sec:exact-real-search-bnb}

The ternary Boehms are clearly a much more efficient data-type than the ternary signed-digits for performing exact real search, optimisation and regression on functions for exact real arithmetic.
However, the exhaustive nature of each algorithm means that the efficiency of the search relies heavily on how coincidentally well-suited the predicate or function is to the order in which the algorithms evaluates the candidate interval approximations.
In this final subsection, we discuss how \emph{branch-and-bound} techniques can be defined on the ternary Boehm encodings, and give examples of their use from our \textsc{Java} implementation.

\emph{Branch-and-bound} algorithms are a popular and well-studied technique in optimisation theory that can improve the efficiency of exhaustive search by discarding any candidates that outright cannot contain a solution~\cite{BnB2,BnB3}.
A (unary) branch-and-bound algorithm works as follows:
\begin{enumerate}
\item \emph{Initialise} the search area as some candidate interval,
\item \emph{Select} some candidate from the search area using heuristic criteria,
\item \emph{Branch} the candidate into multiple sub-intervals,
\item \emph{Bound} the sub-intervals by computing each of their lower and upper bounds of \(f\),
\item \emph{Discard} any candidates in the search area that cannot contain a global minimiser (i.e.\ those whose lower bound is strictly higher than another candidate's upper bound),
\item Repeat from step (2) until the width of the range of the search space is less than the desired precision; then return any remaining candidate interval.
\end{enumerate}

\noindent
In the next iteration, the potential solution will either be the same width or thinner than the current.
After each iteration, the remaining search area contains a solution to the global minimisation problem; furthermore, if the width of the remaining search area's output is less than the desired precision \(\varepsilon\), then it is a solution to the \(\varepsilon\)-global minimisation problem.

Branch-and-bound algorithms converge given (i) the function \(f\) is continuous, (ii) the \emph{branching} procedure ensures the width of the widest interval tends to 0, and (iii) the \emph{bounding} procedure ensures that the distance between the lower and upper-bound estimates for each interval also tends to 0~\cite{BnB}.

For ternary Boehm encodings, the branching procedure is the dissection of a ternary interval code \(\ty{(k,i)}{\Z^2}\) into the two intervals \(\ty{(\mathsf{downLeft} \ k,i+1),(\mathsf{downRight} \ k,i+1)}{\Z^2}\) directly below it.
The bounding procedure, meanwhile, is the use of the function's modulus of uniform continuity to bound its behaviour on interval codes.

We postulate that the ternary Boehm encodings can be used to satisfy all three of these conditions:
\begin{enumerate}
\item[(i)] The functions we define are indeed uniformly continuous on the intervals on which we search them,
\item[(i)] The branching procedure halves the width of the candidate interval approximations and --- as there are finitely-many candidate intervals that are only branched up to the precision-level given by the modulus of uniform continuity --- the widest interval will be divided in finite time,
\item[(i)] The bounding procedure of the straightforward functions we have defined (in \cref{sec:boehm-era,fw:boehm-functions}) will decrease the width of the output intervals as the width of the input intervals decreases.
\end{enumerate}

We aim in future work to define this class of algorithms formally in our library and verify their convergence (see \cref{fw:practical}); for now, we give informal \textsc{Java} implementations of their use.

\subsubsection{Uniformly continuous search}

\begin{example}
\label{ex:B-search-3-branch}
We revisit \cref{ex:B-search-3}, an example where we previously could not even compute an answer for output precision-level \(\varepsilon := 1\).
We search for a ternary Boehm encoding \(\ty{x}{\T}\) in \([16,24]\) that satisfies \[p(x) := x^3 + 3x \geq^\varepsilon 9000 ,\] for a variety of requested precision values \(\ty{\varepsilon}{\Z}\) using the branching searcher.

\begin{center}
\begin{tabular}{ L L L L l }
\hline
    \varepsilon & (x_\delta,\delta) & \frac{x_\delta+1}{2^\delta} & \frac{(x^3 + 3x)_\varepsilon+1}{2^\varepsilon} \\ \hline\hline
    50 & (42 , 1) & 21 & 9324 \\ \hline
    100 & (42 , 1) & 21 & 9324 \\ \hline
    150 & (42 , 1) & 21 & 9324 \\ \hline
    200 & (42 , 1) & 21 & 9324 \\ \hline
\end{tabular}
\end{center}

Our branching searcher appears almost to have cheated.
Previously, the number of search candidates at the level of input precision required was too great for an efficient result to be returned.
But, using the branching searcher, we find that the input precision required is in fact very low: the predicate is satisfied easily by any interval approximation that gives an output in the upper half of the search area.
\end{example}

\subsubsection{Global optimisation}

\begin{example}
\label{ex:B-opt-1-branch}
Revisiting \cref{ex:K-opt-2},
we compute an \(\varepsilon\)-global minimum of the function \[ f(x) := x * -1 \] in \([-1,1]\) for a variety of requested precision values \(\ty{\varepsilon}{\Z}\) using the branch-and-bound technique.

\vspace{0.25cm}
\begin{tabular}{ L L }\hline
    \varepsilon & (x_\delta,\delta) \\ \hline\hline
    50 & (1125899906842622,50) \\ \hline
    100 &  (1267650600228229401496703205374,100) \\ \hline
    150 & (1427247692705959881058285969449495136382746622,150) \\ \hline
    200 & (1606938044258990275541962092341162602522202993782792835301374,200) \\ \hline
\end{tabular}
\begin{tabular}{ L L L }
    \hline
    \varepsilon & \frac{x_\delta+1}{2^\delta} & \frac{(-x)_\varepsilon+1}{2^\varepsilon} \\ \hline\hline
    50 & 0.9999999999999982 & -0.9375 \\ \hline
    100 & \approx 1.0 & \approx -1.0 \\ \hline
    150 & \approx 1.0 & \approx -1.0 \\ \hline
    200 & \approx 1.0 & \approx -1.0 \\ \hline
\end{tabular}
\vspace{0.25cm}

Originally, this problem was designed specifically so that we had to nearly exhaust the net.
Using branch-and-bound, this no longer occurs; the simple linear shape of the function allows the algorithm to quickly (in less than \(10\)ms in all above cases) 'zoom in' on the solution.
\end{example}

\begin{example}
\label{ex:B-opt-2-branch}
We revisit \cref{ex:B-opt-2}, an example where we previously could not even compute an answer for output precision-level \(\varepsilon := 1\).
We compute an \(\varepsilon\)-global minimum of the function \[ f(x) := x^6 + x^5 - x^4 + x^2 \] in \([-2,2]\) (illustrated in \cref{fig:graph}) for a variety of requested precision values \(\ty{\varepsilon}{\Z}\) using the branch-and-bound technique.

\begin{center}
\begin{tabular}{ L L L L }
\hline
    \varepsilon & (x_\delta,\delta) & \frac{x_\delta+1}{2^\delta} & \frac{(f(x)_\varepsilon+1}{2^\varepsilon} \\ \hline\hline
    5 & (-236,8) & -0.90625 & -0.0625 \\ \hline
    10 &  (-7384,13) & -0.9013671875 & -0.044921875 \\ \hline
    15 & (-236058,18) & -0.900482177734375 & -0.043670654296875 \\ \hline
    20 & (-7553656,23) & -0.9004659652709961 & -0.04366016387939453 \\ \hline
    25 & (-241716810,28) & -0.9004652798175812 & -0.043659746646881104  \\ \hline
\end{tabular}
\end{center}

The issue with the exhaustive algorithm was that the \(\delta\)-net quickly became so granular, due to the complex shape of the function, that there were far too many candidates to search efficiently.
The branch-and-bound technique helps to tackle this problem: it quickly discards intervals that cannot contain the minimum, refining the search space so that we find an answer much more efficiently.
\end{example}

\subsubsection{Parametric regression}

\begin{example}
\label{ex:B-reg-1-opt-branch}
Revisiting \cref{ex:B-reg-1-opt}, we fix the predictor observations \(v := \ty{\{-1,0,1\}}{(\T)^3}\) and define the absolute loss function \(\ty{L_v}{(\T \to \T) \to (\T \to \T) \to \T}\) between functions, which sums the difference of their values at the points in \(v\).
For the oracle function \[ \OO(x) := \mathsf{mid}(\frac{1}{3},x), \] we compute an \(\varepsilon\)-best choice parameter \(\ty{p}{\T}\) in \([-1,1]\) of the parameterised model function \[M(p,x) := \mathsf{mid}(-p,x)\] for a variety of requested precision values \(\ty{\varepsilon}{\Z}\) by minimising the function \(\ty{\left(\lambda p.L_v(\OO,M(p))\right)}{\T(-1,0)_3 \to \T}\) in the interval \([-1,1]\) using the branch-and-bound technique.

\vspace{0.25cm}
\begin{tabular}{ L L}
\hline
    \varepsilon & (p_\delta,\delta) \\ \hline
    \hline
    50 & (-1501199875790164,52) \\ \hline
    100 & (-1690200800304305868662270940500,102) \\ \hline
    150 & (-1902996923607946508077714625932660181843662164,152) \\ \hline
    200 & (-2142584059011987034055949456454883470029603991710390447068500,202) \\ \hline
\end{tabular}
\begin{tabular}{ L L}
\hline
    \varepsilon & \frac{p_\delta+1}{2^\delta} \\ \hline
    \hline
    50 & -0.33333333333333304 \\ \hline
    100 & \approx -0.33333333333333333\ldots \\ \hline
    150 & \approx -0.33333333333333333\ldots \\ \hline
    200 & \approx -0.33333333333333333\ldots \\ \hline
    \end{tabular}
\vspace{0.25cm}

The graph below shows the oracle function \(\OO\) (in {\textcolor{red}{red}}) plotted against the function \(M(p^{200})\) (in {\textcolor{blue}{blue}}) where \(p^{200}\) is the \(p\) computed when \(\varepsilon := 200\), for the interval \([0.3,0.4]\)

\begin{center}
\resizebox{\columnwidth/2}{!}{
\begin{tikzpicture}
\begin{axis}[
    axis lines = center,
    xlabel = \(x\),
    ylabel = {\(f(x)\)},
]
\addplot [
    domain=0.3:0.4, 
    samples=100, 
    color=red,
]
{(x + -1/3)/2)};
\addplot [
    domain=0.3:0.4, 
    samples=100, 
    color=blue,
]
{(x + -0.333333333333333333333333333)/2)};
\end{axis}
\end{tikzpicture}
}
\end{center}

The efficiency is hugely improved by using the branch-and-bound technique.
\end{example}

\begin{example}
\label{ex:B-reg-2-search-branch}
Revisiting \cref{ex:B-reg-2-search}, we fix the predictor observations \(v := \ty{\{-\frac{1}{2},\frac{1}{2}\}}{(\T)^2}\) and define the absolute loss function \(\ty{L_v}{(\T \to \T) \to (\T \to \T) \to \T}\) between functions, which sums the difference of their values at the points in \(v\).
We employ the parameterised model function \[ M((p_1,p_2),x) := p_1 + p_2 * x \] to search for parameters \(\ty{p_1,p_2}{\T}\) in \([-1,1]\) such that \[L_v(\OO,M(p_1,p_2))_\varepsilon \leq 1,\] for a variety of precision values \(\ty{\varepsilon}{\Z}\), where \(\OO\) is the synthetically-constructed oracle function (\cref{def:synthesised}) \[ \OO := M(\frac{1}{3},-1) ,\] using the branching searcher.

\begin{center}
\begin{tabular}{ L L L L L l }
\hline
    \varepsilon & ((p_1)_{\delta_1},\delta_1) & ((p_2)_{\delta_2},\delta_2) & \frac{(p_1)_{\delta_1}+1}{\delta_1} & \frac{(p_2)_{\delta_2}+1}{\delta_2} & Time (s) \\ \hline
    \hline
    1 & (0,2) & (-2048,11) & 0.25 & -0.999511719 & \\ \hline
    2 & (2,3) & (-8192,13) & 0.375 & -0.99987793 & 2.38 \\ \hline
    3 & (2,3) & (-32768,15) & 0.375 & -0.999969482 & 9.67 \\ \hline
    4 & (10,5) & (-131072,17) & 0.34375 & -0.999998093 & 211.59 \\ \hline
    5 & (10,5) & (-524288,19) & 0.34375 & -0.999998093 & 871.70 \\ \hline
    \end{tabular}
\end{center}

The graph below shows the oracle function \(\OO\) (in {\textcolor{red}{red}}) plotted against the function \(M((p_1)^5,(p_2)^5)\) (in {\textcolor{blue}{blue}}) where \((p_i)^5\) are the \(p_i\) computed when \(\varepsilon := 3\), for the interval \([-1,1]\)

\begin{center}
\resizebox{\columnwidth/2}{!}{
\begin{tikzpicture}
\begin{axis}[
    axis lines = center,
    xlabel = \(x\),
    ylabel = {\(f(x)\)},
]
\addplot [
    domain=-1:1, 
    samples=100, 
    color=red,
]
{(1/3 + -1 * x)};
\addplot [
    domain=-1:1, 
    samples=100, 
    color=blue,
]
{(0.34375 + -0.999998093 * x)};
\end{axis}
\end{tikzpicture}
}
\end{center}

The efficiency is only mildly improved in this case. This example shows that we require better efficiency-minded approaches for search with multiple variables.
\end{example}
\chapter{Conclusion}
\label{chap:conclusion}

\section{Summary of contributions}

This thesis has developed, in a constructive and univalent \textsc{Agda} formalisation (\cref{chap:mltt}), a framework for performing search, optimisation and regression (\cref{chap:generalised}) on a wide class of types given by closeness spaces and uniformly continuously searchable types (\cref{chap:searchable}).
Furthermore, we formally proved that uniformly continuously searchable types are closed under countable products (\cref{thm:tychonoff}).

The \Escardo-Simpson interval object (\cref{sec:interval-object}) and the ternary signed-digit encodings (\cref{sec:signed-digits}) are formalised within our \textsc{Agda} library, and we verify the correctness of the operations on the latter using the former (\cref{sec:signed-digits-era}).

We extracted examples of our formal framework directly from the \textsc{Agda} proofs, showing that we can perform search, optimisation and regression on formally verified functions of the ternary signed-digit encodings (\cref{sec:K-examples}).

We formalised the structure of another type for exact real computation, the ternary Boehm encodings (\cref{sec:boehm}), and informally implemented search, optimisation and regression algorithms on this type in \textsc{Java}, in a way which reflects the formal \textsc{Agda} framework (\cref{sec:B-examples}).

Finally, we discussed the implementation of more efficient algorithms inspired by our formal approach, and gave some examples of their evaluation using \textsc{Java} (\cref{sec:exact-real-search-bnb}).

\section{Further work}

\subsection{Verification of the order and closeness relations on ternary-signed digit encodings}
\label{fw:io}

Although we have formalised the correctness of the functions that we search, optimise and regress on the ternary signed-digit encodings \(\K\) using the interval object \(\I\) (\cref{sec:signed-digits-era}), we have not formally proved that search, optimisation and regression on these representations amounts to those processes on the reals themselves.

In order to achieve this, we will need to verify that the \emph{real-order preserving orders} (\cref{def:real-order-preorder,def:real-order-approx-order}) do indeed preserve the numerical order on the interval object.
However, this task may be somewhat involved, as the notion of an order on \(\I\) is not well established. We therefore seek to establish this notion in our formalisation of the interval object, and then to verify our orders on \(\K\).

Following this, we will also need to verify that the discrete-sequence closeness relation on \(\K\) (\cref{cor:signed-digit-tb}) relates to a notion of a metric on the interval object \(\I\), in the way informally described in \cref{remark:closeness-metric}.

\subsection{Verification of arithmetic on ternary Boehm encodings}
\label{fw:boehm-functions}

We verified the arithmetic operations on the ternary signed-digit encodings, though we have not yet done this on the ternary Boehm encodings.
In order to achieve this, we have begun to develop machinery for completing continuous functions approximated via dyadic interval codes \(\Z^3\) into the equivalent function on ternary Boehm encodings \(\T\), which automatically verifies it relative to the Dedekind reals \(\R\).
Using this machinery, the idea is that we can define a wide variety of operations by following the same blueprint each time.

We already utilise this machinery in our informal \textsc{Java} implementation, though the formal work still relies on conjectures that are fairly open and which require more work to prove both informally and formally.
Although this this work is ongoing, we give the general idea in this section.

\begin{remark}
In this section, we use the notation \(\{x_i\} := \{x_0,...,x_{n-1}\}\) for an \(n\)-sized vector.
\end{remark}

\begin{definition}
A multivariable function \(\ty{f}{\R^n \to \R}\) is approximated by a \emph{dyadic interval approximator} \(\ty{A}{((\Z^3)^n \to \Z^3}\) if,
\begin{enumerate}
\item Given two \(n\)-dimensional vectors of dyadic interval codes \(\ty{\{(k_i,c_i,p_i)\},\{(j_i,b_i,q_i)\}}{(\Z^3)^n}\) and one of dyadic rationals \(\ty{\{w_i\}}{\D^n}\) such that \(\frac{k_i}{2^{p_i}} \leq \frac{j_i}{2^{q_i}} \leq w_i \leq \frac{b_i}{2^{q_i}} \leq \frac{c_i}{2^{p_i}}\), it is the case that \(\frac{Ak}{2^{Ap}} \leq \frac{Aj}{2^{Aq}} \leq f(w) \leq \frac{Ab}{2^{Aq}} \leq \frac{Ac}{2^{Ap}}\), where \((Ak,Ac,Ap) := A(\{(k_i,c_i,p_i)\})\) and \((Aj,Ab,Aq) := A(\{(j_i,b_i,q_i)\})\),
\item Given dyadic intervals \(\ty{\{(a_i,b_i)\}}{(\D^I)^n}\) and required distance \(\ty{\varepsilon}{\D}\), it is the case that there are distances \(\ty{\{\delta_i\}}{\D^n}\) such that for all vectors \(\ty{\{(k_i,c_i,p_i)\}}{(\Z^3)^n}\) satisfying \(a_i \leq \frac{k_i}{2^{p_i}}\), \(\frac{c_i}{2^{p_i}} \leq b_i\) and \(\frac{c_i-k_i}{2^{p_i}} \leq \delta_i\), we have \(\frac{Ac-Ak}{2^{Ap}} \leq \varepsilon\) where \((Ak,Ac,Ap) := A(\{(k_i,c_i,p_i)\})\).
\end{enumerate}
\end{definition}

\noindent
The first condition determines that by refining our inputs we also refine the output, whereas the second determines that the approximator is uniformly continuous on any compact interval.

\begin{example}[Addition's dyadic interval approximator]
Addition on dyadic-rational intervals is defined by:
\begin{multline*}
\left[\frac{k_1}{2^{p_1}},\frac{c_1}{2^{p_1}}\right] + \left[\frac{k_2}{2^{p_2}},\frac{c_2}{2^{p_2}}\right] := \\ 
\left[\frac{2^{p_2-\mathsf{min}(p_1,p_2)}k_1 + 2^{p_1-\mathsf{min}(p_1,p_2)}k_2}{2^{\mathsf{max}(p_1,p_2)}},\frac{2^{p_2-\mathsf{min}(p_1,p_2)}c_1 + 2^{p_1-\mathsf{min}(p_1,p_2)}c_2}{2^{\mathsf{max}(p_1,p_2)}}\right],    
\end{multline*}
Therefore, its dyadic interval approximator is defined:
\begin{multline*}
A((k_1,c_1,p_1),(k_2,c_2,p_2)) := \\ 
(2^{p_2-\mathsf{min}(p_1,p_2)}k_1 + 2^{p_1-\mathsf{min}(p_1,p_2)}k_2, 2^{p_2-\mathsf{min}(p_1,p_2)}c_1 + 2^{p_1-\mathsf{min}(p_1,p_2)}c_2,\mathsf{max}(p_1,p_2)).
\end{multline*}
\end{example}

We use the interval approximator to define the corresponding function on dyadic interval codes. 

\begin{definition}
Given a multivariable function \(\ty{f}{\R^n \to \R}\) which is approximated by a given dyadic interval approximator \(\ty{A}{(\Z^3)^n \to \Z^3}\), the \emph{corresponding function on dyadic interval codes} is defined as follows:
\begin{alignat*}{3}
f' &: (\Z \to \Z^3)^n \to (\Z \to \Z^3), \\
f' &(\{\chi_i\})_n := A(\{(\chi_i)_n\}) .
\end{alignat*}
\end{definition}

\noindent
In order to encode a real number, the output of this corresponding function must be nested and positioned (recall these properties from \cref{cor:real-seq-dyad-code}).

\begin{lemma}
\label{fw:f'-1}
Given a multivariable function \(\ty{f}{\R^n \to \R}\) which is approximated by a given dyadic interval approximator \(\ty{A}{(\Z^3)^n \to \Z^3}\), if the input sequences of dyadic interval codes \(\ty{\{\chi_n\}}{(\Z \to \Z^3)^n}\) are nested then the output of the corresponding function on dyadic interval codes applied at these arguments \(f'(\{\chi_i\})\) is nested.
\end{lemma}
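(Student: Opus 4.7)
The plan is to unfold the definition of nestedness and apply condition (1) of the dyadic interval approximator essentially directly. I would fix an arbitrary index $\ty{n}{\N}$ and aim to show that $f'(\{\chi_i\})(n)$ covers $f'(\{\chi_i\})(n+1)$, which by the definition of $f'$ means showing that $A(\{\chi_i(n)\})$ covers $A(\{\chi_i(n+1)\})$. Writing $(k_i, c_i, p_i) := \chi_i(n)$ and $(j_i, b_i, q_i) := \chi_i(n+1)$, the nestedness assumption on each $\chi_i$ tells me precisely that $\frac{k_i}{2^{p_i}} \leq \frac{j_i}{2^{q_i}}$ and $\frac{b_i}{2^{q_i}} \leq \frac{c_i}{2^{p_i}}$, which is exactly the outer chain required by condition (1).

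To invoke condition (1), I still need to produce an $n$-dimensional vector $\ty{\{w_i\}}{\D^n}$ of dyadic rationals interleaved between the inner and outer interval endpoints. The natural and simplest choice is to set $w_i := \frac{j_i}{2^{q_i}}$ for each $i$; then the required chain $\frac{k_i}{2^{p_i}} \leq \frac{j_i}{2^{q_i}} \leq w_i \leq \frac{b_i}{2^{q_i}} \leq \frac{c_i}{2^{p_i}}$ follows, using $j_i \leq b_i$ from the definition of a dyadic interval code (\cref{def:dyadic-interval-code}) and the two nestedness inequalities already obtained. Condition (1) then yields $\frac{Ak}{2^{Ap}} \leq \frac{Aj}{2^{Aq}}$ and $\frac{Ab}{2^{Aq}} \leq \frac{Ac}{2^{Ap}}$, where $(Ak,Ac,Ap) := A(\{(k_i,c_i,p_i)\})$ and $(Aj,Ab,Aq) := A(\{(j_i,b_i,q_i)\})$; these are the two conditions required for $A(\{\chi_i(n)\})$ to cover $A(\{\chi_i(n+1)\})$.

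The proof is therefore essentially a direct application of the axiomatic property (1) of the approximator, combined with the definition of $\mathsf{covers}$ and of dyadic interval codes. I do not expect a genuine obstacle here; the only mildly delicate point is being careful that the chosen witnesses $w_i$ live in $\D$ (which is immediate since they are already dyadic rationals of the form $\frac{j_i}{2^{q_i}}$) and that the inequalities used are the non-strict versions matching the definition of covering. Formally, I would package this as an $n$-indexed function that produces the required hypotheses of condition (1) pointwise from the nestedness proofs of the $\chi_i$, then project out the first and fourth inequalities of the conclusion to obtain the covering relation.
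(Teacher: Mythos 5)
Your proof is correct and is exactly the argument the paper intends: its own proof of this lemma is the one-liner ``By the first condition,'' and your elaboration --- unfolding nestedness pointwise, choosing the witnesses \(w_i := \frac{j_i}{2^{q_i}}\) (valid since \(j_i \leq b_i\) by \cref{def:dyadic-interval-code}), and projecting out the outer inequalities of condition (1) to obtain the covering relation --- is the straightforward filling-in of that same step.
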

\begin{proof}
By the first condition.
\end{proof}

\begin{conjecture}
\label{fw:f'-2}
Given a multivariable function \(\ty{f}{\R^n \to \R}\) which is approximated by a given dyadic interval approximator \(\ty{A}{(\Z^3)^n \to \Z^3}\), if the input sequences of dyadic interval codes \(\ty{\{\chi_n\}}{(\Z \to \Z^3)^n}\) are nested and positioned then the output of the corresponding function on dyadic interval codes applied at these arguments \(f'(\{\chi_i\})\) is positioned.
\end{conjecture}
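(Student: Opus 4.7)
The plan is to reduce the positioned property of the output directly to the uniform continuity condition (the second clause) in the definition of a dyadic interval approximator, using the positioned property of each input sequence to supply inputs of sufficiently small width, and the nested property to keep those inputs within a fixed bounding interval so that the approximator's moduli apply.

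More concretely, fix $\varepsilon > 0$. First I would, for each $i \in \{0,\ldots,n-1\}$, set $(a_i, b_i) := \Eta((\chi_i)_0)$, i.e.\ take the initial dyadic interval of each input sequence as a compact bounding interval. By \cref{fw:f'-1} (or directly by the assumption that each $\chi_i$ is nested and a short induction using $\mathsf{covers}$), every later interval $\Eta((\chi_i)_m)$ for $m \in \N$ is contained in $[a_i, b_i]$. Next, apply the second condition in the definition of dyadic interval approximator to this vector of bounding intervals and to the target precision $\varepsilon$; this yields distances $\delta_0,\ldots,\delta_{n-1} > 0$ such that any vector of dyadic interval codes sitting inside $[a_i,b_i]$ with widths bounded by $\delta_i$ is mapped by $A$ to an interval of width at most $\varepsilon$.

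Now use the positioned property of each $\chi_i$ to obtain, for each $i$, some $m_i \in \N$ with $\mathrm{width}((\chi_i)_{m_i}) \leq \delta_i$. Take $m := \max_i m_i$. By nestedness (transitivity of $\mathsf{covers}$), for every $i$ we have $(\chi_i)_m$ covered by $(\chi_i)_{m_i}$, hence $\mathrm{width}((\chi_i)_m) \leq \delta_i$; similarly $(\chi_i)_m$ is contained in $(\chi_i)_0 = (a_i,b_i)$. The second condition of the approximator then delivers $\mathrm{width}(A(\{(\chi_i)_m\})) \leq \varepsilon$, and by definition of the corresponding function $f'(\{\chi_i\})_m = A(\{(\chi_i)_m\})$, giving the required witness of positionedness.

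I do not expect a serious obstacle in the mathematical content: the argument is essentially ``compact-domain uniform continuity plus positioned inputs give positioned outputs'', and the two clauses of the approximator definition were crafted precisely to make this go through. The main practical difficulty will be bureaucratic: in the \textsc{Agda} formalisation one has to juggle the different precision-levels $p_i$ appearing in $(\chi_i)_m = (k_i,c_i,p_i)$ and translate between the ``width $\leq \delta_i$'' formulation (a statement about dyadic rationals) and the underlying integer-pair representation, and to confirm the elementary lemmas about $\mathsf{covers}$ being transitive and preserving containment of endpoints. A secondary subtlety is that \cref{def:positioned} indexes the positioned witness over $\N$ rather than $\Z$, so I must be careful to take the maximum $m$ within $\N$ and then cast appropriately when feeding it back into $\chi_i : \Z \to \Z^3$; this is harmless but worth flagging.
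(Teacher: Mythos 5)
The paper does not prove this statement: it appears in the further-work section (\cref{fw:boehm-functions}) explicitly as an open conjecture, and the author states that the machinery ``still relies on conjectures that are fairly open and which require more work to prove both informally and formally.'' So there is no paper proof to compare against, and your argument must be judged on its own merits.

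On those merits it looks correct, and it is surely the intended argument: clause (2) of the dyadic-interval-approximator definition is precisely a uniform-continuity-on-compacta statement, and your reduction --- bound all inputs by their index-$0$ intervals using nestedness, obtain the moduli $\delta_i$ from clause (2), use positionedness of each input to find an index $m_i$ with width below $\delta_i$, pass to $m := \max_i m_i$ via transitivity of $\mathsf{covers}$, and read off $f'(\{\chi_i\})_m = A(\{(\chi_i)_m\})$ --- discharges the positioned property of the output directly. The two caveats you flag ($\Z$ versus $\N$ indexing, and bookkeeping of precision-levels) are real but harmless, since both $\mathsf{nested}$ and $\mathsf{positioned}$ quantify only over $\N$. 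One further point worth noting: as written, clause (2) of \cref{fw:boehm-functions}'s approximator definition only asserts the existence of distances $\ty{\{\delta_i\}}{\D^n}$ without requiring $\delta_i > 0$, whereas \cref{def:positioned} only produces a witness for strictly positive $\varepsilon$; your step ``use the positioned property of each $\chi_i$ to obtain $m_i$ with width $\leq \delta_i$'' silently needs $\delta_i > 0$. This is a defect of the definition rather than of your proof (a $\delta_i = 0$ modulus would make the clause useless), but in a formalisation you would either need to strengthen the definition to demand positive moduli or argue separately that positive ones can always be chosen.
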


\noindent
Using the above, we prove that the corresponding function on dyadic interval codes correctly realises the original function.

\begin{conjecture}
\label{fw:f'}
Given a multivariable function \(\ty{f}{\R^n \to \R}\) which is approximated by a given dyadic interval approximator \(\ty{A}{(\Z^3)^n \to \Z^3}\), the corresponding function on dyadic interval codes \(\ty{f'}{(\Z \to \Z^3)^n \to (\Z \to \Z^3)}\) is such that \[f(\{\llparenthesis \chi_i \rrparenthesis'\}) = \llparenthesis f'(\{ \chi_i \}) ,\rrparenthesis'\] for all \(\ty{\{\chi_i\}}{(\Z \to \Z^3)^n}\) that are nested and positioned.
\end{conjecture}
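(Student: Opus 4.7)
The plan is to exhibit both $x := f(\{\llparenthesis \chi_i \rrparenthesis'\})$ and $y := \llparenthesis f'(\{\chi_i\}) \rrparenthesis'$ as reals contained in the common shrinking sequence of dyadic intervals produced by the approximator, and conclude their equality from the positioned property. Write $(Ak_n, Ac_n, Ap_n) := A(\{(\chi_i)_n\})$ and let $J_n := \left[\tfrac{Ak_n}{2^{Ap_n}}, \tfrac{Ac_n}{2^{Ap_n}}\right]$. By \cref{fw:f'-1} and \cref{fw:f'-2}, the sequence $f'(\{\chi_i\})$ is nested and positioned, so $y$ is well-defined as a Dedekind real via \cref{cor:real-seq-dyad-code}; moreover, unwinding the cuts of \cref{lem:real-seq-dyad}, one sees that $y \in J_n$ for every $\ty{n}{\N}$.

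The core of the proof is to show the analogous containment for $x$, namely that $x \in J_n$ for every $n$. Writing $I_n^i := \left[\tfrac{(\chi_i)_n.1}{2^{(\chi_i)_n.3}}, \tfrac{(\chi_i)_n.2}{2^{(\chi_i)_n.3}}\right]$, condition (1) of the approximator (specialised to the case where the two input vectors agree, $(j_i,b_i,q_i) := (k_i,c_i,p_i)$) immediately yields $f(\{w_i\}) \in J_n$ whenever $\ty{\{w_i\}}{\D^n}$ satisfies $w_i \in I_n^i$ for each $i$. To transport this to the real inputs $\llparenthesis \chi_i \rrparenthesis' \in I_n^i$, I would first derive from condition (2) of the approximator that $f$ is uniformly continuous on each product of compact dyadic intervals; the idea is that, given $\ty{\varepsilon}{\D}$, the moduli $\{\delta_i\}$ provided by condition (2) serve also as moduli of uniform continuity for $f$ on $I_n^1 \times \cdots \times I_n^n$, because a common refinement of any two dyadic arguments into a small-width box forces their $f$-values to agree up to $\varepsilon$ by condition (1). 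With continuity established, approximate $\llparenthesis \chi_i \rrparenthesis'$ from within $I_n^i$ by dyadic sequences (using the density of $\D$ in $\R$ and the fact that $I_n^i$ is a closed dyadic interval containing $\llparenthesis \chi_i \rrparenthesis'$), and conclude $x \in J_n$ since closed intervals are closed under limits.

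With both $x,y \in J_n$ for every $\ty{n}{\N}$, the positioned property of $f'(\{\chi_i\})$ produces arbitrarily thin $J_n$, and a standard argument on Dedekind cuts forces $x = y$: any dyadic separating them would lie outside some sufficiently narrow $J_n$. The main obstacle will be step (2), since the density/limit argument must be reconstructed constructively, working directly with the Dedekind cuts rather than invoking classical completeness. Concretely, I expect the cleanest formalisation to prove $p < x$ iff $p < y$ (and dually for the right cut) by using condition (1) to relate membership of a dyadic $p$ in the cuts defining $y$ to the existence of a bounding $n$ and dyadic refinement witnessing membership in the cuts defining $x$, with condition (2) used to produce the refinements of the required smallness. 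A secondary technical point will be pinning down precisely the class of real-valued functions $f$ for which the hypothesis ``is approximated by $A$'' actually determines $f$ uniquely on reals — formalising the uniqueness of continuous extensions from $\D$ to $\R$ may itself warrant a short preliminary lemma before \cref{fw:f'} can be attacked directly.
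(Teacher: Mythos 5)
This statement is a conjecture in the paper's further-work section, and the paper's own ``proof'' is a single citation of \cref{lem:real-seq-dyad,fw:f'-1,fw:f'-2}; those results only establish that the output sequence $f'(\{\chi_i\})$ is nested and positioned and hence determines a Dedekind real --- that is, they cover only the well-definedness of the right-hand side and leave the equality itself unargued. Your proposal goes substantially further and supplies the missing content: the sandwich argument (both $f(\{\llparenthesis \chi_i \rrparenthesis'\})$ and $\llparenthesis f'(\{\chi_i\}) \rrparenthesis'$ lie in every output interval $J_n$, and the $J_n$ shrink by positionedness) is the right skeleton, and your plan to compare the two Dedekind reals cut-by-cut rather than by appeal to classical completeness is the correct shape for a constructive formalisation against \cref{def:dedekind}.

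However, the point you defer as ``secondary'' is in fact the essential obstruction, not a technicality. Condition (1) of the dyadic interval approximator quantifies only over \emph{dyadic} inputs $\{w_i\}$ of type $\D^n$, and condition (2) is a property of $A$ alone; so the hypothesis ``$f$ is approximated by $A$'' places no constraint on $f$ at non-dyadic arguments. As literally stated the conjecture is therefore not provable: a function agreeing with a continuous one on $\D^n$ but altered at a single irrational point still satisfies the hypothesis while violating the conclusion. Your continuity-plus-density step silently assumes that $f$ is the continuous extension of its restriction to the dyadics; this must be added to the hypotheses (or condition (1) must be restated with real $w_i$) before the rest of your argument can close. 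Granting that amendment, your derivation of uniform continuity of $f$ on compact boxes from conditions (1) and (2) via common refinements is sound, the containment $x \in J_n$ follows, and what remains is the routine constructive bookkeeping you already identify.
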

\begin{proof}
By \cref{lem:real-seq-dyad,fw:f'-1,fw:f'-2}.
\end{proof}

Once we have approximated a function on dyadic interval codes, we next convert it into the equivalent operation on ternary interval codes.
This is done using a special function \(\ty{\mathsf{join}}{(\Z \to \Z^3) \to (\Z \to \Z^2)}\).

\begin{conjecture}
\label{fw:join'}
There is a function \(\ty{\mathsf{join}'}{\Z^3 \to \Z^2}\), which takes as input a dyadic interval code and outputs the narrowest ternary interval code that covers it.
\end{conjecture}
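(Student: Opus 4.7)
The plan is to define $\mathsf{join}'$ by an explicit formula, choosing first the largest output precision-level compatible with the width of the input and then an aligning integer. Recall that a ternary interval code $(m,q)$ represents $[m/2^q, (m+2)/2^q]$ of width $2^{1-q}$, while a dyadic interval code $(k,c,p)$ represents $[k/2^p, c/2^p]$ of width $(c-k)/2^p$. The narrowest covering ternary code will therefore maximise $q$ subject to the width inequality $2^{1-q} \geq (c-k)/2^p$ and to the existence of an integer $m$ satisfying $m/2^q \leq k/2^p$ and $(m+2)/2^q \geq c/2^p$.

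First, I would restrict the statement to inputs with $c > k$, since a point ($c = k$) admits ternary covers of arbitrarily large precision and so has no narrowest cover. Given such an input, set $n := \lceil \log_2(c-k) \rceil$, so that $2^{n-1} < c-k \leq 2^n$, and define the candidate precision $q_0 := p + 1 - n$; this is the largest $q$ satisfying the width inequality. Writing $r := p - q_0 = n - 1$, alignment at $q_0$ amounts to the existence of an integer in the real interval $[c/2^r - 2,\, k/2^r]$, which has length $2 - (c-k)/2^r \in [0,1)$ and hence contains at most one integer.

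Next I would case-split on whether this range contains an integer, decidable by comparing $\lceil c/2^r \rceil - 2$ against $\lfloor k/2^r \rfloor$. If it does, take $m := \lfloor k/2^r \rfloor$ and return $(m, q_0)$. Otherwise descend to $q := q_0 - 1$; the ternary interval's width doubles, and a direct calculation shows the valid range for $m$ then has length $\geq 1$, so we may take $m := \lfloor k/2^{r+1} \rfloor$ and return $(m, q_0 - 1)$. In both branches I would verify the containment inequalities by routine algebraic manipulation on the integers.

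The principal obstacle will be proving that the returned $(m,q)$ is genuinely the \emph{narrowest} covering ternary interval code, not merely \emph{a} covering one. This requires showing that for every $q' > q$, either the width inequality fails or no aligning $m'$ exists: the first case is immediate from the maximality of $q_0$, but the second case (where we descended to $q_0 - 1$) relies on the length analysis above showing that the range $[c/2^r - 2,\, k/2^r]$ contains no integer. Additional care will be needed for the behaviour of integer division and floor/ceiling over negative integers in the formal \textsc{Agda} development, and for the boundary case $c - k = 2^n$ where the width inequality is saturated and the range $[c/2^r - 2,\, k/2^r]$ degenerates to a single point.
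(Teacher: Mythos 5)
The paper offers no proof of this statement to compare against: it appears as an explicitly unproven \emph{conjecture} in the further-work section (\cref{fw:boehm-functions}), where the author states that the machinery around \(\mathsf{join}'\) ``still relies on conjectures that are fairly open''. So your proposal cannot be measured against the paper's argument; it can only be assessed on its own terms.

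On its own terms it looks essentially right, and the two observations that carry the argument are both correct. First, with \(n\) chosen so that \(2^{n-1} < c-k \leq 2^n\) and \(q_0 := p+1-n\), the precision \(q_0\) is indeed the largest at which the ternary width \(2^{1-q}\) can accommodate the dyadic width \((c-k)/2^p\), and the admissible range for the aligning integer \(m\) at that level, \([c/2^r - 2,\ k/2^r]\) with \(r = n-1\), has length \(2-(c-k)/2^{n-1}\in[0,1)\), hence contains at most one integer (necessarily \(\lfloor k/2^r\rfloor\) if any). Second, one level down the length becomes \(2-(c-k)/2^n \geq 1\), so an aligning integer always exists there; combined with the failure of the width inequality above \(q_0\) and the decidable absence of an integer at \(q_0\) in the descent branch, this gives both existence and minimality. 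A concrete instance such as \((k,c,p)=(2,9,0)\) confirms that the descent branch is genuinely reached and handled correctly. Your restriction to \(c > k\) is also a legitimate correction rather than a weakening: for a point interval every sufficiently fine ternary code covers it, so no narrowest cover exists and the conjecture as literally stated needs that side condition (or a convention for ties/degenerate inputs). The remaining work you identify --- phrasing \(\lceil\log_2\rceil\), floors and the saturated case \(c-k=2^n\) purely in integer arithmetic for \textsc{Agda} --- is exactly where the formalisation effort would lie, but it does not threaten the mathematical content.
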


\begin{definition}
The function \(\ty{\mathsf{join}}{(\Z \to \Z^3) \to (\Z \to \Z^2)}\), which converts a sequence of dyadic interval codes into a sequence of ternary interval codes, is defined as follows:
\[ \mathsf{join} := \mathsf{map}(\mathsf{join'}) .\]
\end{definition}

In order to encode a real number, this sequence of ternary interval codes must be nested and positioned.

\begin{conjecture}
\label{fw:join'-1}
Given a sequence of dyadic interval codes \(\ty{\chi}{\Z \to \Z^3}\), if \(\chi\) is nested and positioned then so is \(\mathsf{join}(\chi)\).
\end{conjecture}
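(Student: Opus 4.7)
The plan is to prove the two conjuncts separately, both resting on a monotonicity lemma for $\mathsf{join}'$ on individual dyadic interval codes.

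For the positioned part, which I expect to be the routine half, I would first establish a width-bound lemma: for any dyadic interval code $\ty{(k,c,p)}{\Z^3}$ whose represented interval has width $w := \frac{c-k}{2^p}$, the ternary interval $\mathsf{join}'(k,c,p)$ has width at most $2w$. The argument is a direct calculation: a ternary interval at precision $p'$ has width $\frac{2}{2^{p'}}$, and requiring it to cover an interval of width $w$ forces $2^{p'} \leq \frac{2}{w}$; taking the largest such $p'$ yields output width at most $2w$ (with an integer-existence check on the offset, which holds whenever $2^{p'} w \leq 1$, giving a valid $p'$ within a factor of two of the bound). Positionedness of $\chi$ then gives, for each $\ty{\varepsilon}{\D}$ with $\varepsilon > 0$, some $\ty{n}{\N}$ such that $\chi_n$ has width at most $\varepsilon/2$, whence $\mathsf{join}(\chi)_n = \mathsf{join}'(\chi_n)$ has width at most $\varepsilon$.

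For the nested part I would aim to prove the following monotonicity lemma: if $I_1, I_2$ are dyadic interval codes with the interval represented by $I_1$ covering that of $I_2$, then $\mathsf{join}'(I_1)$ covers $\mathsf{join}'(I_2)$ as ternary intervals. Applied pointwise, this immediately gives nestedness of $\mathsf{join}(\chi)$ from nestedness of $\chi$, since $\mathsf{join} := \mathsf{map}(\mathsf{join}')$. To prove the lemma, I would first need to pin down $\mathsf{join}'$ canonically, because the narrowest covering ternary interval of a dyadic interval is not necessarily unique (two adjacent ternary intervals at the same precision can both cover a dyadic interval sitting in their overlap); a left-preferring tiebreak restores a well-defined function. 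The proof then proceeds by case analysis on the precisions of $J_1 := \mathsf{join}'(I_1)$ and $J_2 := \mathsf{join}'(I_2)$. The easy case is equal precision: since $J_1 \supseteq I_1 \supseteq I_2$, $J_1$ is itself a narrowest cover of $I_2$ at that precision, so by the tiebreak $J_2$ either equals $J_1$ or sits to its left while still covering $I_2$, which forces $J_2 = J_1$. The harder subcase is when $J_2$ has strictly higher precision than $J_1$: here I would use the tree-like structure of ternary intervals (each $(k,p)$ decomposes, at precision $p+1$, into its three children $(2k,p+1)$, $(2k+1,p+1)$, $(2k+2,p+1)$, all of which it covers) together with an induction on the precision difference to place $J_2$ inside $J_1$.

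The main obstacle will be the monotonicity lemma, specifically the case where $J_2$ strictly refines $J_1$. The difficulty is that a narrowest ternary cover of $I_2$ might, a priori, straddle the boundary of $J_1$ rather than lie inside it — ternary intervals at the same precision overlap, so there is no obvious partition to exploit. Ruling this out seems to require both the canonical tiebreak and a careful argument that any ternary interval strictly narrower than $J_1$ which covers a sub-interval of $J_1$ is in fact contained in $J_1$, which I expect to reduce to an arithmetic lemma on integer offsets of the form $k' \geq 2^{p'-p}k$ and $k' + 2 \leq 2^{p'-p}(k+2)$. Once this lemma is in hand, the statement of Conjecture~\ref{fw:join'-1} follows by instantiating it at each consecutive pair $(\chi_n,\chi_{n+1})$.
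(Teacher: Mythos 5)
The paper does not prove this statement: it appears only as an open conjecture in the further-work section, with no argument supplied, so there is no proof of record to compare yours against. What follows assesses your proposal on its own terms. Your decomposition is the natural one — positionedness via a constant-factor width bound on \(\mathsf{join}'\), nestedness via monotonicity of \(\mathsf{join}'\) under interval inclusion, applied pointwise through \(\mathsf{map}\) — and the monotonicity lemma you isolate is indeed the crux and is true. One quantitative slip: the bound \(2w\) is too optimistic. A precision-\(p'\) ternary cover of \([l,r]\) exists iff the interval \([2^{p'}r-2,\ 2^{p'}l]\), of length \(2-2^{p'}w\), contains an integer, which is only guaranteed when \(2^{p'}w\leq 1\); the ratio of narrowest-cover width to \(w\) can approach \(4\) (e.g.\ \([23/64,33/64]\) has width \(5/32\) but narrowest ternary cover \([1/4,3/4]\) of width \(1/2\)). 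Any constant works for positionedness, so replace \(\varepsilon/2\) by \(\varepsilon/4\).

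The step you leave as the "main obstacle" — ruling out that \(J_2:=\mathsf{join}'(I_2)\) straddles the boundary of \(J_1:=\mathsf{join}'(I_1)\) when \(J_2\) is strictly finer — admits a short direct argument, and both your left-preferring tiebreak and the induction on precision difference are unnecessary. First, at the minimal covering width the cover is automatically unique: if \((k'',p)\) and \((k''+1,p)\) both cover \(I\), their intersection \([(k''+1)/2^{p},(k''+2)/2^{p}]\) is itself the ternary interval \((2k''+2,\,p+1)\) and covers \(I\), contradicting minimality; and since the set of valid offsets at a fixed precision is a contiguous integer range, two covers force two adjacent ones. This disposes of the equal-precision case outright, because \(J_1\) is then a precision-\(p\) cover of \(I_2\) and hence equals \(J_2\). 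Second, suppose \(J_2=(k',p')\) with \(p'>p\) protrudes from \(J_1=(k,p)\) on the left, i.e.\ \(k'<2^{p'-p}k=:m\), so \(k'+2\leq m+1\). Then \(I_2\subseteq J_1\cap J_2\subseteq[m/2^{p'},(m+1)/2^{p'}]\), and the latter is exactly the ternary interval \((2m,\,p'+1)\), a strictly narrower cover of \(I_2\) — contradicting minimality of \(J_2\)'s width. The symmetric argument handles the right side, so \(J_2\subseteq J_1\) and nestedness follows by instantiating at each consecutive pair \((\chi_n,\chi_{n+1})\). With these two observations your plan closes; what remains is routine integer arithmetic.
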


\noindent
Furthermore, joining the sequence does not change the real that is encoded.

\begin{conjecture}
\label{fw:join'-2}
Given a sequence of dyadic interval codes \(\ty{\chi}{\Z \to \Z^3}\), \(\llparenthesis \chi \rrparenthesis' = \llparenthesis \mathsf{join}(\chi) \rrparenthesis'\).
\end{conjecture}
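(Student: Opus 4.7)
The plan is to reduce the equality of the represented real numbers to the equality of their Dedekind cuts. Unfolding definitions using the $\llparenthesis-\rrparenthesis'$ (\cref{cor:real-seq-dyad-code}) and $\llparenthesis-\rrparenthesis''$ (\cref{cor:real-seq-ternary-code}) maps, the goal becomes $\llparenthesis \mathsf{map}(\Eta,\chi) \rrparenthesis = \llparenthesis \mathsf{map}(\Eta \circ \Mu \circ \mathsf{join}',\chi) \rrparenthesis$ in $\R$, where both sides are built from the cuts used in the proof of \cref{lem:real-seq-dyad}. By propositional and function extensionality, it suffices to show the left cuts are pointwise propositionally equivalent, and likewise for the right cuts. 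I may freely assume $\chi$ is nested and positioned; by \cref{fw:join'-1}, so too is $\mathsf{join}(\chi)$.

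The first step would be to establish two geometric properties of $\mathsf{join}'$ that follow from its construction as the narrowest covering ternary code: \emph{covering}, i.e.\ for every dyadic code $d$ the interval $\Eta(\Mu(\mathsf{join}'(d)))$ contains $\Eta(d)$; and \emph{bounded expansion}, i.e.\ the width of $\Eta(\Mu(\mathsf{join}'(d)))$ is at most a fixed constant $C$ times the width of $\Eta(d)$ (concretely $C = 4$ will suffice regardless of alignment with the ternary grid). Both are essentially definitional once \cref{fw:join'} is formalised, but the second requires quantitative care.

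Using these, I would prove $L_\chi(p) \Leftrightarrow L_{\mathsf{join}(\chi)}(p)$ for each dyadic $p$. The direction $L_{\mathsf{join}(\chi)}(p) \Rightarrow L_\chi(p)$ is immediate from covering: if $p$ lies strictly below the left endpoint of $\mathsf{join}(\chi)_n$, it lies strictly below that of $\chi_n$. For the converse, given $p < \ell_n$ with $\ell_n$ the left endpoint of $\chi_n$, set $\varepsilon := \ell_n - p$ and use positionedness of $\chi$ to find $m \geq n$ with the width of $\chi_m$ bounded by $\varepsilon/(2C)$; nestedness gives $\ell_m \geq \ell_n$, and bounded expansion forces the width of $\mathsf{join}(\chi)_m$ to be at most $\varepsilon/2$, so covering yields $\ell_m' \geq \ell_m - \varepsilon/2 > p$, giving $L_{\mathsf{join}(\chi)}(p)$. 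The argument for the right cuts is symmetric, after which propositional extensionality upgrades equivalence of cuts to the required equality in $\R$.

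The main obstacle is pinning down the width-expansion constant: it depends on the precise implementation of $\mathsf{join}'$, which is itself only conjectured in \cref{fw:join'}. The subtlety is that when a dyadic interval straddles an odd multiple of its own width, the narrowest covering ternary code may need to double in width more than once to align to the ternary grid, so $C$ must be chosen conservatively and the bound proved by case analysis on the relative parities of $k$, $c$, and $p$. Once \cref{fw:join'} is formalised with such a quantitative bound built into its specification, the remainder of the argument reduces to routine dyadic arithmetic of the kind already supported in Sneap's library.
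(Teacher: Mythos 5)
The paper does not actually prove this statement: it appears only as an open conjecture in the further-work section (\cref{fw:boehm-functions}), explicitly flagged as relying on machinery that ``require[s] more work to prove both informally and formally'', so there is no proof of record to compare yours against. Judged on its own terms, your strategy is sound. Reducing to pointwise propositional equivalence of the two Dedekind cuts is the right move; the direction \(L_{\mathsf{join}(\chi)}(p) \Rightarrow L_\chi(p)\) really is immediate from covering; and the converse via positionedness of \(\chi\), nestedness (to push the small-width witness index above \(n\)), and a width bound on \(\mathsf{join}'\) goes through --- the chain \(\ell_m' \geq \ell_m - \varepsilon/2 \geq \ell_n - \varepsilon/2 = p + \varepsilon/2 > p\) is correct, using \(\ell_m - \ell_m' \leq r_m' - \ell_m'\), which holds because \(\ell_m \leq r_m \leq r_m'\) by covering.

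Two caveats. First, your multiplicative bound with \(C = 4\) fails on degenerate intervals: \(\D^I\) permits \(l = r\), a nested positioned sequence may contain (or consist entirely of) width-zero intervals, and for such an input every covering ternary code has positive width \(2^{1-p'}\), so no finite \(C\) works and a ``narrowest cover'' does not even exist --- \cref{fw:join'} is underspecified there. Either the specification of \(\mathsf{join}'\) must fix a choice for degenerate inputs and the width bound must be stated in a non-multiplicative form, or that case must be excluded and handled separately. Second, the bounded-expansion detour is avoidable: covering, nestedness of both sequences, and positionedness of \(\mathsf{join}(\chi)\) --- which is exactly \cref{fw:join'-1}, which you already invoke --- suffice for the converse direction by the same \(\varepsilon/2\) argument applied directly to the widths of \(\mathsf{join}(\chi)_m\) (after bumping the witness index up to at least \(n\) using nestedness of the joined sequence). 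That route decouples \cref{fw:join'-2} from the quantitative behaviour of \(\mathsf{join}'\) and concentrates all of it in \cref{fw:join'-1}, where it arguably belongs.
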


\begin{definition}
Given a multivariable function \(\ty{f}{\R^n \to \R}\) which is approximated by a given dyadic interval approximator \(\ty{A}{(\Z^3)^n \to \Z^3}\), the \emph{corresponding function on ternary interval codes} is defined as follows:
\begin{alignat*}{3}
f'' &: (\Z \to \Z^2)^n \to (\Z \to \Z^2), \\
f'' &(\{\chi_i\}) := \mathsf{join}(f'(\{ \mathsf{map}(\mathsf{to{\hy}dcode},\chi_i) \})) .
\end{alignat*}
\end{definition}

\begin{conjecture}
\label{fw:f''}
Given a multivariable function \(\ty{f}{\R^n \to \R}\) which is approximated by a given dyadic interval approximator \(\ty{A}{(\Z^3)^n \to \Z^3}\), the corresponding function on ternary interval codes \(\ty{f''}{(\Z \to \Z^2)^n \to (\Z \to \Z^2)}\)
is such that, \[f(\{\llparenthesis \chi_i \rrparenthesis''\}) = \llparenthesis f''(\{ \chi_i \}) \rrparenthesis''\] for all \(\ty{\{\chi_i\}}{(\Z \to \Z^2)^n}\) that are nested and positioned.
\end{conjecture}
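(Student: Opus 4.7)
The plan is to reduce the statement to a short chain of definitional unfoldings followed by two applications of the prior conjectures \cref{fw:f',fw:join'-2}. First I would unfold the representation map $\llparenthesis - \rrparenthesis''$ on each input using \cref{cor:real-seq-ternary-code}, so that $\llparenthesis \chi_i \rrparenthesis'' = \llparenthesis \mathsf{map}(\Mu, \chi_i) \rrparenthesis'$. This rewrites the goal's left-hand side as $f(\{ \llparenthesis \mathsf{map}(\Mu, \chi_i) \rrparenthesis' \})$, moving the problem entirely into the world of dyadic interval codes where \cref{fw:f'} is applicable.

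The next step is to verify the hypothesis of \cref{fw:f'}, namely that each $\mathsf{map}(\Mu, \chi_i) : \Z \to \Z^3$ is nested and positioned. This is essentially definitional: by the commutative triangle following \cref{def:ternary-interval-code} we have $\Rho = \Eta \circ \Mu$, hence $\mathsf{map}(\Rho, \chi_i) = \mathsf{map}(\Eta, \mathsf{map}(\Mu, \chi_i))$, and the overloaded terminology introduced just before \cref{cor:real-seq-ternary-code} defines nestedness and positionedness of a sequence of ternary codes to be exactly the nestedness and positionedness of its image under $\mathsf{map}(\Rho, -)$. Thus the assumed nested and positioned properties of $\chi_i$ transport verbatim to $\mathsf{map}(\Mu, \chi_i)$, and \cref{fw:f'} yields the intermediate equality $f(\{ \llparenthesis \mathsf{map}(\Mu, \chi_i) \rrparenthesis' \}) = \llparenthesis f'(\{ \mathsf{map}(\Mu, \chi_i) \}) \rrparenthesis'$.

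Finally, I would apply \cref{fw:join'-2} to rewrite $\llparenthesis f'(\{ \mathsf{map}(\Mu, \chi_i) \}) \rrparenthesis'$ as $\llparenthesis \mathsf{join}(f'(\{ \mathsf{map}(\Mu, \chi_i) \})) \rrparenthesis''$, which, by the definition of $f''$ at the end of the excerpt (reading $\mathsf{to{\hy}dcode}$ as $\Mu$), is exactly $\llparenthesis f''(\{ \chi_i \}) \rrparenthesis''$. Composing these three equalities closes the goal.

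The only real obstacle is bookkeeping rather than mathematical content. In particular, invoking \cref{fw:join'-2} as stated requires its dyadic-interval-code argument $f'(\{ \mathsf{map}(\Mu, \chi_i) \})$ to itself be nested and positioned, which I would supply by \cref{fw:f'-1,fw:f'-2} applied to the inputs $\mathsf{map}(\Mu, \chi_i)$ already shown nested and positioned above. A mild subtlety is that \cref{fw:join'-2} is written as $\llparenthesis \chi \rrparenthesis' = \llparenthesis \mathsf{join}(\chi) \rrparenthesis'$, whereas the target of $\mathsf{join}$ lives in $\Z \to \Z^2$; I would read the right-hand side as $\llparenthesis \mathsf{join}(\chi) \rrparenthesis''$, which is presumably the intended statement and which would have to be confirmed (or corrected) when that conjecture is formalised. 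Once the prior conjectures are settled, the present statement follows immediately by this calculation.
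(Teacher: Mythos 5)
Your proposal is correct and matches the paper's intent: the paper's own ``proof'' of this conjecture is just a one-line citation of \cref{lem:real-seq-dyad,fw:f',fw:join',fw:join'-1,fw:join'-2}, and your calculation (unfold \(\llparenthesis - \rrparenthesis''\) via \(\mathsf{map}(\Mu,-)\), apply \cref{fw:f'}, then \cref{fw:join'-2}, supplying the nested/positioned side conditions from \cref{fw:f'-1,fw:f'-2,fw:join'-1}) is exactly the expansion that citation list is gesturing at. Your observation about the typing of \cref{fw:join'-2}'s right-hand side is a fair catch of a genuine imprecision in the paper's statement of that conjecture.
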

\begin{proof}
By \cref{lem:real-seq-dyad,fw:f',fw:join',fw:join'-1,fw:join'-2}.
\end{proof}

The final step is to normalise the inputs and output of this function.

\begin{conjecture}
\label{fw:normalise}
There is a function \(\ty{\mathsf{normalise}}{(\Z \to \Z^2) \to (\Z \to \Z^2)}\), which takes a nested and positioned sequence of ternary interval codes and gives back a normalised sequence of ternary interval codes that represents the same real number.
\end{conjecture}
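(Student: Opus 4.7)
The plan is to construct $\mathsf{normalise}(\chi)$ by induction on the index $n \in \N$, producing at each non-negative $n$ a ternary interval code of the form $(k_n, n)$ that encloses the represented real and is below its predecessor. For the negative indices, where the normalised property is vacuous, I would extend by applying $\mathsf{upRight}$ repeatedly to $k_0$; this preserves nesting upwards, and since the Dedekind cuts of a positioned sequence are determined by the behaviour at arbitrarily large positive indices, the choice made for negative indices does not affect the represented real.

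The key auxiliary step is a \emph{coarsening lemma}: given a dyadic interval $[j/2^p, (j+2)/2^p]$ contained in a ternary interval $[k/2^n, (k+2)/2^n]$, and provided $p \geq n+2$, we can compute $k' \in \{\mathsf{downLeft}(k), \mathsf{downMid}(k), \mathsf{downRight}(k)\}$ such that the inner interval is contained in $[k'/2^{n+1}, (k'+2)/2^{n+1}]$. This reduces to a finite integer case analysis on $j$: the three candidate sub-intervals overlap pairwise and together cover the parent interval, so a sufficiently narrow input fits into at least one of them, and we can decide which by comparing $j$ against appropriate multiples of $2^{p-n-1}$.

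With this lemma in hand the inductive construction proceeds as follows. For the base case, apply the positioned property of $\chi$ at a small enough precision to obtain an interval narrow enough to be directly coarsened into a ternary code at precision $0$. For the step, having fixed $(k_n, n)$, invoke the positioned property again to produce $\chi_{m_n}$ of width at most $1/2^{n+1}$, then apply the coarsening lemma with $(k_n, n)$ as the parent to obtain $k_{n+1}$. The output is normalised by construction, nested because each $k_{n+1}$ is chosen to be below $k_n$, and automatically positioned since normalised sequences of ternary interval codes always are.

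The hard part will be proving the equality $\llparenthesis \chi \rrparenthesis'' = \llparenthesis \mathsf{normalise}(\chi) \rrparenthesis''$. Unfolding via \cref{cor:real-seq-ternary-code,lem:real-seq-dyad}, this reduces to showing that both sequences define the same left and right Dedekind cuts on $\D$. I would prove a more general uniqueness lemma: any two positioned nested sequences of dyadic intervals that share a common intersection point induce equal cuts. The forward direction for the left cut is immediate once one observes that every interval of $\mathsf{normalise}(\chi)$, being chosen to cover a narrow interval of $\chi$, is itself contained in a parent coming from $\chi$ suitably far back; the reverse direction is where positionedness is essential, since the normalised sequence's endpoints are constrained to the grid $k/2^n$ whereas the left endpoints of $\chi$ may be arbitrary dyadics, so to convert an $\chi$-witness of $p < r$ into a $\mathsf{normalise}(\chi)$-witness one must pass through the positioned property to find a normalised interval close enough to $r$ that its left endpoint still dominates $p$, and symmetrically for the right cut.
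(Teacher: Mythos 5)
The paper does not actually prove this statement: it appears in the further-work section (\cref{fw:boehm-functions}) as an explicitly open conjecture, with no argument given, so there is no "paper proof" to compare against. Judged on its own terms, your proposal is a sound construction. The coarsening lemma is correct as stated: the three children of $(k,n)$ pairwise overlap on sub-intervals of width $1/2^{n+1}$, so any sub-interval of the parent of width at most $1/2^{n+1}$ fits into one of them, and the hypothesis $p \geq n+2$ gives the inner width $2/2^{p} \leq 1/2^{n+1}$ exactly as needed; the case split is a decidable integer comparison. The uniqueness-of-cuts lemma you invoke at the end is also correct and is the right way to finish, provided you phrase "share a common intersection point" constructively, e.g.\ as "every interval of one sequence meets every interval of the other", which your construction delivers because each normalised code covers some $\chi_{m_n}$ and hence meets every $\chi_j$ via $\chi_{\max(m_n,j)}$.

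Three points need patching, none fatal. First, in the inductive step you must ensure the narrow interval $\chi_{m_n}$ produced by positionedness actually lies inside the current parent $(k_n,n)$; the positioned property gives you \emph{some} index, so you should replace it by $\max(m_n, m_{n-1})$ and use nestedness (covering is transitive and shrinks width) to keep the invariant. Second, the base case is not an instance of the coarsening lemma, since there is no parent at precision $-1$; you instead need a computable dyadic floor to place a width-$\leq 1$ interval inside some $[k_0, k_0+2]$. Third, your justification of the "immediate" cut direction is misstated: a normalised interval is generally \emph{not} contained in any $\chi_j$ — the correct reason the direction is immediate is that $(k_n,n)$ \emph{covers} $\chi_{m_n}$, so its left endpoint is dominated by that of $\chi_{m_n}$ and any witness $p < k_n/2^n$ transfers to a witness for $L_\chi$. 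The genuinely harder direction, as you say, is the converse, where positionedness is essential.
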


\begin{definition}
Given a multivariable function \(\ty{f}{\R^n \to \R}\) which is approximated by a given dyadic interval approximator \(\ty{A}{(\Z^3)^n \to \Z^3}\), the \emph{corresponding function on ternary Boehm encodings} is defined as follows:
\begin{alignat*}{3}
\overline{f} &: \T^n \to \T, \\
\overline{f} &(\{\chi_i\}) := \mathsf{normalise}(f''(\{ \mathsf{to{\hy}interval{\hy}seq} (\chi_i) \})) ,
\end{alignat*}
where \(\ty{\mathsf{to{\hy}interval{\hy}seq}}{\T \to (\Z \to \Z^2)}\) is the map resulting from the equivalence between the ternary Boehm encodings and normalised sequences of ternary interval codes (\cref{lem:T-equiv}).
\end{definition}

\begin{conjecture}
Given a multivariable function \(\ty{f}{\R^n \to \R}\) which is approximated by a given dyadic interval approximator \(\ty{A}{(\Z^3)^n \to \Z^3}\), the corresponding function on ternary Boehm encodings \(\ty{\overline{f}}{\T^n \to \T}\)
is such that, \[f(\{\llbracket \chi_i \rrbracket\}) = \llbracket \overline{f}(\{ \chi_i \}) \rrbracket\] for all \(\ty{\{\chi_i\}}{\T^n}\).
\end{conjecture}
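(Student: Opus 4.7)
The plan is to chain together the previously established preservation lemmas for each intermediate construction feeding into $\overline{f}$. The key observation is that $\overline{f}$ unfolds as the composition of $\mathsf{normalise}$, $f''$, and the termwise application of $\mathsf{to{\hy}interval{\hy}seq}$ (implicitly followed by the equivalence of \cref{lem:T-equiv} to recover an element of $\T$), and each of these intermediate maps is already (conjecturally) known to preserve the represented real number. Hence their composition should as well.

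First, I would unfold the right-hand side using \cref{def:real-T}, which gives $\llbracket x \rrbracket = \llparenthesis \mathsf{to{\hy}interval{\hy}seq}(x) \rrparenthesis''$, together with the fact that $\mathsf{to{\hy}interval{\hy}seq}$ applied to the image of the equivalence from \cref{lem:T-equiv} recovers the original normalised sequence. This yields
\[ \llbracket \overline{f}(\{\chi_i\}) \rrbracket \;=\; \llparenthesis \mathsf{normalise}(f''(\{\mathsf{to{\hy}interval{\hy}seq}(\chi_i)\})) \rrparenthesis''. \]
Next, I would discharge the $\mathsf{normalise}$ wrapper using \cref{fw:normalise}, which preserves the represented real provided the inner sequence is nested and positioned. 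This side condition follows because $f''$ preserves both properties --- traceable through \cref{fw:f'-1}, \cref{fw:f'-2} and \cref{fw:join'-1} --- once we note that each $\mathsf{to{\hy}interval{\hy}seq}(\chi_i)$ is normalised by \cref{lem:T-equiv}, hence positioned (as normalised sequences are positioned for ternary interval codes), and nested by \cref{lem:ternary-nested} since $\chi_i \in \T$ is ternary by definition. Finally, invoking \cref{fw:f''} at these nested and positioned arguments moves $f$ outside:
\[ \llparenthesis f''(\{\mathsf{to{\hy}interval{\hy}seq}(\chi_i)\}) \rrparenthesis'' \;=\; f(\{\llparenthesis \mathsf{to{\hy}interval{\hy}seq}(\chi_i) \rrparenthesis''\}) \;=\; f(\{\llbracket \chi_i \rrbracket\}), \]
which is the desired identity.

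The main obstacle lies not in this final assembly, which is essentially a short chain of rewrites, but in establishing the supporting conjectures themselves --- most notably \cref{fw:f'-2} (that the uniform-continuity clause of a dyadic interval approximator actually forces a positioned output) and \cref{fw:normalise} (which requires explicitly constructing a real-preserving normalisation procedure that re-indexes by precision-level while collapsing redundant approximations). Both demand careful dyadic arithmetic and are precisely where the informal \textsc{Java} prototype currently diverges from the formal \textsc{Agda} development. Once those are in hand, the composition argument sketched here is transparent and mechanical.
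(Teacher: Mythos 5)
Your proposal is correct and follows essentially the same route as the paper, which likewise proves the conjecture by composing the representation-preservation statements for the intermediate maps (the paper's one-line proof cites \cref{cor:real-seq-dyad-code,fw:f'',fw:normalise}). You are somewhat more explicit than the paper in tracking the nested/positioned side conditions needed to invoke \cref{fw:f''} and \cref{fw:normalise}, but this is a refinement of the same argument rather than a different approach.
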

\begin{proof}
By \cref{cor:real-seq-dyad-code,fw:f'',fw:normalise}
\end{proof}

This machinery amounts to showing that, for the above definitions, the following diagram commutes:
\[\begin{tikzcd}
	{\mathbb{T}^n} && {(\Z \to \Z^2)^n} && {(\Z\to\Z^3)^n} && {\R^n} \\
	\\
	{\mathbb{T}} && {(\Z \to \Z^2)} && {(\Z\to\Z^3)} && \R
	\arrow["{\overline{f}}"', dashed, from=1-1, to=3-1]
	\arrow[shift left=1, from=1-1, to=1-3]
	\arrow["{\mathsf{normalise}}", shift left=1, from=1-3, to=1-1]
	\arrow[shift left=1, from=1-3, to=1-5]
	\arrow["{\mathsf{join}}", shift left=1, from=1-5, to=1-3]
	\arrow[shift left=1, from=1-5, to=1-7]
	\arrow[shift left=1, from=3-1, to=3-3]
	\arrow[shift left=1, from=3-3, to=3-1]
	\arrow["{\overline{f}''}", shift right=1, dashed, from=1-3, to=3-3]
	\arrow["{\overline{f}'}", shift right=1, from=1-5, to=3-5]
	\arrow["f", from=1-7, to=3-7]
	\arrow[shift left=1, from=3-3, to=3-5]
	\arrow[shift left=1, from=3-5, to=3-3]
	\arrow[from=3-5, to=3-7]
	\arrow["{\llbracket-\rrbracket^n}", shift left=5, curve={height=-30pt}, dashed, from=1-1, to=1-7]
	\arrow["{\llbracket-\rrbracket}", shift right=5, curve={height=30pt}, dashed, from=3-1, to=3-7]
\end{tikzcd}\]

\subsection{Towards practical implementations of exact real search}
\label{fw:practical}

Our further work is largely concerned with further formalisation and verification results, but there exists a loftier goal of this work: more efficient practical implementations that do not sacrifice the verified correctness.

In our \textsc{Agda} formalisation, we have only used inefficient exhaustive methods for search, optimisation and regression.
In \cref{sec:exact-real-search-bnb}, we discussed more efficient exhaustive methods via branch-and-bound techniques, and gave examples of our implementation in our \textsc{Java} code.
The first step towards bridging the gap between correctness and efficiency in our work is the formalisation of the correctness of these methods --- i.e., the development of general convergence theorems for branch-and-bound techniques in our \textsc{Agda} framework.

Following this, we would like to look at the development of efficient methods that utilise heuristics, and the development of efficient methods for the search, optimisation and regression of multivariable functions.

\appendix
\chapter{Formal \textsc{Agda} Framework}
\label{appendix:agda}

This thesis' primary contribution is that the lion's share of its \emph{other} contributions are fully-formalised in the programming language and proof assistant \textsc{Agda}.

The formalisation is available for viewing \href{\agdarepo}{on this thesis' GitHub repository}.
We outline the formalisation's files (divided into seven directories) below.

\subsubsection{Chapter2}

\agdalink{Chapter2}{Finite.lagda.md} contains the functions we require for finite linearly ordered types.

\vspace{0.5em} \noindent
\agdalink{Chapter2}{Vectors.lagda.md} contains some additional functions we require for vectors.

\vspace{0.5em} \noindent
\agdalink{Chapter2}{Sequences.lagda.md} contains the functions we require for sequences.

\subsubsection{Chapter3}

\agdalink{Chapter3}{ClosenessSpaces.lagda.md} contains the formalisation of closeness spaces and their related lemmas, as described in \cref{sec:cspace}. 

\vspace{0.5em} \noindent
\agdalink{Chapter3}{ClosenessSpaces-Examples.lagda.md} contains the examples of closeness spaces, as described in \cref{sec:cspace-examples}.

\vspace{0.5em} \noindent
\agdalink{Chapter3}{SearchableTypes.lagda.md} contains the formalisation of searchable and uniformly continuously searchable types, as described in \cref{sec:search-finite,sec:search-infinite}.

\vspace{0.5em} \noindent
\agdalink{Chapter3}{SearchableTypes-Examples.lagda.md} contains the examples of uniformly continuously searchable types, as described in \cref{sec:csearch-examples}. The formalised Tychonoff theorem for uniformly continuously searchable types (\cref{thm:tychonoff}) is at the bottom of this file.

\vspace{0.5em} \noindent
\agdalink{Chapter3}{PredicateEquality.lagda.md} contains a small number of lemmas for proving the equality of uniformly continuous and decidable predicates. These lemmas are used in the formalisations of \cref{lem:prod-csearchable,thm:tychonoff}.

\subsubsection{Chapter4}

\agdalink{Chapter4}{ApproxOrder.lagda.md} contains the formalisation of approximate linear preorders, as described in \cref{sec:orders}.

\vspace{0.5em} \noindent
\agdalink{Chapter4}{ApproxOrder-Examples.lagda.md} contains the examples of approximate linear preorders, as described in \cref{sec:orders}.

\vspace{0.5em} \noindent
\agdalink{Chapter4}{GlobalOptimisation.lagda.md} contains the formalisation of our generalised, type-theoretic variant of global optimisation, as described in \cref{sec:gen-global-opt}. The global optimisation algorithm \cref{th:min} is at the bottom of this file.

\vspace{0.5em} \noindent
\agdalink{Chapter4}{ParametricRegression.lagda.md} contains the formalisation of our generalised, type-theoretic variant of parametric regression, as described in \cref{sec:regression}. The parametric regression convergence theorems \cref{reg:min,th:perfect,th:imp} are at the bottom of this file.

\subsubsection{Chapter5}

\agdalink{Chapter5}{IntervalObject.lagda.md} contains the formalisation of the \Escardo-Simpson interval object, as described in \cref{sec:interval-object}.

\vspace{0.5em} \noindent
\agdalink{Chapter5}{IntervalObjectApproximation.lagda.md} contains the formal verification of finite approximations for the interval object, as described in \cref{sec:interval-object}.

\vspace{0.5em} \noindent
\agdalink{Chapter5}{SignedDigit.lagda.md} contains the formalisation of the ternary signed-digit encodings and their arithmetic, as described in \cref{sec:signed-digits}.

\vspace{0.5em} \noindent
\agdalink{Chapter5}{SignedDigitIntervalObject.lagda.md} contains the formal verification of the ternary signed-digit encodings using the interval object, as described in \cref{sec:signed-digits}.

\vspace{0.5em} \noindent
\agdalink{Chapter5}{BoehmVerification.lagda.md} contains our current formalised work on the ternary Boehm encodings, as described in \cref{sec:boehm}.

\vspace{0.5em} \noindent
\agdalink{Chapter5}{BelowAndAbove.lagda.md} contains a variety of lemmas concerning the structure of the ternary Boehm encodings.

\subsubsection{Chapter6}

\agdalink{Chapter6}{SequenceContinuity.lagda.md} contains the definitions and proofs concerning the specialised form of uniform continuity for sequence functions, as seen in \cref{sec:K-suitable}.

\vspace{0.5em} \noindent
\agdalink{Chapter6}{SignedDigitSearch.lagda.md} contains the corollaries required to instantiate our framework for search, optimisation and regression on the ternary signed-digit encodings, as described in \cref{sec:K-suitable}.

\vspace{0.5em} \noindent
\agdalink{Chapter6}{SignedDigitOrder.lagda.md} contains the formalisation of the real-order preserving orders, which allows us to correctly order the reals using the ternary signed-digit encodings, as seen in \cref{sec:K-suitable}.

\vspace{0.5em} \noindent
\agdalink{Chapter6}{SignedDigitContinuity.lagda.md} contains the proofs that the functions we have defined for exact real arithmetic on the ternary signed-digit encodings are uniformly continuous, as described in \cref{sec:K-suitable}.

\vspace{0.5em} \noindent
\agdalink{Chapter6}{SignedDigitExamples.lagda.md} contains the examples of our formal framework for search, optimisation and regression applied to the ternary signed-digits, as described in \cref{sec:K-examples}.

\vspace{0.5em} \noindent
\agdalink{Chapter6}{Main.lagda.md} is a file that can be compiled into a \textsc{Haskell} file in order to run the examples, should the reader desire. Before compiling, ensure that the example being computed (from \verb|SignedDigitExamples.lagda.md|) is the one desired at the correct level of precision; then, run \verb|agda --compile TWA/Thesis/Chapter6/Main.lagda.md| in the \verb|source| folder of the branch. The code can then be ran by performing \\\verb|ghci MAlonzo/Code/TWA/Thesis/Chapter6/Main.hs|. Once \verb|ghci| has loaded, type \verb|main| and hit enter to run the example.
\chapter{\textsc{Java} Implementation of Ternary Boehm Encodings}
\label{appendix:java}

The examples of search, optimisation and regression performed on ternary Boehm encodings given in \cref{sec:exact-real-search-boehm} are implemented in a small \textsc{Java} library written by Andrew Sneap and Todd Waugh Ambridge.

Note that we do not utilise any of Boehm's code (from \cite{Boehm90s}), instead re-implementing both the representation --- due to our modifications (detailed in \cref{sec:boehm}) --- and the basic arithmetic functions --- in order to align with the machinery detailed in \cref{fw:boehm-functions}, which allows the function's modulus of continuity to be easily extracted.

The implementation is available for viewing \href{https://github.com/tnttodda/tnttodda.github.io/tree/master/thesis/java}{on this thesis' GitHub repository}.
We outline the implementation's files (divided into seven packages) below, using the type-theoretic parlance of the rest of the thesis.

\subsubsection{DyadicsAndIntervals}

\java{DyadicsAndIntervals}{Dyadic.java} implements dyadic rational numbers \(\D\) as pairs \(\Z \x \Z\), where \(\ty{(k,i)}{\Z \x \Z}\) represents the dyadic \(\frac{k}{2^i}\). The structural operations defined in \cref{sec:boehm} are extended here to the dyadics. Arithmetic and comparison operations on dyadics are also defined here.

\vspace{0.5em} \noindent
\java{DyadicsAndIntervals}{DyadicIntervalCode.java} implements dyadic interval codes \(\Z^3\) (\cref{def:dyadic-interval-code}), where \(\ty{(k,c,p)}{\Z^3}\) represents the dyadic interval \([\frac{k}{2^p},\frac{c}{2^p}]\). We can yield the dyadic endpoints of a dyadic interval code. Again, we extend the structural operations defined in \cref{sec:boehm} to these interval codes. Arithmetic and comparison operations on dyadic interval codes are also defined here.

\vspace{0.5em} \noindent
\java{DyadicsAndIntervals}{TernaryIntervalCode.java} implements ternary interval codes \(\Z^2\) (\cref{def:ternary-interval-code}), where \(\ty{(k,p)}{\Z^3}\) represents the dyadic interval \([\frac{k}{2^p},\frac{k+2}{2^p}]\). We can convert any ternary interval code into a dyadic interval code. Again, we extend the structural operations defined in \cref{sec:boehm} to these interval codes. Comparison operations on ternary interval codes are also defined here. Functions which `discretise' the interval (i.e. convert it into the intervals directly below it on a higher-precision level) are implemented here.

\subsubsection{TernaryBoehm}

\java{TernaryBoehm}{TBEncoding.java} implements ternary Boehm encodings \(\T\) (\cref{def:ternary-boehm-encoding}). We can convert dyadics, integers and ternary interval codes into ternary Boehm encodings. A ternary Boehm encoding can be converted into a sequence of dyadic interval codes or ternary interval codes. Arithmetic on the ternary Boehm encodings is defined by the application of particular \verb|CFunction| objects in this file.

\subsubsection{FunctionsAndPredicates}

\java{FunctionsAndPredicates}{CFunction.java} implements multivariable continuous functions \(\ty{f}{\T^n \to \T}\) on the ternary Boehm encodings. A function is constructed by giving its interval approximator \((\Z^3)^n \to \Z^3\) on ternary interval codes, as well as information about the continuity of that function. This interval approximated is `completed' to a function on the ternary Boehm encodings, in the manner described in \cref{fw:boehm-functions}. This class contains a large number of static methods that define functions such as negation, addition and multiplication. There are also a variety of methods for the composition of functions, which automatically computes the new interval approximators and continuity information.

\vspace{0.5em} \noindent
\java{FunctionsAndPredicates}{UCUnaryPredicate.java} implements uniformly continuous unary predicates \(\ty{p}{\T \to \Omega}\). A predicate is constructed by giving a function \verb|TBEncoding -> Boolean| and by giving the predicate's modulus of uniform continuity. There are also static methods for some simple predicates, such as \(p(x) := x \leq^\epsilon y\) for a given \(\ty{y}{\T}\). Importantly, there is a constructor for building predicates \(p(x) := p'(f(x))\) defined by functions \(f\) --- this automatically works out the modulus of uniform continuity of the resulting predicate by the moduli of uniform continuity of the underlying function and predicate.

\vspace{0.5em} \noindent
\java{FunctionsAndPredicates}{UCBinaryPredicate.java} implements uniformly continuous binary predicates.

\subsubsection{Search}

\java{Search}{SearchUnary.java} implements a uniformly continuous search algorithm for unary predicates.
The constructor takes a unary predicate and a ternary interval code to search for an answer, while the function \verb|search()| actually performs the algorithm. The algorithm  sets the bounds on the search candidates on the precision-level required (i.e. the level given by the modulus of uniform continuity) and then tests each candidate in numerical order. As soon as an answer is found, it is returned by the algorithm, and if no answer is found then the algorithm exhausts the space and states that no answer was found.

\java{Search}{SearchBinary.java} implements the uniformly continuous search algorithm for binary predicates.

\subsubsection{Optimisation}

\java{Optimisation}{Optimisation.java} implements the global optimisation procedure (\cref{th:min}) for unary functions \(\ty{f}{\T \to \T}\). The constructor takes a function, a requested output precision \(\ty{\epsilon}{\Z}\) and a ternary interval code to search for an \(\epsilon\)-global minimum. The input precision \(\delta\) required to achieve the requested output precision is computed and the \(\delta\)-net of search candidates is generated. Then, each candidate is checked in numerical order until the net is completely exhausted. The algorithm keeps track of the current \(\epsilon\)-minimum argument to the function. Once the net is exhausted, it returns this \(\epsilon\)-minimum argument.

\java{Optimisation}{OptimisationHeuristic.java} implements a branch-and-bound (\cref{sec:exact-real-search-bnb}) global optimisation procedure for unary functions \(\ty{f}{\T \to \T}\). This differs to the usual algorithm in that the a \(\delta\)-net is not computed in advance of the optimisation. Instead, the initial candidate ternary interval code is branched into the two ternary interval codes below it, and their output bounds for the function are computed using the interval approximator and continuity information held in the \verb|CFunction| object. This process repeats, and any candidate interval that has a lower output bound greater than another candidate interval's upper output bound is discarded, as this candidate can clearly not contain a minimum.

\subsubsection{Regression}

\java{Regression}{Regression.java} implements regression algorithms (\cref{sec:regression}) via the above optimisation and search algorithms. The construction of a regression algorithm requires an oracle function, model function and list of predictor observations. The loss function used is defined as \verb|averageModelOracleDistance()|, which sums the distance between the two functions' outcomes on the observations.

\subsubsection{Examples}

\java{ExamplesAndMain}{Examples.java} contains the examples of search, optimisation and regression that we described in \cref{sec:B-examples}.

\backmatter%
\printbibliography[heading=bibintoc]%
\printnomenclature%
\printindex

\end{document}